\newif\ifhidecoq
\newcommand{\orcid}[1]{\href{https://orcid.org/#1}{\textcolor[HTML]{A6CE39}{\aiOrcid}}}
\tikzstyle{distr} = [circle,fill=black!20,draw=black,thick,minimum size=2mm,
\tikzstyle{dot} = [circle,fill=black!20,draw=black,thick,minimum size=1mm,
\tikzstyle{state} = [rectangle,rounded corners,draw=black,thick,
\tikzstyle{legent} = [node distance=10mm and 12mm, inner sep=2pt]
\tikzstyle{action} = [auto]
\tikzstyle{from} = [<->, shorten <=1pt, >=stealth',semithick]
\tikzstyle{timeto} = [->>, shorten >=1pt, >=stealth',semithick]
\tikzstyle{to} = [->, shorten >=1pt, >=stealth',semithick]
\tikzstyle{todistr} = [-, shorten >=1pt, >=stealth',semithick]
\tikzstyle{distrto} = [->, decorate, decoration={snake,pre length=1mm,post length=1mm}, shorten >=1pt, >=stealth',semithick]
\tikzstyle{tosqunder} = [to,rounded corners,swap,
\tikzstyle{loop wnw} = [loop,looseness=7,out=185,in=175]
\tikzstyle{loop nee} = [bend right,looseness=5,out=135,in=100]
\tikzstyle{loop ene} = [bend left,looseness=5,out=-80,in=-45]
\tikzstyle{loop ese} = [loop,looseness=7,out=5,in=-5]
\tikzstyle{loop sse} = [loop,looseness=12,out=260,in=320]
\newcommand{\TODO}[1]{{\color{red} TODO: #1}}
\newcommand{\Names}{\ensuremath{\mathcal{N}}\xspace}
\renewcommand{\aa}{a}
\newcommand{\ab}{b}
\newcommand{\co}[1]{\ensuremath{\overline{#1}}}
\newcommand\Act{\ensuremath{\mathsf{Act}}\xspace}
\newcommand\Actfin{\ensuremath{\mathsf{Act}^\star}\xspace}
\newcommand\Acttau{\ensuremath{\mathsf{Act}_\tau}\xspace}
\newcommand{\BNFsep}{\;\;|\;\;}
\newcommand{\Nil}{\mathop{\textsf{0}}}
\newcommand{\Unit}{\mathop{\textsf{1}}}
\newcommand{\mailbox}[1]{\ensuremath{#1}}
\newcommand{\out}[1]{\mailbox{\ensuremath{\overline{#1}}}}
\newcommand{\extc}{\mathrel{+}}
\newcommand{\Par}{\mathrel{\parallel}}
\newcommand{\rec}[2][x]{\mathsf{rec} #1. #2}
\newcommand\CCS{\textsc{CCS}\xspace}
\newcommand\ACCS{\texttt{ACCS}\xspace}
\newcommand\TACCS{\texttt{TACCS}\xspace}
\newcommand{\csys}[2]{#1 \mathrel{\llceil} #2}
\newcommand\pl[1]{{\color{violet}{pl: #1}\xspace}}
\newcommand{\rfig}[1]{Figure~\ref{fig:#1}}
\newcommand{\rpt}[1]{part~(\ref{pt:#1})}
\newcommand{\rtab}[1]{Table~\ref{tab:#1}}
\newcommand{\rsec}[1]{Section~\ref{sec:#1}}
\newcommand{\rapp}[1]{Appendix~\ref{sec:#1}}
\newcommand{\rexa}[1]{Example~\ref{ex:#1}}
\newcommand{\rdef}[1]{Definition~\ref{def:#1}}
\newcommand{\rprop}[1]{Proposition~\ref{prop:#1}}
\newcommand{\rthm}[1]{Theorem~\ref{thm:#1}}
\newcommand{\rcor}[1]{Corollary~\ref{cor:#1}}
\newcommand{\rlem}[1]{Lemma~\ref{lem:#1}}
\newcommand{\rptlem}[2]{Lemma~\ref{lem:#1}(\ref{pt:#2})}
\newcommand{\req}[1]{Equation~(\ref{eq:#1})}
\newcommand{\rulename}[1]{{\footnotesize{\textsc{[#1]}}}}
\newcommand{\rname}[1]{\rulename{#1}}
\newcommand{\wtrefl}{\rname{wt-refl}\xspace}
\newcommand{\wttau}{\rname{wt-tau}\xspace}
\newcommand{\wtmu}{\rname{wt-mu}\xspace}
\newcommand{\mnow}{\rname{axiom}\xspace}
\newcommand{\mstep}{\rname{ind-rule}\xspace}
\newcommand{\parL}{\rname{Par-L}\xspace}
\newcommand{\parR}{\rname{Par-R}\xspace}
\newcommand{\com}{\rname{Com}\xspace}
\newcommand{\mboxelim}{\rname{Mb-Out}\xspace}
\newcommand{\rinput}{\rname{Input}}
\newcommand{\rtau}{\rname{Tau}}
\newcommand{\unfold}{\rname{Unf}}
\newcommand{\extL}{\rname{Sum-L}}
\newcommand{\extR}{\rname{Sum-R}}
\newcommand{\scom}{\rname{S-com}}
\newcommand{\stauserver}{\rname{S-Srv}}
\newcommand{\stauclient}{\rname{S-Clt}}
\newcommand{\stproclift}{\rname{L-Proc}\xspace}
\newcommand{\stminplift}{\rname{L-Minp}\xspace}
\newcommand{\stmoutlift}{\rname{L-Mout}\xspace}
\newcommand{\stcommlift}{\rname{L-Comm}\xspace}
\newcommand{\cnvepsilon}{\rname{cnv-epsilon}}
\newcommand{\cnvmu}{\rname{cnv-mu}}
\newcommand{\axiom}[1]{\textsc{#1}\xspace}
\newcommand{\outputcommutativity}{\axiom{Output-commutativity}}
\newcommand{\outputconfluence}{\axiom{Output-confluence}}
\newcommand{\outputdeterminacy}{\axiom{Output-determinacy}}
\newcommand{\outputfeedback}{\axiom{Feedback}}
\newcommand{\outputtau}{\axiom{Output-tau}}
\newcommand{\outputdeterminacyinv}{\axiom{Backward-output-determinacy}}
\newcommand{\restrictedinputcommutativity}{\axiom{Input-commutativity}}
\newcommand{\fwdfeedback}{\axiom{Fwd-Feedback}}
\newcommand{\boom}{{\sc Input-Boomerang~}} %%% Ilaria: to be added to
\newcommand{\accSym}{\mathcal{A}}
\newcommand{\Fwd}{\mathcal{F}}
\newcommand{\fw}{\mathsf{fw}}
\newcommand{\sta}[1]{\ensuremath{\mathrel{\overset{#1} \longrightarrow_\fw}}}
\newcommand{\Nsta}[1]{\ensuremath{\mathrel{\overset{#1}{\longarrownot\longrightarrow_\fw}}}}
\newcommand{\wta}[1]{\ensuremath{\mathrel{\overset{#1} \Longrightarrow_\fw}}}
\newcommand\acnvalong{\Downarrow}
\newcommand\set[1]{\{ #1 \}}
\newcommand\setof[2]{\{ #1 \ |\  #2 \}}
\newcommand\parts[1]{\mathcal{P}(#1)}
\newcommand\pparts[1]{\mathcal{P}^{+}(#1)}
\newcommand{\mset}[1]{\{\!|#1|\!\}}
\newcommand\emptyMset{\varnothing}
\newcommand{\cardinality}[1]{\mid #1 \mid}
\newcommand{\modulo}[2]{{#1}_{#2}}
\newcommand{\coqpic}{\includegraphics[scale=0.1]{index}}
\newcommand{\coqlink}[1]{(\href{\coqbasepath{#1}}{\coqpic})}
\newcommand{\myshow}[1]{}
\newcommand\coqLTS[1]{\myshow{\coqlink{Must.LTS.html\##1}}}
\newcommand\coqMT[1]{\myshow{\coqlink{Must.MustT.html\##1}}}
\newcommand\coqCom[1]{\myshow{\coqlink{Must.Completeness.html\##1}}}
\DeclareMathOperator{\opMust}{\textsc{must}}
\DeclareMathOperator{\opMay}{\textsc{may}}
\newcommand{\testname}[2]{\ensuremath{\mathrel{{#1}^{\text{\smaller \ensuremath{#2}}}}}}
\newcommand{\May}[1][]{\testname{\opMay}{#1}}
\newcommand{\opMusti}{\opMust\ensuremath{_i}}
\newcommand{\musti}[2]{\ensuremath{#1 \opMusti #2}}
\newcommand{\Nmusti}[2]{\ensuremath{#1 \centernot{\opMusti} #2}}
\newcommand{\Must}[2]{\ensuremath{#1 \opMust #2}}
\newcommand{\opMustset}{\opMust_{\textsf{aux}}}
\newcommand{\mustset}[2]{\ensuremath{#1 \opMustset #2}}
\newcommand{\sqsubsetsim}{\vcenter{\offinterlineskip\hbox{$\sqsubset$}\vskip 0.2ex\hbox{$\sim$}}}
\newcommand{\Nsqsubsetsim}{\centernot{\vcenter{\offinterlineskip\hbox{$\sqsubset$}\vskip 0.2ex\hbox{$\sim$}}}}
\NewDocumentCommand{\testleq}{O{}O{}}{\ensuremath{\mathrel{\sqsubsetsim_{\text{\smaller #1}}^{\text{\smaller #2}}}}}
\newcommand{\testeq}[1][]{\ensuremath{\mathrel{\eqsim_{\text{#1}}}}}
\newcommand{\Ntestleq}[1][]{\mathrel{\Nsqsubsetsim_{\kern-3pt\text{#1}}}}
\newcommand{\testleqS}{\testleq[\ensuremath{\opMust}]}
\newcommand{\testleqSset}{\testleq[\ensuremath{\opMust}][Set]}
\newcommand{\testeqS}{\testeq[\ensuremath{\opMust}]}
\newcommand{\ok}{\ensuremath{\checkmark}}
\newcommand{\bhveq}[1][]{\mathrel{\eqsim_{\mathsf{bhv}}}}
\newcommand{\asynleq}[1][]{\mathrel{\preccurlyeq^{#1}_{\mathsf{asyn}}}}
\newcommand{\asyneq}[1][]{\mathrel{\eqsim_{\mathsf{asyn}}}}
\newcommand{\bhvleqone}{\mathrel{\preccurlyeq_{\mathsf{cnv}}}}
\newcommand{\bhvleqtwo}{\mathrel{\preccurlyeq_{\mathsf{acc}}}}
\newcommand{\asynleqone}{\asynleq[\textsf{cnv}]}
\newcommand{\asynleqtwo}{\asynleq[\textsf{acc}]}
\newcommand{\altleq}{\preccurlyeq_{\mathit{alt}}}
\newcommand\accfwp[3]{\mathcal{A}_{\fw}(#1,#2,#3)}
\newcommand\accht[2]{\mathcal{A}_{\fw}(#1,#2)}
\newcommand{\asleq}{\mathrel{\preccurlyeq_{\mathsf{AS}}}}
\newcommand{\asleqNF}{\mathrel{\preccurlyeq^{\mathsf{NF}}_{\mathsf{AS}}}}
\newcommand{\msleq}{\mathrel{\preccurlyeq_{\mathsf{MS}}}}
\newcommand{\coindleq}{\mathrel{\preccurlyeq_{\mathsf{co}}}}
\newcommand{\msleqNF}{\mathrel{\preccurlyeq^{\mathsf{NF}}_{\mathsf{MS}}}}
\renewcommand{\and}{\text{ and }}
\renewcommand{\implies}{\text{ implies }}
\newcommand{\imply}{\text{ imply }}
\newcommand{\forevery}{\text{for every }}
\newcommand{\Forevery}{\text{For every }}
\newcommand{\wehavethat}{. \;}
\newcommand{\suchthat}{\wehavethat}
\newcommand{\pierre}{\text{{\em Pierre}}\xspace}
\newcommand{\disjoint}[2]{#1 \mathop{\#} #2}
\newcommand{\trace}{s}
\newcommand{\traceA}{t}
\newcommand{\stateA}{s}
\newcommand{\stateB}{s'}
\newcommand{\stateC}{s''}
\newcommand{\N}[1][]{\mathbb{N}_{#1}}
\newtheorem{counterexample}[definition]{Counterexample}
\newtheorem{myaxiom}[definition]{Axiom}
\newcommand{\len}[1]{\mathsf{len}(#1)}
\newcounter{thm}
\newcommand{\thistheoremname}{}
\newtheorem{genericlem}[thm]{\thistheoremname}
\newcommand{\koenigslemma}{K\H{o}nig's lemma\xspace}
\newcommand{\accP}[3]{\accSym(#1,#2,#3)}
\newcommand{\acc}[2]{\accSym(#1,#2)}
\newcommand\gas[3]{\mathcal{GA}(#1,#2,#3)}
\newcommand{\leaveout}[1]{}
\newcommand{\eqdef}{\mathrel{\stackrel{\mathsf{def}}{=}}}
\renewcommand{\conv}{\ensuremath{\downarrow}}
\newcommand{\conv}{\ensuremath{\downarrow}}
\newcommand{\Conv}{\ensuremath{\Downarrow}}
\newcommand{\cnvalong}{\ensuremath{\mathrel{\Conv}}}
\newcommand{\convi}{\ensuremath{\conv_i}}
\newcommand\defsrc[1]{}
\newcommand{\subsetsim}{\mathrel{\mathpalette\subset@sim\relax}}
\newcommand{\subset@sim}[2]{%
  \vtop{\offinterlineskip\m@th
    \ialign{\hfil##\cr
      $#1\sqsubset$\cr\noalign{\kern0.5pt}\scalebox{0.9}{$#1\sim$}\cr
   }%
  }%
}
\newcommand{\st}[1]{\ensuremath{\mathrel{\overset{#1}\longrightarrow}}}
\newcommand{\stx}[1]{\xrightarrow{#1}}
\newcommand{\Nst}[1]{\ensuremath{\mathrel{\overset{#1}{\longarrownot\longrightarrow}}}}
\newcommand{\stable}{\Nst{\tau}}
\newcommand{\wt}[1]{\ensuremath{\mathrel{\overset{#1}
      \Longrightarrow}}}
\newcommand{\Nwt}[1]{\ensuremath{\mathrel{\overset{#1}{\longarrownot\Longrightarrow}}}}
\newcommand{\after}[2]{#1 \mathrel{\mathsf{after}} #2}
\newcommand\StatesA{A}
\newcommand\StatesB{B}
\newcommand\StatesC{C}
\newcommand\States{\StatesA}
\newcommand{\SysStates}{\ensuremath{S}}
\newcommand{\sysstate}{\ensuremath{s}}
\newcommand{\state}{p}
\renewcommand{\stateA}{p'}
\renewcommand{\stateB}{q}
\newcommand{\subst}[2]{[^{#1}/_{#2}]}
\newcommand\rr[1]{\ensuremath{\stackrel{#1}{\rightsquigarrow}}}
\newcommand{\gen}[2]{g(#1, #2)}
\newcommand{\testconvSym}{\mathit{tc}} %% test for convergence
\newcommand{\testconv}[1]{\testconvSym(#1)} %% test for convergence
\newcommand{\testaccSym}{\mathit{ta}} %% test for acceptance sets
\newcommand{\testacc}[2]{\testaccSym(#1,#2)} %% test for acceptance sets
\newcommand{\tacc}[2]{\testacc{#1}{#2}} %% test for acceptance sets
\newcommand{\genlts}{\mathcal{L}}
\newcommand{\lts}[3]{\ensuremath{\langle #1, #2, #3 \rangle}}
\newcommand{\reducts}[3]{\setof{ \stateA \in #2 }{ #1 #3{\tau} \stateA } }
\newcommand{\outactions}[1]{\setof{ \co{\aa} \in \co{\Names }}{ #1 \st{ \co{\aa}} } }
\newcommand{\inactions}[1]{\setof{ \aa \in \Names }{ #1 \st{  \aa} } }
\newcommand{\I}{I}
\newcommand{\chopSym}{\mathsf{nf}}
\newcommand{\chop}[1]{\chopSym(#1)}
\newcommand{\LTSPar}{\Par}
\newcommand{\client}{r}
\newcommand{\server}{p}
\newcommand{\serverA}{p}
\newcommand{\serverB}{q}
\newcommand{\goodSym}{\textsc{good}}
\newcommand{\good}[1]{\goodSym(#1)}
\newcommand{\im}[2]{\mathit{IM}(#1, #2)}
\newcommand{\MI}{\mathit{MI}}
\newcommand{\MO}{\mathit{MO}}
\newcommand{\oba}{\textsc{OBA}}%{\mathit{LTS}_{\textsc{OFb}}}
\newcommand{\obaFB}{\textsc{Fdb}}%{\mathit{LTS}_{\textsc{OFb}}}
\newcommand{\obaFW}{\textsc{Fwd}}%{\mathit{LTS}_{\textsc{OFb}}}
\newcommand{\liftFWSym}{\textsc{FW}}
\newcommand{\liftFW}[1]{\liftFWSym(#1)}
\newcommand{\mustpreorder}{$\opMust$-preorder\xspace}
\newcommand{\mustequivalence}{$\opMust$-equivalence\xspace}
\newcommand{\svrclt}{client-server\xspace}
\newcommand{\nondeterminism}{nondeterminism\xspace}
\newcommand{\nondeterministic}{nondeterministic\xspace}
\newcommand{\MustSet}{$\opMust$-set\xspace}
\newcommand{\MustSets}{$\opMust$-sets\xspace}
\newcommand{\AcceptanceSets}{acceptance sets\xspace}
\newcommand{\barinduction}{bar induction\xspace}
\newcommand{\Barinduction}{Bar induction\xspace}
\newcommand{\intentional}{intensional\xspace}
\newcommand{\intentionally}{intensionally\xspace}
\newcommand{\Intentional}{Intensional\xspace}
\newcommand{\extensional}{extensional\xspace}
  \newcommand{\stripSym}{\mathsf{strip}}
\newcommand{\strip}[1]{\stripSym(#1)}
\newcommand{\ie}{{\em i.e.}\xspace}
\newcommand{\sts}[2]{\ensuremath{\langle #1, #2 \rangle}}
\newcommand{\myspace}{\phantom{\scalebox{.6}{$\ok$}}}
\newcommand{\ltsof}[1]{\ensuremath{\textsc{lts}(#1)}}
\renewcommand{\traceA}{s_1}
\newcommand{\traceB}{s_2}
\newcommand{\traceC}{s_3}
\newcommand{\outof}[1]{O(#1)}
\newcommand{\outputmultisetSym}{\mathsf{mbox}}
\newcommand{\outputmultiset}[1]{\outputmultisetSym(#1)}
\newcommand{\WD}{\mathit{WD}}
\newcommand{\msetnow}{\rname{Mset-now}}
\newcommand{\msetstep}{\rname{Mset-step}}
\newcommand{\off}[1]{}
\renewcommand{\blacksquare}{~}
\title{Constructive characterisations of the
  \texorpdfstring{\mustpreorder}{MUST-preorder} for asynchrony}
\institute{%
  Université Paris Cité, CNRS, IRIF, F-75013, Paris, France %\email{gio@irif.fr}\url{https://www.irif.fr/~gio/index.xhtml}
  \and INRIA, Université Côte d'Azur, France %\email{ilaria.castellani@inria.fr}
  \and Nomadic Labs, Paris, France %\email{paul.laforgue123@gmail.com}
  \and University of Cambridge, United Kingdom %\email{leo.stefanesco@cam.ac.uk}
}
\author{%
  %% Giovanni Bernardi\inst{1}\orcid{0009-0008-3653-3040}\and
  %% Ilaria Castellani\inst{2}\orcid{0000-0001-9820-0892}\and
  %% Paul Laforgue\inst{1,3}\orcid{0009-0009-6688-4850}\and
  %% Léo Stefanesco\inst{4}\orcid{0000-0002-4719-2922}\thanks{Work done at
  %%   MPI-SWS, Germany.}
  Giovanni Bernardi\inst{1}\orcidID{0009-0008-3653-3040}\and
  Ilaria Castellani\inst{2}\orcidID{0000-0001-9820-0892}\and
  Paul Laforgue\inst{1,3}\orcidID{0009-0009-6688-4850}\and
  Léo Stefanesco\inst{4}\orcidID{0000-0002-4719-2922}\thanks{Work done at
  MPI-SWS, Germany.}
}
\begin{document}
\maketitle

\begin{abstract}
  De Nicola and Hennessy's \mustpreorder is a liveness preserving
  refinement which states that a server~$\serverB$ refines a
  server~$\serverA$ if all clients satisfied by~$\serverA$ are also
  satisfied by~$\serverB$. Owing to the universal quantification over
  clients, this definition does not yield a practical proof method,
  and alternative characterisations are necessary to reason over
  it.
Finding these characterisations for asynchronous semantics, i.e. where
outputs are non-blocking, has thus far proven to be a challenge,
usually tackled via ad-hoc definitions.

We show that the standard characterisations of the \mustpreorder carry over as
they stand to asynchronous communication, if servers are enhanced to act as
forwarders, \ie they can input any message as long as they store it back into
the shared buffer.
Our development is constructive, is
completely mechanised in Coq, and is
independent of any calculus: our results pertain to Selinger output-buffered
agents with feedback. This is a class of Labelled Transition Systems that
captures programs that communicate via a shared unordered buffer, as in
asynchronous \CCS or the asynchronous $\pi$-calculus.
We show that the standard coinductive characterisation lets us prove in Coq that
concrete programs are related by the \mustpreorder.
Finally, our proofs show that Brouwer's \barinduction\ principle is a useful
technique to reason on liveness preserving program transformations.
\end{abstract}

\section{Introduction}
\label{sec:intro}
Code refactoring is a routine task to develop or update software, and
it requires methods to ensure that a program~$\serverA$ can be safely
replaced by a program~$\serverB$.  One way to address this issue is
via refinement relations, \ie preorders.  For programming languages,
the most well-known one is Morris \emph{extensional} preorder
\cite[pag.~$50$]{morris}, defined by letting~$p \leq q$ if for all
contexts~$C$, whenever~$C[p]$ reduces to a normal form~$N$,
then~$C[q]$ also reduces to~$N$.

\paragraph{Comparing servers}
This paper studies a version of Morris preorder for
\nondeterministic asynchronous \svrclt systems.
In this setting it is natural to reformulate the preorder by replacing
reduction to normal forms (\ie termination) with a suitable
\emph{liveness} property.
Let~$\csys{ \server }{ \client }$ denote a {\em \svrclt\ system},
that is a parallel composition in which the identities of the
server~$\server$ and the client~$\client$ are distinguished, and
whose computations have the form
$
\csys{\server~}{~\client} =
\csys{ \server_0 }{ \client_0 } \st{ }
\csys{ \server_1 }{ \client_1 } \st{ }
\csys{ \server_2 }{ \client_2 } \st{ } \ldots,
$
where each step represents either an internal computation
of one of the two components, or an interaction between them.
Interactions correspond to handshakes, where
two components ready to perform matching input/output actions
advance together.
We express liveness by saying that  $\server \text{ \emph{must pass} }
\client$, denoted $\Must{ \server }{ \client }$, if in every maximal
computation of $\csys{ \server }{ \client
}$ there exists a state $\csys{ \server_i}{ \client_i}$ such that
$\good{\client_i}$, where~$\goodSym$ is a decidable predicate
 indicating that the client has reached a successful state.
Servers are then compared according to their capacity to
satisfy clients, \ie via contexts of the form~$\csys{[-]}{\client}$
and the predicate~$\opMust$.
%namely to lead clients to successful states, \ie via
%In other words, servers are compared
%via contexts of the form~$\csys{[-]}{\client}$.
%, as argued also by \cite{DBLP:phd/us/Thati03}.
Morris preorder %, when restricted to computations leading to
%successful states,
then becomes the \mustpreorder\
by De Nicola and Hennessy~\cite{DBLP:journals/tcs/NicolaH84} :
%\begin{equation}
%  \label{eq:must-preorder}
%%
$
  \serverA \testleqS \serverB \text{ when } \forall \client \wehavethat
  \Must{\serverA}{\client} \implies
  \Must{\serverB}{\client}.
  $
%%\end{equation}

\paragraph{Advantages}
  The \mustpreorder is by definition liveness preserving,
  because $\Must{ \server }{ \client }$ literally means that
  ``in every execution something good must happen (on the client
  side)''.  Results on~$\testleqS$ thus shed light on
  liveness-preserving program transformations.

  The~\mustpreorder is independent of any particular calculus,
  as its definition requires simply
  (1) a reduction semantics for the parallel composition
  $\csys{ \server }{ \client }$, and (2)
  a predicate $\goodSym$ over programs.
  %%% LONG VERSION
  %% We thus do not fix any particular calculus:
  %% we assume a way of describing how servers and clients interact
  %% with the environment, and use it to define the semantics of $\csys{
  %%   - }{ - }$.
  %%As a result,
  Hence~$\testleqS$
  % the relation~$\testleqS$
  % lets us compare software components
  may relate servers written in different languages. For instance, servers written in
  \textsc{OCaml} may be compared to servers written in \textsc{Java}
  according to clients written in \textsc{Python}, because all
 these languages use the same basic protocols for communication.
  % all these languages communicate using the same basic protocols.
  %%VERY DODGY STATEMENT:  that we are able to model.

\paragraph{Drawback}
  The definition of the \mustpreorder is {\em contextual}: proving
  $\serverA~\testleqS~\serverB$ requires analysing an {\em
  infinite} amount of clients, and so the definition of
the preorder does not entail an effective proof method.
A solution to this problem is to define an {\em alternative (semantic)
  characterisation} of the preorder~$\testleqS$, \ie a
preorder~$\altleq$ that coincides with~$\testleqS$
and does away with the universal quantification over clients (\ie contexts).
In {\em synchronous} settings, i.e. when both input and output
actions are blocking, such alternative characterisations have been thoroughly
investigated, typically via a behavioural approach based on labelled transition
systems.

\begin{figure}[t]
  \hrulefill
  \begin{center}
    \begin{minipage}{4cm}
        \centering
      \begin{tikzpicture}
        \node[state,scale=0.8] (s1) at (6,0) {$\server_0$};
        \node[state,scale=0.8,below of=s1,left of=s1] (s2) {$\server_1$};
        \node[state,scale=0.8,below of =s1,right  of=s1] (s3) {$\server_2$};
        \node[state,scale=0.8,below of=s2] (s4) {$\server_3$};
        \node[state,scale=0.8,below of=s3] (s5) {$\server_4$};

        \node[scale=0.8, below of = s5] (dummy) {$$};

        \path[->]
        (s1) edge node [above left,scale=0.8] {$\texttt{str}$} (s2)
        (s1) edge node [above right,scale=0.8] {$\texttt{float}$} (s3)
        (s2) edge node [left,scale=0.8] {$\co{\texttt{int}}$} (s4)
        (s3) edge node [right,scale=0.8] {$\co{\texttt{long}}$} (s5);
      \end{tikzpicture}
      \end{minipage}%
      \begin{minipage}{4cm}
        \centering\vskip-0.95em
      \begin{tikzpicture}
        \node[state,scale=0.8] (s0) at (0,0) {$\client_0$};
        \node[state,dashed,scale=0.8,below right = +15pt and +15pt of s0] (s1) {$\client_2$};
        \node[state,scale=0.8,below left = +15pt and +15pt of s0] (s2) {$\client_1$};
        \node[state,dashed,scale=0.8,below right = +15pt and +15pt of s2] (s3) {$\client_3$};

        \path[->]
        (s0) edge node [above left,scale=0.8] {$\co{\texttt{str}}$} (s2)
        (s0) edge node [above right,scale=0.8] {$\texttt{int}$} (s1)
        (s2) edge node [below left,scale=0.8] {$\texttt{int}$} (s3)
        (s1) edge node [below right,scale=0.8] {$\co{\texttt{str}}$} (s3);
    \end{tikzpicture}
  \end{minipage}
  \end{center}
  \vskip-1.5em
  \caption{The behaviours of a server $\server_0$ and of a client $\client_0$.}
  \label{fig:first-example}
  \hrulefill
\end{figure}
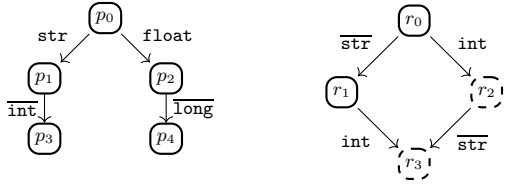

\paragraph{Labelled transition systems}
A program~$\serverA$ is associated with
% $\ltsof{\serverA}$,
a labelled transition system (LTS) representing its behaviour,
 which we denote by $\ltsof{\serverA}$.
\rfig{first-example} presents two instances of LTSs, where
transitions are labelled by input actions such as $\texttt{str}$, %$\aa$,
output actions such as $\co{\texttt{str}}$, or the internal action $\tau$ (not
featured in~Figure~\ref{fig:first-example}), while dotted nodes represent
successful states, \ie those satisfying the predicate~$\goodSym$. There, the
server~$\server_0$ is ready to input either a string or a float.
It is the environment that, by offering an output of
either type, will make~$\server$ move to either~$\server_1$
or~$\server_2$.
The client~$\client_0$, on the other hand, is ready to either output a
string, or input an integer. The input ${\tt int}$ makes the client
move to the %$\goodSym$
successful state~$\client_2$, while the output $\co{\tt{str}}$ makes
the client move to the state $\client_1$, where it can still perform
the input ${\tt int}$ to reach the %$\goodSym$
successful state~$\client_3$. In an asynchronous setting, output
transitions enjoy a 
commutativity property on which we will return later.
Programs $\serverA$ are usually associated with their behaviours
$\ltsof{\serverA}$ via inference rules that we omit in the main body of the
paper, as they are standard.

\paragraph{Alternative preorders for synchrony} 

Program behaviours, \ie~LTSs, are used to define the
alternative preorders for~$\testleqS$
following one of two different
approaches: \MustSets or \AcceptanceSets.

Both approaches were
originally proposed for Milner's
Calculus of Communicating Systems ($\CCS$)~\cite{DBLP:books/daglib/0098267},
where communication is \emph{synchronous}.
The first alternative preorder, which we denote by~$\msleq$, was put
forth by De Nicola~\cite{DBLP:journals/tcs/NicolaH84}, and it
compares server behaviours according to their \MustSets, \ie~the sets of
actions that they
%\ila{must}\gb{No. Those are the actions that a process {\em may} do.
%I know. It is confusing.}
may perform after doing a given sequence of actions.
% may perform.
%
The second alternative preorder, which we denote by~$\asleq$, was put
forth by Hennessy~\cite{DBLP:books/daglib/0066919}, and it compares the
\AcceptanceSets of servers, \ie~how servers can be moved out of their
potentially deadlocked states, namely, states
from which the servers cannot evolve autonomously.
Both these preorders characterise~$\testleqS$ in the
following sense:
\begin{align}
      \label{eq:bhv-mustset-characterisation}
  \forall \serverA , \serverB \in \CCS \wehavethat\; & \serverA \testleqS \serverB
  \text{ iff } \ltsof{\serverA} \msleq \ltsof{\serverB}
  \\
    \label{eq:bhv-accset-characterisation}
  \forall \serverA , \serverB \in \CCS \wehavethat\; & \serverA \testleqS \serverB
  \text{ iff } \ltsof{\serverA} \asleq \ltsof{\serverB}
\end{align}
While these
  alternative preorders do away with the universal quantification
over clients, they are not practical to use directly, as they still universally
quantify over (finite) traces of actions.
A more practical approach \cite{DBLP:journals/jacm/AcetoH92} is to use a
coinductively defined preorder~$\coindleq$ based on~$\asleq$
\cite{DBLP:journals/jacm/AcetoH92,DBLP:conf/concur/LaneveP07,DBLP:journals/mscs/BernardiH16}.
This preorder has two advantages: first, its definition
  quantifies universally only on single actions; second, it allows the
  user to use standard coinductive methods, as found in the literature
  on bisimulation.
%% It has two advantages: first, its definition quantifies universally on single
%% actions, and it allows the user to use standard coinductive methods, as found in
%% the literature on bisimulation.
%
In the case where the LTS is image-finite, such as for CCS and most process
calculi, the coinductive preorder is sound and complete:
\begin{equation}
      \label{eq:bhv-coind-characterisation}
  \forall \serverA , \serverB \in \CCS \wehavethat\; \serverA \testleqS \serverB
  \text{ iff } \ltsof{\serverA} \coindleq \ltsof{\serverB}
\end{equation}

\paragraph{Asynchrony} %
In distributed systems, communication is inherently
asynchronous. For instance, the standard TCP transmission on the
Internet is asynchronous.  Actor languages like \textsc{Elixir} and
\textsc{Erlang} implement asynchrony via mailboxes, and both
\textsc{Python} and \textsc{JavaScript} offer developers the
constructs \textsc{async/wait}, to return promises (of results) or
wait for them.  In this paper we model asynchrony via
\emph{output-buffered agents with feedback}, as introduced by
Selinger~\cite{DBLP:conf/concur/Selinger97}.  These are LTSs obeying
the axioms in \rfig{axioms}, where~$\aa$ denotes an input
action,~$\co{\aa}$ denotes an output action,~$\tau$ denotes the
internal action, and~$\alpha$ ranges over all these actions.  For
instance, the \outputcommutativity axiom states that an output
$\co{a}$ % $\co{a}$ that is followed by any action~$\alpha$,
can always be postponed: if $\co{a}$ is followed by any
action~$\alpha$, it can commute with it.  In other words, outputs are
non-blocking, as illustrated by the LTS for~$\client_0$ in
\rfig{first-example}.
We defer a more detailed discussion of these axioms
to \rsec{preliminaries}.

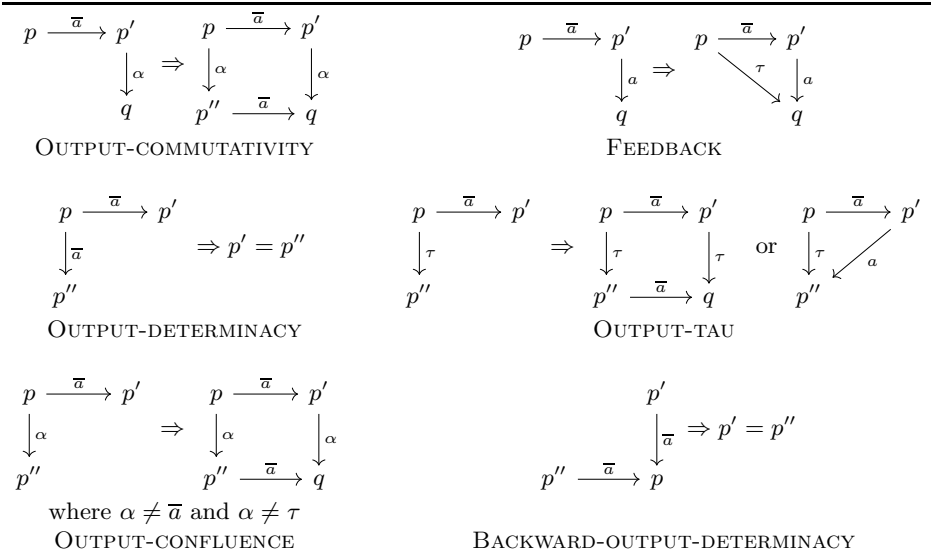
\begin{figure*}[t]
  \hrulefill\\
  \begin{tabular}{c@{\hskip 20pt}c@{\hskip 20pt}c@{\hskip 20pt}c}
    \begin{tabular}{l@{\hskip 4pt}c@{\hskip 0pt}l}
    \begin{tikzcd}
      p \arrow[r, "\co{\aa}"]
      &
      p' \arrow[d, "\alpha"]
      \\
      &
      q
    \end{tikzcd}
    &$\Rightarrow$&
    \begin{tikzcd}
      p
      \arrow[r, "\co{\aa}"]
      \arrow[d, "\alpha"]
      &
      p' \arrow[d, "\alpha"]
      \\
      p'' \arrow[r, "\co{\aa}"]
      &
      q
    \end{tikzcd}
    \end{tabular}
&
%\begin{framed}[16em]%\outputconfluencediagram
\begin{tikzcd}
p \arrow[r, "\co{\aa}"]
&
p' \arrow[d, "\aa"]
\\
&
q
\end{tikzcd}
$\Rightarrow$
\begin{tikzcd}
p
\arrow[r, "\co{\aa}"]
\arrow[rd, "\tau"]
&
p' \arrow[d, "\aa"]
\\
&
q
\end{tikzcd}
\\
\outputcommutativity  &  \outputfeedback
\\[10pt]
\begin{tikzcd}
p \arrow[r, "\co{\aa}"]
\arrow[d, "\co{\aa}"]
&
p'
\\
p''
\end{tikzcd}
$\Rightarrow p' = p''$
&
\begin{tikzcd}
p \arrow[r, "\co{\aa}"]
\arrow[d, "\tau"]
&
p'
\\
p''
&
\end{tikzcd}
$\Rightarrow$
\begin{tikzcd}
p
\arrow[r, "\co{\aa}"]
\arrow[d, "\tau"]
&
p'
\arrow[d, "\tau"]
\\
p''
\arrow[r, "\co{\aa}"]
&
q
\end{tikzcd}
\,\, or
\begin{tikzcd}
p
\arrow[r, "\co{\aa}"]
\arrow[d, "\tau"]
&
p'
\arrow[ld, "\aa"]
\\
p''
\end{tikzcd}
&

\\[5pt]
\outputdeterminacy & \outputtau
\\[10pt]
\begin{tikzcd}
p \arrow[r, "\co{\aa}"]
\arrow[d, "\alpha"]
&
p'
\\
p''
&
\end{tikzcd}
$\Rightarrow$
\begin{tikzcd}
p
\arrow[r, "\co{\aa}"]
\arrow[d, "\alpha"]
&
p' \arrow[d, "\alpha"]
\\
p'' \arrow[r, "\co{\aa}"]
&
q
\end{tikzcd}
&
\begin{tikzcd}
&
p'
\arrow[d, "\co{\aa}"]
\\
p''\arrow[r, "\co{\aa}"]
&
p
\end{tikzcd}
$\Rightarrow p' = p''$
\\
where $\alpha \neq \co{\aa}$ and $\alpha \neq \tau$
\\
\outputconfluence & \outputdeterminacyinv
\end{tabular}
\caption{
First-order axioms for output-buffered agents with feedback as given by Selinger  \cite%[Table 5]
  {DBLP:conf/concur/Selinger97}, extended with the \outputdeterminacyinv axiom.}
%\caption{Two of the axioms by \cite{DBLP:conf/concur/Selinger97} for
%  asynchronous out-buffered agents.}
\label{fig:axioms}
\hrulefill
\end{figure*}

\paragraph{Technical difficulties} % due to asynchrony.}
The practical importance of asynchrony motivates a % suitable
specific study of~$\testleqS$. Efforts in this direction have
been made, all of which focussed on process calculi
\cite{DBLP:conf/fsttcs/CastellaniH98,DBLP:journals/iandc/BorealeNP02,DBLP:phd/us/Thati03,DBLP:journals/jlp/Hennessy05},
while the axioms in \rfig{axioms} apply to LTSs.
% The axioms in \rfig{axioms}, though, create an asymmetry between output
% and input actions, because they % force
% impose properties only over the outputs.
Note that these axioms impose conditions
  only over outputs, and this asymmetric treatment of inputs and outputs
substantially complicates the proofs of completeness and soundness of
the alternative characterisations of~$\testleqS$.
%% LONG VERSION
%% For instance, the phenomenon leaps out in the completeness proofs of alternative
%% characterisations, for their arguments depend on the contexts (\ie
%% clients), and the standard reasoning requires clients with blocking
%% outputs, that do not exist in the asynchronous setting.  As a
%% consequence, the results in (\ref{eq:bhv-mustset-characterisation})
%% and (\ref{eq:bhv-accset-characterisation}) do not hold in settings
%% where programs communicate by modifying a shared unordered buffer
%% \cite{DBLP:conf/concur/Selinger97,DBLP:conf/ecoop/HondaT91,
%%   boudol:inria-00076939}.
% and hence outputs are {\em non-blocking}.
%% For instance, \cite{DBLP:conf/fsttcs/CastellaniH98}
%% give simple counterexamples to~(\ref{eq:bhv-accset-characterisation})
%% in the setting of asynchronous \CCS (ACCS). There $ {\tt str}.\Nil
%% \testleqS \Nil $, \ie a server that does nothing is as good as a
%% server than can input a  string, while $ {\tt str}.\Nil \not \asleq
%% \Nil $.
%As for
To underline the subtleties due to asynchrony, we note that the completeness
result for asynchronous
  \CCS given by Castellani and Hennessy
  in~\cite{DBLP:conf/fsttcs/CastellaniH98},
  and subsequently extended to the $\pi$-calculus by 
  Hennessy~\cite{DBLP:journals/jlp/Hennessy05},
  is false (see \rapp{counterexample}).

\subsection*{Contributions and paper structure.} 
Our main contributions may be summarised as follows (where for each of
them, we %detail
indicate where it is presented in the paper):
%\ilacom{I would first give the contributions and then shortly describe
%  the paper structure at the end.}
\begin{itemize}
\item 
The first behavioural
characterisations of the \mustpreorder (\rthm{testleqS-equals-accleq},
\rthm{testleqS-equals-mustsetleq}) that are {calculus independent}, in
that both our definitions and our proofs work directly on LTSs.
Contrary to all the previous works on the topic, we show that the {\em
  standard} alternative preorders characterise the~\mustpreorder also
in Selinger asynchronous setting.  To this end, it suffices to enrich
the server semantics with {\em forwarding}, i.e. ensure that servers
are ready to
% receive any input message,
input any message, as long as they
% \ila{eventually}\gb{I disagree. Servers immediately}
store it back in a global shared buffer. This idea,
although we use it here in a slightly different form, was pioneered by
Honda et al.~\cite{DBLP:conf/ecoop/HondaT91}.  In this paper we
propose a construction that works on any LTS (\rlem{liftFW-works}) and
we show the following counterparts of
Equations~\eqref{eq:bhv-mustset-characterisation},
\eqref{eq:bhv-accset-characterisation}, and
\eqref{eq:bhv-coind-characterisation} where $\obaFB$ denotes the LTSs
of output-buffered agents with feedback, and~$\liftFWSym$ is the
function that enhances them with forwarding:
\begin{align}
  \forall \serverA , \serverB \in \obaFB
    \wehavethat\;&
    \serverA \testleqS \serverB
    \text{ iff }
    \liftFW{\serverA} \msleq \liftFW{\serverB}
    \tag{a}
    \\
\forall \serverA , \serverB \in \obaFB
    \wehavethat\;&
    \serverA \testleqS \serverB
    \text{ iff }
    \liftFW{\serverA} \asleq \liftFW{\serverB}
    \tag{b}
    \\
\forall \serverA , \serverB \in \obaFB
    \wehavethat\;&
    \serverA \testleqS \serverB
    \text{ iff }
    \liftFW{\serverA} \coindleq \liftFW{\serverB}
    \tag{c}
\end{align}

%% \begin{equation}
%%   \begin{array}{ll}
%%     \forall \serverA , \serverB \in \obaFB
%%     \wehavethat& 
%%     \serverA \testleqS \serverB
%%     \text{ iff }
%%     \liftFW{\serverA} \msleq \liftFW{\serverB}
%% %
%%     \\
%%     %
%% \forall \serverA , \serverB \in \obaFB
%%     \wehavethat&
%%     \serverA \testleqS \serverB
%%     \text{ iff }
%%     \liftFW{\serverA} \asleq \liftFW{\serverB}
%%     \end{array}
%% \end{equation}
  
%% $$
%% \begin{array}{l}
%%       \forall \serverA , \serverB \in \obaFB
%%     \wehavethat
%%     \serverA \testleqS \serverB
%%     \text{ iff }
%%     \liftFW{\serverA} \msleq \liftFW{\serverB}
%%     \\
%% \forall \serverA , \serverB \in \obaFB
%%     \wehavethat
%%     \serverA \testleqS \serverB
%%     \text{ iff }
%%     \liftFW{\serverA} \asleq \liftFW{\serverB}
%%     \end{array}
%%     $$
%%  \end{equation}
  \noindent
  % where $\serverA , \serverB \in \obaFB$ means that the
  % programs at hand are in LTSs of output-buffered agents with
  % feedback, and~$\liftFWSym$ is the function that lifts the LTSs
  % at hand to LTSs of forwarders.
  Quite surprisingly, the alternative preorders~$\asleq$, $\msleq$
  and~$\coindleq$ need not be changed. We present these results in
  \rsec{bhv-preorder}. 
  %%
  %Selinger axioms are fundamental to prove completeness,
  %which we discuss in \rapp{bhv-completeness}.
%
  We use the coinductive preorder $\coindleq$ to prove the correctness of a form
  of code hoisting \eqref{eq:mailbox-hoisting}.

  \item%
   % We present what, to the best of our knowledge, are
    The first characterisations of the \mustpreorder that fully exploit
  asynchrony, \ie disregard irrelevant (that is, non-causal) orders of visible
  actions in traces (\rcor{asynleq-equals-bhvleq}).

  \item %
  The first constructive account of the \mustpreorder.
  % It shows
  We show that if the~$\opMust$ and termination
  predicates are defined {\em \intentionally} (in the sense of
  Brede and Herbelin~\cite{DBLP:conf/lics/BredeH21}),
  then~$\testleqS$ can be characterised constructively.
  The original definitions of~$\opMust$ and termination given
  by De Nicola~\cite{DBLP:journals/tcs/NicolaH84}, though, are {\em \extensional}.
  We show how to use Brouwer \barinduction principle to
  prove that
  the two approaches are logically equivalent (\rcor{inductive-char-must}).
  %%% COMMENTED TO SHORTEN THE CAMERA READY VERSION
  %% Showing that \intentional\ and \extensional\ definitions are logically
  %% equivalent is a known problem, discussed for instance by
  %% Coquand~\cite{notecoquand} and Brede and Herbelin~\cite{DBLP:conf/lics/BredeH21}.
  %% We follow their approach and adapt the \barinduction principle
  %% to our setting to prove the desired equivalences.
  %% Our treatment shows that \koenigslemma, which is a mainstay in the literature
  %% on the \mustpreorder, is unnecessary: the \barinduction
  %% principle suffices for our purposes\footnote{In fact even its
  %% version for finite branching
  %% trees, i.e. the fan theorem, suffices in the current treatment.}.  
  Since Rahli et al.~\cite{DBLP:journals/jacm/RahliBCC19} have shown \barinduction
  to be compatible  with constructive type theory, we argue %believe
  that our development is entirely constructive.

  %%% NO LONGER THE CASE
  %% Due to space constraints, we explain the principle of \barinduction
  %% and how to adapt it to our usage in \rapp{bar-induction}, while in
  %% this extended abstract we merely employ the principle.

\item The first mechanisation of the theory of \mustpreorder in a
  fully nondeterministic setting, which consists of around 8000 lines
  of Coq.  In \rapp{coq} we gather the Coq versions of all the
  definitions and the results presented in the main body of the paper.
  %%% GOOD FOR LONG VERSION
  %% which, however, is self-contained: it presents in English all the
  %% definitions and arguments supporting our results.
  %% We hope this will help readers not fluent in Coq to understand our
  %% contributions, while giving fluent Coq users enough comments to
  %% smoothly navigate through the code.
\end{itemize}

  % We discuss the impact of the above contributions in \rsec{conclusion},
  % where we also present related and future works.
  % In \rsec{preliminaries}, we recall the
  % \mustpreorder and the other standard notions that
  % we use to develop our contributions.
In~\rsec{conclusion}, we discuss the impact of the above
contributions, as well as related and future work.
In \rsec{preliminaries}, we recall the necessary background
  definitions and illustrate them with a few examples.

\begin{figure}[t]
  \hrulefill
  %     (* further details omitted for readability. *)
  \begin{minted}{coq}
    Class Sts (A: Type) := MkSts {
      sts_step: A → A → Prop;
      sts_stable: A → Prop; }.

    Inductive ExtAct (A: Type) :=     Inductive Act (A: Type) :=
    | ActIn (a: A) | ActOut (a: A).   | ActExt (ext: ExtAct A) | τ.

    Class Label (L: Type) :=          Class Lts (A L : Type) `{Label L} := 
    MkLabel {                         MkLts {
     label_eqdec: EqDecision L;        lts_step: A → Act L → A → Prop;
     label_countable: Countable L; }.  lts_outputs: A → finite_set L;
                                       lts_performs: A → (Act L) → Prop; }.
  \end{minted}
    %  lts_state_eqdec: EqDecision A;
  %       lts_stable_decidable: ∀ x ℓ, Decision (lts_stable x ℓ);
  %       lts_step_decidable: ∀ a l b, Decision (lts_step a l b);
  %      lts_outputs_spec1 s1 x s2 : lts_step s1 (ActExt (ActOut x)) s2 → x ∈ lts_outputs s1;
%      lts_outputs_spec2 s1 x : x ∈ lts_outputs s1 → ∃ s2, lts_step s1 (ActExt (ActOut x)) s2;
%      lts_stable_spec1: ∀ x ℓ, ¬ lts_stable x ℓ → { y | lts_step x ℓ y };
%      lts_stable_spec2: ∀ x ℓ, { y | lts_step x ℓ y } → ¬ lts_stable x ℓ;
  \caption{Highlights of our Sts and Lts typeclasses.}
  \label{fig:mechanisation-lts}
  \label{fig:sketch-mechanisation-sts}
  \hrulefill
\end{figure}

%%%% Section START %%%%%%%%%%%%%%%%%%%%%%%%%%%%%

\section{Preliminaries}
\label{sec:preliminaries}
%As hinted in the Introduction,
We model individual programs such as
servers~$\server$ and clients~$\client$ %are modelled
as LTSs obeying Selinger axioms, while client-server systems
$\csys{\server}{\client}$ are modelled as state transition systems
with a reduction semantics. We now formally define this two-level
semantics.
%for programs and systems.

% \ilacom{TODO: Introduce first the sets of names $\Names$ and conames
%   $\overline{\Names}$, representing process inputs and outputs
%   respectively. Then introduce the silent or invisible action $\tau$,
%   and define the set of process actions to be
%   $\Acttau \;\;\eqdef\;\; \Names \uplus \{ \co a \mid a \in \Names \}
%   \uplus \{ \tau \} $. This is the set of actions of CCS and of
%   Selinger LTSs. Then insert the paragraph \emph{Labelled transition
%     systems}, and say that here we assume $L$ to be $\Acttau$.  Now
%   add a paragraph \emph{Client-server systems}, and explain that the
%   STS semantics of systems is derived from the LTS semantics of
%   programs as specified by the rules in Figure~\ref{fig:rules-STS}.}

%%%%%%%%%%%%%%%%%%%%%%%%%%%%%%%%%%%%%%%%%%%%%%%%%%%%

\paragraph{Labelled transition systems}
A \emph{labelled transition system} (LTS) is a triple
$\genlts = \lts{ \States }{ L }{ \st{} }$ where~$\States$ is the set
of states, $L$~is the set of labels
% defined as $
%   L \;\;\eqdef\;\; \Names \cup \{ \co a \mid a \in \Names \} \cup \{ \tau \}
%   $
  and ${\st{}}
  \subseteq \States \times L \times \States$ is the transition
  relation.
  When modelling programs as LTSs, we use transition labels to
  represent program actions. The set of labels in Selinger LTSs has
  the same structure as the set of actions in Milner's calculus \CCS:
  one assumes a set of names~$\Names$, denoting input actions and
  ranged over by~$\aa, \ab, c$, a complementary set of
  conames~$\overline{\Names}$, denoting output actions and ranged over
  by $\co{\aa}, \co{\ab}, \co{c}$, and an \emph{invisible}
  action~$\tau$, representing internal computation.
The set of all actions, ranged over by $\alpha, \beta, \gamma$,
is given by
$\Acttau \;\;\eqdef\;\; \Names \uplus \overline{\Names}
%\{ \co a \mid a \in \Names \}
   \uplus \{ \tau \} $.
We use $\mu, \mu'$ to range over the set of visible actions $\Names
\uplus \overline{\Names}$, and we extend the complementation function
$\co\cdot$ to this set by letting ${\co{\co \aa}} \eqdef \aa$.
In the following, we will always assume $L = \Acttau$.
Once the LTS is fixed, we write %use the infix notation
$\state \st{\alpha} \stateA$ to mean that
$(\state,\alpha,\stateA)~\in~{\st{}}$ and $ \state \st{ \alpha }$
to mean $\exists \stateA \suchthat \state \st{\alpha} \stateA$.
%
% In the sequel, we shall loosely use the terms ``input'' and ``output''
% either for an input/output action or for an input/output transition.
% %(i.e., a transition labelled by an input/output action).
% \ilacom{Take off this sentence?}

% {\em Notation:}
We use $\genlts$ to range over LTSs. To reason
simultaneously on different LTSs, we will use the
symbols~$\genlts_A$ and~$\genlts_B$ to denote respectively
the LTSs~$\lts{\StatesA}{L}{\st{}_A}$ and~$\lts{\StatesB}{L}{\st{}_B}$.

In our mechanisation LTSs are borne out by the typeclass
\mintinline{coq}{Lts} in \rfig{mechanisation-lts}. The states of the LTS
have type $\States$, labels have type $L$, and \mintinline{coq}{lts_step} is
the characteristic function of the transition relation, which we
assume to be decidable.  We let $O(\state) = \outactions{ \state }$
and $I(\state) = \inactions{ \state}$ be respectively the set of
outputs and the set of inputs of state~$\state$.  % (\coqSyn{outputs_of}, \coqSyn{inputs_of}).
We assume that the set $O(\state)$
% assume that the set of output actions of any state of~$\StatesA$
% is given by the function \mintinline{coq}{lts_outputs} and
  is finite for any $\state$.
  % , \ie $\forall p \in \States . |O(p)| \in \N$.
  In our mechanisation, the set
$O(\state)$ is rendered by the function \mintinline{coq}{lts_outputs},
and we shall also use a function \mintinline{coq}{lts_performs} that
% We also assume that the function \mintinline{coq}{lts_performs}
lets us decide whether a state can
perform a transition labelled by a given action.

%%%%% ILARIA: moved this at the end of previous paragraph
%
% {\em Notation:} we use $\genlts$ to range over LTSs. To reason
% simultaneously on different LTSs, we will use the
% symbols~$\genlts_A$ and~$\genlts_B$ to denote respectively
% some LTS~$\lts{\StatesA}{L}{\st{}_A}$ and some LTS~$\lts{\StatesB}{L}{\st{}_B}$.
%
%%%%%%%%%%%%%%%%%%%%%%%%%%%%%%%%%%%%%%%%%%%%%%%%%%%%

% \ilacom{DO we still need the following?}
% Parallel composition is defined over
% LTSs with the same labels $\lts{\States_1}{L}{\st{}_1}$ and
% $\lts{\States_2}{L}{\st{}_2}$, and describes the transition relation
% of the LTS $\lts{\States_1 \times \States_2}{L}{\st{}}$, where the
% transition relation $\st{}$ is defined in the standard way
% in~\rfig{}.
% This is also the semantics of \svrclt systems.

%%%%%%%%%%%%%%%%%%%%%%%%%%%%%%%%%%%%%%%%%%%%%%%%%%%%

%%%%%% ILARIA (14/12/23): Rules adapted from those of ACCS in Appendix

\begin{figure}[t]
\hrulefill
$$
\begin{array}{lll}
  \stauserver
  &
  \stauclient
  &
  \scom\\
\begin{prooftree}
{ \server \st{\tau} \server' }
  \justifies
      {\csys{\server}{\client} \st{} \csys{\server'}{\client}}
\end{prooftree}
  \hspace{3em}
&
\begin{prooftree}
  \client \st{\tau} \client'
  \justifies
      {\csys{\server}{\client} \st{} \csys{\server}{\client'}}
\end{prooftree}
  \hspace{3em}
&
\begin{prooftree}
  \server\st{ \mu } \server' \quad \client \st{\co{ \mu }} \client'
  \justifies
  \csys{\server}{\client} \st{} \csys{\server'}{\client'}
\end{prooftree}
\end{array}
$$
\caption{The STS of server-client systems.}
% The metavariables are
% $\aa \in \Names, \co{\aa} \in \overline{\Names}, \mu \in \Act, \and
% \alpha \in \Acttau$.
  \label{fig:rules-STS}
\hrulefill
\end{figure}

\paragraph{Client-server systems} %
%
%%%%%% ILARIA: shortened for CONCUR
%
% We denote with $\csys{\server}{\client}$ a pair of states
% in which~$\server$ is deemed the server of client~$\client$.
% We call the pair $\csys{\server}{\client}$ a {\em \svrclt} system,
% for short a {\em system}. 
%
%%%%%%%%%%%%%%%%%%%%%%%%%%
%
A {\em \svrclt} system (or {\em system}, for short) is a pair
$\csys{\server}{\client}$ in which~$\server$ is deemed to be the server of
client~$\client$.
%
 %  and we assume a reduction semantics for~$- \llceil -$. For example,
%   if $\server$ and
%   $\client$ are terms of a language endowed with the parallel
%   composition~$ -\Par - $, then the semantics of~$- \llceil -$ is the
%   same one of~$ - \Par -$, but with the server and the client being
% distinguished, in order to prevent $\goodSym$ terms to ``commute'' from
% the server process into the client process.
%% \footnote{
%% In the implementation the reduction semantics is adapted in the
%% obvious way, %(\coqMT{lts_sys_s}).
%% and we use a location \texttt{c} for the client a location \texttt{s}
%% for the server, and systems are coded as \lstinline!s[p] || c[r]!.
%% These locations prevent server code from migrating into the client side,
%% and vice-versa.
%% }
%%%%%%%%%%%%%%%%%%%%%%%%%%
%
In general, every system~$\csys{\server~}{~\client}$ is the root of a
{\em state transition system} (STS), $\sts{ \SysStates }{ \st{ }} $,
where~$\SysStates$ is the set of states and~$\st{}$ is the reduction
relation.  For the sake of simplicity\footnote{In general the
  reduction semantics and the LTS of a calculus are defined
  independently, and connected via the Harmony lemma
  (\cite{sangiorgi}, Lemma 1.4.15 page 51).  %While this is out of
  %scope,
  We have a mechanised proof of it.\label{harmony}}%
  %the equivalence~$\simeq$ used in our typeclass
  %\lstinline!LtsEq! enjoys the main property behind the proof of the
%Harmony lemma.}
  we derive the reduction relation from the LTS
semantics of servers and clients as specified by the rules in
\rfig{rules-STS}.
  %\ila{which is derived from the LTS
%  semantics of servers and clients as specified by the rules in
  %  Figure~\ref{fig:rules-STS}.}
In our mechanisation (\rfig{sketch-mechanisation-sts}), \mintinline{coq}{sts_step}
% \lstinline!sts_step!
is the
characteristic function of the reduction relation~$\st{}$, and
% \lstinline!sts_stable!
\mintinline{coq}{sts_stable}
is the function that states whether a state can
reduce or not. Both functions are assumed decidable.
% \ilacom{The font of \lstinline!sts_step! and \lstinline!sts_stable! is
%   different from that of \mintinline{coq}{lts_step}. Is that intended?}
% %
%\ilacom{I would put a definition environment for the following
%  definition, as for Definition~\ref{def:inf-transition-sequence}}\\

% \noindent
% \ilacom{I have simplified the following definition, as suggested by
%  CONCUR Reviewer B, but I've left the definition
%  environment for visibility.}

\begin{definition}[Computation]
  Given an STS $\sts{ \SysStates }{ \st{} }$ and a state
  $\sysstate_0 \in \SysStates$, a %\emph{reduction sequence} or
  \emph{computation} of $\sysstate_0$ is a finite or infinite
  reduction sequence starting from $\sysstate_0$.
  A computation is
{\em maximal} if either it cannot be extended or it is infinite.
  \hfill $\blacksquare$
  \end{definition}

% \begin{definition}[Computation]
%   Given an STS $\sts{ \SysStates }{ \st{} }$ and a state
%   $\sysstate_0 \in \SysStates$, a %\emph{reduction sequence} or
%   \emph{computation} of $\sysstate_0$ is a finite or infinite
%   reduction sequence\footnote{Which is defined as a coinductive type
%     in our Coq development.}, \ie a partial function $\eta$ from $\N$
%   to $\SysStates$ whose domain is downward-closed, such that
%   $\sysstate_0 = \eta(0)$ and for each
%   $n \in \dom\eta \setminus \{0\}$,
%   $\eta(n-1) \st{ } \eta(n)$.\hfill$\blacksquare$
% \end{definition}

% A computation~$\eta$ is
% {\em infinite} if $ \mathit{dom}(\eta) = \N$.
% A computation~$\eta$ is
% {\em maximal} if either it is infinite or it cannot be extended,
% \ie $\eta(n_{max}) \Nst{}$ where $n_{max} = max(
% \mathit{dom}(\eta))$.

To formally define the \mustpreorder, we assume a decidable
predicate~$\goodSym$ over clients.  A
computation~$ \csys{\server~}{~\client} = \csys{ \server_0 }{
  \client_0 } \st{ } \csys{ \server_1 }{ \client_1 } \st{ } \csys{
  \server_2 }{ \client_2 } \st{ } \ldots $ is {\em successful} if
there exists a state $\csys{ \server_i}{ \client_i}$ such that
$\good{\client_i}$.
%
% A computation~$\eta$ of $\sysstate_0 =
% \csys{\server_0}{\client_0}$
% is {\em successful} if there exists $ n \in \N $ such that
% $\good{ \mathsf{snd}(\eta(n)) }$. 
We assume the predicate $\goodSym$
to be \emph{invariant under outputs}:
\begin{equation}
\label{eq:good-invariance}
  \text{If} ~~\client \st{ \co{a}} \client' ~~ \text{then}~~
  \good{ \client} \iff \good{ \client'}
\end{equation}
All the previous works on
asynchronous calculi implicitly make this
assumption, since they rely on ad-hoc actions such as $\omega$ or
$\checkmark$ to signal success and they treat them as outputs.
In \rapp{accs} we show that this
assumption holds for the language~\ACCS~(the asynchronous variant of
\CCS) extended with the
process~$\Unit$, which is used as a syntactic means
  to denote GOOD states.
%This is consistent with the \outputcommutativity axiom.
%
Moreover, when considering an equivalence on programs $\simeq$ that is
compatible with transitions, in the sense of Figure~\ref{fig:Axiom-LtsEq}, we
assume the predicate $\goodSym$ to be preserved also by this equivalence.
These assumptions are met by the frameworks in
  \cite{DBLP:conf/fsttcs/CastellaniH98,DBLP:journals/iandc/BorealeNP02,DBLP:journals/jlp/Hennessy05}.

\begin{definition}[Client satisfaction]%\coqME{must_ext}
  \label{def:must-extensional}
  We write $\Must{\server}{\client} $ if every maximal
  computation of $\csys{\server}{\client}$ is successful.\hfill$\blacksquare$
\end{definition}

%%% NO LONGER USEFUL
%% \noindent
%% We write $\NMust{\server}{\client}$ to mean $\lnot
%% (\Must{\server}{\client}) $. The prefix notation for $\opMust$ is unusual,
%% and we adopt it for it simplifies the presentation of our results, in
%% particular the statement of \req{musti-preserved-by-nf}.

\begin{definition}[\mustpreorder]% \coqMT{ctx_srv}, ]
  \label{def:testleq}
  \label{def:testleqS}
We let $ \server \testleqS \serverB$ whenever for every
client $r$ we have that
$\Must{\server}{\client}$ implies $\Must{\serverB}{\client}$.\hfill$\blacksquare$
\end{definition}

%% \noindent
%% \gb{The predicate $\opMust$ is a
%%   natural way to model liveness properties, and~$\testleqS$ is the
%% obvious contextual equivalence to preserve such properties, while
%% allowing code refactoring that is not observable by clients.}

%%%%%%%%%%%%%%%%%%%%%%%%%%%%%%%%%%%%%%%%%%%%%
% 

\begin{example}
\label{ex:max-comp}
 Consider the system~$\csys{ \server_0 }{ \client_0 }$, where~$\server_0$
  and~$\client_0$ are the server and client given in
  \rfig{first-example}. The unique maximal computation of this system is
  $\csys{ \server_0 }{ \client_0 } \st{} \csys{ \server_1 }{ \client_1
  } \st{} \csys{ \server_3 }{ \client_3 }$.
This computation is successful since it leads the client to the $\goodSym$ state
$\client_3$.
% which is  $\goodSym$.
Hence, client  $\client_0$ is
satisfied by server $\server_0$.
%
%Note that,
Since \outputcommutativity implies an absence of causality
between the output $\co{{\tt str}}$ and the input ${\tt int}$ in the
client, it is the order between the input ${\tt str}$ and the output
$\co{\tt int}$ in the server that guides the order of client-server
interactions.\hfill$\qed$
\end{example}

\paragraph{A closer look at Selinger axioms}
Let us now discuss the axioms in \rfig{axioms}.
%Let us now discuss the axioms in Figure~\ref{fig:axioms}.
%As mentioned already,
The \outputcommutativity axiom expresses the non-blocking behaviour of
outputs: %it says that
an output cannot be a cause of any subsequent transition, since it can
also be executed after it, leading to the same resulting state. Hence,
outputs are concurrent with any subsequent transition.  The
\outputfeedback axiom says that an output followed by a complementary
input can also synchronise with it to produce a $\tau$-transition.
% (since the output and the input are in fact concurrent).
These first
two axioms specify properties of outputs that are followed by another
transition. Instead, the following three axioms, \outputconfluence,
\outputdeterminacy and \outputtau, specify properties of outputs that
are co-initial with another transition\footnote{Two transitions are
  co-initial if they stem from the same state.}. The
\outputdeterminacy and \outputtau axioms apply to the case where the
co-initial transition is an identical output or a $\tau$-transition
respectively, while the \outputconfluence axiom applies to the other
cases.  When taken in conjunction, these three axioms state that outputs
cannot be in conflict %(i.e., mutually exclusive)
with any co-initial transition, except when this is a
$\tau$-transition: in this case, the \outputtau axiom allows for a
confluent nondeterminism between the $\tau$-transition on one side and
the output followed by the complementary input on the other side.
% since the $\tau$-transition may represent a synchronisation between the
% output and the input.

% Since the axiom \outputdeterminacyinv is not part of Selinger
% axioms, we discuss it in the next paragraph.

We now explain the novel \outputdeterminacyinv axiom.  It is the dual of \outputdeterminacy, as it states that also backward transitions with identical outputs lead to the same state. The intuition is that if two programs arrive at the same state by removing the same message from the mailbox, then they must coincide. % This seems natural if communication takes place
% via a shared mailbox.
This axiom need not be assumed in \cite{DBLP:conf/concur/Selinger97} because it can be derived from Selinger axioms when modelling a calculus like~\ACCS equipped with a parallel composition operator~$\parallel$ (see Lemma~\ref{lem:output-shape} in Appendix~\ref{sec:accs}).  We use the \outputdeterminacyinv axiom only to prove a technical property of clients (\rlem{inversion-gen-mu}) that is used to prove our completeness result.

\paragraph{Calculi}
%% Given a calculus of interest, together with the inference rules to
%% which give its LTS, one would like to show that said LTS satisfies the
%% axioms in \rfig{axioms}. For example,
%% \cite{DBLP:conf/concur/Selinger97} does so for the LTS of asynchronous \ACCS modulo bisimulation.
A number of asynchronous calculi
\cite{DBLP:conf/ecoop/HondaT91,boudol:inria-00076939,DBLP:conf/fsttcs/CastellaniH98,DBLP:journals/toplas/HennessyR02,palamidessi_2003,DBLP:conf/birthday/Sangiorgi19}
have an LTS that enjoys the axioms in \rfig{axioms}, at least up to
some structural equivalence~$\equiv$. The reason is that these calculi
syntactically enforce outputs to have no continuation, \ie outputs can
only be composed in parallel with other
processes.\footnote{In the calculus \TACCS~(a variant of \ACCS
    tailored for testing semantics)
    of~\cite{DBLP:conf/fsttcs/CastellaniH98} there is a construct of
    asynchronous output prefix, but its behaviour is to spawn the
    corresponding atom in parallel with the continuation, so it
      does not act as a prefix.}.  For
example, Selinger~\cite{DBLP:conf/concur/Selinger97} shows that the
  axioms of \rfig{axioms} hold for the LTS of the calculus~\ACCS (the
asynchronous variant of CCS\footnote{The syntax of~\ACCS, which
    is closely inspired by that of the asynchronous $\pi$-calculus
    with input- and $\tau$-guarded choice~\cite{ACS96,ACS98}, is given
    in~\req{syntax-processes} and discussed later.})  modulo
bisimulation, and in \rapp{accs}
(\rlem{ACCSmodulos-equiv-is-out-buffered-with-feedback})
%we prove this
we prove that they hold also for the LTS {of \ACCS modulo $\equiv$:%\lts{\modulo{\ACCS}{\equiv}}{L}{\modulo{\st{}}{\equiv}}:
\begin{lemma}
  \label{lem:ACCS-obaFB}
  We have that $\lts{\modulo{\ACCS}{\equiv}}{L}{\modulo{\st{}}{\equiv}} \in \obaFB$.
\end{lemma}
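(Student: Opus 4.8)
The plan is to unfold the definition of $\obaFB$ and verify, one by one, the six axioms of \rfig{axioms} for the quotient LTS $\lts{\modulo{\ACCS}{\equiv}}{L}{\modulo{\st{}}{\equiv}}$. As a preliminary I would check that structural congruence is a \emph{strong bisimulation} on the syntactic LTS of \ACCS, so that the quotient transition relation $\modulo{\st{}}{\equiv}$ is well defined and every transition between $\equiv$-classes is witnessed by a transition $p \st{\alpha} p'$ between representatives, closed under $\equiv$ on both endpoints. This lets me reason entirely on syntactic processes and re-inject $\equiv$ at the end.

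The engine of the whole argument is an \emph{output-shape} characterisation (the content of \rlem{output-shape}): since in \ACCS an output carries no continuation, I claim
\[ p \st{\co{\aa}} p' \quad\text{iff}\quad p \equiv \co{\aa} \Par p'. \]
Direction $(\Leftarrow)$ is immediate from rule \parL together with the fact that the output particle $\co{\aa}$ can only perform the action $\co{\aa}$, becoming $\Nil$, and that $\Nil \Par p' \equiv p'$. Direction $(\Rightarrow)$ I would prove by induction on the derivation of the transition: the base case is the particle itself, the parallel and choice cases are routine, and the recursion case is absorbed by unfolding inside $\equiv$. The crucial point is that no rule can emit $\co{\aa}$ by consuming a prefix, which is exactly why the emitted $\co{\aa}$ must already sit as a parallel atom.

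Given this characterisation, the axioms follow by a short case analysis using only that $\co{\aa}$ performs nothing but $\co{\aa}$ (to $\Nil$) and that $\Nil$ is a unit for $\Par$. For \outputdeterminacyinv, $p' \st{\co{\aa}} p$ and $p'' \st{\co{\aa}} p$ give $p' \equiv \co{\aa} \Par p \equiv p''$ at once. For \outputcommutativity I simply re-attach the particle: since the subsequent transition $p' \st{\alpha} q$ already stems from $p'$, rule \parL gives $p \equiv \co{\aa} \Par p' \st{\alpha} \co{\aa} \Par q$ and $\co{\aa} \Par q \st{\co{\aa}} q$. For \outputconfluence and \outputtau the second transition is co-initial, so I analyse how $\co{\aa} \Par p'$ performs it: it cannot be the particle $\co{\aa}$ when $\alpha \neq \co{\aa}$, nor a communication when $\alpha \neq \tau$, leaving the subcase where $p'$ acts alone; re-attaching $\co{\aa}$ to the residual closes the square, and the extra feedback subcase of \outputtau (where $\co{\aa}$ synchronises with an input $\aa$ of $p'$ via rule \com) lands precisely in the disjunct $p' \st{\aa} p''$. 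Finally \outputfeedback is a direct application of \com to $\co{\aa} \Par p'$.

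The main obstacle is \outputdeterminacy. Unlike the inverse axiom, here the characterisation yields only $\co{\aa} \Par p' \equiv p \equiv \co{\aa} \Par p''$, and to conclude $p' \equiv p''$ — equality in the quotient, which is strictly stronger than the bisimilarity Selinger obtains modulo $\sim$ — I need a \emph{cancellation} property of $\equiv$ for parallel composition:
\[ \co{\aa} \Par q_1 \equiv \co{\aa} \Par q_2 \implies q_1 \equiv q_2. \]
I expect to prove this by exhibiting a canonical form for $\equiv$-classes as a multiset of sequential components (obtained after exhaustively applying the monoid laws for $\Par$ and unfolding recursion), and arguing that removing one occurrence of $\co{\aa}$ preserves multiset equality. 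Establishing this normal form, and the compatibility of $\equiv$ with the LTS from the first paragraph, are the two genuinely technical steps; once they are in place the axioms themselves are short, and \rlem{ACCSmodulos-equiv-is-out-buffered-with-feedback} assembles them into membership in $\obaFB$.
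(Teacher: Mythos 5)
Your overall architecture is the same as the paper's: first establish that $\equiv$ is compatible with the transition relation (the paper's \rlem{st-compatible-with-equiv}, whose proof the paper itself describes as tedious and delegates to the mechanisation), then prove the output-shape characterisation (the paper's \rlem{output-shape}), and finally verify each axiom of \rfig{axioms} by case analysis on how $\co{\aa} \Par p'$ can move (the paper's \rlem{ACCSmodulos-equiv-is-out-buffered-with-feedback}, which itself only sketches the \outputfeedback and \outputtau cases). Your treatment of \outputcommutativity, \outputconfluence, \outputtau, \outputfeedback and \outputdeterminacyinv is exactly this argument and is correct.

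Two points in your handling of \outputdeterminacy need repair. First, both appeals to ``unfolding recursion'' are mistaken: in this paper unfolding is \emph{not} a law of $\equiv$ (see \rfig{equiv}) but a $\tau$-labelled transition (rule \unfold in \rfig{rules-LTS}), so a term $\rec{p}$ performs no output at all. Hence the recursion case of your induction for the output-shape lemma is simply vacuous (as are the sum cases, since guards contain no atoms), rather than ``absorbed by unfolding''; and a canonical form for $\equiv$-classes must treat $\rec{p}$ as an opaque component whose body is normalised recursively---unfolding during normalisation both changes the equivalence being decided and fails to terminate on terms such as $\Omega = \rec{\tau.x}$. Second, the cancellation detour, while based on a true property of this purely monoidal congruence and sound once the normal form is fixed as above, is avoidable altogether: prove \outputdeterminacy by induction on the two derivations of $p \st{\co{\aa}} p'$ and $p \st{\co{\aa}} p''$. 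The only non-trivial case is $p = p_1 \Par p_2$ with one transition by \parL and the other by \parR; there \rlem{output-shape} gives $p_1 \equiv p_1' \Par \co{\aa}$ and $p_2 \equiv p_2' \Par \co{\aa}$, so both residuals $p_1' \Par p_2$ and $p_1 \Par p_2'$ are congruent to $p_1' \Par p_2' \Par \co{\aa}$. This eliminates the second of your two ``genuinely technical steps'' and leaves only the compatibility lemma, which the paper needs anyway.
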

}
%%% GB:CLAIM olé olé TO BE HIDDEN. REQUIRES MUCH MORE WORK. 
%% We claim that also the early style LTS of the asynchronous
%% $\pi$-calculus enjoys the axioms of \rfig{axioms}.
To streamline
reasoning modulo some (structural) equivalence we introduce the
typeclass \mintinline{coq}{LtsEq}, % to the hierarchy of
% typeclasses.
whose instances are LTSs
% in which an equivalence~$\simeq$ exists,
equipped with an equivalence~$\simeq$
that satisfies the property in \rfig{Axiom-LtsEq}.
% \footnote{For
% instance, to reason on the LTS of \ACCS terms modulo the
% standard structural equivalence~$\equiv$, in our mechanisation it
% suffices to use the LTS \lts{\ACCS}{L}{\st{}\cdot\equiv}. \leo{I don't
%   get it} \ilacom{I don't understand it either.  I would omit this
%   footnote altogether.}}
Defining output-buffered agents with feedback using \mintinline{coq}{LtsEq}
does not entail any loss of generality, because the equivalence~$\simeq$
can be instantiated using the identity over the states~$\States$.
Further details can be found in \rapp{structural-congruence}.

%% \begin{figure}[t]
%% \hrulefill
%% $$
%% \begin{array}{lcl}
%%   p,q,r             & ::= &
%% \out{a}
%% \BNFsep g
%% \BNFsep p \Par p
%%  \BNFsep \rec{p}
%%  \BNFsep x
%% \\[.5em]
%%  g             & ::= &
%%  \Nil
%%  \BNFsep a.p
%%  \BNFsep \tau.p
%%  \BNFsep g \extc g
%% \end{array}
%% $$
%% \vspace{-1em}
%% \caption{Syntax of (core)~\ACCS.}% (\protect{\coqSyn{proc}}).
%% %where $\mu \in \Names, \mu_{\tau} \in \Names_{\tau}$.
%% %\caption{A minimal syntax for processes.% (\protect{\coqSyn{proc}}).
%% \hrulefill
%% \label{fig:syntax-processes}
%% \end{figure}

When convenient we denote LTSs using the following minimal syntax for \ACCS:
%which we give in \rfig{syntax-processes},
\begin{equation}
  \label{eq:syntax-processes}
    p,q,r ::= 
\out{\aa}
\BNFsep g
\BNFsep p \Par p
 \BNFsep \rec{p}
 \BNFsep x
\qquad\quad
 g ::= 
 \Nil
 \BNFsep a.p
 \BNFsep \tau.p
 \BNFsep g \extc g
%% \begin{array}{l@{\hskip 2pt}c@{\hskip 2pt}l@{\hskip 20pt}l@{\hskip 2pt}c@{\hskip 2pt}l}
%%   p,q,r & ::= &
%% \out{\aa}
%% \BNFsep g
%% \BNFsep p \Par p
%%  \BNFsep \rec{p}
%%  \BNFsep x,
%% &
%%  g & ::= &
%%  \Nil
%%  \BNFsep a.p
%%  \BNFsep \tau.p
%%  \BNFsep g \extc g
%% \end{array}
\end{equation}
as well as its standard LTS\footnote{Where the recursion rule is replaced by the one
    usually adopted for testing semantics, which introduces a
    $\tau$-transition before each unfolding.}  whose properties we
discuss in detail in \rapp{accs}.  This is exactly the syntax used
in~\cite{DBLP:conf/concur/Selinger97,DBLP:journals/iandc/BorealeNP02},
%\ilacom{I put only the references to
%  ~\cite{DBLP:conf/concur/Selinger97}
%  and~\cite{DBLP:journals/iandc/BorealeNP02} here, taking off the
%  reference to~\cite{sangiorgi} that was not appropriate to my view}
%the only difference being that we use
%recursive definitions instead of replication.
without the operators of restriction and relabelling.
Here the syntactic
category $g$ defines \emph{guards}, \ie the terms that may be used as
arguments for the $+$ operator.
As in most process calculi,
  % 0 cannot do any action: it is the terminated process.
  0 denotes the terminated process that cannot do any action.
Note that, apart from $\Nil$, only
input-prefixed and $\tau$-prefixed terms are allowed as guards, and
that the output prefix operator is replaced by \emph{atoms} $\out{a}$.
%% For instance we will use \ACCS in \rexa{pierre}.
% \ilacom{I think I said this already: in the syntax of \ACCS given
%   in~\rfig{syntax-processes}, the recursive construct, namely the two
%   productions $\rec{p}$ and $x$, should be in the syntactic category
%   of processes and not in that of guards. Otherwise one can write
%   terms such as $g = \rec{\out{a}} + \send{a}.p$, or even worse, terms
%   like $g' = (\rec{\out{a}}\Par \rec{\out{b}}) \,+\, \send{a}.p$, which
%   clearly we do not want to have. See for instance the paper on
%   asynchronous bisimulation [5] page 293, or Sangiorgi's book page
%   191, where the syntactic category $g$ is denoted $M$, and
%   corresponds to input and $\tau$ guarded sums, while the construct
%   for recursive processes, which is replication there, belongs to the
%   syntactic category of processes.}
In fact, this syntax is
completely justified by Selinger axioms, which, as we argued above,
specify that outputs cannot cause any other action, nor be in
conflict with it.

%%% LONG VERSION
%%% Indeed, \ACCS
%the core language in \rfig{syntax-processes}
%%%can be viewed as a syntax for Selinger LTSs.

\begin{figure}
  \hrulefill
%  \scalebox{.8}{%
  \begin{center}
    \begin{tikzcd}
      p
      \arrow[d, dash, dotted, "{\simeq}" description]
      &
      \phantom{p'}
      \\
      q
      \arrow[r, "\alpha"]
      &
      q'
    \end{tikzcd}
    $\Rightarrow$
    \begin{tikzcd}
      p
      \arrow[r, "\alpha"]
      \arrow[d, dash, dotted, "{\simeq}" description]
      &
      p' \arrow[d, dash, dotted, "{\simeq}" description]
      \\
      q \arrow[r, "\alpha"]
      &
      q'
    \end{tikzcd}
  \end{center}
    \vspace{-1em}
 % }
  \caption{Axiom stating that equivalence $\simeq$ is compatible with
    a transition relation.}
\label{fig:Axiom-LtsEq}
\hrulefill
\end{figure}

 \begin{definition}[Transition sequence] %[\coqTS{iexec_from}]
   \label{def:inf-transition-sequence}
   Given an LTS  $\lts{ \States }{ L }{ \st{} }$ and a state $\state_0 \in
  \States$, a \emph{transition sequence} of $\state_0$ is a finite or infinite
  sequence of the form $\state_0 \alpha_1 \state_1 \alpha_2
  \state_2 \cdots$ with $\state_i \in \States$ and $\alpha_i \in L$, and
  such that, for every $n \geq 1$ such that $p_n$ is in the sequence we have
  $\state_{n-1} \stx{\alpha_n} \state_{n}$.
  %% Given an LTS  $\lts{ \States }{ L }{ \st{} }$ and state $\state_0 \in
  %% \States$, a \emph{transition sequence} of $\state_0$ is a finite or infinite
  %% sequence $\eta$ which is of the form $\state_0 \alpha_1 \state_1 \alpha_2
  %% \state_2 \cdots$ with that $\state_i \in \States$ and $\alpha_i \in L$, and
  %% such that, for every $n \in \mathit{dom}(\eta) \wehavethat \state_{n-1}
  %% \stx{\alpha_n} \state_{n}$.
~\hfill$\blacksquare$
 \end{definition}
 \noindent
 If a transition sequence %$\eta$
 is made only of  $\tau$-transitions, 
 it is called a {\em computation} by abuse of notation, the idea being that
 usually $\tau$-steps are related to reduction steps via the Harmony
 lemma (see footnote on page~\pageref{harmony}).

  % \ilacom{I moved here the definition of (maximal)
  %   computation for programs, because we use
  %   it in the proof that $\pierre \testleqS \Nil$.}

% We now discuss a behaviour and an inequality that prove that
% the alternative preorder of \cite{DBLP:conf/fsttcs/CastellaniH98}
% is not complete. Further details are in \rapp{counterexample}.

We give now an example that illustrates the use of the testing
machinery in our asynchronous setting. This is also a counter-example
to the completeness of the alternative preorder
proposed in~\cite{DBLP:conf/fsttcs/CastellaniH98}, as discussed
in detail in~\rapp{counterexample}.

%% \ilacom{IMPORTANT: we cannot write Pierre as we do in
%%   Example~\ref{ex:pierre}, because the~\ACCS syntax does not allow
%%   recursion to be used as a summand. Thus we should rewrite Pierre by adding an
%% action $\tau$ before $\Omega$, namely $\pierre = \ab.(\ila{\tau.}
%% \Omega \extc c.\co{d})$. Do you agree?
%% This should not change the essence of our results, but we should check
%% that we don’t use $\Omega + \ldots$ in other places too.}

\begin{example}
\label{ex:pierre}
%We would like the reader to get acquainted with \gb{the following
%process},%{\em \pierre}:
%% Let $\Omega = \rec{\tau. x}$.
%% The process $\pierre = \ab.(\tau.\Omega \extc c.\co{d}) $,
%% shown in \rfig{running-example},
Let $\Omega = \rec{\tau. x}$ and $\pierre = \ab.(\tau.\Omega \extc c.\co{d}) $.
%\pierre LTS is as follows:
The LTS of \pierre is as follows:
\begin{center}
  \scalebox{.9}{%
    \begin{tikzpicture}
      \node[state] (p) {$b.(\tau.\Omega + c.\co{d})$};
      \node[state][right =of p](p3){$\tau.\Omega + c.\co{d}$};
      \node[state][below right=+8pt and +20pt of p3](p2){$\Omega$};
      \node[state][above right=+8pt and +20pt of p3](p1){$\mailbox{\co{d}}$};
      \node[state][right =of p1](p4){$\Nil$};
      %Edges
      \path (p) edge[to] node[action,swap] {$b$} (p3)
      (p3) edge[to] node[action,below] {$\tau$} (p2)
      (p3) edge[to] node[action] {$c$} (p1)
      (p1) edge[to] node[action,swap] {$\co{d}$} (p4)
      (p2) edge [out=30, in=-30, loop] node[right] {$\tau$} (p2);
    \end{tikzpicture}
    }
  \end{center}
\pierre\ models a 
citizen confronted with an unpopular pension reform. To begin with, \pierre\ can only do
the input~$\ab$, which models % a reaction to
his getting aware of the brute-force imposition of the
reform by the government.
% , without a parliamentary vote.
After performing the input, \pierre reaches the state
$ \tau.\Omega \extc c.\co{d} $, where he behaves in a \nondeterministic
manner. He can internally choose not to trust the government for
  any positive change, in which case he will diverge, refusing any
  further interaction.  But this need not happen: in case the
  % environment (which here stands for the government)
   government offers
  the action $\co{c}$, which models a positive change in political
  decision, \pierre can decide to accept this change, and then
  he expresses his agreement with the output $\co{d}$,
  %outputting the atom $\co{d}$,
  which stands for ``done''.
 \hfill$\qed$
\end{example}

\begin{example}
  \label{ex:p-testleqS-Nil}
  We prove now the inequality $\pierre \testleqS \Nil$ by leveraging %
%%   \begin{equation}
%% \tag{$\star$}
%% \label{eq:running-example-1}
%% \end{equation}
%\ilacom{I have slightly rewritten the proof to make it more compact.}
%%%%%%%%%% New proof (Ilaria 20/1/24) %%%%%%%%%%%%%%%%%%%
    %We leverage
  the possibility of divergence of~$\pierre$ after the
  input~$\ab$. Fix an~$\client$ such that $ \Must{\pierre}{\client} $.
  Note that, since 0 is the terminated process, the condition for
    the server 0 to satisfy r is that r reaches by itself a successful
    state in each of its maximal computations.
  % First we prove that $ \Must{\pierre}{\client}$ and $\client \st{ \co{b}}$ imply $ \good{\client} $.
We
distinguish two cases, according to whether $\client \st{ \co{b}}$
or $\client \Nst{\co{b}}$.
%$\client \not \st{ \co{b}}$.

$i)$ %
Let $\client \st{ \co{b}} \client'$ for some~$\client'$.
Consider the maximal computation
$ \csys{ \pierre }{\client} \st{} \csys{\tau.\Omega \extc
  c.\co{d}}{\client'} \st{} \csys{\Omega}{\client'} \st{}
\ldots $ in which~$\pierre$ diverges and $\client$ does not move after
the first output.  Since $\Must{\pierre}{\client}$,
either~$\good{\client}$ or $\good{\client'}$. In
case~$\good{\client'}$, by~\req{good-invariance} we get also
$\good{\client}$.
% \ilacom{\rlem{output-shape} does not seem to be the
  % right reference here. I would rather refer to the assumption of
  % invariance under outputs given on page~\pageref{good-invariance}.}
Hence $ \Must{\Nil}{\client}$.

$ ii)$ Let $\client \Nst{\co{b}}$.
Suppose $\client = \client_0 \st{\tau} \client_1
  \st{\tau} \client_2 \st{\tau} \ldots$ is a maximal computation of
  ~$\client$.
  Then %the system
  $\csys{ \pierre }{\client}$ has a maximal computation $
\csys{\pierre}{\client_0} \st{}
\csys{\pierre}{\client_1} \st{}
\csys{\pierre}{\client_2} \st{}\ldots$.
As $\Must{\pierre}{\client}$,
there must exist an $i \in \N$ such that $\good{\client_i}$. Hence
$ \Must{\Nil}{\client}$.
\hfill$\qed$
\end{example}
%We will use \req{running-example-1} to discuss details about our
%completeness result, and to explain the problem in the one by \cite{DBLP:conf/fsttcs/CastellaniH98}.
%We managed to prove \req{running-example-1}

%%% THIS IS A NON-SEQUITUR, BECAUSE WE DO NOT REALLY PROVE MUCH VIA
%%% THE ALTERNATIVE CHARACTERISATIONS.
The argument in \rexa{p-testleqS-Nil} can directly use \rdef{testleq}
because it is very simple to reason on the process $\Nil$.
The issues brought about by the contextuality of \rdef{testleq},
though, hinder showing general properties of~$\testleqS$.
Consider the following form of code hoisting:
  \begin{equation}
    \label{eq:mailbox-hoisting}
     \tau.(\co{\aa} \Par \co{b}) \extc
     \tau.(\co{\aa} \Par \co{c}) \testleqS \co{\aa} \Par (\tau.\co{b} \extc
     \tau.\co{c})
  \end{equation}
  If we see the above nondeterministic sums as representing the two
  branches of a conditional statement, this refinement corresponds to
  hoisting the shared action $\co{a}$ before the conditional
  statement, a common compiler optimisation.
  Proving \req{mailbox-hoisting} via the contextual definition of~$\testleqS$
  is cumbersome. This motivates the study of alternative
  characterisations for~$\testleqS$, and in the rest of the paper we
  present several preorders that fit the purpose, in particular the
  coinductive preorder~$\coindleq$, which we will use to establish
  \req{mailbox-hoisting} in~\rsec{coind-char}.

%\gb{Give references to the definitions in appendix H. Everybody OK ?}
We conclude this section by recalling auxiliary and rather standard
notions: given an LTS $\lts{\States}{L}{\st{}}$,
the weak transition relation $\state \wt{ \trace } \stateA $,
where $ \trace \in\Actfin$,
is defined via the rules
\begin{description}
\item[\rname{wt-refl}]
  $\state \wt{\varepsilon} \state$
\item[\rname{wt-tau}]
  $\state \wt{ \trace } \stateB$ if $\state\st{\tau}\stateA$
  and $\stateA \wt{ \trace } \stateB$
\item[\rname{wt-mu}]
  $\state \wt{\mu.s} \stateB$ if $\state \st{\mu} \stateA$
  and $\stateA \wt{\trace} \stateB$
\end{description}
We write $ \state \wt{ \trace }  $ to mean  $\exists \stateA \suchthat \state
\wt{ \trace } \stateA$.
% , where $\trace \in \Actfin$.

%% WHERE IS THIS USED  IN THE RUNNING TEXT ?
%% We let $\str{ \I }$ % (\coqConv{str})
%% be defined as the set of all permutations of the elements of $\I$.
%% For instance $ \str{
%%   \mset{ a, a, b }}  = \set{ a.a.b, a.b.a, b.a.a }$.
%Plainly, $ p \wt{ \emptyMset } q$ if and only if $ p \wt{ \varepsilon } q$.

%{\bfseries Convergence along traces}

\renewcommand{\stateA}{p'}
%Given an LTS in which $\tau$ transitions represent computation steps,
%Given an LTS $\lts{\States}{L}{\st{}}$,
We write $\state \conv$ and say that {\em $\state$ converges} if every
computation of~$\state$ is finite, and we lift the convergence predicate
to finite traces by letting the relation
${\cnvalong} \subseteq \States \times \Actfin$ be the least one that
satisfies the following rules% (\coqConv{cnv}),
\begin{description}
\item[\cnvepsilon] $\state \cnvalong \varepsilon$ if $\state \conv$,
\item[\cnvmu] $ \state \cnvalong \mu.\trace $ if $p \conv$ and
  $\state \wt{\mu} \stateA \implies \stateA \cnvalong \trace$.
\end{description}

\leaveout{
\begin{table}
  \hrulefill
\begin{center}
\begin{tabular}{lcl}
%$ \state \st{ \alpha }$ & means & $\exists \stateA \suchthat \state \st{\alpha} \stateA$,\\
$ \state_0 \stx{ \alpha_0 \ldots \alpha_{n-1} } \state_n$ & means & $ \exists \state_1\ldots \state_n \suchthat \state_0 \st{ \alpha_0 } \state_1  \cdots \st{ \alpha_{n-1} } \state_n$, \\
   $ \state \stx{ \trace } $ & means &
  $\exists
                                                            \stateA \suchthat
                                                            \state \stx{ \trace }
                                     \stateA$,  where $\trace \in \Acttau^+$,
  \\
% $ \state \wt{ \trace }  $ & means & $\exists \stateA \suchthat \state
%                                     \wt{ \trace } \stateA$, where $\trace \in
%                                    \Actfin$,
%\\
  %% $ \state \wt{ X } \stateA $ & means & $\exists \trace \in X \suchthat \state \wt{ \trace } \stateA
  %% $, where $X \subseteq \Actfin$,
  %% \\
  %% $\state \wt{\I} \stateA$ & means & $\exists \trace \in \str{\I} \suchthat \state \wt{ \trace } \stateA$,
  %% \\
  %% $ \state \wt{\I} $ & means & $\exists \stateA. \state
  %% \wt{\I} \stateA$%,
  %% \\[5pt]
  %%  $X \conv$ & means &  $\forall \state \in X \suchthat\state
  %% \conv$,
  %% \\[5pt]
  %% $ X \cnvalong \trace $ & means & $\forall \state \in X
  %% \suchthat \state \cnvalong \trace $
\end{tabular}
\end{center}
\bigskip
\caption{Notational conventions}
\hrulefill
\label{tab:notation}
\end{table}
}

%% \pl{
%% Traces of output actions have no impact on the stability of processes.
%% %% on their input actions.%, and on their being (not) $\goodSym$. %%
%% \begin{lemma}
%%   \label{lem:st-wtout-st}  %% \label{lem:st-wtout-inp} %%
%%   For all $\server, \server' \in \States$ and trace $\trace \in \co{\Names}^\star$ such that
%%   $\server \stable$ and $\server \wt{ \trace } \server'$
%%   %% \begin{enumerate} %%
%%   then %(\coqLTS{st_wtout_st})
%%   $\server' \stable$,
%%   %% \item %(\coqLTS{st_wtout_inp})                                                     %%
%%   %%   for every $\aa \in \Names$,                                                      %%
%%   %%   $\server \stable$, $\server \st{\aa}$ and $\server \wt{ \trace } \server'$ imply %%
%%   %%   $\server' \st{\aa}$.                                                             %%
%%   and %(\coqLTS{st_wtout_inp})
%%   for every $\aa \in \Names$, $\server \Nst{\aa}$ implies $\server' \Nst{\aa}$.
%%   %% \end{enumerate} %%
%% \end{lemma}
%% }

%% By assumption, outputs preserve the predicate~$\goodSym$. \ilacom{Put
%%   somewhere else?}

  To understand the next section, one should
  keep in mind that all the predicates defined
  above have an implicit parameter: the LTS of programs. % over which we predicate.
  By changing this parameter, we may change the meaning of the
  predicates.
  % \TODO{I do not understand why the processes $\tau.\Nil$ is used.
  % Should we not discuss just $\Omega$ ?}
  % For instance, letting $\Omega$ be the~\ACCS process
  %   $\rec{\tau. x}$, in the standard LTS
  % $\lts{\ACCS}{\st{}}{\Acttau}$ we have $ \tau.\Nil \st{ \tau } \Nil$
  % and $\lnot (\Omega \conv)$, while in the LTS $\lts{\ACCS}{ \emptyset
  % }{\Acttau}$ we have $\tau.\Nil \Nst{ \tau }$ and $\Omega \conv$.%
  % \footnote{\gb{$\Omega \Nst{\tau}$ and thus $\Omega
  %     \conv$}. \ilacom{In fact, this holds for any $p$ in the
  %     second LTS.}}
  For instance, letting~$\Omega$ be the~\ACCS process~$\rec{\tau. x}$, in the standard LTS
  $\lts{\ACCS}{\st{}}{\Acttau}$ we have~$ \Omega \st{ \tau } \Omega$
  and $\lnot (\Omega \conv)$, while in the LTS $\lts{\ACCS}{ \emptyset
  }{\Acttau}$ we have $\Omega \Nst{\tau}\,$ and thus $\Omega \conv$.
  In other words, the {\em same} predicates can be applied to
  different LTSs, and since the alternative characterisations
  of~$\testleqS$ are defined using such predicates, they
  can relate different LTSs.

\renewcommand{\state}{p}
\renewcommand{\stateA}{p_1}
\renewcommand{\stateB}{p_2}
\renewcommand{\stateC}{q_3}
\newcommand{\stdleq}{\mathrel{\preccurlyeq_{\mathsf{std}}}}
\renewcommand{\state}{p}
\renewcommand{\stateA}{p'}
\renewcommand{\trace}{s}

\newcommand{\asleqC}{\preccurlyeq_{\textsf{cnv}}}
\newcommand{\asleqA}{\preccurlyeq_{\textsf{acc}}}
\newcommand{\asleqAfw}{\preccurlyeq^{\fw}_{\textsf{acc}}}

\newcommand{\util}{\rightarrow\equiv}

\section{Preorders based on \AcceptanceSets}

\label{sec:bhv-preorder}

We first recall the definition of the standard alternative preorder~$\asleq$,
and show how to use it to characterise~$\testleqS$ in our asynchronous setting.
We also present a new characterisation that disregards the order of non-causally
related actions.
We then explain the tools we use to prove these characterisations, and in
particular their soundness.
This section ends with the coinductive version~$\coindleq$ of~$\asleq$,
which we use to prove the hoisting refinement~\eqref{eq:mailbox-hoisting}.

\subsection{Trace-based characterisations} 
%{\bfseries The acceptance-set approach.} % 
The {\em ready set} of a program~$\server$ is defined as
\coqLTS{ready_set} $R( \server ) = I( \server ) \cup O( \server )$,
and it contains all the {\em visible} actions that~$\server$ can
immediately
% (i.e. strongly)
perform.  If a program~$\server$ is
stable, \ie it cannot perform any
% further step of computation ($\tau$-transition),
$\tau$-transition, we say that it is a {\em potential
  deadlock}.  In general, the ready set of a potential
deadlock~$\server$ shows how to make~$\server$ move to a different
state, possibly one that can perform further computation: if
$R( \server ) = \emptyset$ then there is no way to make~$\server$ move
on, while if~$R( \server )$ contains some action, then~$\server$ is
% essentially
a state waiting for the environment to interact with it.
Indeed, potential deadlocks are called {\em waiting states}
in~\cite{DBLP:conf/ecoop/HondaT91}.
In particular, in an asynchronous setting the outputs of a
% potentially deadlocked program~$\server$
potential deadlock~$\server$ show how~it %$\server$
can unlock the inputs
of a client, which in turn may lead the client to a novel state that
can make~$\server$ move, possibly to a state that can perform further
computation.
%This further motivates the intuition of a {\em potential} deadlock.
A standard manner to capture all the ways out of the potential
deadlocks that a program~$\server$ encounters after executing a
trace~$\trace$ is its {\em acceptance set}:
$\accP{ \server }{ \trace }{ \st{} } = \setof{ R( \stateA ) }{ \state \wt{ \trace } \stateA \stable }$.

In our presentation we indicate explicitly the third parameter of
$\mathcal{A}$, \ie the transition relation of the LTS at hand, because
when necessary we will manipulate this parameter.
For any two LTSs $\genlts_\StatesA, \genlts_\StatesB$ and servers
  $ \serverA \in \StatesA, \serverB \in \StatesB$, we 
  write $\accP{ \serverA }{ \trace }{\st{}_\StatesA} \ll \accP{
    \serverB }{ \trace }{ \st{}_\StatesB}$ %whenever
  if for every $R \in \accP{
    \serverB }{ \trace }{ \st{}_\StatesB}$
% $R \in \acc{ \serverB }{ \trace }$
there exists $\widehat{R} \in \accP{ \serverA }{ \trace }{\st{}_\StatesA}$
% $\widehat{R} \in \acc{ \serverA }{ \trace}$
such that $\widehat{R}
  \subseteq R$.
  We can now recall the definition of the behavioural preorder à la
  Hennessy, $\asleq$, which is based on acceptance
  sets~\cite{DBLP:books/daglib/0066919}.

  \newcommand{\cnvalongLTS}[2]{ \Downarrow_{#1} #2 }

   \begin{definition}\coqMT{bhv_pre}
  \label{def:accset-leq}%
  \label{def:standard-char}
%%   For every $\genlts_A, \genlts_B$, and servers $\serverA \in A, \serverB \in B$ we let
%%   $$
%%   \begin{array}{lcll}
%%     \serverA \bhvleqone \serverB & \text{whenever} & \forall \trace \in \Actfin \suchthat
%%     \serverA \cnvalongLTS{\StatesA}{ \trace } \implies \serverB \cnvalongLTS{\StatesB}{ \trace }
%%     \\[5pt]
%%     \serverA \bhvleqtwo \serverB  & \text{whenever} & \forall \trace \in \Actfin \suchthat
%%       \serverA \cnvalongLTS{\StatesA}{ \trace }  \implies \\
%%       && %\multicolumn{2}{l}{
%%         \accP{ \serverA }{ \trace }{ \st{}_{\StatesA} }
%%       \ll \accP{ \serverB }{ \trace }{ \st{}_{\StatesB} %}
%%     }
%%     \\[5pt]
%% \serverA \asleq \serverB & \text{whenever} & \serverA \bhvleqone
%% \serverB  \text{ and }  \serverA \bhvleqtwo \serverB & \blacksquare
%% \end{array}
  %%   $$
  {We write}
  \begin{itemize}
  \item
    $
    \serverA \bhvleqone \serverB$ whenever $\forall \trace \in \Actfin \suchthat
    \serverA \cnvalongLTS{\StatesA}{ \trace } \implies \serverB \cnvalongLTS{\StatesB}{ \trace }
    $,
    
  \item
    $
    \serverA \bhvleqtwo \serverB$ whenever $\forall \trace \in \Actfin \suchthat
    \serverA \cnvalongLTS{\StatesA}{ \trace }  \implies 
    \accP{ \serverA }{ \trace }{ \st{}_{\StatesA} }
    \ll \accP{ \serverB }{ \trace }{ \st{}_{\StatesB} }
    $,
  \item
    $\serverA \asleq \serverB$ whenever $\serverA \bhvleqone
    \serverB  \text{ and }  \serverA \bhvleqtwo \serverB$.\hfill$\blacksquare$
  \end{itemize}  
   \end{definition}

In the synchronous setting, the behavioural preorder~$\asleq$ is closely related to the denotational semantics
based on Acceptance Trees  proposed by Hennessy in
\cite{DBLP:journals/jacm/Hennessy85,DBLP:books/daglib/0066919}.
There the predicates need not be annotated with the LTS that they are used on,
because those works treat a unique LTS.
Castellani and Hennessy~\cite{DBLP:conf/fsttcs/CastellaniH98} show in their Example~4 that
the condition on acceptance sets, \ie $\bhvleqtwo$, is too demanding
in an asynchronous setting.
Letting $\serverA = \aa.\Nil$ and $\serverB = \Nil$, they show that
$\serverA \testleqS \serverB $ but $ \serverA \not \asleq \serverB$,
because $\acc{ \serverA }{ \epsilon } =\set{\set{a}}$ and
$\acc{ \serverB }{ \epsilon } =\set{\emptyset}$, and corresponding to
the ready set $\emptyset \in \acc{ \serverB }{ \epsilon }$ there is no
ready set $\widehat{R} \in \acc{ \serverA }{ \trace}$ such that
$\widehat{R} \subseteq \emptyset$. %
%
%%% NEW EXPLANATION THAT FOLLOWS ILARIA SUGGESTIONS
%
Intuitively this is the case because acceptance sets treat inputs and
outputs similarly, while in an asynchronous setting only outputs can
be tested.

%In the next subsection we present
%\ila{We next present} a technique that
%lets us nevertheless use~$\asleq$ to characterise~$\testleqS$.
Nevertheless~$\asleq$ characterises~$\testleqS$, if servers
are enhanced as with forwarding. We now introduce this concept.

%% We now \gb{present a} technique that is sufficient
%% for~$\asleq$ to characterise~$\testleqS$.

%% OLD EXPLANATION
%% Intuitively this is true because acceptance sets (1) treat inputs and
%% outputs similarly, while \ila{in an asynchronous setting} only outputs
%% can be tested; and (2) they disregard the role of interactions due to
%% spurious messages in the global mailbox.  \ilacom{The second point is
%%   cryptic. Moreover there has been no mention of the global mailbox so
%%   far.} We now define and discuss a technique that is sufficient to
%% for~$\asleq$ to characterise~$\testleqS$.

\paragraph{Forwarders}
We say that an LTS~$\genlts$ is of output-buffered agents {\em with forwarding},
for short is \obaFW, if it satisfies all the axioms in
\rfig{axioms} except \outputfeedback, and also the two following axioms:% in \rfig{axioms-forwarders}.

\begin{equation}
  \label{eq:axioms-forwarders}
  \begin{tabular}{cc}%@{\hskip 20pt}c}
    \begin{tikzpicture}
      \node (p) {$p$};
      \node[right=+40pt of p] (p') {$p'$};

      \path[->]
      (p) edge [bend left] node[above] {$a$}  (p')
      (p') edge [bend left] node[above] {$\co{a}$}  (p);
    \end{tikzpicture}
    &
    \begin{tabular}{l@{\hskip 4pt}c@{\hskip 0pt}l}
      \begin{tikzcd}
      p \arrow[r, "\co{\aa}"]
      &
      p' \arrow[d, "\aa"]
      \\
      &
      q
    \end{tikzcd}
    &$\Rightarrow$&\,\,
    $p \st{ \tau } q$ or $p = q$

    \end{tabular}
    \\[5pt]
    \boom
&

    \fwdfeedback
  \end{tabular}
  \end{equation}

  The~\boom axiom states a kind of input-enabledness property, which
  is however more specific as it stipulates that the target state of
  the input should loop back to the source state via a complementary
  output. This is the essence of the behaviour of a forwarder, whose
  role is simply to pass on a message and then get back to its
  original state.  The~\fwdfeedback axiom is a weak form of
  Selinger's~\outputfeedback axiom, which is better understood in
  conjunction with the~\boom axiom: if the transition
    sequence % sequence of transitions
  $p\st{\out{a}} p'\st{a}q$ in the~\fwdfeedback axiom is taken to be
  the transition sequence % sequence of transitions
  $p'\st{\out{a}} p\st{a}p'$ in the~\boom axiom, then we see that it
  must be $q=p$ in the~\fwdfeedback axiom.  Moreover, no~$\tau$ action
  is issued when moving from $p$ to $q$, since no synchronisation
  occurs in this case: the message is just passed on.

To prove that~$\asleq$ is sound and complete with respect to~$\testleqS$:
\begin{enumerate} \item we define a function
  $\liftFWSym : \obaFB \longrightarrow \obaFW$ that lifts any
  LTS~$\genlts \in \obaFB$ into a suitable
  LTS~$\liftFW{\genlts}\in \obaFW$, %
  and
\item we check the
  predicates~$\cnvalong$ and~$\accP{-}{-}{-}$ over the LTS~$\liftFW{\genlts}$.
\end{enumerate}

Let~$\MO$ denote the set of all finite multisets of output actions, 
for instance we have $\varnothing, \mset{ \co{ a } }, \mset{ \co{ a },   \co{
    a }  }, \mset{ \co{ a },  \co{ b },  \co{ a },  \co{ b }} \in
\MO$.
%Similarly, let $\MI$ denote the set of all finite multisets of names.
%\ilacom{Why are the elements of $\MI$  required to be finite while those of $\MO$ are not?}
We let %$\I, \J, \ldots$ range over $\MI$ and
$M, N, \ldots$ range over~$\MO$. The symbol~$M$ stands for {\em mailbox}.
We denote with~$\uplus$ the multiset union.

\begin{definition}
  \label{def:sta}
  \label{def:liftFW}\coqLTS{lts_a}
  Let $\liftFW{\genlts} = \lts{\States \times \MO}{L}{\sta{}}$ for every $\genlts = \lts{\States}{L}{\st{}}$,
%  For every LTS $\genlts = \lts{\States}{L}{\st{}}$
%  we let
%$$
%\liftFW{\genlts} = \lts{\States \times \FinMultiset{L}}{L}{\sta{}}
%$$
where the states in $\liftFW{\genlts}$ are pairs denoted $p
\triangleright \mailbox{M}$, such that $p \in \States$ and $M \in \MO$,
%is a finite multiset of outputs,
and the transition relation~$\sta{}$ is defined via the rules in
\rfig{rules-liftFW}.\hfill$\blacksquare$
\end{definition}

\begin{figure}
\hrulefill
  $$
  \begin{array}{llllll}
    \stproclift &
    \begin{prooftree}
      \server \st{\alpha} \server'
      \justifies
      \server \triangleright M \sta{\alpha} \server' \triangleright M
    \end{prooftree}
    &
    \stcommlift &
  \begin{prooftree}
    \server \st{a} \server'
    \justifies
    \server \triangleright (\mset{\co{a}} \uplus M) \sta{\tau} \server' \triangleright M
  \end{prooftree}
  \\[2em]
  \stmoutlift &
  \begin{prooftree}
    \justifies
    \server \triangleright (\mset{\co{a}} \uplus M) \sta{\co{a}} \server \triangleright M
  \end{prooftree}
  &
  \stminplift &
  \begin{prooftree}
    \justifies
    \server \triangleright M \sta{a} \server  \triangleright (\mset{\co{a}} \uplus M)
  \end{prooftree}
  &&
  \end{array}
  $$
  \caption{Lifting of an LTS to an LTS with forwarding.}
  \label{fig:rules-liftFW}
  \hrulefill
\end{figure}

Let us briefly comment on~\rdef{liftFW} and the rules
in~\rfig{rules-liftFW}.  The pair $\server \triangleright M$ is a
kind of asymmetric parallel composition between a process $\server$
and a mailbox $M$.
% , where the process is active while the mailbox is passive.
Rule $\stproclift$ says that the process can evolve
independently of the mailbox.
Rule $\stcommlift$ says that an input in the process can
  synchronise with a complementary output in the mailbox.
  Rules $\stminplift$ and $\stmoutlift$ express the essence of the
  forwarding behaviour: the pair
  $\server \triangleright M$ may input
  any message from the environment
  %(without participation of the process $\server$)
  and store it into the mailbox $M$ (Rule $\stminplift$); dually, the
  pair $\server \triangleright M$ may output any message in the
  mailbox $M$ towards the environment (Rule $\stmoutlift$).  Note that
  in both
  cases, the interaction occurs between the mailbox $M$ and the
  environment, without any participation of the process $\server$.

  % Rules $\stcommlift$ and $\stmoutlift$ say that the process may pick
  % any output $\co{a}$ from its mailbox to either synchronise with it
  % (in case the process can do the complementary input $a$)
  % % thus producing a $\tau$ action,
  % or send it out towards the environment.  
  % % Dually, Rule $\stmoutlift$ shows how
  % % the mailbox can evolve independently of the process. 
  % Rule
  % $\stminplift$ expresses the essence of the forwarding behaviour: the
  % process
  % % may pick any output $\co{a}$ from the environment (by doing the
  % % corresponding input $a$)
  % \ilablue{may input any message from the environment}
  % and put it into its mailbox for further
  % release towards the environment.

\begin{example}
  \label{ex:forwarders-in-TACCS}
  If a calculus is fixed, then the function~$\liftFWSym$ may have a
  simpler definition.  For instance Castellani and
    Hennessy~\cite{DBLP:conf/fsttcs/CastellaniH98} define it in their
  calculus~\textsf{TACCS} by letting $\sta{\alpha}$ be the least
  relation over~\textsf{TACCS} such that
  %\item\label{pt:sta-inclusion}
  (1) for every $\alpha\in\Acttau \wehavethat \st{\alpha} {} \subseteq {} \sta{\alpha}$, and 
  %\item\label{pt:sta-asynch-input} 
  (2) for every $\aa \in \Names \wehavethat p \sta{\aa} p \Par \out{\aa}$.\hfill$\qed$
\end{example}

%%%%%%%%%%%%%%%%%%%%%%%%%%%%%%%%%%%%%%%%%%%%%%%%%%%%%%%%%%%%%
%%%%%%%%%%%%%%%%%%%%%%%%%%%%%%%%%%%%%%%%%%%%%%%%%%%%%%%%%%%%%
%%% THIS IS FOR THE THESIS
%%%%%%%%%%%%%%%%%%%%%%%%%%%%%%%%%%%%%%%%%%%%%%%%%%%%%%%%%%%%%
%%%%%%%%%%%%%%%%%%%%%%%%%%%%%%%%%%%%%%%%%%%%%%%%%%%%%%%%%%%%%
\leaveout{
NOTE. In this paragraph I explain the issues raised by defining the lifting $\genlts_\fw$
of an LTS $\genlts$ by its composition with the LTS $\Fwd_L = \lts{\MO}{L}{\st{}_{\mathit{mb}}}$
where
  \[
    M \uplus \{ \bar a \} \st{\co{a}} M
    \qquad\text{and}\qquad
    M \st{a} M \uplus \{ \bar a \}
  \]

From now, we assume that $\genlts_\fw = \genlts \times \Fwd_L$.
I will illustrate the issue using $\genlts = \lts{\ACCS}{\st{}}{\Acttau}$.

The keypoint behind the issue lies in the following question: should we consider
that $\co{a} \triangleright \varnothing \equiv 0 \triangleright \mset{\co{a}}$ ?

Let us first consider the case in which this equation is true.
First, observe that $0 \triangleright \mset{\co{a}} \stable$, which does not hold
for $\co{a} \triangleright \varnothing$ as
we can exhibit the following transition due to an interaction between $\co{a}$ and $\varnothing$.
$$
\begin{prooftree}
  \co{a} \st{\co{a}} 0 \hspace{1em} \varnothing \st{a}_{\mathit{mb}} \mset{\co{a}}
  \justifies
  \co{a} \triangleright \varnothing \sta{\tau} 0 \triangleright \mset{\co{a}}
\end{prooftree}
$$

From these two facts we have a counter example to \rlem{harmony-sta} as
$\co{a} \triangleright \varnothing \equiv 0 \triangleright \mset{\co{a}}$, and
$\co{a} \triangleright \varnothing \sta{\tau}$, but $0 \triangleright \mset{\co{a}} \stable$.
More generally, we have that the equivalence relation does not preserve stability.

We now consider the case in which this equation is false.
The output-commutativity axiom states that if
$\server \sta{\co{a}} \server_1$ and
$\server \sta{\co{a}} \server_2$ then $\server_1 = \server_2$.
We recall that in, our settings, we reason up-to equivalence between states, and thus
it should be the case that $\server_1 \equiv \server_2$.

Note that $\co{a} \triangleright \mset{\co{a}} \sta{\co{a}} 0 \triangleright \mset{\co{a}}$
and $\co{a} \triangleright \mset{\co{a}} \sta{\co{a}} \co{a} \triangleright \varnothing$.
However, by considering $\co{a} \triangleright \varnothing \nequiv 0 \triangleright \mset{\co{a}}$
we have that $\genlts_\fw$ does not obey to the output-commutativity axiom.
}
%%%%%%%%%%%%%%%%%%%%%%%%%%%%%%%%%%%%%%%%%%%%%%%%%%%%%%%%%
%%%%% END OF LEAVEOUT
%%%%%%%%%%%%%%%%%%%%%%%%%%%%%%%%%%%%%%%%%%%%%%%%%%%%%%%%%
%% \leo{%                                                                           %%
%%   to work with LTSs we define the lifting in a more abstract manner, by          %%
%%   composition with an LTS~$\Fwd_L =                                              %%
%%   \lts{\FinMultiset{L}}{L}{\st{}_{\mathit{mb}}}$ of mailboxes, whose states      %%
%%   are finite multisets~$M$ of messages, and with two types of transitions:       %%
%%   %                                                                              %%
%%   \[                                                                             %%
%%     M \uplus \{ \bar a \} \st{\co{a}} M                                          %%
%%     \qquad\text{and}\qquad                                                       %%
%%     M \st{a} M \uplus \{ \bar a \}                                               %%
%%   \]                                                                             %%
%%   %                                                                              %%
%%   The lifting~$\genlts_\fw$ of an LTS $\genlts = \lts{\States}{L}{\st{}}$ is     %%
%%   then defined as the product LTS: $\genlts \times \Fwd_L$, whose states are     %%
%%   pairs which we write $p \triangleright \mailbox{M}$ where $p \in \genlts$ is a %%
%%   server, and $M$ is a finite multiset of outputs.                               %%
%% }                                                                                %%

The transition relation $\sta{}$ is reminiscent of the one introduced
in Definition 8 by Honda and
  Tokoro in~\cite{DBLP:conf/ecoop/HondaT91}.  The construction given in
our \rdef{liftFW}, though, does not yield the LTS of Honda and Tokoro,
as $\sta{}$ adds the forwarding capabilities to the states only at the
top-level, instead of descending structurally into terms. As a
consequence, in the LTS of~\cite{DBLP:conf/ecoop/HondaT91}
$\aa.\Nil + \Nil \st{ \ab } \co{ \ab }$, while
%$\aa.\Nil + \Nil \Nsta{ \ab } \co{ \ab }$.
$\aa.\Nil + \Nil \triangleright M \Nsta{ \ab } M \uplus \mset{ \co{ \ab } }$.

\begin{example}
  As the set~$\Names$ is countable, every process~$\server$ that belongs to
  the LTS $\lts{\ACCS \times \MO}{\Acttau}{\sta{}}$ is
  infinitely-branching: 
  for every mailbox $M$ we have
  % and every input $\mu$ we have %
 %   $ \server \st{\mu} \server \Par \co{\mu}$, %
  $\server \triangleright M \sta{\aa_i} \server \triangleright
(\mset{\co{\aa_i}} \uplus M)$
    for every $a_i \in \Names$. This is illustrated by the following
    picture, where for simplicity we omit the subscript $\mathsf{fw}$ under the arrows.
    % $\server \st{\aa_1} \server \Par \co{\aa_1}$,
    % $\server \st{\aa_2} \server \Par \co{\aa_2}$, \ldots
%
    %%% FOR THE JOURNAL VERSION
%as illustrated in the picture below,
%%    where we show only the part of the LTS that is added
%%    by the lifting to $\sta{}$.
    %% \ilacom{maybe we should
    %% say that the we show only the part of the LTS that is added
    %% by the lifting to $\sta{}$.}
\begin{center}
  \scalebox{.7}{%
  \begin{tikzpicture}
  \node[state](t){$\server \triangleright \varnothing$};

  \node[state][below=  of t](p20){$\server \triangleright \mailbox{\co{\aa_2}}$};
  \node[state][below=  of p20] (nil){$\server \triangleright \varnothing$};

  \node[state][left=of p20] (p10) {$\server \triangleright \mailbox{\co{\aa_1}}$};

  \node[state][left =of p10] (p00) {$\server \triangleright \mailbox{\co{\aa_0}}$};

  \node[state][right=of p20](p30){$\server \triangleright \mailbox{\co{\aa_3}}$};

  \node[][right=of p30](p5){\vdots\ldots\vdots};

%Edges
\path (t) edge[to] node[action,swap] {$\aa_0$} (p00)
      (t) edge[to] node[action] {$\aa_1$} (p10)
      (t) edge[to] node[action] {$\aa_2$} (p20)
      (t) edge[to] node[action,yshift=-8pt] {$\aa_3$} (p30)
      (t) edge[to] node[action] {$\aa_4$} (p5);

\path
(p00)  edge[to] node[action, swap] {$\co{\aa}_0$} (nil)
(p10)  edge[to] node[action] {$\co{\aa}_1$} (nil)
(p20)  edge[to] node[action] {$\co{\aa}_2$} (nil)
(p30)  edge[to] node[action, right] {$\co{\aa}_3$} (nil)
(p5)  edge[to] node[action] {$\co{\aa}_4$} (nil);
  \end{tikzpicture}
  }
\end{center}
 \hfill$\qed$
    \vspace{-12pt}
\end{example}

%\gb{TODO ILARIA, check and possibly improve the explanation ?}
The intuition behind \rdef{sta} %\rdefpt{sta}{sta-asynch-input} \ilacom{I see no point $(ii)$ in this definition}
is that, when a client interacts with a server asynchronously, the
client can send any message it likes,
regardless of the inputs that
the server can actually perform. In fact, asynchronous clients behave
as if the server was saturated with \emph{forwarders}, namely
processes of the form $\aa. \co{\aa}$, for any $\aa\in\Names$.

%\noindent
The function~$\liftFWSym$ enjoys two crucial properties: 
%We are ready to state two main properties of the
%function~$\liftFWSym$:
it lifts any LTS of output-buffered agents with feedback
to an LTS with forwarding, and the lifting preserves the~$\opMust$ predicate. We can thus
reason on~$\testleqS$ using LTSs in $\obaFW$.% (\rcor{testleq-obafb-iff-testleq-obafw}).

\begin{lemma}
  \label{lem:liftFW-works}
  For every LTS~$\genlts \in \obaFB$, $\liftFW{\genlts} \in \obaFW$.
\end{lemma}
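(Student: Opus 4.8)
The plan is to unfold the definition of $\obaFW$ and verify, one by one, the seven axioms it imposes on $\liftFW{\genlts}$: the five output axioms retained from \rfig{axioms} (\outputcommutativity, \outputdeterminacy, \outputconfluence, \outputtau and \outputdeterminacyinv), together with the two forwarding axioms \boom and \fwdfeedback. For each of them the uniform strategy is to case-split on which of the four rules $\stproclift$, $\stcommlift$, $\stmoutlift$, $\stminplift$ produced each transition occurring in the premise. Process-level moves (those justified by $\stproclift$ or $\stcommlift$, which read back to a transition $\server \st{\alpha} \server'$ of $\genlts$) are discharged by invoking the corresponding $\obaFB$ axiom on $\genlts$; mailbox-level moves ($\stmoutlift$, $\stminplift$) are discharged by elementary multiset arithmetic on $\uplus$; and the mixed cases follow because a process step and a mailbox step act on independent components of $\server \triangleright M$ and hence commute.

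First I would dispatch the cheap obligations. The \boom axiom is immediate: for any state $\server \triangleright M$ and name $\aa$, rule $\stminplift$ gives $\server \triangleright M \sta{\aa} \server \triangleright (\mset{\co{\aa}} \uplus M)$ and rule $\stmoutlift$ gives $\server \triangleright (\mset{\co{\aa}} \uplus M) \sta{\co{\aa}} \server \triangleright M$, closing the loop. For the five output axioms I would classify every $\co{\aa}$-transition as either a mailbox output (from $\stmoutlift$, which is deterministic and merely removes one $\co{\aa}$ from $M$) or a process output (from $\stproclift$, which leaves $M$ fixed and inherits the behaviour of $\genlts$). Pure mailbox subcases reduce to commutativity and cancellation of multiset union; pure process subcases are exactly the hypotheses of \rfig{axioms} applied to $\genlts$; and the cross subcases commute because the mailbox is untouched by $\stproclift$ while the process state is untouched by $\stmoutlift$ and $\stminplift$.

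The delicate point, which I expect to be the main obstacle, is that the conclusions ``$p'=p''$'' and ``$p=q$'' of the output axioms and of \fwdfeedback must be read up to the equivalence $\simeq$ carried by $\liftFW{\genlts}$, and that this equivalence has to \emph{absorb a process output into the mailbox}, i.e. identify $\server \triangleright M$ with $\server' \triangleright (\mset{\co{\aa}} \uplus M)$ whenever $\server \st{\co{\aa}} \server'$. This identification is essentially forced: running $\stmoutlift$ on $\server' \triangleright (\mset{\co{\aa}} \uplus M)$ and $\stproclift$ on $\server \triangleright M$ both reach $\server' \triangleright M$ by a $\co{\aa}$-step, so consistency with \outputdeterminacyinv demands $\server \triangleright M \simeq \server' \triangleright (\mset{\co{\aa}} \uplus M)$. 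I would therefore fix this absorbing $\simeq$ first (and check it is compatible with $\sta{}$ in the sense of \rfig{Axiom-LtsEq}) and then use it systematically whenever a process output and a mailbox occurrence of the same message must be reconciled, as in the mixed subcase of \outputdeterminacy.

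The last and hardest axiom, \fwdfeedback, is where the feedback of the underlying LTS re-enters. Given $\server \triangleright M \sta{\co{\aa}} s \sta{\aa} q$ I would split on the two steps. If the output is a process output $\server \st{\co{\aa}} \server'$ and the input is a process input $\server' \st{\aa} \server''$, then Selinger's \outputfeedback on $\genlts$ yields $\server \st{\tau} \server''$, whence $\stproclift$ gives the required $\server \triangleright M \sta{\tau} q$. If the input is instead the phantom forwarding input $\stminplift$, the target is $\server' \triangleright (\mset{\co{\aa}} \uplus M)$, which the absorbing $\simeq$ identifies with $\server \triangleright M$, giving the ``$p=q$'' disjunct. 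Finally, if the output came from the mailbox via $\stmoutlift$ (so $M = \mset{\co{\aa}} \uplus M'$), the complementary input either re-deposits the message ($\stminplift$, giving literally $q = \server \triangleright M$) or is a genuine process input $\server \st{\aa} \server''$ ($\stproclift$), in which case $\stcommlift$ synchronises that input with the copy of $\co{\aa}$ still present in $M$ to produce $\server \triangleright M \sta{\tau} q$. Keeping the multiset bookkeeping consistent with $\simeq$ across these subcases, and ensuring that the genuine process case is exactly the one rescued by \outputfeedback, is the crux of the argument.
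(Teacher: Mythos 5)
Correct, and essentially the paper's own proof: the paper too first constructs the absorbing equivalence you describe --- $\doteq$ of \rdef{fw-eq}, defined via $\strip{\cdot}$ and the mailbox-union equation, with compatibility with $\sta{}$ established in \rlem{harmony-sta} --- proves \boom by \stminplift followed by \stmoutlift, and proves \fwdfeedback by exactly your four-way case split (base-LTS \outputfeedback for process-output/process-input, \stcommlift for mailbox-output/process-input, literal equality for mailbox-output/forwarding-input, and $\doteq$ for process-output/forwarding-input). The only divergence is coverage: the paper's written proof checks just the two forwarder axioms and leaves the remaining output axioms of $\liftFW{\genlts}$ to the Coq mechanisation, whereas you sketch their verification explicitly.
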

\begin{proof}
  See \rapp{appendix-forxarders}.
  %\gb{TODO: write a proof sketch}
  \qed
\end{proof}

\begin{lemma}
  \label{lem:musti-obafb-iff-musti-obafw}
  For every $\genlts_A, \genlts_B, \genlts_C \in \obaFB, \serverA \in A, \serverB \in B, \client \in C$,
  \begin{enumerate}
    \item
      $ \Must{\server}{\client}$ if and only if $\Must{\liftFW{\server}}{\client}$,
    \item
      $\serverA \testleqS \serverB$ if and only if $\liftFW{\serverA} \testleqS \liftFW{\serverB}$.
  \end{enumerate}
\end{lemma}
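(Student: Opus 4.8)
The plan is to obtain part~(2) as an immediate consequence of part~(1). Unfolding \rdef{testleqS}, the statement $\serverA \testleqS \serverB$ reads $\forall \client.\ \Must{\serverA}{\client} \implies \Must{\serverB}{\client}$, and likewise $\liftFW{\serverA} \testleqS \liftFW{\serverB}$ reads $\forall \client.\ \Must{\liftFW{\serverA}}{\client} \implies \Must{\liftFW{\serverB}}{\client}$, where in both cases $\client$ ranges over the same $\obaFB$ clients (only the servers are lifted, and the system rules of \rfig{rules-STS} apply to any LTS on each side). Applying part~(1) to $\serverA$ and to $\serverB$ yields $\Must{\serverA}{\client} \iff \Must{\liftFW{\serverA}}{\client}$ and $\Must{\serverB}{\client} \iff \Must{\liftFW{\serverB}}{\client}$ for every $\client$, so the two universally quantified implications are logically equivalent. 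Hence the entire burden lies in part~(1).

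For part~(1), recall that $\liftFW{\server}$ denotes the state $\server \triangleright \varnothing$ of the lifted LTS, which is in $\obaFW$ by \rlem{liftFW-works}, whereas the client $\client$ is an $\obaFB$ LTS and thus obeys \outputcommutativity and \outputfeedback. I would set up a correspondence between the maximal computations of $\csys{\server}{\client}$ and those of $\csys{\server \triangleright \varnothing}{\client}$, organised around reading the mailbox $M$ as the pool of messages the client has emitted and that the forwarder will bounce back: by rule \stminplift an output $\co{a}$ performed by $\client$ is stored in $M$, and by rule \stmoutlift it may later be returned to $\client$ as an input $a$. The two decisive observations are: (i) when a stored message is consumed by the server process via \stcommlift, the emit-then-consume pair is exactly one ordinary synchronisation of $\csys{\server}{\client}$ (client output $\co{a}$ with server input $a$); and (ii) when it is instead returned to the client via \stmoutlift, the emit-then-return pair is, on the client side, an output $\co{a}$ followed by the complementary input $a$, which by \outputfeedback collapses to a single internal $\tau$-move of $\client$. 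Intervening actions are absorbed by \outputcommutativity, which lets the client's output be postponed until it sits immediately before its matching input, and by the fact that $M$ is a multiset, so the order in which messages are stored is irrelevant.

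Using this correspondence I would prove both directions of the iff by transporting maximal computations and checking that success (the presence of a good client state) is preserved. The client states occurring in a forwarded computation differ from those in its projection only by floating output atoms, and by invariance of $\goodSym$ under outputs~\req{good-invariance} these extra atoms never affect goodness; symmetrically a good state on the original side survives the lifting. For the direction $\Must{\liftFW{\server}}{\client} \implies \Must{\server}{\client}$ I would lift a maximal computation of $\csys{\server}{\client}$ to one of $\csys{\server \triangleright \varnothing}{\client}$ using \stproclift for each server step, and, if it is finite, argue that its endpoint is stuck in the forwarded system too: at a stuck system the client can neither emit (else \stminplift fires) nor receive any pending message, so no forwarding reduction is enabled and the mailbox stays empty. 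For the converse direction I would project, collapsing the emit/consume and emit/return pairs as in~(i)--(ii), checking that an infinite ping-pong projects to a genuinely infinite (diverging) client computation rather than to a spuriously finite one, and that a stuck forwarded state projects to a stuck original state modulo floating outputs.

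The main obstacle is precisely this bookkeeping of the mailbox across interleavings: matching each \stminplift with the later \stcommlift or \stmoutlift consuming the same message, justifying each collapse by \outputcommutativity and \outputfeedback on $\client$, and simultaneously proving that maximality and (non-)success are preserved — in particular that forwarding introduces no new non-successful divergence beyond those already mirrored by a feedback divergence of $\client$ inside $\csys{\server}{\client}$. This is the step where the $\obaFB$ hypothesis on the client is indispensable, and where a careful induction for the finite, stuck case together with a coinductive treatment of infinite computations is required.
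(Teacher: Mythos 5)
Your reduction of part~(2) to part~(1) is correct: the clients quantified over in $\serverA \testleqS \serverB$ and in $\liftFW{\serverA} \testleqS \liftFW{\serverB}$ are the same (only servers are lifted), so applying part~(1) to both $\serverA$ and $\serverB$ makes the two universally quantified implications equivalent. Your \emph{projection} direction of part~(1) ($\Must{\server}{\client}$ implies $\Must{\liftFW{\server}}{\client}$) is also structured correctly: defer each client emission to the step where the stored message is consumed or returned, collapse emit/consume pairs into synchronisations and emit/return pairs into client $\tau$-steps via \outputfeedback, absorb intervening actions with \outputcommutativity, transfer success through \req{good-invariance}, and (a point you should make explicit) rule out a spuriously finite projection by finiteness of the client's output multiset, which forbids an infinite tail of unmatched emissions.

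The genuine gap is in the \emph{lifting} direction, $\Must{\liftFW{\server}}{\client} \implies \Must{\server}{\client}$. You claim that the endpoint of a finite maximal computation of $\csys{\server}{\client}$, once lifted, is stuck in the forwarded system as well, on the grounds that the client "can neither emit nor receive any pending message''. This is backwards: stuckness of the \emph{original} system only says that no client output is matched by a server input, not that the client has no outputs. Take $\server' = \Nil$ and $\client' = \co{a}$: the system $\csys{\Nil}{\co{a}}$ is stuck, yet $\csys{\Nil \triangleright \varnothing}{\co{a}}$ reduces, by \stminplift together with \scom, to $\csys{\Nil \triangleright \mset{\co{a}}}{\Nil}$. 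So the lift of a finite maximal computation need not be maximal, and you cannot apply the hypothesis $\Must{\liftFW{\server}}{\client}$ to it. The repair is to extend the lifted computation to a maximal one and to analyse the extension: because $\csys{\server'}{\client'}$ is stuck, (i) the server can never consume a message the client dumps into the mailbox (it lacks those inputs), and (ii) the client can never re-absorb a dumped message, since $\client' \st{\co{a}} \cdot \st{a} r$ would yield $\client' \st{\tau} r$ by \outputfeedback, contradicting $\client' \Nst{\tau}$; moreover, by \outputcommutativity and the decreasing finite output multiset, client states reached by outputs acquire no new inputs, outputs, or $\tau$-transitions. Hence the extension consists of finitely many client emissions only, all its client states are output-reachable from $\client'$, and by \req{good-invariance} success of the extended maximal computation transfers back to the original one. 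Without this extension argument the right-to-left implication of part~(1), and therefore part~(2), does not go through.
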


%% \begin{corollary}
%%   \label{cor:testleq-obafb-iff-testleq-obafw}
%%   For every $\genlts_A, \genlts_B  \in \obaFB, \serverA \in A, \serverB \in B$,
%%   $\serverA \testleqS \serverB$ if and only if $\liftFW{\serverA} \testleqS \liftFW{\serverB}$.
%% \end{corollary}

%\TODO{Why are ObaFW useful if must is logically equivalent to obaFB?}

%%%%%%%%%%%%%%%%% DEFINITION ABSTRACTION FOR ALTERNATIVE PREORDER
%% USELESS
%% Surprisingly, two predicates suffice to define our first alternative
%% preoder.  First, we define convergence along traces on the LTS of
%% Honda and Tokoro, \ie the predicate~$\acnvalong$, exactly
%% as~$\cnvalong$ but based on the transition relation~$\sta{}$.

%% The second abstraction that we need is an asynchronous variant
%% of the {\em acceptance sets}
%% of~\cite{DBLP:journals/jacm/Hennessy85}:%
%% (\coqMT{acceptance_sets}):

%% NOW USELESS
%We define weak transitions~$\wta{s}$~in the standard inductive way,% (\coqLTS{weak_a}),
%\ie via the same rules used to define~$\wt{s}$.

We now simplify the definition of acceptance sets to reason on
LTSs that are in \obaFW:
for any LTS $\genlts = \lts{A}{\Act}{ \st{} } \in \obaFW$ and program 
$ \serverA \in \StatesA$ we let
$\accfwp{ \state }{ \trace }{ \st{} } =  \setof{ O(\state') }{ \state
  \wt{ \trace } \state' \stable }$.
This definition suffices to characterise~$\testleqS$ because in each LTS that is \obaFW\ every state performs
every input, thus comparing inputs has no impact on the
preorder~$\bhvleqtwo$ of \rdef{standard-char}. More formally,
for every $\genlts_\StatesA, \genlts_\StatesB \in \obaFW$ and every $\server \in \StatesA$ and $\serverB \in \StatesB$,
we let %$\serverA \asleqAfw \serverB \text{ iff } \text{ iff }$
$$
\serverA \asleqAfw \serverB \text{ iff } \forall \trace \in \Actfin \wehavethat \serverA \cnvalong \trace \implies \accfwp{ \serverA }{ \trace }{ \st{}_\StatesA } \ll \accfwp{ \serverB }{ \trace }{ \st{}_\StatesB }
$$
%Then we have the following logical equivalence.
We have the following logical equivalence.
\begin{lemma}
  \label{lem:conditions-on-accsets-logically-equivalent}
  Let $\genlts_A, \genlts_B \in \obaFW$.
  For every $\serverA \in \StatesA, \serverB \in \StatesB, \serverA \bhvleqtwo \serverB$
  if and only if $\serverA \asleqAfw \serverB$.
\end{lemma}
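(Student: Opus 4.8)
The plan is to reduce the biconditional to a trace-wise statement. Both $\bhvleqtwo$ and $\asleqAfw$ quantify over the same traces $\trace \in \Actfin$ under the identical premise $\serverA \cnvalong \trace$ (convergence being computed in $\genlts_\StatesA$ in both definitions), so I would fix such a $\trace$ and show that
$$
\accP{\serverA}{\trace}{\st{}_\StatesA} \ll \accP{\serverB}{\trace}{\st{}_\StatesB}
\iff
\accfwp{\serverA}{\trace}{\st{}_\StatesA} \ll \accfwp{\serverB}{\trace}{\st{}_\StatesB}.
$$
The two sides differ only in what they record for each stable derivative $\state'$ reached by $\wt{\trace}$: the full ready set $R(\state') = I(\state') \cup O(\state')$ on the left, and the output set $O(\state')$ on the right. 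Both acceptance sets range over the very same family of stable derivatives, so the whole argument is about relating ready sets to output sets, derivative by derivative.

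The first step I would establish is the structural fact that underlies everything: in any $\obaFW$ LTS the \boom axiom makes every state input-receptive, i.e. $I(\state') = \Names$ for every $\state'$. Since inputs are names and outputs are conames, this gives $R(\state') = \Names \cup O(\state')$ with $O(\state') \subseteq \co{\Names}$, hence
$$
\accP{\serverA}{\trace}{\st{}_\StatesA} = \setof{\Names \cup O(\state')}{\serverA \wt{\trace} \state' \stable}
$$
and symmetrically for $\serverB$; in short, the map $X \mapsto \Names \cup X$ sends $\accfwp{}{}{}$ to $\accP{}{}{}$ pointwise.

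The second step is the elementary set-theoretic observation that, because $\Names$ and $\co{\Names}$ are disjoint, $\Names \cup O_A \subseteq \Names \cup O_B$ is equivalent to $O_A \subseteq O_B$ for all $O_A, O_B \subseteq \co{\Names}$. Granting this, both directions fall out immediately: for the forward one, from $O_B \in \accfwp{\serverB}{\trace}{\st{}_\StatesB}$ I pass to $\Names \cup O_B \in \accP{\serverB}{\trace}{\st{}_\StatesB}$, use the assumed $\ll$ to obtain some $\Names \cup O_A \in \accP{\serverA}{\trace}{\st{}_\StatesA}$ included in it, and read off $O_A \subseteq O_B$; the backward direction is the mirror image, starting from a ready set $R = \Names \cup O_B \in \accP{\serverB}{\trace}{\st{}_\StatesB}$. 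The degenerate case where no stable derivative exists after $\trace$ makes both acceptance sets empty and both $\ll$-conditions vacuously true, so it needs no separate treatment. I expect the only genuine obstacle to be the first step — checking carefully that \boom forces $I(\state') = \Names$ at \emph{every} state, rather than only along transitions that are already present — after which the remainder is a routine manipulation of inclusions resting on the disjointness of names and conames.
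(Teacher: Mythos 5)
Your proposal is correct and follows essentially the same route as the paper's proof: both hinge on the fact that the \boom axiom (which is quantified over \emph{all} states and names, as you suspected it must be) forces $I(\state')=\Names$ at every state of an \obaFW\ LTS, so every ready set decomposes as $\Names \cup O(\state')$, and then the disjointness of $\Names$ and $\co{\Names}$ reduces $\ll$ on ready sets to $\ll$ on output sets. The only cosmetic difference is that the paper dismisses the direction $\bhvleqtwo \Rightarrow \asleqAfw$ as trivial and spells out only the converse, whereas you treat both directions symmetrically via the pointwise map $X \mapsto \Names \cup X$.
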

\begin{proof}
  The {\em only if} implication is trivial, so we discuss the {\em if}
  one. Suppose that $ \serverA \asleqAfw \serverB  $ and that for some
  $\trace$ we have that $R \in \accP{\serverB}{s}{\st{}_B}$. Let~$X$ be
  the possibly empty subset of~$R$ that contains only output actions.
  Since~$\genlts_B$ is~\obaFW\ we know by definition that $R =
  {X} \cup {\Names}$.
  By definition $X \in \accfwp{\serverB}{s}{\st{}_B}$, and thus by
  hypothesis there exists a set of output actions $Y \in
  \accfwp{\serverA}{s}{\st{}_A}$ such that $Y \subseteq X$.
  It follows that the set ${{Y} \cup {\Names}} \in
  \accP{\serverA}{s}{\st{}_A}$, and trivially $Y \cup \Names \subseteq
  {X} \cup {\Names} = R$.
  \qed
\end{proof}

  %%%%%%%%%
  %%%%%%%%%
  %%%%%%%%% END OF \gb

%%% ACCEPTANCE SETS EXPLAINED EARLIER FOR THE STANDARD ALTERNATIVE
%%% CHAR BY MATTHEW
%% If~$O( \server ) = \emptyset$ then~$\server$ can only wait for an
%% output sent by the client. Acceptance sets then capture, in spite of
%% \nondeterminismT, all the outputs that potential deadlock states in
%% the server offer to the environment (\ie the client).

%%% ARE THESE NEEDED ?
\renewcommand{\serverA}{p}
\renewcommand{\serverB}{q}

%% We are now ready to define our first characterisation.
%% \begin{definition}[Behavioural preorder]%\coqMT{bhv_pre}]
%%   \label{def:bhv-leq}%
%%   We write $ \serverA \bhvleq \serverB $ whenever
%%   $\serverB \bhvleqone
%%   \serverA  \wedge  \serverA \asleqAfw \serverB$.\hfill$\blacksquare$
%% \end{definition}

In view of the second point of \rlem{musti-obafb-iff-musti-obafw},
to prove completeness it suffices to show that~$\asleq$
includes~$\testleqS$ over LTSs with
forwarding. This is indeed true:
\begin{lemma}
  \label{lem:completenessA}
  For every $\genlts_A, \genlts_B \in \obaFW$ and
  $\serverA \in \StatesA, \serverB \in \StatesB $,
  if ${ \serverA } \testleqS { \serverB }$
  then we have~${ \serverA } \asleq { \serverB }$.
\end{lemma}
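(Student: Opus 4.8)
The plan is to argue by contraposition: assuming $\serverA \not\asleq \serverB$, I will exhibit a client $\client$ that distinguishes the two servers, \ie $\Must{\serverA}{\client}$ but $\NMust{\serverB}{\client}$, which by \rdef{testleq} contradicts $\serverA \testleqS \serverB$. Since $\asleq$ is the conjunction of $\bhvleqone$ and $\bhvleqtwo$ (\rdef{standard-char}), the failure $\serverA \not\asleq \serverB$ splits into two cases, and by \rlem{conditions-on-accsets-logically-equivalent} I may replace the condition $\bhvleqtwo$ by its output-only form $\asleqAfw$, so that in the acceptance case only output ready sets matter. In both cases the test client must drive the server along a witnessing trace $\trace$; the one genuinely asynchronous subtlety is that a client cannot output-then-continue (outputs are atoms $\out{\cdot}$ with no sequential continuation), so the drive is realised by releasing output atoms in parallel and sequencing only on inputs. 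This is exactly where the forwarding hypothesis $\genlts_B \in \obaFW$ pays off: a forwarder absorbs every client output into its mailbox regardless of order, so the complementary trace $\co\trace$ can always be performed.

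Case 1 (convergence). Here there is a trace $\trace$ with $\serverA \cnvalong \trace$ but not $\serverB \cnvalong \trace$; unfolding \cnvmu\ yields a prefix $\trace'$ of $\trace$ and a divergent state $\serverB \wt{\trace'} \serverB'$ with $\lnot(\serverB' \conv)$. I would take $\client$ to drive $\co\trace$ while offering, at every intermediate point, the escape $\tau.\Unit$ via an external choice $\extc$, the last driving state being simply $\tau.\Unit$. For $\serverA$, convergence along $\trace$ guarantees (by \koenigslemma) that no computation of $\csys{\serverA}{\client}$ diverges, hence every maximal computation is finite; and since every non-successful client state can fire $\tau.\Unit$, every stuck configuration carries a $\goodSym$ client, so $\Must{\serverA}{\client}$. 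For $\serverB$, following $\serverB \wt{\trace'} \serverB'$ places the client in a non-good waiting state while $\serverB'$ diverges; the infinite computation in which the client is never scheduled is maximal and never $\goodSym$, so $\NMust{\serverB}{\client}$.

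Case 2 (acceptance). By \rlem{conditions-on-accsets-logically-equivalent} there is a trace $\trace$ with $\serverA \cnvalong \trace$ and a set $R \in \accfwp{\serverB}{\trace}{\st{}_B}$ such that every $Y \in \accfwp{\serverA}{\trace}{\st{}_A}$ satisfies $Y \not\subseteq R$. Let $T = \bigcup_{Y \in \accfwp{\serverA}{\trace}{\st{}_A}} (Y \setminus R)$; convergence of $\serverA$ along $\trace$ makes the reachable stable states finite (\koenigslemma), so $T$ is a finite set of outputs, each disjoint from $R$. I would take $\client$ to drive $\co\trace$ (again with $\tau.\Unit$ escapes at the intermediate points), ending in the stable, non-good state $\extsum_{\out a \in T} a.\Unit$, which tests whether the server can still emit an output outside $R$. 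Against $\serverB$, reaching the stable $\serverB'$ with $O(\serverB') = R$ yields a stuck, non-successful configuration, since $R \cap T = \emptyset$ forbids any synchronisation; hence $\NMust{\serverB}{\client}$. Against $\serverA$, every stable state $\serverA'$ reached after the full $\trace$ has $O(\serverA') \not\subseteq R$, so $O(\serverA') \cap T \neq \emptyset$ and the client can always consume such an output to reach $\Unit$; together with convergence this gives $\Must{\serverA}{\client}$.

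The main obstacle is the verification of $\Must{\serverA}{\client}$ in each case, rather than the easier construction of the failing computation for $\serverB$. It requires a careful analysis of all maximal computations of $\csys{\serverA}{\client}$, combining three ingredients: the convergence predicate $\serverA \cnvalong \trace$ to rule out divergence (via \koenigslemma), the finiteness of the relevant acceptance set to make the output-test client finite despite the infinite name alphabet, and the asynchronous encoding of the drive through output atoms and server forwarding, which must be shown to faithfully realise the weak trace $\co\trace$ without introducing spurious non-successful deadlocks. Getting the interaction between the client's $\tau.\Unit$ escapes (needed for liveness along the trace) and the escape-free final test (needed to detect the acceptance mismatch) exactly right is the delicate point.
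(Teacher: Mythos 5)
Your test clients are exactly the paper's: your convergence probe is $\testconv{\trace}$, and your acceptance probe uses the same witness set, since $\bigcup_{Y}(Y\setminus R)=\bigl(\bigcup \accht{\serverA}{\trace}\bigr)\setminus R$ is precisely the set $E\setminus O$ used in the paper's own argument (via \rlem{completeness-part-2.2-diff-outputs} and \rlem{completeness-part-2.2-auxiliary}); the contrapositive wrapping and the choice of a sum rather than a parallel product for the final probe are immaterial. The genuine gap is in the step you yourself single out as the main obstacle, the verification of $\Must{\serverA}{\client}$, which you discharge with ``convergence along $\trace$ guarantees (by \koenigslemma) that no computation of $\csys{\serverA}{\client}$ diverges''. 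This inference is unsound as stated. Because the client's atoms sit in parallel, a server in \obaFW\ may consume them \emph{out of order} and interleave them arbitrarily with its own outputs (by \boom it can input anything at any time), so the server states arising in computations of $\csys{\serverA}{\client}$ are not in general $\wt{\trace'}$-derivatives for prefixes $\trace'$ of $\trace$, and the hypothesis $\serverA\cnvalong\trace$ says nothing directly about them: with $\trace=a.b$ and client $\co{a}\Par\co{b}\Par\tau.\Unit$, the server may input $b$ first, and $\serverA\cnvalong a.b$ places no constraint whatsoever on the $\wt{b}$-derivatives of $\serverA$. The same problem infects your Case~2, where stuck configurations can be reached through reordered traces, while your set $T$ is computed only from $\accht{\serverA}{\trace}$.

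What closes this gap is not a finite-branching argument but the commutation structure of output-buffered agents: weak input commutation for forwarders (\rlem{weak-a-swap}), preservation of $\cnvalong$ under removing an action from the middle of a trace and under cancelling matched input/output pairs (\rlem{acnvalong-preserved-by-operations}), and, for the acceptance case, the output-swap property (\rlem{must-output-swap-l-fw}), whose proof uses every one of Selinger's axioms. These are the ingredients of \rprop{must-iff-acnv} and \rlem{completeness-part-2.2-auxiliary}, which the paper then assembles essentially as you planned to. Note also that \koenigslemma\ is both unavailable and deliberately avoided in this setting: an LTS in \obaFW\ is never finitely branching, since every state has an input transition for each of the countably many names, and the paper's development is meant to be constructive, replacing such classical arguments by rule induction on the inductive predicates $\convi$ and $\opMusti$ (related to the extensional ones via \barinduction); your contrapositive argument, which extracts a divergent path from $\lnot(\serverB\cnvalong\trace)$, is likewise intrinsically classical.
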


%{\em Notation:}
%To simplify the statements of our theorems we slightly abuse the
%notation.
%With a slight
By a slight abuse of notation,
given an LTS $\genlts = \lts{\States}{L}{\st{}}$ and a state
$\server \in \States$,
we denote with $\liftFW{
  \server }$ the LTS rooted at $\server \triangleright \mailbox{ \emptyMset }$ in $\liftFW{\genlts}$.

\begin{theorem}%% \coqEq{ctx_iff_bhv} %%
  \label{thm:testleqS-equals-bhvleq}
  \label{thm:testleqS-equals-accleq}
  \label{thm:testleqS-equals-asleq}
  For every $\genlts_A, \genlts_B \in \obaFB$
  and $\serverA \in \States, \serverB \in \StatesB,$
  \begin{equation*}
    \serverA \testleqS \serverB \;\;\;\text{iff}\;\;\;
  \liftFW{\serverA} \asleq \liftFW{\serverB}.
  \end{equation*}
\end{theorem}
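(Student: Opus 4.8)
The plan is to prove the two implications separately, in each case passing to the forwarding-enhanced systems $\liftFW{\serverA}$ and $\liftFW{\serverB}$. These both lie in $\obaFW$ by \rlem{liftFW-works}, and by \rlem{musti-obafb-iff-musti-obafw}(2) we have $\serverA \testleqS \serverB$ iff $\liftFW{\serverA} \testleqS \liftFW{\serverB}$, so in both directions it suffices to relate $\testleqS$ and $\asleq$ on the enhanced systems, which are in $\obaFW$.

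For the left-to-right implication (completeness of $\asleq$) the argument is immediate. From $\serverA \testleqS \serverB$ we get $\liftFW{\serverA} \testleqS \liftFW{\serverB}$ by \rlem{musti-obafb-iff-musti-obafw}(2), and since both enhanced systems are in $\obaFW$, \rlem{completenessA} applies verbatim to give $\liftFW{\serverA} \asleq \liftFW{\serverB}$. All the real content of this direction is already isolated in \rlem{completenessA}.

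For the right-to-left implication (soundness of $\asleq$) I must show that $\liftFW{\serverA} \asleq \liftFW{\serverB}$ entails $\liftFW{\serverA} \testleqS \liftFW{\serverB}$, and this is the core of the theorem. I would argue contrapositively: fix a client $\client$ with $\Must{\liftFW{\serverA}}{\client}$, assume $\NMust{\liftFW{\serverB}}{\client}$, and derive a contradiction with $\Must{\liftFW{\serverA}}{\client}$. The hypothesis $\NMust{\liftFW{\serverB}}{\client}$ yields a maximal unsuccessful computation of $\csys{\liftFW{\serverB}}{\client}$; tracking the interactions produces a trace $\trace$ such that $\liftFW{\serverB}$ performs $\trace$ while $\client$ performs the complementary labels, and it falls into one of two cases. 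If the computation diverges, then $\liftFW{\serverB}$ fails to converge along the corresponding prefix; by the contrapositive of $\bhvleqone$ ($\liftFW{\serverA} \cnvalong \trace \implies \liftFW{\serverB} \cnvalong \trace$) the server $\liftFW{\serverA}$ also diverges there, so replaying the client's moves builds a diverging, hence unsuccessful, computation of $\csys{\liftFW{\serverA}}{\client}$. If instead the computation halts at a stable non-good state $\csys{\serverB'}{\client'}$, then $O(\serverB') \in \accfwp{\liftFW{\serverB}}{\trace}{\sta{}}$, and \rlem{conditions-on-accsets-logically-equivalent} lets me read $\bhvleqtwo$ as the output-set condition $\asleqAfw$, producing a stable state $\serverA'$ with $\liftFW{\serverA} \wta{\trace} \serverA'$ and $O(\serverA') \subseteq O(\serverB')$; replaying the same client moves, $\csys{\serverA'}{\client'}$ is again a stable non-good state, giving an unsuccessful computation of $\csys{\liftFW{\serverA}}{\client}$. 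In both cases $\Must{\liftFW{\serverA}}{\client}$ fails, the desired contradiction.

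The hard part will be performing this extraction rigorously in the asynchronous, infinitely-branching setting. Rule $\stminplift$ lets every state of $\liftFW{\genlts}$ input any of the countably many names, so the enhanced LTS is not image-finite and \koenigslemma cannot be used to select a single witnessing branch. This is exactly where I would invoke Brouwer's \barinduction: the assumption $\Must{\liftFW{\serverA}}{\client}$ forces the relevant tree of convergent interactions to be well-founded, and \barinduction then licenses both the inductive construction of the trace $\trace$ and the case analysis above, keeping the proof constructive. A second, more conceptual subtlety is that under asynchrony every forwarding state accepts every input, so inputs carry no discriminating power; the reduction from ready sets to output sets supplied by \rlem{conditions-on-accsets-logically-equivalent} is what makes the acceptance comparison meaningful, since only the outputs of a deadlocked state record how the environment can unblock it.
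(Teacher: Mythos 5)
Your left-to-right direction is exactly the paper's: \rlem{musti-obafb-iff-musti-obafw} together with \rlem{completenessA} (the paper packages this as \rprop{bhv-completeness}), so nothing is missing there. The problem is the right-to-left (soundness) direction, where your contrapositive argument has a genuine hole. A maximal unsuccessful computation of $\csys{\liftFW{\serverB}}{\client}$ need not fall into either of your two cases: it can be infinite with \emph{infinitely many communications}, in which case neither component diverges and there is no finite prefix of $\trace$ after which $\liftFW{\serverB}$ fails to converge; your inference ``the computation diverges, hence $\liftFW{\serverB}$ fails to converge along the corresponding prefix'' is simply false for such computations. Repairing this case means assembling an infinite computation of $\csys{\liftFW{\serverA}}{\client}$ from finite matching pieces, which is a K\H{o}nig-style argument --- and, as you yourself observe, $\liftFW{\genlts}$ is infinitely branching, so \koenigslemma is unavailable. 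Separately, in your stable-state case you invoke the acceptance-set condition without first establishing its hypothesis $\liftFW{\serverA} \cnvalong \trace$: the condition $\bhvleqtwo$ is conditional on convergence, so a further sub-case split (divergence of $\liftFW{\serverA}$ along a prefix of $\trace$, yielding a different contradiction) is required.

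The deeper issue is that \barinduction cannot play the role you assign to it. Constructively, $\NMust{\liftFW{\serverB}}{\client}$ --- the negation of a universal statement over possibly infinite computations --- does not yield a witnessing unsuccessful computation, and \barinduction does not extract witnesses from negations. In the paper it is used once and elsewhere: to show that the extensional predicates imply their inductive counterparts (\rcor{inductive-char-must}), and the bar it exploits comes from the hypothesis $\Must{\liftFW{\serverA}}{\client}$, i.e.\ it lives on the computation tree of the system built from $\serverA$, not from $\serverB$. With the inductive predicate $\opMusti$ in hand, the paper's soundness proof is a \emph{direct} rule induction, not a contrapositive: since naive rule induction on $\musti{\serverA}{\client}$ fails because of weak transitions and server nondeterminism (\rexa{must-set-is-helpful}), the paper introduces the set-based predicate $\opMustset$ (\rfig{rules-mustset-main}) and the LTS of sets, proves that the lifted preorders are preserved by transitions, and establishes \rlem{soundness-set} by induction on the derivation of $\mustset{X}{\client}$. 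That mechanism is the actual core of the soundness half of the theorem, and it is absent from your proposal; without it, or a fully worked classical substitute that also handles the infinite-interaction case despite infinite branching, the right-to-left implication is not established.
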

%% \begin{proof}
%%   We discuss the steps of the proof
%%   in \rsec{bhv-completeness}-\ref{sec:bhv-soundness}.
%% \end{proof}

\noindent
This theorem is the linchpin of this paper, as all other results presented in
this paper are corollaries.
To begin with, instantiating this theorem to a calculus which can be given an
LTS that satisfies the axioms for output-buffered agents with feedback
such as \ACCS (\rlem{ACCS-obaFB}), the core join-calculus~\cite{join-calculus}
or KLAIM~\cite{klaim}, we get a characterisation of the \mustpreorder essentially
for free. In our Coq development, we instantiate it with $\ACCS$:
\begin{corollary}
  \label{cor:characterisation-for-aCCS}
For every $\serverA, \serverB \in \modulo{\ACCS}{\equiv}, \serverA \testleqS \serverB$ iff $\liftFW{\serverA} \asleq
\liftFW{\serverB}.$
\end{corollary}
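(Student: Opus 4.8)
The plan is to obtain this statement as a direct instantiation of \rthm{testleqS-equals-accleq}, which is already stated for arbitrary LTSs in $\obaFB$. First I would invoke \rlem{ACCS-obaFB} to establish that the LTS $\lts{\modulo{\ACCS}{\equiv}}{L}{\modulo{\st{}}{\equiv}}$ lies in $\obaFB$. This is the only calculus-specific input the proof needs: it is precisely what licenses applying the general theorem to the concrete syntax.

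Next I would set $\genlts_A = \genlts_B$ equal to this LTS in \rthm{testleqS-equals-accleq}, taking $\serverA$ and $\serverB$ to be two $\ACCS$ terms, or more precisely their $\equiv$-classes. Since every notion in sight is defined uniformly over an abstract LTS --- the contextual preorder $\testleqS$, the forwarding construction $\liftFWSym$ of \rdef{liftFW}, the convergence predicate $\cnvalong$, and the acceptance-set preorder $\asleq$ of \rdef{standard-char} --- each specialises to the $\ACCS$ case with no change at all. The theorem then reads verbatim as the desired biconditional $\serverA \testleqS \serverB$ iff $\liftFW{\serverA} \asleq \liftFW{\serverB}$, so the corollary follows.

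The only point requiring care, and the place where a gap could conceivably hide, is the treatment of the quotient by structural congruence: one must check that working with $\modulo{\ACCS}{\equiv}$ rather than with raw terms is harmless, i.e.\ that $\testleqS$, $\liftFWSym$ and $\asleq$ are all well-defined on $\equiv$-classes. This is guaranteed by the compatibility of $\equiv$ with the transition relation (the axiom of \rfig{Axiom-LtsEq}) together with the assumption that the $\goodSym$ predicate is preserved by $\equiv$; both facts are already packaged into the hypothesis $\lts{\modulo{\ACCS}{\equiv}}{L}{\modulo{\st{}}{\equiv}} \in \obaFB$ supplied by \rlem{ACCS-obaFB}. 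I therefore expect no genuine obstacle: all the mathematical substance lives in \rthm{testleqS-equals-accleq}, and this corollary is pure instantiation.
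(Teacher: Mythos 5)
Your proposal is correct and matches the paper's own argument: the corollary is obtained by instantiating \rthm{testleqS-equals-accleq} with the LTS $\lts{\modulo{\ACCS}{\equiv}}{L}{\modulo{\st{}}{\equiv}}$, whose membership in $\obaFB$ is exactly what \rlem{ACCS-obaFB} provides. Your extra remark about well-definedness on $\equiv$-classes is sound and is indeed already absorbed into that lemma (via the compatibility axiom of \rfig{Axiom-LtsEq}), so nothing further is needed.
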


\noindent
Another application of \rthm{testleqS-equals-bhvleq} is a novel
behavioural characterisation of the \mustpreorder, which fully exploits
asynchrony, \ie disregards irrelevant non-causal orders of visible
actions in traces. %
For space reasons, we defer the discussion of this result to \rapp{normal-forms}.

  So far, we have seen the more direct applications of
  \rthm{testleqS-equals-bhvleq}.  Before we explain two other
  applications, namely the coinductive characterisation of the
  \mustpreorder and the relation of~$\testleqS$ with the failure
  refinement, we outline the proof of \rthm{testleqS-equals-bhvleq},
  and, in particular, the technical tools we used.
  
  %%%%%%%%%%%%%%%%%%%%%%%%%%%%%%%%%%%%%%%%%%%%%%%%%%%%%%%
  %%%%%%%%%%%%%%%%%%%%%%%%%%%%%%%%%%%%%%%%%%%%%%%%%%%%%%%
  %%% SUBMITTED TEXT
  %%%%%%%%%%%%%%%%%%%%%%%%%%%%%%%%%%%%%%%%%%%%%%%%%%%%%%%
  %%%%%%%%%%%%%%%%%%%%%%%%%%%%%%%%%%%%%%%%%%%%%%%%%%%%%%%
  %% So far, we have seen the more direct applications of
  %% \rthm{testleqS-equals-bhvleq}. Before we explain its other applications, namely
  %% the coinductive characterisation of the \mustpreorder, and its relation with the
  %% failure refinement, we give an idea of the proof of \rthm{testleqS-equals-bhvleq},
  %% and, in particular, of the technical tools we used.

\subsection{Proof of \rthm{testleqS-equals-bhvleq}}

The full proof of \rthm{testleqS-equals-bhvleq}, which is given in the Appendix,
as well as in the Coq development, comprises two parts:
\rapp{proof-completeness} deals with completeness, where the main aim is to show
\rlem{completenessA}, and \rapp{proof-soundness} deals with soundness.
Here we outline the main tools we use to prove the soundness of the
$\asleq$ preorder: \emph{Bar-induction}, which allows us to relate the standard
definition of $\opMust$ with an inductive one that is more practical to use,
especially in a constructive setting, and the \emph{LTS of sets}, which is
derived from the LTS of the processes.

%% SUBMITTED TEXT
%\input{bar-induction-SHORT-SHORT}
%% TEXT TAKEN FROM THE APPENDIX
\subsubsection{\Barinduction: from \extensional to \intentional definitions}
\label{sec:barinduction-main-body}
%% \subsection{Inductive definitions of predicates}
\label{sec:bar-induction-technicalities}
We present the inductive characterisations of $\conv$ and $\opMust$ in any state
transition system (STS) \sts{S}{\to} that is countably branching.
In practice, this condition is satisfied by most concrete LTS of programming
languages, which usually contain countably many terms; this is %certainly
the case for \ACCS and for the asynchronous $\pi$-calculus.

Following the terminology of \cite{DBLP:conf/lics/BredeH21} we introduce
extensional and \intentional\ predicates associated to any decidable
predicate~$Q: S \to \mathbb{B}$ over an STS~$\sts{S}{\to}$, where
$\mathbb{B}$ denotes the set of booleans.

\begin{definition}%[\coqBar{extensional_pred}, \coqBar{intentional_pred}]
  \label{def:def-bar}
  The \emph{\extensional\ predicate} $\mathsf{ext}_Q(s)$ is defined, for~$s \in
  S$, as
  \[
    \forall \eta \text{ maximal execution of~$S$} \wehavethat
    \eta_0 = s \implies %\;\Longrightarrow\;
    \exists n \in \N,\;
    Q(\eta_n)
  \]
  The \emph{\intentional\ predicate} $\mathsf{int}_Q$ is the inductive predicate
  (least fixpoint) defined by the following rules:
  $$
  \begin{array}{l@{\hskip 3pt}l@{\hskip 20pt}l@{\hskip 3pt}l}
    \rname{axiom}
&    \begin{prooftree}
      Q(s)
      \justifies
      \mathsf{int}_Q(s)
    \end{prooftree}
    &
    \rname{ind-rule}
    &
    \begin{prooftree}
      s \to
      \qquad
      \forall s' \wehavethat  s \to s' \implies \mathsf{int}_Q(s')
      \justifies
      \mathsf{int}_Q(s)
    \end{prooftree}
  \end{array}
  $$
  \hfill$\blacksquare$
\end{definition}
\noindent
For instance, by letting
\begin{equation*}
  Q_1(\state) \;\iff\; \state \Nst{\phantom{\tau}} \qquad\qquad
  Q_2(\state, \client) \;\iff\; \good{\client}
\end{equation*}
\noindent
we have by definition that
\begin{equation}
  \tag{ext-preds}
  \label{eq:def-extensional-predicates}
  \state \conv \;\iff\; \mathsf{ext}_{Q_1}( \state ) \qquad\qquad
  \Must{ \server }{\client } \;\iff\;
  \mathsf{ext}_{Q_2}(\state, \client )
\end{equation}
that is the standard definitions of $\conv$ and $\opMust$ are \extensional.
Our aim now is to prove that they coincide with their \intentional\
counterparts.
The reader not familiar with this terminology may find in \rapp{bar-induction}
an informal and hopefully intuitive explanation.
Since we will use the \intentional\ predicates in the rest of
the paper a little syntactic sugar is in order, let
%(\coqLTS{terminate}, \coqMT{must2})
\begin{equation}
  \tag{int-preds}
  \label{eq:def-intentional-predicates}
  %\begin{align*}
  \state \convi \;\iff\; \mathsf{int}_{Q_1}( \state ) \qquad\qquad
  \musti{ \server }{\client } \;\iff\;
  \mathsf{int}_{Q_2}(\state, \client )
%\end{align*}
\end{equation}

The proofs of soundness, \ie that the inductively defined predicates imply the extensional ones,
are by rule induction:
%\footnote{They could be exercises for first year graduate students in CS. If this is not the case, we are sorry that your students are no better than ours.}
\begin{lemma}%[\coqBar{intentional_implies_extensional}]
  \label{lem:intensional-implies-extensional}
  %$\sts{S}{\to}$
  For $\state \in S$,
  \begin{enumerate}[(a)]
  \item $\state \convi $ implies $\state \conv$,
  \item for every $\client \wehavethat \musti{\state}{\client}$ implies $\Must{ \state }{\client}$.
  \end{enumerate}
\end{lemma}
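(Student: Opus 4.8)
The plan is to prove both parts as instances of a single generic fact: for any decidable predicate $Q : S \to \mathbb{B}$ over an STS $\sts{S}{\to}$ and any state $s \in S$, $\mathsf{int}_Q(s)$ implies $\mathsf{ext}_Q(s)$. Part~(a) then follows by taking $Q := Q_1$ over the STS of processes, whose reduction relation is $\st{\tau}$, using the definitional equalities of \req{def-extensional-predicates} and \req{def-intentional-predicates}; part~(b) follows by taking $Q := Q_2$ over the STS of client--server systems $\sts{\SysStates}{\st{}}$ at the state $\csys{\state}{\client}$. This is the elementary direction of the correspondence between the two styles of definition, so ordinary rule induction suffices and no \barinduction is needed.

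Concretely, I would proceed by rule induction on the derivation of $\mathsf{int}_Q(s)$, \ie I would show that $\mathsf{ext}_Q$ is closed under the two rules \mnow and \mstep; since $\mathsf{int}_Q$ is the least predicate closed under them, the desired inclusion follows. For \mnow we have $Q(s)$, and any maximal execution $\eta$ with $\eta_0 = s$ already witnesses $\mathsf{ext}_Q(s)$ at index $0$. For \mstep we may assume the premise $s \to$ together with the induction hypothesis that $\mathsf{ext}_Q(s')$ holds for every successor $s'$ with $s \to s'$. Given any maximal execution $\eta$ with $\eta_0 = s$, the premise $s \to$ means the one-state sequence $\langle s\rangle$ can be extended, so $\eta$ has a second state $\eta_1$ with $s \to \eta_1$, and the suffix $\eta_1\,\eta_2\cdots$ is itself a maximal execution starting at the successor $\eta_1$. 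The induction hypothesis applied to $\eta_1$ yields an index $m$ with $Q(\eta_{m+1})$, so $n := m+1$ witnesses $\mathsf{ext}_Q(s)$.

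I expect the only delicate point to be the bookkeeping of maximal executions in the \mstep case: one must invoke the definition of maximality to be sure that, when $s \to$, every maximal execution from $s$ genuinely performs a first step, and that dropping this first state again leaves a maximal execution, so that the induction hypothesis applies to its first state. In the Coq development, where executions are represented as (possibly infinite) sequences, this is a small auxiliary lemma on suffixes of maximal executions together with the index shift above. It is worth noting that this direction uses neither the decidability of $Q$ nor the countable-branching hypothesis on $\sts{S}{\to}$; those assumptions are needed only for the converse implication, which is where \barinduction enters.
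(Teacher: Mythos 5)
Your proposal is correct and takes essentially the same route as the paper: the paper proves this lemma by rule induction on the derivation of the intensional predicate (the \mnow case is witnessed at index $0$; in the \mstep case maximality forces the execution to take a first step, and the induction hypothesis applies to the tail), which is exactly your argument spelled out. Your factoring through the generic fact $\mathsf{int}_Q(s) \implies \mathsf{ext}_Q(s)$, and your remark that neither decidability of $Q$ nor countable branching is needed in this direction, are harmless refinements of the same proof.
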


The proofs of completeness are more delicate. To the best of our knowledge, the ones about \CCS \cite{TCD-CS-2010-13,phdbernardi}
proceed by induction on the greatest number of steps necessary to arrive at
termination or at a successful state. Since the STS of \sts{\CCS}{\st{\tau}} is
finite branching, \koenigslemma\ guarantees that such a bound exists. This
technique does not work on infinite-branching STSs, for example the one of \CCS
with infinite sums \cite{DBLP:journals/corr/BernardiH15}.
%\gb{Morever Dummett \cite{dummett2000elements} 
%explains in detail why \koenigslemma, i.e. a mainstay in the literature on the \mustpreorder, is not constructive.}
%Consider for instance the process $\state$ in \rfig{inf-image-conv}, it is
%convergent because every sequence of $\tau$ that it performs is finite, yet
%there is no upper bound to the length of these sequences.
If we reason in classical logic, we can prove completeness without \koenigslemma
and also over infinite-branching STSs via a proof {\em ad absurdum}: suppose $p \conv$. If
$\lnot (p \convi)$ no finite derivation tree exists to prove $p \convi$, and
then we construct an infinite sequence of $\tau$ moves starting with $p$, thus $\lnot (p \conv)$.
%The limit of this technique is that it exists only in classical logic. To do
%away also with this limit,
Since we strive to be constructive we replace reasoning {\em ad absurdum} with a
constructive axiom: (decidable) \emph{\barinduction}. In the rest of this section we discuss this axiom, and adapt it to our \svrclt setting. This requires a little terminology.

\newcommand{\utree}{T_{\N}}

\paragraph{\Barinduction}
The axiom we want to use is traditionally stated using natural numbers.
We use the standard notations $\N^\star$ for finite sequences of natural numbers,
$\N^\omega$ for infinite sequences, and~$\N^\infty = \N^\star \cup \N^\omega$ for
finite or infinite sequences.
Remark that, in constructive logics, given $u \in \N^\infty$, we cannot do a
case analysis on whether $u$~is finite or infinite.
The set $\N^\infty$ equipped with the prefix order can be seen as a \emph{tree},
denoted $\utree$, in the sense of set theory: a tree is an ordered set~$(A,
\leq)$ such that, for each~$a \in A$, the set~$\{ b \mid b < a \}$ is
well-ordered by~$<$. A \emph{path} in a tree A is a maximal element in~$A$.
In the tree~$\N^\infty$, each node has~$\omega$ children, and the paths are
exactly the infinite sequences~$\N^\omega$.

A predicate~$P \subseteq \N^\star$ over finite words is a {\em bar} if every
infinite sequence of natural numbers has a finite prefix in~$P$. Note that a bar
defines a subtree of $\utree$ \emph{extensionally}, because it defines each path
of the tree, as a path~$u \in \N^\omega$ is in the tree if and only if there exists a finite
prefix which is in the bar~$P$.

A predicate~$Q \subseteq \N^\star$ is {\em hereditary} if
\[
  \forall w \in \N^\star, \;\;
  \text{if}\;\;
  \forall n \in \N, w \cdot n \in Q
  \;\;\text{then}\;\;
  w \in Q.
\]
\Barinduction states that the extensional predicate associated to a bar implies its \emph{\intentional} counterpart:
a predicate~$P_{\mathit{int}} \subseteq \N^\star$ which contains~$Q$ and which is
hereditary.

\renewcommand{\bar}{Q}
\newcommand{\pint}{P_{\mathit{int}}}

\begin{myaxiom}[Decidable \barinduction over $\N$] %\coqBar{actual_bar_induction}]
  Given two predicates~$\pint, \bar$ over~$\N^\star$, such that:
  \begin{enumerate}
    \item for all $\pi \in \N^\omega$, there exists~$n \in
      \N$ such that $(\pi_1, \dots, \pi_n) \in \bar$;
    \item for all $w \in \N^\ast$, it is decidable whether $\bar(w)$ or~$\neg \bar(w)$;
    \item for all $w \in \N^\ast$, $\bar(w) \Rightarrow \pint(w)$;
    \item $\pint$ is hereditary;
  \end{enumerate}
  then $\pint$ holds over the empty word: $\pint(\varepsilon)$.
\end{myaxiom}

\Barinduction is a generalisation of the fan theorem, i.e. the
constructive version
of \koenigslemma \cite[pag. 56]{dummett2000elements},
%and the fundamental use of \barinduction is to prove that trees whose
%paths are well-founded are well-founded \cite[Proposition 7]{DBLP:conf/aplas/NakataUB11}.
and states that any extensionally well-founded tree~$T$ can be turned into an
inductively-defined tree~$t$ that realises~$T$ \cite{DBLP:conf/lics/BredeH21,kleene1965foundations}.

Our mechanisation of \barinduction principle is formulated as a Proposition that
is proved using classical reasoning, since it is not provable directly in the
type theory of Coq.
Unfortunately, while \barinduction
  is a constructive principle, mainstream proof
assistants do not support it yet, which is why on the one hand we had
to postulate it as a proof principle while on the other hand we proved
it in classical logic using the Excluded Middle axiom. % One consequence
% of using an axiom is that they do not have computational content.
%
This principle though has a computational content, Spector bar recursion\footnote{\url{https://en.wikipedia.org/wiki/Bar_recursion}}, which,
currently, cannot be used in mainstream proof assistants such as Coq. 
Developing a type theory with a principle of \barinduction is %the subject of
recent and ongoing work~\cite{DBLP:conf/types/Fridlender98,DBLP:journals/jacm/RahliBCC19}.

  %% for it is used by
  %% \cite{DBLP:books/daglib/0067558} in the proof of Theorem 17.4.14, and
  %% introduced in Exercise 17.5.17 of the same book.

%% Intuitively, \barinduction let us charaterize inductively $\conv$ if we pick as bar
%% the predicate $P(\pi)  = \pi \text{is finite} \and \pi \text{contains a stable state}$,
%% and as $Q$ the predicate defined by induction via the rule \rname{cnv},
%% %s in \rfig{rule-cnv},
%% %% which states that the set of convergent processes is the least fixed point of
%% %% The intuition for why \barinduction can be used to charaterize $\conv$, is that $P$
%% %% would correpond to the fact that a finite reduction sequence has reached a
%% %% stable state, and $Q$ would correspond to the inductive definition of convergence,
%% %% %would correpond to the fact that a finite reduction sequence has reached a
%% %% %stable state, and $Q$ would correspond to the inductive definition of convergence,
%% %% which states that the set of convergent processes is the least
%% i.e. the least fixed point of the function
%% %

%% \renewcommand{\stateA}{q}
%% \[
%%   X \subseteq \States
%%     \quad\longmapsto\quad
%%   \setof{ \state }{ \forall \stateA, \; \state \to \stateA \implies \stateA \in X }
%% \]
%% %which is easy to define as an inductive type in Coq.

%% %% The version of \barinduction hinted at above is not directly suitable for our
%% %% purpose, as we need to reason about reduction sequences instead of sequences of
%% %% natural numbers. The solution is of course to encode processes with natural
%% %% numbers.

\paragraph{Encoding states}
The version of \barinduction we just outlined is not directly suitable for our
purposes, as we need to reason about sequences of reductions rather than sequences
of natural numbers. The solution is to encode STS states by natural %processes with natural
numbers.
%
%We then encouter a more serious issue:
This leads to the following issue: the nodes of the tree~$\utree$ have a fixed
arity, namely~$\N$, while processes have variably many reducts, including
zero if they are stable.
To deal with this glitch, it suffices to
%Our solution is to
assume that there exists the following family of surjections:
%
%\begin{align*}
\renewcommand{\stateB}{q}
\begin{equation}
\label{eq:surjection}
%  F^s : \States \to \N  \\
  F( \state ) : \N \to \setof{ \stateB }{ \state \to \stateB }
%\end{align*}
\end{equation}
where a surjection is defined as follows.
%% which always imply the usual
%% definition, and is equivalent to it assuming the Axiom of Choice:
%
\begin{definition}
  A map~$f: A \to B$ is a surjection if it has a section~$g: B \to A$, that is,
  $f \circ g = \mathrm{Id}_B$.
\end{definition}
\noindent
This definition implies the usual one which states the existence of an
antecedent~$x \in A$ for any~$y \in B$, and it is equivalent to it if we assume
the Axiom of Choice.

Using this map~$F$ as a decoding function, any sequence of natural
numbers corresponds to a path in the STS. Its subjectivity means that all paths
of the LTS can be represented as such a sequence. This correspondence allows us
to transport
% bar induction
\barinduction from sequences of natural numbers to executions of
processes.

Note that such a family of surjections $F$ exists for \ACCS\ processes, and
generally to most programming languages, because the set $\Acttau$ is countable,
and so are processes.
This leads to the following version of \barinduction where words and
sequences are replaced by finite and infinite executions.
\begin{proposition}[Decidable \barinduction over an STS]%~\coqBar{bar_induction}]
 Let~$\sts{S}{\to}$ be an STS such that a  surjection as in~(\ref{eq:surjection})
 exists. Given two predicates~$\bar, \pint$ over finite executions, if
  \begin{enumerate}
    \item for all infinite execution~$\eta$, there exists~$n \in
      \N$ such that $(\eta_1, \dots, \eta_n) \in \bar$;
    \item for all finite execution $\zeta$,
    $\bar(\zeta)$ or~$\neg \bar(\zeta)$ is decidable;
          %one can decide whether $P(\zeta)$ or~$\neg P(\zeta)$;
    \item for all finite execution $\zeta$, $\bar(\zeta) \Rightarrow \pint(\zeta)$;
    \item $\pint$ is hereditary, as defined above except that $\zeta \cdot q$ is a
      partial operation defined when~$\zeta$ is empty or its last state is~$p$
      and $p \to q$;
  \end{enumerate}
then $\pint$ holds over the empty execution: that is $\pint(\varepsilon)$ holds.
\end{proposition}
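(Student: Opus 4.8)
The plan is to \emph{transport} the axiom of decidable \barinduction over $\N$ along the family of surjections $F$, encoding finite executions as finite words of natural numbers. First I would fix the root state $s_0$ at which the empty execution $\varepsilon$ sits, and define a decoding map $\mathsf{dec}\colon \N^\star \to \{\text{finite executions}\}$ by recursion on the word: set $\mathsf{dec}(\varepsilon) = s_0$, and, given $\mathsf{dec}(w)$ ending in a state $p$, let $\mathsf{dec}(w\cdot n) = \mathsf{dec}(w)\cdot F(p)(n)$ when $p \to$, and $\mathsf{dec}(w\cdot n) = \mathsf{dec}(w)$ when $p$ is stable (once a stable state is reached, the extra index is discarded). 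This relies on the decidability of stability and on the fact that $F(p)$ is only ever applied when $p$ has at least one reduct, which side-steps the non-existence of a surjection $\N \to \emptyset$. I would then pull back the two predicates, defining $\widehat{P},\widehat{Q}$ over $\N^\star$ by $\widehat{P}(w) \iff \pint(\mathsf{dec}(w))$ and $\widehat{Q}(w) \iff \bar(\mathsf{dec}(w)) \lor \mathsf{dec}(w)\text{ ends in a stable state}$, apply the $\N$-axiom to the pair $(\widehat{Q},\widehat{P})$ to obtain $\widehat{P}(\varepsilon)$, and observe that this unfolds to $\pint(\mathsf{dec}(\varepsilon)) = \pint(\varepsilon)$ since $\mathsf{dec}(\varepsilon)$ is the root.

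Next I would discharge the four conditions of the $\N$-axiom. Decidability of $\widehat{Q}$ (condition~2) follows from decidability of $\bar$ (premise~2 of the proposition) and of stability. For condition~3, if $\bar(\mathsf{dec}(w))$ holds then $\pint(\mathsf{dec}(w))$ by premise~3; if instead $\mathsf{dec}(w)$ ends in a stable state, then $\pint(\mathsf{dec}(w))$ follows from heredity (premise~4) applied to that execution, whose set of one-step extensions is empty, so the hereditary hypothesis is vacuously satisfied. The crucial step is heredity of $\widehat{P}$ (condition~4): fixing $w$ with $\mathsf{dec}(w)$ ending in $p$, if $p$ is stable then $\mathsf{dec}(w\cdot n) = \mathsf{dec}(w)$ and the claim is immediate, while if $p \to$ then $\{F(p)(n) \mid n \in \N\} = \{q \mid p \to q\}$ by \emph{surjectivity} of $F(p)$, so the assumption $\forall n.\,\pint(\mathsf{dec}(w)\cdot F(p)(n))$ gives $\pint(\mathsf{dec}(w)\cdot q)$ for every successor $q$ of $p$, and premise~4 yields $\pint(\mathsf{dec}(w))$. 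Surjectivity is precisely what guarantees that quantifying over the children indices of the $\N$-tree $\utree$ covers all reducts in the STS.

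The bar condition (condition~1 of the $\N$-axiom) is where I expect the real work, and it is the place where the mismatch between the fixed arity $\N$ of $\utree$ and the variable — possibly zero — branching of the STS must be reconciled. Given $\pi \in \N^\omega$, the decoding yields a sequence of states $p_0,p_1,\dots$ that forms a genuine infinite execution exactly when no $p_k$ is stable; if some $p_k$ is stable, the corresponding finite prefix is a maximal execution and $\widehat{Q}$ fires through its stalled disjunct, whereas if the sequence is genuinely infinite I would invoke premise~1 of the proposition to obtain a prefix in $\bar$. The delicate point is that, constructively, one cannot case-split on whether $\pi$ decodes to a finite or an infinite execution — this is exactly the finite/infinite dichotomy flagged before the statement. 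I therefore expect this reconciliation to be the main obstacle: it is the reason the stalled disjunct must be baked into $\widehat{Q}$ and the zero-successor case must be absorbed by heredity rather than by the bar, and it dovetails with the classical reasoning already used to justify the $\N$-axiom itself. Once condition~1 is in place, the remaining bookkeeping (that the length-$n$ prefix of $\pi$ decodes to the length-$n$ prefix of the decoded execution, as long as no stable state has yet been met) is routine.
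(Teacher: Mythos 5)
Your overall strategy---decode $\N$-words into executions along the family $F$, pull back the predicates, and invoke the $\N$-axiom---is exactly the paper's, and your discharge of conditions 2--4 (decidability, $\widehat{Q}\Rightarrow\widehat{P}$, and heredity via surjectivity of each $F(p)$) is sound. The gap is in condition 1, and it is self-inflicted, because you have silently weakened the hypothesis of the proposition. The statement assumes that the surjection $F(p)\colon \N \to \setof{q}{p \to q}$ exists for \emph{every} state $p$; since there is no function from the non-empty set $\N$ into an empty set, this hypothesis already forces every state to have at least one reduct, i.e.\ the STS under consideration has no stable states at all. The paper says this explicitly right after the proposition (the requirement ``that every state in our STS has an outgoing transition'' is ``necessary to ensure the existence of the surjection''), and it meets the requirement for an arbitrary countably-branching STS by the separate $\mathit{Sink}(S,\to)$ construction, applied \emph{before} the proposition is invoked rather than inside its proof. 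Under the hypothesis as actually stated, your stalling clause in $\mathsf{dec}$, the stalled disjunct of $\widehat{Q}$, and the vacuous-heredity case are all dead code, and the obstacle you flag evaporates: every $\pi \in \N^\omega$ decodes to a genuine infinite execution, so premise 1 of the proposition yields the required prefix in $\bar$ directly, with no case analysis on finiteness.

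As written, however, your argument for condition 1 does not close: you split on whether the decoded sequence ever reaches a stable state, you correctly observe that this finite/infinite dichotomy is not constructively available, and you then leave it as ``the main obstacle'' rather than resolving it. That is a genuine hole, not bookkeeping---closing it inside the transport would require an omniscience-style principle or outright classical reasoning, which would defeat the purpose of deriving the STS version from the $\N$-axiom by a constructive reduction. The paper instead quarantines all stable-state reasoning, and the attendant classical case analysis on maximal executions, in the $\mathit{Sink}$ construction outside this proposition. The repair is not a cleverer proof of condition 1 but a correct reading of the hypothesis, after which deleting the stable-state machinery leaves precisely the paper's argument.
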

The last gap towards a useful principle is the requirement
that every state in our STS has an outgoing transition.
This condition is necessary to ensure the existence of the
surjection in \req{surjection}.
To ensure this requirement given any countably-branching STS, we enrich it by
adding a \emph{sink} state, which (a) is only reachable from stable states of
the original STS, and (b) loops. This is a typical technique, see for instance
\cite[pag. 17]{DBLP:books/aw/Lamport2002}.
\newcommand{\sinkto}{\mathrel{\to^{\kern-0.3pt\text{\smaller[.5]{\tiny\ensuremath{\top}}}}}}
\begin{definition}
  Define~$\mathit{Sink}(S, \to) \coloneqq \sts{S \cup \set{\top}}{\sinkto}$, where~$\sinkto$ is
  defined inductively as follows:
  \[
    p \to q \;\;\Longrightarrow\;\; p \sinkto q
    \qquad
    p \Nst{} \;\;\Longrightarrow\;\; p \sinkto \top
    \qquad
    \top \sinkto \top
  \]
\end{definition}
A maximal execution of~$\mathit{Sink}(S, \to)$ is always infinite, and it
corresponds (in classical logic) to either an infinite execution of~$S$ or a
maximal execution of~$S$ followed by infinitely many~$\top$.
We finally prove the converse of Lemma~\ref{lem:intensional-implies-extensional}.

\begin{proposition}%[\coqBar{extensional_implies_inductive}]
  \label{prop:ext-impl-int}
  Given a countably branching STS~$\sts{S}{\to}$, and a decidable predicate~$Q$
  on~$S$, we have that, for all~$s \in S$, $\mathsf{ext}_Q(s) \implies
  \mathsf{int}_Q(s).$
\end{proposition}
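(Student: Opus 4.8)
The plan is to fix $s \in S$, assume $\mathsf{ext}_Q(s)$, and obtain $\mathsf{int}_Q(s)$ by a single application of the Decidable \barinduction principle over an STS. Because that principle presupposes the surjection of \req{surjection}, and hence a total reduction relation, I would first replace $\sts{S}{\to}$ by $\mathit{Sink}(S,\to) = \sts{S \cup \set{\top}}{\sinkto}$: since the original STS is countably branching and every stable state now reaches $\top$, every state of the sink has a nonempty countable set of successors, so the required family of surjections exists. I extend $Q$ to the sink by setting $Q(\top)$ to false (which keeps it decidable) and work in the tree of finite executions of $\mathit{Sink}$ rooted at $s$. On this tree I instantiate the two predicates of the principle by letting $B(\zeta) \eqdef Q(\mathsf{last}(\zeta))$ play the role of the bar and $P(\zeta) \eqdef \mathsf{int}_Q(\mathsf{last}(\zeta))$ play the role of the hereditary predicate, where $\mathsf{last}(\zeta)$ denotes the final state of the finite execution $\zeta$. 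Since the empty execution $\varepsilon$ sits at the root $s$, its conclusion $P(\varepsilon)$ is literally $\mathsf{int}_Q(s)$, so it remains to discharge the principle's four hypotheses.

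Three of the hypotheses are routine. Decidability of $B$ follows from decidability of $Q$ and computability of $\mathsf{last}(\zeta)$. The implication $B(\zeta) \Rightarrow P(\zeta)$ is exactly rule \mnow, since $Q(\mathsf{last}(\zeta))$ yields $\mathsf{int}_Q(\mathsf{last}(\zeta))$. For heredity, fix $\zeta$ with $p = \mathsf{last}(\zeta)$ and assume $P(\zeta \cdot q)$, i.e. $\mathsf{int}_Q(q)$, for every $\sinkto$-successor $q$ of $p$. If $p$ is non-stable in $S$ its sink-successors coincide with its $\to$-successors, so from $p \to$ and $\forall q.\, p \to q \Rightarrow \mathsf{int}_Q(q)$ rule \mstep gives $\mathsf{int}_Q(p) = P(\zeta)$; if $p$ is stable in $S$ (or $p = \top$) its only successor is $\top$, and since $\mathsf{int}_Q(\top)$ is underivable the antecedent of heredity is never met, so the implication holds vacuously.

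The substantive hypothesis is the bar condition: every infinite execution $\eta$ of $\mathit{Sink}$ from $s$ has a finite prefix on which $B$ holds. Here I would invoke the correspondence between infinite sink-executions and maximal $S$-executions noted after the definition of $\mathit{Sink}$: either $\eta$ never leaves $S$, and is then an infinite, hence maximal, execution of $S$; or $\eta$ first reaches $\top$ at some index $m$, in which case $\eta_{m-1}$ is stable and the prefix up to $\eta_{m-1}$ is a finite maximal execution of $S$. In either case $\mathsf{ext}_Q(s)$ supplies an index $n$ with $Q(\eta_n)$, and since $Q(\top)$ is false this $n$ falls inside the $S$-part, giving the desired prefix in $B$. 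The principle then delivers $P(\varepsilon) = \mathsf{int}_Q(s)$; a final routine check confirms that $\mathsf{int}_Q$ evaluated in $\mathit{Sink}$ agrees with $\mathsf{int}_Q$ in $S$ on states of $S$, because the extra transition from a stable state leads only to $\top$, where $\mathsf{int}_Q$ is false and so never fires rule \mstep. I expect this bar condition to be the main obstacle: turning an arbitrary infinite sink-execution into a maximal $S$-execution rests on the classically justified case split on whether $\eta$ ever reaches $\top$, which is precisely where the argument's non-constructive content stays quarantined inside the \barinduction principle itself.
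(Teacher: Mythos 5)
Your proof is correct and follows essentially the same route as the paper: the $\mathit{Sink}$ construction to obtain a total, countably branching STS, followed by a single application of the decidable bar-induction principle over that STS with bar $Q(\mathsf{last}(\zeta))$ and hereditary predicate $\mathsf{int}_Q(\mathsf{last}(\zeta))$ — exactly the machinery the paper develops immediately before stating the proposition, including the transfer of $\mathsf{int}_Q$ between $\mathit{Sink}(S,\to)$ and $S$. One small imprecision in your closing remark: the classical case split on whether $\eta$ ever reaches $\top$ is used to discharge the \emph{bar hypothesis} that you must supply to the principle, not something internal to the principle itself — though the paper makes the same classical identification of maximal $\mathit{Sink}$-executions with maximal executions of $S$, so this does not diverge from its argument.
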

\noindent

Now we thus obtain completeness of the \intentional\ predicates.
\begin{corollary}
  \label{cor:ext-int-eq-conv}
  \label{cor:ext-int-eq-must}
  \label{cor:inductive-char-must}\label{cor:inductive-char-conv}        
  For every $\server \in \States$ we have
  \begin{enumerate}
   \item
  %we have (1)
  $\state \conv $ if and only if $\state \convi$, and
\item
  %and (2)
  for every $\client$ we have that $\Must{\server}{\client}$ if
    and only if $\musti{\server}{\client}$.
 \end{enumerate}
\end{corollary}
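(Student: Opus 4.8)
The plan is to read both equivalences straight off the two results that already bracket them: \rlem{intensional-implies-extensional} supplies the implications from the \intentional\ predicates to the \extensional\ ones, and \rprop{ext-impl-int} supplies the converse. What remains is to connect the abstract $\mathsf{ext}/\mathsf{int}$ statements of \rdef{def-bar} to the concrete predicates $\conv$, $\convi$, $\opMust$ and $\opMusti$, and this is exactly the content of the definitional identities \req{def-extensional-predicates} and \req{def-intentional-predicates}, read for the decidable predicates $Q_1$ and $Q_2$ introduced above, where $Q_1(\state)$ holds iff $\state$ is stable and $Q_2(\state,\client) \iff \good{\client}$.

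First I would dispatch item~(1). Taking the STS to be the state space equipped with the $\tau$-transition relation---countably branching by the standing assumption of this section---I would instantiate \rprop{ext-impl-int} with $Q \coloneqq Q_1$, whose only side condition, decidability of $Q_1$, is just decidability of stability. The proposition then yields $\mathsf{ext}_{Q_1}(\state) \implies \mathsf{int}_{Q_1}(\state)$, which by \req{def-extensional-predicates} and \req{def-intentional-predicates} reads $\state \conv \implies \state \convi$; the reverse direction is the first part of \rlem{intensional-implies-extensional}. Conjoining the two implications gives the biconditional.

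Item~(2) is handled analogously, the only change being that the STS is now the client--server reduction system of \rfig{rules-STS} rooted at $\csys{\server}{\client}$, taken for each fixed client $\client$. This system is countably branching, being generated from the (countably branching) LTS. Instantiating \rprop{ext-impl-int} with $Q \coloneqq Q_2$---decidable because $\goodSym$ is---produces $\mathsf{ext}_{Q_2}(\state,\client) \implies \mathsf{int}_{Q_2}(\state,\client)$, that is $\Must{\server}{\client} \implies \musti{\server}{\client}$ after rewriting along the same identities, while the converse is the second part of \rlem{intensional-implies-extensional}. That $Q_2$ inspects only the client component is immaterial, since \rprop{ext-impl-int} imposes no relation between $Q$ and the shape of the state.

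I expect no genuine obstacle in this corollary, since the substantive constructive content---the \barinduction\ argument together with the $\mathit{Sink}$ completion that forces every state to possess a successor---is entirely internal to \rprop{ext-impl-int}. The only points demanding attention are the routine hypotheses of that proposition for the two instantiations: decidability of $Q_1$ and $Q_2$, which follow from decidability of stability and of $\goodSym$ respectively, and countable branching of the two transition systems, both of which hold under the standing assumptions of this section.
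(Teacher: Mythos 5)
Your proposal is correct and follows essentially the same route as the paper: the paper's own proof simply states that the corollary is a direct consequence of \rprop{ext-impl-int} together with the definitional identities \req{def-extensional-predicates} and \req{def-intentional-predicates}, with the converse implications supplied by \rlem{intensional-implies-extensional}. Your write-up merely makes explicit the instantiations (the $\tau$-reduction STS with $Q_1$, the client--server STS with $Q_2$) and the side conditions (decidability, countable branching) that the paper leaves implicit.
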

%%% TEXT OF THE SUBMITTED APPENDIX
%% Now we easily obtain completeness of the \intentional\ predicates.
%% \begin{corollary}
%% \label{cor:inductive-char-must}\label{cor:inductive-char-conv}
%%   For every $\server \in \States$,
%%   \begin{enumerate}
%%   \item $\state \conv $ implies $\state \convi$,
%%   \item for every $\client \wehavethat \Must{\server}{\client}$ implies $\musti{\server}{\client}$.
%%   \end{enumerate}
%% \end{corollary}
\begin{proof}
  Direct consequence of \rprop{ext-impl-int}, and
  \req{def-extensional-predicates} and
  \req{def-intentional-predicates} above.%
  %% %It is routine to check
  %% the following equivalences, which hold by definition,
  %% \begin{enumerate}
  %%   \item $\state \conv \iff \mathsf{ext}_{Q_1}(\state)$;
  %%   \item $\state \convi \iff \mathsf{int}_{Q_1}(\state)$;
  %%   \item $\state \Must \client \iff \mathsf{ext}_{Q_2}(\state, \client)$;
  %%   \item $\state \musti \client \iff \mathsf{int}_{Q_2}(\state, \client)$;
  %% \end{enumerate}
  %% %% where:
  %% %% \begin{align*}
  %% %%   Q_1(\state) \;&\iff\; \state \Nst{\phantom{\tau}}\\
  %% %%   Q_2(\state, \client) \;&\iff\; \good{\client}
  %% %% \end{align*}
\end{proof}

As we have outlined why \rcor{inductive-char-must} is true,
from now on we use~$\convi$ and~$\opMusti$ instead of~$\conv$ and~$\opMust$.
In \rapp{properties-intensional-predicates} we prove the properties of
these predicates, that we use in the rest of the paper.

\subsubsection{The LTS of sets}

Recall that soundness of~$\asleq$ means that~$\asleq \;\subseteq\; \testleqS$. 
The naïve reasoning does not work. Fix two servers~$\serverA$ and~$\serverB$
such that $\serverA \asleq \serverB$. We need to prove that for every
client~$\client$, if $\musti{\server}{\client}$ then
$\musti{\serverA}{\client}$.
Rule induction on the predicate $\musti{\server}{\client}$ fails, as
demonstrated by the following example.

\begin{example}
  \label{ex:must-set-is-helpful}
%  Recall the state $\server$ of \rexa{set-transitions}
  Consider the two servers $\server = \tau.(\co{\aa} \Par \co{b}) \extc
  \tau.(\co{\aa} \Par \co{c})$ and $\serverB = \co{\aa} \Par (\tau.\co{b} \extc
  \tau.\co{c})$ of \req{mailbox-hoisting}.
  Fix a client $\client$ such that $\musti{\serverA}{\client}$.
  Rule induction %on this fact
  yields the following inductive hypothesis:
  %on $\musti{\serverA}{\client}   tells us the
  \begin{center}
    $\forall \serverA', \serverB' \suchthat\;
    \csys{\serverA}{\client} \st{\tau} \csys{\serverA'}{\client'} \;\land\;$
    $\serverA' \asleq \serverB' \;\Rightarrow\; \musti{\serverB'}{\client'}$.
  \end{center}
  %% Showing that $\serverA \asleq \serverB$ implies $\serverA \testleqS \serverB$,
  %% \ie the soundness direction,
  %would among us to pick a client $\client$ such that $\musti{\serverA}{\client}$
  %and prove $\musti{\serverB}{\client}$.
  In the proof of
  $\musti{\serverB}{\client}$ we have to consider the case where there
  is a communication between $\serverB$ and
  $\client$ such that, for instance, $\serverB \st{\co{\aa}}
  \tau.\co{b} \extc \tau.\co{c}$ and $\client \st{\aa}
  \client'$.  In that case, we need to show that $\musti{\tau.\co{b}
    \extc \tau.\co{c}\;}{\client'}$. Ideally, we would like to use the inductive
  hypothesis. This requires us to exhibit a $\server'$ such that $
  \csys{\server}{\client} \st{\tau} \csys{\server'}{\client'}$ and $
  \server' \bhvleqtwo \tau.\co{b} \extc \tau.\co{c}$.
%\ilacom{What is $\accleqset$? I see now it is defined below.}
  However, note that there is no way to derive
  $\csys{\server}{\client} \st{\tau} \csys{\server'}{\client'}
  $, because $\server
  \Nst{\co{\aa}}$.  The inductive hypothesis thus cannot be applied,
  and the naïve proof does not go through.\hfill$\qed$
  %%% RATIONALE
  %% Note that, as~$\opMusti$ is defined on strong transition relations,
  %% the inductive hypothesis works \textit{one transition at a time}.
\end{example}
\noindent
This example suggests that defining an auxiliary predicate~$\opMustset$ in some sense
equivalent to~$\opMusti$, but that uses explicitly {\em weak} outputs
of servers, should be enough to prove that~$\asleq$ is sound with respect to~$\testleqS$.
Unfortunately, though, there is an additional nuisance to tackle: server
nondeterminism.
\begin{example}
  %  Following the intuition outlined thus far,
  %% Observe, though, that $ \server \wt{ \co{\aa}}$. This intuition
  %% motivates the use of the weak transition $X \wt{\co{\mu}} X'$ in \rfig{rules-mustset}.
  %% need for relying on weak transitions instead of strong transitions.
  %% To show why the use of sets is necessary,
  Assume that we defined the predicate~$\opMusti$
  using weak transitions on the server side.
  % for the case of communications.
  Recall the argument %unfolded
  put forward in the previous example.
  The inductive hypothesis now becomes the following:
  \begin{center}
    For every $\serverA', \serverB', \mu$ such that
    $\serverA \wt{\mu} \serverA'$ and $\client \st{\mu} \client'$,
    $\serverA' \asleq \serverB'$ implies $\musti{\serverB'}{\client'}$.
  \end{center}
  To use the inductive hypothesis we have to choose a $\server'$ such
  that $\server \wt{\co{\aa}} \server'$ and $\server' \asleq
  \tau.\co{b} \extc \tau.\co{c}$. This is still not enough for the
  entire proof to go through, because (modulo further $\tau$-moves)
  the particular $\server'$ we pick has to be related also to either
  $\co{b}$ or $\co{c}$. It is not possible to find such a
    $\server'$, because
    the two possible candidates
  %choose such a $\server'$, because the particular $\server'$ we can
  %choose
  are either $\co{b}$
  or $\co{c}$; neither of which can satisfy $\server' \asleq
  \tau.\co{b} \extc \tau.\co{c}$, as the right-hand side has not
  committed to a branch yet.

  %% the server $\tau.\co{b} \extc \tau.\co{c}$ has
  %% not commited yet to any branch.%  did not already decide which
  %% %  branch it would take.
  %% \leo{I think I understand by it's not super clear: is the following correct?
  %%   ... the particular $\server'$ we choose will either correspond to $\co{a} \Par \co{b}$
  %%   or to $\co{a} \Par \co{c}$; neither of which can satisfy $\server' \asleq
  %% \tau.\co{b} \extc \tau.\co{c}$, as the right-hand side has not commited to a
  %% branch yet.}

  If instead of a single state $\server$ in the novel definition of
  $\opMusti$ we used a set of %servers
  states and a suitable
  transition relation, the choice of either $\co{b}$ or $\co{c}$ would be
  suitably delayed. It suffices for instance to have the following states and transitions:
  %$X = \set{\co{b}, \co{c}}$ we have that
  $\set{\serverA} \wt{\co{\aa}} \set{\co{b}, \co{c}}.$\hfill$\qed$
  %and
%  $X \asleqset \tau.\co{b} \extc \tau.\co{c}$.
%  , for instance $X' =
%  \set{\co{b}, \co{c}}$
  %% and we defined an LTS such that $X \wt{
  %%   \co{\mu}} X'$ then in the proof the choice of either $\co{b}$ or $
  %% \co{c}$ would be suitably delayed.
  %% %In contrary, the use of sets allows to
  %% delay this choice. Indeed, by taking
  %% $X = \set{\co{b}, \co{c}}$ we have that $\set{\serverA} \wt{\co{\aa}} X$ and
  %% $X \asleqset \tau.\co{b} \extc \tau.\co{c}$.\hfill$\qed$
\end{example}

Now that we have motivated the main intuitions behind the definition of our
novel auxiliary predicate~$\opMustset$, we proceed with the formal definitions.

\begin{definition}[LTS of sets]
Let~$\pparts{ Z }$ be the set of  {\em non-empty} parts of~$Z$.
For any LTS~$\lts{\States}{ L }{~\st{}~}$,
% we define for every
$ X \in \pparts{ \States } $ and $\alpha \in L$, we define the sets
$$
D{(\alpha, X)} =  \setof{ \stateA }{ \exists \state \in X \suchthat \state \st{\alpha} \stateA },\quad
\WD{(\alpha, X)}  =  \setof{ \stateA }{ \exists \state \in X \suchthat \state \wt{\alpha} \stateA }.
$$
%% \[
%% \begin{array}{lll}
%% D{(\alpha, X)} & = & \setof{ \stateA }{ \exists \state \in X \suchthat \state \st{\alpha} \stateA },\\
%% \WD{(\alpha, X)} & = & \setof{ \stateA }{ \exists \state \in X \suchthat \state \wt{\alpha} \stateA }.
%% \end{array}
%% \]
We construct the LTS $\lts{\pparts{ \States }}{ \Acttau }{ \st{} }$
by letting $ X \st{ \alpha } D{(\alpha, X)}$ whenever $D{(\alpha, X)} \neq \emptyset$.
Similarly, we have $X \wt{ \alpha } \WD{(\alpha, X)}$ whenever
$\WD{(\alpha, X)} \neq \emptyset$.  \hfill $\blacksquare$
\end{definition}
Intuitively, this definition lifts the standard notion of state derivative to sets of states.
This construction is standard \cite{DBLP:conf/avmfss/CleavelandH89,DBLP:conf/aplas/BonchiCPS13,DBLP:journals/lmcs/BonchiSV22}
  and goes back to the determinisation of nondeterministic automata.

\begin{figure}[t]
  {
  \footnotesize
 \hrulefill
  $$
  \begin{array}{ll}%*{c@{\hskip 2em}l}
    \msetnow
    &
    \msetstep
    \\[1pt]
    \begin{prooftree}
      \good{\client}
      \justifies
      \mustset{ X }{r}
    \end{prooftree}
    \hspace{4em}
    &
    \begin{prooftree}
      \begin{array}{lr}
        \lnot \good{\client} & \forall X' \wehavethat X \st{ \tau } X' %\text{ and }  X'
        \implies \mustset{X'}{\client}\\
      \forall\ \serverA \in X \wehavethat \csys{ \serverA }{ \client } \st{\tau} & \forall \client' \wehavethat \client \st{ \tau } \client' \implies \mustset{X}{\client'}
      \\
      \multicolumn{2}{r}{        \forall X', \mu \in \Actfin  %% p \stable
        \wehavethat X \wt{\co{\mu}} X' %\text{ and }  X'
        \text{ and }  \client \st{\mu} \client' \imply %% \der{p}{\co{\aa}}
        \mustset{ X' }{ \client'}}
      \end{array}
      \justifies
      \mustset{ X }{ \client }
    \end{prooftree}
  \end{array}
  $$
  }
  \vspace{-10pt}
  \caption{Rules to define inductively the predicate $\opMustset$.}
  \label{fig:rules-mustset-main}
\hrulefill
\end{figure}

Let $\opMustset$ be defined via the rules in \rfig{rules-mustset-main}.
This predicate lets us reason on~$\opMusti$ via sets of servers,
in the following sense:
\begin{lemma}
  \label{lem:musti-if-mustset-helper}
  For every LTSs $\genlts_A, \genlts_B$ and every
  set of servers $X \in \pparts{\StatesA}$, we have that
  $\mustset{X}{\client}$ if and only if for every $\serverA \in X
  \wehavethat \musti{\serverA}{\client}$.
\end{lemma}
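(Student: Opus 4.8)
The plan is to prove the two implications separately; the \emph{only if} direction is a routine rule induction, while the \emph{if} direction is the crux and will reuse the \barinduction machinery.

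For the \emph{only if} direction I would argue by rule induction on the derivation of $\mustset{X}{\client}$. The base case \msetnow gives $\good{\client}$, so every $p\in X$ satisfies $\musti{p}{\client}$ by the \mnow rule of $\opMusti$. In the inductive case \msetstep we have $\lnot\good{\client}$, hence for a fixed $p\in X$ it suffices to build $\musti{p}{\client}$ with the \mstep rule: the premise $\csys{p}{\client}\st{}$ is exactly the reducibility condition of \msetstep, and every reduct $\csys{p'}{\client'}$ of $\csys{p}{\client}$ has one of the three shapes of \rfig{rules-STS}. A \stauserver-move $p\st{\tau}p'$ places $p'$ in $D{(\tau,X)}$, so the induction hypothesis attached to the premise $X\st{\tau}D{(\tau,X)}$ yields $\musti{p'}{\client}$; a \stauclient-move $\client\st{\tau}\client'$ is discharged by the premise on $\client$-reductions; and a \scom-move, where $p\st{\mu}p'$ synchronises with $\client\st{\co{\mu}}\client'$, is discharged by the weak-transition premise, since $p\wt{\mu}p'$ puts $p'$ into $\WD{(\mu,X)}$ and the complementation bookkeeping matches. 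This closes the forward direction.

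The \emph{if} direction is where the work lies. Here I would first replace each hypothesis $\musti{p}{\client}$ by its extensional reading $\Must{p}{\client}$ via \rcor{inductive-char-must}, so that all maximal computations of every $\csys{p}{\client}$ with $p\in X$ are successful. I would then view $\opMustset$ as an inductive (least-fixpoint) predicate over an STS whose states are pairs $(X,\client)$ and whose reductions are exactly the three moves appearing in \msetstep: $(X,\client)\to(D{(\tau,X)},\client)$, $(X,\client)\to(X,\client')$ for $\client\st{\tau}\client'$, and $(X,\client)\to(\WD{(\co{\mu},X)},\client')$ for $\client\st{\mu}\client'$. Taking $Q(X,\client)\iff\good{\client}$, the predicate $\mustset{X}{\client}$ coincides with $\mathsf{int}_Q(X,\client)$, modulo the side condition that each $\csys{p}{\client}$ be reducible, which holds automatically under our hypothesis once $\lnot\good{\client}$. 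It then remains to establish $\mathsf{ext}_Q(X,\client)$ and invoke \rprop{ext-impl-int}: any maximal set-computation can be unwound, using the definitions of $D$ and $\WD$ and a choice of representatives, into a maximal computation of some $\csys{p}{\client}$ with $p\in X$ that visits a $\goodSym$ client state exactly when the set-computation does; since $\Must{p}{\client}$, such a state exists. Descending the hypotheses along these reductions, and using \req{good-invariance} to track $\goodSym$ across the output and synchronisation steps, supplies the invariant needed to re-apply the hypothesis at each successor.

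The main obstacle is that the lifted states $D{(\tau,X)}$ and $\WD{(\co{\mu},X)}$ can be infinite — indeed, forwarding makes the LTS infinitely branching — so $X$ admits no finite bound on the length of its computations, and the naive induction on the height of the $\opMusti$ derivations, or on a step count obtained from \koenigslemma, breaks down exactly as the paper warns. This is precisely what \rprop{ext-impl-int}, and hence \barinduction, is designed to circumvent: it lets us pass from the extensional success of every maximal set-computation to the inductive derivation of $\mustset{X}{\client}$ without ever choosing such a bound. A secondary subtlety is that the client may enter or leave $\goodSym$ states along a \scom-step, so the descent must split on whether the reached client is good, applying \mnow directly in the good case.
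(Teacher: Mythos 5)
Your \emph{only if} direction is fine: it is the expected rule induction on $\mustset{X}{\client}$, and the bookkeeping with $D(\tau,X)$ and $\WD(\co{\mu},X)$ is exactly right.

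The \emph{if} direction has a genuine gap, sitting precisely where you write ``unwound, using \ldots a choice of representatives''. A pointwise choice of representatives does not turn a maximal set-computation into a maximal computation of a single pair $\csys{p}{\client}$: to follow a set-step $(X_i,\client_i)\st{}(D(\tau,X_i),\client_i)$ your current representative $q\in X_i$ must itself have a $\tau$-reduct, but the set-step only guarantees that \emph{some} element of $X_i$ has one; if $q$ is stable you cannot follow the set-computation, and the individual computation built so far is in general \emph{not} maximal (for instance $q$ may still be able to synchronise with $\client_i$), so $\Must{p}{\client}$ yields neither a good state nor a contradiction. Repairing this needs \koenigslemma, applied to the tree of all coherent representative paths, and that tree is only usable if it is finitely branching: $X$ finite, the LTS image-finite per action, and the weak derivative sets finite (which follows from convergence, \rlem{must-terminate}). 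Your proof never invokes any of these hypotheses, and that is a symptom rather than an economy: without them the lemma is false. Take $\lnot\good{\client}$, let $\client\st{\tau}\client'$ be the only client transition with $\good{\client'}$, and let $X=\setof{p_n}{n\in\N}$ where $p_{n+1}\st{\tau}p_n$ and $p_0$ is stable. Every $\musti{p_n}{\client}$ holds, yet $D(\tau,X)=X$, so rule \msetstep would need $\mustset{X}{\client}$ as one of its own premises, and no well-founded derivation exists. So the finiteness assumptions made explicit in the mechanisation (\texttt{gset}, \texttt{FiniteLts}) are not cosmetic, and any correct proof must lean on them.

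Once this is recognised, your route reveals itself as a detour. \Barinduction is what the paper needs to pass from the extensional $\opMust$ to the intensional $\opMusti$ (\rcor{inductive-char-must}); here the hypotheses already hand you intensional derivations, one finite tree per $p\in X$. Your first move --- converting each $\musti{p}{\client}$ back to $\Must{p}{\client}$ --- discards that structure, which you then struggle to rebuild via \rprop{ext-impl-int}. The direct argument works: with $X$ finite and the derivative sets finite, proceed by strong induction on the maximum height of the given derivations of $\musti{p}{\client}$ for $p\in X$. If $\good{\client}$, apply \msetnow. Otherwise every given derivation ends with \mstep, whose first premise is the per-pair reducibility condition of \msetstep; and each element of $D(\tau,X)$, each pair $(X,\client')$ after a client $\tau$-step, and each element of $\WD(\co{\mu},X)$ after a communication inherits a strict subderivation from some $p\in X$ (for the last case by peeling off \mstep premises along the weak transition and splitting on whether the client has become good --- the one subtlety you did identify). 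The maximum height strictly decreases, so the induction hypothesis closes all premises of \msetstep. This is consistent with the mechanised statement \texttt{must\_set\_iff\_must\_for\_all}, whose context contains no bar-induction axiom. Finally, your stated reason for reaching for \barinduction --- ``forwarding makes the LTS infinitely branching'' --- conflates infinitely many enabled input \emph{actions} with infinite branching \emph{per action}: $D(\alpha,X)$ and $\WD(\co{\mu},X)$ are per-action derivative sets, and under the assumptions above they are finite.
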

The LTS of sets has two important applications in this paper: 
first, it is used to define the $\mustset{}{}$ relation, on which we rely to prove the
soundness of the characterisation (see \rapp{proof-soundness}).
Additionally, it is used in the definition of the coinductive characterisation,
which is the topic of the next section.

\subsection{The action-based coinductive characterisation}
\label{sec:coind-char}

We conclude this section by introducing a characterisation of the
\mustpreorder that is more practical than $\asleq$, as it allows one
to use the usual coinductive techniques.
In addition, in the asynchronous case where processes are enhanced with
forwarding, being able to use the coinductive proof method allows us to deal
easily with the additional transitions due to forwarding.
As a demonstration, we use this preorder to prove the code hoisting
refinement shown in~\eqref{eq:mailbox-hoisting}.

First, we recall the definition of this alternative preorder,
which, like the other ones, is the same as in the synchronous case
\cite{DBLP:journals/jacm/AcetoH92,DBLP:conf/concur/LaneveP07,DBLP:journals/mscs/BernardiH16}.

\begin{definition}[Coinductive preorder]
\label{def:coinductive-char-main}
For all image-finite LTSs $\genlts_\StatesA$, $\genlts_\StatesB$ and all
$X \in \pparts{\StatesA}, \serverB \in \StatesB$,
  we let the \emph{coinductive preorder} $\coindleq$ be defined as
  the greatest relation such that whenever $X \coindleq \serverB$, the following
  requirements hold: \begin{enumerate}
\item $X \downarrow \text{ implies } \serverB \downarrow$,
\item\label{pt:coind-tau-serverB} For each $\serverB'$ such that $\serverB
\st{\tau} \serverB'$, we have that $X \coindleq \serverB'$,

\item\label{pt:coind-acceptance-sets}
  $X \downarrow$ and $\serverB \stable$ imply that there exist
  $\state \in X$ and $\stateA \in \StatesA$ such that
  $\state \wt{} \stateA \stable$ and $R(\stateA) \subseteq R(q)$,

\item\label{pt:coind-continuations-mu} For any $\mu \in \Act$,
  if $X \cnvalong{\mu}$,
  then for every  $X'$ and $\serverB'$
  such that $X \wt{\mu} X'$ and $\serverB \st{\mu} \serverB'$,
%  $X' \in  \pparts{\StatesA}$ and $\stateB \in \StatesB$
 % such that 
  we have that
  $X' \coindleq \serverB'$.\hfill$\blacksquare$
\end{enumerate}
\end{definition}

\noindent
This preorder characterises~$\testleqS$ when the set~$X$ of
servers is a singleton.
\begin{theorem}
\label{thm:coinductive-char-equiv-main}
For every image-finite LTS $\genlts_\StatesA$, $\genlts_\StatesB \in \obaFB$, every
$\state \in \StatesA$ and $\serverB \in \StatesB$, we have that
$\state \testleqS \serverB$ if and only if $\set{\liftFW{\state}} \coindleq
  FW(\serverB)$.
\end{theorem}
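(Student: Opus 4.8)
The plan is to derive the statement from the linchpin result \rthm{testleqS-equals-accleq} together with a purely behavioural equivalence between $\asleq$ and $\coindleq$ on the forwarding LTSs. Since $\genlts_\StatesA,\genlts_\StatesB\in\obaFB$ are image-finite, \rlem{liftFW-works} gives $\liftFW{\genlts_\StatesA},\liftFW{\genlts_\StatesB}\in\obaFW$, and these lifted LTSs stay image-finite \emph{per action}: for a fixed label the rules \stminplift and \stmoutlift contribute at most one new successor, so only the branching across distinct labels is infinite. Hence \rdef{coinductive-char-main} applies to them. By \rthm{testleqS-equals-accleq} we have $\state\testleqS\serverB$ iff $\liftFW{\state}\asleq\liftFW{\serverB}$, so it suffices to establish the auxiliary equivalence $\liftFW{\state}\asleq\liftFW{\serverB}$ iff $\set{\liftFW{\state}}\coindleq\liftFW{\serverB}$, which I would prove as an instance of the general fact that, for image-finite LTSs in $\obaFW$, $p\asleq q$ iff $\set{p}\coindleq q$.

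Throughout I would use \rlem{conditions-on-accsets-logically-equivalent} to replace the ready-set condition $\bhvleqtwo$ by its output-only version, so that acceptance sets are the finite output sets $\accfwp{-}{-}{\st{}}$; in $\obaFW$ every ready set equals $O(\cdot)\cup\Names$, hence $R(p')\subseteq R(q)$ iff $O(p')\subseteq O(q)$, matching clause~(3) of \rdef{coinductive-char-main}. For sets I write $\accfwp{X}{s}{\st{}}=\bigcup_{p\in X}\accfwp{p}{s}{\st{}}$. For completeness ($\asleq\Rightarrow\coindleq$) I would exhibit, on the LTS of sets, the relation
\[ \mathcal{R}=\setof{(X,q)}{\forall s\in\Actfin.\; X\cnvalong s \implies q\cnvalong s \wedge \accfwp{X}{s}{\st{}}\ll\accfwp{q}{s}{\st{}}} \]
and check that it satisfies the four clauses, so that $\mathcal{R}\subseteq\coindleq$ by coinduction. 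Clauses~(1) and~(2) are immediate; clause~(3) follows by taking $s=\varepsilon$ and using $O(q)\in\accfwp{q}{\varepsilon}{\st{}}$ for stable $q$; clause~(4) follows by taking $s=\mu\cdot t$ and using that the set-derivative $\WD(\mu,X)$ is the unique $\mu$-successor of $X$, whence $\accfwp{X}{\mu t}{\st{}}=\accfwp{\WD(\mu,X)}{t}{\st{}}$ and $\accfwp{q'}{t}{\st{}}\subseteq\accfwp{q}{\mu t}{\st{}}$ whenever $q\st{\mu}q'$.

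For soundness ($\coindleq\Rightarrow\asleq$) I would prove, by induction on the length of $s$, that $X\coindleq q$ and $X\cnvalong s$ imply $q\cnvalong s$ and $\accfwp{X}{s}{\st{}}\ll\accfwp{q}{s}{\st{}}$, and then specialise $X=\set{p}$. In the base case $s=\varepsilon$, for each stable $q'$ with $q\wt{}q'\stable$ I would walk the $\tau$-chain $q\st{\tau}\cdots\st{\tau}q'$, which is finite because $q\conv$, applying clause~(2) at each step to keep $X\coindleq(\cdot)$, and then apply clause~(3) at $q'$ to obtain a dominating output set in $X$. In the inductive step $s=\mu\cdot t$, for each $q''$ with $q\wt{\mu}q''$ I would decompose the weak transition into $\tau$-steps and one $\mu$-step, thread clauses~(2) and~(4) along it to get $\WD(\mu,X)\coindleq q''$, and invoke the induction hypothesis on $t$; the acceptance-set inclusion then follows because $\accfwp{q}{\mu t}{\st{}}$ is the union of the sets $\accfwp{q''}{t}{\st{}}$ over all $q''$ with $q\wt{\mu}q''$, together with the fact that $A\ll B_i$ for every $i$ yields $A\ll\bigcup_i B_i$.

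I expect the main obstacle to be the soundness direction, and within it the synchronisation between the determinised set-derivatives $\WD(\mu,X)$ and the per-element weak acceptance sets across the trace induction: bridging the single-action clauses~(2) and~(4) to the weak transitions $\wt{\mu}$ and $\wt{}$ occurring in $\accfwp{-}{-}{\st{}}$ forces one to iterate along $\tau$-chains, and the termination of this iteration rests entirely on the convergence hypothesis (via clause~(1) and $\conv$), not on image-finiteness. Image-finiteness per action is used only to guarantee that $\coindleq$ of \rdef{coinductive-char-main} is applicable to $\liftFW{\genlts}$ and that the coinductive proof method is available; the genuinely delicate accounting is keeping convergence and the set-LTS transitions aligned at every step.
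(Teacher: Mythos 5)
Your overall decomposition is the same as the paper's: \rthm{coinductive-char-equiv-main} is obtained from \rthm{testleqS-equals-accleq} together with a bridging equivalence, proved on the LTS of sets, between the set-lifted acceptance-set preorder and $\coindleq$ (the paper's prose sketch phrases the bridge via $\testleqSset$, but its mechanised proof is exactly your bridge: theorem \texttt{eqx}, combined with the singleton correspondence of \rlem{alt-set-singleton-iff}). Your completeness half --- exhibiting $\mathcal{R}$ and checking closure under the four clauses --- goes through as you describe.

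The genuine gap is in your soundness half, and it is not where you expected the difficulty. Threading clause~(4) presupposes that the transition $X \wt{\mu} \WD(\mu,X)$ exists; but in \rdef{coinductive-char-main} the LTS of sets has only \emph{non-empty} sets as states, so $X \wt{\mu}$ exists only when $\WD(\mu,X) \neq \emptyset$. If $\mu$ is an output that no member of $X$ can weakly perform, clause~(4) is vacuous, while your inductive goal requires $\accfwp{X}{\mu.t}{\st{}} \ll \accfwp{q}{\mu.t}{\st{}}$ with $\accfwp{X}{\mu.t}{\st{}} = \emptyset$, i.e.\ that $q$ cannot weakly perform $\mu$ and then reach a stable state; clauses~(1)--(4) give no such information. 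Worse, under this literal reading your inductive invariant is false. Take $X = \set{\liftFW{\Nil}}$ and $q = \liftFW{\co{a}}$: the relation consisting of all pairs $(\set{\Nil \triangleright M},\ \co{a} \triangleright M)$ and $(\set{\Nil \triangleright N},\ \Nil \triangleright (N \uplus \mset{\co{a}}))$, for $M, N \in \MO$, satisfies all four clauses. Indeed every state involved is stable, so clauses~(1) and~(2) are trivial; clause~(3) holds because in an \obaFW{} LTS every ready set contains the whole of $\Names$, and the relevant output sets on the left are included in those on the right; and clause~(4) either lands back inside the relation or is vacuous --- in particular it is vacuous at $\mu = \co{a}$ for the pair $(X, q)$, since $\WD(\co{a}, X) = \emptyset$. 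Hence $\set{\liftFW{\Nil}} \coindleq \liftFW{\co{a}}$, even though $\accfwp{\set{\liftFW{\Nil}}}{\co{a}}{\st{}} = \emptyset \not\ll \set{\emptyset} = \accfwp{\liftFW{\co{a}}}{\co{a}}{\st{}}$ and, correspondingly, $\Nil \NtestleqS \co{a}$ (the client $a.\Omega \extc \tau.\Unit$ is satisfied by $\Nil$ but not by $\co{a}$). So the soundness implication cannot be derived from the clauses as you state them.

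What closes this case in the paper's actual proof is that the mechanised definition \texttt{copre} admits the \emph{empty} set of servers: its clause \texttt{c\_step} quantifies over the possibly empty set of weak $\mu$-derivatives, and $\emptyset \coindleq q'$ is unsatisfiable for any convergent $q'$ (clause \texttt{c\_cnv} forces $q'$ to converge, hence to reach some stable state, and clause \texttt{c\_now} then demands a witness in $\emptyset$). This is precisely the obligation forbidding $q$ to emit outputs that $X$ cannot match. To repair your argument, either adopt that reading of \rdef{coinductive-char-main} (allow $X' = \emptyset$ in clause~(4)), or strengthen clause~(4) with the requirement that $\WD(\mu,X) \neq \emptyset$ whenever $X \cnvalong \mu$ and $q \st{\mu}$; with either fix, the previously missing case yields $\accfwp{q}{\mu.t}{\st{}} = \emptyset$ and your trace induction closes. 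A last, minor point: the $\tau$-chains inside a weak transition are finite by definition of $\wt{}$, so convergence is not what makes that iteration terminate; the delicate accounting is the empty-derivative case above.
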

The idea of the proof is to establish that the coinductive preorder
characterises a version of the \mustpreorder that also has a set of servers on
its LHS, and is defined as follows:
\[
  X \testleqSset q \iff \forall t.\; (\forall p \in X, \musti{p}{t}) \implies
  \musti{q}{t}.
\]

Observe that \rdef{coinductive-char-main} is based on single actions, instead of
traces like~$\asleq$, thus it gives us a practical proof method.
To make this point, we now prove the code hoisting refinement \eqref{eq:mailbox-hoisting}. According
to \rthm{coinductive-char-equiv-main}, it suffices to prove:
\begin{equation}
  \label{eq:use-case-coind-preorder}
  \set{ %
    \tau.(\co{\aa} \Par \co{b}) \extc \tau.(\co{\aa} \Par \co{c}) } \;\coindleq\; \co{\aa} \Par (\tau.\co{b} \extc
   \tau.\co{c})
\end{equation}
As for proofs by induction, when using coinduction it is helpful to prove a
more general statement, which yields a useful \emph{coinductive hypothesis}.
This vocabulary corresponds to the proof theoretic point of view of coinduction,
which matches how Coq implements coinductive proofs using \mintinline{coq}{cofix}.
In practice, the prover can use the coinductive hypothesis \emph{after} the
predicate defined coinductively has been unfolded at least once.
In the set-theoretic setting used in \rdef{coinductive-char-main}, this
corresponds to choosing a relation $R$ that is closed under the operations
given in the definition. This is borne out by \rlem{technical-part-proof-by-coinduction} in \rapp{technicalities-coinductive-characterisation}.

\newcommand{\msleqtwo}{\preccurlyeq_{\textsc{m}}}

\section{Preorders based on must-sets and failure refinement}

We now establish the second \emph{standard} characterisation of the \mustpreorder, defined
using \MustSets, again thanks to  \rthm{testleqS-equals-asleq}.
As an application, we relate the \emph{failure refinement} used by the CSP
community to the $\opMust$-preorder.

%\subsection{The $\msleq$ preorder}

We begin by defining formally the $\msleq$ preorder, and we relate it to the \mustpreorder.
\renewcommand{\after}[3]{ (#1 \, \mathsf{after} \,  #2, #3) }
For every~$X \subseteq_{\mathit{fin}} \Act$, that is for every finite
set of visible actions, with a slight abuse of notation we write
$\server \mathrel{\opMust} X$ whenever
$\server \wt{ \varepsilon } \server'$ implies that
$\server' \wt{ \mu }$ for some $\mu \in X$, and we say that~$X$ is a
\MustSet of~$\server$.  Let
$\after{ \serverA }{ s }{ \st{} } = \setof{ \serverA' }{ \serverA
  \wt{s} \serverA' }$.  For every $\genlts_\StatesA, \genlts_\StatesB$
and $\serverA \in \StatesA, \serverB \in \StatesB$, let
$\serverA \msleqtwo \serverB$ whenever $\forall \trace \in \Actfin$ we
have that $\serverA \cnvalong{ \trace }$ implies that
$(\forall X \subseteq_{\mathit{fin}} \Act$ if
$\after{\serverA}{\trace}{ \st{}_\StatesA } \mathrel{\opMust} X$ then
$\after{\serverB}{\trace}{ \st{}_\StatesB } \mathrel{\opMust} X).$

\begin{definition}
  \label{def:denicola-char}
  For all $\genlts_A, \genlts_B \in \obaFB$ and servers $\serverA
  \in A$ and $\serverB \in B$, we let $\serverA \msleq \serverB$ whenever
  $\serverA \bhvleqone \serverB \wedge
  \serverA \msleqtwo \serverB$.\hfill$\blacksquare$
\end{definition}

\begin{lemma}
  \label{lem:acceptance-sets-and-must-sets-have-same-expressivity}
  Let $\genlts_A, \genlts_B \in \obaFB$.
  For all $\serverA \in \StatesA$ and $\serverB \in \StatesB $ such that $\liftFW{\serverA} \bhvleqone \liftFW{\serverB}$,
  we have that
  $\liftFW{\serverA} \msleqtwo \liftFW{\serverB}$ if and only if
  $\liftFW{\serverA} \asleqAfw \liftFW{\serverB}$.
\end{lemma}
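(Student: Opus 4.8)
The plan is to fix a single converging trace and reduce the whole statement to the classical combinatorial duality between \emph{hitting sets} (transversals) and the Smyth-style inclusion $\ll$. Throughout, fix $\trace \in \Actfin$ with $\liftFW{\serverA} \cnvalong \trace$; by the hypothesis $\liftFW{\serverA} \bhvleqone \liftFW{\serverB}$ we then also have $\liftFW{\serverB} \cnvalong \trace$, so both acceptance sets $\accfwp{\liftFW{\serverA}}{\trace}{\st{}_\StatesA}$ and $\accfwp{\liftFW{\serverB}}{\trace}{\st{}_\StatesB}$ behave well. The first step I would carry out is a \emph{reduction to output-only must-sets}: since $\liftFW{\genlts_A}, \liftFW{\genlts_B} \in \obaFW$, every state weakly enables every input, so $R(\serverA') = \Names \cup O(\serverA')$ for each derivative, and any finite $X$ with $X \cap \Names \neq \emptyset$ is automatically a \MustSet of every nonempty after-set on either side. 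Hence only $X \subseteq \co{\Names}$ are informative, and for such $X$ I would prove the bridging fact that $\after{\liftFW{\serverA}}{\trace}{\st{}_\StatesA} \mathrel{\opMust} X$ holds iff $X \cap O(\serverA') \neq \emptyset$ for every stable $\serverA'$ with $\liftFW{\serverA} \wt{\trace} \serverA' \stable$, i.e. iff $X$ is a transversal of $\accfwp{\liftFW{\serverA}}{\trace}{\st{}_\StatesA}$. The nontrivial direction uses $\liftFW{\serverA} \cnvalong \trace$ to push any derivative along $\tau$-steps to a stable one, so that hitting the output-sets of the stable derivatives already forces every derivative to weakly enable some action of $X$.

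Second, the easy implication $\asleqAfw \Rightarrow \msleqtwo$. Assuming $\accfwp{\liftFW{\serverA}}{\trace}{\st{}_\StatesA} \ll \accfwp{\liftFW{\serverB}}{\trace}{\st{}_\StatesB}$, let $X$ be a \MustSet of the $A$-side; by the reduction I may take $X \subseteq \co{\Names}$. For any $R \in \accfwp{\liftFW{\serverB}}{\trace}{\st{}_\StatesB}$ the ordering $\ll$ yields $\widehat{R} \in \accfwp{\liftFW{\serverA}}{\trace}{\st{}_\StatesA}$ with $\widehat{R} \subseteq R$; since $X$ is a transversal of the $A$-side it meets $\widehat{R}$, hence meets $R$. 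Thus $X$ is a transversal of the $B$-side, which by the bridging fact (applied on the $B$-side, where it needs exactly $\liftFW{\serverB} \cnvalong \trace$, the only use of $\bhvleqone$ here) makes $X$ a \MustSet of $\after{\liftFW{\serverB}}{\trace}{\st{}_\StatesB}$. This direction needs no finiteness.

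Third, and this is where I expect the real work, the implication $\msleqtwo \Rightarrow \asleqAfw$, which I would prove by contraposition. Suppose $\asleqAfw$ fails: there is a trace $\trace$ with $\liftFW{\serverA} \cnvalong \trace$ and some $R \in \accfwp{\liftFW{\serverB}}{\trace}{\st{}_\StatesB}$ such that no $\widehat{R} \in \accfwp{\liftFW{\serverA}}{\trace}{\st{}_\StatesA}$ satisfies $\widehat{R} \subseteq R$; write $R = O(\serverB')$ for a stable $\serverB'$. For every $A \in \accfwp{\liftFW{\serverA}}{\trace}{\st{}_\StatesA}$ we then have $A \setminus R \neq \emptyset$, so I pick $x_A \in A \setminus R$ and set $X = \setof{ x_A }{ A \in \accfwp{\liftFW{\serverA}}{\trace}{\st{}_\StatesA} }$. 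By construction $X \subseteq \co{\Names}$ is a transversal of the $A$-side, hence a \MustSet there, while $X \cap R = \emptyset$ means $\serverB'$ weakly enables no action of $X$, so $X$ is \emph{not} a \MustSet of the $B$-side; this contradicts $\msleqtwo$.

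The main obstacle is precisely that the separating $X$ must be \emph{finite} to qualify as a \MustSet. This reduces to showing that $\accfwp{\liftFW{\serverA}}{\trace}{\st{}_\StatesA}$ is a finite set of finite sets. I would justify it by observing that, although $\liftFW{\genlts_A}$ is infinitely branching (rule \stminplift offers every input), along the \emph{fixed} trace $\trace$ each visible step is determined, so the derivatives reachable by $\wt{\trace}$ form a finitely branching tree; since $\liftFW{\serverA} \cnvalong \trace$ this tree is well-founded, and \koenigslemma\ (constructively, the fan theorem / \barinduction already exploited in \rsec{barinduction-main-body}) bounds it to finitely many states, hence to finitely many output-sets, each finite by the standing assumption that $O(\state)$ is finite. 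This finiteness, together with the convergence hypothesis, is the only place where image-finite branching of the underlying $\obaFB$ LTS is essential; everything else is the standard transversal/$\ll$ duality specialised by the $\obaFW$ fact that all inputs are always enabled.
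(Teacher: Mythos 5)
Your proposal is correct, but there is nothing in the paper to compare it against line by line: the paper states this lemma without proof and discharges it only in the Coq development (\texttt{equivalence\_bhv\_acc\_mst2} in \texttt{Must.v}), so the only reference is the mechanised statement and its ambient assumptions. Your decomposition is the natural one and fits that infrastructure: (i) over $\obaFW$ every state enables every input, so a finite $X$ meeting $\Names$ is a \MustSet of any after-set on either side and only $X \subseteq \co{\Names}$ is informative; (ii) for such $X$, and given convergence along $s$, ``every weak $s$-derivative of $\liftFW{p}$ weakly performs some action of $X$'' is equivalent to ``$X$ hits every member of $\accht{\liftFW{p}}{s}$'', the nontrivial direction using termination to push each derivative to a stable one; (iii) both implications of the lemma then reduce to the hitting-set/$\ll$ duality, with $\bhvleqone$ used exactly once, to transfer convergence to the $q$-side.

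Two points deserve emphasis. First, your hard direction produces a \emph{finite} separating set only because $\accht{\liftFW{p}}{s}$ is a finite family of finite sets, and this requires image-finiteness of the underlying LTS. That hypothesis does not appear in the lemma as printed (``Let $\genlts_A, \genlts_B \in \obaFB$''), but it is baked into the Coq version: the definition \texttt{bhv\_pre\_ms\_cond2} is only stated under \texttt{FiniteLts}. Moreover it is genuinely necessary: an $\obaFB$ LTS with infinite $\tau$-branching whose acceptance set after $\varepsilon$ is an infinite family of pairwise disjoint output singletons satisfies the \MustSet{} condition (no finite output-only set is a \MustSet on the left, and input-containing sets are \MustSets{} everywhere) while violating $\asleqAfw$. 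So flagging this, as you do, is a strength of the write-up rather than a defect. Second, your use of contraposition is constructively innocuous here: once the acceptance family is finite, the existence of $\widehat{R} \in \accht{\liftFW{p}}{s}$ with $\widehat{R} \subseteq R$ is decidable, so the argument is a decidable case split followed by ex falso, in line with the paper's constructive stance.
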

\noindent
As a direct consequence, we obtain the following result.
\begin{theorem}
  \label{thm:testleqS-equals-mustsetleq}
    Let $\genlts_A, \genlts_B \in \obaFB$.
  For all $\serverA \in \States$  and
  $\serverB \in \StatesB $, we have that
  $\serverA \testleqS \serverB$ if and only if
  $\liftFW{\serverA} \msleq \liftFW{\serverB}$.
\end{theorem}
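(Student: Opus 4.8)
The plan is to reduce the statement to the acceptance-set characterisation already established in \rthm{testleqS-equals-asleq}, and then to observe that, on LTSs enhanced with forwarding, the must-set preorder $\msleq$ and the acceptance-set preorder $\asleq$ coincide. The key structural remark is that both preorders are conjunctions sharing the \emph{same} first component $\bhvleqone$: by \rdef{accset-leq} we have $\asleq\; =\; \bhvleqone \wedge \bhvleqtwo$, while by \rdef{denicola-char} we have $\msleq\; =\; \bhvleqone \wedge \msleqtwo$. Hence it suffices to show that the two remaining components agree whenever the common one holds.

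Concretely, I would argue by the following chain of logical equivalences. First, \rthm{testleqS-equals-asleq} gives
\[
  \serverA \testleqS \serverB
  \iff
  \liftFW{\serverA} \bhvleqone \liftFW{\serverB}
  \;\wedge\;
  \liftFW{\serverA} \bhvleqtwo \liftFW{\serverB}.
\]
Since $\liftFW{\genlts_A}, \liftFW{\genlts_B} \in \obaFW$ by \rlem{liftFW-works}, \rlem{conditions-on-accsets-logically-equivalent} lets me replace the second conjunct, obtaining the equivalent condition $\liftFW{\serverA} \bhvleqone \liftFW{\serverB} \wedge \liftFW{\serverA} \asleqAfw \liftFW{\serverB}$. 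Finally, \rlem{acceptance-sets-and-must-sets-have-same-expressivity} --- whose hypothesis $\liftFW{\serverA} \bhvleqone \liftFW{\serverB}$ is exactly the surviving first conjunct --- yields $\liftFW{\serverA} \asleqAfw \liftFW{\serverB} \iff \liftFW{\serverA} \msleqtwo \liftFW{\serverB}$. Substituting and folding the definition of $\msleq$ back in gives $\serverA \testleqS \serverB \iff \liftFW{\serverA} \msleq \liftFW{\serverB}$, as required.

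The only point requiring care is that \rlem{acceptance-sets-and-must-sets-have-same-expressivity} is a \emph{conditional} equivalence, valid only under the assumption $\liftFW{\serverA} \bhvleqone \liftFW{\serverB}$, so I cannot simply rewrite the second conjunct in isolation. This is harmless precisely because the $\bhvleqone$ clause is common to both $\asleq$ and $\msleq$: if it holds, the lemma applies and the two second conjuncts are interchangeable; if it fails, both $\liftFW{\serverA} \asleq \liftFW{\serverB}$ and $\liftFW{\serverA} \msleq \liftFW{\serverB}$ are false, so the target equivalence holds vacuously. Since the whole argument is a composition of already-proved equivalences, I expect no real obstacle in this step --- the genuine content of the theorem lives entirely in \rthm{testleqS-equals-asleq} and in \rlem{acceptance-sets-and-must-sets-have-same-expressivity}.
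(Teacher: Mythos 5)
Your proposal is correct and follows exactly the route the paper takes: the paper derives this theorem as a direct consequence of \rthm{testleqS-equals-asleq}, \rlem{conditions-on-accsets-logically-equivalent}, and \rlem{acceptance-sets-and-must-sets-have-same-expressivity}, which is precisely your chain of equivalences. Your explicit handling of the conditional nature of \rlem{acceptance-sets-and-must-sets-have-same-expressivity} (rewriting under the shared conjunct $\bhvleqone$, with the degenerate case where it fails) is a detail the paper leaves implicit, and it is handled correctly.
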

%% \begin{proof}
%%   It suffices to prove that $ \serverA \preccurlyeq_{\textsc{ms}} \serverB $ if and only if
%%   $ \serverA \preccurlyeq_{\textsc{m}} \serverB $. This is \rlem{acceptance-sets-and-must-sets-have-same-expressivity}.
%% \end{proof}

\newcommand{\failleq}{\mathrel{\leq_{\textsf{fail}}}}

\paragraph{Failure refinement} %
\MustSets\ have been used mainly
by De Nicola and collaborators, for instance
in~\cite{DBLP:journals/lmcs/NicolaM23,DBLP:journals/iandc/BorealeNP02},
and are closely related to the failure refinement proposed
in~\cite{DBLP:journals/jacm/BrookesHR84} by Hoare, Brookes and Roscoe for TCSP (the process
algebra based on Hoare's language CSP
\cite{DBLP:journals/cacm/Hoare83a,DBLP:conf/icalp/Brookes83}).
Following~\cite{DBLP:journals/jacm/BrookesHR84},  
a {\em failure} of a process $\server$ is a pair $(\trace, X)$
such that $p \wt{ \trace } p'$ and $p' \Nst{\mu}$ for all $\mu \in X$.
Then, failure refinement is defined by letting $\serverA\failleq\serverB$
whenever the failures of~$\serverB$ are also failures of $\serverA$.
%This refinement is the staple of the community focused on 
%its variants have been applied to
%verify the software in the Russian module of the International Space Station 
%\cite{DBLP:conf/birthday/PeleskaB99,DBLP:conf/amast/ButhKPS97,DBLP:conf/amast/ButhPS98},
%Practitioners using failure refinements rely on the tool FDR
%\cite{DBLP:journals/sttt/Gibson-Robinson16}, and a number of
This refinement was designed to give a denotational semantics to
processes, and mechanisations in Isabelle/HOL have been
developed to ensure that the refinement
is well defined~\cite{HOL-CSP-AFP,DBLP:journals/acta/BaxterRC22}.  Both
Hennessy~\cite[pag. 260]{DBLP:books/daglib/0066919}
and~\cite{Castellan2023}
% \cite{DBLP:journals/corr/abs-2108-10558}
highlight that the failure
model can be justified operationally via the~$\opMust$ testing
equivalence: it is folklore dating back to~\cite[Section
  4]{DBLP:journals/tcs/NicolaH84} that failure equivalence and~$\testeqS$ coincide.
Thanks to \rthm{testleqS-equals-mustsetleq} we
conclude that in fact~$\testleqS$ 
coincides with the \emph{failure divergence refinement}
\cite{HOL-CSP}, that is, the intersection of~$\failleq$
and~$\bhvleqone$.

\begin{corollary}
  \label{cor:testleqS-equals-failleq}
  Let $\genlts_A, \genlts_B \in \obaFB$.
  For every $\serverA \in \States$  and
  $\serverB \in \StatesB $, we have that
  $\serverA \testleqS \serverB$ if and only if
  $\liftFW{\serverA} \bhvleqone \liftFW{\serverB}$
  and $\liftFW{\serverA} \failleq \liftFW{\serverB}$.
\end{corollary}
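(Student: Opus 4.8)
The plan is to read the corollary as a direct repackaging of \rthm{testleqS-equals-mustsetleq}. That theorem already gives $\serverA \testleqS \serverB$ iff $\liftFW{\serverA} \msleq \liftFW{\serverB}$, and by \rdef{denicola-char} the right-hand side unfolds into $\liftFW{\serverA} \bhvleqone \liftFW{\serverB}$ together with $\liftFW{\serverA} \msleqtwo \liftFW{\serverB}$. Since the target relation, the failure divergence refinement, is by definition the conjunction of $\failleq$ with the \emph{same} convergence component $\bhvleqone$, it suffices to prove that, under the standing assumption $\liftFW{\serverA} \bhvleqone \liftFW{\serverB}$, the must-set condition $\liftFW{\serverA} \msleqtwo \liftFW{\serverB}$ is logically equivalent to the failure inclusion $\liftFW{\serverA} \failleq \liftFW{\serverB}$. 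First I would move entirely inside $\obaFW$ (legitimate since $\liftFW{\genlts} \in \obaFW$) and, via \rlem{acceptance-sets-and-must-sets-have-same-expressivity}, replace $\msleqtwo$ by $\asleqAfw$, so that the comparison becomes one between output-acceptance sets on one side and refusals on the other.

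The key step is a trace-wise duality between must-sets and failures. Fix a trace $\trace$ on which $\liftFW{\serverA}$ converges; by $\bhvleqone$ the server $\liftFW{\serverB}$ then converges on $\trace$ as well. I would show that, for such $\trace$, the predicate $\after{\serverA}{\trace}{\st{}_\StatesA} \mathrel{\opMust} X$ is exactly the negation of ``$(\trace, X)$ is a failure of $\liftFW{\serverA}$'': a must-set violation yields a derivative that weakly refuses every action of $X$, and convergence lets me drive this derivative along $\tau$-steps to a \emph{stable} one, where weak and strong refusal coincide, producing a genuine failure witness; conversely a stable failure witness weakly refuses $X$ and hence violates the must-set. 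Under this correspondence a refusal set is the complement of a ready set, so the acceptance-set ordering $\accfwp{\serverA}{\trace}{\st{}_\StatesA} \ll \accfwp{\serverB}{\trace}{\st{}_\StatesB}$ (smaller output-ready sets on the left) says precisely that $\liftFW{\serverA}$ refuses at least as much as $\liftFW{\serverB}$, i.e. that every failure of $\liftFW{\serverB}$ at $\trace$ is a failure of $\liftFW{\serverA}$. Taking the contrapositive of the implication defining $\msleqtwo$, over all convergent traces, then gives failure inclusion on convergent traces, and conversely the acceptance ordering is recovered from failure inclusion by instantiating $X$ with the complement of a stable output-ready set of $\liftFW{\serverB}$.

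The main obstacle is the \emph{divergent} traces, together with the gap between the weak refusals implicit in $\opMust$ and the strong, possibly transient, refusals in the definition of a failure. For the former I would rely on divergence strictness of the failure model: a computation that diverges after a prefix of $\trace$ contributes all failures $(\trace, X)$, so once $\bhvleqone$ guarantees that every divergence of $\liftFW{\serverB}$ is also a divergence of $\liftFW{\serverA}$, the failures of $\liftFW{\serverB}$ on divergent traces are automatically failures of $\liftFW{\serverA}$ and impose no constraint beyond $\bhvleqone$ --- mirroring the fact that $\msleqtwo$ is vacuous on traces where $\liftFW{\serverA}$ diverges. For the latter, the reconciliation is that refusals must be read at stable derivatives, and here the $\obaFW$ structure is decisive: every state offers all inputs, so the only actions that can ever be refused are outputs, which are exactly the actions recorded by $\accfwp{-}{-}{-}$. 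This both justifies restricting attention to stable witnesses and ties the argument back to \rlem{acceptance-sets-and-must-sets-have-same-expressivity}. Assembling the convergent-trace duality with the divergence-strict treatment of the remaining traces yields $\bhvleqone \wedge \msleqtwo \,=\, \bhvleqone \wedge \failleq$ on the lifted LTSs, and hence the corollary.
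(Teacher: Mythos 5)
Your proposal is correct, and its skeleton is the paper's own argument: the corollary is obtained from \rthm{testleqS-equals-mustsetleq} by showing that, on the lifted LTSs and in the presence of the shared convergence component $\bhvleqone$, the must-set condition $\msleqtwo$ coincides with failure inclusion $\failleq$. The difference lies in how that equivalence is established. The paper's mechanisation (theorem \texttt{equivalence\_pre\_failure\_must\_set} in the Coq appendix) proves it by direct contraposition: there a failure is \emph{defined} with weak refusals ($p' \Nwt{\mu}$ for all $\mu \in X$) and is conditioned on convergence, so that ``$X$ is not a must-set after $\trace$'' and ``$(\trace,X)$ is a failure'' are literally complementary statements. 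You instead route through \rlem{acceptance-sets-and-must-sets-have-same-expressivity} and the acceptance-set preorder $\asleqAfw$, playing refusal sets against the output-ready sets of stable derivatives. This is sound, and it even buys a little robustness: since in \obaFW{} every state offers every input, a refusal set of arbitrary cardinality is constrained only on the finitely many outputs of the stable witnesses, so your argument accommodates the paper's prose definition of failure, which does not restrict $X$ to be finite, whereas $\opMust$-sets are finite by definition. Your final paragraph also makes explicit a point the paper leaves implicit and which is genuinely needed for the statement to be true: the prose definition of failure (strong refusal $p' \Nst{\mu}$ at a possibly unstable witness) cannot be taken literally. For instance $\co{a}$ and $\tau.\co{a}$ are $\opMust$-equivalent, yet $(\varepsilon, \set{\co{a}})$ would be a strong-refusal ``failure'' of $\liftFW{\tau.\co{a}}$ (witness $\tau.\co{a} \triangleright \varnothing$, which is unstable) but not of $\liftFW{\co{a}}$. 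Your resolution --- read refusals at stable derivatives, reached via convergence, where weak and strong refusal coincide --- is exactly the reading under which the corollary holds, and it matches the mechanised definition.
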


\noindent
Thanks to \rcor{asynleq-equals-bhvleq}, we obtain~$\msleqNF$, the analogous of
\rdef{asyn-leq} based on normal forms, and we prove that it characterises
$\testleqS$.

\newcommand{\oh}{\mathcal{O}_p}
\newcommand{\ohmy}{O}
\newcommand{\ogood}{O_{\mathsf{good}}}
\renewcommand{\States}{C}

\section{Related work}
\label{sec:bisimulation}
\label{sec:detailed-related-works}

Here we discuss in detail the works more closely related to the results of this paper.
Further discussion on related work may be found in \rapp{further-related-works}.

%%%\paragraph{Must-preorder and asynchrony}
The first investigation on the \mustpreorder in an asynchronous
setting was put forth by \cite{DBLP:conf/fsttcs/CastellaniH98}.
While their very clear examples shed light on the preorder, their
alternative preorder (Definition 6 in that paper) is more complicated
than necessary: it uses the standard LTS of \TACCS, its lifting to an LTS
with forwarding, and two somewhat ad-hoc notions: a predicate $\rr{I}$ and a condition on multisets of inputs. %We dispose of both these ad-hoc notions.
Moreover that preorder is not complete because of a glitch in the treatment of divergence. The details of the counter-example we found to that completeness
result are given in \rapp{counterexample}.

In~\cite{DBLP:journals/jlp/Hennessy05} Hennessy outlines how to adapt the
approach of \cite{DBLP:conf/fsttcs/CastellaniH98} to a typed
asynchronous $\pi$-calculus. While the LTS with forwarding is
replaced by a Context LTS, the predicates to define the alternative
preorder are essentially the same as those used in the preceding work with
Castellani. Acceptance sets are given in Definition 3.19 there, and
the predicate~$\rr{}$ is denoted~$\searrow$, while the generalised
acceptance sets of \cite{DBLP:conf/fsttcs/CastellaniH98} are given in
Definition 3.20.  Owing to the glitch in the completeness of
\cite{DBLP:conf/fsttcs/CastellaniH98}, it is not clear that Theorem
3.28 of~\cite{DBLP:journals/jlp/Hennessy05} is correct either.

%% TAKEN FROM INTRO
%%   and the subsequent paper
%% while adapting the results of \cite{DBLP:conf/fsttcs/CastellaniH98} to the
%% asynchronous $\pi$-calculus, leaves to the reader
%% some parts of the completeness proof (lemmas 3.26 and 3.27 there).

Also the authors of~\cite{DBLP:journals/iandc/BorealeNP02}
consider the \mustpreorder in \ACCS.
There is a major difference between their approach and ours.
When studying theories for asynchronous programs, one can either
\begin{enumerate}[(1)]
  \item \label{first-approach} keep the definitions used for synchronous programs,
    and enhance the LTS with forwarders;
    or
  \item adapt the definitions, and keep the standard LTS.
\end{enumerate}
In the first case, the complexity is moved into the LTS, which
becomes infinite-branching and infinite-state.
In the second case, the complexity is moved into the definitions
used to reason on the LTS (i.e. in the meta-language), and in
particular in the definition of the alternative preorder, which
deviates from the standard one. The authors of
\cite{DBLP:journals/iandc/BorealeNP02} follow the second approach. %
This essentially explains why they employ the standard LTS of \CCS
and to tackle asynchrony they reason on traces via (1) a preorder~$\preceq$ (Table 2 of that paper) that defines on {\em input} actions
the phenomena due to asynchrony; and (2)
a rather technical operation on traces, namely %$s,s'$, namely
  $s \ominus s' = ( \mset{s}_{i} \setminus
  \mset{s'}_{i}) \setminus \co{( \mset{s}_{o} \setminus  \mset{s'}_{o}
    )}$.
% \begin{enumerate}[(i)]
% \item a preorder~$\preceq$ (Table 2 of that paper) that defines on {\em input} actions
%   the phenomena due to asynchrony, for instance their {\em
%     annihilation} rule (i.e. TO3) is analogous to the \outputfeedback axiom, and
%   their {\em postponement} (i.e. TO2) is analogous to the
%   \outputcommutativity axiom; and 
% %  In fact \rlem{acnv-drop-in-the-middle} shows that the relation~$\acnvalong$ is preserved by
% %  two operations on traces that are reminiscent of {\em deletion} and
% %  {\em annihilation} of \cite{DBLP:journals/iandc/BorealeNP02}; and
% \item a rather technical operation on traces, namely %$s,s'$, namely
%   $s \ominus s' = ( \mset{s}_{i} \setminus
%   \mset{s'}_{i}) \setminus \co{( \mset{s}_{o} \setminus  \mset{s'}_{o}
%     )}$.
% \end{enumerate}
We favour instead the approach in~(\ref{first-approach}), for %, as we already argued,
it helps %us
achieve a modular mechanisation.
%% Using definitions of alternative preorders that are parametrised over the LTS,
%% we can treat uniformly the characterisations of the \mustpreorder for
%% both the asynchronous semantics and the synchronous one.
%% In fact the arguments developed for the asynchronous case, which
%% features both blocking and non-blocking actions, carry over for free
%% to the synchronous case, which features only blocking actions.

%study both the $\May$ and $\opMust$ preorders for
%\ACCS. The definitions of our alternative preorders are simpler than
%their Definition 7.4.

%% WE ALSO USE MUST-SETS NOW
%% Another difference between our approach and the one by \cite{DBLP:journals/iandc/BorealeNP02}
%% is that our alternative preorder uses acceptance sets, \ie compares (output) actions of stable states.
%% Instead, their Definition $7.4$ compares the so-called \MustSets.
%% More precisely, if~$L$ is a set of actions then they write $\server \mathrel{\opMust} L$
%% whenever $\server \wt{} \server'$ implies that $\server \st{ \mu }$ for some $\mu
%% \in L$, and say that~$L$ is a \MustSet of $\server$.
%% In their alternative preorder then  $\serverA \leq \serverB$
%% if every \MustSet of $\serverA$ is a \MustSet of $\serverB$.

The authors of~\cite{DBLP:journals/tcs/NicolaP00} give yet another account of the \mustpreorder.
Even though non-blocking outputs can be written in their calculus,
they use a left-merge operator that allows writing {\em blocking} outputs.
%Inequalities such as $ {\tt str}.\Nil \testleqS \Nil$
%are thus false in their setting, because clients can be written that
%distinguish the two servers.
The contexts that they use to prove the completeness of their
alternative preorder use such blocking outputs, consequently
their arguments need not tackle the asymmetric treatment of input and
output actions. This explains why they can use smoothly a standard
LTS, while \cite{DBLP:conf/fsttcs/CastellaniH98} and
\cite{DBLP:journals/iandc/BorealeNP02} have to resort to more
complicated structures.

Theorem 5.3 of the PhD thesis by \cite{DBLP:phd/us/Thati03} states an
alternative characterisation of the \mustpreorder, but it is given
with no proof.  The alternative preorder given in Definition 5.8 of
that thesis turns out to be a mix of the ones by
\cite{DBLP:conf/fsttcs/CastellaniH98} and
\cite{DBLP:journals/iandc/BorealeNP02}.  In particular, the
definition of the alternative preorder relies on the LTS with
forwarding, there denoted~$\st{}_A$ (Point 1. in Definition 5.1
defines exactly the input transitions that forward messages into the
global buffer).  The condition that compares convergence of processes
is the same as in \cite{DBLP:conf/fsttcs/CastellaniH98}, while
server actions are compared using \MustSets, and not \AcceptanceSets.
In fact, Definition 5.7 there is titled ``\AcceptanceSets'' but it
actually defines \MustSets.

\section{Conclusion}
\label{sec:conclusion}
\label{sec:discussion}

In this paper we have shown that the standard characterisations of
  the \mustpreorder by De Nicola and
  Hennessy~\cite{DBLP:journals/tcs/NicolaH84,DBLP:books/daglib/0066919}
  are sound and complete also in an asynchronous setting, provided servers are
  enhanced with the forwarding ability.
  \rlem{liftFW-works} shows that this lifting is always possible.
  We have also shown that the standard coinductive characterisation carries over to the
  asynchronous setting.
  Our results %\ila{closes the problem left open by \cite{Tanti2015TowardsSR},}
  are supported by %and is also
  the first mechanisation of the \mustpreorder, and increase proof
  (i.e. code) factorisation and reusability since the alternative preorders
  do not need to be changed when shifting between synchronous and
  asynchronous semantics: it is enough to parameterise the proofs on
  the set of non-blocking actions.
    % This result closes the problem left open by
  %   \cite{Tanti2015TowardsSR}, \ila{is supported by} %and is also
  %   the first mechanisation of the
  %   \mustpreorder, and increases \ila{proof} (i.e. code)
  % factorisation and reuse: we do not need to change the definitions of
  % the alternative preorders to reason on synchronous systems, we need
  % only to parametrise the proofs on the set of non-blocking actions.
%
  %% \ilacom{This paragraph seems slightly redundant. Omit
  %%   the first sentence (which appears already in the last line of the abstract)?} 
  %% \rthm{testleqS-equals-accleq} and
  %% \rthm{testleqS-equals-mustsetleq} suggest a technique to port
  %% characterisations of testing preorders from synchronous to
  %% asynchronous semantics.
%
% \rthm{testleqS-equals-accleq} and \rthm{testleqS-equals-mustsetleq}
% suggest a technique to port characterisations of testing
% preorders from synchronous semantics to asynchronous ones,
% i.e. by  enhancing servers with the forwarding ability.
% \rlem{liftFW-works} shows that this lifting is always possible,
% \gb{and \cite{DBLP:conf/birthday/BaldanBGV15} is another case-study,
%   for it shows that this technique works also for the $\May$-preorder.
%
  \rcor{testleqS-equals-failleq} states that \mustpreorder and failure
  refinement essentially coincide.
  This might spur further interest in
  the mechanisations of failure
  refinement, carried out so far in Isabelle/HOL
  \cite{HOL-CSP-AFP,DBLP:journals/acta/BaxterRC22}, possibly
  opening up opportunities of joint efforts for automated checking.
%% For instance, one question is whether proof techniques
%% for failure refinement that are based on metric spaces
%% can be adapted to the must-preorder, and at what cost.

  % This might spur further interest in
  % the mechanisations of the latter
  % \cite{HOL-CSP-AFP,DBLP:journals/acta/BaxterRC22}, possibly leading
  % to a joint development.%
 %% LICS TEXT
 %%While this approach was originally suggested by \cite{DBLP:conf/fsttcs/CastellaniH98},
 %%Selinger axioms play a crucial role in the proof of completeness.
 %% NEW TEXT

\paragraph{Proof method for \mustpreorder}
Theorems \ref{thm:testleqS-equals-asleq}, \ref{thm:coinductive-char-equiv-main}
and \ref{thm:testleqS-equals-mustsetleq} endow researchers in programming
languages for message-passing software with a proof method for~$\testleqS$,
namely: to define for their calculi an LTS that enjoys the axioms of
output-buffered agents with feedback. % for the language at issue.
%A concrete example of this approach is \rcor{characterisation-for-aCCS}.
An example of this approach is \rcor{characterisation-for-aCCS}.
% define an LTS for the language at issue,
% that enjoys the axioms of output-buffered agents with feedback.
%
%% \begin{enumerate}[(1)]
%% \item defining an LTS for their language, and
%% \item proving that the LTS is an instance of the typeclass for
%%   output-buffered agents with feedback.
%% \end{enumerate}

\paragraph{Live programs have barred trees}  We argued that a proof
of $\Must{\server}{\client}$ is a proof of liveness (of the
client). This paper is thus de facto an example that proving
liveness amounts to prove that a computational tree has a bar (identified
by the predicate $\goodSym$), and hence \barinduction is a natural way
to reason constructively on liveness-preserving manipulations on programs.
%% LONG VERSION
  %% as
  %% it gives an inductive principle for properties à la
  %% \cite{DBLP:journals/tse/Lamport77}: "for each execution, eventually
  %% something good happens".
  While this fact seems to be by and large 
  unexploited by the PL community, we believe that it may %in fact
  be of interest to practitioners reasoning on liveness
  properties %wishing to prove liveness results
  in theorem provers in particular, and to the PL
  community at large.

\paragraph{Mechanisation} %
Boreale and Gadducci~\cite{DBLP:journals/tcs/BorealeG06} remark that the
\mustpreorder lacks a tractable proof method.  In constrast, we argue that our
contributions, in particular the coinductive characterisation
(\rthm{coinductive-char-equiv-main}), being fully mechanised in Coq, let
practitioners pursue non-trivial results about testing preorders for real-world
programming languages.  
To make this point, we have proved a form of code-hoisting using this
characterisation.
Our mechanisation lowers the barrier to entry for
researchers versed into theorem provers and wishing to use testing preorders;
adds to the toolkit of Coq users an alternative to the well-known (and already
mechanised) bisimulation equivalence \cite{DBLP:conf/lics/Pous16}; and provides
a starting point for researchers willing to study testing preorders and
analogous refinements within type theory.
%Our code is open-source and available on-line. %Practitioners
Researchers working on testing preorders may benefit from it, as
there are analogies between reasoning techniques for \textsc{May}, $\opMust$,
\textsc{Compliance}, \textsc{Should}, and \textsc{Fair} testing. For instance
Baldan et al. show with pen and paper that a technique similar to forwarding
works to characterise the $\May$-preorder~\cite{DBLP:conf/birthday/BaldanBGV15}.

\paragraph{Future work}  Thanks to Theorems \ref{thm:testleqS-equals-asleq},
\ref{thm:coinductive-char-equiv-main} and \ref{thm:testleqS-equals-mustsetleq}
we can now set out to (1) devise an
axiomatisation of~$\testleqS$ for asynchronous calculi, as done in
\cite{DBLP:journals/fac/HennessyI93,DBLP:journals/iandc/BorealeN95,DBLP:books/daglib/0066919,DBLP:journals/tcs/Hennessy02}
for synchronous ones; (2) study for which asynchronous calculi~$\testleqS$ is a
pre-congruence; (3) machine-check semantic models of subtyping for session types
\cite{DBLP:journals/mscs/BernardiH16}; (4) study the decidability
of~$\testleqS$. %
%We conjecture that in Selinger asynchronous setting the
%\mustpreorder is undecidable.

More in general, given the practical relevance of asynchronous
communication, it seems crucial %
%natural \ilacom{utiliser un adjectif plus percutant que ``natural''? Crucial?}
not only to adapt the large body of theory for synchronous
communication to the asynchronous setting but also to resort to
machine supported reasoning to do it. This paper is meant to be a step
% forward
in this direction.

\paragraph{Data availability}
The mechanised proofs have been archived on Zenodo\cite{artifact}.

\paragraph{Acknowledgments.}
We thank the anonymous reviewers for their helpful comments, %
and Roberto Amadio for useful remarks. %
  The first author would like to acknowledge %
his ``N+1 honoris causa'' Pierre-Evariste Dagand for the %
systematic and useful feedback, %
Guillaume Geoffroy for
suggesting the use of \barinduction, %
and Hugo Herbelin for
% the
discussions on the topic. %
%% Leo was paid by Viktor's grant
This work has received funding from the European Research Council (ERC) under
the European Union’s Horizon 2020 research and innovation programme (grant
agreement No. 101003349).

\bibliographystyle{splncs04}%
\bibliography{main.bib}%

\appendix%
\section{Further related works}
\label{sec:further-related-works}

%%%%%%%%%%%%%%%%%%%%%%
%% CONTEXTUAL PREORERS IN FUNCTIONAL LANGUAGES
%%%%%%%%%%%%%%%%%%%%%%
\paragraph{Contextual preorders in functional languages}
Morris preorder is actively studied in
the pure $\lambda$-calculus \cite{DBLP:conf/rta/BreuvartMPR16,DBLP:journals/lmcs/BreuvartMR18,DBLP:journals/lmcs/IntrigilaMP19,DBLP:books/cp/BarendregtM22}, %
$\lambda$-calculus with references \cite{DBLP:conf/icalp/Prebet22,DBLP:conf/fossacs/HirschkoffJP23}, %
in PCF \cite{koutavas13} %
as well as in languages supporting shared memory concurrency \cite{DBLP:conf/popl/TuronTABD13}, and mutable references \cite{DBLP:journals/jfp/DreyerNB12}.
The more sophisticated the languages, the more intricate and larger the proofs.
The need for mechanisation became thus apparent, in particular to prove that
complex logical relations defined in the framework Iris
  (implemented in Coq) % a metalogic defined in Coq,
are sound, \ie included in the preorder \cite{DBLP:conf/popl/KrebbersTB17,DBLP:conf/lics/FruminKB18}.
%% TODO study the paper and describe better what they do.
The paper \cite{DBLP:journals/jlap/AubertV22} provides a framework to study
contextual equivalences in the setting of process calculi.
It is worth noting, though, that as argued in
\cite[Section~$3$]{DBLP:journals/lisp/Boudol98},
Morris equivalence coincides with $\May$-equivalence,
at least if the operational semantics at hand enjoys the Church-Rosser
property. In fact \cite{DBLP:conf/caap/BoudolL96} define Morris
preorder literally as a testing one, via tests for convergence.
The studies of the \mustpreorder in process calculi can thus be seen
as providing proof methods to adapt Morris equivalence to
\nondeterministic settings, and using contexts that are really
external observers. To sum up, one may say that
  Morris equivalence coincides with $\May$-equivalence when
  \nondeterminism\ is confluent and all states are viewed as accepting states,
  while it coincides with $\opMust$ equivalence in the presence of true
  \nondeterminism\ and when only successful states are viewed as accepting states.

  In the setting of nondeterministic and possibly concurrent
  applicative programming languages
  \cite{DBLP:conf/ppdp/Schmidt-Schauss18,DBLP:journals/corr/abs-2008-13359,birkedal-non-determinism},
  also a contextual preorder based on may and must-termination has
  been studied \cite{DBLP:journals/mscs/SabelS08,birkedal-non-determinism}.
  Our preorder $\bhvleqone$ is essentially a
  generalisation of the must-termination preorder of
  \cite{DBLP:journals/mscs/SabelS08} to traces of visible actions.

%%%%%%%%%%%%%%%%%%%%%%
%% SYNCHRONOUS SETTING
%%%%%%%%%%%%%%%%%%%%%%
\paragraph{Theories for synchronous semantics}
%%%%% HIDDEN FOR BREVITY
%%%%% DO NOT EREASE
%% The \mustpreorder has been studied and/or employed in more than~$30$
%% papers, as well as in a book by~\cite{DBLP:books/daglib/0066919},
%% always in settings where both inputs and outputs are blocking.
%% Its theory has thus informed various other formal models that
%% we breafly recall.
%%%%
%%%%
%% Web services
Both \cite{DBLP:conf/concur/LaneveP07} and
\cite{DBLP:journals/toplas/CastagnaGP09} employed LTSs as a model of
contracts for web-services (\ie \texttt{WSCL}), and the \mustpreorder
as refinement for contracts.  The idea is that a search engine asked
to look for a service described by a contract~$\server_1$ can
actually return a service that has a contract~$\server_2$, provided
that~$\server_1 \testleqS \server_2$.

%% Monitors
The \mustpreorder for {\em clients} proposed by
\cite{DBLP:journals/corr/BernardiH15} has partly informed the
theory of monitors by \cite{DBLP:journals/pacmpl/AcetoAFIL19},
in particular the study of preorders for monitors by
\cite{DBLP:journals/iandc/Francalanza21}.
Our results concern LTSs that are more general than those of monitors,
and thus %we expect that
our code could provide the basis to mechanise
the results of \cite{DBLP:journals/pacmpl/AcetoAFIL19}.

%% Session types
The first subtyping relation for binary session types was presented
in~\cite{DBLP:journals/acta/GayH05} using a syntax-oriented definition.
The semantic model of that subtyping is a refinement %that is
very similar to the \mustpreorder. The idea is to treat types as \textsc{CCS} terms,
assign them an LTS
\cite{DBLP:conf/ppdp/CastagnaDGP09,DBLP:conf/ppdp/Barbanerad10,DBLP:journals/iandc/RavaraRV12,DBLP:journals/mscs/BernardiH16},
and use the resulting testing preorders as semantic models of the
subtyping.
%this operational semantics to define a semantic subtyping.
In the setting of coinductively defined higher-order session types, the
correspondence is implicitly addressed
in~\cite{DBLP:conf/ppdp/CastagnaDGP09}.
In the setting of recursive higher-order session types, %this correspondence is
it is given by Theorem 4.10 of~\cite{DBLP:journals/corr/BernardiH13}.

  %%%%%%%%%%%%%%%%%%%%%%%%%%%%%
%%% Models of asynchrony
%%%%%%%%%%%%%%%%%%%%%%%%%%%%%
\paragraph{Models of asynchrony}
While synchronous (binary) communication requires
the simultaneous occurrence of a send and a receive action,
asynchronous communication allows a delay between a send action and
the corresponding receive action.  Different models of asynchrony
exist, depending on which medium is assumed for storing messages in
transit. In this paper, following the early work on the asynchronous
$\pi$-calculus~\cite{DBLP:conf/ecoop/HondaT91,boudol:inria-00076939,ACS96},
we assume the medium to be an unbounded
unordered mailbox, shared by all processes. Thus, no process needs to
wait to send a message, namely the send action is non-blocking.
  This model of communication is best captured via the output-buffered
  agents with feedback of \cite{DBLP:conf/concur/Selinger97}. 
  % LTSs have been shown to be morphisms of a traced monoidal category
  % of buffers by \cite{DBLP:journals/entcs/Selinger99}.
  The early style LTS of the asynchronous $\pi$-calculus is a concrete example of this
  kind of LTSs.
A similar global unordered mailbox is used %for instance
also in Chapter~$5$ of \cite{DBLP:phd/us/Thati03}, %
by \cite{Brookes2002DeconstructingCA}, which relies explicitly on a mutable
global state, and by \cite{palamidessi_2003}, which
manipulates it via two functions {\em get} and {\em set}.
%\ilacom{In the bib file, the ref to Palamidessi should be updated to}
%\gb{IS the problem fixed?}

% one unordered buffer for every name
  %Instead,
More deterministic models of asynchrony are obtained assigning a data
structure to every channel. %
%% For example \cite{josephs} model
%% asynchrony by assigning a separate {\em unordered} buffer to each
%% channel.  \ilacom{I haven't read \cite{josephs}, but if channels are as
%%   in CCS or TCSP I don't see how assigning an {\em unordered} buffer
%%   to each channel can make the model more deterministic. If messages
%%   are unordered no matter the channel they were sent along, where does
%%   the additional determinism come from?}
%
For example \cite{HYC08,HYC16} use an even more deterministic model
in which each ordered pair of processes is assigned a dedicated
channel, equipped with an {\em ordered} queue. Hence, messages along
such channels are received in the same order in which they were sent.
% \ilacom{Aparté. One should be careful when comparing different models, because
%   the communication mode may vary: in CCS (resp., the $\pi$-calculus)
%   all channels are shared among all processes, and one must use
%   restriction (resp., the $\nu$ construct of the $\pi$-calculus) to
%   ``privatise'' communication on a given channel.  Instead, in session
%   calculi communication is point-to-point and there is a dedicated
%   channel for each ordered pair of processes. Moreover, usually in
%   session calculi each process is assumed to have sequential
%   behaviour (\ie there is only top level parallelism). So indeed, their
%   communication pattern is more restricted. In fact, the LTSs of
%   sessions are fully deterministic, in the sense that there are no two
%   transitions from the same state with equal labels.}  
  This model is used for asynchronous session calculi, and mimics the
  communication mode of the TCP/IP protocol.  The obvious research
  question here is how to adapt our results to the different
  communication mechanisms and different classes of LTSs. For
  instance, both \cite{Tanti2015TowardsSR} and \cite{caruana19}
  define LTSs for \textsc{Erlang}. We will study whether at least one
  of these LTSs is an instance of output-buffered agents with
  feedback. If this is not the case, we will first try to adapt our
  results to \textsc{Erlang} LTSs.

\paragraph{May-preorder} %
  $\May$ testing and the $\May$-preorder, have been widely studied in 
  asynchronous settings. The first characterisation for \ACCS appeared 
  in \cite{DBLP:conf/fsttcs/CastellaniH98} and relies on comparing traces
  and asynchronous traces of servers. Shortly after
  \cite{DBLP:journals/iandc/BorealeNP02} presented a characterisation based
  on operation on traces.
  A third characterisation appeared in \cite{DBLP:conf/birthday/BaldanBGV15},
  where the saturated LTS $\st{}_s$ is essentially our $\sta{}$.
  That characterisation supports our claim that results about synchronous
  semantics are true also for asynchronous ones, modulo forwarding.
  Compositionality of trace inclusion, i.e. the alternative
  characterisation of the $\May$-preorder, 
  has been partly investigated in Coq by \cite{aathalye-meng} in the
  setting of IO-automata.
  The $\May$-preorder has also been studied in the setting of actor languages
  by \cite{caruana19,Tanti2015TowardsSR}.
  %% The conclusion of \cite{Tanti2015TowardsSR}
  %% suggests that characterising a liveness preserving preorder for
  %% asynchronous semantics remains an open problem. This paper closes it.

\paragraph{Fairness} 
Van Glabbeek \cite{DBLP:conf/fossacs/Glabbeek23} argues that by amending the
semantics of parallel composition (i.e. the scheduler) different
notions of fairness can be embedded in the \mustpreorder. We would like to investigate
which notion of fairness makes the \mustpreorder coincide with the 
\textsc{Fair}-preorder of \cite{DBLP:journals/iandc/RensinkV07}.

  \paragraph{Mutable state}
  Prebet \cite{DBLP:conf/icalp/Prebet22} has recently shown an 
  encoding of the asynchronous $\pi$-calculus into a $\lambda$-calculus
  with references, which captures Morris equivalence via a bisimulation.
  This renders vividly the intuition that output-buffered agents
  manipulate a shared common state.
  We therefore see our work also as an analysis of the \mustpreorder for
  a language in which programs manipulate a global mutable store.
  Since the store is what contains output messages, and our formal
  development shows that only outputs are observable, our results
  suggest that characterisations of testing preorders for impure
  programming languages should predicate over the content of the mutable
  store, \ie the values written by programs.
  Another account of $\pi$-calculus synchronisation via a
    functional programming language is provided in
    \cite{DBLP:journals/corr/abs-2008-13359}, that explains how to
    use Haskell \textsc{M-var}s to implement $\pi$-calculus
    message passing.

\section{Bar-Induction}
\label{sec:bar-induction}

%In this appendix we present our treatment of the \barinduction
%principle.

\rsec{visual-intro-bar-induction} is an informal introduction to the intuitions
behind \barinduction, together with a comment on the admissibility of
this principle in Coq.
In \rsec{properties-intensional-predicates} we gather the properties of
the predicates $\cnvalong{}$ and $\opMust$ that we use throughout our technical development.

%A reader already acquainted with this principle may read
%directly \rsec{bar-induction-technicalities}.

\subsection{A visual introduction}
\label{sec:visual-intro-bar-induction}

\begin{figure*}[t]
  \hrulefill
  \begin{center}
        \begin{tikzpicture}
        \node[state,rectangle,inner sep=3pt] (a) {$\csys{\server}{\client}$};
        \node[state,below left=+20pt and +20pt of a] (b) {\myspace};
        \node[state,below right=+20pt and +20pt of a] (c) {\scalebox{.6}{$\ok$}};
        \node[state,below left=+20pt and +20pt of b] (d) {\scalebox{.6}{$\ok$}};
        \node[state,below=+20pt of c] (g) {\myspace};
        \node[state,below=+20pt of g] (h) {\myspace};
        \node[state,below right=+20pt and +20pt of b] (e) {\scalebox{.6}{$\ok$}};
        \node[below=+20pt of e] (f) {\rotatebox{90}{$\ldots$}};

        \path[->]
        (a) edge node[above,scale=.9] {$\tau$} (b)
        (b) edge node[above,scale=.9] {$\tau$} (d);
        \path[->]
        (b) edge node[above,scale=.9] {$\tau$} (e)
        (e) edge node[left,scale=.9] {$\tau$} (f);

        \path[->]
        (a) edge node[above,scale=.9] {$\tau$} (c)
        (c) edge node[left,scale=.9] {$\tau$} (g)
        (g) edge node[left,scale=.9] {$\tau$} (h);
        \end{tikzpicture}
  \end{center}

  \caption{The state transition system of client-server system.}
  \hrulefill
  \label{fig:client-server-sts}
\end{figure*}
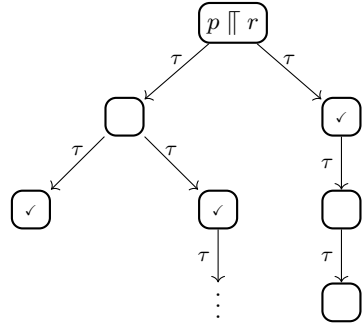

We explain the difference between {\em extensional} definitions of
predicates and {\em \intentional} ones, by discussing how the two
different approaches make us reason on computational trees.

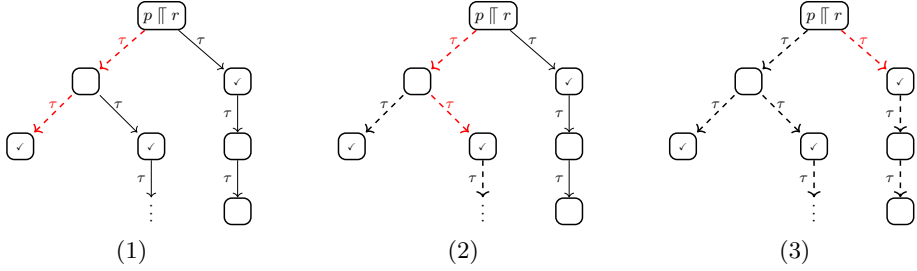
\begin{figure*}[t]
  \hrulefill
  \begin{center}
    \begin{tabular}{c@{\hskip 30pt}c@{\hskip 30pt}c}
      \scalebox{.7}{%
        \begin{tikzpicture}
          \node[state,rectangle,inner sep=3pt] (a) {$\csys{\server}{\client}$};
          \node[state,below left=+20pt and +20pt of a] (b) {\myspace};
          \node[state,below right=+20pt and +20pt of a] (c) {\scalebox{.6}{$\ok$}};
          \node[state,below left=+20pt and +20pt of b] (d) {\scalebox{.6}{$\ok$}};
          \node[state,below=+20pt of c] (g) {\myspace};
          \node[state,below=+20pt of g] (h) {\myspace};
          \node[state,below right=+20pt and +20pt of b] (e) {\scalebox{.6}{$\ok$}};
          \node[below=+20pt of e] (f) {\rotatebox{90}{$\ldots$}};

          \path[->]
          (a) edge[thick, dashed, red] node[above,scale=.9] {$\tau$} (b)
          (b) edge[thick, dashed, red] node[above,scale=.9] {$\tau$} (d);
          \path[->]
          (b) edge node[above,scale=.9] {$\tau$} (e)
          (e) edge node[left,scale=.9] {$\tau$} (f);

          \path[->]
          (a) edge node[above,scale=.9] {$\tau$} (c)
          (c) edge node[left,scale=.9] {$\tau$} (g)
          (g) edge node[left,scale=.9] {$\tau$} (h);
        \end{tikzpicture}
      }
      &
            \scalebox{.7}{%
        \begin{tikzpicture}
          \node[state,rectangle,inner sep=3pt] (a) {$\csys{\server}{\client}$};
          \node[state,below left=+20pt and +20pt of a] (b) {\myspace};
          \node[state,below right=+20pt and +20pt of a] (c) {\scalebox{.6}{$\ok$}};
          \node[state,below left=+20pt and +20pt of b] (d) {\scalebox{.6}{$\ok$}};
          \node[state,below=+20pt of c] (g) {\myspace};
          \node[state,below=+20pt of g] (h) {\myspace};
          \node[state,below right=+20pt and +20pt of b] (e) {\scalebox{.6}{$\ok$}};
          \node[below=+20pt of e] (f) {\rotatebox{90}{$\ldots$}};

          \path[->]
          (a) edge[thick, dashed,red] node[above,scale=.9] {$\tau$} (b)
          (b) edge[thick, dashed] node[above,scale=.9] {$\tau$} (d);
          \path[->]
          (b) edge[thick, dashed, red] node[above,scale=.9] {$\tau$} (e)
          (e) edge[thick, dashed] node[left,scale=.9] {$\tau$} (f);

          \path[->]
          (a) edge node[above,scale=.9] {$\tau$} (c)
          (c) edge node[left,scale=.9] {$\tau$} (g)
          (g) edge node[left,scale=.9] {$\tau$} (h);
        \end{tikzpicture}
      }
            &
                  \scalebox{.7}{%
        \begin{tikzpicture}
          \node[state,rectangle,inner sep=3pt] (a) {$\csys{\server}{\client}$};
          \node[state,below left=+20pt and +20pt of a] (b) {\myspace};
          \node[state,below right=+20pt and +20pt of a] (c) {\scalebox{.6}{$\ok$}};
          \node[state,below left=+20pt and +20pt of b] (d) {\scalebox{.6}{$\ok$}};
          \node[state,below=+20pt of c] (g) {\myspace};
          \node[state,below=+20pt of g] (h) {\myspace};
          \node[state,below right=+20pt and +20pt of b] (e) {\scalebox{.6}{$\ok$}};
          \node[below=+20pt of e] (f) {\rotatebox{90}{$\ldots$}};

          \path[->]
          (a) edge[thick, dashed] node[above,scale=.9] {$\tau$} (b)
          (b) edge[thick, dashed] node[above,scale=.9] {$\tau$} (d);
          \path[->]
          (b) edge[thick, dashed] node[above,scale=.9] {$\tau$} (e)
          (e) edge[thick, dashed] node[left,scale=.9] {$\tau$} (f);

          \path[->]
          (a) edge[thick, dashed, red] node[above,scale=.9] {$\tau$} (c)
          (c) edge[thick, dashed] node[left,scale=.9] {$\tau$} (g)
          (g) edge[thick, dashed] node[left,scale=.9] {$\tau$} (h);
        \end{tikzpicture}
      }
      \\
      \small (1) &
      \small (2) &
      \small (3)
    \end{tabular}
  \end{center}
  \caption{Extensional approach: finding successful prefixes in every
    maximal path of the computational tree.}
  \label{fig:extensional-path-by-path}
  \hrulefill
\end{figure*}

Suppose that we have a client-server system $\csys{\server}{\client}$
and that we want to prove either$\Must{\server}{\client}$
or $\musti{\server}{\client}$. For both proofs,
what matters is the state transition system (STS) of
$\csys{\server}{\client}$, i.e. %the subgraph of the LTS that contains only the $\tau$'s (i.e.
the computation steps performed by the client-server system at issue.
In fact it is customary to treat this STS as a computational tree,
as done for instance in the proofs of \cite[Lemma 4.4.12]{DBLP:books/daglib/0066919} and
\cite[Theorem 2.3.3]{TCD-CS-2010-13}.
In the rest of this subsection we discuss the tree depicted in \rfig{client-server-sts}.
It contains three maximal computations, the middle one being infinite.
In the figures of this subsection, the states in which the client
is successful (i.e. in the predicate $\goodSym$) contain the symbol $\ok$.

%% Consider the predicate $\Must$ - which is defined extensionally - and
%% suppose we want to prove that $\Must{\server}{\client}$ for some
%% $\server$ and $\client$.

\paragraph{The extensional approach}
To prove $\Must{\server}{\client}$, the extensional definition of
$\opMust$ requires checking that every maximal path in the tree in
\rfig{client-server-sts} starts with a finite prefix that leads to a
successful state.
The proof that $\Must{\server}{\client}$ amounts to looking
for a suitable prefix maximal path by maximal path, via a loop whose
iterations are suggested in \rfig{extensional-path-by-path}.
%In  \rfig{extensional-path-by-path}
There at every iteration a different maximal path (highlighted by
dashed arrows) is checked, and each time a successful prefix is found
(indicated by a red arrow), the loop moves on to the next  maximal
path. Once a maximal path is explored, it remains dashed, to denote
that there a succesful prefix has been found.
The first iteration looks for a successful prefix in the left-most
maximal path, while the last iteration looks
for a successful prefix in the right-most path.
In the current example the loop terminates because the tree in
\rfig{client-server-sts} has conveniently a finite number of maximal
paths, but in general the mathematical reasoning has to deal with
an infinite amount of maximal path. An archetypal example is the tree in
\rfig{unbounded}: it has countably many maximal paths, each one
starting with a successful prefix.

\paragraph{The \intentional\ approach}
Consider now the predicate $\opMusti$ - which is defined \intentional ly -
and a proof that $\musti{\server}{\client}$.
The base case of $\opMusti$ ensures that all the nodes that contain a
successful client (i.e. that satisfies the predicate $Q_2$, defined on
line 553 of the submission) are in $\opMusti$. Pictorially, this is the
step from (1) to (2) in \rfig{intensional-layer-by-layer}, where the
nodes in $\opMusti$ are drawn using dashed borders, and the freshly
added ones are drawn in red.
Once the base case is established, the inductive rule of $\opMusti$
ensures that any node that inevitably goes to nodes that are in
$\opMusti$, is also in the predicate $\opMusti$. This leads to the step
from (2) to (3) and then from (3) to (4).
Note that the argument is concise, for in the tree the depth at which
successful states can be found is finite. In general though is may not
be the case. The tree in \rfig{unbounded} is again the archetypal
example: every maximal path there contains a finite prefix that leads
to a successful state, but there is no upper bound on
the length on those prefixes.

\begin{figure*}[t]
  \hrulefill
  \begin{center}
    \begin{tabular}{c@{\hskip 30pt}c}
      \scalebox{.7}{%
        \begin{tikzpicture}
          \node[state,rectangle,inner sep=3pt] (a) {$\csys{\server}{\client}$};
          \node[state,below left=+20pt and +20pt of a] (b) {\myspace};
          \node[state,below right=+20pt and +20pt of a] (c) {\scalebox{.6}{$\ok$}};
          \node[state,below left=+20pt and +20pt of b] (d) {\scalebox{.6}{$\ok$}};
          \node[state,below=+20pt of c] (g) {\myspace};
          \node[state,below=+20pt of g] (h) {\myspace};
          \node[state,below right=+20pt and +20pt of b] (e) {\scalebox{.6}{$\ok$}};
          \node[below=+20pt of e] (f) {\rotatebox{90}{$\ldots$}};

          \path[->]
          (a) edge node[above,scale=.9] {$\tau$} (b)
          (b) edge node[above,scale=.9] {$\tau$} (d);
          \path[->]
          (b) edge node[above,scale=.9] {$\tau$} (e)
          (e) edge node[left,scale=.9] {$\tau$} (f);

          \path[->]
          (a) edge node[above,scale=.9] {$\tau$} (c)
          (c) edge node[left,scale=.9] {$\tau$} (g)
          (g) edge node[left,scale=.9] {$\tau$} (h);
        \end{tikzpicture}
      }
      &
      \scalebox{.7}{%
        \begin{tikzpicture}
          \node[state,rectangle,inner sep=3pt] (a) {$\csys{\server}{\client}$};
          \node[state,below left=+20pt and +20pt of a] (b) {\myspace};
          \node[state,dashed,red,thick,below right=+20pt and +20pt of a] (c) {\scalebox{.6}{$\ok$}};
          \node[state,dashed,red,thick,below left=+20pt and +20pt of b] (d) {\scalebox{.6}{$\ok$}};
          \node[state,dashed,red,thick,below right=+20pt and +20pt of b] (e)
               {\scalebox{.6}{$\ok$}};
               \node[below=+20pt of e] (f) {\phantom{\rotatebox{90}{$\ldots$}}};

          \path[->]
          (a) edge node[above,scale=.9] {$\tau$} (b)
          (b) edge node[above,scale=.9] {$\tau$} (d);
          \path[->]
          (b) edge node[above,scale=.9] {$\tau$} (e);

          \path[->]
          (a) edge node[above,scale=.9] {$\tau$} (c);
        \end{tikzpicture}
      }
      \\
      \small (1) &
      \small (2)
      \\[2.5em]
      \scalebox{.7}{%
                \begin{tikzpicture}
          \node[state,rectangle,inner sep=3pt] (a) {$\csys{\server}{\client}$};
          \node[state,dashed,red,thick,below left=+20pt and +20pt of a] (b) {\myspace};
          \node[state,dashed,below right=+20pt and +20pt of a] (c) {\scalebox{.6}{$\ok$}};
          \node[state,dashed,below left=+20pt and +20pt of b] (d) {\scalebox{.6}{$\ok$}};
          \node[state,dashed,below right=+20pt and +20pt of b] (e)
               {\scalebox{.6}{$\ok$}};
               \node[below=+20pt of e] (f) {\phantom{\rotatebox{90}{$\ldots$}}};

          \path[->]
          (a) edge node[above,scale=.9] {$\tau$} (b)
          (b) edge node[above,scale=.9] {$\tau$} (d);
          \path[->]
          (b) edge node[above,scale=.9] {$\tau$} (e);

          \path[->]
          (a) edge node[above,scale=.9] {$\tau$} (c);
                \end{tikzpicture}

      }
      &
      \scalebox{.7}{%
                \begin{tikzpicture}
          \node[state,dashed,red,thick,rectangle,inner sep=3pt] (a) {$\csys{\server}{\client}$};
          \node[state,dashed,below left=+20pt and +20pt of a] (b) {\myspace};
          \node[state,dashed,below right=+20pt and +20pt of a] (c) {\scalebox{.6}{$\ok$}};
          \node[state,dashed,below left=+20pt and +20pt of b] (d) {\scalebox{.6}{$\ok$}};
%          \node[state,below=+20pt of c] (g) {\myspace};
%          \node[state,below=+20pt of g] (h) {\myspace};
          \node[state,dashed,below right=+20pt and +20pt of b] (e)
               {\scalebox{.6}{$\ok$}};
               \node[below=+20pt of e] (f) {\phantom{\rotatebox{90}{$\ldots$}}};

          \path[->]
          (a) edge node[above,scale=.9] {$\tau$} (b)
          (b) edge node[above,scale=.9] {$\tau$} (d);
          \path[->]
          (b) edge node[above,scale=.9] {$\tau$} (e);

          \path[->]
          (a) edge node[above,scale=.9] {$\tau$} (c);
                \end{tikzpicture}
      }
      \\
      \small (3) &
      \small (4)
    \end{tabular}
  \end{center}
  \caption{\Intentional\ approach: visiting the tree bottom-up, starting from the bar.}
  \label{fig:intensional-layer-by-layer}
  \hrulefill
\end{figure*}
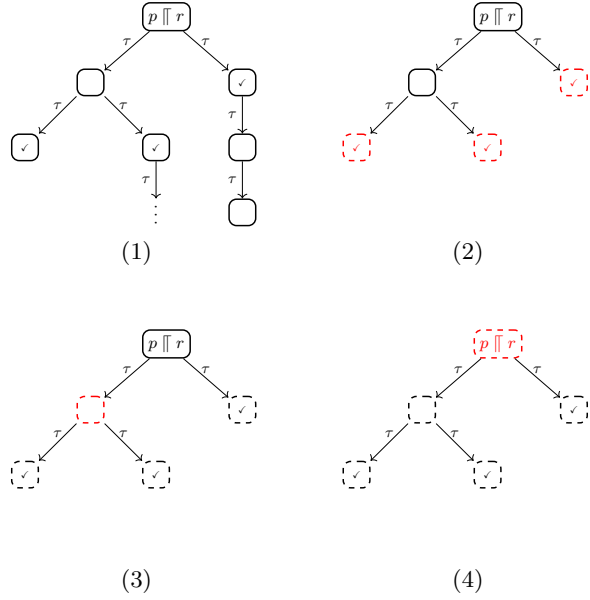

\paragraph{Do \extensional and  \intentional predicates coincide ?}
Extensional and \intentional\ definitions make us reason on
computational trees in strikingly different fashions:
{\em extensionally} we reason maximal path by maximal path,
while {\em \intentional ly} we reason bottom-up, starting
from the nodes in a predicate that bars the tree.\footnote{Whence the
name {\em bar}-induction.}
It is natural to ask whether reasoning in these different manners
ultimately leads to the same outcomes.
In our setting this amounts to proving that the predicates
$\opMust$ and $\opMusti$ are logically equivalent, and similarly for
the convergence predicates $\conv$ and $\convi$.
The proof that $\musti{\server}{\client}$ implies
$\Must{\server}{\client}$ is - obviously - by induction on the derivation
of $\musti{\server}{\client}$.
Proving that the extensional predicates imply the \intentional\ ones
is, on the other hand, delicate, because we may have to deal with unbounded structures.
The tree in \rfig{unbounded} is once more the archetypal example:
it has countably many maximal paths, and there is no upper bound on the
depth at which successful states (i.e. nodes in the {\em bar}) are found.

In classical logic one can prove that
$\Must{\server}{\client}$ implies $\musti
{\server}{\client}$ by contradiction.
As we wish to avoid this reasoning principle, the only tool we have
is the axiom of \Barinduction, which states exactly that under
suitable hypotheses, extensionally defined predicates imply their
\intentional ly defined counter-parts.

\begin{figure}[t]
\hrulefill
\begin{center}
\begin{tikzpicture}
  \node[state](t){$\csys{\server}{\client}$};

  \node[state][below=  of t](p20){\myspace};
  \node[state][below=  of p20](p21){\myspace};
  \node[state][below=  of p21](p22){\scalebox{.6}{$\ok$}};

  \node[state][left=of p20] (p10) {\myspace};
  \node[state][below=of p10] (p11) {\scalebox{.6}{$\ok$}};

  \node[state][left =of p10] (p00) {\scalebox{.6}{$\ok$}};

  \node[state][right=of p20](p30){\myspace};
  \node[state][below=of p30](p31){\myspace};
  \node[state][below=of p31](p32){\myspace};
  \node[state][below=of p32](p33){\scalebox{.6}{$\ok$}};

  \node[][right=of p30](p5){\vdots\ldots\vdots};

%Edges
\path (t) edge[to] node[action,swap,scale=.9] {$\tau$} (p00)
      (t) edge[to] node[action,scale=.9] {$\tau$} (p10)
      (t) edge[to] node[action,scale=.9] {$\tau$} (p20)
      (t) edge[to] node[action,swap,scale=.9] {$\tau$} (p30)
      (t) edge[to] node[action,scale=.9] {$\tau$} (p5);

\path (p30) edge[to] node[action,scale=.9] {$\tau$} (p31)
      (p31) edge[to] node[action,scale=.9] {$\tau$} (p32)
      (p32) edge[to] node[action,scale=.9] {$\tau$} (p33)
      (p20) edge[to] node[action,scale=.9] {$\tau$} (p21)
      (p21) edge[to] node[action,scale=.9] {$\tau$} (p22)
      (p10) edge[to] node[action,scale=.9] {$\tau$} (p11);
\end{tikzpicture}
\end{center}
  \caption{An infinite branching computational tree where the
    {\em bar}\ $\ok$ is at unbounded depth.}
  \label{fig:unbounded}
\hrulefill
\end{figure}
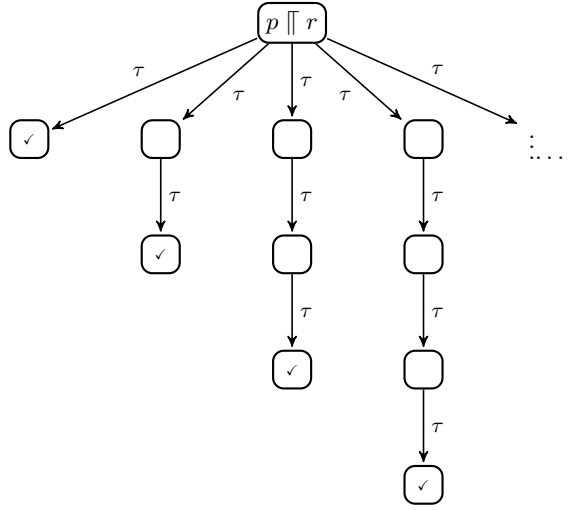

\paragraph{Admissibility.}
To show that the principle is admissible, we prove that it follows
from the Classical Epsilon (CE) axiom of the Coq standard library.
In short, CE gives a choice function $\epsilon$ such that if $p$ is a proof of
$\exists x: A, P x$, then $\epsilon(p)$ is an element of $A$ such that $P
(\epsilon(p))$ holds. It implies Excluded Middle, and thus classical reasoning,
because $A \vee \neg A$ is equivalent to $\exists b: bool, (b=true \land A) \vee
(b=false \land \neg A)$.
Since CE is guaranteed by the Coq developers to be admissible, our statement of
\barinduction is also admissible.

%% TEXT IN THE SUBMITTED VERSION
%%\input{bar-induction-LONG-EXPLANATION}

\subsection{Properties of inductively defined predicates}
\label{sec:properties-intensional-predicates}
Convergence along traces is preserved by the strong transitions $\st{}$.

\begin{lemma}%[\coqConv{acnv_one_step_tau},\coqConv{acnv_one_step_mu}]
  \label{lem:acnv-aux}
  In every LTS,
  $\forevery \state, \stateA \in \States$ and $ \trace \in \Actfin$
  the following facts are true,
  \begin{enumerate}
    \item
       if $\state \cnvalong \trace$
      and $\state \st{\tau} \stateA$ then $\stateA \cnvalong \trace$,\label{pt:acnv-one-step-tau}
    \item
      $\forevery \mu \in \Act \wehavethat \state \cnvalong
      \mu.\trace$ and $ \state \st{\mu} \stateA \imply \state \cnvalong \trace$.\label{pt:acnv-one-step-mu}
  \end{enumerate}
\end{lemma}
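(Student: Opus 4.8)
The plan is to reduce both parts to two elementary observations: that a single $\tau$-step preserves convergence \emph{forwards}, and that a leading $\tau$-step can be absorbed into a weak $\mu$-transition. I would first isolate the auxiliary claim that $\state \conv$ and $\state \st{\tau} \stateA$ imply $\stateA \conv$. Reading $\conv$ as the inductive predicate $\convi$ (legitimate by \rcor{inductive-char-conv}), this is immediate by inversion on the derivation of $\state \convi$: the \mnow rule cannot apply, since it would force $\state \Nst{\tau}$, contradicting $\state \st{\tau} \stateA$; and the \mstep rule provides exactly the premise that every $\tau$-reduct of $\state$ --- in particular $\stateA$ --- converges. Phrasing the claim through $\convi$ keeps the argument constructive, since the reasoning about infinite $\tau$-computations has already been discharged once and for all by \rcor{inductive-char-conv}.

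For \rpt{acnv-one-step-tau} I would perform a case split on $\trace$, following the two defining rules of $\cnvalong$. If $\trace = \varepsilon$, then $\state \cnvalong \varepsilon$ is just $\state \conv$ by \cnvepsilon, and the auxiliary claim delivers $\stateA \conv$, hence $\stateA \cnvalong \varepsilon$. If $\trace = \mu.\trace'$, rule \cnvmu requires both $\stateA \conv$, again supplied by the auxiliary claim, and that every $q$ with $\stateA \wt{\mu} q$ satisfies $q \cnvalong \trace'$. For the latter, from $\state \st{\tau} \stateA$ and $\stateA \wt{\mu} q$ I obtain $\state \wt{\mu} q$ by \wttau, and the hypothesis $\state \cnvalong \mu.\trace'$ then yields $q \cnvalong \trace'$. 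Note that no induction hypothesis on $\trace'$ is needed: the continuation condition transfers verbatim once the leading $\tau$ has been absorbed.

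Part \rpt{acnv-one-step-mu} is the more direct of the two. From $\state \st{\mu} \stateA$ I get $\state \wt{\mu} \stateA$ by combining \wtmu with \wtrefl (taking the empty continuation). Since $\state \cnvalong \mu.\trace$ was derived by \cnvmu, instantiating its second premise at the weak transition $\state \wt{\mu} \stateA$ gives exactly $\stateA \cnvalong \trace$, which is the goal.

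The only mildly delicate ingredient is the auxiliary convergence-preservation claim, and even this becomes routine once one commits to the inductive formulation $\convi$; the rest is bookkeeping with the definition of $\cnvalong$ and the standard closure of $\wt{}$ under prepending a $\tau$-step (\wttau) and under lifting a single visible step (\wtmu together with \wtrefl). I therefore expect no substantial obstacle.
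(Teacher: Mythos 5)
Your proof is correct. The paper states this lemma without any pen-and-paper proof (it is discharged only in the Coq development), so there is no written argument to diverge from; what you do---inversion on the two defining rules of $\cnvalong$, an auxiliary ``a $\tau$-step preserves convergence'' fact obtained by inversion on $\convi$, and absorption of strong steps into weak transitions via \wttau{} and \wtmu{} with \wtrefl---is exactly the natural argument, and it matches how the mechanisation treats these predicates (there $\cnvalong$ is defined directly over the intensional predicate \texttt{terminate}). Your observation that no induction on the trace is needed is also right: the continuation clause of \cnvmu{} quantifies over weak derivatives, and weak derivatives of $\stateA$ are weak derivatives of $\state$.

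Two remarks, neither of which is a gap. First, you tacitly repaired a typo: the printed conclusion of \rpt{acnv-one-step-mu} is $\state \cnvalong \trace$, which is false in general (take $\state = \aa.\ab.\Nil \extc \ab.\Omega$ with $\Omega = \rec{\tau.x}$: then $\state \cnvalong \aa.\ab$ and $\state \st{\aa} \ab.\Nil$, yet $\lnot(\state \cnvalong \ab)$ because $\state \st{\ab} \Omega$ and $\Omega$ diverges); the intended conclusion, confirmed by how the lemma is used in \rlem{acnv-drop-in-the-middle}, is $\stateA \cnvalong \trace$, which is precisely what you prove. Second, your appeal to \rcor{inductive-char-conv} to pass between $\conv$ and $\convi$ imports the countable-branching hypothesis of the bar-induction section, whereas the lemma is asserted for every LTS. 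This is harmless: in the appendix where this lemma lives, $\conv$ is already read as the inductive predicate $\convi$ (the paper's stated convention after that corollary, and the reading used in Coq), so you may perform your inversion directly without the corollary; alternatively, the auxiliary fact also holds for the extensional definition, since prepending $\state \st{\tau} \stateA$ to any maximal computation of $\stateA$ yields a maximal computation of $\state$. With either reading, your proof stands as written.
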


\begin{lemma}%[\coqDer{wt_set_spec}, \coqDer{wt_set}]
  \label{lem:cnvalong-implies-finite-branching}
  For every $\trace \in \Actfin$ and
  $\state \in \ACCS$, if $ \state \cnvalong \trace$ then
  ${\cardinality{\setof{ \stateB }{\state \wt{ \trace} \stateB} }} \in \N.$
\end{lemma}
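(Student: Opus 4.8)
The plan is to induct on the length of the trace $\trace$. Throughout I rely on the fact that the standard LTS $\lts{\ACCS}{\st{}}{\Acttau}$ is \emph{finitely branching}: each \ACCS term has only finitely many outgoing transitions (input- and $\tau$-prefixes are guarded, sums are finite, atoms emit a single output, and the recursion rule performs a single $\tau$-unfolding), so for every $\state$ and every $\alpha$ the set $\setof{\stateB}{\state \st{\alpha} \stateB}$ is finite. Convergence is exactly the ingredient needed to turn this into finiteness of the \emph{weak} derivative set, which would otherwise be unbounded because of $\tau$-cycles.

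First I would isolate the base case as a sub-lemma: if $\state \conv$ then $\setof{\stateB}{\state \wt{\varepsilon} \stateB}$ is finite. By \rcor{inductive-char-conv}, $\state \conv$ is equivalent to $\state \convi$, so I would argue by induction on the derivation of $\state \convi$. If $\state$ is $\tau$-stable, the only weak $\varepsilon$-transition is \wtrefl, so the set is the singleton $\set{\state}$. Otherwise $\state \st{\tau}$ and every $\tau$-successor converges; since $\state \wt{\varepsilon} \stateB$ holds iff $\stateB = \state$ or $\state \st{\tau} \stateA \wt{\varepsilon} \stateB$ for some $\stateA$, we obtain
\[
  \setof{\stateB}{\state \wt{\varepsilon} \stateB} = \set{\state} \cup \bigcup_{\state \st{\tau} \stateA} \setof{\stateB}{\stateA \wt{\varepsilon} \stateB},
\]
a finite union (finite branching) of finite sets (induction hypothesis), hence finite. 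This constructive induction on $\convi$ replaces the classical appeal to \koenigslemma (the fan theorem).

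For the inductive step $\trace = \mu.\traceB$, the hypothesis $\state \cnvalong \mu.\traceB$ gives $\state \conv$ together with $\stateA \cnvalong \traceB$ for every weak $\mu$-derivative $\stateA$. I would first show $\setof{\stateA}{\state \wt{\mu} \stateA}$ is finite. Decomposing a weak $\mu$-step as $\state \wt{\varepsilon} d \st{\mu} e \wt{\varepsilon} \stateA$, the set $D = \setof{d}{\state \wt{\varepsilon} d}$ is finite by the sub-lemma, and $E = \bigcup_{d \in D}\setof{e}{d \st{\mu} e}$ is finite by finite branching. Each $e \in E$ satisfies $\state \wt{\mu} e$, hence $e \cnvalong \traceB$ and in particular $e \conv$, so $\setof{\stateA}{e \wt{\varepsilon} \stateA}$ is finite by the sub-lemma again; thus $\setof{\stateA}{\state \wt{\mu} \stateA} = \bigcup_{e \in E}\setof{\stateA}{e \wt{\varepsilon} \stateA}$ is finite. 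Finally, since $\state \wt{\mu.\traceB} \stateB$ iff $\state \wt{\mu} \stateA \wt{\traceB} \stateB$ for some $\stateA$, and every such $\stateA$ satisfies $\stateA \cnvalong \traceB$, the outer induction hypothesis makes each $\setof{\stateB}{\stateA \wt{\traceB} \stateB}$ finite, so that
\[
  \setof{\stateB}{\state \wt{\mu.\traceB} \stateB} = \bigcup_{\stateA \,\in\, \setof{\stateA}{\state \wt{\mu} \stateA}} \setof{\stateB}{\stateA \wt{\traceB} \stateB}
\]
is a finite union of finite sets.

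I expect the main obstacle to be the base case: recognising that convergence is precisely what tames the otherwise unbounded $\tau$-closure, and packaging this constructively. Using the inductive predicate $\convi$ (available thanks to \rcor{inductive-char-conv}) lets me replace \koenigslemma by a clean rule induction. The only additional care needed is the observation, used in the inductive step, that the intermediate states $e$ reached immediately after the visible action $\mu$ are themselves weak $\mu$-derivatives of $\state$ and therefore converge, which is what licenses reapplying the sub-lemma to each of them.
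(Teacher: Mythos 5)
Your proof is correct, and it uses exactly the ingredients the paper itself supplies for this purpose: image-finiteness of the \ACCS LTS (\rlem{output-sets-finite}, \rlem{st-finite-image}) and the intensional convergence predicate of \rcor{inductive-char-conv}, whose rule induction replaces the appeal to \koenigslemma in keeping with the paper's constructive stance. The paper states this lemma without a pen-and-paper proof (it is discharged in the Coq development), and your decomposition — outer induction on the trace, inner rule induction on $\convi$ for the $\tau$-closure, splitting a weak step as $\wt{\varepsilon}\st{\mu}\wt{\varepsilon}$ — is the intended argument; your observation that the intermediate states reached after $\mu$ are themselves weak $\mu$-derivatives (hence converge by \cnvmu) is precisely the point that licenses the nested applications of the sub-lemma.
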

The hypothesis of convergence in \rlem{cnvalong-implies-finite-branching} is necessary.
%arbitrary sequences of silent moves, i.e. given a $q$ the set $
%\setof{q}{ p \wt{} q }$ need not be finite. In presence of recursion
This is witnessed by the process $p =  \rec{(x \Par \mailbox{\co{a}})}$,
which realises an ever lasting addition of a message to the mailbox:
$$
p \st{\tau}
p \Par \mailbox{\co{a}} \st{\tau}  p \Par \mailbox{\co{a} \Par \co{a}}
\st{\tau}  p \Par \mailbox{\co{a} \Par \co{a} \Par \co{a}} \st{\tau}
\ldots
$$
In more general languages also image-finiteness may fail. An example
is given on page 267 of \cite{DBLP:conf/mfcs/HennessyP80}.

%\subsubsection{Properties of the $\opMusti$ relation}
%\label{sec:properties-musti}

The predicate $\opMusti$ is preserved by atoms freely changing their locations in systems.
This is coherent with the intuition that the mailbox is a global and
shared one. For instance the systems
$ \csys{ a.\Nil \Par \mailbox{ \co{d} } }{ d.\Unit }$  and  $ \csys{  a.\Nil }{ d.\Unit
  \Par \mailbox{\co{d}} }$, which in the mechanisation are respectively
\begin{minted}{coq}
  (pr_par (pr_input a pr_nil) (pr_out d), pr_input d pr_succes)
\end{minted}
and
\begin{minted}{coq}
  (pr_input a pr_nil, pr_par (pr_input a pr_succes) (pr_out d))
\end{minted}
have the same mailbox, namely $\mailbox{ \co{d} }$.

%%% SUBSUMED BY WHIC LEMMA ?
%The following lemma is analogous to \cite[Lemma 6.9]{DBLP:journals/iandc/BorealeN95}.
%% \begin{lemma}%[\coqMT{must_output_swap}]
%%   \label{lem:must-i-output-swap}
%%   \label{lem:musti-output-swap}
%%   $\Forevery p, r  \in \States$, and $\mu \in \Names \wehavethat p \Par \mailbox{\co{\mu}} \musti r$ if and only if $p \musti r \Par \mailbox{\co{\mu}}$.
%% \end{lemma}

The predicate $\opMusti$ enjoys three useful properties:
%whose proofs we defer to \rapp{proofs-props-musti}, namey that
it ensures convergence of servers interacting with clients
that are not in a good state; it is preserved by internal computation
of servers; and it is preserved also by interactions with unhappy clients.
The arguments to show these facts are by rule induction on the
hypothesis $\musti{ \server }{\client }$.
The last fact is a consequence of a crucial property of $\opMusti$,
namely \rlem{must-output-swap-l-fw}.

\begin{lemma}%[\coqMT{must_terminate_or_happy}]
  \label{lem:must-terminate}
  Let $\genlts_A \in \obaFW$ and $\genlts_B \in \obaFB$.
  For every $\server \in \StatesA$, $\client \in \StatesB$ we have that
  $\musti{ \server }{\client }$ implies that $\server \convi$ or $\good{\client}$.
\end{lemma}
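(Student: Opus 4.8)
The plan is to prove the implication by rule induction on the derivation of $\musti{\server}{\client}$, using the inductive presentation of $\opMusti$ as $\mathsf{int}_{Q_2}$ with $Q_2(\cdot,\client)=\good{\client}$ given by the rules \mnow and \mstep of \rdef{def-bar}.

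The base case is immediate: if the derivation ends with \mnow then $\good{\client}$ holds, so the right disjunct of the conclusion is established. In the inductive case the derivation ends with \mstep, which provides two things: the system can reduce, $\csys{\server}{\client}\st{}$, and an induction hypothesis asserting that every one-step reduct $\csys{\server'}{\client'}$ of $\csys{\server}{\client}$ already satisfies $\server'\convi$ or $\good{\client'}$. If $\good{\client}$ holds we are again done by the right disjunct; so assume $\lnot\good{\client}$, and let us derive $\server\convi$.

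The key move is to route the server's own $\tau$-transitions through this induction hypothesis. For every $\server'$ with $\server\st{\tau}\server'$, rule \stauserver yields a system reduction $\csys{\server}{\client}\st{}\csys{\server'}{\client}$ that leaves the client unchanged; the induction hypothesis then gives $\server'\convi$ or $\good{\client}$, and since $\lnot\good{\client}$ we conclude $\server'\convi$. Hence every $\tau$-successor of $\server$ converges. We close the argument by the inductive definition of $\convi=\mathsf{int}_{Q_1}$: if $\server\Nst{\tau}$ then $\server\convi$ by \mnow, and if $\server\st{\tau}$ then $\server\convi$ by \mstep, whose premises are exactly that $\server$ can take a $\tau$-step and that all its $\tau$-successors converge. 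In both cases the left disjunct holds.

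The step I expect to demand the most care is the bookkeeping between the two distinct inductive predicates involved: the induction runs on $\opMusti$, whose underlying STS is the system reduction relation $\st{}$, whereas the goal $\server\convi$ lives in the LTS whose steps are the server $\tau$-transitions $\st{\tau}$. Rule \stauserver is precisely the bridge, ensuring that the $\tau$-successors of $\server$ are mirrored by system reductions that keep $\client$ fixed, so that the standing assumption $\lnot\good{\client}$ propagates unchanged and the \mnow/\mstep premises for $\convi$ can be discharged from the induction hypothesis for $\opMusti$. Note that the Selinger axioms attached to $\genlts_A$ and $\genlts_B$ play no role in this structural argument; the hypotheses $\genlts_A\in\obaFW$ and $\genlts_B\in\obaFB$ only pin down the ambient setting.
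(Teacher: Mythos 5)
Your proof is correct and takes essentially the same approach as the paper: the paper's proof of this lemma is exactly a rule induction on the hypothesis $\musti{\server}{\client}$ (stated there only as a one-line indication of method). Your elaboration --- using \stauserver to turn each server $\tau$-step into a system reduction with the client unchanged, so the induction hypothesis plus $\lnot\good{\client}$ yields convergence of every $\tau$-successor, and then closing with the two rules defining $\convi$ --- fills in that argument correctly, including the observation that the $\obaFW$/$\obaFB$ hypotheses are not actually used.
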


\begin{lemma}%[\coqMT{must_lts_tau_srv}]
  \label{lem:must-lts-a-tau}
  \label{lem:musti-preserved-by-left-tau}
  Let $\genlts_A \in \obaFW$ and $\genlts_B \in \obaFB$.
  For every $\server, \server' \in \StatesA$, $\client \in \StatesB$ we have that
  $\musti{\server}{\client} $ and $\server \st{\tau} \server'$ imply
  $\musti{\serverB}{\client}$.
\end{lemma}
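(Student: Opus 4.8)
The plan is to prove the statement by a single case analysis (inversion) on the derivation of the hypothesis $\musti{\server}{\client}$, exploiting directly that this predicate is defined inductively in \rdef{def-bar}. By \req{def-intentional-predicates}, $\musti{\server}{\client}$ is by definition $\mathsf{int}_{Q_2}(\csys{\server}{\client})$, where $Q_2(\server,\client) \iff \good{\client}$ and the transition relation underlying $\mathsf{int}$ is the \emph{system reduction} $\st{}$ of \rfig{rules-STS}. The key observation I would isolate first is that a server $\tau$-move lifts to a system reduction: from $\server \st{\tau} \server'$, rule \stauserver gives $\csys{\server}{\client} \st{} \csys{\server'}{\client}$. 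Note that the conclusion $\musti{\serverB}{\client}$ in the statement is a typo for $\musti{\server'}{\client}$, as confirmed by the label \texttt{musti-preserved-by-left-tau}.

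I would then invert the last rule used to establish $\mathsf{int}_{Q_2}(\csys{\server}{\client})$. In the \rname{axiom} case we have $\good{\client}$; since the reduction touches only the server and leaves the client unchanged, $Q_2(\csys{\server'}{\client})$ still holds, so the \rname{axiom} rule yields $\musti{\server'}{\client}$ at once (no appeal to invariance of $\goodSym$ is even needed, the client being literally the same). In the \rname{ind-rule} case the two premises state that $\csys{\server}{\client}$ can reduce and that \emph{every} one of its reducts lies in $\mathsf{int}_{Q_2}$. Instantiating the universally quantified premise with the particular reduct $\csys{\server'}{\client}$ produced by \stauserver above gives exactly $\mathsf{int}_{Q_2}(\csys{\server'}{\client})$, that is $\musti{\server'}{\client}$.

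The pleasant feature of this plan is that no rule induction and none of the output-buffered axioms are actually required: the hypotheses $\genlts_A \in \obaFW$ and $\genlts_B \in \obaFB$ are inessential here and are presumably carried along only for uniformity with the neighbouring lemmas. The entire content is that a $\tau$-transition of the server is \emph{one} of the ways the system $\csys{\server}{\client}$ can step, so the co-premise of the inductive rule already packages the desired conclusion. Consequently I foresee no real obstacle; the only point that demands a moment of care is to select the correct reduction rule, namely \stauserver rather than \stauclient or \scom, when exhibiting the reduction $\csys{\server}{\client} \st{} \csys{\server'}{\client}$ that feeds the inductive premise.
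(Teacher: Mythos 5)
Your proof is correct and is essentially the paper's argument: the paper proves this lemma by rule induction on the derivation of $\musti{\server}{\client}$, and your inversion is exactly that induction with the (accurate) observation that the inductive hypothesis is never needed, because the universally quantified premise of the \mstep rule already yields the conclusion once the reduct $\csys{\server'}{\client}$ is exhibited via \stauserver. Your side remarks are also right: the conclusion $\musti{\serverB}{\client}$ is a typo for $\musti{\server'}{\client}$, and the $\obaFW$/$\obaFB$ hypotheses play no role in this particular proof.
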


\begin{lemma}
  \label{lem:good-preserved-by-lts-output-iff}
  For every $\genlts_B \in \oba$, $\client \in \StatesB$
  %% \TODO{the client is not used in the statement.} %%
  and name $\aa \in \Names$ such that
  $\serverA \st{\co{\aa}} \serverA'$ then
  $\good{\serverA}$ iff $\good{\serverA'}$.
\end{lemma}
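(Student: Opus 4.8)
The plan is to recognise that this statement is, modulo the naming of states, precisely the invariance-under-outputs assumption \req{good-invariance} that we imposed on the predicate $\goodSym$. Recall that we require $\goodSym$ to satisfy the following: whenever $\client \st{\co{a}} \client'$ we have $\good{\client} \iff \good{\client'}$. The lemma asks for exactly this biconditional, under the hypothesis of a single output transition $\serverA \st{\co{\aa}} \serverA'$ in an LTS $\genlts_B \in \oba$.

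First I would observe that, in any LTS, a transition whose label is $\co{\aa}$ with $\aa \in \Names$ is by definition an output transition, since $\co{\aa} \in \overline{\Names}$. Hence the hypothesis $\serverA \st{\co{\aa}} \serverA'$ is an instance of the premise of \req{good-invariance}, with output name $\aa$. There is nothing to unfold from the $\oba$ axioms of \rfig{axioms} here: those axioms constrain only the transition structure of outputs and say nothing about $\goodSym$, so the membership $\genlts_B \in \oba$ serves merely to fix the ambient LTS in which the transition lives and to guarantee that $\co{\aa}$ is a bona fide output action.

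Then I would conclude directly by applying \req{good-invariance} to this output transition, which yields $\good{\serverA} \iff \good{\serverA'}$, as required. The only subtlety worth checking is that the standing assumption on $\goodSym$, originally phrased for clients, applies to an arbitrary state $\serverA$ of an arbitrary $\oba$ LTS; since $\goodSym$ is a single decidable predicate on states and \req{good-invariance} quantifies universally over all output transitions, no gap arises. I therefore do not anticipate any genuine obstacle: the result is immediate from the standing invariance assumption, and the proof amounts to a one-line appeal to \req{good-invariance}.
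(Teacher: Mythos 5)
Your proposal is correct and matches the paper's own proof: the paper likewise discharges this lemma by a direct appeal to the standing invariance assumption on $\goodSym$, which in the mechanisation is packaged as the \lstinline{Good} typeclass fields \lstinline{good_preserved_by_lts_output} and \lstinline{good_preserved_by_lts_output_converse}, i.e.\ exactly \req{good-invariance}. Your observation that the $\oba$ axioms play no role beyond fixing the ambient LTS is also accurate.
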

\begin{proof}
  This is a property of \lstinline{Good}, more specifically
  \lstinline{good_preserved_by_lts_output} and \lstinline{good_preserved_by_lts_output_converse}.
\end{proof}

\noindent%

The next lemma states that when reasoning on~$\opMusti$,
outputs can be freely moved from the client to the server side of
systems, if servers %are forwarders.
have the forwarding ability.
Its proof uses {\em all} the axioms for output-buffered agents with
feedback, and the lemma itself is used in the proof of the main
result on acceptance bsets, namely \rlem{completeness-part-2.2-auxiliary}.
\begin{lemma}[ Output swap ]
  \label{lem:must-output-swap-l-fw}
  Let $\genlts_A \in \obaFW$ and
  $\genlts_B \in \obaFB$.
  $\Forevery \serverA_1, \serverA_2 \in \StatesA$,
  every $\client_1, \client_2 \in \StatesB$ and name $\aa \in \Names$ such that
  $\serverA_1 \st{\co{\aa}} \serverA_2$ and
  $\client_1 \st{\co{\aa}} \client_2$,
  if $\musti{\serverA_1}{\client_2}$ then $\musti{\serverA_2}{\client_1}$.
\end{lemma}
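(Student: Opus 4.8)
The plan is to proceed by rule induction on the derivation of $\musti{\serverA_1}{\client_2}$, after generalising the statement so that the server output $\serverA_1 \st{\co{\aa}} \serverA_2$, the client output $\client_1 \st{\co{\aa}} \client_2$ and the name $\aa$ are universally quantified; this generalisation is what makes the induction hypothesis applicable to the sub-derivations that will arise. Recall that $\musti{\serverA_1}{\client_2}$ is the inductive must-predicate over the client--server STS of \rfig{rules-STS}. In the base case the derivation ends with \mnow, so $\good{\client_2}$; since $\client_1 \st{\co{\aa}} \client_2$, invariance of $\goodSym$ under outputs (\req{good-invariance}) gives $\good{\client_1}$, and hence $\musti{\serverA_2}{\client_1}$ by \mnow.

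In the inductive step the derivation ends with \mstep, so $\csys{\serverA_1}{\client_2} \st{}$ and every reduct of it is in $\opMusti$ by a strict sub-derivation; these per-reduct sub-derivations are exactly what the induction hypothesis is available for. To conclude $\musti{\serverA_2}{\client_1}$ I would again apply \mstep, which requires (A) that $\csys{\serverA_2}{\client_1}$ can reduce, and (B) that every one of its reducts satisfies $\opMusti$. For (A), since $\genlts_A \in \obaFW$ the \boom axiom yields $\serverA_2 \st{\aa} \serverA_2^{\circ} \st{\co{\aa}} \serverA_2$, and $\serverA_2 \st{\aa} \serverA_2^{\circ}$ communicates with $\client_1 \st{\co{\aa}} \client_2$ via \scom, so $\csys{\serverA_2}{\client_1}$ indeed reduces. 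The heart of the proof is (B): I would classify the reducts of $\csys{\serverA_2}{\client_1}$ into server-$\tau$ steps, client-$\tau$ steps, and communications, and in each case re-route the reduction through $\csys{\serverA_1}{\client_2}$ using the Selinger axioms, so that either the \mstep premise of the hypothesis applies directly, or the generalised induction hypothesis applies after swapping $\co{\aa}$ back.

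Concretely, for a server step $\serverA_2 \st{\tau} \serverA_2'$, \outputcommutativity applied to $\serverA_1 \st{\co{\aa}} \serverA_2 \st{\tau} \serverA_2'$ produces $\serverA_1'$ with $\serverA_1 \st{\tau} \serverA_1'$ and $\serverA_1' \st{\co{\aa}} \serverA_2'$; the reduct $\csys{\serverA_1'}{\client_2}$ (reached by \stauserver) is in $\opMusti$ by the step premise, and the induction hypothesis on $\serverA_1' \st{\co{\aa}} \serverA_2'$ and $\client_1 \st{\co{\aa}} \client_2$ yields $\musti{\serverA_2'}{\client_1}$. For a client step $\client_1 \st{\tau} \client_1'$, \outputtau applied to the co-initial $\client_1 \st{\co{\aa}} \client_2$ and $\client_1 \st{\tau} \client_1'$ gives two possibilities: either $\client_1' \st{\co{\aa}} \client_2'$ with $\client_2 \st{\tau} \client_2'$, where the induction hypothesis on the reduct $\csys{\serverA_1}{\client_2'}$ closes the goal, or $\client_2 \st{\aa} \client_1'$, in which case $\csys{\serverA_1}{\client_2} \st{} \csys{\serverA_2}{\client_1'}$ by \scom (using $\serverA_1 \st{\co{\aa}} \serverA_2$) and the step premise gives $\musti{\serverA_2}{\client_1'}$ directly.

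The communication case is where I expect the real work. Here $\serverA_2 \st{\mu} \serverA_2'$ and $\client_1 \st{\co{\mu}} \client_1'$ for a visible $\mu$. I would first push the pending output past the server action by \outputcommutativity, obtaining $\hat{\serverA}$ with $\serverA_1 \st{\mu} \hat{\serverA} \st{\co{\aa}} \serverA_2'$, and then reconcile the two co-initial client transitions $\client_1 \st{\co{\aa}} \client_2$ and $\client_1 \st{\co{\mu}} \client_1'$. When $\mu \neq \aa$ (so these are two distinct outputs, or an output and an input), \outputconfluence yields a common $\client'$ with $\client_2 \st{\co{\mu}} \client'$ and $\client_1' \st{\co{\aa}} \client'$, so that $\csys{\serverA_1}{\client_2} \st{} \csys{\hat{\serverA}}{\client'}$ by \scom and the induction hypothesis on $\hat{\serverA} \st{\co{\aa}} \serverA_2'$ and $\client_1' \st{\co{\aa}} \client'$ gives $\musti{\serverA_2'}{\client_1'}$. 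The one genuinely different situation is $\mu = \aa$, i.e.\ the server inputs $\aa$ while the client repeats the output $\co{\aa}$: then \outputdeterminacy forces $\client_2 = \client_1'$, and \fwdfeedback applied to $\serverA_1 \st{\co{\aa}} \serverA_2 \st{\aa} \serverA_2'$ gives either $\serverA_1 = \serverA_2'$ (so the goal is the hypothesis) or $\serverA_1 \st{\tau} \serverA_2'$ (so the goal follows from the step premise). The main obstacle is organising this communication case cleanly: getting the quantifier generalisation right so the hypothesis fires on each re-routed reduct, and verifying that exactly the degenerate sub-case $\mu = \aa$ needs \outputdeterminacy and \fwdfeedback in place of \outputcommutativity and \outputconfluence. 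It is precisely here that essentially all the output-buffering axioms come into play.
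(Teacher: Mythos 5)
Your proposal is correct and follows essentially the same route as the paper's proof: rule induction on $\musti{\serverA_1}{\client_2}$ with the states and the name $\aa$ generalised, the base case closed by output-invariance of $\goodSym$, and the inductive case closed by \mstep with exactly the paper's case analysis — \outputcommutativity for server $\tau$-steps, \outputtau for client $\tau$-steps, and \outputconfluence/\outputcommutativity versus \outputdeterminacy/\fwdfeedback for communications according to whether $\mu \neq \aa$ or $\mu = \aa$. The one place you diverge is requirement (A): you discharge the existence of a reduction of $\csys{\serverA_2}{\client_1}$ in one line via the \boom axiom (input-receptivity of \obaFW) plus \scom, which is valid and notably shorter than the paper's re-routing of the reduction guaranteed by the \mstep premise through a further three-way case analysis.
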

%% \textbf{\rlem{must-output-swap-l-fw}}
%% %\begin{lemma}[ Output swap ]
%% %  \label{lem:must-output-swap-l-fw}
%%   Let $\genlts_A \in \obaFW$ and
%%   $\genlts_B \in \obaFB$.
%%   $\Forevery \serverA_1, \serverA_2 \in \StatesA$,
%%   every $\client_1, \client_2 \in \StatesB$ and name $\aa \in \Names$ such that
%%   $\serverA_1 \st{\co{\aa}} \serverA_2$ and
%%   $\client_1 \st{\co{\aa}} \client_2$,
%%   if $\musti{\serverA_1}{\client_2}$ then $\musti{\serverA_2}{\client_1}$.
%% %\end{lemma}
\begin{proof}
  We proceed by induction on $\musti{\serverA_1}{\client_2}$.
  In the base case  $\musti{\serverA_1}{\client_2}$ is derived using the rule \mnow
  and thus $\good{\client_2}$.
  \rlem{good-preserved-by-lts-output-iff} implies that $\good{\client_1}$,
  and so we prove $\musti{\serverA_2}{\client_1}$ using rule \mnow.
  We are done with the base case.

  In the inductive case, the hypothesis $\musti{\serverA_2}{\client_1}$ has been derived
  via an rule \mstep, and we therefore know the following facts:
  \begin{enumerate}
  \item
    \label{must-output-swap-l-fw-h-2-1}
    \label{pt:output-swap-inductive-case-fact-1}
    $\csys{\serverA_1}{\client_2} \st{\tau} \csys{\hat{\server}}{\hat{\client}} $, and
  \item
    \label{pt:output-swap-inductive-case-fact-2}
    $\Forevery \serverA', \client'$ such that
    $\csys{\serverA_1}{\client_2}\st{\tau}\csys{\serverA'}{\client'}$ we have that
    $\musti{\serverA'}{\client'}$.
  \end{enumerate}

  We prove $\musti{\server_2}{\client_1}$ by applying rule \mstep. In turn this requires us to show that \begin{enumerate}[(i)]
  \item
    \label{must-output-swap-l-fw-g-1}
    $\csys{\server_2}{\client_1} \st{\tau}$, and that
  \item
    \label{must-output-swap-l-fw-g-2}
    for each $\server'$ and $\client'$ such that
    $\csys{\server_2}{\client_1} \st{\tau} \csys{\serverA'}{\client'}$,
    we have $\musti{\server'}{\client'}$.
  \end{enumerate}

  We prove (\ref{must-output-swap-l-fw-g-1}). The argument starts with
  a case analysis on how the transition
  (\ref{pt:output-swap-inductive-case-fact-1}) has been derived.
  There are the following three cases:
    \begin{description}
  \item[\stauserver]
    \label{must-output-swap-l-fw-g-1-1}
    \label{pt:must-output-proof-of-i-stauserver}
    a $\tau$-transition performed by the server such that
    $\serverA_1 \st{\tau} \hat{\server}$ and that $\hat{\client} = \client_2$, or
  \item[\stauclient]
    \label{must-output-swap-l-fw-g-1-2}
    \label{pt:must-output-proof-of-i-stauclient}
    a $\tau$-transition performed by the client such that
    $\client_2 \st{\tau} \hat{\client}$ and that $\hat{\server} = \server_1$, or
  \item[\scom]
    \label{must-output-swap-l-fw-g-1-3}
    \label{pt:must-output-proof-of-i-scom}
    an interaction between the server $\server_1$ and the client
    $\client_2$ such that 
    $\serverA_1 \st{\mu} \hat{\server}$ and that $\client_2 \st{\co{\mu}} \hat{\client}$.
    \end{description}

   %%%%%%%%% CASE STAUSERVER 
    In case \stauserver\ %% THE REFERENCE TO DESCRIPTION SEEMS NOT TO
    %% WORK (\ref{pt:must-output-proof-of-i-stauserver})
    we use the \outputtau axiom together with
    the transitions $\server_1 \st{\co{\aa}} \server_2$ and
    $\server_1 \st{\tau} \hat{\serverA}$
    to obtain that either:
    \begin{itemize}
  \item
    there exists a $\server_3$ such that $\serverA_2 \st{\tau} \server_3$ and
    $\hat{\server} \st{\co{\aa}} \server_3$, or
  \item
    $\server_2 \st{\aa} \server_3$.
  \end{itemize}
    In the first case $\server_2 \st{\tau} \server_3$ let us construct the transition
  $\csys{\serverA_2}{\client_1} \st{\tau} \csys{\server_3}{\client_1}$ as required.
    In the second case recall that by hypothesis $\client_1 \st{\co{\aa}}
  \client_2$, and thus the transition  $\server_2 \st{\aa}
  \hat{\server}$ and rule \scom\ let us construct the desired reduction
    $\csys{\server_2}{\client_1} \st{\tau} \csys{\hat{\server}}{\client_2}$.

  %%%%%%%%% CASE STAUCLIENT
  In case \stauclient\ %(\ref{must-output-swap-l-fw-g-1-2})
  we use the \outputcommutativity axiom together with
  the transitions $\client_1 \st{\co{\aa}} \client_2 \st{\tau} \hat{\client}$
  to obtain a $\client_3$ such that $\client_1 \st{\tau} \client_3 \st{\co{\aa}} \hat{\client}$
  and it follows that there exists the silent move $\csys{\serverA_2}{\client_1} \st{\tau} \csys{\serverA_2~}{~\client_3}$.

  %%%%%%%%% CASE SCOM
  In case \scom\ %(\ref{must-output-swap-l-fw-g-1-3})
  we have that $\serverA_1 \st{\mu} \hat{\server}$ and $\client_2 \st{\co{\mu}} \hat{\client}$.
  We distinguish whether $\mu = \co{\aa}$ or not.
  If $\mu = \co{\aa}$ then observe that
  $\client_1 \st{\co{\aa}} \client_2 \st{\aa} \hat{\client}$.
  Since by hypothesis $\client_1,\client_2
  \in \StatesB$ and $\genlts_B \in \obaFB$ we apply 
  \outputfeedback axiom to these transitions and obtain $\client_1 \st{\tau}
  \hat{\client}$. An application of \scom\ let us construct
  the desired transition $\csys{\server_2}{\client_1} \st{\tau}
  \csys{\serverA_2}{\hat{\client}}$.

  If $\mu \neq \co{\aa}$ we apply the \outputconfluence axiom to the transitions
  $\server_1 \st{\co{\aa}} \server_2$ and $\serverA_1 \st{\mu} \hat{\server}$
  to obtain a $\server_3$ such that
  $\serverA_2 \st{\mu} \server_3$ and $\hat{\server} \st{\co{\aa}} \server_3$.
  We then apply the \outputcommutativity axiom to obtain
  $\client_1 \st{\co{\mu}} \client_3 \st{\co{\aa}} \hat{\client}$ for
  some $\client_3$.
  Finally, we have the desired $\csys{\server_2}{\client_1} \st{\tau} \csys{\hat{\server}}{\client_3}$
  thanks to the existence of an interaction between $\server_2$ and $\client_1$
  that follows from
  $\server_2 \st{\mu} \server_3$ and $\client_1 \st{\co{\mu}} \client_2$.
  This concludes the proof of  (\ref{must-output-swap-l-fw-g-1}).

  We now tackle (\ref{must-output-swap-l-fw-g-2}). First of all,
  note that the inductive hypothesis states the following fact,
  \begin{center}
  For every
  $\serverA', \client', \serverA_0$ and $\client_0$,
  such that
  $\csys{\serverA_1}{\client_2}\st{\tau}\csys{\serverA'}{\client'}$,
  $\serverA'\st{\co{\aa}}\serverA_0$
  and
  $\client_0\st{\co{\aa}}\client'$
  then
  $\musti{\serverA_0}{\client_0}$.
  \end{center}
  
  Fix a transition
  $$
  \csys{\serverA_2}{\client_1} \st{\tau} \csys{\serverA'}{\client'},
  $$
  we must show $\musti{\serverA'}{\client'}$.
  We proceed by case analysis on the rule used to derive the
  transition at issue, and the cases are as follows,
  \begin{enumerate}[(a)]
  \item
    \label{must-output-swap-l-fw-g-2-1}
    a $\tau$-transition performed by the server such that
    $\serverA_2 \st{\tau} \serverA'$ and that $\client' = \client_1$, or
  \item
    \label{must-output-swap-l-fw-g-2-2}
    a $\tau$-transition performed by the client such that
    $\client_1 \st{\tau} \client'$ and that $\serverA' = \serverA_2$, or
  \item
    \label{must-output-swap-l-fw-g-2-3}
    an interaction between the server $\serverA_2$ and the client
    $\client_1$ such that
    $\serverA_2 \st{\mu} \serverA'$
    and that 
    $\client_1 \st{\co{\mu}} \client'$.
  \end{enumerate}

    In case (\ref{must-output-swap-l-fw-g-2-1})
  we have $\server_2 \st{\tau} \server'$
  and $\client' = \client_1$ and hence we must show $\musti{\server'}{\client_1}$.
  We apply the \outputcommutativity axiom to
  the transitions $\server_1 \st{\co{\aa}} \server_2 \st{\tau} \server'$
  to obtain a $\server_3$ such that
  $\server_1 \st{\tau} \server_3 \st{\co{\aa}} \server'$.
  We apply the inductive hypothesis with
  $\serverA' = \server_3, \client' = \client_2, \server_0 = \server'$
  and $\client_0 = \client_1$ and obtain
  $\musti{\serverA_2}{\client_1}$ as required.

    In case (\ref{must-output-swap-l-fw-g-2-2})
  we have $\client_1 \st{\tau} \client'$ 
  and $\server' = \server_2$, we therefore must show $\musti{\server_2}{\client'}$.
  We apply the \outputtau axiom to
  the transitions $\client_1 \st{\tau} \client'$ and $\client_1 \st{\co{\aa}} \client_2$
  to obtain that
  \begin{itemize}
  \item either
    there exists a $\hat{\client}$ such that $\client_2 \st{\tau} \hat{\client}$ and
    $\client' \st{\co{\aa}} \hat{\client}$,
  \item or 
    $\client_2 \st{\aa} \client'$.
  \end{itemize}
  In the first case we apply the inductive hypothesis with
  $\serverA' = \serverA_1, \client' = \hat{\client}, \serverA_0 = \serverA_2$
  and $\client_0 = \client'$ and obtain $\musti{\serverA_2}{\client'}$ as required.
  In the second case,
  %% we apply the \outputcommutativity axiom with                      %%
  %% the transitions $\client_1 \st{\co{\aa}} \client_2$ and $\client_2 \st{\aa} \client'$ %%
  %% to obtain a $\hat{\client}$ such that                                                 %%
  %% $\client_1 \st{\aa} \hat{\client} \st{\co{\aa}} \client'$.                            %%
  the transitions $\server_1 \st{\co{\aa}} \server_2$ and $\client_2
  \st{\aa} \client'$ and rule \scom\ let us prove
  $\csys{\server_1}{\client_2} \st{\tau} \csys{\server_2}{\client'}$.
  We apply \rpt{output-swap-inductive-case-fact-2} to
  obtain $\musti{\serverA_2}{\client'}$ as required.

    We now consider the case (\ref{must-output-swap-l-fw-g-2-3}) in which
  $\serverA_2 \st{\mu} \serverA'$ and $\client_1 \st{\co{\mu}} \client'$.
  We must show $\musti{\serverA'}{\client'}$ and to do so we distinguish
  whether $\mu = \aa$ or not.

  If $\mu = \aa$ then we apply the $\outputdeterminacy$ axiom to the transitions
  $\client_1 \st{\co{\aa}} \client_2$ and $\client_1 \st{\co{\mu}} \client'$ to obtain
  that $\client_2 = \client'$.
  Since by hypothesis $\server_1, \server_2 \in \StatesA$ and $\genlts_\StatesA \in
  \obaFW$ we apply the \fwdfeedback\ axiom to the transitions
  $\server_1 \st{\co{\aa}} \server_2 \st{\aa} \server'$ to prove that 
  either $\server_1 \st{\tau} \server'$ or $\server_1 = \serverA'$ must hold.
  If $\server_1 \st{\tau} \server'$ then we have that
  $\csys{\server_1}{\client_2} \st{\tau} \csys{\server'}{\client_2}$.
  The property in (\ref{pt:output-swap-inductive-case-fact-2}) ensures that
  $\musti{\server'}{\client_2}$ and from $\client_2 = \client'$ we have that
  the required $\musti{\server'}{\client'}$ holds too.
  If $\serverA_1 = \serverA'$ then $\musti{\serverA'}{\client_2}$
  is a direct consequence of the hypothesis $\musti{\serverA_1}{\client_2}$.

  If $\mu \neq \aa$ then we are allowed to apply the
  \outputconfluence axiom to the transitions
  $\client_1 \st{\co{\aa}} \client_2$ and $\client_1 \st{\co{\mu}} \client'$
  to obtain a $\hat{\client}$ such that
  $\client_2 \st{\co{\mu}} \hat{\client}$
  and $\client' \st{\co{\aa}} \hat{\client}$.
  An application of the \outputcommutativity axiom to the transitions
  $\server_1 \st{\co{\aa}} \server_2 \st{\mu} \server'$
  provides us with a $\hat{\server}$ such that
  $\server_1 \st{\mu} \hat{\server} \st{\co{\aa}} \server'$.
  We now apply the inductive hypothesis with
  $\serverA' = \hat{\serverA}, \client' = \hat{\client}, \serverA_0 = \serverA'$
  and $\client_0 = \client'$ and obtain $\musti{\serverA_2}{\client'}$
  as required. This concludes the proof of
  (\ref{must-output-swap-l-fw-g-2}),
  and therefore of the lemma.
\end{proof}

\begin{lemma}%[\coqMT{zip_lts_must}]
  \label{lem:musti-presereved-by-actions-of-unsuccesful-tests}
  Let $\genlts_A \in \obaFW$ and $\genlts_B \in  \obaFB$.
  For every $\server, \server' \in \StatesA$, $\client, \client' \in \StatesB$
  and every action $\mu \in \Act$ such that
  $\server \st{ \mu } \server'$ and $\client \st{ \co{\mu}} \client'$
  we have that $\musti{\server}{\client}$ and $\lnot \good{\client}$ implies $\musti{\server'}{\client'}$.%
\end{lemma}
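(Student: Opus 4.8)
The plan is to argue by inversion on the derivation of $\musti{\server}{\client}$. Recall that $\opMusti$ is the intensional predicate $\mathsf{int}_{Q_2}$ of \rdef{def-bar}, taken over the STS of client-server systems, where $Q_2$ holds of a system exactly when its client is good, \ie $Q_2(\server,\client) \iff \good{\client}$. Only two rules can conclude $\musti{\server}{\client}$: the base rule \mnow, whose side condition is $\good{\client}$, and the inductive rule \mstep. The hypothesis $\lnot\good{\client}$ immediately excludes \mnow, so the derivation must end with \mstep.

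First I would read off the universally quantified premise furnished by the \mstep case: for every system $\csys{\server''}{\client''}$ with $\csys{\server}{\client} \st{\tau} \csys{\server''}{\client''}$, one has $\musti{\server''}{\client''}$. It then remains to recognise $\csys{\server'}{\client'}$ as such a $\tau$-reduct. This is direct: the hypotheses $\server \st{\mu} \server'$ and $\client \st{\co{\mu}} \client'$ are complementary visible transitions (since $\co{\co{\mu}} = \mu$), so the communication rule \scom of \rfig{rules-STS} produces $\csys{\server}{\client} \st{\tau} \csys{\server'}{\client'}$. Instantiating the premise of \mstep at this reduct yields exactly the goal $\musti{\server'}{\client'}$.

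I do not expect a genuine obstacle: the statement falls out of the definition of $\opMusti$ as soon as $\lnot\good{\client}$ has forced the derivation through \mstep, and the Selinger axioms carried by $\genlts_A \in \obaFW$ and $\genlts_B \in \obaFB$ are not needed for this argument---they only ensure that the underlying LTSs and the induced reduction semantics are well-formed. The one point deserving care in the mechanised development is that the case split is a genuine inversion of the inductive predicate (rule induction whose inductive hypothesis is left unused): one has to justify that \mnow cannot fire, which uses the decidability of the predicate $\goodSym$ together with the hypothesis $\lnot\good{\client}$.
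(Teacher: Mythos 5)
Your proposal is correct and matches the paper's proof, which is exactly the same (one-line) argument: case analysis on the derivation of $\musti{\server}{\client}$, where $\lnot\good{\client}$ rules out \mnow, and the \scom rule exhibits $\csys{\server'}{\client'}$ as a $\tau$-reduct of $\csys{\server}{\client}$ so that the universally quantified premise of \mstep closes the goal. One small nitpick: excluding the \mnow case does not require decidability of $\goodSym$ --- inversion hands you $\good{\client}$ in that branch, which contradicts the hypothesis $\lnot\good{\client}$ outright.
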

%%%% ORIGINAL LEMMA ABOUT LTS HT
%%%%
%% \begin{lemma}[\coqMT{zip_lts_a_must}]
%%   \label{lem:musti-presereved-by-actions-of-unsuccesful-tests}
%%   $\Forevery p, r,  \hat{p}, \hat{r} \in \States$ and every $\mu \in
%%   \Act \wehavethat \musti{ \server }{\client } $ and $\lnot
%%   (\good{\client})$ imply that
%%   if $\server \sta{ \mu } \hat{ \server }$ and $\client \st{ \co{\mu}}
%%   \hat{r}$ then $\musti{ \hat{\server} }{ \hat{\client} }$.%
%% \end{lemma}
\noindent
\begin{proof}
  %\rlem{musti-presereved-by-actions-of-unsuccesful-tests}
  %%  It is a consequence of \rlem{must-output-swap-l-fw}.
  By case analysis on the hypothesis that $\musti{\server}{\client}$.
\end{proof}

\section{Forwarders}
\label{sec:appendix-forxarders}

The intuition behind forwarders, quoting \cite{DBLP:conf/ecoop/HondaT91},
is that ``any message can come into the configuration, regardless of the forms of
inner receptors. [\ldots] As the experimenter is not synchronously
interacting with the configuration [\ldots], he may send any message
as he likes.''

In this appendix we give the technical results to ensure that
the function $\liftFWSym$ builds an LTS that satisfies
the axioms of the class \texttt{LtsEq}.

For any LTS $\genlts =\lts{ A }{L}{\st{}}$ of output-buffered agents we assume a function $\outputmultisetSym : A \rightarrow MO$
%defined for any LTS $\genlts_A$ of output-buffered agents such that
\begin{enumerate}[(i)]
\item
  $\co{\aa} \in O(\server)$ if and only if $\co{\aa} \in \outputmultiset{\server}$, and
\item
  for every $\server'$, if $\server \st{\co{\aa}} \server'$ then $\outputmultiset{\server} = \mset{\co{\aa}} \uplus \outputmultiset{\server'}$.
\end{enumerate}
Note that by definition $\outputmultiset{ \server }$ is a finite multiset.

\begin{definition}\label{def:strip-def}
  For any LTS $\genlts = \lts{\States}{L}{\st{}} \in \obaFB$,
  we define the function $\stripSym : \States \longrightarrow \States$ by induction on $
  \outputmultiset{ \server }$ as follows: if
  $\outputmultiset{\server}  = \varnothing $ then
  $\strip{\server} = \server$, while
  if $\exists \co{\aa} \in \outputmultiset{\server}$ and $\server \st{ \co{\aa} } \server'$ then
  $ \strip{ \server } =  \strip{ \server' } $.
  Note that $\strip{ \server }$ is well-defined thanks to the \outputdeterminacy
  and the \outputcommutativity axioms.\hfill$\blacksquare$
\end{definition}

%%   and $$.
%% $$
%% \strip{ \server } =
%% \begin{cases}
%%   \server & \text{ if } \outputmultiset{\server} = \varnothing\\
%%   \strip{ \server' } & \text{ if } \exists  \mu \in \outputmultiset{\server} \text{ and }\server \st{ \mu } \server'
%% \end{cases}
%% $$

%% \begin{lemma}
%%   \label{lem:stout-uniqueness}
%%   Let $\genlts_A \in \oba$ and $\serverA, \serverB_1, \serverB_2 \in A$.
%%   If
%%   $\outnorm{\serverA}{\serverB_1}$ and $\outnorm{\serverA}{\serverB_2}$ then
%%   $\serverB_1 \simeq \serverB_2$.
%% \end{lemma}
%% \begin{proof}
%%   By induction on the cardinality of $\outputmultiset{\serverA}$,
%%   together with an application of the \outputdeterminacy axiom.
%% \end{proof}

We now wish to show that $\liftFW{\genlts} \in \obaFW$ for any LTS
$\genlts$ of output-buffered agents with feedback.
Owing to the structure of our typeclasses, we have first to construct an
equivalence~$\doteq$ over $\liftFW{\genlts}$ that is compatible with the
transition relation, \ie satisfies the axiom in \rfig{Axiom-LtsEq}.
We do this in the obvious manner, \ie by combining the equivalence $\simeq$
over the states of~$\genlts$ with an equivalence over mailboxes.

\begin{definition}
  \label{def:fw-eq}
  For any LTS $\genlts \in \obaFB$, two states $\serverA \triangleright M$
  and $\serverB \triangleright N$ of $\liftFW{\genlts}$ are
  equivalent, denoted
  $\serverA \triangleright M \doteq \serverB \triangleright N$, if
  $ \strip{ \serverA } \simeq \strip{ \serverB }$ and
  $M \uplus \outputmultiset{\serverA} = N \uplus
  \outputmultiset{\serverB}$.\hfill$\blacksquare$
\end{definition}

\begin{lemma}
  \label{lem:harmony-sta}\coqLTS{harmony_a}
  %%%% THIS IS NOT THE HARMONY LEMMA
  For every $\genlts_\StatesA$ and every
  $\serverA \triangleright M, \serverB \triangleright N \in \StatesA \times MO$,
  and every $\alpha \in L$, if
  $
  \serverA \triangleright M \mathrel{({\doteq} \cdot {\sta{\alpha}})}
  \serverB \triangleright N
  $ then
  $
  \serverA  \triangleright M \mathrel{({\sta{\alpha}} \cdot {\doteq})} \serverB' \triangleright N'.
  $
  %%%% OLD STATEMENT %%%
  %% For every $\genlts = \lts{\States}{L}{\st{}}$, %
  %% every $\serverA, \serverB' \in \States$, %
  %% every $M,N' \in MO$ and %
  %% every $\alpha \in L$, if
  %% $
  %% \serverA \triangleright M \mathrel{({\doteq}  \cdot {\sta{\alpha}})} \serverB'
  %% \triangleright N'
  %% $ then %  \text{ implies }
  %% $
  %% \serverA  \triangleright M \mathrel{({\sta{\alpha}} \cdot {\doteq})} \serverB' \triangleright N'.
  %% $
\end{lemma}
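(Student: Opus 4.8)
The plan is to make the composed relation explicit and then peel off the single $\sta{\alpha}$ step by case analysis on the rule that derives it. Writing the intermediate state as $p_0 \triangleright K$, the hypothesis gives $\serverA \triangleright M \doteq p_0 \triangleright K \sta{\alpha} \serverB \triangleright N$, where by \rdef{fw-eq} we have $\strip{\serverA} \simeq \strip{p_0}$ and $M \uplus \outputmultiset{\serverA} = K \uplus \outputmultiset{p_0}$. The goal is then to produce a matching step $\serverA \triangleright M \sta{\alpha} \serverB' \triangleright N'$ together with $\serverB' \triangleright N' \doteq \serverB \triangleright N$; this is exactly the compatibility square of \rfig{Axiom-LtsEq} for the lifted LTS, specialised to $\doteq$ and $\sta{}$.

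First I would dispose of the two rules that act solely on the mailbox. For \stminplift the step is $p_0 \triangleright K \sta{a} p_0 \triangleright (\mset{\co{a}} \uplus K)$; I mirror it by $\serverA \triangleright M \sta{a} \serverA \triangleright (\mset{\co{a}} \uplus M)$ and check $\doteq$ by adding $\mset{\co{a}}$ to both sides of the multiset equation. For \stmoutlift (and, symmetrically, for the output sub-case of \stproclift) the emitted $\co{a}$ lies in the common total mailbox $M \uplus \outputmultiset{\serverA} = K \uplus \outputmultiset{p_0}$, so I split on whether $\co{a} \in M$, in which case I emit it from the mailbox via \stmoutlift, or $\co{a} \in \outputmultiset{\serverA}$, i.e. $\serverA \st{\co{a}}$ by property~(i) of $\outputmultisetSym$, in which case I emit it from the process via \stproclift. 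In both cases the reached state has the same total mailbox and, using property~(ii) of $\outputmultisetSym$ together with the invariance of $\strip{\cdot}$ under outputs (\rdef{strip-def}), the same stripped process up to $\simeq$, so $\doteq$ follows by multiset cancellation.

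The substantive cases are the process-driven steps \stproclift with $\alpha = \tau$ or $\alpha = a \in \Names$, and the communication rule \stcommlift, where $p_0 \st{a} \serverB$ synchronises with a $\co{a}$ drawn from $K$. Here the difficulty is that I only know $\strip{\serverA} \simeq \strip{p_0}$, whereas the transition is exhibited on $p_0$ itself, and for \stcommlift the consumed $\co{a}$ may sit inside $\outputmultiset{\serverA}$ rather than inside $M$. My plan is to first transport the process transition down to the stripped process: pushing $p_0 \st{\alpha} \serverB$ along the output transitions that reach $\strip{p_0}$ using \outputconfluence for an input and \outputtau for a $\tau$, I obtain a matching transition out of $\strip{p_0}$ and a clean relation between $\outputmultiset{\serverB}$ and $\outputmultiset{p_0}$. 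The $\simeq$-compatibility axiom of \rfig{Axiom-LtsEq} then yields the same action from $\strip{\serverA}$, and transporting it back up the outputs of $\serverA$ (again by \outputcommutativity and \outputconfluence) produces the required process transition of $\serverA$. For \stcommlift I additionally relocate the synchronising $\co{a}$: if it lies in $M$ the \stcommlift rule applies directly, and otherwise $\serverA \st{\co{a}}$ lets me combine the emitted atom with the input, appealing to the feedback behaviour to realise the synchronisation as a single $\tau$. In every case the target is reassembled so that its total mailbox equals that of $\serverB \triangleright N$ and its stripped process is $\simeq$-related to $\strip{\serverB}$, giving $\doteq$.

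I expect the main obstacle to be precisely this transport of input and $\tau$ transitions across the $\strip/\simeq$ boundary, namely establishing once and for all that a process and its strip have the same input and $\tau$ capabilities up to relocation of the output multiset. This is where all of Selinger's co-initial axioms (\outputcommutativity, \outputconfluence, \outputdeterminacy and \outputtau) are genuinely used, so it is cleanest to isolate it as an auxiliary lemma about $\strip{\cdot}$ before running the case analysis above; the remaining work is then routine multiset cancellation against the equation $M \uplus \outputmultiset{\serverA} = K \uplus \outputmultiset{p_0}$.
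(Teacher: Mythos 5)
Your overall strategy is the right one, and it matches the shape of the paper's (Coq-only) proof of this lemma: case analysis on the rule deriving the step out of the intermediate state $p_0 \triangleright K$, relocation of the relevant message between the explicit mailbox and the process's own output multiset via the $\outputmultisetSym$ properties, and transport of input/$\tau$ capabilities across $\strip{\cdot}$ using \outputconfluence, \outputtau, \outputcommutativity and the compatibility axiom for $\simeq$. The \stminplift, \stmoutlift, output-\stproclift and input-\stproclift cases, and your treatment of \stcommlift, are all correct.

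There is, however, one step that fails as literally stated: in the \stproclift case with $\alpha=\tau$ you claim that transporting the transition down to $\strip{p_0}$, across $\simeq$, and back up ``produces the required process transition of $\serverA$''. This breaks in the second disjunct of \outputtau, where the $\tau$ of $p_0$ \emph{dissolves} into an input, i.e.\ it is a feedback synchronisation consuming a message of $\outputmultiset{p_0}$. The matching copy of that message on your side may sit in $M$ rather than in $\outputmultiset{\serverA}$, in which case $\serverA$ need not have any $\tau$-transition at all, and the required lifted move is an \stcommlift step, not a process transition. Concretely (in \ACCS, which is \obaFB): take $p_0 = \co{\ab} \Par \ab.\Nil$ with $K = \varnothing$, and $\serverA = \ab.\Nil$ with $M = \mset{\co{\ab}}$; these states are $\doteq$, and $p_0 \triangleright \varnothing \sta{\tau} \Nil \triangleright \varnothing$ by \stproclift, yet $\serverA$ is stable, so the only matching move is $\ab.\Nil \triangleright \mset{\co{\ab}} \sta{\tau} \Nil \triangleright \varnothing$ by \stcommlift. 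The repair is entirely within your own toolbox: the dissolving sub-case of \outputtau must be finished exactly like your \stcommlift case, splitting on whether the consumed message lies in $M$ (conclude with \stcommlift applied to the transported input) or in $\outputmultiset{\serverA}$ (use the feedback axiom of \obaFB to realise a process $\tau$ and conclude with \stproclift); with that case added, the argument, including the multiset cancellations, goes through.
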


\begin{lemma}
  \label{lem:fw-eq-id-mb}
  For every $\genlts_\StatesA \in \obaFB$ and every
  $\serverA, \serverB \in \StatesA$, $M \in MO$,
  if $\serverA \simeq \serverB$ then $\serverA \triangleright M \doteq \serverB \triangleright M$.
\end{lemma}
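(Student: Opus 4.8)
The plan is to unfold \rdef{fw-eq} and reduce the goal to two preservation facts about the underlying equivalence $\simeq$. By that definition, $\serverA \triangleright M \doteq \serverB \triangleright M$ holds exactly when $\strip{\serverA} \simeq \strip{\serverB}$ and $M \uplus \outputmultiset{\serverA} = M \uplus \outputmultiset{\serverB}$. Since multiset union is cancellative, the second conjunct is equivalent to $\outputmultiset{\serverA} = \outputmultiset{\serverB}$, so the whole lemma reduces to showing that $\simeq$ preserves both $\stripSym$ and $\outputmultisetSym$; that is, $\serverA \simeq \serverB$ implies $\strip{\serverA} \simeq \strip{\serverB}$ and $\outputmultiset{\serverA} = \outputmultiset{\serverB}$.

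I would establish both preservation statements simultaneously by induction on the size of the (finite) multiset $\outputmultiset{\serverA}$. In the base case $\outputmultiset{\serverA} = \varnothing$, property (i) of $\outputmultisetSym$ gives $O(\serverA) = \varnothing$; compatibility of $\simeq$ (\rfig{Axiom-LtsEq}), together with symmetry of $\simeq$, transfers output transitions in both directions, so $O(\serverB) = O(\serverA) = \varnothing$ and hence $\outputmultiset{\serverB} = \varnothing$ again by (i); both strips are then the states themselves, which are $\simeq$-related by hypothesis. In the inductive case, pick some $\co{\aa} \in \outputmultiset{\serverA}$, so that $\serverA \st{\co{\aa}} \serverA'$ by (i); compatibility gives $\serverB \st{\co{\aa}} \serverB'$ with $\serverA' \simeq \serverB'$, and property (ii) yields $\outputmultiset{\serverA} = \mset{\co{\aa}} \uplus \outputmultiset{\serverA'}$ and $\outputmultiset{\serverB} = \mset{\co{\aa}} \uplus \outputmultiset{\serverB'}$. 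Since $\outputmultiset{\serverA'}$ is strictly smaller, the induction hypothesis gives $\outputmultiset{\serverA'} = \outputmultiset{\serverB'}$ and $\strip{\serverA'} \simeq \strip{\serverB'}$; combining the two multiset equations closes the $\outputmultisetSym$ goal, and because \rdef{strip-def} gives $\strip{\serverA} = \strip{\serverA'}$ and $\strip{\serverB} = \strip{\serverB'}$, the strip goal follows as well.

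The main delicacy is not the bookkeeping but justifying that the chosen output transition $\serverA \st{\co{\aa}} \serverA'$ determines $\serverA'$ uniquely and that the residual is canonical: this is where the \outputdeterminacy axiom enters, and where the well-definedness of $\stripSym$ from \rdef{strip-def} (itself resting on \outputdeterminacy and \outputcommutativity) lets me peel off the \emph{same} $\co{\aa}$ on both the $\serverA$ and the $\serverB$ sides without worrying about the order in which outputs are removed. A secondary point worth stating explicitly is that the induction is well-founded precisely because $\outputmultiset{\serverA}$ is finite. Once the two preservation properties are in hand, the lemma is immediate by the cancellativity reduction above.
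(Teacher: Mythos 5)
Your proposal is correct and follows essentially the same route as the paper: the paper's own proof is precisely the reduction you perform, namely that $\serverA \simeq \serverB$ implies $\strip{\serverA} \simeq \strip{\serverB}$ and $\outputmultiset{\serverA} = \outputmultiset{\serverB}$, from which the conclusion is immediate by \rdef{fw-eq}. The only difference is that the paper asserts these two preservation facts without argument, whereas you additionally justify them by induction on the (finite) multiset $\outputmultiset{\serverA}$, using the compatibility axiom of \rfig{Axiom-LtsEq} and the well-definedness of $\stripSym$ — a sound filling-in of detail the paper leaves implicit.
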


\begin{proof}
This follows from the fact that if $\serverA \simeq \serverB$ then
$\strip{\serverA} \simeq \strip{\serverB}$ and $\outputmultiset{\serverA} = \outputmultiset{\serverB}$.
\end{proof}

%% \begin{figure}
%% \hrulefill
%%   $$
%%   \begin{array}{llllll}
%%     \stproclift &
%%     \begin{prooftree}
%%       \server \st{\alpha} \server'
%%       \justifies
%%       \server \triangleright M \sta{\alpha} \server' \triangleright M
%%     \end{prooftree}
%%     &
%%     \stcommlift &
%%   \begin{prooftree}
%%     \server \st{a} \server'
%%     \justifies
%%     \server \triangleright (\mset{\co{a}} \uplus M) \sta{\tau} \server' \triangleright M
%%   \end{prooftree}
%%   \\[2em]
%%   \stmoutlift &
%%   \begin{prooftree}
%%     \justifies
%%     \server \triangleright (\mset{\co{a}} \uplus M) \sta{\co{a}} \server \triangleright M
%%   \end{prooftree}
%%   &
%%   \stminplift &
%%   \begin{prooftree}
%%     \justifies
%%     \server \triangleright M \sta{a} \server  \triangleright (\mset{\co{a}} \uplus M)
%%   \end{prooftree}
%%   &&
%%   \end{array}
%%   $$
%%   \caption{Lifting of a transition relation to transitions of forwarders.}
%%   \label{fig:rules-liftFW-appendix}
%%   \hrulefill
%% \end{figure}

\noindent
\textbf{\rlem{liftFW-works}}. \textit{For every LTS~$\genlts \in \obaFB$, $\liftFW{\genlts} \in \obaFW$}.

\begin{proof}
  We must show that, given an LTS~$\genlts = \lts{\States}{L}{\st{}} \in \obaFB$, we have that
  $\liftFW{\genlts} \in \obaFW$.
  To do so, we need to show that $\liftFW{\genlts}$ obeys to the axioms given in \req{axioms-forwarders},
  namely \boom and \fwdfeedback.
  We first show that $\liftFW{\genlts}$ obeys to the \boom axiom.

  We pick a process $\serverA \in \States$, a mailbox $M \in \MO$ and a name $\aa \in \Names$.
  The axiom \boom requires us to exhibit a process $\serverA' \in A$ and a mailbox $M' \in \MO$ such that
  the following transitions hold.
  $$
  \serverA \triangleright M \sta{\aa} \serverA' \triangleright M' \sta{\co{\aa}} \serverA \triangleright M
  $$

  We choose $\serverA' = \serverA$ and $M' = (\mset{\co{a}} \uplus M)$.
  An application of the rule \stminplift and then the rule \stmoutlift from \rfig{rules-liftFW}
  allows us to derive the required sequence of transitions as shown below.
  $$
  \serverA \triangleright M \sta{\aa} \serverA \triangleright (\mset{\co{a}} \uplus M) \sta{\co{\aa}} \serverA \triangleright M
  $$

  We now show that $\liftFW{\genlts}$ obeys to the \fwdfeedback axiom.
  To begin we pick three processes $\serverA_1, \serverA_2, \serverA_3 \in \States$, three
  mailboxes $M_1, M_2, M_3 \in \MO$ and a name $\aa \in \Names$ such that:

  $$
  \serverA_1 \triangleright M_1 \sta{\co{\aa}} \serverA_2 \triangleright M_2  \sta{\aa} \serverA_3 \triangleright M_3
  $$

  We need to show that either:
  \begin{enumerate}
  \item $\serverA_1 \triangleright M_1 \sta{\tau} \serverA_3 \triangleright M_3$, or
  \item $\serverA_1 \triangleright M_1 \doteq \serverA_3 \triangleright M_3$
  \end{enumerate}

  We proceed by case analysis on the last rule used to derive the transition
  $\serverA_1 \triangleright M_1 \sta{\co{\aa}} \serverA_2 \triangleright M_2$.
  This transition can either be derived by the rule \stmoutlift or the rule \stproclift .

  We first consider the case where the transition has been derived using the rule
  \stmoutlift. We then have that $\serverA_1 = \serverA_2$ and $M_1 =  (\mset{\co{a}} \uplus M_2)$.
  We continue by case analysis on the last rule used to derive the transition
  $\serverA_2 \triangleright M_2  \sta{\aa} \serverA_3 \triangleright M_3$.
  If this transition has been derived using the rule \stminplift then it must be the case that
  $\serverA_2 = \serverA_3$ and that $(\mset{\co{a}} \uplus M_2) = M_3$.
  This lets us conclude by the following equality to show that
  $\serverA_1 \triangleright M_1 \doteq \serverA_3 \triangleright M_3$.
  $$
  \serverA_1 \triangleright M_1
  = \serverA_2 \triangleright (\mset{\co{a}} \uplus M_2)
  = \serverA_3 \triangleright M_3
  $$
  Otherwise, this transition has been derived using the rule \stproclift, which implies that
  $\serverA_2 \st{\aa} \serverA_3$ together with $M_2 = M_3$.
  An application of the rule \stcommlift ensures the following transition and allows us to
  conclude this case with $\serverA_1 \triangleright M_1 \sta{\tau} \serverA_3 \triangleright M_3$.
  $$
  \serverA_1 \triangleright M_1 = \serverA_2 \triangleright (\mset{\co{a}} \uplus M_2) \st{\tau}
  \serverA_3 \triangleright M_2 = \serverA_3 \triangleright M_3
  $$

  We now consider the case where the transition
  $\serverA_1 \triangleright M_1 \sta{\co{\aa}} \serverA_2 \triangleright M_2$ has been
  derived using the rule \stproclift such that
  $\serverA_1 \st{\co{\aa}} \serverA_2$ and $M_1 = M_2$.

  Again, we continue by case analysis on the last rule used to derive the transition
  $\serverA_2 \triangleright M_2  \sta{\aa} \serverA_3 \triangleright M_3$.
  If this transition has been derived using the rule \stminplift then it must be the case that
  $\serverA_2 = \serverA_3$ and $(\mset{\co{a}} \uplus M_2) = M_3$.
  Also, note that, as $\genlts \in \obaFB$, the transition $\serverA_1 \st{\co{\aa}} \serverA_2$
  implies
  $\outputmultiset{\serverA_1} = \outputmultiset{\serverA_2} \uplus \mset{\co{\aa}}$.
  In order to prove $\serverA_1 \triangleright M_1 \doteq \serverA_3 \triangleright M_3$,
  it suffices to show the following:
  \begin{enumerate}[(a)]
  \item $\strip{ \serverA_1 } \simeq \strip{ \serverA_3 }$, and
  \item $M_1 \uplus \outputmultiset{\serverA_1} = M_3 \uplus \outputmultiset{\serverA_3}$
  \end{enumerate}

  We show that $\strip{ \serverA_1 } \simeq \strip{ \serverA_3 }$ by definition of
  $\strip{}$ together with the transition
  $\serverA_1 \st{\co{\aa}} \serverA_2$ and the equality $\serverA_2 = \serverA_3$.

  The following ensures that
  $M_1 \uplus \outputmultiset{\serverA_1} = M_3 \uplus \outputmultiset{\serverA_3}$.

  $$
  \begin{array}{llll}
  & M_1 \uplus \outputmultiset{\serverA_1}\\
  = & M_1 \uplus \outputmultiset{\serverA_2} \uplus \mset{\co{\aa}} && \text{from } \outputmultiset{\serverA_1} = \outputmultiset{\serverA_2} \uplus \mset{\co{\aa}}\\
  = & M_2 \uplus \outputmultiset{\serverA_3} \uplus \mset{\co{\aa}} && \text{from } M_1 = M_2, \serverA_2 = \serverA_3\\
  = & (M_2 \uplus \mset{\co{\aa}}) \uplus \outputmultiset{\serverA_3}\\
  = & M_3 \uplus \outputmultiset{\serverA_3} && \text{from } M_3 = M_2 \uplus \mset{\co{\aa}}\\
  \end{array}
  $$

  If the transition $\serverA_2 \triangleright M_2  \sta{\aa} \serverA_3 \triangleright M_3$
  has been derived using the rule \stproclift then it must be the case that
  $\serverA_2 \st{\aa} \serverA_3$ and $M_2 = M_3$.
  As $\genlts \in \obaFB$, we are able to call the axiom
  \outputfeedback together with the transitions
  $\serverA_1 \st{\co{\aa}} \serverA_2$ and $\serverA_2 \st{\aa} \serverA_3$
  to obtain a process $\serverA'_3$
  such that $\serverA_1 \st{\tau} \serverA'_3$ and $\serverA'_3 \simeq \serverA_3$.
  An application of \rlem{fw-eq-id-mb} and rule \stproclift allows us to conclude that
  $\serverA_1 \triangleright M_1 \mathrel{({\sta{\tau}} \cdot {\doteq})} \serverA_3 \triangleright M_3$. \qed
\end{proof}

\section{Completeness}
\label{sec:proof-completeness}
\label{sec:bhv-completeness}

\begin{table*}
  \hrulefill\\

  % \begin{tabular}{l}
    \begin{minipage}{300pt}%
      $\forall \trace \in \Actfin, \forall \aa \in \Names,$
      \begin{enumerate}[(1)]
      \item
        \label{gen-spec-ungood}
        $ \lnot \good{f(\trace)}$
      \item
        \label{gen-spec-mu-lts-co}
        $ \forall \mu \in \Act, f (\mu.\trace) \st{\co{\mu}} f(\trace)$
      \item
        \label{gen-spec-mu-out-ex-tau}
        $ f (\co{a}.\trace) \st{\tau} $
      \item
        \label{gen-spec-out-good}
        $ \forall \client \in \States, f (\co{a}.\trace)
        \st{\tau} \client$ implies $\good{ \client }$
      \item
        \label{gen-spec-out-mu-inp}
        $ \forall \client \in \States, \mu \in \Act,$
        $f (\co{a}.\trace) \st{\mu} \client$ implies $\mu = \aa$ and
        $\client = f(s)$
      \end{enumerate}
    \end{minipage}
    \\[2em]
    \begin{minipage}{300pt}
      $\forall E \subseteq \Names$,
      \begin{enumerate}[(t1)]
      \item\label{gen-spec-acc-nil-stable-tau}
        $\testacc{\varepsilon}{E} \Nst{\tau}$
      \item\label{gen-spec-acc-nil-stable-out}
        $\forall \aa \in \Names, \testacc{\varepsilon}{E} \Nst{\co{\aa}}$
      \item\label{gen-spec-acc-nil-mem-lts-inp}
        $\forall \aa \in \Names, \testacc{\varepsilon}{E} \st{\aa}$ if and only if
        $\aa \in E$
      \item\label{gen-spec-acc-nil-lts-inp-good}
        $\forall \mu \in \Act, \client \in \States,
        \testacc{\varepsilon}{E} \st{\mu} \client$ implies $\good{\client}$
      \end{enumerate}
    \end{minipage}
  % \end{tabular}
  \\[2em]
   %%%%%%%%
  %%%%%%%% HORIZONTAL LAYOUT BY GIO
 %%%%%%%%
%% %% \multicolumn{3}{l}{%
%%   \begin{enumerate}[(c1)]
%% \item
%%   $\forall \mu \in \Act, \testconv{\varepsilon} \Nst{\mu}$ \qquad\qquad
%% \item
%%   $\testconv{\varepsilon} \st{\tau} $ \qquad\qquad
%% \item
%%   $\forall \client, \testconv{\varepsilon} \st{\tau} \client$ implies
%%   $\good{ \client }$
%%   \end{enumerate}
%%   \\[7pt]
%% %%  }
  %%%%%%%%
  %%%%%%%% VERTICAL LAYOUT BY ILARIA
  %%%%%%%%
    % \begin{tabular}{c}
      \begin{minipage}{300pt}
      \begin{enumerate}[(c1)]
      \item
        $\forall \mu \in \Act, \testconv{\varepsilon} \Nst{\mu}$ \qquad
      \item
        $\exists \client, \testconv{\varepsilon} \st{\tau} \client$ \qquad
      \item
        $\forall \client, \testconv{\varepsilon} \st{\tau} \client$ implies
        $\good{ \client }$
        \\
      \end{enumerate}
      \end{minipage}
    % \end{tabular}
  %%%%%%%%
%%%%%%%% VERTICAL LAYOUT BY ILARIA
%%%%%%%%
  \caption{Properties of the functions that generate clients.}
\hrulefill
\label{tab:properties-functions-to-generate-clients}
\end{table*}

This section is devoted to the proof that the alternative preorder
given in \rdef{accset-leq} includes the \mustpreorder.
First we present a general outline of the main technical results to obtain
the proof we are after. Afterwards, in Subsection~(\ref{sec:appendix-completeness})
we discuss in detail on all the technicalities.

Proofs of completeness of characterisations of contextual preorders usually
require using, as the name suggests, syntactic contexts.
Our calculus-independent setting, though, does not allow us to
define them.  Instead we phrase our arguments using two functions
$
\testconvSym :  \Actfin \rightarrow \States,$ and
$\testaccSym :  \Actfin \times \parts{\Names} \rightarrow \States$
where \lts{ \States }{L}{ \st{} } is some LTS of \obaFB.
In \rtab{properties-functions-to-generate-clients} we gather all the
{\em properties} of~$\testconvSym$ and~$\testaccSym$ that are sufficient to give our
arguments. The properties (1) - (5) must hold for both~$\testconvSym$ and
$\testacc{\varepsilon}{-}$ for every set of names~$O$, the properties (c1) -
(c2) must hold for~$\testconvSym$, and (t1) - (t4) must hold for~$\testaccSym$.

We use the function~$\testconvSym$ to test the convergence of servers, and the
function~$\testaccSym$ to test the acceptance sets of servers.

A natural question is whether such~$\testconvSym$ and~$\testaccSym$ can actually exist.
The answer depends on the LTS at hand. In \rapp{client-generators},
and in particular \rfig{client-generators}, we define these functions for
the standard LTS of~\ACCS, and it should be obvious how to adapt those
definitions to the asynchronous $\pi$-calculus \cite{DBLP:journals/jlp/Hennessy05}.
%\pl{Two times obvious in this par. Maybe we could simply link
%  papers that define test generators for the $\pi$-calculus}.

In short, our proofs show that~$\asleq$ is complete with respect to~$\testleqS$
in any LTS of output-buffered agents with feedback wherein the
functions~$\testconvSym$ and~$\testaccSym$ enjoying the properties in
\rtab{properties-functions-to-generate-clients} can be defined.

\renewcommand{\States}{\ensuremath{A}}

First, converging along a finite trace~$\trace$~is logically
equivalent to passing the client~$\testconv{ \trace}$.  In other
words, there exists a bijection between the proofs (i.e. finite
derivation trees of~$\musti{ \server }{\testconv{ \trace}}$) and
the ones of~$ \server \cnvalong{ \trace }$. We first give the
proposition, and then discuss the auxiliary lemmas to prove it.

\begin{proposition}\coqCom{must_iff_acnv}%{myproposition}{mustiffacnv}
  \label{prop:must-iff-acnv}
  For every $\genlts_{\States} \in \obaFW$,
  $\server \in \States$, and
  $\trace \in \Actfin$ we have that $\musti{\server}{ \testconv{ \trace} }$
  if and only if~$\server \cnvalong \trace$.
\end{proposition}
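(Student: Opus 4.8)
The plan is to prove both implications by induction on the length of the trace~$\trace$, reducing each inductive step to the preservation properties of~$\opMusti$ and~$\cnvalong$ established earlier, namely \rlem{must-terminate}, \rlem{musti-preserved-by-left-tau}, \rlem{musti-presereved-by-actions-of-unsuccesful-tests}, \rlem{must-output-swap-l-fw} and \rlem{acnv-aux}, together with the abstract properties of the client generator~$\testconvSym$ collected in \rtab{properties-functions-to-generate-clients}. Throughout, $\genlts_{\States}\in\obaFW$ gives input-receptivity and makes all the output-buffered axioms available, while~$\testconv{\trace}$ lives in an $\obaFB$ LTS, matching the hypotheses of those lemmas.

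For the \emph{only if} direction, $\musti{\server}{\testconv{\trace}}\implies\server\cnvalong\trace$, the argument is pleasingly uniform in the shape of the head action. Convergence of~$\server$ is immediate: from $\musti{\server}{\testconv{\trace}}$ and $\lnot\good{\testconv{\trace}}$ (property~(1)), \rlem{must-terminate} yields $\server\convi$, which already settles the base case $\trace=\varepsilon$. For $\trace=\mu.\trace'$ I must also show $\server''\cnvalong\trace'$ for every $\server''$ with $\server\wt{\mu}\server''$. Decomposing this weak transition into a block of $\tau$-steps, a single step $\sigma\st{\mu}\sigma'$, and a final block of $\tau$-steps, I would thread $\musti{\sigma}{\testconv{\mu.\trace'}}$ through the leading $\tau$-steps by \rlem{musti-preserved-by-left-tau}; convert the $\mu$-step into $\musti{\sigma'}{\testconv{\trace'}}$ using $\testconv{\mu.\trace'}\st{\co{\mu}}\testconv{\trace'}$ (property~(2)), $\lnot\good{\testconv{\mu.\trace'}}$ (property~(1)) and \rlem{musti-presereved-by-actions-of-unsuccesful-tests}; and thread $\musti{\server''}{\testconv{\trace'}}$ through the trailing $\tau$-steps again by \rlem{musti-preserved-by-left-tau}. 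The induction hypothesis at~$\trace'$ then gives $\server''\cnvalong\trace'$. Notably this uses only properties~(1) and~(2).

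For the \emph{if} direction, $\server\cnvalong\trace\implies\musti{\server}{\testconv{\trace}}$, I would do an outer induction on~$\trace$ and, for a fixed trace, an inner induction on the derivation of $\server\convi$, using \rlem{acnv-aux} to propagate the full hypothesis $\server\cnvalong\mu.\trace'$ to every $\tau$-successor of~$\server$ so that the inner hypothesis applies. For $\trace=\varepsilon$, a stable~$\server$ gives $\musti{\server}{\testconv{\varepsilon}}$ by \mstep: by~(c2) the system can reduce, by~(c1) there are no synchronisations, and by~(c3) every client $\tau$-move reaches a \goodSym\ state; the non-stable case adds only server reducts, discharged by the inner hypothesis. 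For $\trace=\co{a}.\trace'$ (output head) the same \mstep\ analysis works, now exploiting the asymmetric properties: (3) supplies a client $\tau$-step, (4) makes every client $\tau$-reduct successful, and (5) forces every synchronisation to be on the server output~$\co{a}$ and to land on $\testconv{\trace'}$, so that $\server''\cnvalong\trace'$ (as $\server\st{\co{a}}\server''$ is a weak $\co{a}$-transition) and the outer hypothesis yields $\musti{\server''}{\testconv{\trace'}}$.

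The hard part is the \emph{if} direction when the head is an \emph{input}~$a$, precisely the case for which properties~(3)--(5) are deliberately absent: asynchrony forbids output-guarded choice, so the tester must emit~$\co{a}$ as a floating atom, and its interactions cannot be pinned down tightly enough for a direct reduct analysis to close. I would therefore bypass the inner induction and appeal to \rlem{must-output-swap-l-fw}. By the \boom axiom of $\obaFW$ the server has the forwarding loop $\server\st{a}\server''\st{\co{a}}\server$, where $\server''$ is~$\server$ with an extra~$\co{a}$ buffered; since $\server\wt{a}\server''$, the hypothesis $\server\cnvalong a.\trace'$ gives $\server''\cnvalong\trace'$, whence $\musti{\server''}{\testconv{\trace'}}$ by the outer induction hypothesis. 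Applying \rlem{must-output-swap-l-fw} to $\server''\st{\co{a}}\server$ and $\testconv{a.\trace'}\st{\co{a}}\testconv{\trace'}$ (property~(2)) then transports this to $\musti{\server}{\testconv{a.\trace'}}$, the desired goal. I expect the two main obstacles to be the bookkeeping of the nested induction in the output and~$\varepsilon$ cases (threading $\cnvalong$ through $\tau$-successors via \rlem{acnv-aux}) and, above all, recognising that the input-head case must be routed through the Output-swap lemma and the forwarding loop rather than attacked by the reduct-by-reduct method that succeeds for output heads.
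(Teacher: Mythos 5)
Your proposal is correct. For the \emph{only if} direction you follow the paper's route exactly: the paper's \rlem{must-cnv} performs the same induction on the trace, and your ``threading'' of the weak transition through leading $\tau$-steps, the $\mu$-step, and trailing $\tau$-steps is precisely the content of the paper's auxiliary \rlem{must-gen-conv-wt-mu}, proved there by rule induction using \rlem{musti-preserved-by-left-tau} and \rlem{musti-presereved-by-actions-of-unsuccesful-tests}.

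For the \emph{if} direction, however, your route genuinely differs from the paper's \rlem{acnv-must}. The paper uses a \emph{complete} (strong) induction on the length of~$\trace$ with an inner rule induction on $\server \convi$, and treats all traces uniformly: every move of the compound client $\testconv{\trace}$ is classified by two inversion lemmas (\rlem{inversion-feeder-tau} and \rlem{inversion-gen-mu}, the latter being the one place where the \outputdeterminacyinv axiom enters), after which the trace is repaired by \rlem{acnvalong-preserved-by-operations}; strong induction is forced because a communication consumes an action occurring after a prefix of inputs (not necessarily the head), and a client-internal ``annihilation'' $\tau$-step deletes a matching input/output pair from the middle of the trace. You instead do plain structural induction with a case split on the head action: the $\varepsilon$ and output-head cases are closed by a direct \mstep analysis, which works because \rtab{properties-functions-to-generate-clients} supplies complete inversion information, namely (3)--(5) and (c1)--(c3), exactly for those heads; and the input-head case is routed through \boom and \rlem{must-output-swap-l-fw}. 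That last step is sound: the swap lemma's hypotheses are met by $\server''\st{\co{a}}\server$ and $\testconv{a.\trace'}\st{\co{a}}\testconv{\trace'}$ (property~(2)), it is proved independently of this proposition so there is no circularity, and it is in fact the very technique the paper uses for the input case of \rlem{completeness-part-2.2-auxiliary} --- just not for this proposition. What each approach buys: yours avoids the general inversion lemmas and the trace surgery, needs only ordinary structural induction, and as a side effect appears not to rely on \outputdeterminacyinv for this result; the paper's is uniform in the shape of the trace and concentrates all reasoning about the client's internal behaviour into reusable inversion lemmas that mirror its Coq development.
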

\noindent
%\begin{proof}
The {\em if} implication is \rlem{acnv-must} and the {\em only if}
implication is \rlem{must-cnv}.
  %\end{proof}
\noindent
The hypothesis that $\genlts_{\States} \in \obaFW$,
\ie the use of forwarders, is necessary to show that convergence
implies passing a client, as shown by the next example.
\begin{example}
    \label{ex:forwarders-necessary}
    Consider a server~$\server$ in an LTS $\genlts \in \obaFB$
%    with, in the set $\StatesA$,
%    a state~$\server$, % that can input on~$\ab$ and then diverges.
    whose behaviour amounts to the following transitions:
    $\server \st{ \ab } \Omega \st{ \tau }  \Omega \st{ \tau } \ldots$
    Note that this entails that $\genlts$ does not
    {\em not} enjoy the axioms of forwarders.

    Now let $ \trace = \aa.\ab $. Since $\server \conv$ and $\server
    \Nwt{\aa}$ we know that $ \server \cnvalong \aa.\ab$.
    On the other hand \rtab{properties-functions-to-generate-clients}(\ref{gen-spec-mu-lts-co})
    implies that the client $\testconv{\trace}$ performs the transitions
    $\testconv{ \trace } \st{ \co{\aa}} \testconv{ \ab }  \st{ \co{\ab}} \testconv{ \varepsilon }$.
    Thanks to the \outputcommutativity axiom we obtain $\testconv{ \trace } \st{ \co{\ab}} \st{ \co{\aa}} \testconv{ \varepsilon }$.
    \rtab{properties-functions-to-generate-clients}(\ref{gen-spec-ungood}) implies that the states
    reached by the client are unsuccessful, and so by zipping the traces performed
    by $\server$ and by $\testconv{\trace}$
    we build a maximal computation of
    $\csys{\server}{\testconv{\trace}}$ that is unsuccessful,
    and thus $\Nmusti{\server}{\testconv{\trace}}$.\hfill$\qed$
  \end{example}
  \noindent
  This example explains why in spite of \rlem{musti-obafb-iff-musti-obafw}
  output-buffered agents with feedback do not suffice to use the
  standard characterisations of the \mustpreorder.

%%%%%%%%%%%%%%%%%%%%%%%%%%%%%%%%%%%%%%%%%%%%%%%%%%%%%%%%%%%%%%%%
%%%%%%%%%%%%%%%%%%%%%%%%%%%%%%%%%%%%%%%%%%%%%%%%%%%%%%%%%%%%%%%%
%%%%%%%%%%%%%%%%%%%%%%%%%%%%%%%%%%%%%%%%%%%%%%%%%%%%%%%%%%%%%%%%
%% ARGUMENTS ABOUT ACCEPTANCE SETS
%%%%%%%%%%%%%%%%%%%%%%%%%%%%%%%%%%%%%%%%%%%%%%%%%%%%%%%%%%%%%%%%
%%%%%%%%%%%%%%%%%%%%%%%%%%%%%%%%%%%%%%%%%%%%%%%%%%%%%%%%%%%%%%%%
%%%%%%%%%%%%%%%%%%%%%%%%%%%%%%%%%%%%%%%%%%%%%%%%%%%%%%%%%%%%%%%%

We move on to the more involved technical results, \ie the next
two lemmas, that we use to reason on acceptance sets of servers.
%  The proofs are deferred to \rapp{bar-induction} and \rapp{appendix-completeness},
%% We wish to stress \rlem{must-output-swap-l-fw}:
%% it states that, when reasoning on~$\opMusti$,
%% outputs can be freely moved from the client to the server side of
%% systems, if servers %are forwarders.
%% have the forwarding ability.
%% Its proof uses {\em all} the axioms for output-buffered agents with
%% feedback, and the lemma itself is used in the proof of the main
%% result on acceptance sets, namely \rlem{completeness-part-2.2-auxiliary}.

%% \begin{lemma}[ Output swap ]
%%   \label{lem:must-output-swap-l-fw}
%%   Let $\genlts_A \in \obaFW$ and
%%   $\genlts_B \in \obaFB$.
%%   $\Forevery \serverA_1, \serverA_2 \in \StatesA$,
%%   every $\client_1, \client_2 \in \StatesB$ and name $\aa \in \Names$ such that
%%   $\serverA_1 \st{\co{\aa}} \serverA_2$ and
%%   $\client_1 \st{\co{\aa}} \client_2$,
%%   if $\musti{\serverA_1}{\client_2}$ then $\musti{\serverA_2}{\client_1}$.
%% \end{lemma}

\begin{lemma}
  \label{lem:completeness-part-2.2-diff-outputs}
  Let $\genlts_A \in \obaFW$.
  For every $\server \in \States$, $\trace \in \Actfin$,
  and every $L, E \subseteq \Names$, if
  $\co{L} \in \accht{ \server }{ \trace }$
  then $\Nmusti{ \server }{ \testacc{\trace}{E \setminus L}}$.
\end{lemma}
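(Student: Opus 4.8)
The plan is to exhibit an explicit unsuccessful maximal computation of the system $\csys{\server}{\testacc{\trace}{E \setminus L}}$; by \rcor{inductive-char-must} this suffices to establish $\Nmusti{\server}{\testacc{\trace}{E \setminus L}}$, since it gives $\NMust{\server}{\testacc{\trace}{E \setminus L}}$. First I would unfold the hypothesis $\co{L} \in \accht{\server}{\trace}$ to obtain a stable state $\state'$ with $\server \wt{\trace} \state'$ and $O(\state') = \co{L}$. Writing $\trace = \mu_1 \cdots \mu_k$ and decomposing this weak transition into strong steps, I would zip the server's run against the complementary behaviour of the client generated by $\testaccSym$: each server $\tau$-step becomes a system $\tau$-step via \stauserver\ (the client staying put), while each visible step $\server \st{\mu_i}$ synchronises via \scom\ with the client step $\testacc{\mu_i.\trace_i}{E \setminus L} \st{\co{\mu_i}} \testacc{\trace_i}{E \setminus L}$ supplied by property~(2) of \rtab{properties-functions-to-generate-clients}, where $\trace_i$ denotes the corresponding suffix of $\trace$. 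This yields
\[
  \csys{\server}{\testacc{\trace}{E \setminus L}} \st{} \cdots \st{} \csys{\state'}{\testacc{\varepsilon}{E \setminus L}} .
\]

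The key step is then to show that this terminal configuration is a deadlock. Since $\state'$ is stable and $\testacc{\varepsilon}{E \setminus L}$ is stable by property~(t1), no autonomous $\tau$-step is available, so only a synchronisation via \scom\ could extend the computation. A synchronisation on a server input would require the client to emit the complementary output, which is impossible because $\testacc{\varepsilon}{E \setminus L}$ has no outputs by property~(t2); in particular this rules out the forwarding inputs that the $\obaFW$ server $\state'$ can always perform. A synchronisation on a server output $\co{\aa}$ would require $\aa \in L$, since $O(\state') = \co{L}$, and simultaneously $\aa \in E \setminus L$ for the client to perform the input $\aa$ by property~(t3); but $L$ and $E \setminus L$ are disjoint. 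Hence no reduction is possible and the computation is maximal. This disjointness is exactly why the statement subtracts $L$ from $E$, and I expect the careful verification of this deadlock — correctly discharging both the output-side and the input-side of \scom\ — to be the delicate part of the argument.

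Finally, I would check that the whole computation is unsuccessful. The only client states visited are $\testacc{\trace_i}{E \setminus L}$ for the successive suffixes $\trace_i$ of $\trace$ together with the terminal $\testacc{\varepsilon}{E \setminus L}$, and every one of them fails $\goodSym$ by property~(1). Thus $\csys{\server}{\testacc{\trace}{E \setminus L}}$ admits a maximal computation that never reaches a successful state, so $\NMust{\server}{\testacc{\trace}{E \setminus L}}$, and \rcor{inductive-char-must} then gives the desired $\Nmusti{\server}{\testacc{\trace}{E \setminus L}}$.
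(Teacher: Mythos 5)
Your proof is correct, but it follows a genuinely different route from the paper's. You reason \emph{extensionally}: you zip the server's weak transition $\server \wt{\trace} \server'$ against the client's unwinding of $\trace$ (property (2) of \rtab{properties-functions-to-generate-clients}), exhibit the resulting finite computation ending in the deadlocked configuration $\csys{\server'}{\testacc{\varepsilon}{E \setminus L}}$, verify it is unsuccessful (property (1)), and then transfer the refutation of $\opMust$ into a refutation of $\opMusti$. For that last step you do not actually need the full \rcor{inductive-char-must}, whose hard direction rests on \barinduction; the contrapositive of \rlem{intensional-implies-extensional}(b) ($\opMusti$ implies $\opMust$, proved by plain rule induction) already suffices, which keeps your argument constructive and independent of the bar-induction axiom. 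The paper, by contrast, never leaves the intensional predicate: it supposes $\musti{\server}{\testacc{\trace}{E \setminus L}}$ and performs rule induction on the derivation of $\server \wt{\trace} \server'$, pushing this assumption along exactly your zipped path using the preservation lemmas (\rlem{musti-preserved-by-left-tau} for server $\tau$-moves, \rlem{musti-presereved-by-actions-of-unsuccesful-tests} for synchronisations with non-good clients); the contradiction then surfaces in the base case, where the stable, non-good configuration admits neither rule \mnow nor rule \mstep. The combinatorial core is identical in both proofs --- stability of the test (t1), absence of client outputs (t2), and disjointness of $L$ and $E \setminus L$ (t3) --- and your deadlock analysis is in fact slightly more explicit than the paper's, since you also discharge the forwarding inputs that an $\obaFW$ server always has available. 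What the paper's route buys is a proof that manipulates only the inductive predicate and therefore mechanises directly, without constructing or reasoning about (coinductively defined, possibly infinite) maximal executions; what your route buys is a more transparent, counterexample-style argument.
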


\begin{lemma}\coqCom{must_gen_a_with_s}
  \label{lem:completeness-part-2.2-auxiliary}
  Let $\genlts_A \in \obaFW$.
  $\Forevery \server \in \States, \trace \in \Actfin$,
  and every finite set $\ohmy \subseteq \co{\Names}$,
  if $\server \cnvalong s$ then either
  \begin{enumerate}[(i)]
      \item
    %% \item\label{pt:completeness-crux-move-1} %%
    $\musti{\server}{\testacc{ \trace }{ \bigcup \co{ \accht{p}{s}
          \setminus \ohmy }}}$, or
  \item
   %% \item\label{pt:completeness-crux-move-2} %%
    there exists $\widehat{\ohmy} \in \accht{ \server }{ \trace }$ such that $\widehat{\ohmy} \subseteq \ohmy$.
  \end{enumerate}
\end{lemma}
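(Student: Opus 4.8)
The plan is to first dispose of the disjunction by a decidable case split, and then to spend the real effort on the positive ($\opMust$) alternative. Since $\server\cnvalong\trace$, \rlem{cnvalong-implies-finite-branching} guarantees that $\setof{\stateA}{\server\wt{\trace}\stateA}$ is finite, so $\accht{\server}{\trace}$ is a finite set of finite sets and it is decidable whether some $R\in\accht{\server}{\trace}$ satisfies $R\subseteq\ohmy$. If such an $R$ exists, it is the stable output set $\outset{\server'}$ of some derivative $\server\wt{\trace}\server'\stable$, and taking $\widehat{\ohmy}=R$ immediately yields alternative (ii). So from now on I would assume the negation: no stable $\trace$-derivative of $\server$ has its output set contained in $\ohmy$, and I would prove alternative (i) for $E=\bigcup\co{\accht{\server}{\trace}\setminus\ohmy}$, where this set collects, for each stable $\trace$-derivative $\server'$, the complements of those outputs of $\server'$ that escape $\ohmy$. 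Under the negation every such $\server'$ has at least one escape output $\co{b}\notin\ohmy$, hence $b\in E$.

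To prove $\musti{\server}{\testacc{\trace}{E}}$ I would work with the \emph{inductive} predicates $\convi$ and $\opMusti$, which are licensed by \rcor{inductive-char-must}. The key move is to generalise the goal so that the induction goes through: fixing $\ohmy$, I would show that for every $\server$ and $\trace$ with $\server\cnvalong\trace$ and every $E\subseteq\Names$ such that (a) $\co{(\outset{\server'}\setminus\ohmy)}\subseteq E$ for every stable derivative $\server\wt{\trace}\server'\stable$, and (b) no such $\server'$ has $\outset{\server'}\subseteq\ohmy$, one has $\musti{\server}{\testacc{\trace}{E}}$; the lemma is then the instance $E=\bigcup\co{\accht{\server}{\trace}\setminus\ohmy}$. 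The point of this generalisation is that both (a) and (b) are inherited by the $\tau$-successors of $\server$ and by its $\mu$-successors along $\trace$, because in each case the relevant stable derivatives form a subset of those of $\server$; this lets me keep the \emph{same} $E$ throughout the induction and sidesteps any monotonicity bookkeeping. I would then argue by induction on $\trace$ with an inner induction on the derivation of $\server\convi$.

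In the base case $\trace=\varepsilon$ the client $\testacc{\varepsilon}{E}$ is, by \rtab{properties-functions-to-generate-clients}(\ref{gen-spec-acc-nil-stable-tau})--(\ref{gen-spec-acc-nil-lts-inp-good}) and (\ref{gen-spec-ungood}), a stable non-good state that offers exactly the inputs in $E$ and becomes good after any of them. Hence every reduction of $\csys{\server}{\testacc{\varepsilon}{E}}$ is either a server $\tau$-move (rule \stauserver) or a synchronisation (rule \scom) on an output $\co{b}$ with $b\in E$, the latter landing in a good client state. A maximal computation that never reaches a good state therefore consists solely of server $\tau$-moves; by $\server\conv$ it is finite and ends at a stable $\server'$, and by (b)+(a) this $\server'$ still carries an escape output $\co{b}$ with $b\in E$, so the system can synchronise and the computation was not maximal — a contradiction. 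Applying rule \mstep of $\opMusti$ closes this case. For the step $\trace=\mu.\trace'$, the client offers the complementary action $\co{\mu}$ by \rtab{properties-functions-to-generate-clients}(\ref{gen-spec-mu-lts-co}), so every synchronisation reduct advances the trace by exactly $\mu$ and lands in $\csys{\server''}{\testacc{\trace'}{E}}$ with $\server\st{\mu}\server''$, which is discharged by the outer induction hypothesis on the $\mu$-successor for the shorter trace $\trace'$; when $\mu$ is an input of $\server$ the client hands an output to the forwarder, which I would commute with \rlem{must-output-swap-l-fw} to realign it with a genuine $\mu$-transition, while when $\mu$ is an output \rtab{properties-functions-to-generate-clients}(\ref{gen-spec-mu-out-ex-tau})--(\ref{gen-spec-out-mu-inp}) pin down the client and feed the good $\tau$-successor into rule \mnow; pure server $\tau$-moves are absorbed by the inner induction via \rptlem{acnv-aux}{acnv-one-step-tau}.

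The main obstacle I anticipate is exactly this inductive step: correctly interleaving the server's internal and output transitions with the client's trace-following transitions, and invoking \rlem{must-output-swap-l-fw} to move the client's output across the client–server boundary into the forwarder, all while checking that the invariants (a) and (b) survive for the successor states. This is also the place where the hypothesis $\genlts_A\in\obaFW$ is indispensable — without forwarding the client could not absorb the server's trace actions, as \rexa{forwarders-necessary} illustrates — so the proof genuinely relies on the enriched semantics rather than on the bare axioms of $\obaFB$.
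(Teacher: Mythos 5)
Your proposal is correct in substance, and its engine is the one the paper uses: an outer induction on $\trace$ with an inner induction on the derivation of $\server \convi$, the \boom axiom combined with \rlem{must-output-swap-l-fw} to dispose of input actions, and direct reduct analysis via \rtab{properties-functions-to-generate-clients} for output actions and for the stable base case. Where you genuinely diverge is in how the disjunction is organised. The paper never decides the disjunct up front: it threads the disjunction through the induction, at each level combining the finitely many successor disjunctions into an ``exists''/``for all'' alternative (this is where it silently uses image-finiteness), lifting the existential case to alternative (ii) via the inclusion $\accht{\server'}{\trace'} \subseteq \accht{\server}{\mu.\trace'}$, and, in the universal case, proving $\opMusti$ with the client index $\bigcup\co{\accht{\cdot}{\cdot}\setminus\ohmy}$ recomputed at every level of the induction --- which is why it needs the auxiliary monotonicity lemma (\rlem{tacc-monotonicity}) to realign the shrinking index sets. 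You instead resolve the disjunction once, at the top, by a decidable check over the finite set of stable $\trace$-derivatives, and then prove the $\opMusti$ alternative under your invariants (a)--(b); since both invariants are inherited by $\tau$-successors and by $\mu$-successors along the trace, you keep one fixed $E$ throughout and dispense with the monotonicity bookkeeping entirely. What the paper's route buys is that it only ever manipulates one-step successor sets; what your route buys is a cleaner separation of the combinatorial content (the case split) from the testing content (the must proof), at the price of computing the whole derivative set up front. Both rest on the same finiteness assumptions, implicit in the paper's statement and explicit as the \texttt{FiniteLts} typeclass in the Coq development.

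Two small imprecisions, neither fatal. First, you justify the decidability of the case split by \rlem{cnvalong-implies-finite-branching}, but that lemma is stated only for \ACCS; in the calculus-independent setting of this lemma the correct justification is image-finiteness together with $\server \cnvalong \trace$ (constructively, the finite derivative set is computed by recursion on the $\convi$ derivations, as the Coq development does). Second, the blanket claim that ``every synchronisation reduct advances the trace by exactly $\mu$'' is false when $\mu$ is an input: the client then has the shape $\co{a} \Par g(\trace', \cdot)$ and exposes many further transitions. Your actual treatment of that case, via the output-swap lemma, never relies on this claim --- exactly as in the paper --- so the slip is presentational rather than mathematical.
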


\renewcommand{\traceA}{s_1}
\renewcommand{\traceB}{s_2}

%% \begin{lemma}%[\coqConv{acnv_weak_a_congr}]%{mylemma}{acnvsplits}                 %%
%%  \label{lem:acnv-split-s}                                                         %%
%%  For every $\traceA, \traceB \in \Actfin,$ and                                    %%
%%  $\serverA, \serverB  \in \States$,                                               %%
%%  if $\serverA \cnvalong \traceA.\traceB$ and $ \serverA \wta{ \traceA} \serverB$ %%
%% then $\serverB \cnvalong \traceB$.                                               %%
%% % $ (p \aftera \traceA)  \cnvalong \traceB$.                                     %%
%% \end{lemma}                                                                       %%

\renewcommand{\stateB}{q}
\renewcommand{\traceB}{\traceC}

\renewcommand{\traceA}{s_1}
\renewcommand{\traceB}{s_2}
\renewcommand{\traceC}{s_3}

\renewcommand{\traceA}{s_1}
\renewcommand{\traceB}{s_2}
\renewcommand{\traceC}{s_3}

We can now show that the alternative preorder~$\asleq$
includes~$\testleqS$ when used over LTSs of forwarders.
\begin{lemma}%, \coqCom{completeness}]
  \label{lem:completeness}
  For every $\genlts_A, \genlts_B \in \obaFW$ and
  servers $\serverA \in \StatesA, \serverB \in \StatesB $,
  if ${ \serverA } \testleqS { \serverB }$
  then ${ \serverA } \asleq { \serverB }$.
\end{lemma}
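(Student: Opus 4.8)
The plan is to prove the two conjuncts of $\asleq$ (\rdef{accset-leq}) separately, turning the contextual hypothesis $\serverA \testleqS \serverB$ into concrete must-tests built from the client generators $\testconvSym$ and $\testaccSym$ whose properties are collected in \rtab{properties-functions-to-generate-clients}. Since both $\genlts_A$ and $\genlts_B$ are in $\obaFW$, all the auxiliary results that presuppose forwarders are available for $\serverA$ and $\serverB$ alike.

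For the convergence condition $\bhvleqone$, I would fix a trace $\trace$ with $\serverA \cnvalong \trace$ and chain the available equivalences: by \rprop{must-iff-acnv} this is $\musti{\serverA}{\testconv{\trace}}$, hence $\Must{\serverA}{\testconv{\trace}}$ by \rcor{inductive-char-must}. Applying $\serverA \testleqS \serverB$ gives $\Must{\serverB}{\testconv{\trace}}$, and running the same two equivalences backwards on $\serverB$ yields $\serverB \cnvalong \trace$. This settles $\bhvleqone$.

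For the acceptance condition $\bhvleqtwo$, I would first invoke \rlem{conditions-on-accsets-logically-equivalent} to replace it by $\asleqAfw$, so that only \emph{output} acceptance sets need to be compared. Fix $\trace$ with $\serverA \cnvalong \trace$ and an element $O = O(\serverB') \in \accfwp{\serverB}{\trace}{\st{}_B}$, where $\serverB \wt{\trace} \serverB' \stable$; the goal is to produce $\widehat{O} \in \accfwp{\serverA}{\trace}{\st{}_A}$ with $\widehat{O} \subseteq O$. I would apply \rlem{completeness-part-2.2-auxiliary} with $\ohmy = O$: its conclusion (ii) is exactly such a witness $\widehat{O}$, so the whole task reduces to excluding conclusion (i).

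Excluding (i) is the crux, and it is where the two heavy lemmas dovetail. In case (i) the server $\serverA$ passes $\client^\ast = \testacc{\trace}{E}$ with $E = \bigcup \co{\accht{\serverA}{\trace} \setminus \ohmy}$, so $\musti{\serverA}{\client^\ast}$, hence $\Must{\serverA}{\client^\ast}$ and, by $\testleqS$ together with \rcor{inductive-char-must}, $\musti{\serverB}{\client^\ast}$. Writing $O = \co{L}$, we have $\co{L} \in \accht{\serverB}{\trace}$, and since by construction every name in $E$ complements an output lying strictly outside $\ohmy = O$, we get $E \cap L = \emptyset$, i.e. $E \setminus L = E$. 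Thus \rlem{completeness-part-2.2-diff-outputs} applied to $\serverB$ with this $L$ and this $E$ yields $\Nmusti{\serverB}{\testacc{\trace}{E \setminus L}}$; since $E \setminus L = E$ and $\client^\ast = \testacc{\trace}{E}$, this is precisely $\Nmusti{\serverB}{\client^\ast}$, contradicting $\musti{\serverB}{\client^\ast}$. Hence (i) cannot hold, and conclusion (ii) delivers the required $\widehat{O} \subseteq O$. I expect the main obstacle to be essentially bookkeeping rather than conceptual: verifying that the client exhibited by \rlem{completeness-part-2.2-auxiliary} in case (i) is literally the one forbidden by \rlem{completeness-part-2.2-diff-outputs}, which hinges on the set identity $E \setminus L = E$ and on the forwarding hypothesis $\genlts_A,\genlts_B \in \obaFW$ that both auxiliary lemmas rely on; the genuine difficulty has already been discharged inside those two lemmas.
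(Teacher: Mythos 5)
Your proposal is correct and follows essentially the same route as the paper's own proof: $\bhvleqone$ via \rprop{must-iff-acnv} and the contextual hypothesis, then $\bhvleqtwo$ reduced to $\asleqAfw$ by \rlem{conditions-on-accsets-logically-equivalent}, with \rlem{completeness-part-2.2-auxiliary} (instantiated at $\ohmy = O$) supplying the witness and \rlem{completeness-part-2.2-diff-outputs} plus the contrapositive of $\testleqS$ ruling out the $\opMusti$ branch. Your explicit check that $E \cap L = \emptyset$, so that the client produced by the auxiliary lemma is literally the one excluded by \rlem{completeness-part-2.2-diff-outputs}, makes precise a step the paper glosses over.
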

\begin{proof}
  Let ${ \serverA } \testleqS { \serverB }$.
  To prove that ${ \serverA } \bhvleqone { \serverB }$,
  suppose
  ${\serverA} \cnvalong \trace$ for some trace $\trace$.
  \rprop{must-iff-acnv} implies $\musti{ {\serverA} }{\testconv{ \trace} }$,
  and so by hypothesis $ \musti{ { \serverB } }{\testconv{ \trace }}$.
  \rprop{must-iff-acnv} ensures that ${\serverB} \cnvalong \trace$.

  We now show that ${ \serverA } \bhvleqtwo { \serverB }$.
  Thanks to
  \rlem{conditions-on-accsets-logically-equivalent}, it is enough to prove
  that $\serverA \asleqAfw \serverB$. So, we show that
  for every trace $\trace \in \Actfin$, if ${ \serverA } \acnvalong \trace$
  then $\accht{{ \serverA }}{\trace} \ll
  \accht{{ \serverB }}{\trace} $.
  Fix an $O \in \accht{ {\serverB} }{ \trace }$.
  We have to exhibit a set $\widehat{O} \in \accht{{ \serverA
  }}{\trace}$ such that $\widehat{O} \subseteq O$.

  By definition $O \in \accht{ {\serverB} }{ \trace }$
  means that for some $q'$ we have ${\serverB} \wt{ \trace } q' \stable$
  and $O(\serverB') = O$.
  Let $E = \bigcup \accht{ { \serverA }}{\trace}$ and $X = E \setminus O $.
  The hypothesis that ${\serverA}~\cnvalong~\trace$,
  and the construction of the set~$X$
  let us apply \rlem{completeness-part-2.2-auxiliary}, which implies that
  either \begin{enumerate}[(a)] \item
  $\musti{{\serverA}}{\testacc{\trace}{ \co{ X }}}$, or
  \item there exists
  a $\widehat{O} \in \accht{{ \serverA }}{\trace}$ such that $\widehat{O}
  \subseteq O(\serverB')$.
  \end{enumerate}
  Since (b) is exactly what we are after, to conclude the
  argument it suffices to prove that (a) is false.
  This follows from \rlem{completeness-part-2.2-diff-outputs}, which proves
  $\Nmusti{ {\serverB}  }{ \testacc{ \trace }{ \co{ X }} }$,
  and the hypothesis ${ \serverA } \testleqS { \serverB }$,
  which ensures  $\Nmusti{ {\serverA}  }{ \testacc{ \trace }{ \co{ X }} }$.
\end{proof}

The fact that the \mustpreorder can be captured via the function $\liftFW{-}$
and $\asleq$ is a direct consequence of \rlem{musti-obafb-iff-musti-obafw}
and \rlem{completeness}.
\begin{proposition}[Completeness]%, \coqCom{completeness}]
  \label{prop:bhv-completeness}
  For every $\genlts_A, \genlts_B \in \obaFB$ and
  servers $\serverA \in \StatesA, \serverB \in \StatesB $,
  if $\serverA \testleqS \serverB$ then $\liftFW{ \serverA } \asleq \liftFW{ \serverB }$.
\end{proposition}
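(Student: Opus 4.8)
The plan is to obtain this completeness statement as a corollary of the completeness result already established \emph{within} the forwarding world, namely \rlem{completeness}, by transporting both the hypothesis and the two LTSs from $\obaFB$ into $\obaFW$ along the lifting $\liftFWSym$. The whole argument is a short three-step chain; all the genuine difficulty has already been discharged in the lemmas it cites.

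First I would apply \rlem{liftFW-works} to each of $\genlts_A$ and $\genlts_B$, concluding that $\liftFW{\genlts_A}, \liftFW{\genlts_B} \in \obaFW$. This is exactly what makes the forwarding-based completeness lemma applicable to the lifted servers. Next I would transport the hypothesis across the lifting: from $\serverA \testleqS \serverB$, the left-to-right direction of \rlem{musti-obafb-iff-musti-obafw}(2) yields $\liftFW{\serverA} \testleqS \liftFW{\serverB}$. Finally, since $\liftFW{\serverA}$ and $\liftFW{\serverB}$ are servers of the LTSs $\liftFW{\genlts_A}, \liftFW{\genlts_B} \in \obaFW$, I would invoke \rlem{completeness} to conclude $\liftFW{\serverA} \asleq \liftFW{\serverB}$, which is precisely the desired inequality.

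The main obstacle is not in this proposition at all, but in the two results it rests on. The substantive work is \rlem{completeness}, whose proof relies on the client generators $\testconvSym$ and $\testaccSym$ satisfying the properties of \rtab{properties-functions-to-generate-clients}, and in particular on \rprop{must-iff-acnv} and \rlem{completeness-part-2.2-auxiliary}; the forwarding capability is genuinely needed there (cf.\ \rexa{forwarders-necessary}), which is exactly why one cannot stay in $\obaFB$. Transporting the hypothesis via \rlem{musti-obafb-iff-musti-obafw} in turn depends on the invariance of $\goodSym$ under outputs, \req{good-invariance}. Once those ingredients are in hand, the present proposition follows immediately by composing the three implications above.
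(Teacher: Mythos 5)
Your proposal is correct and follows essentially the same route as the paper, which proves the proposition as a direct consequence of \rlem{musti-obafb-iff-musti-obafw} and \rlem{completeness}; your explicit invocation of \rlem{liftFW-works} merely spells out a step the paper leaves implicit.
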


\label{completeness-part-2.2-auxiliary-proof}
\label{sec:appendix-completeness}

  We now gather all the technical auxiliary lemmas and then discuss the
  proofs of the main ones.

By assumption, outputs preserve the predicate~$\goodSym$.
For
stable clients, they also preserve the negation of this predicate.
\begin{lemma}%(\coqMT{stsad_wtout_sad})
    \label{lem:st-wtout-Nok}
  For all $\client, \client' \in \States$ and trace $\trace \in \co{\Names}^\star$,
  $\client \stable$, $\lnot \good{\client}$ and $\client \wt{ \trace }
  \client'$ implies $\lnot \good{\client'} $.
\end{lemma}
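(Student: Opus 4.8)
The plan is to prove the statement by induction on the output trace~$\trace$, using two ingredients: that the predicate~$\good$ is insensitive to output transitions (\rlem{good-preserved-by-lts-output-iff}, equivalently~\req{good-invariance}), and that \emph{stability is preserved along output transitions}. The latter is the key preliminary observation and deserves to be isolated first.

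The observation is that, in an output-buffered agent, a stable state stays stable after emitting a message: if $\client \stable$ and $\client \st{\co{a}} \client_1$, then $\client_1 \stable$. Indeed, suppose towards a contradiction that $\client_1 \st{\tau} q$ for some~$q$. Applying the \outputcommutativity axiom to $\client \st{\co{a}} \client_1$ and $\client_1 \st{\tau} q$ yields a state~$p''$ with $\client \st{\tau} p''$, contradicting $\client \stable$. Hence $\client_1 \Nst{\tau}$.

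With this in hand, I would proceed by induction on~$\trace$. In the base case $\trace = \varepsilon$, the weak transition $\client \wt{\varepsilon} \client'$ can only be derived by \wtrefl: since no $\tau$-step is available from the stable state~$\client$, the rule \wttau cannot fire. Thus $\client' = \client$, and $\lnot\good{\client'}$ follows from the hypothesis $\lnot\good{\client}$. In the inductive case $\trace = \co{a}\cdot\trace'$, stability of~$\client$ again forbids any leading $\tau$-step, so the weak transition must decompose as $\client \st{\co{a}} \client_1 \wt{\trace'} \client'$. By \rlem{good-preserved-by-lts-output-iff} we obtain $\lnot\good{\client_1}$ from $\lnot\good{\client}$, and by the preliminary observation $\client_1 \stable$. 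Since $\trace'$ is again a trace of outputs, the induction hypothesis applied to~$\client_1$ gives $\lnot\good{\client'}$, as required.

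The only real subtlety is the preliminary observation: without it one cannot rule out interleaved $\tau$-steps inside the weak computation, and because $\good$ is assumed invariant only under outputs and \emph{not} under~$\tau$, the induction would break at such a step. In effect, propagating stability through \outputcommutativity shows that the entire weak computation $\client \wt{\trace} \client'$ is a chain of strong output transitions between stable states, after which good-invariance simply transports $\lnot\good{}$ one output at a time. This $\tau$-freeness is therefore the crux of the argument; everything else is routine.
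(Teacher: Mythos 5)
Your proof is correct and follows the route the paper intends: the paper states this lemma without an explicit proof, but places it immediately after the two ingredients you use, namely invariance of $\goodSym$ under output transitions (\rlem{good-preserved-by-lts-output-iff}) and preservation of stability along output traces (\rlem{st-wtout-st}), and your induction on $\trace$ combines them exactly as the paper's structure suggests. Your ``preliminary observation'' is precisely \rlem{st-wtout-st}(1), which you additionally justify from the \outputcommutativity axiom (this quietly assumes the client LTS is output-buffered, e.g.\ in $\obaFB$, which is indeed the paper's standing assumption on clients), so the argument is sound.
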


%% %% THE FOLLOWING LEMMA SEEMS USELESS WRT THE CURRENT EXPOSITION
%% %% \gb{Is the following lemma used anywhere in the explanation ?
%% %% For stable clients, they also preserve the negation of this predicate.

%% \begin{lemma}%(\coqMT{stsad_wtout_sad})

%%   For all $\client, \client' \in \States$ and trace $\trace \in \co{\Names}^\star$,
%%   $\client \stable$, $\lnot \good{\client}$ and $\client \wt{ \trace }
%%   \client'$ implies $\lnot \good{\client'} $.
%% \end{lemma}

%% %% }

\subsection{Testing convergence}
We start with preliminary facts, in particular two lemmas that follow
from the properties in \rtab{properties-functions-to-generate-clients}.

A process $\state$ {\em converges along} a trace $\trace$ if for every
$\state'$ reached by $\state$ performing any prefix of $\trace$, the
process $\state'$ converges.

\begin{lemma}%[\coqConv{cnv_iff_prefix_terminate}]%{mylemma}{cnvalongiffprefix}
  \label{lem:cnvalong-iff-prefix}\label{lem:acnvalong-iff-prefix}
  For every $\lts{\States}{L}{\st{}}$, $\state \in \States$, and $\trace \in \Actfin$,
  $ \state \cnvalong \trace$ if and only if
  $ \state \wt{\trace'} \stateA $ implies $\stateA \conv$
  for every $\trace'$ prefix of $\trace$.
\end{lemma}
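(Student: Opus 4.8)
The statement asserts that the inductively defined predicate $\cnvalong$ coincides with the explicit formulation ``every state reachable along a prefix of $\trace$ converges'', so the plan is to prove both implications by induction on the trace $\trace$, which (since the rules \cnvepsilon\ and \cnvmu\ are syntax-directed on $\trace$) is the same as induction on the derivation of $\state \cnvalong \trace$. Two routine auxiliary facts will do the heavy lifting. \textbf{Fact (a):} convergence is preserved under $\tau$-reachability, i.e.\ $\state \conv$ and $\state \wt{\varepsilon} \stateA$ imply $\stateA \conv$; this is \rptlem{acnv-aux}{acnv-one-step-tau} read at $\trace = \varepsilon$ and iterated along the $\tau$-chain witnessing $\state \wt{\varepsilon} \stateA$. \textbf{Fact (b):} a weak transition over a non-empty trace splits and recombines, i.e.\ $\state \wt{\mu.\trace'} \stateA$ holds if and only if there is an intermediate $q$ with $\state \wt{\mu} q$ and $q \wt{\trace'} \stateA$; this follows by induction on the derivation of the weak transition, absorbing the leading and trailing $\tau$-steps into the $\wt{\mu}$ and $\wt{\trace'}$ components.

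For the forward implication I would assume $\state \cnvalong \trace$ and argue by induction on $\trace$. If $\trace = \varepsilon$ then $\state \cnvalong \varepsilon$ gives $\state \conv$, and the only prefix is $\varepsilon$, so any $\stateA$ with $\state \wt{\varepsilon} \stateA$ converges by Fact (a). If $\trace = \mu.u$, then \cnvmu\ gives both $\state \conv$ and ``$\state \wt{\mu} q$ implies $q \cnvalong u$'' for every $q$. A prefix $\trace'$ of $\mu.u$ is either $\varepsilon$ (handled exactly as in the base case, using $\state \conv$ and Fact (a)) or of the form $\mu.\trace''$ with $\trace''$ a prefix of $u$. In the latter case, given $\state \wt{\mu.\trace''} \stateA$, Fact (b) yields some $q$ with $\state \wt{\mu} q$ and $q \wt{\trace''} \stateA$; since $q \cnvalong u$, the induction hypothesis applied to $q$ and the prefix $\trace''$ delivers $\stateA \conv$.

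For the converse I would assume the explicit condition and again induct on $\trace$. If $\trace = \varepsilon$, instantiating the hypothesis with the prefix $\varepsilon$ and $\stateA = \state$ (via \rname{wt-refl}) gives $\state \conv$, whence $\state \cnvalong \varepsilon$ by \cnvepsilon. If $\trace = \mu.u$, I must supply the two premises of \cnvmu: $\state \conv$ comes again from the prefix $\varepsilon$, and for an arbitrary $q$ with $\state \wt{\mu} q$ I obtain $q \cnvalong u$ from the induction hypothesis applied to $q$ and the trace $u$. That hypothesis requires that every $\stateA$ reached from $q$ along a prefix $\trace''$ of $u$ converges; but $\state \wt{\mu} q \wt{\trace''} \stateA$ recombines by Fact (b) into $\state \wt{\mu.\trace''} \stateA$, and $\mu.\trace''$ is a prefix of $\trace$, so $\stateA \conv$ follows directly from the assumed condition.

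The conceptual content is thus a clean double structural induction; the main obstacle is the bookkeeping around interleaved $\tau$-transitions, namely the split/recombine Fact (b) together with the $\tau$-stability of $\conv$ in Fact (a). These are the parts where leading and trailing silent moves must be carefully matched up, and they are precisely the fiddly lemmas that need explicit attention in the Coq formalisation, whereas the prefix/inductive scaffolding is entirely mechanical.
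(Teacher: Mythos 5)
Your proof is correct. The paper gives no textual proof of this lemma (it is discharged only in the Coq development, as \texttt{cnv\_iff\_prefix\_terminate}), and your argument --- a double induction on the trace, supported by $\tau$-preservation of $\conv$ (your Fact (a), which indeed follows from \rptlem{acnv-aux}{acnv-one-step-tau} at the empty trace) and the split/recombine property of weak transitions (your Fact (b)) --- is exactly the standard decomposition that the mechanisation carries out, so there is nothing substantive to contrast.
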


Traces of output actions impact neither the stability of servers,
nor their input actions.
\begin{lemma}
  \label{lem:st-wtout-st}  \label{lem:st-wtout-inp}
  For every $\genlts_\StatesA$,
  every $\server, \server' \in \StatesA$ and every trace $\trace \in \co{\Names}^\star$,
  \begin{enumerate}
    \item %(\coqLTS{st_wtout_st})
      $\server \stable$ and $\server \wt{ \trace } \server'$ implies $\server' \stable$.
    \item %(\coqLTS{st_wtout_inp})
      $\server \stable$ and $\server \wt{ \trace } \server'$ implies $I(\server) = I(\server')$.
  \end{enumerate}
\end{lemma}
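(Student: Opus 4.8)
The plan is to reduce a weak transition from a stable state along a trace of outputs to a chain of single strong output steps, and to prove a single-step version of both claims first. Throughout I rely on the fact that $\genlts_\StatesA$ is an output-buffered LTS (here it is in $\obaFW$), so the Selinger axioms \outputcommutativity and \outputconfluence are available.

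First I would establish the single-step fact: if $p \stable$ and $p \st{\co{a}} p'$, then $p' \stable$ and $I(p) = I(p')$. For stability, suppose towards a contradiction that $p' \st{\tau} q$; applying \outputcommutativity to $p \st{\co{a}} p' \st{\tau} q$ (instantiating $\alpha$ with $\tau$) produces a $p''$ with $p \st{\tau} p''$, contradicting $p \stable$. Hence $p'$ is stable. For the input equality, the inclusion $I(p') \subseteq I(p)$ again follows from \outputcommutativity: from $p \st{\co{a}} p' \st{b} q$ with $b \in \Names$ we obtain $p \st{b}$, so $b \in I(p)$. The converse inclusion $I(p) \subseteq I(p')$ follows from \outputconfluence applied to the co-initial transitions $p \st{\co{a}} p'$ and $p \st{b} r$; the side condition is met because $b \neq \co{a}$ (inputs and outputs are disjoint) and $b \neq \tau$, and confluence yields $p' \st{b}$, so $b \in I(p')$.

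Then I would prove both statements simultaneously by induction on the length of $\trace$. In the base case $\trace = \varepsilon$, the weak transition $p \wt{\varepsilon} p'$ is a (possibly empty) sequence of $\tau$-moves; since $p \stable$ forbids the rule \wttau from firing, the derivation must be \wtrefl, so $p' = p$ and both claims are immediate. For the inductive step $\trace = \co{a} . \trace'$, stability of $p$ again rules out any leading $\tau$-move, so the derivation of $p \wt{\co{a} . \trace'} p'$ must begin with \wtmu: there is a $p_1$ with $p \st{\co{a}} p_1$ and $p_1 \wt{\trace'} p'$. The single-step fact gives $p_1 \stable$ and $I(p) = I(p_1)$; applying the induction hypothesis to the stable state $p_1$ and the shorter trace $\trace'$ yields $p' \stable$ and $I(p_1) = I(p')$, and transitivity of equality closes the argument.

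The main obstacle is the single-step stability claim. The key observation is that \outputcommutativity may be instantiated with $\alpha = \tau$, so that a hypothetical $\tau$-move out of $p'$ is pushed back to a $\tau$-move out of the stable state $p$, producing the contradiction. Once this is in place, everything else is routine: a stable state can perform no $\tau$, so the whole weak output trace collapses into a chain of strong output steps passing only through stable states, and the inductive lifting of both the stability and the input-set equality is immediate.
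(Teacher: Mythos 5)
Your proof is correct. There is nothing in the paper to compare it against: this lemma is stated in the completeness appendix without a pen-and-paper proof, being discharged only in the Coq development (in \texttt{TransitionSystems.v}, as \texttt{stable\_tau\_preserved\_by\_wt\_output} and \texttt{stable\_tau\_input\_preserved\_by\_wt\_output}), and your decomposition --- a single-step fact proved from the Selinger axioms, then an induction that uses stability to force every derivation of $\server \wt{\trace} \server'$ to open with \wtmu\ rather than \wttau\ --- is the natural argument and, in essence, what the mechanisation does. Two points in your favour are worth recording. First, you correctly flagged that the statement as written (``for every $\genlts_\StatesA$'') is too liberal: for an arbitrary LTS the claim fails (take $\server$ stable with $\server \st{\co{\aa}} \server' \st{\tau} q$), so \outputcommutativity and \outputconfluence must be assumed; the paper leaves this hypothesis implicit because every LTS in that appendix is in $\obaFW$, and making it explicit is an improvement rather than a defect. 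Second, all the delicate instantiations check out: \outputcommutativity\ is indeed applicable with $\alpha = \tau$, which is exactly what pushes a hypothetical $\tau$-step of $\server'$ back onto the stable state $\server$; and the side conditions of \outputconfluence\ ($\alpha \neq \co{\aa}$ and $\alpha \neq \tau$) hold automatically when $\alpha$ is an input, since $\Names$, $\co{\Names}$ and $\{\tau\}$ are disjoint.
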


\begin{lemma}
  \label{lem:testconv-always-reduces}
  For every $\trace \in\Actfin$, $\testconv{ \trace} \st{\tau}$.
\end{lemma}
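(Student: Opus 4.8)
The plan is to prove the statement by induction on the length of the trace $\trace$, splitting on whether $\trace$ is empty, output-prefixed, or input-prefixed, and to handle the last (and only non-trivial) case with the \outputcommutativity axiom, which is available because the clients produced by $\testconvSym$ live in an LTS of \obaFB.

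First I would dispatch the two easy cases directly from \rtab{properties-functions-to-generate-clients}. If $\trace = \varepsilon$, then property~(c2) supplies an $r$ with $\testconv{\varepsilon} \st{\tau} r$, hence $\testconv{\varepsilon} \st{\tau}$. If $\trace = \co{\aa}.\trace'$ is output-prefixed, then property~(3) gives $\testconv{\co{\aa}.\trace'} \st{\tau}$ immediately.

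The remaining case $\trace = \aa.\trace'$, where $\trace$ begins with an input, is the crux, since the table provides no $\tau$-transition for input-prefixed clients. Here I would combine property~(2), which yields $\testconv{\aa.\trace'} \st{\co{\aa}} \testconv{\trace'}$, with the induction hypothesis applied to the strictly shorter trace $\trace'$, which produces some $r$ with $\testconv{\trace'} \st{\tau} r$. Since the client LTS obeys the Selinger axioms, I would then invoke \outputcommutativity on the composite $\testconv{\aa.\trace'} \st{\co{\aa}} \testconv{\trace'} \st{\tau} r$: because the output $\co{\aa}$ is non-blocking it can be postponed past the subsequent $\tau$, yielding a state $p''$ with $\testconv{\aa.\trace'} \st{\tau} p''$ (and $p'' \st{\co{\aa}} r$). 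This gives $\testconv{\aa.\trace'} \st{\tau}$ and closes the induction.

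The main obstacle is precisely this input case: the abstract interface of $\testconvSym$ in \rtab{properties-functions-to-generate-clients} only guarantees a silent move directly for the empty and output-prefixed traces, so one must manufacture the $\tau$-transition for input-prefixed clients rather than read it off a property. The key observation is that an input-prefixed client first emits the complementary output and only then behaves as $\testconv{\trace'}$, so the silent move guaranteed inductively at $\testconv{\trace'}$ can be commuted back in front of that output using the non-blocking character of outputs. I expect the formalisation effort to reduce to checking that \outputcommutativity indeed applies with $\alpha = \tau$, which it does since the axiom ranges over all actions.
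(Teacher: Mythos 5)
Your proof is correct, and it is genuinely the right argument for the lemma as stated at the abstract level: the case split (empty / output-prefixed / input-prefixed), the use of properties (c2) and (3) of \rtab{properties-functions-to-generate-clients} for the easy cases, and the combination of property (2), the induction hypothesis, and \outputcommutativity for the input case all go through. In particular your one worry is unfounded for the right reason: the client LTS is assumed to be in \obaFB, so \outputcommutativity is available, and it does range over $\alpha = \tau$, yielding $\testconv{\aa.\trace'} \st{\tau} p''$ with $p'' \st{\co{\aa}}\cdot$ (up to $\simeq$), which is all that is needed.

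The route differs from the paper's written proof, which is given only for the concrete \ACCS instantiation (\rlem{gs-reduces}): there the generator is defined syntactically, with $g(\aa.\trace', \client) = \mailbox{\co{\aa}} \Par g(\trace', \client)$, so in the input case the $\tau$-transition supplied by the induction hypothesis lifts \emph{directly} through the parallel composition via rule \parR --- no commutation is needed, because the pending output is an atom placed in parallel by construction rather than a prefix. Your proof instead derives the same fact axiomatically from the interface of \rtab{properties-functions-to-generate-clients} together with the Selinger axioms of the client LTS. What the paper's argument buys is immediacy: in \ACCS the needed $\tau$ is visible in the syntax. What yours buys is calculus-independence: it establishes the lemma for \emph{any} \obaFB LTS equipped with a generator satisfying the table, which is exactly the level of generality at which \rlem{testconv-always-reduces} is stated and used in the completeness argument (the paper itself leaves that abstract proof implicit). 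Morally the two arguments coincide --- \parR applied to an atom in parallel is precisely how \ACCS realises \outputcommutativity --- but yours makes the dependence on the axiom explicit instead of on the syntax.
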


The \outputdeterminacyinv axiom is used in the proof of the next lemma.
\begin{lemma}\coqCom{inversion_gen_test_mu}
  \label{lem:inversion-gen-mu}
  For every $\trace \in \Actfin$, %and $\client \in \States$,
  if $\testconv{\trace} \st{ \mu } \client $
  then either
  \begin{enumerate}[(a)]
  \item\label{pt:inversion-gen-mu-left}
    $\good{ \client }$, or
  \item\label{pt:inversion-gen-mu-right}
    $s = \traceA. \co{\mu}. \traceB$ for some $\traceA \in \Names^{\star}$ and $\traceB \in \Actfin$
    such that $\client \simeq \testconv{\traceA.\traceB}$.
  \end{enumerate}
\end{lemma}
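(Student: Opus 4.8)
The plan is to induct on the trace $s$, using the properties of $\testconvSym$ from \rtab{properties-functions-to-generate-clients} together with the Selinger output axioms of \rfig{axioms}, and to reason in the quotient LTS so that $\simeq$ may be treated as equality (the statement concerns $\obaFW$ clients, i.e. states taken modulo $\simeq$, whose compatibility is \rfig{Axiom-LtsEq}). For the base case $s = \varepsilon$, property (c1) of \rtab{properties-functions-to-generate-clients} says $\testconv{\varepsilon}$ performs no visible action, so the hypothesis $\testconv{\varepsilon} \st{\mu} \client$ is impossible and the implication holds vacuously. In the inductive step I would split on whether the leading action of $s = \alpha.s'$ is an output or an input. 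If $\alpha = \co{a}$, then property~(\ref{gen-spec-out-mu-inp}) is an exact inversion principle: every transition $\testconv{\co{a}.s'} \st{\mu} \client$ forces $\mu = a$ and $\client = \testconv{s'}$, so with $s_1 = \varepsilon \in \Names^{\star}$ and $s_2 = s'$ we get $s = s_1.\co{\mu}.s_2$ and $\client \simeq \testconv{s_1 s_2}$, which is conclusion~(b).

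The substantial case is $\alpha = a$. Here property~(\ref{gen-spec-mu-lts-co}) supplies the output transition $\testconv{a.s'} \st{\co{a}} \testconv{s'}$, co-initial with the given $\testconv{a.s'} \st{\mu} \client$. If $\mu = \co{a}$, then \outputdeterminacy pins down $\client = \testconv{s'}$, and conclusion~(b) holds with $s_1 = \varepsilon$, $s_2 = s'$. If $\mu \neq \co{a}$ (and $\mu \neq \tau$, since $\mu$ is visible), then \outputconfluence yields a state $q$ with $\testconv{s'} \st{\mu} q$ and $\client \st{\co{a}} q$. Applying the induction hypothesis to $\testconv{s'} \st{\mu} q$ gives two sub-cases. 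If $\good{q}$, then since $\client \st{\co{a}} q$, \rlem{good-preserved-by-lts-output-iff} transfers goodness backwards along the output to give $\good{\client}$, which is conclusion~(a). Otherwise $s' = s_1'.\co{\mu}.s_2'$ with $s_1' \in \Names^{\star}$ and $q \simeq \testconv{s_1' s_2'}$; then $s = a.s_1'.\co{\mu}.s_2'$ with $a.s_1' \in \Names^{\star}$, so it only remains to verify $\client \simeq \testconv{a.s_1'.s_2'}$.

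This last reconstruction is the crux, and the only place the new \outputdeterminacyinv axiom is needed. Both $\client$ and $\testconv{a.s_1'.s_2'}$ emit $\co{a}$ into $\simeq$-equivalent states: we have $\client \st{\co{a}} q$, while property~(\ref{gen-spec-mu-lts-co}) gives $\testconv{a.s_1'.s_2'} \st{\co{a}} \testconv{s_1' s_2'}$, and $q \simeq \testconv{s_1' s_2'}$. Working in the quotient LTS, where $\simeq$ is equality, the two targets coincide, and backward output-determinacy (\outputdeterminacyinv) forces the two sources to coincide, i.e. $\client \simeq \testconv{a.s_1'.s_2'}$, giving conclusion~(b) and closing the induction. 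I expect this step to be the main obstacle: without \outputdeterminacyinv one cannot recover the shape of $\client$ from the shape of its $\co{a}$-derivative, since a priori many distinct states could emit the same output into $\simeq$-equivalent continuations. The secondary subtlety is bookkeeping the decomposition $s = s_1.\co{\mu}.s_2$ so that the prefix $s_1$ stays in $\Names^{\star}$; this is guaranteed because the output-first sub-case is fully resolved by property~(\ref{gen-spec-out-mu-inp}) (the client cannot step past the first output of $s$), so every action prepended to $s_1$ in the input-first sub-case is itself an input.
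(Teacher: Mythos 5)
Your proof is correct and takes essentially the same route as the paper's (mechanised) proof of this lemma: induction on $\trace$, with the output-first case discharged by the inversion property~(\ref{gen-spec-out-mu-inp}), and the input-first case handled by combining the transition from property~(\ref{gen-spec-mu-lts-co}) with \outputdeterminacy, \outputconfluence, invariance of $\goodSym$ under outputs, and finally \outputdeterminacyinv to recover the shape of $\client$ from that of its $\co{a}$-derivative. In particular, you place \outputdeterminacyinv exactly where the paper says it is required --- this lemma is the sole reason that axiom was added to Selinger's --- so the proposal matches the intended argument.
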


\begin{lemma}\coqCom{inversion_gen_conv_tau}
  \label{lem:inversion-feeder-tau}
  For every $\trace \in \Actfin$,
  %and $\client \in \States$,
  if $\testconv{\trace} \st{\tau} \client$
    then either:
  \begin{enumerate}[(a)]
  \item\label{inversion-feeder-tau-ok} $\good{ \client }$, or
  \item\label{inversion-feeder-tau-split}
    there exist $\ab$, $\traceA, \traceB$ and $\traceC$ with
    $\traceA.\ab.\traceB \in \Names^\star$ such that
    $s = \traceA.\ab.\traceB.\co{\ab}.\traceC$ and
    $\client \simeq \testconv{\traceA.\traceB.\traceC}$.
    \end{enumerate}
\end{lemma}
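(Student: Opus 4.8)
The plan is to prove the statement by induction on the length of the trace $\trace$, using three ingredients: the properties of $\testconvSym$ listed in \rtab{properties-functions-to-generate-clients}, the fact that the client LTS lies in $\obaFB$ and hence obeys Selinger's axioms (crucially \outputtau and \outputdeterminacyinv), and the already-proved inversion principle for visible actions, \rlem{inversion-gen-mu}. For the base case $\trace = \varepsilon$, property~(c3) states that every $\tau$-move out of $\testconv{\varepsilon}$ reaches a good state, so alternative~(\ref{inversion-feeder-tau-ok}) holds outright.

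For the inductive step, write $\trace = \mu.\trace'$ and fix $\testconv{\mu.\trace'} \st{\tau} \client$. If $\mu$ is an output $\co{\aa}$, then property~(\ref{gen-spec-out-good}) gives $\good{\client}$ directly, so alternative~(\ref{inversion-feeder-tau-ok}) holds. The substantive case is $\mu = \aa \in \Names$. By property~(\ref{gen-spec-mu-lts-co}) there is a pending output $\testconv{\aa.\trace'} \st{\co{\aa}} \testconv{\trace'}$, co-initial with the given $\tau$-move; since the LTS is in $\obaFB$, the \outputtau axiom forces one of:
\begin{enumerate}
\item[(I)] some $q$ with $\testconv{\trace'} \st{\tau} q$ and $\client \st{\co{\aa}} q$; or
\item[(II)] $\testconv{\trace'} \st{\aa} \client$.
\end{enumerate}
In case~(II) I would apply \rlem{inversion-gen-mu} to the input transition: either $\good{\client}$, giving alternative~(\ref{inversion-feeder-tau-ok}), or $\trace' = w.\co{\aa}.\traceC$ with $w \in \Names^\star$ and $\client \simeq \testconv{w.\traceC}$; then $\trace = \aa.w.\co{\aa}.\traceC$ is the desired split (with empty $\traceA$, synchronising name $\ab = \aa$, middle part $\traceB = w$), establishing alternative~(\ref{inversion-feeder-tau-split}).

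In case~(I) I would apply the induction hypothesis to $\testconv{\trace'} \st{\tau} q$. If $q$ is good, then since $\client \st{\co{\aa}} q$ and $\goodSym$ is preserved by outputs (\rlem{good-preserved-by-lts-output-iff}), $\client$ is good and alternative~(\ref{inversion-feeder-tau-ok}) holds. Otherwise the hypothesis yields $\trace' = w.\ab.\traceB.\co{\ab}.\traceC$ with $w.\ab \in \Names^\star$ and $q \simeq \testconv{w.\traceB.\traceC}$; prepending the head gives $\trace = \aa.w.\ab.\traceB.\co{\ab}.\traceC$, so with $\traceA \mathrel{:=} \aa.w \in \Names^\star$ the trace already has the required shape $\traceA.\ab.\traceB.\co{\ab}.\traceC$. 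It remains to see $\client \simeq \testconv{\aa.w.\traceB.\traceC}$. We have $\client \st{\co{\aa}} q \simeq \testconv{w.\traceB.\traceC}$, and by property~(\ref{gen-spec-mu-lts-co}) also $\testconv{\aa.w.\traceB.\traceC} \st{\co{\aa}} \testconv{w.\traceB.\traceC}$; cancelling the common output $\co{\aa}$ via the \outputdeterminacyinv axiom (read up to $\simeq$, using the compatibility of $\simeq$ with transitions) gives $\client \simeq \testconv{\aa.w.\traceB.\traceC}$, i.e. alternative~(\ref{inversion-feeder-tau-split}).

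The main obstacle is exactly this final cancellation in case~(I): reconstructing $\client$ as an instance of $\testconvSym$ up to $\simeq$ requires backward determinacy of outputs, which is why the novel \outputdeterminacyinv axiom is needed here; without it the inductive invariant ``$\client$ is $\simeq$-equal to some $\testconv{\cdots}$'' cannot be maintained. Everything else is routine trace bookkeeping, and the conceptual heart is the use of \outputtau to split a $\tau$-move co-initial with a pending output into an internal move~(I) and a feedback move~(II), which is precisely what makes the induction on the length of $\trace$ go through.
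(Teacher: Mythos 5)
Your proof is correct, and it reaches the abstract statement by a genuinely different route than the one written in the paper. The only pen-and-paper proof the paper gives of this inversion is for the concrete \ACCS\ instance (\rlem{inversion-feeder-tau-accs}): there the induction on $\trace$ proceeds by inspecting the syntax $c(\aa.\trace') = \co{\aa} \Par c(\trace')$ and case-splitting on how the $\tau$-step is derived --- a communication between the atom $\co{\aa}$ and $c(\trace')$ (handled by the visible-action inversion lemma), or an internal step of $c(\trace')$ (handled by the inductive hypothesis, with the reconstruction $q = \co{\aa} \Par q' \equiv c(\aa.\cdots)$ coming for free from congruence of $\equiv$ with parallel composition); the abstract lemma itself is only discharged in Coq. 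You instead work purely axiomatically: property~(\ref{gen-spec-mu-lts-co}) of \rtab{properties-functions-to-generate-clients} supplies the pending output co-initial with the $\tau$-step, the \outputtau\ axiom replaces the syntactic case analysis (its feedback disjunct corresponds to the paper's communication case, its commuting disjunct to the internal-step case), and \outputdeterminacyinv\ replaces the syntactic reconstruction of the residual client. The skeletons match case for case, but the machinery differs, and the trade-off is clear: the paper's argument is more elementary but only establishes the \ACCS\ instance, whereas yours proves the lemma as actually stated, for any generator satisfying the table over any LTS in \obaFB, which is what the calculus-independent completeness proof needs. Your observation that backward output determinacy is exactly what is required to maintain the invariant ``$\client \simeq \testconv{\cdots}$'' once syntax is unavailable is also well taken; the paper only advertises \outputdeterminacyinv\ as being used inside \rlem{inversion-gen-mu}, but at the abstract level the $\tau$-inversion needs it too, precisely in your case~(I). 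One small point of hygiene: the transitions produced by \outputtau\ and the conclusion of \rlem{inversion-gen-mu} only hold up to $\simeq$, so the inductive hypothesis must be applied to a $\simeq$-representative and the predicate $\goodSym$ transported across $\simeq$; you flag this, and it goes through because $\simeq$ is compatible with transitions (\rfig{Axiom-LtsEq}) and preserves $\goodSym$ by assumption, but a fully formal write-up should thread this bookkeeping explicitly.
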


\renewcommand{\stateB}{q}
\renewcommand{\traceB}{\traceC}

\renewcommand{\traceA}{s_1}
\renewcommand{\traceB}{s_2}
\renewcommand{\traceC}{s_3}

\begin{lemma}
  \label{lem:must-gen-conv-wt-mu}
  Let $\genlts_A \in \obaFW$. For every server $\server, \server' \in \States$,
  trace $\trace \in \Actfin$ and
  action $\mu \in \Act$ such that $\server \wt{\mu} \server'$
  we have that
  $\musti{\server}{\testconv{\mu.\trace}} \implies \server' \musti{\server'}{\testconv{\trace}}$.
\end{lemma}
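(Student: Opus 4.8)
The plan is to reduce the weak transition $\server \wt{\mu} \server'$ to its three constituent phases—an initial block of $\tau$-moves, a single $\mu$-move, and a final block of $\tau$-moves—and to propagate $\opMusti$ across each phase separately. The crucial observation is that the convergence-testing client $\testconv{\mu.\trace}$ is engineered precisely so as to consume $\mu$: by \rtab{properties-functions-to-generate-clients}(\ref{gen-spec-mu-lts-co}) we have $\testconv{\mu.\trace} \st{\co{\mu}} \testconv{\trace}$, so a $\mu$-move of the server synchronises with it through rule \scom, turning the client into $\testconv{\trace}$.

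I would first record two auxiliary facts. \textbf{(A) $\tau$-closure:} if $\musti{\server}{\client}$ and $\server \wt{\varepsilon} \server_1$, then $\musti{\server_1}{\client}$. This is an immediate induction on the derivation of $\server \wt{\varepsilon} \server_1$ via the rules \wtrefl and \wttau, the inductive step being exactly \rlem{musti-preserved-by-left-tau}, which applies since $\genlts_A \in \obaFW$. \textbf{(B) Synchronisation step:} if $\musti{\server}{\testconv{\mu.\trace}}$ and $\server \st{\mu} \server_2$, then $\musti{\server_2}{\testconv{\trace}}$. Here I pair $\server \st{\mu} \server_2$ with $\testconv{\mu.\trace} \st{\co{\mu}} \testconv{\trace}$ and apply \scom to obtain the reduction $\csys{\server}{\testconv{\mu.\trace}} \st{} \csys{\server_2}{\testconv{\trace}}$. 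I then invert the hypothesis $\musti{\server}{\testconv{\mu.\trace}}$: since $\lnot\good{\testconv{\mu.\trace}}$ by \rtab{properties-functions-to-generate-clients}(\ref{gen-spec-ungood}), the derivation cannot end with rule \mnow, so it ends with \mstep, whose second premise yields $\musti{\server_3}{\client_3}$ for \emph{every} reduct $\csys{\server_3}{\client_3}$ of $\csys{\server}{\testconv{\mu.\trace}}$. Instantiating this with the reduct just exhibited gives exactly $\musti{\server_2}{\testconv{\trace}}$.

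With (A) and (B) in hand, the statement follows by decomposing the weak transition as $\server \wt{\varepsilon} \server_1 \st{\mu} \server_2 \wt{\varepsilon} \server'$ (a standard consequence of the inductive definition of $\wt{\mu}$ through \wttau and \wtmu): apply (A) to get $\musti{\server_1}{\testconv{\mu.\trace}}$, then (B) to get $\musti{\server_2}{\testconv{\trace}}$, and finally (A) again to get $\musti{\server'}{\testconv{\trace}}$. Equivalently, one may argue directly by induction on the derivation of $\server \wt{\mu} \server'$, where the \wttau case combines \rlem{musti-preserved-by-left-tau} with the induction hypothesis and the \wtmu case combines (B) with (A); this is likely the shape of the mechanised proof.

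The only genuinely delicate point is the inversion in step (B): because $\musti{}{}$ is a least fixpoint, a proof of $\musti{\server}{\testconv{\mu.\trace}}$ is generated by one of the two rules, and it is property \rtab{properties-functions-to-generate-clients}(\ref{gen-spec-ungood}) that rules out \mnow and thereby makes the universally-quantified premise of \mstep available. Beyond this, I expect no real obstacle—only the bookkeeping of the weak-transition decomposition and the careful tracking of which LTS ($\obaFW$ for the server, $\obaFB$ for the client) each invoked lemma requires.
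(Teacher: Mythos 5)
Your proposal is correct and matches the paper's proof, which is exactly a rule induction on $\server \wt{\mu} \server'$ combining \rlem{musti-preserved-by-left-tau} for the $\tau$-steps with \rlem{musti-presereved-by-actions-of-unsuccesful-tests} for the synchronisation step; your decomposition into $\wt{\varepsilon}\st{\mu}\wt{\varepsilon}$ plus facts (A) and (B) is just a repackaging of that induction, as you yourself note. The only cosmetic difference is that your fact (B) re-derives inline (via inversion of the $\opMusti$ derivation, using $\lnot\good{\testconv{\mu.\trace}}$) what the paper obtains by citing \rlem{musti-presereved-by-actions-of-unsuccesful-tests}, whose proof is precisely that case analysis.
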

\begin{proof}
  By rule induction on the reduction $\server \wt{\mu} \server'$ together with
  \rlem{musti-preserved-by-left-tau} and \rlem{musti-presereved-by-actions-of-unsuccesful-tests}.
\end{proof}

\begin{lemma}
  \label{lem:must-cnv}
  Let $\genlts_A \in \obaFW$. For every server $\server \in \States$, trace $\trace \in \Actfin$
  we have that
  $\musti{\server}{\testconv{\trace}} \implies \server \cnvalong \trace$.
\end{lemma}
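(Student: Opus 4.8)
The plan is to prove the implication by induction on the trace $\trace$, unfolding the inductive definition of $\cnvalong$ (rules \cnvepsilon and \cnvmu) one visible action at a time. The single fact that makes the argument work is that a convergence testing client is never successful, i.e. $\lnot\good{\testconv{\trace}}$ by \rtab{properties-functions-to-generate-clients}(\ref{gen-spec-ungood}); this prevents $\musti{\server}{\testconv{\trace}}$ from being discharged by the client reaching a good state, and so forces the server itself to converge.

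For the base case $\trace = \varepsilon$, I would feed the hypothesis $\musti{\server}{\testconv{\varepsilon}}$ into \rlem{must-terminate}, which yields $\server\convi$ or $\good{\testconv{\varepsilon}}$. The second disjunct is impossible by \rtab{properties-functions-to-generate-clients}(\ref{gen-spec-ungood}), so $\server\convi$, whence $\server\conv$ by \rcor{inductive-char-conv}, and therefore $\server\cnvalong\varepsilon$ via rule \cnvepsilon.

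For the inductive case $\trace = \mu.\trace'$, rule \cnvmu asks for two ingredients. First, $\server\conv$, obtained exactly as in the base case from \rlem{must-terminate} together with \rtab{properties-functions-to-generate-clients}(\ref{gen-spec-ungood}). Second, that $\server'\cnvalong\trace'$ for every $\server'$ with $\server\wt{\mu}\server'$. Fixing such a $\server'$, I would apply \rlem{must-gen-conv-wt-mu}, which is designed precisely to transport the must-predicate along a weak $\mu$-transition of the server, turning $\musti{\server}{\testconv{\mu.\trace'}}$ into $\musti{\server'}{\testconv{\trace'}}$, and then invoke the induction hypothesis at $\trace'$ to obtain $\server'\cnvalong\trace'$. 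Combining the two ingredients through \cnvmu delivers $\server\cnvalong\mu.\trace'$.

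The proof is therefore short, with the real difficulty delegated to \rlem{must-gen-conv-wt-mu}. I expect that lemma—and hence this one—to rely essentially on the hypothesis $\genlts_A\in\obaFW$: forwarding is what lets the server absorb the complementary outputs that $\testconv{\mu.\trace'}$ emits (property (\ref{gen-spec-mu-lts-co})), so that a weak move $\server\wt{\mu}\server'$ can always be matched by a synchronisation on the client side. The main obstacle to keep in mind is that this direction genuinely fails without forwarders, as witnessed by \rexa{forwarders-necessary} for the converse: the plan must make sure the forwarding axioms are actually invoked (through \rlem{must-gen-conv-wt-mu}, which in turn leans on \rlem{musti-preserved-by-left-tau} and \rlem{musti-presereved-by-actions-of-unsuccesful-tests}) rather than only the weaker output-buffered-with-feedback axioms.
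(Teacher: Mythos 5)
Your proof is correct and follows essentially the same route as the paper's: induction on the trace, with the base case (and the convergence half of the inductive case) discharged by \rtab{properties-functions-to-generate-clients}(\ref{gen-spec-ungood}) together with \rlem{must-terminate}, and the continuation half of the inductive case handled by \rlem{must-gen-conv-wt-mu} followed by the induction hypothesis. The only difference is presentational: you spell out explicitly the appeal to the induction hypothesis after \rlem{must-gen-conv-wt-mu}, which the paper's proof leaves implicit.
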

\begin{proof}
  We proceed by induction on the trace $\trace$.
  In the base case $\trace$ is $\varepsilon$.
  \rtab{properties-functions-to-generate-clients}(\ref{gen-spec-ungood}) states that
  $\lnot \good{\testconv{\varepsilon}}$ and we apply \rlem{must-terminate} to obtain $\server \convi$,
  and thus $\server \cnvalong \varepsilon$.
  In the inductive case $\trace$ is $\mu.\trace'$ for some $\mu \in \Act$ and $\trace' \in \Actfin$.
  We must show the following properties,
  \begin{enumerate}
  \item $\server \convi$, and
  \item for every $\server'$ such that $\server \wt{\mu} \server'$, $\server' \cnvalong \trace'$.
  \end{enumerate}
  We prove the first property as we did in the base case,
  and we apply \rlem{must-gen-conv-wt-mu} to prove the second property.
  %%%%% ORIGINAL ARGUMENT
  %%%%% DO NOT DELETE
  %% In the base case $\trace = \varepsilon$,
  %% and we must show $p \acnvalong \varepsilon$.
  %% Since $p \convi$ we use the rule \acnvepsilon.
  %% In the inductive case  $\trace= \mu.s'$ for some $\mu \in \Act$, and
  %% $s' \in \Actfin$. We must derive the judgement $p \acnvalong \mu.s'$,
  %% and since~$s$ is not empty we have to apply rule \acnvmu.
  %% As $p \convi$, it remains to explain why $( p \aftera \mu ) \acnvalong s' $.
  %% If $( p \aftera \mu )$ is empty this is trivially true, so suppose
  %% $( p \aftera \mu ) \neq \emptyset $.
  %% The hypothesis that $\musti \csys{p}{c(\mu.s')}$ and \rlem{must-weak-a-mu}
  %% imply that $\musti \csys{(p \aftera \mu)}{c(s')}$.
  %% We obtain $(p \aftera \mu) \acnvalong s' $ from the the inductive hypothesis,
  %% which ensures that for every
  %% $p \in \Proc \wehavethat \musti \csys{p}{c(s')} \implies p \acnvalong s'.$
  %%%%% DO NOT DELETE
  %%%%% ORIGINAL ARGUMENT
\end{proof}

\begin{lemma}%[\coqCom{acnv_drop_in_the_middle}, \coqCom{acnv_annhil}]
  \label{lem:acnvalong-preserved-by-operations}
  \label{lem:acnvalong-preserved-by-annihilation}
  \label{lem:acnv-drop-in-the-middle}
  Let $\genlts_A \in \obaFW$. For every $\server \in \States$, $\traceA \in \Names^\star$ and $\traceC \in \Actfin$ we have that
  \begin{enumerate}
  \item\label{pt:acnvalong-preserved-by-drop-in-the-middle}
        for every $\mu \in \Act$,
        if $\state \acnvalong \traceA.\mu.\traceC$ and $\state \st{\mu} \stateB$ then $\stateB \acnvalong \traceA.\traceC$,
  \item\label{pt:acnvalong-preserved-by-annihilation}
  for every $\aa.\traceB \in \Names^\star$ if
  $\state \acnvalong \traceA.\aa.\traceB.\co{\aa}.\traceC$ then
  $\state \acnvalong \traceA.\traceB.\traceC$.
  \end{enumerate}
\end{lemma}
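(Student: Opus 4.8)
The plan is to prove both parts by first isolating a single \emph{input-commutation} principle for \obaFW\ LTSs and then using it to slide the distinguished action to the front of the trace (part~1) or to annihilate a matched input/output pair (part~2). The starting observation is that in any \obaFW\ LTS inputs are deterministic and receptive with a canonical return move: by \boom each state $p$ has a transition $p \st{a} p_a$ with $p_a \st{\co a} p$, and if $p \st{a} p_a$ and $p \st{a} p_a'$ then both $p_a \st{\co a} p$ and $p_a' \st{\co a} p$, so \outputdeterminacyinv forces $p_a = p_a'$. Hence for every $p$ and every input $a$ there is a unique successor $p_a$.

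From this I would derive the key one-step diamond: if $p \st{\mu} q$ then, writing $p_a$ and $q_a$ for the unique input-$a$ successors, one has $p_a \st{\mu} q_a$. Indeed $p_a \st{\co a} p \st{\mu} q$, so \outputcommutativity yields $p_a \st{\mu} p''$ with $p'' \st{\co a} q$; since also $q_a \st{\co a} q$ by \boom, \outputdeterminacyinv gives $p'' = q_a$. This shows that inputs commute with every action at the level of strong transitions, the remaining input/$\tau$ overlaps being resolved by \outputtau. I would package this into a statement about $\cnvalong$: for $a \in \Names$ and $\mu \in \Act$, $p \cnvalong \sigma.\mu.\rho \implies p \cnvalong \mu.\sigma.\rho$ whenever $\sigma \in \Names^\star$, proved by induction on $\sigma$, using the diamond to transport reachability and \rlem{acnv-aux}(1) to absorb the interleaved $\tau$-steps.

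Granting that commutation, part~(1) is immediate: from $p \cnvalong s_1.\mu.s_3$ with $s_1 \in \Names^\star$ I obtain $p \cnvalong \mu.s_1.s_3$, and since $p \st{\mu} q$ gives $p \wt{\mu} q$, unfolding \cnvmu (equivalently \rlem{acnv-aux}(2)) yields $q \cnvalong s_1.s_3$. For part~(2) I would first move the output $\co a$ leftwards past the input block $s_2$ — an output-past-input commutation that follows from \outputcommutativity and \outputconfluence in the same inductive style — turning $p \cnvalong s_1.a.s_2.\co a.s_3$ into $p \cnvalong s_1.a.\co a.s_2.s_3$. Then the adjacent pair $a.\co a$ annihilates: along $s_1$ every reached state $r$ satisfies $r \cnvalong a.\co a.s_2.s_3$, so its unique successor $r_a$ satisfies $r_a \cnvalong \co a.s_2.s_3$ and $r_a \st{\co a} r$; unfolding \cnvmu at $\co a$ gives $r \cnvalong s_2.s_3$, and \rlem{cnvalong-iff-prefix} lets me reassemble this into $p \cnvalong s_1.s_2.s_3$.

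The main obstacle I anticipate is not the combinatorics of sliding letters but the preservation of convergence under commutation, i.e. discharging the $q \conv$ clause produced each time \cnvmu is unfolded. This does not follow from $p \conv$ alone, since $\mu$ could guard a divergent continuation, so I would argue by contradiction from a hypothetical infinite $\tau$-sequence out of $q$: when $\mu = \co b$ is an output, \outputcommutativity pushes the divergence back to an infinite $\tau$-sequence out of $p$; when $\mu = b$ is an input, I use $q \st{\co b} p$ from \boom together with repeated applications of \outputtau (branching on its two conclusions) and \outputcommutativity to again manufacture an infinite $\tau$-sequence from $p$, contradicting $p \conv$. Carrying this through for the weak transitions $\wt{\mu}$ and $\wt{a}$ rather than the strong ones is where the bookkeeping, combined with \rlem{acnv-aux}(1), is most delicate, and it is precisely here that all of Selinger's axioms together with \outputdeterminacyinv are genuinely required.
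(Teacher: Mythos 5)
There is a genuine gap, and it sits exactly where you place all the weight. Your foundational claim that inputs are deterministic in an \obaFW{} LTS is false: the \boom axiom asserts the \emph{existence} of one $a$-successor that loops back via $\co{a}$, not that \emph{every} $a$-successor does. \outputdeterminacyinv{} then only makes the \emph{boomerang} successor unique among boomerang successors; it cannot force an arbitrary $a$-successor to coincide with it. Concretely, in $\liftFW{\ACCS}$ (which is \obaFW{} by \rlem{liftFW-works}) the state $a.b.\Nil \triangleright \varnothing$ has two distinct $a$-successors: the mailbox input $a.b.\Nil \triangleright \mset{\co{a}}$, which does loop back, and the process input $b.\Nil \triangleright \varnothing$, which has no $\co{a}$-transition at all. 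So ``the unique successor $p_a$'' does not exist; your one-step diamond is fine when read as a statement about boomerang successors (and that reading makes your part-(2) annihilation step correct), but since $\cnvalong$ quantifies over \emph{all} weak successors, a transport argument built on the diamond alone leaves the process-input paths entirely unhandled. Covering them is precisely the content of the paper's \rlem{weak-a-swap}, and its proof needs \fwdfeedback --- reroute through the boomerang successor by \outputcommutativity, then annihilate the pending output against the next input --- an axiom your proposal never invokes. (For what it is worth, the paper gives no prose proof of this lemma, only a Coq one, so your attempt has to stand on its own.)

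The convergence step for inputs is broken in a way that is not just presentational: it is false that a divergence of an input successor $q$ of $p$ can be pushed back to a divergence of $p$, even when $q$ is the boomerang successor. Take $p = b.\Omega \triangleright \varnothing$ with $\Omega = \rec{\tau.x}$: $p$ is stable, hence $p \conv$, yet its boomerang successor $b.\Omega \triangleright \mset{\co{b}}$ diverges, because the mailbox message $\co{b}$ can be consumed by the process (rule \stcommlift), unleashing $\Omega$; the process successor $\Omega \triangleright \varnothing$ diverges as well. So no argument for the input case can end with ``contradicting $p \conv$''. The convergence of an input successor must instead be harvested from the hypothesis $p \cnvalong s_1.\mu.s_3$ itself: iterate the (boomerang) diamond along $s_1$ to relate the given successor $q$ to a state reached after the whole trace $s_1.\mu$, whose convergence the hypothesis guarantees, and then transport divergence backwards along the $\co{a_i}$-outputs using \outputcommutativity{} (output first, then $\tau$, where commutativity genuinely applies). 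Your part (2), by contrast, is essentially sound as sketched --- the output-past-input commutation needs only \outputcommutativity, and the annihilation via the boomerang successor followed by unfolding \cnvmu{} at $\co{a}$ is exactly the right move --- but part (1) for input $\mu$ needs both repairs above (\rlem{weak-a-swap} in place of input determinism, and hypothesis-based convergence harvesting) before it goes through.
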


\begin{lemma}\coqCom{terminate_must_gen_conv_nil}
  \label{lem:terminate-must-gen-conv-nil}
  For every LTS $\genlts_{\States}$ and
  every $\server \in \States$,
  $\server \convi$ implies $\musti{\server}{\testconv{\varepsilon}}$.
\end{lemma}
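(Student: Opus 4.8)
The plan is to proceed by rule induction on the derivation of $\server \convi$. The observation that structures the whole argument is that $\lnot \good{\testconv{\varepsilon}}$ (by \rtab{properties-functions-to-generate-clients}(\ref{gen-spec-ungood})), so the base rule \mnow of $\opMusti$ is never available for the system $\csys{\server}{\testconv{\varepsilon}}$ itself; in every case we must instead close the goal with the inductive rule \mstep, which obliges us to show (i) that $\csys{\server}{\testconv{\varepsilon}}$ can reduce, and (ii) that every reduct $\csys{\server'}{\client'}$ satisfies $\musti{\server'}{\client'}$. Both obligations are discharged by a case analysis on the reduction rules of \rfig{rules-STS}, driven by properties (c1)--(c3) of $\testconvSym$.

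In the base case, $\server$ is stable ($\server \Nst{\tau}$). Rule \stauserver cannot fire since $\server$ has no $\tau$-move, and rule \scom cannot fire since property (c1) gives $\testconv{\varepsilon} \Nst{\co{\mu}}$ for every $\mu$. Hence the only reductions of $\csys{\server}{\testconv{\varepsilon}}$ come from \stauclient, and at least one exists because property (c2) yields $\testconv{\varepsilon}\st{\tau}$; this establishes obligation (i). For obligation (ii), any such reduct is $\csys{\server}{\client'}$ with $\testconv{\varepsilon} \st{\tau} \client'$, and property (c3) guarantees $\good{\client'}$, so $\musti{\server}{\client'}$ follows immediately by \mnow.

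In the inductive case we have $\server \st{\tau}$ together with the induction hypothesis that $\musti{\server''}{\testconv{\varepsilon}}$ for every $\server''$ with $\server \st{\tau} \server''$. Obligation (i) is immediate, since either \stauserver or \stauclient provides a move. For obligation (ii) we again split on the reduction rule: a \stauserver-step gives $\csys{\server''}{\testconv{\varepsilon}}$ with $\server \st{\tau} \server''$, handled directly by the induction hypothesis; a \stauclient-step gives $\csys{\server}{\client'}$ with $\good{\client'}$ by (c3), handled by \mnow; and \scom is again impossible by (c1). Applying \mstep concludes the case.

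The proof is essentially routine; the one point that deserves attention, and the crux of why the statement holds for an \emph{arbitrary} LTS with no recourse to the Selinger axioms, is obligation (i) in the base case. Even when the server is stable and contributes nothing, the system must still be able to reduce for \mstep to apply, and this is supplied entirely by the client via property (c2), while (c3) ensures that every such client-driven step already reaches a successful state. The only care needed elsewhere is to systematically rule out synchronisations through (c1), so that no uncontrolled reducts arise.
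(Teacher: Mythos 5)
Your proposal is correct and takes essentially the same route as the paper: the paper's entire proof of this lemma is the one-line remark ``Rule induction on the derivation of $p \convi$,'' and your argument spells out exactly that induction, closing every case with \mstep using the generator properties and the fact that $\lnot\good{\testconv{\varepsilon}}$. The details you supply (ruling out \scom via (c1), obtaining reducibility in the stable case from (c2), and handling client reducts with (c3) followed by \mnow) are the natural and correct way to discharge the obligations the paper leaves implicit.
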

%%%%%%%%%%%%%%%%%%%%%%%%%%%%%%%%%%%%%%%%%%%%%%%%%%%%%%%%%%%%%%%
%%%%%%%%%%%%%%%%%%%%%%%%%%%%%%%%%%%%%%%%%%%%%%%%%%%%%%%%%%%%%%%
%%%%%%%%%%%%%%%%% ORIGINAL ARGUMENT
%%%%%%%%%%%%%%%%% DO NOT EREASE
\begin{proof}
Rule induction on the derivation of $p \convi$.
\end{proof}
%%%%%%%%%%%%%%%%% ORIGINAL ARGUMENT
%%%%%%%%%%%%%%%%% DO NOT EREASE
%%%%%%%%%%%%%%%%%%%%%%%%%%%%%%%%%%%%%%%%%%%%%%%%%%%%%%%%%%%%%%%
%%%%%%%%%%%%%%%%%%%%%%%%%%%%%%%%%%%%%%%%%%%%%%%%%%%%%%%%%%%%%%%

\begin{lemma}\coqCom{acnv_must}
  \label{lem:acnv-must}
  For every $\genlts_A \in \obaFW$, every $\server \in \StatesA$, and $\trace \in\Actfin$,
  if $\server \cnvalong \trace$ then $\musti{\server}{\testconv{ \trace}}$.
\end{lemma}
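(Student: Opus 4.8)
The plan is to prove the statement by strong (well-founded) induction on the length of the trace~$\trace$, using the inversion lemmas for $\testconvSym$ to decompose client moves into strictly shorter traces. Throughout I will replace $\server \conv$ by $\server \convi$ via \rcor{inductive-char-conv}, so that convergence is available as an inductively generated predicate. In the base case $\trace = \varepsilon$, the hypothesis $\server \cnvalong \varepsilon$ is exactly $\server \conv$, hence $\server \convi$, and \rlem{terminate-must-gen-conv-nil} delivers $\musti{\server}{\testconv{\varepsilon}}$ immediately.

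For the inductive case $\trace = \mu.s'$, since \rtab{properties-functions-to-generate-clients}(\ref{gen-spec-ungood}) gives $\lnot\good{\testconv{\mu.s'}}$, the only applicable rule is \mstep. I must therefore show (a) that $\csys{\server}{\testconv{\mu.s'}}$ performs some $\tau$-reduction, and (b) that every $\tau$-successor $\csys{\server''}{\client''}$ satisfies $\musti{\server''}{\client''}$. For (a): if $\server\st{\tau}$ the reduction is provided by \stauserver; otherwise $\server$ is stable and I split on $\mu$. If $\mu$ is an input name, the \boom axiom (available since $\genlts_A\in\obaFW$) gives $\server\st{\mu}$, which synchronises via \scom with the client output $\co{\mu}$ supplied by \rtab{properties-functions-to-generate-clients}(\ref{gen-spec-mu-lts-co}); if $\mu$ is an output, \rtab{properties-functions-to-generate-clients}(\ref{gen-spec-mu-out-ex-tau}) gives a client $\tau$-move, hence a \stauclient reduction. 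This is precisely the point where forwarding is essential, ruling out the pathology of \rexa{forwarders-necessary}.

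For (b) I enumerate the three shapes of a $\tau$-successor. The \stauserver successors $\server\st{\tau}\server''$ (client unchanged) are handled by a \emph{secondary} induction on the derivation of $\server\convi$: each such $\server''$ carries $\server''\convi$ as a sub-derivation and, by \rlem{acnv-aux}, still satisfies $\server''\cnvalong\mu.s'$, so the secondary induction hypothesis applies without shrinking the trace. The \stauclient successors are analysed with \rlem{inversion-feeder-tau}: either the resulting client is good, closing the goal by \mnow, or $\mu.s' = s_1.b.s_2.\co{b}.s_3$ with $\client''\simeq\testconv{s_1.s_2.s_3}$; here \rlem{acnv-drop-in-the-middle} transports $\server\cnvalong\mu.s'$ to $\server\cnvalong s_1.s_2.s_3$, a trace shorter by two, so the outer induction hypothesis gives $\musti{\server}{\testconv{s_1.s_2.s_3}}$, whence $\musti{\server}{\client''}$ by invariance of $\opMusti$ under $\simeq$ (a routine consequence of the compatibility axiom of \rfig{Axiom-LtsEq} and the $\simeq$-preservation of $\goodSym$). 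The \scom successors $\server\st{\nu}\server''$ with $\testconv{\mu.s'}\st{\co{\nu}}\client''$ are treated symmetrically via \rlem{inversion-gen-mu}: either $\client''$ is good (use \mnow), or $\mu.s'=s_1.\nu.s_2$ with $s_1\in\Names^\star$ and $\client''\simeq\testconv{s_1.s_2}$, and \rlem{acnv-drop-in-the-middle} yields $\server''\cnvalong s_1.s_2$, a trace shorter by one, to which the outer induction hypothesis applies.

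The main obstacle is orchestrating the two nested inductions correctly. The trace-length induction must be genuinely strong, since the annihilation case and the causal-consumption case shorten the trace by two and by one respectively rather than merely removing its head, while the convergence induction is exactly what exhausts the purely internal server moves that leave the trace untouched. Keeping the $\cnvalong$ hypothesis alive across each kind of successor — through the transport lemmas \rlem{acnv-aux} and \rlem{acnv-drop-in-the-middle} — and matching each client decomposition produced by the inversion lemmas to the correct instance of an induction hypothesis is where the real care lies.
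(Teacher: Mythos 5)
Your proposal is correct and follows essentially the same route as the paper's proof: an outer complete induction on the length of $\trace$ (base case discharged by \rlem{terminate-must-gen-conv-nil}), an inner rule induction on $\server \convi$ to absorb server $\tau$-moves, and the same case analysis on system reductions using \rlem{inversion-feeder-tau} and \rlem{inversion-gen-mu} together with the two parts of \rlem{acnvalong-preserved-by-operations} to shorten the trace by two (annihilation) or by one (communication) before invoking the outer induction hypothesis. The only differences are presentational: you argue the existence of a reduction by cases on $\mu$ via the \boom axiom, whereas the paper gets it directly from \rlem{testconv-always-reduces}, and you make explicit the $\simeq$-invariance of $\opMusti$ that the paper leaves implicit.
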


\begin{proof}
The hypothesis $\server \cnvalong \trace$ ensures $p \convi$.
%\rtab{properties-functions-to-generate-clients}(\ref{gen-spec-ungood}) states that $\lnot \good{ \testconv{ \trace} }$.
%The last fact means that the only way to prove
%$\musti{ \server  }{\testconv{ \trace}}$ is by applying the rule \mstep.
%We have therefore to show that $\csys{ \server }{ \testconv{ \trace}} \st{\tau}$,
%which is a consequence of \rlem{testconv-always-reduces}.
We show that $\musti{\server}{\testconv{ \trace}}$ reasoning by complete
induction on the length of the trace~$\trace$.
The base case is \rlem{terminate-must-gen-conv-nil} % together with the fact that $\convi p$.
and here we discuss the inductive case, i.e. when $\len{ \trace } = n + 1$ for some $n \in \N$.

We proceed by rule induction on $p \convi$.
In the base case $p \Nst{\tau}$,
and the reduction at hand % in \req{acnv-must-reduction}
is due to either a $\tau$ transition in~$\testconv{ \trace}$, or
a communication between~$p$ and~$\testconv{ \trace}$.

In the first case~$\testconv{ \trace} \st{\tau} r$, and so \rlem{inversion-feeder-tau} ensures
that one of the following conditions holds,
\begin{enumerate}
\item $\good{ \client }$, or
\item there exist $\aa \in \Names$, $\traceA, \traceB$ and $\traceC$ with
  $\trace = \traceA.\aa.\traceB.\co{\aa}.\traceC$ and
  $r \simeq \testconv{\traceA.\traceB.\traceC}$.
\end{enumerate}

If $\good{\client}$ then we conclude via rule \mnow; %
otherwise \rptlem{acnvalong-preserved-by-operations}{acnvalong-preserved-by-annihilation}
and the hypothesis that $\server \cnvalong \trace$ imply $\server \cnvalong \traceA.\traceB.\traceC$,
thus prove $\musti{\serverA}{\client}$ %$\musti{q}}{r}$
via the inductive hypothesis of the complete induction on $\trace$.%\req{ind-hyp-length}.

We now consider the case when the transition is due to a communication,
\ie $\serverA \st{\mu} \serverA'$ and $\testconv{ \trace} \st{\co{\mu}} \client$.
\rlem{inversion-gen-mu}
tells us that either $\good{\client}$
or there exist $\traceA$ and $\traceB$ such that
$\trace = \traceA.\mu.\traceB$ and $\client \simeq \testconv{\traceA.\traceB}$.
In the first case we conclude via rule \mnow.
In the second case we apply
\rptlem{acnvalong-preserved-by-operations}{acnvalong-preserved-by-drop-in-the-middle}
to prove $\server' \cnvalong \traceA.\traceB$, and thus~$\musti{\server'}{\client}$ follows
from the inductive hypothesis of the complete induction.
In the inductive case of the rule induction on~$\server \convi$, we know that
$\server \st{\tau} \server'$ for some process $\server'$.
We reason again by case analysis on how the reduction
we fixed %in \req{acnv-must-reduction}
has been derived, \ie either via a~$\tau$ transition in~$\testconv{ \trace}$,
or via a communication between~$p$ and~$\testconv{ \trace}$, or via
a~$\tau$ transition in~$p$.
In the first two cases we reason as we did
for the base case of the rule induction.
In the third case~$\server \cnvalong{s}$ and
$\server \st{\tau} \server'$ imply $\server' \cnvalong{s}$, we
thus obtain $\musti{\server'}{\testconv{ \trace}}$ thanks to
the inductive hypothesis of the rule induction
which we can apply because the
tree to derive~$\server' \convi$ is smaller than the tree
to derive that~$\server \convi$.
\end{proof}

\subsection{Testing acceptance sets}

\renewcommand{\state}{p}
\renewcommand{\stateA}{p'}

\iffalse
\begin{lemma}
\label{lem:wta-preserves-da}
For every $\mu.\trace \in \Actfin$ and $\state \in \States$,
\begin{enumerate}
\item %(\coqCom{wta_stable_set_mu})
  if $ \state \wta{ \mu } \stateA $ then $\accht{ \stateA }{\trace} \subseteq \accht{ \state }{\mu.\trace}$,
%% THE NEXT ONE IS A TRIVIALITY ONCE THE MONOTONE OF ofun IS ESTABLISHED
%\item \gb{(COQ LINK)} $\ofun{ \accht{ \stateA }{ \trace } } \subseteq \ofun{\accht{ \state }{ \mu.\trace }}$,
%\item \gb{(COQ LINK)} $\accht{ \stateA }{ \trace }  \subseteq \accht{ \state }{ \mu.\trace }$,
\item\label{pt:da-output-shift} %(\coqCom{wta_stable_set_mu_input}) if $\mu \in \Names$, then
  $\accht{\state}{\mu.\trace} \subseteq \accht{ \state \Par \mailbox{ \co{\mu} } }{\trace} $.
\end{enumerate}
\end{lemma}
\fi

In this section we present the properties of the function $\testacc{-}{-}$
that are sufficient to obtain completeness.
To begin with, $\testacc{-}{-}$ function enjoys a form of monotonicity with respect to its second argument.

%% \begin{lemma}%(\coqCom{must_gen_a_nil_subseteq})
%% \label{lem:tacc-monotonicity}
%% For every $\server \in \ACCS$,
%% $\trace \in \Actfin$, and
%% $X \subseteq Y \subseteq \Names $,
%%         if $\musti{\server}{\testacc{ \trace }{X}}$ then $\musti{\server}{\testacc{ \trace }{Y}}$.
%% \end{lemma}
%% \begin{proof}
%%         Induction on the derivation of $\musti{ \server }{\testacc{ \trace }{X}}$.
%% \end{proof}
%% %%%
%% %%%
%% %%% PAUL: you wrote again a lemma that was already in the appendix. I copied it above.
%%%
%%%
\begin{lemma}
  \label{lem:tacc-monotonicity}
  \label{lem:must-f-gen-a-subseteq}
  %%%  Let $\genlts_A = \lts{\StatesA}{L}{\st{}_A} \in \oba$. THIS IS THE WEAKEST HYPOTHESIS
  Let $\genlts_A \in \obaFB$.
  $\Forevery \serverA \in \StatesA$, trace $\trace \in \Actfin$, and sets of outputs $O_1, O_2$,
  if $\musti{\serverA}{\testacc{\trace}{O_1}}$
  and $O_1 \subseteq O_2$ then
  $\musti{\serverA}{\testacc{\trace}{O_2}}$.
\end{lemma}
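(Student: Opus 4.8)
The plan is to argue by induction on the derivation of $\musti{\serverA}{\testacc{\trace}{O_1}}$, uniformly in $\trace$, $\serverA$ and the pair $O_1 \subseteq O_2$. Since the reducts produced by the rules of \rfig{rules-STS} will only be \emph{equivalent} to (rather than syntactically equal to) clients of the shape $\testacc{\trace'}{O_1}$, I would relax the induction motive modulo $\simeq$: for a judgement $\musti{\serverA}{\client}$ I prove that whenever $\client \simeq \testacc{\trace}{O_1}$ and $O_1 \subseteq O_2$, then $\musti{\serverA}{\testacc{\trace}{O_2}}$. This is sound because $\opMusti$ is closed under $\simeq$ (a consequence of \rfig{Axiom-LtsEq} together with the $\simeq$-invariance of $\goodSym$), so judgements and reductions may be transported along $\simeq$ and I may assume $\client = \testacc{\trace}{O_1}$ whenever convenient. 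The base case \rname{axiom} is then vacuous, as it would require $\good{\testacc{\trace}{O_1}}$, contradicting \rtab{properties-functions-to-generate-clients}(\ref{gen-spec-ungood}).

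In the inductive case \rname{ind-rule} I must re-establish its two premises for $\csys{\serverA}{\testacc{\trace}{O_2}}$. The pivotal observation, read off from \rtab{properties-functions-to-generate-clients}, is that $\testacc{\trace}{O}$ depends on $O$ only after the trace has been fully consumed: for $\trace$ non-empty, properties (\ref{gen-spec-mu-lts-co})--(\ref{gen-spec-out-mu-inp}) fix the $\tau$- and synchronising transitions of $\testacc{\trace}{O}$ independently of $O$; for $\trace = \varepsilon$, property~(\ref{gen-spec-acc-nil-stable-tau}) makes $\testacc{\varepsilon}{O}$ stable and (\ref{gen-spec-acc-nil-stable-out}) forbids its outputs, so its only transitions are the inputs $\setof{\aa}{\aa \in O}$ of (\ref{gen-spec-acc-nil-mem-lts-inp}), each reaching a good state by (\ref{gen-spec-acc-nil-lts-inp-good}). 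Hence every reduction of $\csys{\serverA}{\testacc{\trace}{O_1}}$ lifts to one of $\csys{\serverA}{\testacc{\trace}{O_2}}$, which discharges the first premise, and conversely the reducts of $\csys{\serverA}{\testacc{\trace}{O_2}}$ fall into two families.

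The reducts that already occur for the $O_1$-client—those from \stauserver, from \stauclient, or from \scom in which the client's contributed transition is one already available at $\testacc{\trace}{O_1}$—are, by inversion on that client transition (as in \rlem{inversion-gen-mu} and \rlem{inversion-feeder-tau}), of the form $\csys{\serverA'}{\client'}$ with either $\good{\client'}$ or $\client' \simeq \testacc{\trace'}{O_1}$ for some trace $\trace'$. The matching premise of the given derivation supplies $\musti{\serverA'}{\client'}$, and the induction hypothesis (available precisely because of the $\simeq$-relaxed motive, instantiated at $O_1 \subseteq O_2$) yields $\musti{\serverA'}{\testacc{\trace'}{O_2}}$, hence $\musti{\serverA'}{\client''}$ for the corresponding $O_2$-reduct $\client'' \simeq \testacc{\trace'}{O_2}$. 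The genuinely new reducts occur only when $\trace = \varepsilon$, via \scom with the client performing an input $\aa \in O_2 \setminus O_1$; by (\ref{gen-spec-acc-nil-lts-inp-good}) the reached client state is good, so such a reduct is closed immediately by \rname{axiom}. Together these two families discharge the second premise of \rname{ind-rule}, completing the induction.

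I expect the genuine difficulty to lie not in the reduction case analysis—which is routine once the $O$-independence of $\testacc{\trace}{O}$ on non-empty traces is isolated—but in the bookkeeping around $\simeq$: the inversion lemmas return clients only up to $\simeq$, so the whole argument must be carried out with a $\simeq$-closed motive and must repeatedly appeal to $\simeq$-compatibility of the transition relation (\rfig{Axiom-LtsEq}) to transport both reductions and $\opMusti$ judgements between the $O_1$- and $O_2$-systems.
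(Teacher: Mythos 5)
Your top-level strategy---rule induction on the derivation of $\musti{\serverA}{\testacc{\trace}{O_1}}$ with a $\simeq$-closed motive, the base case \rname{axiom} ruled out by \rtab{properties-functions-to-generate-clients}(\ref{gen-spec-ungood})---is exactly the paper's proof, which consists of that single sentence. The gap is in your ``pivotal observation''. It is not true that for non-empty $\trace$ the transitions of $\testacc{\trace}{O}$ are fixed independently of $O$: properties (\ref{gen-spec-mu-out-ex-tau})--(\ref{gen-spec-out-mu-inp}) pin down all transitions only of \emph{output}-prefixed tests, and (\ref{gen-spec-acc-nil-stable-tau})--(\ref{gen-spec-acc-nil-lts-inp-good}) only those of $\testacc{\varepsilon}{E}$; for an \emph{input}-prefixed trace the Table requires just the single output transition $\testacc{a.\trace'}{O} \st{\co{a}} \testacc{\trace'}{O}$ and constrains nothing else, and the remaining transitions genuinely depend on $O$. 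Concretely, in the \ACCS\ instantiation of \rfig{client-generators}, $\testacc{a}{\set{a}} = \co{a} \Par a.\Unit$ has a $\tau$-transition (the communication between the atom $\co{a}$ and $a.\Unit$) and an input transition, while $\testacc{a}{\emptyset} = \co{a} \Par \Nil$ has neither. Consequently your claim that ``the genuinely new reducts occur only when $\trace = \varepsilon$'' is false: new reducts arise at \emph{every} trace consisting only of inputs, both as client $\tau$-steps and as \scom-steps with inputs in $O_2 \setminus O_1$. Your case analysis files such a reduct of the $O_2$-system under the family that must be matched by a reduct of the $O_1$-system, and no such reduct exists; the induction step fails as written.

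The lemma is nevertheless true and your skeleton is completable, because all of these extra reducts lead to $\goodSym$ states and can be closed by \rname{axiom} exactly as in your $\trace = \varepsilon$ case---but establishing this is precisely the work you elide. At the level of generality of the statement one must prove inversion lemmas for $\testaccSym$, the analogues of \rlem{inversion-gen-mu} and \rlem{inversion-feeder-tau} (which you cite, but which the paper states only for $\testconvSym$): by descending along the trace with the \outputtau, \outputconfluence\ and \outputcommutativity\ axioms of the client LTS, the properties of $\testacc{\varepsilon}{-}$, and the invariance of $\goodSym$ under outputs \eqref{eq:good-invariance}, one shows that every transition of $\testacc{\trace}{O_2}$ either reaches a good state or reaches, up to $\simeq$, a residual test $\testacc{\trace''}{O_2}$. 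For the residual case one then needs the converse ``lifting'' direction---property (\ref{gen-spec-mu-lts-co}) plus \outputcommutativity\ to pull transitions back along the chain of buffered outputs---to exhibit the corresponding reduct of the $O_1$-system (this is the one place where $O_1 \subseteq O_2$ matters, and it also repairs your argument for the step-existence premise). Only with these two pieces in place do the premises of \rname{ind-rule} and the appeal to the inductive hypothesis go through.
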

\begin{proof}
  Induction on the derivation of $\musti{ \server }{\testacc{ \trace }{ O_1 }}$.
\end{proof}

Let $\oba$ denote the set of LTS of output-buffered agents. Note that any $\genlts \in \oba$ need not enjoy the \outputfeedback axiom.

\begin{lemma}{\label{lem:aft-output-must-gen-acc}}
  Let $\genlts_A \in \oba$, and
  $\genlts_B \in \oba$.
  For every $\serverA \in \StatesA$, trace $\trace \in \Actfin$,
  set of outputs $O$ and name $\aa \in \Names$,
  such that
  \begin{enumerate}[(i)]
  \item
    \label{aft-output-must-gen-acc-h-1}
    $\serverA \convi$ and,
  \item
    \label{aft-output-must-gen-acc-h-2}
    $\Forevery \server' \in \StatesA$, $\server \wt{\co{\aa}} \serverA'$ implies
    $\musti{\server'}{\testacc{\trace}{ \co{O} }}$,
  \end{enumerate}
  we have that $\musti{\server}{\testacc{\co{\aa}.\trace}{ \co{O} }}$.
\end{lemma}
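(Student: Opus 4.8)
The plan is to prove $\musti{\server}{\testacc{\co{\aa}.\trace}{\co{O}}}$ by rule induction on the derivation of the convergence hypothesis $\server \convi$, building in each case a derivation whose last rule is the inductive rule \mstep of $\opMusti$. Since the test client satisfies $\lnot\good{\testacc{\co{\aa}.\trace}{\co{O}}}$ by \rtab{properties-functions-to-generate-clients}(\ref{gen-spec-ungood}), the axiom \mnow is unavailable at the root, so I must exhibit one reduction of the system $\csys{\server}{\testacc{\co{\aa}.\trace}{\co{O}}}$ and then discharge all of its reducts. That the system can reduce at all is immediate: by \rtab{properties-functions-to-generate-clients}(\ref{gen-spec-mu-out-ex-tau}) the client can perform a $\tau$-step, so \stauclient produces a reduction. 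It then remains to analyse an arbitrary reduct according to the three rules \stauserver, \stauclient and \scom of \rfig{rules-STS}.

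First I would dispatch the two cases that do not use the inductive hypothesis. If the reduct comes from \stauclient, then $\testacc{\co{\aa}.\trace}{\co{O}} \st{\tau} \client'$, and \rtab{properties-functions-to-generate-clients}(\ref{gen-spec-out-good}) forces $\good{\client'}$, so $\musti{\server}{\client'}$ follows by \mnow. If it comes from \scom, then $\server \st{\mu} \server'$ and $\testacc{\co{\aa}.\trace}{\co{O}} \st{\co{\mu}} \client'$; the rigidity of the client given by \rtab{properties-functions-to-generate-clients}(\ref{gen-spec-out-mu-inp}) pins down its only visible move, giving $\co{\mu} = \aa$, hence $\mu = \co{\aa}$, and $\client' = \testacc{\trace}{\co{O}}$. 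Then $\server \st{\co{\aa}} \server'$ is in particular a weak output $\server \wt{\co{\aa}} \server'$, and hypothesis (ii) delivers exactly $\musti{\server'}{\testacc{\trace}{\co{O}}}$.

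The remaining case, \stauserver with $\server \st{\tau} \server'$ and the client unchanged, is where the induction does the work: I must prove $\musti{\server'}{\testacc{\co{\aa}.\trace}{\co{O}}}$, which is the conclusion of the lemma instantiated at $\server'$. In the inductive case of $\server \convi$ one has $\server' \convi$ for every such $\server'$, so the inductive hypothesis is available; the only thing to verify before invoking it is that hypothesis (ii) is inherited by $\server'$. This holds because any weak output $\server' \wt{\co{\aa}} \server''$ can be prefixed with the silent step $\server \st{\tau} \server'$ (rule \wttau) to yield $\server \wt{\co{\aa}} \server''$, so that (ii) for $\server$ gives $\musti{\server''}{\testacc{\trace}{\co{O}}}$, which is precisely (ii) for $\server'$. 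In the base case $\server$ is $\tau$-stable, so no \stauserver reduct exists and only the two easy cases occur.

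The hard part, and the reason convergence is assumed, is exactly this \stauserver case: without a well-founded measure on the server's internal moves the recursion on $\server'$ would not terminate, and a server that diverges while emitting no output would genuinely fail the test. Casting the argument as rule induction on $\server \convi$ supplies precisely this well-foundedness, while the propagation of (ii) along $\tau$-steps, together with the tight specification of $\testaccSym$ in \rtab{properties-functions-to-generate-clients}, keeps the hypothesis in the correct form at every recursive step. I do not expect to need the output-buffered axioms of $\genlts_A$ or $\genlts_B$ directly here; the whole argument runs on the properties of the test-client generator and on the shape of system reductions.
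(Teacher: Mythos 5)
Your proposal is correct and follows essentially the same route as the paper's proof: rule induction on $\server \convi$, using \rtab{properties-functions-to-generate-clients}(\ref{gen-spec-mu-out-ex-tau}) to establish that the system reduces, handling the client-$\tau$ and \scom{} reducts via properties (\ref{gen-spec-out-good}) and (\ref{gen-spec-out-mu-inp}) together with hypothesis (\ref{aft-output-must-gen-acc-h-2}), and discharging the \stauserver{} case by propagating hypothesis (\ref{aft-output-must-gen-acc-h-2}) along the $\tau$-step ($\server \st{\tau} \server' \wt{\co{\aa}} \server''$ implies $\server \wt{\co{\aa}} \server''$) before invoking the inductive hypothesis. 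Your closing observation that the argument uses only the client-generator specification and the shape of system reductions, not the output-buffered axioms, is also consistent with the paper's statement, which assumes only $\genlts_A, \genlts_B \in \oba$.
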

\begin{proof}
  We proceed by induction on the hypothesis $\server \convi$.

  \paragraph{(Base case: $\serverA$ is stable)}
  We prove $\musti{\server}{\testacc{\co{\aa}.\trace}{ \co{O} }}$ by applying rule \mstep.
  Since \rtab{properties-functions-to-generate-clients}(\ref{gen-spec-mu-out-ex-tau}) implies that
  $\csys{ \serverA }{\testacc{\co{\aa}.\trace}{ \co{ O } }} \st{\tau}$, all we need to prove is
  the following fact,
  \begin{equation}
    \tag{$\star$}
    \label{eq:aft-output-must-gen-acc-AIM}
    \forall \server' \in \StatesA, \client \in \StatesB \wehavethat
    \text{ if }
    \csys{ \server }{\testacc{\co{\aa}.\trace}{ \co{ O } }}
    \st{\tau} \csys{\server'}{\client} \text{ then } \musti{\server'}{\client}.
  \end{equation}
  %% \begin{enumerate}[(a)]
  %% \item $\csys{ \serverA }{\testacc{\co{\aa}.\trace}{ \co{ O } }} \st{\tau}$, and
  %% \item\label{pt:aft-output-must-gen-acc-AIM-2} for each $\csys{\serverA'}{\client'}$ such that
  %%   $\csys{ \serverA }{\testacc{\co{\aa}.\trace}{ \co{ O } }}
  %%   \st{\tau} \csys{\serverA'}{\client}$, we have $\musti{\serverA'}{\client}$.
  %% \end{enumerate}
  %% The first requirement follows from
  %%  that gives us a $\client$ such that $\testacc{\co{\aa}.\trace}{O} \st{\tau} \client$
%  and thus the following transition.
%  $$
%  \csys{ p }{\testacc{\co{\aa}.\trace}{O}} \st{\tau} \csys{ p }{\client}
%  $$
  %  We are now going to show the following requirement.
  %
  %To prove \rpt{aft-output-must-gen-acc-AIM-2}
  Fix a transition $\csys{ \serverA }{\testacc{\varepsilon}{\co{O}}} \st{\tau} \csys{\serverA'}{\client}$.
  As $\serverA$ is stable, this transition can either be due to:
  \begin{enumerate}
  \item a $\tau$-transition performed by the client such that
    $\testacc{\co{\aa}.\trace}{ \co{O} } \st{\tau} \client$, or
  \item an interaction between the server $\serverA$ and the client
    $\testacc{\co{\aa}.\trace}{ \co{O} }$.
  \end{enumerate}
  In the first case \rtab{properties-functions-to-generate-clients}(\ref{gen-spec-out-good}) implies $\good{\client}$, and hence we obtain $ \musti{\server'}{\client}$ via rule \mnow.
  In the second case there exists an action $\mu$ such that
  $$
  \serverA \st{\mu} \serverA'
  \quad\text{and}\quad
  \testacc{\co{\aa}.\trace}{\co{O}} \st{\co{\mu}} \client
  $$
  \rtab{properties-functions-to-generate-clients}(\ref{gen-spec-out-mu-inp}) implies $\mu$ is $\co{\aa}$
  and $\client = \testacc{\trace}{ \co{O} }$.
  We then have $\serverA \st{\co{\aa}} \serverA'$ and thus the reduction
  $\serverA \wt{\co{\aa}} \serverA'$,
  which allows us to apply the hypothesis (\ref{aft-output-must-gen-acc-h-2}) and obtain
  $\musti{\serverA'}{\client}$ as required.

  \paragraph{(Inductive case: $\server \st{\tau} \server'$ implies $\server'$)}
  The argument is similar to one for the base case, except that
  we must also tackle the case when the transition
  $\csys{ \server }{\testacc{ \co{\aa}.\trace }{ \co{O} }}
  \st{\tau} \csys{\server'}{\client}$ is due to a $\tau$ action performed
  by $\server$, \ie $\server \st{\tau} \server'$ and $\client = \testacc{ \co{\aa}.\trace }{ \co{O} }$.
  The inductive hypothesis tells us the following fact:
  \begin{center}
    For every $\serverA_1$ and $\aa$,
    such that
    $\serverA \st{\tau} \serverA_1$, for every $\serverA_2$,
   if $\serverA_1 \wt{\co{\aa}} \serverA_2$
    then $\musti{\serverA_2}{\testacc{\trace}{O}}$.
  \end{center}
  To apply the inductive hypothesis we have to show that
  for every $\serverA_2$ such that $\serverA' \wt{\co{\aa}} \serverA_2$
  we have that $\musti{\serverA_2}{\testacc{\trace}{ \co{ O }}}$.
  This is a consequence of the hypothesis (\ref{aft-output-must-gen-acc-h-2})
  together with the reduction $\serverA \st{\tau} \serverA' \wt{\co{\aa}} \serverA_2$,
  and thus concludes the proof.
\end{proof}

\begin{lemma}{\label{lem:must-gen-acc-stable}}
  %%%  Let $\genlts_{\StatesA}, \genlts_{\StatesA} \in \oba$.
  Let $\genlts_{\StatesA} \in \obaFB$.
  $\Forevery \serverA \in \StatesA$ and set of outputs $O$,
  if $\serverA$ is stable then either
  \begin{enumerate}[(a)]
  \item $\musti{\serverA}{\testacc{\varepsilon}{\co{ \outof{\serverA} \setminus O}} }$, or
  \item $\outof{\serverA} \subseteq O$.
  \end{enumerate}
\end{lemma}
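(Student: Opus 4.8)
The plan is to prove the disjunction by a constructive case analysis on whether $\outof{\serverA} \subseteq O$. This split is legitimate because $\outof{\serverA}$ is finite (our standing assumption on output sets) and labels have decidable equality, so membership of each $\co{\aa} \in \outof{\serverA}$ in $O$ is decidable. In the branch where $\outof{\serverA} \subseteq O$, alternative (b) holds outright and there is nothing further to do; all the content of the lemma lies in the opposite branch, which I treat next.

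So suppose $\outof{\serverA} \not\subseteq O$ and fix some $\co{\aa} \in \outof{\serverA}\setminus O$. Writing $E = \co{\outof{\serverA}\setminus O}$ for the set of input names fed to the test client, we have $\aa \in E$. I would establish alternative (a), namely $\musti{\serverA}{\testacc{\varepsilon}{E}}$, using the inductive rule \mstep of $\opMusti$; the base rule \mnow is not directly applicable here because $\lnot\good{\testacc{\varepsilon}{E}}$ by item (1) of \rtab{properties-functions-to-generate-clients}. The rule \mstep has two premises. For the first, that the system reduces: $\co{\aa} \in \outof{\serverA}$ gives $\serverA \st{\co{\aa}} \serverA'$, while $\aa \in E$ gives $\testacc{\varepsilon}{E}\st{\aa}$ by item (t3), and rule \scom then produces a $\tau$-reduction of $\csys{\serverA}{\testacc{\varepsilon}{E}}$.

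For the second premise I would enumerate exhaustively the possible reductions of $\csys{\serverA}{\testacc{\varepsilon}{E}}$. Since $\serverA$ is stable there is no \stauserver step, and since $\testacc{\varepsilon}{E}$ is stable by (t1) and performs no output by (t2) there is no \stauclient step. Hence every reduction is a synchronisation \scom of the form $\serverA \st{\co{b}} \serverA''$ with $\testacc{\varepsilon}{E}\st{b} \client''$, where the client performs an \emph{input} $b$, necessarily with $b \in E$ by (t3). Item (t4) then guarantees $\good{\client''}$ for every such target, so $\musti{\serverA''}{\client''}$ follows from the base rule \mnow. This discharges the universal premise of \mstep uniformly, and applying \mstep yields $\musti{\serverA}{\testacc{\varepsilon}{E}}$, which is (a).

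The main obstacle, and really the only place that needs care, is this exhaustive analysis of the system's reductions: I must combine the stability of $\serverA$ with (t1)--(t2) to rule out the autonomous server and client moves, and then use (t3)--(t4) to pin down that the surviving synchronisations are precisely on the outputs of $\serverA$ lying in $\outof{\serverA}\setminus O$ and that each such synchronisation lands in a $\goodSym$ state. It is worth noting that the Selinger axioms of $\obaFB$ are not invoked in the argument beyond guaranteeing, via finiteness of $\outof{\serverA}$, the decidability of the initial case split; the rest is a direct computation on the two-level semantics of \rfig{rules-STS} together with the test-client specification collected in \rtab{properties-functions-to-generate-clients}.
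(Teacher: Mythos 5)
Your proposal is correct and follows essentially the same route as the paper's proof: a case split on whether $\outof{\serverA}\setminus O$ is empty, then in the non-trivial case an application of \mstep whose first premise is discharged by an \scom synchronisation on a witness $\co{\aa}\in\outof{\serverA}\setminus O$ via \rtab{properties-functions-to-generate-clients}(t3), and whose second premise is discharged by using the server's stability together with (t1)--(t2) to reduce all reductions to test-input synchronisations, each landing in a $\goodSym$ state by (t4) and hence closed by \mnow. Your added remarks --- that the case split is constructively decidable by finiteness of $\outof{\serverA}$, and that the $\obaFB$ axioms play no role in the argument --- are accurate refinements of what the paper leaves implicit.
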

\begin{proof}
We distinguish whether $\outof{\serverA} \setminus O$ is empty or not.
In the first case, $\outof{\serverA} \setminus O = \emptyset$ implies
$\outof{\serverA} \subseteq O$, and we are done.

In the second case, there exists $\co{\aa} \in \outof{\serverA}$ such that
$\co{\aa} \notin O$.
Note also that
\rtab{properties-functions-to-generate-clients}(\ref{gen-spec-ungood})
ensures that  $\lnot \good{\testacc{\varepsilon}{\co{ \outof{\serverA} \setminus O}}}$,
and thus we construct a derivation of
$\musti{ \serverA }{\testacc{\varepsilon}{\co{ \outof{\serverA} \setminus O}}}$
by applying the rule \mstep. This requires us to show the following facts,
\begin{enumerate}
%\item\label{must-gen-acc-stable-2-1} $\lnot \good{\testacc{\varepsilon}{\co{ \outof{\serverA} \setminus O}}}$,
\item\label{must-gen-acc-stable-2-2} $\csys{\serverA}{\testacc{\trace}{\co{ \outof{\serverA} \setminus O}}} \st{}$, and
\item\label{must-gen-acc-stable-2-3} for each $\serverA'$, $r$ such that
  $\csys{\serverA}{\testacc{\trace}{\co{ \outof{\serverA} \setminus O}} } \st{\tau} \csys{\serverA'}{r}$,
  $\musti{\serverA'}{r}$ holds.
\end{enumerate}

To prove (\ref{must-gen-acc-stable-2-2}), we show that an interaction between
the server $\serverA$ and the test $\testacc{\trace}{\co{ \outof{\serverA} \setminus O }}$ exists.
As $\co{\aa} \in \outof{\serverA}$, we have that $\serverA \st{\co{\aa}}$.
Then $\co{\aa} \in \outof{\serverA} \setminus O$
together with (\ref{gen-spec-acc-nil-mem-lts-inp}) ensure
that $\testacc{\trace}{\co{ \outof{\serverA} \setminus O }} \st{\aa}$.
An application of the rule \rname{s-com} gives us the required transition
$\csys{\serverA}{\testacc{\trace}{\co{ \outof{\serverA} \setminus O}}} \st{}$.

To show (\ref{must-gen-acc-stable-2-3}), fix a silent transition
$\csys{\server}{\testacc{\trace}{\co{ \outof{\server} \setminus O}}} \st{\tau} \csys{\server'}{r}$.
We proceed by case analysis on the rule used to derive the
transition under scrutiny.
Recall that the server $\server$ is stable by hypothesis, and that
$\testacc{\trace}{\co{ \outof{\serverA} \setminus O }}$ is stable
thanks to
\rtab{properties-functions-to-generate-clients}(\ref{gen-spec-acc-nil-stable-tau}).
This means that the silent transition must have been derived via rule \scom.
Furthermore, \rtab{properties-functions-to-generate-clients}(\ref{gen-spec-acc-nil-stable-out}) implies that the test
$\testacc{\trace}{\co{ \outof{\serverA} \setminus O }}$ does not perform any output.
As a consequence, if there is an interaction it must be because the test
performs an input and becomes $r$.
\rtab{properties-functions-to-generate-clients}(\ref{gen-spec-acc-nil-lts-inp-good})
implies that $\good{r}$, and hence we obtain the required
$\musti{\serverA'}{r}$ applying rule \mnow.
\end{proof}
%% Lemma must_gen_acc_stable                                                                        %%
%%   `{@LtsObaFW A L IL LA LOA V, @LtsObaFB B L IL LB LOB W, !Good B L good, !gen_spec_acc gen_acc} %%
%%   (p : A) (O : gset L) :                                                                         %%
%%   p ↛ -> must p (gen_acc (lts_outputs p ∖ O) []) \/ lts_outputs p ⊆ O.                           %%

\noindent%
\textbf{\rlem{completeness-part-2.2-diff-outputs}}
Let $\genlts_A \in \obaFW$.
  For every $\server \in \States$, $\trace \in \Actfin$,
  and every $L, E \subseteq \Names$, if
  $\co{L} \in \accht{ \server }{ \trace }$
  then $\Nmusti{ \server }{ \testacc{\trace}{E \setminus L}}$.
\begin{proof}%[Proof of \rlem{completeness-part-2.2-diff-outputs}]
  By hypothesis there exists a set $\co{ L } \in \accht{ \server }{ \trace }$,
  \ie for some $\server'$ we have $\server \wt{ \trace } \server' \stable$ and $O(p') = \co{L}$.
  We have to show that $\Nmusti{ \server }{\testacc{ \trace }{E \setminus L}}$, \ie
  $\musti{ \server }{\testacc{ \trace }{E \setminus L}}$ implies $\bot$.
  For convenience, let $X = E \setminus L$.

  %% there exists no derivation of                                                                %%
  %% judgement $\musti{\server}{\testacc{ \trace }{E \setminus L}}$                                  %%
  %% \footnote{Formally, we prove that $\musti{ \server }{\testacc{ \trace }{L}}$ implies $\bot$.}.  %%
  %% \rlem{gen-test-unhappy-if} implies $\lnot \good{\testacc{ \trace }{L \setminus O}}$,            %%
  %% thus no tree that ends with \mnow can have $\musti{\server}{\testacc{ \trace }{L \setminus O}}$ %%
  %% as conclusion. We have to explain why no tree that ends with \mstep                          %%
  %% has $ \musti{ \server }{\testacc{ \trace }{L \setminus O}}$ as conclusion.                      %%

  We proceed by induction on the derivation of the weak transitions $\server \wt{ \trace } \server'$.
  In the base case the derivation consists in an application of rule \wtrefl,
  which implies that $\server = \server'$ and $\trace = \varepsilon$.
  We show that there exists no derivation of judgement $\musti{\server}{\testacc{ \trace }{X}}$.
  By definition, $\lnot \good{\testacc{ \trace }{ X }}$ and
  thus no tree that ends with \mnow can have $\musti{\server}{\testacc{ \trace }{ X }}$
  as conclusion.
  The hypotheses ensure that $\server$ is stable,
  and $\testacc{ \varepsilon }{ X }$ is stable by definition.
  The set of inputs of $\testacc{\varepsilon}{ X }$ is $X$,
  which prevents an interaction between $\server$ and
  $\testacc{ \trace }{ X }$, i.e. an application of rule \scom.
  This proves that $\csys{ \server }{\testacc{ \trace }{ X }}$ is stable,
  thus a side condition of \mstep is false, and the rule cannot
  be employed to prove $\musti{ \server }{\testacc{ \trace }{ X }}$.

  \renewcommand{\traceB}{t}

  In the inductive cases $p \wt{ \trace } p'$ is derived using either:
  \begin{enumerate}[(i)]
  \item \label{completeness-part-2.2-diff-outputs-ind-case-1}
    rule \wttau such that $p \st{\tau} \widehat{ \server } \wt{ \trace } \server'$, or
  \item \label{completeness-part-2.2-diff-outputs-ind-case-2}
    rule \wtmu such that $p \st{\mu} \widehat{ \server } \wt{\traceB} \server'$, with $\trace = \mu.\traceB$.
  \end{enumerate}

  %In case \ref{completeness-part-2.2-diff-outputs-ind-case-1},
  In the first case, applying the inductive hypothesis requires us to show
  $\musti{\widehat{ \server } }{ \testacc{ \trace }{ X }}$, which is true since
  $\musti{p}{\testacc{ \trace }{ X }}$ is preserved by the $\tau$-transitions
  performed by the server.
  %% , thus since                                                                                  %%
  %% $ \csys{ \server }{ \testacc{ \trace }{L}} \st{\tau} \csys{\widehat{ \server }}{\testacc{ \trace }{L}}$ %%
  %% the hypothesis of \mstep is false, and thus the rule cannot be applied.                       %%

  %In case \ref{completeness-part-2.2-diff-outputs-ind-case-2} note that $ \traceA = \trace.\traceC = \mu.\traceB.\traceC$.
  In the second case, applying the inductive hypothesis
  requires us to show $\musti{\widehat{ \server } }{ \testacc{ \traceB }{ X }}$.
  \rtab{properties-functions-to-generate-clients}(\ref{gen-spec-mu-lts-co}) implies that $\testacc{\mu.\traceB}{ X } \st{\co{\mu}} \testacc{\traceB}{ X }$.
  Then we derive via \scom~the transition
  $\csys{ \server }{ \testacc{\mu.\traceB}{ E }} \st{ \tau } \csys{
    \widehat{\server} }{ \testacc{\traceB}{ E } }$.
  Since $\musti{ \server }{\testacc{ \trace }{ X }}$ is preserved by the interactions
  occurring between the server and the client, which implies
  $\musti{\widehat{ \server }}{\testacc{ \traceB }{ X }}$ as required.
\end{proof}

\noindent%
\textbf{\rlem{completeness-part-2.2-auxiliary}} \coqCom{must_gen_a_with_s}
 Let $\genlts_A \in \obaFW$.
  $\Forevery \server \in \States, \trace \in \Actfin$,
  and every finite set $\ohmy \subseteq \co{\Names}$,
  if $\server \cnvalong s$ then either
  \begin{enumerate}[(i)]
      \item
    %% \item\label{pt:completeness-crux-move-1} %%
    $\musti{\server}{\testacc{ \trace }{ \bigcup \co{ \accht{p}{s}
          \setminus \ohmy }}}$, or
  \item
   %% \item\label{pt:completeness-crux-move-2} %%
    there exists $\widehat{\ohmy} \in \accht{ \server }{ \trace }$ such that $\widehat{\ohmy} \subseteq \ohmy$.
  \end{enumerate}
\begin{proof}
  We proceed by induction on the trace $\trace$.

  \paragraph{(Base case, $\trace = \varepsilon$)}
  The hypothesis $p \acnvalong{\varepsilon}$ implies $p \convi$ and we continue by induction
  on the derivation of $p \convi$.\footnote{Recall that the definition of~$\convi$ is in \req{def-intentional-predicates}}
  In the base case $p \convi$ was proven using rule \mnow, and hence $\server  \stable$.
  We apply \rlem{must-gen-acc-stable}
  to obtain either:
  \begin{enumerate}[(i)]
  \item\label{completeness-part-2.2-auxiliary-1-1-1}
    $\musti{\serverA}{\testacc{\varepsilon}{ \co{ \outof{\serverA} \setminus O}} }$, or
  \item\label{completeness-part-2.2-auxiliary-1-1-2}
    $\outof{\serverA} \subseteq O$.
  \end{enumerate}
  %% sans hasard pas de chance
  In case (\ref{completeness-part-2.2-auxiliary-1-1-1}) we are done.
  In case (\ref{completeness-part-2.2-auxiliary-1-1-2}), as $\serverA$ is stable we have
  $\setof{\serverA'}{\serverA \wt{\varepsilon} \serverA' \stable} = \set{\serverA}$ and thus
  $\accht{\serverA}{\varepsilon} = \set{O(\serverA)}$ and we conclude by letting
  $\widehat{O} = O(\serverA)$.

  In the inductive case~$\server \convi$ was proven using rule \mstep.
    We know that $ \server \st{ \tau }$,
    %\pl{it is not true that the inductive case implies the existence
    %of a transition \server \st{\tau}} and hence $ \server \st{ \tau
    %} $,
    %% YES, IT IS TRUE. HAVE YOU READ [IND-RULE] AT ALL ?
    and the inductive hypothesis states that for any $\server'$ such that
    $\serverA \st{\tau} \server'$, either:
    \begin{enumerate}[(a)]
    \item\label{completeness-part-2.2-auxiliary-1-2-1-appendix}
      $\musti{\server'}{\testacc{\varepsilon}{\co{ \outof{\server'} \setminus O}}}$, or
    \item\label{completeness-part-2.2-auxiliary-1-2-2-appendix}
      there exists $\widehat{\ohmy} \in \accht{ \server' }{ \trace }$
      such that $\widehat{\ohmy} \subseteq \ohmy$.
    \end{enumerate}
%  In the second case $\serverA$ is not stable.\TODO{This is not possible in the base case of the induction on $\server \convi$.}
  %% From the inductive hypothesis of the induction on $p \convi$
  %% we know that for any $\serverA'$ such that
  %% $\serverA \st{\tau} \serverA'$, either:
  %% \begin{enumerate}[(i)]
  %% \item\label{completeness-part-2.2-auxiliary-1-2-1}
  %%   $\musti{\serverA'}{\testacc{\varepsilon}{\co{ \outof{\serverA} \setminus O}}}$, or
  %% \item\label{completeness-part-2.2-auxiliary-1-2-2}
  %%   there exists $\widehat{\ohmy} \in \accht{ \serverA' }{ \trace }$
  %%   such that $\widehat{\ohmy} \subseteq \ohmy$.
  %% \end{enumerate}
%%  By induction on \gb{the cardinality of} the set $\setof{\serverA'}{\serverA \st{\tau} \serverA'}$
%%  and the application of the inductive hypothesis we know that either:
    It follows that either
    \begin{description}
%  \begin{enumerate}[(i)]
    \item[($\forall$)]
      \label{completeness-part-2.2-auxiliary-1-2-1}
      for each $\serverA' \in \setof{\serverA'}{\serverA \st{\tau} \serverA'}$,
      $\musti{\serverA'}{\testacc{\varepsilon}{\bigcup \co{\accht{ \server' }{ \trace } \setminus O}}}$, or
  \item[($\exists$)]
    \label{completeness-part-2.2-auxiliary-1-2-2}
    there exists a $\server' \in \setof{\server'}{\serverA \st{\tau} \server'}$
    and a $\widehat{\ohmy} \in \accht{ \server' }{ \varepsilon }$
    such that $\widehat{\ohmy} \subseteq \ohmy$,
    \end{description}
%  \end{enumerate}
    \noindent
    We discuss the two cases. If ($\exists$) the argument is straightforward:
    we pick the existing $\server'$ such that $\server \st{\tau}
    \server'$. The definition of $\accht{-}{-}$ ensures that
    and show that $\accht{\server'}{\varepsilon} \subseteq
    \accht{\serverA}{\varepsilon}$, and thus we conclude by choosing $\widehat{O}$.

  Case ($\forall$) requires more work.
  We are going to show that $\musti{\serverA}{\testacc{\varepsilon}{\bigcup \co{\accht{ \server }{ \trace } \setminus O}}}$ holds.
  To do so we apply the rule \mstep and we need to show the following facts,
  \begin{enumerate}[(a)]
  \item $\csys{ p }{\testacc{\varepsilon}{\bigcup \co{\accht{ \server }{ \trace } \setminus O}}} \st{\tau}$, and
  \item for each $\csys{p'}{r'}$ such that
    $\csys{ p }{\testacc{\varepsilon}{\bigcup \co{\accht{ \server }{ \trace } \setminus O}}}
    \st{\tau} \csys{p'}{r'}$, we have $\musti{\server'}{r'}$.
  \end{enumerate}

  The first requirement follows from the fact that $\server$ is not
  stable. %
%  \TODO{this is why we need the case analysis on p being stable or not, so in fact we need both induction and case analysis}
  %% CORRECT, YET VERBOSE DETAILS
  %% and thus the existence of a $\server'$ such that $\server \st{\tau} \server'$,
  %% which implies the transition
  %% $\csys{ p }{\testacc{\varepsilon}{\bigcup \co{\accht{ \server }{ \trace } \setminus O}}}
  %% \st{\tau} \csys{ p' }{\testacc{\varepsilon}{\bigcup \co{\accht{ \server }{ \trace } \setminus O}}}$.
  To show the second requirement we proceed by case analysis on the transition
  $\csys{ p }{\testacc{\varepsilon}{\bigcup \co{\accht{ \server }{ \trace } \setminus O}}} \st{\tau} \csys{p'}{r'}$.
  As $\testacc{\varepsilon}{\co{ \outof{\serverA} \setminus O }}$ is stable by
  $(\ref{gen-spec-acc-nil-stable-tau})$, it can either be due to:
  \begin{enumerate}
  \item a $\tau$-transition performed by the server $\serverA$ such that
    $\serverA \st{\tau} \serverA'$, or
  \item an interaction between the server $\serverA$ and the client
    $\testacc{\varepsilon}{\bigcup \co{\accht{ \server }{ \trace } \setminus O}}$.
  \end{enumerate}

  In the first case we apply the first part of the inductive hypothesis %(\ref{completeness-part-2.2-auxiliary-1-2-1})\TODO{fix reference}
  to prove that 
  $\musti{\serverA'}{\testacc{\varepsilon}{\bigcup \co{\accht{ \serverA' }{ \trace } \setminus O}}}$,
  and we conclude via \rlem{must-f-gen-a-subseteq} to get the required
  $$
  \musti{\serverA'}{\testacc{\varepsilon}{\bigcup \co{\accht{ \server }{ \trace } \setminus O}}}.
  $$
%  \TODO{Mistake: the set $\co{ \outof{\serverA'} \setminus O}$ is not the correct one after the application of \rlem{must-f-gen-a-subseteq}.}
  In the second case, there exists a $\mu \in \Act$ such that
  $$
  \serverA \st{\mu} \serverA' \text{ and }
  \testacc{\varepsilon}{\bigcup \co{\accht{ \server }{ \trace } \setminus O}} \st{\co{\mu}} r
  $$

  %% From (\ref{gen-spec-acc-nil-stable-out}) we know that $\co{\mu}$ is necessary an output. %%
  Thanks to \rtab{properties-functions-to-generate-clients}(\ref{gen-spec-acc-nil-lts-inp-good}) we apply
  rule \mnow\ to prove that $\musti{\serverA'}{ \client }$ and we are done
  with the base case of the main induction on the trace~$\trace$.

  \paragraph{(Inductive case, $\trace =  \mu.s'$)}
  By induction on the set $\setof{\serverA'}{\serverA \wt{\mu} \serverA'}$
  and an application of the inductive hypothesis we know that either:
  \begin{enumerate}[(i)]
  \item\label{completeness-part-2.2-auxiliary-2-2-2}
    there exists $\serverA' \in \setof{\serverA'}{\serverA \wt{\mu} \serverA'}$
    and $\widehat{\ohmy} \in \accht{\serverA'}{\trace'}$
    such that $\widehat{\ohmy} \subseteq \ohmy$, or
  \item\label{completeness-part-2.2-auxiliary-2-2-1}
    for each $\serverA' \in \setof{\serverA'}{\serverA \wt{\mu} \serverA'}$ we have that
    $\musti{\serverA'}{\testacc{\trace'}{\bigcup \co{\accht{\serverA'}{\trace'}}}}$.
  \end{enumerate}

  In the first case, the inclusion $\accht{\serverA'}{\trace'} \subseteq \accht{\serverA}{\mu.\trace'}$
  and $\widehat{\ohmy} \in \accht{\serverA'}{\trace'}$ imply
  $\widehat{\ohmy} \in \accht{\serverA}{\trace}$ and we are done.

  In the second case, we show
  $\musti{\serverA}{\testacc{\trace}{\bigcup \co{\accht{\serverA}{\trace}}}}$
  by case analysis on the action~$\mu$, which can be either an input or an output.
  \begin{itemize}
  \item If $\mu$ is an input, $\mu = \aa$ for some $\aa \in \Names$.
    An application of the axiom of forwarders gives us a $\serverA'$ such that
    $\server \st{\aa} \server' \st{\co{\aa}} \server$.
    An application of \rtab{properties-functions-to-generate-clients}(\ref{gen-spec-mu-lts-co})
    gives us the following transition,
    $$
    \testacc{\aa.s'}{\bigcup \co{ \accht{\serverA}{\aa.\trace'} \setminus O}}
    \st{\co{\aa}}
    \testacc{s'}{\bigcup \co{ \accht{\serverA}{\aa.\trace'} \setminus O}}
    $$
    By an application of \rlem{must-output-swap-l-fw} it is enough to show
    $$
    \musti{\serverA'}{\testacc{s'}{\bigcup \co{\accht{\serverA}{\aa.\trace'} \setminus O}}}
    $$
    to obtain the required
    $\musti{\serverA}{\testacc{\aa.s'}{\bigcup
        \co{\accht{\serverA}{\aa.\trace'} \setminus O}}}$.

    %%%% THIS PART SEEMS USELESS NOW
    %% We apply (\ref{completeness-part-2.2-auxiliary-2-2-1}) and obtain
    %% $\musti{\serverA'}{\testacc{\trace'}{\bigcup \co{\accht{\serverA'}{\trace'} \setminus O}}}$.
    %% From the inclusion $\accht{\serverA'}{\trace'} \subseteq \accht{\serverA}{\aa.\trace'}$
    %% and an application of \rlem{must-f-gen-a-subseteq} we finally have that
    %% $$
    %% \musti{\serverA'}{\testacc{\trace'}{\bigcup \co{\accht{\serverA}{\aa.\trace'} \setminus O}}}
    %% $$.

  \item If $\mu$ is an output, $\mu = \co{\aa}$ for some $\aa \in \Names$ and
    we must show that
    $$
    \musti{\serverA}{\testacc{\co{\aa}.\trace'}{\bigcup \co{\accht{\serverA}{\co{\aa}.\trace'} \setminus O}}}.
    $$
    We apply \rlem{aft-output-must-gen-acc} together with (\ref{completeness-part-2.2-auxiliary-2-2-1}) to
    obtain
    $
    \musti{\serverA}{\testacc{\co{\aa}.\trace'}{\bigcup \co{\accht{\serverA'}{\trace'} \setminus O}}}
    $.
    Again, \rlem{must-f-gen-a-subseteq}
    together with the inclusion $\accht{\serverA'}{\trace'} \subseteq \accht{\serverA}{\co{\aa}.\trace'}$
    ensures the required
    $\musti{\serverA}{\testacc{\trace}{\bigcup \co{\accht{\serverA}{\trace} \setminus O}}}$.
  \end{itemize}
\end{proof}

\section{Soundness}
\label{sec:proof-soundness}
\label{sec:bhv-soundness}\label{sec:soundness-bhv}

\newcommand{\cnvleqset}{\mathrel{\preccurlyeq^{\mathsf{set}}_{\mathsf{cnv}}}}
\newcommand{\accleqset}{\mathrel{\preccurlyeq^{\mathsf{set}}_{\mathsf{acc}}}}
\newcommand{\Naccleqset}{\mathrel{{\not\preccurlyeq}^{\mathsf{set}}_{\mathsf{acc}}}}
\newcommand{\asleqset}{\mathrel{\preccurlyeq^{\mathsf{set}}_{\mathsf{AS}}}}
\newcommand{\Nasleqset}{\mathrel{\not\preccurlyeq^{\mathsf{set}}_{\mathsf{AS}}}}

In this section we prove the converse of \rprop{bhv-completeness},
i.e. that~$\asleq$ is included in~$\testleqS$. 
We remark immediately that a naïve reasoning does not work.
Fix two servers~$\serverA$ and~$\serverB$ such that $\serverA \asleq \serverB$.
%% We need to prove that for every client~$\client$, if
%% $\musti{\server}{\client}$ then $\musti{\serverA}{\client}$.
%% The reasonable first proof attempt consisting in proceeding by induction on
%% $\musti{\server}{\client}$ fails, as demonstrated by the following example.

To lift the predicates~$\bhvleqone$ and~$\bhvleqtwo$ to sets of servers,
we let $\accht{ X }{ \trace } = \setof{ O }{ \exists \server \in X . O
  \in \accht{\server}{ \trace }}$, and for every finite $X \in
\pparts{ \States }$, we write $X \conv$ to mean $\forall \state \in X \suchthat\state \conv$, we write $  X \cnvalong \trace$ to mean $\forall \state \in X \suchthat \state \cnvalong \trace$, and let

\begin{itemize}
\item $ X \cnvleqset \serverB$ to mean $\forall \trace \in \Actfin,
  \text{ if } X \cnvalong \trace
  \text{ then } \serverB \cnvalong \trace$,

\item
  $X \accleqset \serverB$ to mean $\forall \trace \in \Actfin,
  X \cnvalong{\trace} \implies \accht{ X }{ \trace } \ll \accht{ \serverB }{ \trace }$,

\item
  $X \asleqset \serverB$ to mean $X \cnvleqset \serverB \wedge X \accleqset \serverB$.
\end{itemize}
%%$$
%% \begin{array}{lcl}
%%   X \cnvleqset \serverB &\text{to mean}& \forall \trace \in \Actfin,
%%   \text{ if } X \cnvalong \trace
%%   \text{ then } \serverB \cnvalong \trace
%%   \\[5pt]
%%   %% X \accleqset \serverB & \text{to mean}&
%%   %% \forall \trace \in \Actfin,
%%   %% X \cnvalong{\trace} \implies\\
%%   %% && \forall O \in \accht{ \serverB }{ \trace } \wehavethat \\
%%   %% && \exists \serverA \in X \suchthat \\
%%   %% &&\exists \widehat{O} \in \accht{ \serverA }{ \trace } . \widehat{ O } \subseteq O
%%   %% \\[5pt]
%%   X \accleqset \serverB & \text{to mean}& \forall \trace \in \Actfin,
%%   X \cnvalong{\trace} \implies\\
%% &&  \accht{ X }{ \trace } \ll \accht{ \serverB }{ \trace }
%%   \\[5pt]
%%   X \asleqset \serverB & \text{to mean}& X \cnvleqset \serverB \wedge X \accleqset \serverB
%% \end{array}
%% $$

These definitions imply immediately the following equivalences,
  $
\set{\serverA} \accleqset \serverB \Longleftrightarrow p \bhvleqone \serverB$,
$\set{\serverA} \cnvleqset \serverB \Longleftrightarrow p \bhvleqtwo \serverB $
and thereby the following lemma.
\begin{lemma}
  \label{lem:alt-set-singleton-iff}
  For every LTS $\genlts_A, \genlts_B, \serverA \in \StatesA$,
  $\serverB \in \StatesB$,
  $\serverA \asleq \serverB$ if and only if $\set{\serverA} \asleqset \serverB$.
\end{lemma}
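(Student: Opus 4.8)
The plan is to prove the statement by directly unfolding the set-indexed relations on the singleton $\set{\serverA}$ and matching them, component by component, against the definitions of $\bhvleqone$ and $\bhvleqtwo$. The entire argument rests on two elementary observations about singletons. First, since $X \cnvalong \trace$ is by definition $\forall \state \in X \wehavethat \state \cnvalong \trace$, we have, for every $\trace \in \Actfin$, that $\set{\serverA} \cnvalong \trace$ holds if and only if $\serverA \cnvalong \trace$. Second, since $\accht{X}{\trace} = \setof{ O }{ \exists \server \in X . O \in \accht{\server}{\trace} }$, the set-lifted acceptance collapses on a singleton, giving $\accht{\set{\serverA}}{\trace} = \accht{\serverA}{\trace}$ for every $\trace$.

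First I would handle the convergence component. Unfolding $\set{\serverA} \cnvleqset \serverB$ gives: for every $\trace$, if $\set{\serverA} \cnvalong \trace$ then $\serverB \cnvalong \trace$; by the first observation this is literally the condition defining $\serverA \bhvleqone \serverB$, so $\set{\serverA} \cnvleqset \serverB \iff \serverA \bhvleqone \serverB$. Next I would treat the acceptance component. Unfolding $\set{\serverA} \accleqset \serverB$ with both observations yields: for every $\trace$, if $\serverA \cnvalong \trace$ then $\accht{\serverA}{\trace} \ll \accht{\serverB}{\trace}$, which is exactly $\serverA \asleqAfw \serverB$. Since the LTSs occurring in the soundness argument are in $\obaFW$, $\rlem{conditions-on-accsets-logically-equivalent}$ furnishes $\serverA \asleqAfw \serverB \iff \serverA \bhvleqtwo \serverB$, and therefore $\set{\serverA} \accleqset \serverB \iff \serverA \bhvleqtwo \serverB$.

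Finally I would conjoin the two equivalences. Recalling that $\asleq$ is by definition the conjunction of $\bhvleqone$ and $\bhvleqtwo$, and that $\asleqset$ is the conjunction of $\cnvleqset$ and $\accleqset$, the two equivalences above give immediately $\serverA \asleq \serverB \iff \set{\serverA} \asleqset \serverB$.

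I do not expect a real obstacle here, since the lemma is essentially a bookkeeping step. Its only subtlety is the mismatch between the full ready-set acceptance $\accP{\serverA}{\trace}{\st{}_{\StatesA}}$ appearing in the definition of $\bhvleqtwo$ and the output-only forwarding acceptance $\accht{\serverA}{\trace}$ used in $\accleqset$; this gap is precisely what $\rlem{conditions-on-accsets-logically-equivalent}$ closes, so the invocation of that lemma is the single step that is not pure definition chasing.
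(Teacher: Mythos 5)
Your proof is correct, and its skeleton---singleton collapse of $\cnvalong$ and of the lifted acceptance sets, followed by componentwise matching---is exactly how the paper discharges this lemma: the paper simply declares the equivalences $\set{\serverA} \cnvleqset \serverB \Leftrightarrow \serverA \bhvleqone \serverB$ and $\set{\serverA} \accleqset \serverB \Leftrightarrow \serverA \bhvleqtwo \serverB$ to be immediate consequences of the definitions (up to an evident typo swapping the two subscripts in the displayed equivalences) and concludes by conjunction. Where you genuinely differ is the extra step through \rlem{conditions-on-accsets-logically-equivalent}. The paper never invokes it: it implicitly reads $\bhvleqtwo$ and $\accleqset$ as comparing the \emph{same}, output-only, acceptance sets, which is faithful to the mechanisation (in Coq both \texttt{bhv\_pre\_cond2} and its set-lifted variant are phrased with \texttt{lts\_outputs}), and under that uniform reading the lemma is pure bookkeeping, valid for arbitrary LTSs as stated. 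You instead take \rdef{accset-leq} at face value, so that $\bhvleqtwo$ compares ready sets while $\accleqset$ compares $\accht{-}{-}$, and you bridge the mismatch via \rlem{conditions-on-accsets-logically-equivalent}. That is the honest reading of the LaTeX, but note what it costs: that lemma requires $\genlts_A, \genlts_B \in \obaFW$, a hypothesis absent from the statement you are proving, and under your reading the unrestricted statement is in fact false. Take $\serverA = \aa.\Nil$ and $\serverB = \Nil$ in plain \ACCS: both converge along every trace and the output-only acceptance conditions hold, so $\set{\serverA} \asleqset \serverB$; yet $\serverA \Nbhvleqtwo \serverB$ by the paper's own discussion of Castellani and Hennessy's Example~4, hence $\serverA \not\asleq \serverB$. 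So the restriction to $\obaFW$ is not a removable convenience of your route but a necessity of it; it is harmless in practice, because the only use of the lemma, in \rprop{bhv-soundness}, applies it to forwarding-lifted servers, which are in $\obaFW$. In short: same decomposition as the paper, one additional key lemma, and in exchange your proof surfaces---and correctly repairs, in the context where the lemma is used---a notational mismatch that the paper's ``immediately'' glosses over.
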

%% \begin{proof}
%%   The results follow directly from the following facts.
%% \end{proof}

%% \TODO{
%%   In \rlem{bhvleqone-preserved} and \rlem{bhvleqtwo-preserved}
%%   we have a stronger hypothesis than
%%   $X \wt{\mu} X'$ in the code.
%%   We say that for each
%%   $\serverA \in X$ if $\serverA \wt{\mu} \serverA'$ then $\serverA' \in X'$.
%%   I fix this tomorrow.
%% }

\newcommand{\completewrt}[2]{\ensuremath{\mathsf{cwrt} \, (#1,#2)}}

The preorder~$\asleqset$ is preserved by $\tau$-transitions on
its right-hand side, and by visible transitions
%actions performed
on both sides.
We reason separately on the two auxiliary
preorders $\cnvleqset$ and $\accleqset$.
We need one
further notion.
%% \gb{This should no longer be necessary.                                             %%
%% We say that a set $X'$ is {\em complete with respect                                %%
%%   to} a set $X$ and an action $\mu$, denoted                                        %%
%% $X' \completewrt{X}{\mu}$, if two properties hold: % are true:                      %%
%% \begin{enumerate}[i)]                                                               %%
%% \item $X \wt{ \mu } X'$ and                                                         %%
%% \item                                                                               %%
%%   for every $\server \in X$, if $\server \wt{\mu} \server'$ then $\server' \in X'$. %%
%% \end{enumerate}                                                                     %%
%% Note that if $X' \completewrt{X}{\mu}$ and $X''                                     %%
%% \completewrt{X}{\mu}$ then $X' = X''$, because the LTSs at hand are image finite.   %%
%% }                                                                                   %%
\begin{lemma}
  \label{lem:bhvleqone-preserved}
  Let $\genlts_\StatesA, \genlts_\StatesB \in \obaFW$.
  For every set $X \in \pparts{ \StatesA }$, and
  $\serverB \in \StatesB$, such that
  $X \cnvleqset \serverB$,
  \begin{enumerate}
  \item\label{pt:bhvleqone-preserved-by-tau}
    $\serverB \st{ \tau } \serverB'$ implies $X \cnvleqset \serverB'$,
  \item\label{pt:bhvleqone-preserved-by-mu}
    $X \convi$, $X \wt{\mu} X'$ and $\serverB \st{\mu} \serverB'$ imply $X' \cnvleqset \serverB'$.
  \end{enumerate}
\end{lemma}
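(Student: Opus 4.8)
The plan is to unfold the definition of $\cnvleqset$ in both parts and reduce each to a single-step convergence fact from \rlem{acnv-aux}, freely switching between $\conv$ and $\convi$ via \rcor{inductive-char-conv} whenever the rules defining $\cnvalong$ call for the extensional predicate.

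For the first part I would fix an arbitrary $\trace \in \Actfin$ with $X \cnvalong \trace$ and aim at $\serverB' \cnvalong \trace$. The hypothesis $X \cnvleqset \serverB$ gives $\serverB \cnvalong \trace$ at once, and since $\serverB \st{\tau} \serverB'$, \rptlem{acnv-aux}{acnv-one-step-tau} yields $\serverB' \cnvalong \trace$. This part is essentially immediate.

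The second part is the crux. I would again fix a trace $\trace$, now with $X' \cnvalong \trace$, and target $\serverB' \cnvalong \trace$. The idea is to feed the \emph{longer} trace $\mu.\trace$ into $X \cnvleqset \serverB$, so the main internal goal becomes $X \cnvalong \mu.\trace$. To prove the latter I would fix $p \in X$ and discharge the two clauses of $p \cnvalong \mu.\trace$: the convergence clause $p \conv$ comes from $X \convi$ through \rcor{inductive-char-conv}; and for every $p'$ with $p \wt{\mu} p'$ I would note that $p' \in \WD(\mu, X)$, which by the definition of the LTS of sets is exactly $X'$, so the standing assumption $X' \cnvalong \trace$ gives $p' \cnvalong \trace$. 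As $p \in X$ was arbitrary, this establishes $X \cnvalong \mu.\trace$.

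Having $X \cnvalong \mu.\trace$, I would apply $X \cnvleqset \serverB$ to obtain $\serverB \cnvalong \mu.\trace$; then, since $\serverB \st{\mu} \serverB'$ entails $\serverB \wt{\mu} \serverB'$, unfolding $\serverB \cnvalong \mu.\trace$ (equivalently, \rptlem{acnv-aux}{acnv-one-step-mu}) produces the desired $\serverB' \cnvalong \trace$. I expect the only delicate point to be the bookkeeping inside the proof of $X \cnvalong \mu.\trace$: identifying $X'$ with $\WD(\mu, X)$ and converting the intensional hypothesis $X \convi$ into the extensional $p \conv$ required by the rules of $\cnvalong$. Once those are pinned down the rest is routine.
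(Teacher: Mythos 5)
Your proposal is correct and follows essentially the same route as the paper's proof: part (1) via preservation of $\cnvalong$ under $\tau$-steps, and part (2) by reducing the goal to $X \cnvalong \mu.\trace$, discharging its two clauses pointwise for each $p \in X$ (convergence from $X \convi$, and the continuation clause by identifying the $\mu$-derivatives of $X$ with $X'$), then applying $X \cnvleqset \serverB$ and unfolding $\serverB \cnvalong \mu.\trace$ along $\serverB \st{\mu} \serverB'$. The only cosmetic difference is that you make explicit the $\convi$/$\conv$ conversion and the identification $X' = \WD(\mu,X)$, which the paper leaves implicit.
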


\begin{lemma}
  \label{lem:bhvleqtwo-preserved}
  Let $\genlts_\StatesA, \genlts_\StatesB \in \obaFW$.
  For every
  $X, X' \in \pparts{ \StatesA }$ and $ \serverB \in \StatesB$,
  such that $X \accleqset \serverB$, then
  \begin{enumerate}
  \item\label{pt:bhvleqtwo-preserved-by-tau}
    $\serverB \st{ \tau } \serverB'$ implies $X \accleqset \serverB'$,
  \item\label{pt:bhvleqtwo-preserved-by-mu}
    if $X \convi$ then for every $\mu \in \Act$, every $\serverB'$ and $X'$ such that $\serverB \st{\mu} \serverB'$ and $X \wt{\mu} X'$
    we have $X' \accleqset \serverB'$.
  \end{enumerate}
\end{lemma}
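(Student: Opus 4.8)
The plan is to prove both parts directly by unfolding the definition of $\accleqset$, exploiting two structural facts about the forwarding acceptance sets $\accht{-}{-}$: first, they are defined through weak transitions $\wt{\trace}$, which already absorb $\tau$-steps; and second, weak transitions compose along trace concatenation, so that $\accht{\serverB'}{\trace} \subseteq \accht{\serverB}{\mu.\trace}$ whenever $\serverB \wt{\mu} \serverB'$. Throughout I will use that $X \wt{\mu} X'$ forces $X' = \WD(\mu, X)$, since this is the unique $\mu$-successor in the LTS of sets.

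For part~(1), I would fix a trace $\trace$ with $X \cnvalong \trace$ and show $\accht{X}{\trace} \ll \accht{\serverB'}{\trace}$. Since $\serverB \st{\tau} \serverB'$, every stable state reachable as $\serverB' \wt{\trace} r \stable$ is also reachable as $\serverB \wt{\trace} r \stable$ by prefixing the $\tau$-step, whence $\accht{\serverB'}{\trace} \subseteq \accht{\serverB}{\trace}$. Now take any $R \in \accht{\serverB'}{\trace} \subseteq \accht{\serverB}{\trace}$: the hypothesis $X \accleqset \serverB$, instantiated at this same $\trace$ (legitimate since $X \cnvalong \trace$), yields a $\widehat{R} \in \accht{X}{\trace}$ with $\widehat{R} \subseteq R$, which is exactly the witness required for $\accht{X}{\trace} \ll \accht{\serverB'}{\trace}$.

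For part~(2), fix a trace $\trace$ with $X' \cnvalong \trace$; the goal is $\accht{X'}{\trace} \ll \accht{\serverB'}{\trace}$. I would first promote convergence to the longer trace: since $X \convi$ gives $p \conv$ for each $p \in X$, and every $p'$ with $p \wt{\mu} p'$ lies in $\WD(\mu, X) = X'$ and hence satisfies $p' \cnvalong \trace$, rule \cnvmu gives $p \cnvalong \mu.\trace$ for each $p \in X$, i.e. $X \cnvalong \mu.\trace$. This licenses instantiating $X \accleqset \serverB$ at $\mu.\trace$, giving $\accht{X}{\mu.\trace} \ll \accht{\serverB}{\mu.\trace}$. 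Now given $R \in \accht{\serverB'}{\trace}$, the step $\serverB \st{\mu} \serverB'$ yields $\accht{\serverB'}{\trace} \subseteq \accht{\serverB}{\mu.\trace}$, so $R \in \accht{\serverB}{\mu.\trace}$ and there is $\widehat{R} \in \accht{X}{\mu.\trace}$ with $\widehat{R} \subseteq R$. It remains to relocate this witness into $\accht{X'}{\trace}$: by definition $\widehat{R} = O(r)$ for some $p \in X$ with $p \wt{\mu.\trace} r \stable$, and decomposing $p \wt{\mu} p'' \wt{\trace} r$ puts $p'' \in \WD(\mu, X) = X'$, so $\widehat{R} = O(r) \in \accht{p''}{\trace} \subseteq \accht{X'}{\trace}$, as needed.

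I expect the only delicate point to be the convergence bookkeeping in part~(2): one must verify carefully that $X \cnvalong \mu.\trace$ genuinely follows from $X \convi$ together with $X' \cnvalong \trace$, which relies on \cnvmu, on the $\tau$-preservation of convergence (\rlem{acnv-aux}), and on the observation that every $\mu$-successor of a state of $X$ lands in $X' = \WD(\mu, X)$. The remaining acceptance-set inclusions are routine consequences of the compositionality of weak transitions and the definition of $\accht{-}{-}$ over sets.
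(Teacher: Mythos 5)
Your proposal is correct and follows essentially the same route as the paper's proof: part~(1) via the inclusion $\accht{\serverB'}{\trace} \subseteq \accht{\serverB}{\trace}$ obtained by prefixing the $\tau$-step, and part~(2) by first promoting convergence to $X \cnvalong \mu.\trace$ from $X \convi$ and $X' \cnvalong \trace$, then instantiating the hypothesis at $\mu.\trace$ and relocating the witness into $X'$ by decomposing $p \wt{\mu.\trace} r$ as $p \wt{\mu} p'' \wt{\trace} r$ with $p'' \in X'$. Your treatment of the convergence bookkeeping and of the determinism of the LTS of sets ($X' = \WD(\mu,X)$) is in fact more explicit than the paper's rather terse wording of the same steps.
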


The main technical work for the proof of soundness is carried out by the next lemma.
\begin{lemma}
  \label{lem:soundness-set}
  Let $\genlts_\StatesA, \genlts_\StatesB \in \obaFW$ and
  $\genlts_\StatesC \in \obaFB$.
  For every set of servers $X \in \pparts{ \StatesA }$,
  server $\serverB \in \StatesB$ and client $\client \in \StatesC$,
  if $\mustset{X}{\client}$ and $X \asleqset \serverB$ then $\musti{\serverB}{\client}$.
\end{lemma}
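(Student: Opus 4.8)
The plan is to argue by rule induction on the derivation of $\mustset{X}{\client}$, proving for each pair $(X,\client)$ occurring in it the statement $\forall \serverB.\; X \asleqset \serverB \implies \musti{\serverB}{\client}$. If the last rule is \msetnow\ then $\good{\client}$ and $\musti{\serverB}{\client}$ is immediate by \mnow. In the \msetstep\ case I may assume $\lnot\good{\client}$, the premises of the rule, and the induction hypotheses attached to its recursive premises (one for each client $\tau$-successor $\client'$, one for each set $\tau$-successor $X'$, and one for each communication successor with $X \wt{\co\nu} X'$ and $\client \st{\nu} \client'$). Before the case analysis I would record two facts valid throughout: by \rlem{musti-if-mustset-helper} every $\serverA \in X$ satisfies $\musti{\serverA}{\client}$, so \rlem{must-terminate} together with $\lnot\good{\client}$ gives $X \convi$; instantiating $X \cnvleqset \serverB$ at the empty trace then gives $\serverB \convi$. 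This convergence is what lets me run an \emph{inner} induction on the derivation of $\serverB \convi$ to prove $\musti{\serverB}{\client}$ via rule \mstep, whose two obligations are (a) $\csys{\serverB}{\client} \st{\tau}$ and (b) $\musti{\serverB'}{\client'}$ for every successor $\csys{\serverB}{\client}\st{\tau}\csys{\serverB'}{\client'}$.

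For (a), if $\serverB$ or $\client$ can perform a $\tau$ we are done, so suppose both are stable. Then $O(\serverB) \in \accht{\serverB}{\varepsilon}$, and $X \accleqset \serverB$ — applicable since $X \convi$ means $X \cnvalong \varepsilon$ — yields some $\serverA \in X$ with a stable derivative $\serverA_s$, $\serverA \wt{\varepsilon} \serverA_s$, such that $O(\serverA_s) \subseteq O(\serverB)$. Pushing $\musti{\serverA}{\client}$ along $\serverA \wt{\varepsilon}\serverA_s$ with \rlem{musti-preserved-by-left-tau} gives $\musti{\serverA_s}{\client}$, which, as $\lnot\good{\client}$, can only have been derived by \mstep; hence $\csys{\serverA_s}{\client} \st{\tau}$, necessarily a communication because $\serverA_s$ and $\client$ are stable. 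If it is an output of $\serverA_s$ matched by an input of $\client$, then $O(\serverA_s) \subseteq O(\serverB)$ makes the same output available at $\serverB$; if it is an input of $\serverA_s$ matched by an output of $\client$, the forwarder $\serverB$ can perform that input too. Either way $\csys{\serverB}{\client} \st{\tau}$.

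For (b) I would split on the rule deriving the system step. A server step $\serverB \st{\tau}\serverB'$ keeps $X,\client$ fixed; preservation under $\tau$ (\rlem{bhvleqone-preserved}(\ref{pt:bhvleqone-preserved-by-tau}), \rlem{bhvleqtwo-preserved}(\ref{pt:bhvleqtwo-preserved-by-tau})) gives $X \asleqset \serverB'$, and the inner induction hypothesis gives $\musti{\serverB'}{\client}$. A client step $\client \st{\tau}\client'$ is closed by the outer induction hypothesis for $(X,\client')$ applied to $\serverB$. For a communication $\serverB \st{\mu}\serverB'$, $\client \st{\co\mu}\client'$, I set $X' = \WD(\mu, X)$; provided $X \wt{\mu}$ holds, the communication premise of \msetstep\ (read with its label $\nu = \co\mu$) yields $\mustset{X'}{\client'}$, preservation under matching $\mu$-moves (\rlem{bhvleqone-preserved}(\ref{pt:bhvleqone-preserved-by-mu}), \rlem{bhvleqtwo-preserved}(\ref{pt:bhvleqtwo-preserved-by-mu}), using $X \convi$) gives $X' \asleqset \serverB'$, and the outer hypothesis for $(X',\client')$ gives $\musti{\serverB'}{\client'}$. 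When $\mu$ is an input this is unconditional, since every forwarder in $X$ admits $X \wt{\mu}$.

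The one place where this does not close on its own, and which I expect to be the crux of the whole lemma, is the communication in which the server \emph{emits}, $\serverB \st{\co\aa}\serverB'$ against a client input $\client \st{\aa}\client'$: here the communication premise is available only after establishing $X \wt{\co\aa}$, i.e. that some member of $X$ can itself weakly output $\co\aa$. Unlike the input case, this cannot come from forwarding and must be squeezed out of the refinement hypothesis: as $\serverB$ converges and emits $\co\aa$, its acceptance set along $\co\aa$ is non-empty, and $X \accleqset \serverB$ forces $\accht{X}{\co\aa}$ to be non-empty too, giving $X \wt{\co\aa}$. Making this rigorous is delicate, since it needs convergence of $X$ along $\co\aa$ and a careful tracking of the emitted message through the $\tau$-closure using \outputcommutativity, \outputtau\ and \outputdeterminacy; I also expect the output-swap property \rlem{must-output-swap-l-fw} to be needed here, to relocate the pending output between the server and client sides. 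Once $X \wt{\co\aa}$ is secured the communication argument of the previous paragraph applies verbatim, completing both inductions.
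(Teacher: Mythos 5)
Your proposal is correct and takes essentially the same route as the paper's proof: the same outer rule induction on $\mustset{X}{\client}$, the same inner induction on $\serverB \convi$ to absorb server $\tau$-steps, the same stability argument for $\csys{\serverB}{\client} \st{\tau}$ (factored out in the paper as \rlem{stability-Nbhvleqtwo}), and the same resolution of your ``crux'': the paper's \rlem{X-must-perform-visible-action} derives $X \wt{\mu}$ exactly as you sketch, from convergence of $X$ along $\mu$ (supplied by \rlem{ungood-cnv-mu}) and non-emptiness of $\accht{\serverB}{\mu}$ transferred through the refinement hypothesis. The only inaccuracy is that this last step is less delicate than you fear: since the derivative of $\serverB$ converges, it reaches a stable state, so $\accht{\serverB}{\mu}$ is non-empty outright, and no tracking via \outputcommutativity, \outputtau, \outputdeterminacy\ or \rlem{must-output-swap-l-fw} is needed --- the acceptance-set argument works uniformly for inputs and outputs.
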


\begin{proposition}[Soundness]
  \label{prop:bhv-soundness}
  For every $\genlts_A, \genlts_B \in \obaFB$ and
  servers $\serverA \in \StatesA, \serverB \in \StatesB $,
  if $\liftFW{ \serverA } \asleq \liftFW{ \serverB }$ then $\serverA \testleqS \serverB$.
\end{proposition}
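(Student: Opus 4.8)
The plan is to reduce the statement to the \mustpreorder on the forwarding liftings, and then assemble the auxiliary results already in place, the whole argument passing through the singleton set $X = \set{\liftFW{\serverA}}$. First I would invoke the second point of \rlem{musti-obafb-iff-musti-obafw}, which tells us that $\serverA \testleqS \serverB$ is equivalent to $\liftFW{\serverA} \testleqS \liftFW{\serverB}$; it therefore suffices to establish the latter. By \rlem{liftFW-works} both $\liftFW{\serverA}$ and $\liftFW{\serverB}$ belong to $\obaFW$, so every tool proved for forwarding LTSs becomes applicable.

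Next I would unfold the definition of $\testleqS$ on the lifted agents: fix an arbitrary client $\client$ living in some $\genlts_\StatesC \in \obaFB$ and assume $\Must{\liftFW{\serverA}}{\client}$, aiming to derive $\Must{\liftFW{\serverB}}{\client}$. Passing through \rcor{inductive-char-must}, this amounts to deriving $\musti{\liftFW{\serverB}}{\client}$ from $\musti{\liftFW{\serverA}}{\client}$, i.e.\ to reasoning with the inductive predicate $\opMusti$ rather than the extensional $\opMust$. The singleton instance of \rlem{musti-if-mustset-helper} then rephrases the hypothesis $\musti{\liftFW{\serverA}}{\client}$ as $\mustset{\set{\liftFW{\serverA}}}{\client}$, recasting it in terms of the set-based auxiliary predicate $\opMustset$.

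The crux is a single application of \rlem{soundness-set}. By \rlem{alt-set-singleton-iff}, the hypothesis $\liftFW{\serverA} \asleq \liftFW{\serverB}$ is exactly $\set{\liftFW{\serverA}} \asleqset \liftFW{\serverB}$; combined with $\mustset{\set{\liftFW{\serverA}}}{\client}$, \rlem{soundness-set} (instantiated with the forwarding LTSs for the $A$ and $B$ sides and $\genlts_\StatesC$ for the client) yields $\musti{\liftFW{\serverB}}{\client}$. Translating back via \rcor{inductive-char-must} gives $\Must{\liftFW{\serverB}}{\client}$, hence $\liftFW{\serverA} \testleqS \liftFW{\serverB}$, and finally $\serverA \testleqS \serverB$ by \rlem{musti-obafb-iff-musti-obafw}.

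The main obstacle does not lie in this assembly, which is purely a matter of chaining the stated lemmas through the singleton $X = \set{\liftFW{\serverA}}$, but in \rlem{soundness-set} itself, which carries the real technical weight. The necessity of the set-based formulation was already exhibited in \rexa{must-set-is-helpful}: a naïve rule induction on $\musti{\serverA}{\client}$ fails, because server nondeterminism forces one to postpone the commitment to a branch, something only a \emph{set} of residual servers moving under the LTS of sets can achieve. Since \rlem{soundness-set} may be assumed here, the proposition follows by the routine instantiation described above.
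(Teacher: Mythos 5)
Your proposal is correct and follows essentially the same route as the paper's proof: reduce to the lifted agents via \rlem{musti-obafb-iff-musti-obafw}, recast the hypothesis as $\mustset{\set{\liftFW{\serverA}}}{\client}$ via \rlem{musti-if-mustset-helper}, recast the preorder hypothesis via \rlem{alt-set-singleton-iff}, and conclude with a single application of \rlem{soundness-set}. Your explicit invocations of \rcor{inductive-char-must} and \rlem{liftFW-works} only make visible steps that the paper leaves implicit.
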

\begin{proof}
  \rlem{musti-obafb-iff-musti-obafw}
  %\rcor{testleq-obafb-iff-testleq-obafw}
  ensures that the result follows if we prove that
$\liftFW{\serverA} \testleqS \liftFW{\serverB}.$
Fix a client $\client$ such that $\musti{\liftFW{\serverA}}{\client}$.
\rlem{soundness-set} implies the required
$\musti{\liftFW{\serverB}}{\client}$, if we show that
\begin{enumerate}[(i)]
  \item $\mustset{ \set{ \liftFW{\serverA} } }{ \client }$, and that
  \item $\set{ \liftFW{ \serverA } } \mathrel{\preccurlyeq^{\mathsf{set}}_{\mathsf{AS}}} \liftFW{\serverB}$.
\end{enumerate}
The first fact follows from the assumption that $\musti{\liftFW{\serverA}}{\client}$
and \rlem{musti-if-mustset-helper} applied to the singleton
$\set{\liftFW{\serverA}}$.
The second fact follows from the hypothesis that $\liftFW{ \serverA }
\asleq \liftFW{ \serverB }$ and \rlem{alt-set-singleton-iff}.
%%%% ORIGINAL PROOF
  %% Fix a client $\client$ such that $\musti{\liftFW{\serverA}}{\client}$.
  %% The required $\musti{\liftFW{\serverB}}{\client}$ follows from
  %% \rlem{soundness-set}, if we show that
  %% $\set{\serverA} \asleq \serverB$.
  %% We obtain $\set{\serverA} \asleq \serverB$ by
  %% an application of \rlem{alt-set-singleton-iff} in the hypothesis $\serverA \asleq \serverB$.
  %% The requirement $\musti{\set{\liftFW{\serverA}}}{\client}$ is a consequence of
  %% an application of \rlem{musti-equals-mustset} together with the hypothesis that
  %% $\musti{\liftFW{\serverA}}{\client}$.
\end{proof}

\subsection{Technical results to prove soundness}
\label{sec:appendix-soundness}

We now discuss the proofs of the main technical results
behind \rprop{bhv-soundness}. The predicate $\opMustset$ is  monotonically decreasing with respect to its first
argument, and it enjoys properties analogous to the ones
of~$\opMusti$ that have been shown in \rlem{must-terminate} and
\rlem{musti-preserved-by-left-tau}.

\begin{lemma}
  \label{lem:mx-sub}
  For every LTS  $\genlts_\StatesA, \genlts_\StatesB$ and
  every set $X_1 \subseteq X_2 \subseteq \StatesA$, client $\client \in \StatesB$,
  if $\mustset{X_2}{\client}$ %and $X_1 \subseteq X_2$
  then $\mustset{X_1}{\client}$.
\end{lemma}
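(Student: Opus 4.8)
The plan is to prove the slightly more general statement that $\opMustset$ is downward closed in its first argument, by \emph{rule induction} on the derivation of $\mustset{X_2}{\client}$, taking as inductive motive the predicate $P(X_2,\client) \eqdef \forall X_1 \subseteq X_2 \wehavethat \mustset{X_1}{\client}$. Generalising over $X_1$ up front is essential, and not merely convenient: as the derivation descends through the rules, the ambient ``large'' set $X_2$ evolves along set-transitions, and the induction hypothesis must be available for \emph{every} subset of each derivative, so that it can be matched against the corresponding (smaller) derivative of $X_1$. The two structural facts driving the whole argument are (i) \emph{monotonicity} of the set-LTS, i.e. $X_1 \subseteq X_2$ implies $D(\alpha, X_1) \subseteq D(\alpha, X_2)$ and $\WD(\alpha, X_1) \subseteq \WD(\alpha, X_2)$, both immediate from the definitions of $D$ and $\WD$; and (ii) \emph{determinism} of the set-LTS, i.e. $X \st{\alpha} X'$ forces $X' = D(\alpha, X)$ and $X \wt{\alpha} X'$ forces $X' = \WD(\alpha, X)$, where the transition is present exactly when the corresponding set is non-empty.

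The base case, where $\mustset{X_2}{\client}$ is derived by \msetnow, is immediate: the side condition $\good{\client}$ does not mention the set, so \msetnow also yields $\mustset{X_1}{\client}$ for every $X_1 \subseteq X_2$. For the inductive case, $\mustset{X_2}{\client}$ is derived by \msetstep, which supplies its five premises for $X_2$ together with induction hypotheses that, by the shape of $P$, assert the recursive conclusion for \emph{every subset} of the set occurring in each recursive premise. Fixing $X_1 \subseteq X_2$, I would re-apply \msetstep to establish $\mustset{X_1}{\client}$, discharging each premise in turn.

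The predicate-level side conditions transfer directly: $\lnot\good{\client}$ is unchanged, and $\forall \serverA \in X_1 \wehavethat \csys{\serverA}{\client}\st{\tau}$ follows from the same fact for $X_2$ since $X_1 \subseteq X_2$. Each of the three recursive premises is closed by combining determinism, monotonicity, and the induction hypothesis. For the $\tau$-premise on sets, fix $X_1'$ with $X_1 \st{\tau} X_1'$; by determinism $X_1' = D(\tau, X_1)$ is non-empty, so monotonicity gives $\emptyset \neq X_1' \subseteq D(\tau, X_2)$, whence $X_2 \st{\tau} D(\tau, X_2)$ is a genuine transition and the induction hypothesis, applied to the subset $X_1' \subseteq D(\tau, X_2)$, yields $\mustset{X_1'}{\client}$. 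The premise for $\tau$-moves of the client instead uses that $X_1 \subseteq X_2$ with the induction hypothesis for the \emph{unchanged} set $X_2$, and the premise for synchronising weak transitions $X_1 \wt{\co{\mu}} X_1'$ with $\client \st{\mu} \client'$ is discharged exactly as the $\tau$-case but with $\WD$ in place of $D$.

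I expect no serious obstacle: the lemma is a structural monotonicity result. The only point requiring care is the interplay between non-emptiness and determinism, namely correctly inferring that any transition of $X_1$ witnesses a transition of the larger set $X_2$ to the \emph{canonical} derivative ($D(\alpha,X_2)$ or $\WD(\alpha,X_2)$) against which the induction hypothesis is phrased. Once this ``monotonicity-plus-determinism'' observation is isolated, the five premises become routine.
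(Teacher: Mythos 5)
Your proof is correct: the generalised motive $\forall X_1 \subseteq X_2 \wehavethat \mustset{X_1}{\client}$, together with monotonicity of $D(\alpha,-)$ and $\WD(\alpha,-)$ under inclusion and the determinism of the set-LTS (each set has at most one $\alpha$-derivative, namely the canonical one), is exactly what is needed to re-apply \msetstep, and each of the five premises is discharged as you describe. The paper states this lemma without a pen-and-paper proof (it is established only in the Coq development), and your rule-induction argument is the natural proof that fills this gap, so there is no divergence of approach to report.
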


\begin{lemma}
  \label{lem:mustx-terminate-ungood}
  Let $\genlts_\StatesA \in \obaFW$ and $\genlts_\StatesB \in \obaFB$.
  For every set $X \in \pparts{ \StatesA }$, client $\client \in \StatesB$,
  if $\lnot \good{\client}$ and $ \mustset{X}{\client}$ then $X \convi$.
\end{lemma}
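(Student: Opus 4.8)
The plan is to avoid reasoning about the set predicate $\opMustset$ directly and instead reduce the claim to the single-server predicate $\opMusti$, for which the required termination property has already been isolated in \rlem{must-terminate}. Concretely, by the convention fixed at the start of this section the assertion $X \convi$ abbreviates $\forall \state \in X \suchthat \state \convi$, so it suffices to fix an arbitrary $\state \in X$ and establish $\state \convi$.

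First I would apply \rlem{musti-if-mustset-helper} to the hypothesis $\mustset{X}{\client}$: since that lemma gives $\mustset{X}{\client}$ iff $\musti{\serverA}{\client}$ for every $\serverA \in X$, I immediately obtain $\musti{\state}{\client}$ for the chosen $\state$. Next I would invoke \rlem{must-terminate}, whose class hypotheses are exactly those of the present lemma, namely $\genlts_\StatesA \in \obaFW$ (so $\state$ lives in a forwarding LTS) and $\genlts_\StatesB \in \obaFB$ (so $\client$ lives in an output-buffered LTS with feedback); it yields that $\musti{\state}{\client}$ implies $\state \convi \lor \good{\client}$. Finally I would discharge the right disjunct using the standing hypothesis $\lnot \good{\client}$, leaving $\state \convi$. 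As $\state \in X$ was arbitrary, this establishes $X \convi$.

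The main point is that there is essentially no obstacle here: all the genuine content — the fact that passing an unsuccessful client forces the server to converge — is packaged inside \rlem{must-terminate} (whose own proof uses the full axiom set for output-buffered agents with feedback), and the bridge between sets and singletons is \rlem{musti-if-mustset-helper}. The only things requiring care are (i) reading $X \convi$ as the pointwise lifting $\forall \state \in X \suchthat \state \convi$, and (ii) checking that every member of $X$ belongs to the forwarding LTS $\genlts_\StatesA$, so that \rlem{must-terminate} is applicable elementwise.

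For robustness, should one instead read $X \convi$ as convergence of $X$ in the LTS of sets, the same conclusion follows by rule induction on $\mustset{X}{\client}$: the base rule \msetnow is impossible since it requires $\good{\client}$; in the \msetstep case, either $X$ is $\tau$-stable in the set LTS, whence $X \convi$ by the convergence axiom \mnow, or $X \st{\tau} D(\tau, X)$ for the unique $\tau$-successor, and then the premise $\mustset{D(\tau, X)}{\client}$ together with the induction hypothesis and $\lnot \good{\client}$ gives $D(\tau, X) \convi$, so $X \convi$ by the inductive rule for convergence. This variant works precisely because $\tau$ is deterministic in the LTS of sets.
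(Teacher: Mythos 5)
Your proof is correct, and it takes a mildly different route from the one the paper intends. The paper gives no explicit proof of this lemma; the surrounding text only remarks that $\opMustset$ ``enjoys properties analogous to the ones of $\opMusti$'' established in \rlem{must-terminate}, i.e.\ the implicit argument is a rule induction on the derivation of $\mustset{X}{\client}$, mirroring the proof of \rlem{must-terminate}. You instead avoid any induction: you decompose $\mustset{X}{\client}$ into $\musti{\state}{\client}$ for every $\state \in X$ via \rlem{musti-if-mustset-helper} (which is stated for arbitrary LTSs, so the $\obaFW$/$\obaFB$ hypotheses are needed only for \rlem{must-terminate}, whose class assumptions match exactly), then conclude pointwise, discharging the $\good{\client}$ disjunct with the standing hypothesis. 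This is sound: your pointwise reading of $X \convi$ agrees with the paper's convention ($X \conv$ and $X \cnvalong \trace$ are explicitly defined pointwise in the soundness appendix, and the two uses of this lemma, in \rlem{stability-Nbhvleqtwo} and \rlem{soundness-set}, need exactly the pointwise statement), and there is no circularity, since neither \rlem{musti-if-mustset-helper} nor \rlem{must-terminate} depends on the present lemma. What your reduction buys is reuse: all the work involving the Selinger axioms stays encapsulated in \rlem{must-terminate}. What the direct rule induction buys is independence from the helper lemma, whose own proof is itself a rule induction of the same shape. Your fallback argument for the set-LTS reading of $X \convi$ is also correct (it goes through precisely because $\tau$ is deterministic in the LTS of sets), but that stronger reading is not the one the paper uses.
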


\begin{lemma}
  \label{lem:wt-nil-mx}
  For every $\genlts_\StatesA, \genlts_\StatesB$,
  every set $X_1, X_2 \in \pparts{\StatesA}$, and client $\client \in \StatesB$,
  if $\mustset{X_1}{\client}$ and $X_1 \wt{\varepsilon} X_2$ then $\mustset{X_2}{\client}$.
\end{lemma}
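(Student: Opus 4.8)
The plan is to derive \rlem{wt-nil-mx} from a single-step preservation property for $\opMustset$ and then propagate it along the weak transition $X_1 \wt{\varepsilon} X_2$. Concretely, I would first establish the auxiliary claim that for all $X, X' \in \pparts{\StatesA}$ and every client $\client$, if $\mustset{X}{\client}$ and $X \st{\tau} X'$ then $\mustset{X'}{\client}$. This claim is proved by inversion on the derivation of $\mustset{X}{\client}$ using the two rules of \rfig{rules-mustset-main}. If the derivation ends with \msetnow, then $\good{\client}$ holds; since the client is not affected by the transition $X \st{\tau} X'$, an immediate application of \msetnow gives $\mustset{X'}{\client}$. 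If it ends with \msetstep, then among its premises is precisely $\forall X''.\ X \st{\tau} X'' \implies \mustset{X''}{\client}$, and instantiating it with the hypothesis $X \st{\tau} X'$ yields $\mustset{X'}{\client}$ directly.

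With the single-step claim available, I would prove the lemma by induction on the derivation of $X_1 \wt{\varepsilon} X_2$ in the LTS of sets, built from the rules \wtrefl and \wttau. The base case \wtrefl gives $X_1 = X_2$, so $\mustset{X_2}{\client}$ is exactly the hypothesis. In the inductive case, \wttau provides an intermediate set $X'$ with $X_1 \st{\tau} X'$ and $X' \wt{\varepsilon} X_2$; the single-step claim upgrades $\mustset{X_1}{\client}$ to $\mustset{X'}{\client}$, and the induction hypothesis applied to $X' \wt{\varepsilon} X_2$ concludes with $\mustset{X_2}{\client}$.

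The only delicate point is the single-step claim, and in particular why the \msetstep premise quantifying over the $\tau$-successors of the \emph{set} suffices, rather than requiring any reasoning on the individual components of $X$. This works because the LTS of sets is deterministic on silent moves: the unique $\tau$-transition out of $X$ targets $D(\tau, X)$, so the hypothesis $X \st{\tau} X'$ forces $X' = D(\tau, X)$, which is exactly the successor the premise ranges over. Consequently neither the interaction premises of \msetstep nor any of the axioms for output-buffered agents are needed, which is consistent with the fact that \rlem{wt-nil-mx} is stated for arbitrary LTSs $\genlts_\StatesA, \genlts_\StatesB$ without the $\obaFW$ or $\obaFB$ assumptions. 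I expect no real obstacle beyond making the inversion on $\opMustset$ precise and checking that the $\good{\client}$ predicate is genuinely untouched by transitions on the server-set side.
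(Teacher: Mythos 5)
Your single-step preservation claim is correct, but the induction that is supposed to carry it to the conclusion rests on a misreading of what $X_1 \wt{\varepsilon} X_2$ means for \emph{sets}. The paper does not obtain weak transitions of sets by applying \wtrefl and \wttau inside the LTS of sets; it defines them directly in the definition of the LTS of sets: $X \wt{\alpha} \WD(\alpha, X)$ whenever $\WD(\alpha, X) \neq \emptyset$. So $X_1 \wt{\varepsilon} X_2$ means $X_2 = \WD(\varepsilon, X_1) = \setof{p'}{\exists p \in X_1 \suchthat p \wt{\varepsilon} p'}$, the set of \emph{all} weak $\tau$-derivatives of \emph{all} members of $X_1$; in particular $X_1 \subseteq X_2$. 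This collecting set is in general unreachable by any chain of strong $\tau$-steps in the LTS of sets, because iterating $X \st{\tau} D(\tau, X)$ discards the stable states and the intermediate states. Concretely, take $X_1 = \set{p}$ with $p \st{\tau} p'$ and $p' \stable$: then $X_2 = \set{p, p'}$, while your induction only ever visits $\set{p}$ and $\set{p'}$; there is not even a reflexive case, since the defined relation always jumps to the full derivative set. Your argument therefore proves preservation of $\opMustset$ along iterated strong set transitions, which is a different statement, and one that would not support the intended uses of the lemma (e.g.\ in \rlem{stability-Nbhvleqtwo}, $X_2$ must simultaneously contain a chosen stable derivative $p'$ of some $p \in X_1$ and be related to $X_1$ by a single weak set transition, which is exactly what the collecting definition provides before \rlem{mx-sub} shrinks the set).

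The repair is short and stays within the generality of arbitrary LTSs, as you wanted. By \rlem{musti-if-mustset-helper}, $\mustset{X_1}{\client}$ gives $\musti{p}{\client}$ for every $p \in X_1$. Every $q \in X_2$ admits, by unfolding $p \wt{\varepsilon} q$ with \wtrefl and \wttau \emph{at the level of individual states} (where those rules are indeed the definition), a chain $p = p_0 \st{\tau} p_1 \st{\tau} \cdots \st{\tau} p_n = q$ with $p \in X_1$. Now $\opMusti$ is preserved by server-side $\tau$-steps in any LTS: if $\good{\client}$ this is rule \mnow; otherwise the derivation of $\musti{p_i}{\client}$ must end with \mstep, and its premise applied to $\csys{p_i}{\client} \st{\tau} \csys{p_{i+1}}{\client}$ (rule \stauserver) yields $\musti{p_{i+1}}{\client}$. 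Hence $\musti{q}{\client}$ for every $q \in X_2$, and since $X_2 \supseteq X_1$ is non-empty, \rlem{musti-if-mustset-helper} in the converse direction gives $\mustset{X_2}{\client}$. Your observation that set-level $\tau$-transitions are deterministic is true, but it is not the load-bearing fact here; what matters is that the weak set transition is a collecting construction, not the transitive closure of the strong one.
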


\begin{lemma}
  \label{lem:mx-preservation}
  For every LTS $\genlts_A, \genlts_B$ and every
  $X \in \pparts{\StatesA}$ and $\client \in \StatesB$,
  if $\mustset{X}{\client}$ then for every $X'$ such that
  \begin{enumerate}[(a)]
  \item\label{pt:mx-preservation-wt-nil}
    If $X \wt{\varepsilon} X'$ then $\mustset{X'}{\client}$,
  \item\label{pt:mx-preservation-wt-mu}
    For any $\mu \in \Act$ and client $\client'$,
    if $X \wt{\mu} X'$, $\client \st{\co{\mu}} \client'$ and $\lnot \good{\client}$,
    then $\mustset{X'}{\client'}$.
  \end{enumerate}
\end{lemma}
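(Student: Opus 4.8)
The plan is to treat the two items separately, observing first that item~\ref{pt:mx-preservation-wt-nil} is nothing more than \rlem{wt-nil-mx}: from $\mustset{X}{\client}$ together with $X \wt{\varepsilon} X'$ that lemma delivers $\mustset{X'}{\client}$ directly, so this part needs no further argument. All the content therefore lies in item~\ref{pt:mx-preservation-wt-mu}, and my claim is that it follows by a single inversion on the derivation of $\mustset{X}{\client}$.

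First I would note that, since $\lnot\good{\client}$ holds by assumption, the derivation of $\mustset{X}{\client}$ cannot end with an application of the axiom \msetnow, whose side condition is precisely $\good{\client}$. As $\opMustset$ is defined inductively (least fixpoint) by the two rules of \rfig{rules-mustset-main}, the derivation must therefore end with the inductive rule \msetstep, and consequently all premises of \msetstep are available. Among them is the clause governing interactions between the set of servers and the client, namely that for every visible action $\nu \in \Act$, every $X''$ and every $\client''$ with $X \wt{\co{\nu}} X''$ and $\client \st{\nu} \client''$ one has $\mustset{X''}{\client''}$.

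I would then instantiate this clause with $\nu \eqdef \co{\mu}$, $X'' \eqdef X'$ and $\client'' \eqdef \client'$. Using $\co{\co{\mu}} = \mu$, the side conditions match the hypotheses exactly: $\client \st{\nu} \client'$ becomes $\client \st{\co{\mu}} \client'$, and $X \wt{\co{\nu}} X'$ becomes $X \wt{\mu} X'$, both of which are given. The clause then yields $\mustset{X'}{\client'}$, which is the desired conclusion.

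I do not expect a genuine obstacle here: the lemma is essentially a repackaging of the interaction premise of \msetstep, and the only points requiring care are the bookkeeping of the complementation $\co{\cdot}$ on the action label and the elementary inversion observation that $\lnot\good{\client}$ excludes the base rule. The statement is isolated as a lemma not because its proof is delicate but because it bundles the preservation properties of $\opMustset$ that are later consumed in the soundness argument (compare the use of these facts in \rlem{soundness-set}).
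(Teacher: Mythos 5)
Your proposal is correct and matches the intended justification: the paper states this lemma without proof precisely because part~(\ref{pt:mx-preservation-wt-nil}) is a restatement of the earlier \rlem{wt-nil-mx}, and part~(\ref{pt:mx-preservation-wt-mu}) is immediate by inversion on $\mustset{X}{\client}$ — the hypothesis $\lnot\good{\client}$ rules out \msetnow, and instantiating the communication premise of \msetstep with the action $\co{\mu}$ (using $\co{\co{\mu}} = \mu$) yields $\mustset{X'}{\client'}$ exactly as you describe.
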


\begin{lemma}
  \label{lem:mx-forall}
  Given two LTS $\genlts_A$ and $\genlts_B$ then for every
  $X \in \pparts{\StatesA}$ and $\client \in \StatesB$,
  if for each $\serverA \in X$ we have that $\musti{\serverA}{\client}$,
  then $\mustset{X}{\client}$.
\end{lemma}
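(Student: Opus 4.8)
The plan is to build a derivation of $\mustset{X}{\client}$ by a well-founded induction, using crucially that the inductive predicate $\opMusti$ is an accessibility predicate: a proof of $\musti{\serverA}{\client}$ is a well-founded tree over configurations $\csys{\serverA}{\client}$ whose leaves are the good clients. First I would dispatch the easy case: if $\good{\client}$, then rule \msetnow immediately gives $\mustset{X}{\client}$. So assume $\lnot\good{\client}$. Then for every $\serverA \in X$ the derivation of $\musti{\serverA}{\client}$ cannot end with \mnow, hence it ends with \mstep; this yields at once two facts for each $\serverA \in X$: (i) $\csys{\serverA}{\client} \st{\tau}$, and (ii) $\musti{\serverA'}{\client'}$ for every reduct $\csys{\serverA}{\client} \st{\tau} \csys{\serverA'}{\client'}$. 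Fact (i) discharges the premise $\forall \serverA \in X.\ \csys{\serverA}{\client}\st{\tau}$ of \msetstep, and facts (i)--(ii) together with well-foundedness of the trees show in particular that no $\serverA \in X$ admits an infinite server-only $\tau$-path (equivalently $\serverA \convi$), a point I will need to control the first premise of \msetstep.

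It then remains to establish the three recursive premises of \msetstep, namely $\mustset{D(\tau,X)}{\client}$, $\mustset{X}{\client'}$ for every $\client \st{\tau}\client'$, and $\mustset{\WD(\co{\mu},X)}{\client'}$ for every $\mu$ and $\client\st{\mu}\client'$. For each of these successor set-configurations I would first re-establish the invariant ``every element is in $\opMusti$ with the current client'': for $D(\tau,X)$ this is exactly fact (ii) with $\client'=\client$ (so \rlem{musti-preserved-by-left-tau} applies element-wise); for $X$ with $\client\st{\tau}\client'$ it is fact (ii) along the client move; and for $\WD(\co{\mu},X)$ with $\client\st{\mu}\client'$ it follows by iterating \rlem{musti-preserved-by-left-tau} along the $\tau$'s hidden in the weak transition $\serverA\wt{\co{\mu}}\serverA'$ and applying \rlem{musti-presereved-by-actions-of-unsuccesful-tests} at the single communication step (here $\lnot\good{\client}$ is precisely the side condition of that lemma). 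Once the invariant is re-established, the inductive hypothesis delivers the required $\opMustset$ judgement for the successor.

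The heart of the proof, and the main obstacle, is arranging the induction so that these appeals to the inductive hypothesis are legitimate, i.e.\ proving that the set-reduction relation underlying premises $2$, $4$ and $5$ is well-founded on the domain of set-configurations all of whose elements satisfy $\opMusti$. A naive ordinal measure fails: taking the supremum over $\serverA\in X$ of the height of the tree for $\musti{\serverA}{\client}$ need not strictly decrease under the $\tau$-closure $D(\tau,\cdot)$ when $X$ is infinitely branching (reducts of unbounded but finite height share the supremum $\omega$). The way I would resolve this is to avoid ranking sets altogether and instead argue by \emph{threading} individual derivations: any infinite descending chain of set-reductions can be turned into an infinite non-good computation of some fixed configuration $\csys{\serverA_0}{\client_0}$ with $\serverA_0\in X_0$, contradicting the well-foundedness of $\musti{\serverA_0}{\client_0}$. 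The delicate constructive point is extracting this thread through a possibly infinitely branching chain; this is handled not by \koenigslemma\ but by a well-founded induction directly on the accessibility witnesses provided by $\opMusti$, separating the server dimension (controlled by $\serverA\convi$, which kills the premise-$2$ chains that keep $\client$ fixed) from the client and communication dimension (controlled by the well-founded descent of the $\opMusti$ trees along client moves and communications). Assembling these as a lexicographic well-founded induction gives the recursion principle needed to construct the $\opMustset$ derivation, completing the proof.
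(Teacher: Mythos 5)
Your first two paragraphs are sound and essentially forced: when $\good{\client}$ apply \msetnow; otherwise invert each $\musti{\serverA}{\client}$ (it must end with \mstep since $\lnot\good{\client}$), which yields both the stepping premise of \msetstep and the element-wise $\opMusti$ invariant for $D(\tau,X)$, for client $\tau$-moves, and along $\serverA\wt{\co{\mu}}\serverA'$ paired with $\client\st{\mu}\client'$. (One small point: \rlem{musti-preserved-by-left-tau} and \rlem{musti-presereved-by-actions-of-unsuccesful-tests} carry $\obaFW$/$\obaFB$ hypotheses that this lemma does not assume; the facts you need follow directly from inversion of \mstep, so argue from that instead.)

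The gap is your third paragraph, and it cannot be repaired in the generality in which you are working, because there the statement itself is false. Take $\genlts_A$ with states $\setof{p_n}{n\in\N}$ and transitions $p_{n+1}\st{\tau}p_n$ only, and the client $\client=\tau.\Unit$ (so $\lnot\good{\client}$ and $\good{\Unit}$). Each $\musti{p_n}{\client}$ holds by induction on $n$, but for $X=\setof{p_n}{n\in\N}$ we have $D(\tau,X)=X$, so premise 2 of \msetstep demands a derivation of $\mustset{X}{\client}$ strictly inside itself; since $\opMustset$ is inductive, $\Nmustset{X}{\client}$. This example breaks exactly your threading step: there is an infinite descending chain of set-reductions $X\st{\tau}X\st{\tau}\cdots$, yet every computation of every fixed $\csys{p_n}{\client}$ is finite, so no thread exists to be extracted — and when a thread does exist (finitely branching case) its extraction is precisely \koenigslemma, which you claim to avoid. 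Your lexicographic fallback collapses on the same example: every $p_n\convi$, so convergence of the elements does not make the premise-2 relation well-founded at the level of sets. The missing idea is that the lemma needs $X$ to be a \emph{finite} set and $\genlts_A$ to be \emph{image-finite}; the paper never proves this lemma on paper, but its mechanised counterpart (\texttt{must\_set\_iff\_must\_for\_all} in \texttt{Soundness.v}) states exactly these hypotheses via \texttt{gset} and \texttt{FiniteLts}. Under them, the measure you dismissed works verbatim: every $\opMusti$ derivation is a finitely-branching well-founded tree, so its height $h(\serverA,\client)$ is a natural number, all sets arising in the recursion stay finite, and $H(X,\client)=\max_{\serverA\in X}h(\serverA,\client)$ strictly decreases through each of the three recursive premises of \msetstep, since every element of a successor set sits at the end of at least one system step from some element of $X$ and its tree is therefore a proper subtree. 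Plain strong induction on $H$ then closes the proof, with no threading and no lexicographic product needed.
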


\begin{lemma}%[\coqSou{unhappy_must_st_nleqx}]
  \label{lem:stability-Nbhvleqtwo}
  Let $\genlts_\StatesA, \genlts_\StatesB \in \obaFW$ and $\genlts_\StatesC \in \obaFB$.
  For every $X \in \pparts{ \StatesA }$ and
  $\serverB \in \StatesB $ such that
  $X \asleqset \serverB$, for every $\client \in \StatesC$
  if $\lnot \good{\client}$ and $\mustset{X}{\client}$
  then $\csys{\serverB}{\client} \st{\tau}$.
\end{lemma}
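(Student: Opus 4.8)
The plan is to prove $\csys{\serverB}{\client} \st{\tau}$ by a short case analysis on the stability of $\client$ and $\serverB$, reducing everything to a single nontrivial situation in which I extract, from the acceptance-set hypothesis, a stable derivative of $X$ whose outputs are dominated by those of $\serverB$. First I would dispatch the easy cases: if $\client \st{\tau} \client'$ then rule \stauclient immediately gives $\csys{\serverB}{\client} \st{\tau}$, and if $\serverB \st{\tau} \serverB'$ then rule \stauserver does. So I may assume both $\client \stable$ and $\serverB \stable$, and it is this case that carries the real content.

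Next I would exploit $\mustset{X}{\client}$ together with $\lnot \good{\client}$. Because $\good{\client}$ fails, the derivation of $\mustset{X}{\client}$ cannot end with \msetnow, so it ends with \msetstep; and \rlem{mustx-terminate-ungood} yields $X \convi$, hence $X \cnvalong \varepsilon$, which is exactly the side condition needed to fire the acceptance-set comparison at the empty trace. Since $\serverB \stable$, we have $\serverB \wt{\varepsilon} \serverB \stable$, so $O(\serverB) \in \accht{\serverB}{\varepsilon}$. The component $X \accleqset \serverB$ of $X \asleqset \serverB$, instantiated at $\varepsilon$, then produces some $\widehat{O} \in \accht{X}{\varepsilon}$ with $\widehat{O} \subseteq O(\serverB)$; unfolding $\accht{X}{\varepsilon}$ gives a server $p \in X$ and a stable state $p'$ with $p \wt{\varepsilon} p' \stable$ and $O(p') = \widehat{O} \subseteq O(\serverB)$.

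The heart of the argument is to show that this stable $p'$ must engage in a communication with $\client$ that $\serverB$ can replicate. Since $p' \in \WD(\varepsilon, X)$ and $X \wt{\varepsilon} \WD(\varepsilon, X)$, \rlem{wt-nil-mx} upgrades $\mustset{X}{\client}$ to $\mustset{\WD(\varepsilon, X)}{\client}$; as $\lnot \good{\client}$, this predicate is again derived by \msetstep, whose third premise gives $\csys{p'}{\client} \st{\tau}$. Because both $p'$ and $\client$ are stable, this reduction cannot come from \stauserver or \stauclient, hence it comes from \scom: there is an action $\mu$ with $p' \st{\mu}$ and $\client \st{\co{\mu}}$. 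I would then split on the polarity of $\mu$. If $\mu = \co{\aa}$ is an output, then $\co{\aa} \in O(p') \subseteq O(\serverB)$ gives $\serverB \st{\co{\aa}}$, and \scom applied with $\client \st{\aa}$ yields $\csys{\serverB}{\client} \st{\tau}$. If $\mu = \aa$ is an input, then $\client \st{\co{\aa}}$, and since $\genlts_\StatesB \in \obaFW$ the \boom axiom guarantees $\serverB \st{\aa}$, so \scom again yields $\csys{\serverB}{\client} \st{\tau}$.

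The main obstacle, and the point where the $\obaFW$ hypothesis on $\serverB$ is indispensable, is this last step: forwarding is precisely what lets $\serverB$ match a handshake on an \emph{input} of $p'$, whereas the inclusion $O(p') \subseteq O(\serverB)$ handles a handshake on an \emph{output} of $p'$. The only bookkeeping to be careful about is the inference that the reduction $\csys{p'}{\client} \st{\tau}$ is necessarily a \scom step rather than an internal move of $p'$ or $\client$; this is exactly why I would set up the preliminary stability case analysis first, so that both $p'$ and $\client$ are already known to be stable when this reduction is produced.
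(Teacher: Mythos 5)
Your proposal is correct and follows essentially the same route as the paper's proof: dispatch the cases where $\serverB$ or $\client$ can move internally, use \rlem{mustx-terminate-ungood} to get $X \cnvalong \varepsilon$, apply $X \accleqset \serverB$ at the empty trace to extract a stable derivative $p'$ with $O(p') \subseteq O(\serverB)$, propagate $\opMustset$ along $X \wt{\varepsilon} \cdot$ to force $\csys{p'}{\client} \st{\tau}$ via \scom, and then replicate the handshake on $\serverB$ using the output inclusion for outputs and the \boom axiom for inputs. The only cosmetic difference is that you invoke the \msetstep premise directly on the set $\WD(\varepsilon,X)$ containing $p'$, whereas the paper additionally restricts to the singleton $\set{p'}$ via \rlem{mx-sub}; both are sound.
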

%% \noindent%
%% \textbf{\rlem{stability-Nbhvleqtwo}}
%%  Let $\genlts_\StatesA, \genlts_\StatesB \in \obaFW$ and $\genlts_\StatesC \in \obaFB$.
%%   For every $X \in \pparts{ \StatesA }$ and
%%   $\serverB \in \StatesB $ such that
%%   $X \asleqset \serverB$, for every $\client \in \StatesC$
%%   if $\lnot \good{\client}$ and $\mustset{X}{\client}$
%%   then $\csys{\serverB}{\client} \st{\tau}$.
\begin{proof}
  If either $\serverB \st{\tau}$ or $\client \st{\tau}$
  then we prove that $\csys{\serverB}{\client}$ performs
  a $\tau$-transition vis \stauserver or \stauclient, so suppose that both
  $\serverB$ and $\client$ are stable.
  Since $\serverB$ is stable we know that
  $$
  \accht{q}{ \varepsilon } = \set{ O(q) }
  $$
  The hypotheses $\lnot \good{\client}$ and $ \mustset{X}{\client}$ together with
  \rlem{mustx-terminate-ungood} imply $X \convi$ and thus $X \cnvalong \varepsilon$.
  The hypothesis $X \accleqset q$ with $\trace = \varepsilon$,
  gives us a $\serverA'$ such that $\serverA \wt{\varepsilon} \serverA' \stable$
  and $O(\serverA') \subseteq O(\serverB)$.
  By definition there exists the weak silent trace
  $ X \wt{ } X'$ for some set $X'$ such that $ \set{ \serverA' }
  \subseteq X' $.
  The hypothesis  $\mustset{X}{\client}$ together with
  \rlem{wt-nil-mx} and \rlem{mx-sub} ensure that
  $\mustset{\set{\serverA'}}{\client}$.

  As $\lnot \good{\client}$, $\mustset{\set{\serverA'}}{\client}$
  must have been derived using rule \mstep which implies that $\csys{\serverA'}{\client} \st{\tau}$.
  As both $\client$ is stable by assumption, and $\serverA'$ is stable
  by definition, this $\tau$-transition must have been
  derived using \rname{s-com}, and so %
%  \begin{center}
    $\serverA' \st{\mu}$ and $\client \st{\co{\mu}}$ for some $\mu \in \Act$.
%  \end{center}
  Now we distinguish whether $\mu$ is an input or an output.
  In the first case $\mu$ is an input. Since $\genlts_{\StatesB} \in
  \obaFW$ we use the \boom axiom to prove $\serverB \st{ \mu }$, and thus
  $\csys{ \serverB }{ \client} \st{ \tau }$ via rule \scom.
  In the second case $\mu$ is an output, and so the inclusion
  $O(\serverA') \subseteq O(\serverB)$ implies that $\serverB
  \st{ \mu }$, and so we conclude again applying rule \scom.
\end{proof}

  \begin{lemma}
  \label{lem:empty-nleqx}
  \label{lem:X-must-perform-visible-action}
  Let $\genlts_\StatesA, \genlts_\StatesB \in \obaFW$.
  For every $X \in \pparts{ \StatesA }$ and
  $\serverB, \serverB' \in \StatesB$,
  such that $X \asleqset \serverB$, then
  for every $\mu \in \Act$, if
  $X \cnvalong \mu$ and $\serverB \st{\mu} \serverB'$ then $X \wt{\mu}$.
  \end{lemma}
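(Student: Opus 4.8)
The plan is to unfold the hypothesis $X \asleqset \serverB$ into its two components $X \cnvleqset \serverB$ and $X \accleqset \serverB$, and then to argue that the acceptance set $\accht{\serverB}{\mu}$ is non-empty and that this non-emptiness transfers to the left-hand side. The crucial observation is that the ordering $\ll$ is used here only to propagate \emph{non-emptiness}: since $\accht{X}{\mu} \ll \accht{\serverB}{\mu}$ means that every set in $\accht{\serverB}{\mu}$ dominates some set in $\accht{X}{\mu}$, a non-empty right-hand side forces a non-empty left-hand side, and a non-empty $\accht{X}{\mu}$ is exactly a witness that some $p \in X$ satisfies $p \wt{\mu}$. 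This makes the argument uniform in $\mu$, with no need to split on whether $\mu$ is an input or an output.

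First I would use $X \cnvalong \mu$ together with $X \cnvleqset \serverB$ to obtain $\serverB \cnvalong \mu$. Unfolding rule \cnvmu, this gives $\serverB \conv$ and, since $\serverB \st{\mu} \serverB'$ yields $\serverB \wt{\mu} \serverB'$, it also gives $\serverB' \conv$. From $\serverB' \conv$ (equivalently $\serverB' \convi$ by \rcor{ext-int-eq-conv}) I would extract, by rule induction on the convergence derivation, a stable state $q'$ with $\serverB' \wt{\varepsilon} q'$. Composing transitions then gives $\serverB \wt{\mu} q' \stable$, so $O(q') \in \accht{\serverB}{\mu}$, and in particular $\accht{\serverB}{\mu} \ne \emptyset$.

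Next, since $X \cnvalong \mu$, the hypothesis $X \accleqset \serverB$ applied at the one-action trace $\mu$ delivers $\accht{X}{\mu} \ll \accht{\serverB}{\mu}$. Instantiating the definition of $\ll$ at $O(q') \in \accht{\serverB}{\mu}$ produces some $\widehat{R} \in \accht{X}{\mu}$; by definition of $\accht{X}{\mu}$ there is then $p \in X$ and a stable $p'$ with $p \wt{\mu} p'$, whence $X \wt{\mu}$ as required.

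The step I expect to be the main obstacle is the very choice of working with the single-action trace $\mu$ rather than the empty trace. Using $\varepsilon$ is the naïve move, but the subset direction in $\ll$ points the ``wrong way'' for transferring an output of $\serverB$ down to $X$, so one cannot conclude $X \wt{\mu}$ from acceptance data at $\varepsilon$. The insight is that at trace $\mu$ the only relevant content of $\ll$ is preservation of non-emptiness, and that the non-emptiness of $\accht{\serverB}{\mu}$ is itself a genuine liveness fact that must be earned: it rests on the convergence of $\serverB'$ supplied by $\cnvleqset$, which guarantees that the $\mu$-transition can be completed to a \emph{stable} derivative rather than being lost along a divergent $\tau$-path.
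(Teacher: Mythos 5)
Your proposal is correct and follows essentially the same route as the paper's proof: use $X \cnvleqset \serverB$ to obtain $\serverB \cnvalong \mu$ and hence convergence of $\serverB'$, extract a stable derivative $\serverB \wt{\mu} \serverB'' \stable$ witnessing $\accfwp{\serverB}{\mu}{\st{}_\StatesB} \neq \emptyset$, then invoke $X \accleqset \serverB$ so that the $\ll$ relation transfers this non-emptiness to $\accfwp{X}{\mu}{\st{}_\StatesA}$, whose witness gives some $\serverA' \in X$ with $\serverA' \wt{\mu}$. Your closing observation — that $\ll$ is used here purely to propagate non-emptiness, and that the non-emptiness on the right must be earned via convergence — is exactly the content of the paper's argument.
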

  \begin{proof}
  Then, from $X \cnvleqset \serverB$ and $X \cnvalong \mu$ we have that
  $\serverB \cnvalong \mu$ and thus $\serverB' \convi$.
  As~$\serverB'$ converges, there must exist $\serverB''$ such that
  $$
  \serverB \wt{\mu} \serverB' \wt{\varepsilon} \serverB'' \stable
  $$
  and so $\accfwp{ \serverB }{ \mu }{ \st{}_\StatesB } \neq \emptyset$.
  An application of the hypothesis $X \accleqset \serverB$ implies that there exists
  a set $\widehat{ O } \in \accfwp{ X }{ \mu }{ \st{}_\StatesA }$,
  and thus there exist two servers $\serverA' \in X$ and $\serverA''$
  such that $\serverA' \wt{\mu} \serverA'' \stable$.
  Since $ \serverA' \in X $ it follows that $X \wt{\mu}$.
  \end{proof}

\noindent%
\textbf{\rlem{bhvleqone-preserved}}
  Let $\genlts_\StatesA, \genlts_\StatesB \in \obaFW$.
  For every set $X \in \pparts{ \StatesA }$, and
  $\serverB \in \StatesB$, such that
  $X \cnvleqset \serverB$ then
  \begin{enumerate}
  \item
    $\serverB \st{ \tau } \serverB'$ implies $X \cnvleqset \serverB'$,
  \item
    if $X \convi$ and $\serverB \st{\mu} \serverB'$
    then for every set $X \wt{\mu} X'$ % $X \wt{\mu} X'$
    we have that $X' \cnvleqset \serverB'$.
%    for every complete set $X'$ such that $X \wt{\mu} X'$
%    we have that $X' \cnvleqset \serverB'$.
  \end{enumerate}
\begin{proof}
  We first prove \rpt{bhvleqone-preserved-by-tau}.
  Let us fix a trace $\trace$ such that $X \cnvalong \trace$.
  We must show $\serverB' \cnvalong \trace$.
  An application of the hypothesis $X \cnvleqset \serverB$ ensures $\serverB \cnvalong \trace$.
  From the transition $\serverB \st{ \tau } \serverB'$ and the fact that
  convergence is preserved by the $\tau$-transitions we have that
  $\serverB' \cnvalong \trace$ as required.

  We now prove \rpt{bhvleqone-preserved-by-mu}.
  Fix a trace $\trace$ such that $X' \cnvalong \trace$.
  Since  $\serverB \st{\mu} \serverB'$, the required  $\serverB' \cnvalong \trace$ follows
  from $ \serverB \cnvalong \mu.\trace $.
  Thanks to the hypothesis $X \cnvleqset \serverB$ it suffices to show that
  $X \cnvalong \mu.\trace'$, \ie that
  $$
  \forall \server \in X . \server \cnvalong \mu.\trace'
  $$
  Fix a server $\server \in X$.
  We must show that
  \begin{enumerate}
  \item $\server \convi$ and that
  \item for any $\server'$ such that $\server \wt{\mu} \server'$ we have $\server' \cnvalong \trace$.
  \end{enumerate}
  The first requirement follows from the hypothesis $X \convi$.
  The second requirement follows from the transition $\server \wt{\mu} \server'$,
  from the assumption $X' \cnvalong \trace$,
  and the hypothesis that  $X \wt{\mu} X'$, which ensures that
  $\server' \in X'$ and thus by definition of $X' \cnvalong \trace$ that $\server' \cnvalong \trace$.
  %%%%%%%%%%%%%%%%%%%%%%%% OLD PROOF
  %%%%%%%%%%%%%%%%%%%%%%%%
  %% We now prove \rpt{bhvleqone-preserved-by-mu}.
  %% Let us fix a trace $\trace$ such that $X' \cnvalong \trace$.
  %% As $\serverB \st{\mu} \serverB'$, it is enough to show $\serverB \cnvalong \mu.\trace$
  %% to obtain $\serverB' \cnvalong \trace$.
  %% To do so we apply the hypothesis $X \cnvleqset \serverB$ which requires us to show
  %% $X \cnvalong \mu.\trace'$.
  %% Let us fix a server $\serverA \in X$.
  %% We must show $\serverA \convi$ and that for any $\serverA'$ such that
  %% $\serverA \wt{\mu} \serverA'$ we have $\serverA' \cnvalong \trace$.
  %% The first requirement is a direct consequence of $X \convi$ and $\serverA \in X$.
  %% The second requirement follows from the reduction $\serverA \wt{\mu} \serverA'$
  %% which implies that $\serverA' \in X'$, together with $X' \cnvalong \trace$.
\end{proof}

\noindent%
\textbf{\rlem{bhvleqtwo-preserved}}
  Let $\genlts_\StatesA, \genlts_\StatesB \in \obaFW$.
  For every
  $X, X' \in \pparts{ \StatesA }$ and $ \serverB \in \StatesB$,
  such that $X \accleqset \serverB$, then
  \begin{enumerate}
  \item
    $\serverB \st{ \tau } \serverB'$ implies $X \accleqset \serverB'$,
  \item
    for every $\mu \in \Act$,
    if $X \convi$, then for every  $\serverB \st{\mu} \serverB'$ and set $X \wt{\mu} X'$
    we have $X' \accleqset \serverB'$.
  \end{enumerate}
  \newcommand{\attaboy}{\server_{\mathsf{attaboy}}}
  \begin{proof}
      To prove \rpt{bhvleqtwo-preserved-by-tau}
      fix a trace $\trace \in \Actfin$ such that
      $X \acnvalong \trace$.
      We have to explain why
      $\accht{ X }{ \trace } \ll \accht{ \serverB' }{ \trace }$.
      By unfolding the definitions, this amounts to showing that
      \begin{equation}
        \label{eq:attaboy-1}
        \tag{$\star$}
        \forall O \in \accht{ \serverB' }{\trace} \wehavethat
        \exists \attaboy \in X \suchthat
        \exists \widehat{O} \in \accht{ \attaboy }{ \trace } \suchthat
        \widehat{O} \subseteq O
      \end{equation}

      Fix a set $O \in \accht{\serverB'}{\trace}$.
      By definition there exists some $\serverB''$ such that
      $\serverB' \wt{\trace} \serverB'' \stable$, and that $O =
      O(\serverB'')$.
      The definition of $\accht{-}{-}$ and the silent move $\serverB \st{ \tau } \serverB'$
      ensures that $O \in \accht{\serverB}{ \trace }$.
      The hypothesis $ X \accleqset \serverB $ and that
      $X \cnvalong{ \trace } $ now imply that
      $\accht{ X }{ \trace } \ll \accht{ \serverB }{ \trace }$,
      which together with $ O \in \accht{\serverB}{ \trace }$
      implies exactly \req{attaboy-1}.
      %% by definition of $\accht{
      %%   X }{ \trace }$ there exists a $\attaboy \in X$
      %% such that there exists a set $\widehat{ O } \in
      %% \accht{ \attaboy }{ \trace }$ such that $ \widehat{ O
      %% } \subseteq O$.

    We now prove \rpt{bhvleqtwo-preserved-by-mu}.
    To show $X' \accleqset \serverB'$ fix a trace $\trace \in \Actfin$
    such that $X' \cnvalong \trace$.

    We have to explain why $\accht{X}{\trace} \ll
    \accht{\serverB'}{\trace}$.
    By unfolding the definitions we obtain our aim,
      \begin{equation}
      \label{eq:aim}
      \tag{$\star\star$}
      \forall O \in \accht{ \serverB' }{ \trace } \wehavethat
      \exists \attaboy \in X' \suchthat
      \exists \widehat{ O } \in \accht{ \attaboy }{ \trace } \suchthat
      \widehat{ O } \subseteq O
    \end{equation}

      To begin with, we prove that $X \cnvalong \mu.\trace$.
      Since $X \wt{\mu} X'$ we know that $ X \wt{ \mu } X'$.
      This, together with $X \convi$ and $X' \cnvalong{ \trace} $
      implies the convergence property we are after.

      Now fix a set $ O \in \accht{ \serverB' }{ \trace }$.
      Thanks to the transition $ \serverB \st{ \mu } \serverB'$,
      we know that $ O \in \accht{ \serverB }{ \mu.\trace }$.
      The hypothesis $ X \accleqset \serverB $ together with
      $ X \cnvalong \mu.\trace$ implies that there exists
      a server $\attaboy \in X$ such that there exists an
      $ \widehat{O} \in \accht{ \attaboy }{ \mu.\trace }$.
      This means that $ \attaboy \wt{ \mu } \attaboy'$ and that
      $ \widehat{O} \in \accht{ \attaboy' }{ \trace } $.
      Since $X \wt{\mu} X'$ we know that $\attaboy' \in X' $
      and this concludes the argument.
    %%%%%% OLD PROOF
    %%%%%%
  %% We first prove \rpt{bhvleqtwo-preserved-by-tau}.
  %% We must show $X \accleqset \serverB'$.
  %% Let us fix a server $\serverB''$ such that
  %% $X \cnvalong \trace$ and $\serverB' \wt{\trace} \serverB'' \stable$.
  %% An application of the hypothesis $X \accleqset \serverB$
  %% provides us with two servers $\serverA' \in X$ and $\serverA''$ such that
  %% $\serverA' \wt{s} \serverA'' \stable$ and $O(\serverA'') \subseteq O(\serverB'')$
  %% as required.

  %% We now prove \rptlem{bhvleqtwo-preserved}{bhvleqtwo-preserved-by-mu}.
  %% Let us fix a server $\serverB''$ such that
  %% $X' \cnvalong \trace$ and $\serverB' \wt{\trace} \serverB'' \stable$.
  %% An application of the hypothesis $X \accleqset \serverB$
  %% requires us to show $X \cnvalong \mu.\trace$ which follows
  %% from $X \convi$ and $X' \cnvalong \trace$.
  %% We then obtain two servers $\serverA \in X$ and $\serverA''$ such that
  %% $\serverA \wt{\mu.\trace} \serverA'' \stable$ and $O(\serverA'') \subseteq O(\serverB'')$
  %% We decompose the reduction $\serverA \wt{\mu.\trace} \serverA''$ and obtain
  %% a server $\serverA'$ such that
  %% $\serverA \wt{\mu} \serverA' \wt{\trace} \serverA''$ and allows us
  %% to conclude this case.
  \end{proof}

\begin{lemma}
  \label{lem:ungood-cnv-mu}
  For every $\genlts_\StatesA \in \obaFW$, $\genlts_\StatesB \in \obaFB$,
  every set of processes $X \in \pparts{ \StatesA }$, every $\client \in \StatesB$, and every $\mu \in \Act$,
  if $\mustset{X}{\client}$, $\lnot \good{\client}$ and $\client \st{\mu}$ then $X \cnvalong \co{\mu}$.
\end{lemma}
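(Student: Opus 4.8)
The plan is to fix an arbitrary $p \in X$ and establish $p \cnvalong \co\mu$, which by the rules $\cnvepsilon$ and $\cnvmu$ amounts to showing that $p$ converges and that every $p'$ with $p \wt{\co\mu} p'$ converges. First I would pass from the set predicate to the pointwise one: by $\rlem{musti-if-mustset-helper}$ the hypothesis $\mustset{X}{\client}$ is equivalent to $\musti{p}{\client}$ for every $p \in X$, and since $\lnot\good{\client}$, $\rlem{mustx-terminate-ungood}$ yields $X \convi$, hence $p \convi$ for every $p \in X$ (interchangeable with $p \conv$ by $\rcor{inductive-char-conv}$). This settles the convergence of $p$ itself uniformly; the real work is the convergence of the weak $\co\mu$-derivatives, and for that I would split on whether the server action $\co\mu$ is an input or an output.

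Consider first the case $\mu \in \overline{\Names}$, so that the server performs the \emph{input} $\co\mu \in \Names$. Then the client fires an output, and by \req{good-invariance} we have $\lnot\good{\client'}$ for the $\client'$ with $\client \st{\mu}\client'$. Writing $X' = \WD{(\co\mu, X)}$ (if $X' = \emptyset$ the goal is vacuous), we have $X \wt{\co\mu} X'$, and since $\co{(\co\mu)} = \mu$ and $\client \st{\mu}\client'$, an application of $\rptlem{mx-preservation}{mx-preservation-wt-mu}$ gives $\mustset{X'}{\client'}$. Now $\lnot\good{\client'}$ lets me invoke $\rlem{mustx-terminate-ungood}$ to conclude $X' \convi$, i.e.\ every member of $X'$ converges; since every weak $\co\mu$-derivative of every $p \in X$ lies in $X'$, all such derivatives converge, as required.

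In the complementary case $\mu \in \Names$, the server performs the \emph{output} $\co\mu \in \overline{\Names}$. Here the client performs an input and $\good{\client'}$ may well hold, so the must-set machinery supplies no convergence guarantee. Instead I would use that outputs are non-blocking: a short argument from the \outputcommutativity axiom shows that in any LTS in $\obaFW$, $p \conv$ together with $p \wt{\co\mu} p'$ (with $\co\mu$ an output) forces $p' \conv$. Indeed, decomposing $p \st{\tau}^\ast p_1 \st{\co\mu} p_2 \st{\tau}^\ast p'$, any infinite $\tau$-sequence out of $p_2$ can be commuted, one $\tau$ at a time, past the output $\co\mu$ back to an infinite $\tau$-sequence out of $p_1$, contradicting $p_1 \conv$; convergence is then transported along the $\tau$-prefix and $\tau$-suffix. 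As $p \conv$ was already established, every output derivative $p'$ converges.

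The main obstacle is exactly this asymmetry: no single uniform argument works. Attempting to handle both cases through $\rlem{mx-preservation}$ and $\rlem{mustx-terminate-ungood}$ breaks down in the output case, because a client input can drive the client into a good state, after which the must-set predicate tolerates divergent servers and certifies nothing about convergence. The output case must therefore rely on the non-blocking character of outputs (\outputcommutativity), while the input case must rely on \req{good-invariance} to keep the client non-good so that $\rlem{mustx-terminate-ungood}$ applies. Recognising that these two regimes demand genuinely different tools is the crux of the proof; the remaining steps are routine bookkeeping with the preservation lemmas.
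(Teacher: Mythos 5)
Your proof is correct, and there is in fact no textual proof in the paper to compare it against: the paper states \rlem{ungood-cnv-mu} bare, and discharges it only in the Coq development (\texttt{ungood\_acnv\_mu} in \texttt{Soundness.v}). Judged on its own merits your argument is the right one, and its structure matches what the mechanisation indicates. The Coq statement of this lemma assumes only the output-buffered-agent axioms (\texttt{LtsOba}) on the server LTS, which confirms your diagnosis that the case where the server emits an output must be settled by the output axioms alone, while the client-side \texttt{Good} typeclass supplies exactly the invariance \req{good-invariance} you use in the other case. Both of your regimes check out: when the client outputs, $\lnot\good{\client'}$ follows from \req{good-invariance}, so \rptlem{mx-preservation}{mx-preservation-wt-mu} followed by \rlem{mustx-terminate-ungood} yields convergence of $\WD(\co{\mu},X)$, which by construction contains every weak $\co{\mu}$-derivative of every $p \in X$; when the client inputs, the must-machinery indeed certifies nothing (once the client turns good, divergent servers are tolerated), and \outputcommutativity is the correct replacement. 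Your observation that the two cases genuinely require different tools is the key insight.

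The one point to tighten is constructivity. You argue the output case classically, by extracting an infinite $\tau$-sequence from a putative non-convergent derivative and commuting it backwards past the output, a reductio that sits poorly with the paper's explicitly constructive agenda. The repair is routine and uses the same commutation step: prove that $p_1 \convi$ and $p_1 \st{\co{a}} p_2$ imply $p_2 \convi$ by induction on the derivation of $p_1 \convi$. In the base case $p_1 \Nst{\tau}$, and \outputcommutativity forces $p_2 \Nst{\tau}$ as well; in the inductive case, for each $\tau$-successor $q$ of $p_2$, \outputcommutativity yields some $s$ with $p_1 \st{\tau} s \st{\co{a}} q$, and the inductive hypothesis applied to $s$ gives $q \convi$. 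Convergence is then propagated along the $\tau$-prefix and $\tau$-suffix of the weak transition, since $\convi$ is preserved by $\tau$-steps by inversion of its defining rules.
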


\noindent%
\textbf{\rlem{soundness-set}}
Let $\genlts_\StatesA, \genlts_\StatesB \in \obaFW$ and
$\genlts_\StatesC \in \obaFB$.
For every set of processes $X \in \pparts{ \StatesA }$,
server $\serverB \in \StatesB$ and client $\client \in \StatesC$,
if $\mustset{X}{\client}$ and $X \asleqset \serverB$ then
$\musti{\serverB}{\client}$.
\begin{proof}
  We proceed by induction on the derivation of $\mustset{X}{\client}$.
  In the base case, $\good{\client}$ so we trivially derive
  $\musti{\serverB}{\client}$. %
  In the inductive case the proof of the hypothesis  $\mustset{X}{\client}$
  terminates with an application of \rname{Mset-step}.
  Since $\lnot \good{\client}$, we show the result applying \mstep.
  This requires us to prove that
  \begin{enumerate}[(1)]
  \item $ \csys{\serverB}{\client} \st{ \tau }$, and that
  \item for all $\serverB', \client'$ such that $\csys{\serverB}{\client} \st{ \tau } \csys{\serverB'}{\client'} $
    we have $\musti{\serverB'}{\client'} $.
  \end{enumerate}
  The first fact is a consequence of \rlem{stability-Nbhvleqtwo},
  which we can apply because $\lnot \good{\client}$ and thanks to the
  hypothesis $X \accleqset q$ and $\mustset{X}{\client}$.
  To prove the second fact, fix a transition $
  \csys{\serverB}{\client} \st{ \tau } \csys{\serverB'}{\client'}$. We
  have to explain why the following properties are true,
  \begin{enumerate}[(a)]
  \item\label{pt:aim-soundness-1}
    for every
    $\serverB' \suchthat \serverB \st{\tau} \serverB' \implies
    \musti{\serverB'}{\client}$,
  \item\label{pt:aim-soundness-2}
    for every
    $\client \suchthat \client \st{\tau} \client' \implies
    \musti{\serverB}{\client'}$,
  \item\label{pt:aim-soundness-3}
    for every $\serverB', \client'$ and $\mu \in \Act$,
    $\serverB \st{\mu} \serverB'$ and
    $\client \st{\co{\mu}} \client'$ imply
    $\musti{\serverB'}{\client'}$.
  \end{enumerate}

  First, note that $\mustset{ X  }{ \client }$, $\lnot
  \good{\client}$, and \rlem{mustx-terminate-ungood} imply~$X \conv$. Second, the inductive hypotheses
  state that for every $\client'$, non-empty set $X'$, and $\serverB$ the following facts hold,
  \begin{enumerate}[(i)]
  \item\label{soundness-IH1}
    $X \st{\tau} X'$ and $X' \asleqset \serverB$ implies
    $\musti{\serverB}{\client}$,
  \item\label{soundness-IH2}
    $\client \st{\tau} \client'$ and $X \asleqset \serverB$ implies
    $\musti{\serverB}{\client'}$,
  \item\label{soundness-IH3}
    for every and $\mu \in \Act$,
    $X \wt{\co{\mu}} X'$ and $\client \st{\mu} \client'$, and $X' \asleqset \serverB$ implies
    $\musti{\serverB}{\client'}$.
  \end{enumerate}

  To prove (\ref{pt:aim-soundness-1}) we use $X \conv$ and the hypothesis
  $X \cnvleqset q$ to obtain~$q \convi$. A rule induction on $q
  \convi$ now suffices: in the base case (\ref{pt:aim-soundness-1}) is
  trivially true and in the inductive case (\ref{pt:aim-soundness-1}) follows from
  \rptlem{bhvleqone-preserved}{bhvleqone-preserved-by-tau} and
  \rptlem{bhvleqtwo-preserved}{bhvleqtwo-preserved-by-tau},
  and the inductive hypothesis.

  The requirement (\ref{pt:aim-soundness-2}) follows directly from
  the hypothesis $X \asleqset \serverB$ and
  %an application of the inductive hypothesis
  part (\ref{soundness-IH2}) of the inductive hypothesis.

  To see why (\ref{pt:aim-soundness-3}) holds, fix an action $\mu$ such that $\serverB \st{\mu} \serverB'$
  and $\client \st{\co{\mu}} \client'$.
  %    \pl{From an application of \rlem{ungood-cnv-mu} with obtain $ X \cnvalong \mu $.}
  Since $\lnot \good{ \client }$ \rlem{ungood-cnv-mu} implies that $ X \cnvalong \mu $,
  %    Either $ X \cnvalong \mu $ or not.
  and so \rlem{X-must-perform-visible-action} proves that $X \wt{ \mu }$.
  In turn this implies that there exists a $X'$ such that $X \wt{\mu} X'$, and thus
  \rptlem{bhvleqone-preserved}{bhvleqone-preserved-by-mu} and
  \rptlem{bhvleqtwo-preserved}{bhvleqtwo-preserved-by-mu}
  prove that $X' \asleqset \serverB'$ holds, and
  (\ref{soundness-IH3}) ensures the result, \ie that $\musti{\serverB'}{\client'}$.
  %% If $\lnot{  X \cnvalong \mu }$ then, since $X \convi $, it must be the case that
  %% $X \wt{ \mu } X' $ for some $X'$ that diverges. The assumption that
  %% $\client \st{\co{\mu}} \client'$ now imply that $ \csys{ X }{ r } \wt{ } \csys{ X' }{ r' }$.
    %% Since $X'$ diverges the hypothesis $ \musti{X}{\client}$ imply that $\good{r'}$,
    %% and thus know via the axiom that $\musti{\serverB'}{\client'}$.
    %
    %
  %%%%% ORIGINAL PROOF DO NOT EREASE
  %% To see why (\ref{pt:aim-soundness-3}) holds, fix an action $\mu$ such that $\serverB \st{\mu} \serverB'$
  %% and $\client \st{\co{\mu}} \client'$.
  %% Consider the set $X' = \setof{\serverA'}{\exists \serverA \in X. \serverA \wt{\mu} \serverA'}$
  %% such that for every $\serverA \in X$, if $\serverA \wt{\mu} \serverA'$ then $\serverA' \in X'$.
  %% We know by construction $X \wt{\mu} X'$.
  %% We distinguish whether $X'$ is empty or not.
  %% If $X'$ is empty then $X \Nwt{\mu}$ and $\lnot (X \accleqset \serverB)$ follows from
  %% an application of \rlem{X-must-perform-visible-action}.
  %% The hypothesis $X \accleqset \serverB$ together with $\lnot (X \accleqset \serverB)$ allows us
  %% to derive a proof of $\bot$ which concludes this case.
  %% If $X'$ is not empty then we have $X \wt{\mu} X'$.
  %% As the $X \wt{\mu} X'$ %set $X'$ is complete with respect to $X$ and the action $\mu$,
  %% \rptlem{bhvleqone-preserved}{bhvleqone-preserved-by-mu} and
  %% \rptlem{bhvleqtwo-preserved}{bhvleqtwo-preserved-by-mu}
  %% prove that $X' \asleqset \serverB'$ holds, and so
  %% (\ref{soundness-IH3}) ensures the result, \ie that $\musti{\serverB'}{\client'}$.
  %%%%
\end{proof}
  %% gio: I HAVEN'T UNDERSTOOD ANYTHING OF THE FOLLOWING ENGLISH
  %% Since the LTS is finite-image and $X \cnvalong{\mu}$
  %% we know that $X'$ is finite.
  %% \rptlem{bhvleqtwo-preserved}{empty-nleqx} together with the hypothesis $X \accleqset \serverB$
  %% imply that $X' \neq \emptyset$.

%\input{soundness-bhv}
\section{Technical details about coinduction}
\label{sec:technicalities-coinductive-characterisation}

Here we outline the lemma that lets us prove \req{use-case-coind-preorder}.

\begin{lemma}
  \label{lem:technical-part-proof-by-coinduction}
  For every mailbox $\mailbox{M}$ and every set $X$ of processes, we have
\begin{equation}
   \label{eq:ch}
   \{ \tau.(\co{\aa} \Par \co{b}) \extc
   \tau.(\co{\aa} \Par \co{c}) \triangleright \mailbox{M} \} \cup X \;\coindleq\; \co{\aa} \Par (\tau.\co{b} \extc
   \tau.\co{c}) \triangleright \mailbox{M}.
   \tag{CH}
\end{equation}
\end{lemma}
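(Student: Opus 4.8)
The plan is to prove \eqref{eq:ch} by coinduction, exhibiting a single relation $\mathcal{R}$ that contains every instance of the left-hand pair and showing that $\mathcal{R}$ satisfies the four clauses of \rdef{coinductive-char-main}; since $\coindleq$ is by definition the greatest such relation, this yields $\mathcal{R}\subseteq{\coindleq}$ and in particular \eqref{eq:ch}. The reason the statement is phrased for an \emph{arbitrary} mailbox $M$ and an \emph{arbitrary} set $X$ is precisely to obtain a coinductive hypothesis strong enough to make $\mathcal{R}$ closed: clause~\ref{pt:coind-continuations-mu} forces the left component to be the full set of weak $\mu$-derivatives $\WD(\mu,\cdot)$ computed in $\liftFW{\genlts}$, and the forwarding rules \stminplift and \stmoutlift let both mailboxes grow and shrink, so the parameters $M$ and $X$ must absorb these uncontrolled derivatives and the varying mailbox content. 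Throughout I would work modulo the equivalence $\doteq$ of \rdef{fw-eq}, which via the maps $\strip{\cdot}$ and $\outputmultiset{\cdot}$ lets me absorb process outputs into the mailbox (e.g. $(\co{\aa}\Par\co{b})\triangleright M \doteq \Nil\triangleright(M\uplus\mset{\co{\aa},\co{b}})$), keeping the number of distinct \emph{shapes} of pairs finite.

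Concretely, $\mathcal{R}$ would collect, for all $M\in\MO$ and all sets $X$ of states of $\liftFW{\genlts}$: the initial shape $(\{\tau.(\co{\aa}\Par\co{b})\extc\tau.(\co{\aa}\Par\co{c})\triangleright M\}\cup X,\;\co{\aa}\Par(\tau.\co{b}\extc\tau.\co{c})\triangleright M)$; the shapes obtained after the right-hand side commits a branch via $\sta{\tau}$, namely with right component $(\co{\aa}\Par\co{b})\triangleright M$ or $(\co{\aa}\Par\co{c})\triangleright M$; and the shape obtained after the hoisted output $\co{\aa}$ has been emitted, with right component $(\tau.\co{b}\extc\tau.\co{c})\triangleright M$ and left set containing $\co{b}\triangleright M$ and $\co{c}\triangleright M$. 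I would close this family under the forwarding moves (adding, resp. removing, one common message $\co{d}$ from every mailbox) and, crucially, under augmenting the left set by arbitrary further states — this last closure is exactly what the quantifier on $X$ provides.

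For the verification I would check the four clauses of \rdef{coinductive-char-main} in turn. Clause~1 is essentially vacuous on the right: the right-hand processes have no input actions, so their only $\tau$-transitions arise from committing one of the two $\tau$-guarded branches, whence every reachable right state is $\tau$-finite and hence converges; the premise $X\conv$ can only hold when all left elements converge, and then the right converges too. Clause~\ref{pt:coind-acceptance-sets} reduces, via the $\obaFW$ property that every state offers all of $\Names$ as inputs (the content of \rlem{conditions-on-accsets-logically-equivalent}), to comparing output sets only; when the right state is stable, e.g. $(\co{\aa}\Par\co{b})\triangleright M$, the left reaches the matching stable state by the single step $\tau.(\co{\aa}\Par\co{b})\extc\tau.(\co{\aa}\Par\co{c})\triangleright M \sta{\tau} (\co{\aa}\Par\co{b})\triangleright M$, with identical outputs. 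Clauses~\ref{pt:coind-tau-serverB} and~\ref{pt:coind-continuations-mu} carry the real content: the point — exactly the phenomenon illustrated in \rexa{must-set-is-helpful} and the nondeterminism example following it — is that keeping a \emph{set} on the left delays the choice between the $b$- and $c$-branch, so that whichever branch the right side commits to (by $\tau$ in clause~\ref{pt:coind-tau-serverB}, or after the $\co{\aa}$ output in clause~\ref{pt:coind-continuations-mu}) there is a matching member in the left weak-derivative set, and the residual pair again lies in $\mathcal{R}$.

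The main obstacle I anticipate is not any single clause but organising $\mathcal{R}$ so that it is genuinely closed under \emph{all} the forwarding and set-LTS transitions. Forwarding makes every state infinitely branching on inputs and lets mailboxes grow without bound, and clause~\ref{pt:coind-continuations-mu} compels the left component to be the entire set $\WD(\mu,\cdot)$ of weak derivatives, which may contain ``stale'' states — for instance $s\triangleright M'$ arising when $\co{\aa}$ already occurs in $M$ and is emitted from the mailbox rather than from the process. Showing that each such state can be absorbed into the arbitrary-$X$ slot of an existing shape, and that every shape stays inside $\mathcal{R}$ under every transition, is the bookkeeping that dominates the argument; normalising by $\doteq$ (stripping outputs into the mailbox) is the device that keeps this bookkeeping finite and is therefore the technical preparation I would set up first. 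This is the concrete instance of the general principle stated just before \rlem{technical-part-proof-by-coinduction}: choose $\mathcal{R}$ closed under the coinductive operations, after which instantiating \eqref{eq:ch} at $M=\varnothing$ and $X=\varnothing$ gives \eqref{eq:use-case-coind-preorder}, and hence the hoisting refinement \eqref{eq:mailbox-hoisting} via \rthm{coinductive-char-equiv-main}.
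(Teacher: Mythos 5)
Your proof skeleton is the paper's own: the paper also proves \eqref{eq:ch} by coinduction on exactly this generalised statement, using the two ideas you identify --- the universal quantification over $M$ absorbs the forwarding transitions of the mailbox, and the set on the left delays the choice between the $b$- and $c$-branches (and, as the paper notes, the ``closed relation'' and ``coinductive hypothesis'' formulations are interchangeable). The gap is your claim that the three families of shapes you list, closed under forwarding and under enlarging the left set, form a coinduction-closed relation. They do not. Take your third family, with right component $(\tau.\co{b} \extc \tau.\co{c}) \triangleright M$ and left set $X'$ containing $\co{b}\triangleright M$ and $\co{c}\triangleright M$. Clause~(\ref{pt:coind-tau-serverB}) of \rdef{coinductive-char-main}, applied to the commit $(\tau.\co{b} \extc \tau.\co{c}) \triangleright M \sta{\tau} \co{b}\triangleright M$, forces the pair $(X',\, \co{b}\triangleright M)$ into $\mathcal{R}$, and clause~(\ref{pt:coind-continuations-mu}), applied to the subsequent output $\co{b}$, forces pairs of the form $(\ldots,\, \Nil\triangleright M)$. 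These are not instances of your families: the left component is \emph{forced} to be the weak derivative set, which no longer contains the sum $\tau.(\co{a}\Par\co{b}) \extc \tau.(\co{a}\Par\co{c})$, and your two closure operations never change the right-hand process shape, only mailboxes and the left set. So the cascade of residual shapes is not ``bookkeeping'' that your $\mathcal{R}$ already handles; the relation you wrote down is simply not closed, and closing it means adding essentially every pair in which the right state belongs (up to $\doteq$) to the left set, together with all of their residuals.

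This cascade is precisely what the paper's proof short-circuits, and this is the one substantive device your proposal is missing. After the process transition $\co{a} \Par (\tau.\co{b} \extc \tau.\co{c}) \triangleright M \sta{\co{a}} (\tau.\co{b} \extc \tau.\co{c}) \triangleright M$, the paper does not keep the residual inside the coinduction: it observes that the forced left derivative set contains $\co{b}$ and $\co{c}$, and concludes by the general fact $\{p,q\} \testleqSset \tau.p \extc \tau.q$, transported to $\coindleq$ through the already-established equivalence of \rthm{coinductive-char-equiv-main} (together with monotonicity of $\testleqSset$ in the left set). Consequently the paper's coinduction needs only the single initial shape, and the other clauses are routine. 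To repair your argument, either invoke the same external fact at this point, or first prove the ``membership'' lemma that subsumes all the extra shapes --- if $q$ belongs to $X$ up to $\doteq$ then $X \coindleq q$ --- which is immediate for $\testleqSset$ but requires its own coinduction (or the equivalence theorem) if proved natively for $\coindleq$.
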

\begin{proof}[sketch]
Let $\mailbox{M}$ be a mailbox and $X$ be a set of processes.
We focus on the last point of \rdef{coinductive-char-main}, for the other cases
are similar.
Consider an arbitrary reduction of the RHS of \req{ch}:
  \begin{equation}
    \label{eq:hoisting-red}
  \co{\aa} \Par (\tau.\co{b} \extc \tau.\co{c}) \triangleright \mailbox{M}
  \;\st{\mu}\; \serverB
  \end{equation}
We proceed by case analysis.
  \begin{enumerate}
\item%
The cases that correspond to a reduction of the mailbox $\mailbox{M}$
are dealt with directly using the coinductive hypothesis \eqref{eq:ch},
since the mailbox is quantified universally in \eqref{eq:ch}.
In more detail, consider the case where the reduction \eqref{eq:hoisting-red} is of the form:
\begin{equation*}
\co{\aa} \Par (\tau.\co{b} \extc \tau.\co{c}) \triangleright \mailbox{M}
\;\st{b}\; \co{\aa} \Par (\tau.\co{b} \extc \tau.\co{c}) \triangleright
  \mailbox{\{ \co{b} \} \uplus M}
\end{equation*}
      Let $X'$ be $\WD(b, \{ \tau.(\co{\aa} \Par \co{b}) \extc \tau.(\co{\aa} \Par
      \co{c}) \triangleright \mailbox{M} \} \uplus X)$, that is, the unique set of processes such that
\[
  \{ \tau.(\co{\aa} \Par \co{b}) \extc \tau.(\co{\aa} \Par \co{c}) \triangleright \mailbox{M} \} \cup X
      \wt{b} X'
\]
It is easy to prove that 
\(
  \{ \tau.(\co{\aa} \Par \co{b}) \extc \tau.(\co{\aa} \Par \co{c})
      \triangleright \{ \co{b} \} \uplus \mailbox{M} \} \;\in\; X'
\)
and thus we conclude by applying the coinductive hypothesis~\eqref{eq:ch}.
\item
If \eqref{eq:hoisting-red} corresponds to a transition of the process,
it must be that $\mu = \co{a}$ and $q = \tau.\co{b} \extc
\tau.\co{c} \triangleright \mailbox{M}$.
In that case, the set $X'$ of processes reached from
$\{ \tau.(\co{\aa} \Par \co{b}) \extc \tau.(\co{\aa} \Par \co{c})
\triangleright \mailbox{M} \} \uplus X$ while outputting $a$
contains $\co{b}$ and $\co{c}$, so that $X' \coindleq q$ follows from
the general fact that $\{p, q\} \testleqSset \tau. p + \tau. q$. \qed

\end{enumerate}
\end{proof}

\section{Traces in normal form}
\label{sec:normal-forms}

Let $\MI$ be the set of multisets of input actions, and let $\chopSym : \Actfin \longrightarrow ( \MI \times \MO )^\star$ %(\coqNorm{normalize})
be the function % such that
$$
\chop{ \trace } = (\I_0, M_0),( \I_1, M_2), \ldots , (\I_n,M_n)
$$
which is defined inductively in \rfig{nf-trace-def}.  The intuition is
that given a trace $\trace$, the function $\chopSym$ forgets the order
of actions in sequences of consecutive inputs, and in sequences of
consecutive outputs, thereby transforming them in multisets. On the
other hand $\chopSym$ preserves the causal order in the trace, in the
sense that the execution of all inputs in $\I_k$ is necessary to execute any of the outputs in $M_k$,
as well as the actions in the multisets at index $h$ with $k < h$.
% among these sequences
% \ilacom{which ``sequences'' are we talking about here? Do we want to
%   say that in every pair $(\I_k, M_k)$, the inputs in $\I_k$ cause the
%   outputs in $M_k$? Or that the actions in both multisets of a pair
%   $(\I_k, M_k)$ cause the actions in both multisets of all pairs
%   $(\I_h, M_h)$ such that $k < h$? We should be more explicit here.},
% \gb{The execution of all the inputs  $\I_k$ is necessary to execute any of the outputs in $M_k$,
% and the actions in the multisets at index $h$ with $k < h$. Is this clearer ?}
For instance:
$$
\chop{ c\aa\co{bdd}\aa\co{ef}e} =
(\mset{c, \aa}, \mset{\co{\ab},\co{d},\co{d}}),
(\mset{\aa}, \mset{\co{e},\co{f}}),
(\mset{e},\varnothing)
$$

\begin{figure}[t]
  \hrulefill
$$
\begin{array}{rcl}
  \chopSym(\varepsilon) & = & \varepsilon \\[3pt]
  \chopSym(s) & = & \chopSym'(s, \varnothing, \varnothing) \\[10pt]

  \chopSym'(\varepsilon, I, M) & = & (I,M) \\[3pt]
  %\chopSym'(\co{\aa}.b.s, I, M) & = & (I, \mset{\ila{\co{\aa}}} \uplus M), \chopSym'(s, \mset{b}, \varnothing)\\[3pt]
  \chopSym'(\aa.s, I, M) & = & \chopSym'(s, \mset{\aa} \uplus I, M)\\[3pt]
  \chopSym'(\co{\aa}.s, I, M) & = & %\chopSym'(s, I,
  % \mset{\ila{\co{\aa}}} \uplus M)
    \begin{cases}
                                          (I, \mset{\co{\aa}} \uplus M), \chopSym'(s, I,
    \varnothing)& \text{if $s = b.s'$ for some $b, s'$} \\[3pt]
     \chopSym'(s, I,
    \mset{\co{\aa}} \uplus M)& \text{otherwise} 
\end{cases}                                
\end{array}
$$

% \ilacom{there were typos in both clauses for the case where the
%   sequence starts with an output. However, both clauses can be applied
% to a
%   sequence $\co{\aa}.b.s$. To solve this ambiguity we should replace the two clauses
%   with the unique one:}
%   $$
%       \chopSym'(\co{\aa}.s, I, M) = 
% \begin{cases}
%                                           (I, \mset{\ila{\co{\aa}}} \uplus M), \chopSym'(s, I,
%     \varnothing)& \text{if $s = b.s'$ for some $b, s'$} \\[3pt]
%      \chopSym'(s, I,
%     \mset{\ila{\co{\aa}}} \uplus M)& \text{otherwise} 
% \end{cases}
% $$

  \caption{Definition of the trace normalisation function $\chopSym$}
  \label{fig:nf-trace-def}
  \hrulefill
\end{figure}

Let $\sigma$ range over the set $( \MI \times \MO )^\star$. We
say that $\sigma$ is a trace in {\em normal form}, and we write
$ p \wt{ \sigma } q$, whenever there exists $\trace \in \Actfin$
such that $ p \wt{ \trace } q$ and $\chop{\trace} = \sigma $.

\begin{definition}%[Multiset-based behavioural preorder]% \coqMT{asynch_pre} ]
  \label{def:asyn-leq}%
We lift in the obvious way the predicates $\bhvleqone, \bhvleqtwo, $
and $\asleqAfw$ to traces in normal forms.
For every $\genlts_\StatesA, \genlts_\StatesB$ and $\serverA \in \StatesA,
  \serverB \in \StatesB$, let
\begin{itemize}
  \item 
    $\serverA \asynleqone \serverB$
    to mean 
  $\forall \sigma \in (\MI \times \MO)^\star \wehavethat \serverA \cnvalong{ \sigma }$
    implies $\serverB \cnvalong{ \sigma }$,
    \item 
      $\serverA \asynleqtwo \serverB$ to mean
      $\forall \sigma \in (\MI \times \MO)^\star \wehavethat \serverA \acnvalong{\sigma}$
      implies $\accht{ \serverA }{ \sigma } \ll \accht{ \serverB }{ \sigma }$,
  \item $\serverA \asleqNF \serverB$ whenever
    $\serverB \asynleqone \serverA  \wedge  \serverA \asynleqtwo \serverB$.
    $\hfill\blacksquare$
  \end{itemize}
\end{definition}

In LTSs with forwarding, %\ie $\genlts \in \obaFW$,
the transition relation $\st{}$ is {\em input-receptive}
(Axiom (IB4), Table 2 of \cite{DBLP:conf/concur/Selinger97}),
and thus \rlem{weak-a-swap} shows that $\st{}$ enjoys a restricted version
of \restrictedinputcommutativity, and that so does its weak version.
Sequences of input actions $s \in \Names^\star$ enjoy a form of
diamond property in $\wt{}$. The crucial fact pertains consecutive input actions. Recall that $\simeq$ denotes any equivalence that satisfies the compatibility property depicted in \rfig{Axiom-LtsEq}.

\begin{lemma}%[\coqNorm{weak_a_swap_input_input}]
  \label{lem:weak-a-swap}
  For every $\genlts = \lts{A}{\Act}{\st{}} \in \obaFW,$ every $\serverA, \serverB \in \States$ and $\aa, \ab \in \Names$,
  if $p \wt{\aa.\ab} q$ then $p \wt{\ab.\aa} \cdot \simeq q$.
\end{lemma}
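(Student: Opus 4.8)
The plan is to reduce the statement to two single-action ``pull an input to the left'' lemmas and then assemble them along the weak transition $p \wt{\aa.\ab} q$, carrying the equivalence $\simeq$ through at every step. Recall that $p \wt{\aa.\ab} q$ unfolds to $p \wt{\varepsilon} p_0 \st{\aa} p_1 \wt{\varepsilon} p_2 \st{\ab} p_3 \wt{\varepsilon} q$. The goal is to manufacture an analogous run in which $\ab$ is performed before $\aa$, ending in a state $\simeq q$. The two engines driving the argument are: (1)~a lemma pulling an input before a single $\tau$; and (2)~a lemma swapping two adjacent inputs. Both are proved purely from the $\obaFW$ axioms, and their iteration lets me move $\ab$ leftward past the $\tau$-segment $p_1 \wt{\varepsilon} p_2$ and then past $\aa$ itself.

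First I would prove the $\tau$-pull lemma: if $p \st{\tau} p' \st{\ab} p''$ then $p \st{\ab} \cdot \wt{\varepsilon} \cdot \simeq p''$. The derivation has three moves. By \boom (input-receptivity) applied to $p$ and $\ab$ there is $r$ with $p \st{\ab} r$ and $r \st{\co{\ab}} p$. Applying \outputcommutativity to $r \st{\co{\ab}} p \st{\tau} p'$ postpones the output past the $\tau$, yielding $x$ with $r \st{\tau} x$ and $x \st{\co{\ab}} p'$. Finally \fwdfeedback applied to $x \st{\co{\ab}} p' \st{\ab} p''$ gives $x \st{\tau} p''$ or $x = p''$. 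Chaining, $p \st{\ab} r \st{\tau} x \wt{\varepsilon} p''$, which is the claimed run. Iterating this over the whole segment $p_1 \wt{\varepsilon} p_2 \st{\ab} p_3$ produces $p_1 \st{\ab} \cdot \wt{\varepsilon} \cdot \simeq p_3$, so the original run becomes $p \wt{\varepsilon} p_0 \st{\aa} p_1 \st{\ab} t \wt{\varepsilon} q'$ with $q' \simeq q$ and the $\ab$-step now adjacent to the $\aa$-step.

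Next I would prove the adjacent-input swap: if $p_0 \st{\aa} p_1 \st{\ab} t$ then $p_0 \wt{\ab.\aa} \cdot \simeq t$. Again \boom gives $p_0 \st{\ab} s$ with $s \st{\co{\ab}} p_0$; \outputcommutativity on $s \st{\co{\ab}} p_0 \st{\aa} p_1$ gives $s \st{\aa} s'$ with $s' \st{\co{\ab}} p_1$; and \fwdfeedback on $s' \st{\co{\ab}} p_1 \st{\ab} t$ gives $s' \st{\tau} t$ or $s' = t$. Hence $p_0 \st{\ab} s \st{\aa} s' \wt{\varepsilon} t$, i.e. $p_0 \wt{\ab.\aa} t$. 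Both sub-lemmas work uniformly whether or not $\aa = \ab$. Splicing the swap in front of the leading $\tau$-segment and in front of the trailing $\wt{\varepsilon} q'$, and absorbing the $\tau$'s into $\wt{\ab.\aa}$, yields $p \wt{\ab.\aa} q''$ with $q'' \simeq q$, as required.

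The main obstacle is the bookkeeping with $\simeq$ rather than the combinatorics. In the \texttt{LtsEq} setting the forwarder axioms hold only up to $\simeq$, so each use of \boom, \outputcommutativity and \fwdfeedback determines its target only up to $\simeq$; every such approximation must be transported along subsequent transitions using the compatibility of $\simeq$ with $\st{}$ (\rfig{Axiom-LtsEq}), and the various $\simeq$-gaps coalesced into the single final $\simeq q$. The cleanest way to keep this under control is to state both sub-lemmas already in the ``up to $\simeq$'' form and to discharge the assembly by a straightforward induction on the lengths of the two $\tau$-segments, so that the $\simeq$ introduced at one stage is immediately consumed by the compatibility lemma at the next.
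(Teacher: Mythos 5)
Your proof is correct. The paper states this lemma in \rapp{normal-forms} without a pen-and-paper proof (it is discharged only in the Coq development), but your argument --- \boom to perform~$\ab$ first, \outputcommutativity to push the resulting pending output~$\co{\ab}$ past the $\tau$-segment and past~$\aa$, then \fwdfeedback to cancel it against the original $\ab$-transition, the whole assembled by induction on the $\tau$-segments while transporting $\simeq$ through the compatibility axiom of \rfig{Axiom-LtsEq} --- is exactly the route the paper alludes to when it remarks that the lemma expresses a restricted form of input-commutativity derived from input-receptivity of \obaFW\ LTSs. Your observation that \outputcommutativity carries no side condition on~$\alpha$ (unlike \outputconfluence), so that the case $\aa = \ab$ needs no separate treatment, is also the correct justification for uniformity.
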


\rlem{weak-a-swap}, together with an induction on traces, allows
us to prove that $\chopSym$ preserves convergence and acceptance sets.

\begin{lemma}
  \label{lem:normalisation-preserves-predicates}
  For every $\genlts = \lts{A}{\Act}{\st{}} \in \obaFW$, every $\server \in \States$ and $\trace \in \Actfin$ we have that
  \begin{enumerate}
  \item %(\coqNorm{normalize_wta})
    $ p \wt{ \trace } q$ iff $ p \wt{ \chop{\trace} } \cdot \simeq q$,
    and if the first trace does not pass through a successful state then the normal form does not either,
  \item %(\coqNorm{normalize_acnv})
    $ p \acnvalong{\chop{\trace} }$ iff $ p \acnvalong{ \trace }$,
  \item %(\coqNorm{normalize_accs})
    $ \accht{p}{ \trace } = \accht{p}{ \chop{\trace} }$.
%  \item if $p \wt{\trace}$ then $  \acc{p}{ \mathsf{chop}(\trace) } \subseteq \acc{p}{ \trace } $.
  \end{enumerate}
\end{lemma}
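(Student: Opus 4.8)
The plan is to reduce all three claims to a single operational fact: any two traces with the same normal form are inter-derivable, up to $\simeq$, by a finite sequence of adjacent swaps of like-typed visible actions, and each such swap preserves the set of reachable states (up to $\simeq$), convergence, stability, output-sets, and good-freeness of the computation.

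First I would establish two elementary weak-swap lemmas. For inputs this is exactly \rlem{weak-a-swap}: $p \wt{\aa.\ab} q$ implies $p \wt{\ab.\aa}\cdot\simeq q$. For outputs I would prove the analogue $p \wt{\co{\aa}.\co{\ab}} q$ implies $p \wt{\co{\ab}.\co{\aa}}\cdot\simeq q$ by repeatedly applying the \outputcommutativity axiom to push the first output past every intervening $\tau$-transition and past the second output; since \outputcommutativity keeps the target state fixed, this yields the same $q$. I would then observe that, by construction of $\chopSym$, $\chop{\traceA}=\chop{\traceB}$ holds exactly when $\traceB$ is obtained from $\traceA$ by permuting the actions inside each maximal block of consecutive inputs and inside each maximal block of consecutive outputs, i.e. by a finite composition of the adjacent transpositions above. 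An induction on the number of transpositions, using the compatibility of $\simeq$ with transitions (\rfig{Axiom-LtsEq}) to chain the intermediate $\simeq$-steps, then gives the reachability half of claim (1): $p\wt{\trace}q$ iff $p\wt{\chop{\trace}}\cdot\simeq q$.

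For the good-freeness clause of (1), I would track the predicate $\good$ through each elementary swap. Outputs are the easy case: when \outputcommutativity pushes an output $\co{\aa}$ past a following $\tau$- or output-transition, every newly visited intermediate state $p''$ satisfies $p''\st{\co{\aa}}q$ for some state $q$ occurring later in the original computation, so $\good{p''}\iff\good{q}$ by \req{good-invariance} (equivalently \rlem{good-preserved-by-lts-output-iff}); in a good-free computation $q$ is non-good, hence so is $p''$. The input swaps are governed by the input-receptivity and restricted \restrictedinputcommutativity underlying \rlem{weak-a-swap}, and I would propagate good-freeness alongside that construction, again converting statements about inputs into statements about the complementary outputs made available by the \boom axiom. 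Carrying out this $\tau$-interleaving bookkeeping is where most of the effort lies, and I expect it to be the main obstacle.

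Claims (2) and (3) I would then derive from (1). For (3), $\accht{p}{\trace}=\{O(p')\mid p\wt{\trace}p'\stable\}$, and (1) identifies the reachable sets for $\trace$ and $\chop{\trace}$ up to $\simeq$; since $\simeq$ preserves stability (a $\tau$-move transfers across $\simeq$ by compatibility) and preserves $O(\cdot)$, the two acceptance sets coincide. For (2), using \rlem{cnvalong-iff-prefix} I would characterise $\acnvalong$ as convergence of $p$ together with convergence of every state reached along a prefix; since $\simeq$ preserves $\conv$ (an infinite $\tau$-sequence transfers across $\simeq$) and, by (1), the states reachable along matching prefixes of $\trace$ and $\chop{\trace}$ agree up to $\simeq$, an induction on $\len{\trace}$ together with \rlem{acnv-aux} yields $p\acnvalong\trace$ iff $p\acnvalong\chop{\trace}$. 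Once the swap machinery and the $\simeq$-invariance facts are in place, (2) and (3) are routine.
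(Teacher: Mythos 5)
Your overall route is the same as the paper's (which only sketches this proof): \rlem{weak-a-swap} for adjacent inputs, \outputcommutativity for adjacent outputs, the observation that two traces have the same normal form exactly when they differ by permutations inside maximal runs of consecutive inputs and of consecutive outputs, and an induction on the number of transpositions. Your output-swap lemma (which in fact yields the same target state, with no need for $\simeq$), your goodness bookkeeping via \req{good-invariance}, and your derivation of part (3) --- using that $\simeq$ preserves stability and output sets by \rfig{Axiom-LtsEq} --- are all sound.

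The genuine gap is in your derivation of part (2) from part (1). You justify it by the claim that ``the states reachable along matching prefixes of $s$ and $\chop{s}$ agree up to $\simeq$'', and this is false: a prefix of the reordered trace can cut a block at a point that corresponds to no prefix of $s$, and its derivatives are then not $\simeq$-related to derivatives of $s$ along any prefix. Already for $s=\aa.\ab$ versus its reordering $\ab.\aa$, part (1) relates only the states reached after the \emph{complete} traces; the $\ab$-derivatives of $p$ are not related to its $\aa$-derivatives (in $\liftFW{\genlts}$ they carry different mailboxes), so convergence of the $\ab$-derivatives cannot be obtained from $p \cnvalong \aa.\ab$ by any appeal to part (1) plus $\simeq$-invariance of $\conv$. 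Establishing convergence of these new intermediate states is the actual content of part (2), and it needs its own argument. For adjacent inputs one can argue: from $p \cnvalong \aa.\ab.s'$ and $p \wt{\ab} q$, use the \boom axiom to get $q \st{\aa} q'$ with $q' \st{\co{\aa}} q$; then $p \wt{\ab.\aa} q'$, so by \rlem{weak-a-swap} $q'$ is $\simeq$-related to an $\aa.\ab$-derivative of $p$ and hence converges, while any infinite $\tau$-sequence from $q$ would lift, by iterating \outputcommutativity along $q' \st{\co{\aa}} q \st{\tau} \cdots$, to an infinite $\tau$-sequence from $q'$ --- a contradiction; adjacent outputs require the same backwards propagation of divergence across an output transition. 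Without this step (or an equivalent completion argument) your induction for part (2) does not close, whereas the rest of your plan goes through.
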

\noindent

We thereby obtain two other characterisations of the contextual preorder~$\testleqS$:
\rthm{testleqS-equals-bhvleq} and \rlem{normalisation-preserves-predicates} ensure that the
preorders $\testleqS$ and $\asleqNF$ coincide.
\begin{corollary}%[\coqMT{asyn_iff_bhv}]
  \label{cor:asynleq-equals-bhvleq}
  For every $\genlts_\StatesA, \genlts_\StatesB \in \obaFB$,
  every $\serverA \in \StatesA$ and $\serverB \in \StatesB$,
  \[
    \serverA \testleqS \serverB \;\;\;\text{iff}\;\;\;
  \liftFW{\serverA} \asleqNF \liftFW{\serverB}.
  \]
\end{corollary}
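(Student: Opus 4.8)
The plan is to obtain the result as an immediate corollary of the main characterisation \rthm{testleqS-equals-bhvleq} together with the preservation properties of the normalisation function $\chopSym$ recorded in \rlem{normalisation-preserves-predicates}. First I would invoke \rthm{testleqS-equals-bhvleq} to reduce the goal: since $\serverA \testleqS \serverB$ holds iff $\liftFW{\serverA} \asleq \liftFW{\serverB}$, it suffices to prove that, over LTSs with forwarding, the normal-form preorder $\asleqNF$ coincides with $\asleq$, i.e. $\liftFW{\serverA} \asleq \liftFW{\serverB}$ iff $\liftFW{\serverA} \asleqNF \liftFW{\serverB}$. By \rlem{liftFW-works} both $\liftFW{\genlts_\StatesA}$ and $\liftFW{\genlts_\StatesB}$ lie in $\obaFW$, so the hypotheses of \rlem{normalisation-preserves-predicates} are met and I may apply it freely to these two lifted systems.

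Next I would unfold the definitions of $\asleq$ (Definition \ref{def:standard-char}) and of $\asleqNF$ (Definition \ref{def:asyn-leq}) and match them conjunct by conjunct. Both preorders are conjunctions of a convergence condition and an acceptance-set condition, the only difference being that $\asleq$ quantifies over finite traces $\trace \in \Actfin$ whereas $\asleqNF$ quantifies over normal forms $\sigma \in (\MI \times \MO)^\star$. For the convergence conjunct ($\bhvleqone$ versus $\asynleqone$), \rlem{normalisation-preserves-predicates}(2) supplies $\serverA \acnvalong \chop{\trace}$ iff $\serverA \acnvalong \trace$ for every trace $\trace$; for the acceptance-set conjunct ($\bhvleqtwo$ versus $\asynleqtwo$), \rlem{normalisation-preserves-predicates}(3) supplies the equality $\accht{\serverA}{\trace} = \accht{\serverA}{\chop{\trace}}$, so the $\ll$ comparisons along $\trace$ and along $\chop{\trace}$ are literally identical. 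Thus each trace-indexed clause of $\asleq$ and the corresponding $\chop{\trace}$-indexed clause of $\asleqNF$ express the same requirement, and the semantic content of the two preorders agrees clause by clause.

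The one point needing care is the change of quantification domain, and this I expect to be the main obstacle — though a bookkeeping rather than a semantic one. I would argue that $\chopSym$ is surjective onto the normal forms that actually label weak transitions: every $\sigma$ with $p \wt{\sigma}$ arises as $\chop{\trace}$ for some $\trace$ with $p \wt{\trace}$, by the very definition of $p \wt{\sigma}$, and conversely every trace $\trace$ yields the normal form $\chop{\trace}$. Hence quantifying over all traces $\trace$ and reindexing each requirement by $\chop{\trace}$ covers exactly the normal forms relevant to $\asleqNF$, with no clause gained or lost in either direction. Here \rlem{normalisation-preserves-predicates}(1) furnishes the transition-level ingredient, namely that $p \wt{\trace} q$ iff $p \wt{\chop{\trace}} \cdot \simeq q$ and that successfulness of a computation is preserved, which legitimises reindexing weak transitions by their normal forms up to $\simeq$. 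Threading this surjectivity together with the invariance of $\acnvalong$ and of $\accht{-}{-}$ through the nested universal quantifiers is the delicate part; once it is done, the two preorders are seen to have the same defining clauses after traces are collapsed to normal forms, and the corollary follows.
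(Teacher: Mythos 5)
Your proposal is correct and follows exactly the paper's own route: the paper proves this corollary in a single sentence, as a direct combination of \rthm{testleqS-equals-bhvleq} and \rlem{normalisation-preserves-predicates}. The detail you supply --- invoking \rlem{liftFW-works} to meet the $\obaFW$ hypothesis of the normalisation lemma, and matching the convergence and acceptance-set conjuncts of $\asleq$ and $\asleqNF$ clause by clause under the reindexing of traces by $\chopSym$ --- is precisely what the paper leaves implicit.
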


\renewcommand{\mailbox}[1]{\fcolorbox{gray}{yellow!40}{\ensuremath{#1}}}

\section{Asynchronous CCS}
\label{sec:accs}

\begin{figure}[t]
\hrulefill
$$
\begin{array}{l@{\hskip 3pt}ll@{\hskip 3pt}ll@{\hskip 3pt}l}
\rinput
&
  \begin{prooftree}
    \justifies
    \aa.p \st{ \aa } p
  \end{prooftree}
&
  \rtau
&
  \begin{prooftree}
    \justifies
    \tau.p \st{ \tau } p
  \end{prooftree}
  &

\\[2em]

%% %
%%   \mboxinsert
%% &
%%   \begin{prooftree}
%%     \justifies
%%     \send{\aa}.p \st{\tau} p \Par \mailbox{\co{\aa}}
%%   \end{prooftree}
%% &
\mboxelim
&
\begin{prooftree}
  \justifies
  \mailbox{\co{\aa}} \st{\co{\aa}} \Nil
\end{prooftree}

%%\\[2em]

&
\unfold
&
  \begin{prooftree}
    \justifies
    \rec{p} \st{\tau} p \subst{ \rec{p} }{ x }
  \end{prooftree}
&
%% \rname{Omega}
%% &
%%  \begin{prooftree}
%%    \justifies
%%    \Omega \st{\tau} \Omega
%%  \end{prooftree}
%%  &

 \\[2em]

 % &
  \extL
  &
\begin{prooftree}
  p \st{  \alpha } p'
  \justifies
  p \extc q \st{  \alpha } p'
\end{prooftree}

&
\extR
&
\begin{prooftree}
  q \st{ \alpha } q'
  \justifies
  p \extc q \st{ \alpha } q'
\end{prooftree}

\\[2em]
  \parL
  &
\begin{prooftree}
{ \state \st{\alpha} \stateA }
  \justifies
      {\state \Par \stateB \st{\alpha} \stateA \Par \stateB}
\end{prooftree}

&
\parR&
\begin{prooftree}
  \stateB \st{\alpha} \stateB'
  \justifies
      {\state \Par \stateB \st{\alpha} \state \LTSPar \stateB'}
\end{prooftree}

\\[2em]
\com&
\begin{prooftree}
  \state \st{ \mu } \stateA \quad \stateB \st{\co{ \mu }} \stateB'
  \justifies
  \state \LTSPar \stateB \st{\tau} \stateA \LTSPar \stateB'
\end{prooftree}

\end{array}
$$
\caption{The LTS of processes. % (\coqLTS{lts}).
  The meta-variables are
  $\aa \in \Names, \mu \in \Act, \and \alpha \in
  \Acttau$.%\TODO{Check consistency with notation in Preliminaries.}
}  
  %% \ilacom{For uniformity, \mboxelim could be written with an
  %%   uppercase initial for {\sc Mb}. Also, in the absence of an
  %%   internal choice operator, operator, the rules for choice could
  %%   have been named simply {\sc Sum-L} and {\sc Sum-R}.}
\label{fig:rules-LTS}
\hrulefill
\end{figure}

Here we recall the syntax and the LTS of asynchronous \CCS, or \ACCS
for short, a version of \CCS where outputs have no continuation
and sum is restricted to input- and $\tau$-guards.
$$
p,q,r ::= 
\out{\aa}
\BNFsep g
\BNFsep p \Par p
\BNFsep \rec{p}
\BNFsep x,
\qquad
g ::= 
\Nil
\BNFsep \Unit 
\BNFsep a.p
\BNFsep \tau.p
\BNFsep g \extc g
$$

This calculus,
which is inspired by the variant of the asynchronous $\pi$-calculus
considered by~\cite{ACS96,ACS98} for their study of asynchronous
bisimulation, was first investigated by~\cite{DBLP:conf/concur/Selinger97},
and subsequently resumed by other authors such as~\cite{DBLP:journals/iandc/BorealeNP02}.
  Different asynchronous variants of~\CCS were studied in the
  same frame of time by~\cite{pugliesephd}, whose calculus included
  output prefixing and operators from ACP, and
  by~\cite{DBLP:conf/fsttcs/CastellaniH98}, whose calculus TACCS
  included asynchronous output prefixing and featured two forms of
  choice, internal and external, in line with previous work on testing
  semantics~\cite{DBLP:conf/tapsoft/NicolaH87}.
 % The material is standard
% \cite{DBLP:books/daglib/0067019,DBLP:conf/fsttcs/CastellaniH98,DBLP:journals/iandc/BorealeNP02,DBLP:books/daglib/0004377}.
% \ila{Note that it slightly differs from the version of asynchronous
%   CCS used by , called TACCS, which features two forms of choice,
%   internal and external, in line with previous work on testing
%   semantics~\cite{DBLP:conf/tapsoft/NicolaH87}.}

%%%%%%%%%%%%%%%%%%%%%%%%%%%%%%%%%%%%%%%%%%%%%%%%%%%%%%%%%%%%%
%%%%%%%%%%%%%%%%%%%%%%%%%%%%%%%%%%%%%%%%%%%%%%%%%%%%%%%%%%%%%
%%%% NAMES AND ACTIONS ALREADY PRESENTED IN THE PRELIMINARIES
\leaveout{
We assume a countable set $\Names$ of {\em names}, ranged over by $a,b,c,\ldots$, and denote by
$\overline{\Names}$ the set of {\em co-names}, given by
$\overline{\Names} = \setof{ \co{a} }{ a \in \Names }$.
Names and co-names represent input and output actions,
respectively.
We let $\Act = \Names \cup \,\overline{\Names}$ %(\coqLTS{action})
be the set of {\em visible
  actions}, ranged over by $\mu, \nu, \dots$.
We extend the complementation function to the whole set of actions $\Act$ by letting
$\co{\co{a}} = a$ for all $a \in \Names$, and to any $A \subseteq \Act$
by letting $\overline{A} = \setof{ \co{\mu} }{ \mu \in A }$.  Two
actions $\mu, \nu\in\Act$ such that $\co{\mu} = \nu$ are called
complementary.
Note the slight abuse of notation: $\co{a}, \co{b}, \ldots$ denote output actions, also called {\em
  atoms} in our language, while~$\co{ \mu }$ denotes the complementary action of $\mu$, and so it can be an input, for
instance~$\co{ \co{ a }} = a$. We
let $\trace, \traceA, \dots$ range over $\Actfin$, the set of finite sequences of
visible actions.

To represent internal computation we use the symbol
$\tau$, where $\tau\notin\Act$.
%The action~$\tau$, called the {\em invisible action}, has no
%complementary action.
The action~$\tau$ is said to be {\em invisible} and has no
complementary action.
We let
$\Acttau = \Act \,\cup \set{ \tau }$ denote the set of all actions,
ranged over by $\alpha, \beta, \ldots$.
}
%%%%%%%%%%%%%%%%%%%%%%%%%%%%%%%%%%%%%%%%%%%%%%%%%%%%%%%%%%%%%
%%%%%%%%%%%%%%%%%%%%%%%%%%%%%%%%%%%%%%%%%%%%%%%%%%%%%%%%%%%%%

The syntax of terms is given in \req{syntax-processes}.
As usual,~$\rec{p}$ binds the variable~$x$~in~$p$, and we use
standard notions of free variables, open and closed terms.
Processes, ranged over by $p, q, r, \dots$ are {\em closed} terms.
%, and their set is denoted by $\ACCS$.
The operational semantics of processes is given by the LTS
$\lts{\ACCS}{\Acttau}{\st{}}$ specified by the rules in \rfig{rules-LTS}.
%We write $\state \st{\alpha} \stateA$ to mean $(\state,\alpha,\stateA) \in {\st{}}$.

The prefix $ \aa.p $ represents a {\em blocked} process, which waits
to perform the
input $\aa$, \ie to interact with the atom $\co{ \aa}$, and then becomes $p$; and atoms
$\mailbox{\co{ \aa }}, \mailbox{\co{ \ab }}, \ldots$ represent output
messages.  We will discuss in detail the role played by atoms in the
calculus, but we first overview the rest of the syntax. We
include~$\Unit$ to syntactically denote successful states.
%\ilacom{Do we really need to have~$\Unit$? It was obviously not
%  present in~\cite{ACS96,ACS98}, and in this paper we use
%  predicate $\goodSym$ to signal success, so what is the need for $\Unit$?}
The prefix $ \tau.p $ represents a process that does one step of
  internal computation and then becomes~$p$.
  The sum $g_1 \extc g_2$ is a process that can behave as $g_1$ or $g_2$, but not both.
  Thus, for example $ \tau.p \extc \tau.q$ models an \texttt{if \ldots then \ldots else}, while
$ \aa.p \extc \ab.q$ models a \texttt{match \ldots with}.
Note that the sum operator is only defined on \emph{guards}, namely it can only
take as summands $\Nil, \Unit$ 
%% \ilacom{this sounds wrong. If $\Nil,
%%   \Unit$ can be summands, then we don't have guarded sums any more,
%%   so we depart from all the other work on $\ACCS$.}
  or input-prefixed and $\tau$-prefixed
processes.
While the restriction to guarded sums is a standard one,
widely adopted in process calculi,
%to avoid issues such as a distributed choice
the restriction to input and $\tau$ guards is specific to asynchronous
calculi. We will come back to this point after discussing atoms and
mailboxes.
%The process $\tau.p$ models a program that does one step of
%internal computation and becomes~$p$.}
% The sum $p \extc q$ is a process that can behave as $p$ or $q$, but
% not both. Thus for example $ \tau.p \extc \tau.q$ models an
% \texttt{if \ldots then \ldots else}, while $ \aa.p \extc \ab.q$
% models a \texttt{match \ldots with}.
%
%% Terms as $ \aa.p \extc \aa.q $ are convenient to model external
%% non-determinism, rather than as PL constructs. \ilacom{I would take
%%   off this sentence, which is confusing since in fact
%%   $ \aa.p \extc \aa.q $ is treated as \emph{internal nondeterminism}
%%   in the testing semantics, where the following axiom holds:
%%   $ \aa.p \extc \aa.q = \tau.\aa.p \extc \tau.\aa.q = \aa.p \intc \aa.q $.}
Parallel composition $p \Par q$ runs $p$ and $q$ concurrently,
% letting them also interact
allowing them also to interact with each other, thanks to rule \com. For example
\begin{equation}
  \label{eq:example1}
  \ab.\aa.\Nil \Par \ab.c.\Nil \Par \mailbox{ ( \co{\aa} \Par \co{\ab} \Par \co{c} ) }
\end{equation}
represents a system in which two concurrent processes,
namely $\ab.\aa.\Nil$ and $ \ab.c.\Nil $, are both
ready to consume the message $ \co{b} $
from a third process, namely $\mailbox{\co{a} \Par \co{b} \Par \co{c}}$.
This last process is a parallel product of atoms, and it is not
guarded, hence it is best viewed as an unordered mailbox shared by {\em all}
the processes running in parallel with it. For instance in (\ref{eq:example1})
the terms $ b.a.\Nil$ and $b.c.\Nil$ share the mailbox $\mailbox{ \co{a} \Par \co{b}
  \Par \co{c} }$.  Then, depending on which process consumes $\co{b}$,
the overall process will evolve to either
$b.c.\Nil \Par \mailbox{\co{c}} $ or $b.a.\Nil \Par \mailbox{\co{a}}$,
which are both stuck.\footnote{The global shared mailbox that we
    treat is reminiscent but less general than the chemical ``soup''
    of \cite{DBLP:journals/tcs/BerryB92}.  In that context the
    components of the soup are not just atoms, but whole parallel
    components: in fact, the chemical soup allows parallel components
    to come close in order to react with each other, exactly as
      the structural congruence of~\cite{DBLP:conf/icalp/Milner90}, which indeed was
    inspired by the Chemical Abstract Machine.}

%% The global shared mailbox that we treat is akin to the
%% chemical ``soup'' discussed by \cite{DBLP:journals/tcs/BerryB92}.
%% \ila{This is not entirely true since in \cite{DBLP:journals/tcs/BerryB92}
%%   the components of the soup are not just atoms, but whole parallel
%%   components: in fact, the chemical soup allows parallel components to come
%%   close in order to react with each other, exactly as done by the structural
%%   congruence, which was indeed inspired by the CHAM. In conclusion,
%%   I'm not sure the reference to the the CHAM is appropriate here.}
Concerning the sum construct, we follow previous work on
asynchronous calculi
(\cite{ACS96,ACS98,DBLP:books/daglib/0004377,DBLP:journals/iandc/BorealeNP02})
and only allow input-prefixed or $\tau$-prefixed terms as
  summands. % \ilacom{previously it was:
% and guard summands via either input or $\tau$~actions.}
% \footnote{But
                                %can we write terms like $a.\Nil \extc
                                %(b.\Nil \Par c.\Nil)$ ?}
The reason for forbidding atoms in sums is that the
  \nondeterministic sum is essentially a synchronising operator: the
  choice is solved by executing an action in one of the summands and
  \emph{simultaneously} discarding all the other summands. Then, if an
  atom were allowed to be a summand, this atom could be discarded by
  performing an action in another branch of the choice. This would
  mean that a process would have the ability to withdraw a message
  from the mailbox without consuming it, thus contradicting the
  intuition that the mailbox is a shared entity which is out of the
  control of any given process, and with which processes can only
  interact by feeding a message into it or by consuming a message from
  it.  In other words, this restriction on the sum operator
ensures that atoms % (i.e. outputs)
indeed represent messages in a global
mailbox. For further details see the discussion on page 191 of
\cite{DBLP:books/daglib/0004377}.

A structural induction on the syntax ensures that processes perform only a finite
number of outputs:
\begin{lemma}%[\coqLTS{outputs_of_spec}]
  \label{lem:output-sets-finite}
  For every $\state \in \ACCS \wehavethat {\cardinality{O(\state)}} \in \N$.
\end{lemma}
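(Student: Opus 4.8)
The plan is to proceed by structural induction on the term $p$, reading off from the inference rules of \rfig{rules-LTS} exactly which syntactic constructs can be the subject of an output transition $p \st{\co{a}}$. The guiding observation is that the \emph{only} rule whose conclusion bears an output label is \mboxelim; every other rule either produces an input or a $\tau$ (\rinput, \rtau, \unfold, \com), or merely propagates the label of a subderivation (\extL, \extR, \parL, \parR). So outputs can enter $O(p)$ only through atoms sitting at the top level, possibly underneath parallel composition.

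First I would dispatch the base and ``silent'' cases. For an atom $\out{a}$, the only applicable rule is \mboxelim, giving $\out{a} \st{\co{a}} \Nil$ and nothing else, so $O(\out{a}) = \set{\co{a}}$, of cardinality one. For a recursion $\rec{p}$, the only applicable rule is \unfold, which yields the single $\tau$-transition $\rec{p} \st{\tau} p\subst{\rec{p}}{x}$; since this is not an output, $O(\rec{p}) = \emptyset$. (It is important here that $O$ records only \emph{immediate} transitions: $\rec{p}$ does not directly expose the outputs of its unfolding, as those require first performing the $\tau$-step.) A variable $x$ does not occur at the top level of a closed process and has no transition rule, so $O(x) = \emptyset$.

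Next I would treat guards $g$ uniformly, claiming $O(g) = \emptyset$ for every guard. This follows by an inner induction on the structure of $g$: the guards $\Nil$ and $\Unit$ have no transition rule at all, $a.p$ fires only \rinput, $\tau.p$ fires only \rtau, and a sum $g_1 \extc g_2$ fires only \extL or \extR, inheriting the label of a transition of $g_1$ or $g_2$, which by the inner hypothesis is never an output. This is precisely where the restriction of the sum operator to input- and $\tau$-guards pays off, since it forbids an atom from ever appearing as a summand.

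The only genuinely compositional case is parallel composition $p \Par q$. An output transition $p \Par q \st{\co{a}} r$ can be derived only by \parL or \parR, as \com always concludes a $\tau$; hence $O(p \Par q) = O(p) \cup O(q)$. By the outer induction hypothesis both $O(p)$ and $O(q)$ are finite, so their union is finite. This exhausts all cases and completes the induction. The argument is entirely routine; the only point demanding care is the bookkeeping of which rules can conclude an output transition — in particular the recursion case, where one must note that the unfolding is gated behind a $\tau$-step, and the guard case, where the nested induction confirms that the guarded-sum discipline keeps outputs out of $O(g)$.
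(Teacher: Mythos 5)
Your proof is correct and follows exactly the route the paper takes: the paper dispatches this lemma with the single remark that ``a structural induction on the syntax'' suffices, and your case analysis (atoms contribute one output, guards and recursion contribute none thanks to the guarded-sum discipline and the $\tau$-gated unfolding, and parallel composition takes unions) is precisely the elaboration of that induction.
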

%\begin{proof}Structural induction on $p$. \end{proof}
\noindent
Together with \rlem{output-shape}, this means that at any point
of every execution the global mailbox contains a finite number of
messages. Since %the input actions in the LTS are finite-image,
the LTS is image-finite under any visible action,
a consequence of \rlem{output-sets-finite} is that the number of reducts
of a program is finite.

\renewcommand{\stateA}{p'}
\begin{lemma}%[\coqLTS{lts_set_tau},\coqLTS{lts_set_tau_spec}]
  \label{lem:st-finite-image}
  \label{lem:sttau-finite-image}
  For every $\state \in \ACCS \wehavethat {\cardinality{\reducts{
        \state }{\ACCS}{\st}}} \in \N.$
\end{lemma}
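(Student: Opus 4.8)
The plan is to establish the stronger fact that the LTS of \rfig{rules-LTS} is \emph{image-finite under every action}: for every process $\state \in \ACCS$ and every $\alpha \in \Acttau$ the set $\setof{ p' }{ \state \st{\alpha} p' }$ is finite. Specialising this to $\alpha = \tau$ yields exactly $\cardinality{\reducts{\state}{\ACCS}{\st}} \in \N$. The proof goes by structural induction on $\state$; since processes are closed terms the variable case is vacuous, and the two ingredients we invoke along the way are \rlem{output-sets-finite} (finiteness of $O(\state)$) together with its evident analogue for input sets, which jointly give that the ready set $R(\state) = I(\state) \cup O(\state)$ is finite.

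For the atom and prefix cases the claim is immediate: each of $\out{\aa}$, $\Nil$, $\Unit$, $\aa.p$ and $\tau.p$ fires at most one transition, so every $\alpha$-image has at most one element. For a sum $g_1 \extc g_2$, rules \extL and \extR give $\setof{ p' }{ g_1 \extc g_2 \st{\alpha} p' } = \setof{ p' }{ g_1 \st{\alpha} p' } \cup \setof{ p' }{ g_2 \st{\alpha} p' }$, a union of two finite sets by the induction hypothesis. For a recursion $\rec{p}$ the only applicable rule is \unfold, so the unique transition is $\rec{p} \st{\tau} p\subst{\rec{p}}{x}$ and every $\alpha$-image has size at most one; note that it is precisely the $\tau$-guarding of the unfolding that prevents infinite branching here.

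The only delicate case is parallel composition $p \Par q$, and this is where strengthening the statement pays off. Its transitions arise from \parL, \parR and \com. The \parL-reducts form $\setof{ p' \Par q }{ p \st{\alpha} p' }$, finite by the induction hypothesis on $p$ since $p' \mapsto p' \Par q$ does not increase cardinality, and the \parR-reducts are finite symmetrically. When $\alpha = \tau$ we also have the \com-reducts $\setof{ p' \Par q' }{ \exists \mu \in \Act,\; p \st{\mu} p' \and q \st{\co{\mu}} q' }$. Here $\mu$ ranges only over the finite set $\setof{ \mu \in R(p) }{ \co{\mu} \in R(q) }$, which is finite by \rlem{output-sets-finite} and the finiteness of $I(p)$; for each such $\mu$ both $\setof{ p' }{ p \st{\mu} p' }$ and $\setof{ q' }{ q \st{\co{\mu}} q' }$ are finite by the induction hypothesis, so the \com-reducts, being the image under the pairing $(p', q') \mapsto p' \Par q'$ of a finite union of finite products, form a finite set. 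Taking the union of the three finite contributions closes the induction.

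The main obstacle is exactly this \com case. A naïve induction bounding only the $\tau$-images would stall, because the $\tau$-reducts of $p \Par q$ are produced by the \emph{visible} transitions of the two components; this forces us to carry image-finiteness uniformly over all actions in the induction hypothesis, and to know beforehand that the ready set $R(p)$ is finite — supplied by \rlem{output-sets-finite} for outputs and by a routine structural induction for inputs.
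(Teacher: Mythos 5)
Your proof is correct and follows essentially the same route as the paper: structural induction on the process, with parallel composition as the only delicate case, where the synchronisation actions in \com are bounded via \rlem{output-sets-finite}. If anything, your version is tighter than the paper's own proof, which carries only $\tau$-image-finiteness through the induction and silently appeals to image-finiteness under visible actions to bound the reducts $p'_1$ and $p'_2$ in the communication case — precisely the gap that your strengthened induction hypothesis (image-finiteness under every $\alpha \in \Acttau$) closes.
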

\begin{proof}
  Structural induction on~$p$. The only non-trivial case is if $ p = p_1 \Par p_2$.
  In this case the result is a consequence of the inductive hypothesis, of \rlem{output-sets-finite} and
  of the following fact: $ p \st{\tau} q $ iff
  \begin{enumerate}
    \item $p_1 \st{\tau} p'_1 $ and $q = p'_1 \Par p_2$,
    \item $p_2 \st{\tau} p'_2 $ and $q = p_1 \Par p'_2$,
    \item $p_1 \st{ a } p'_1 $ and $ p_2 \st{\co{a}} p'_2 $ and  $q =  p'_1 \Par p'_2$,
    \item $p_1 \st{ \co{a} } p'_1 $ and $ p_2 \st{ a } p'_2 $ and  $q =  p'_1 \Par p'_2$.
  \end{enumerate}
  In the third case the number of possible output actions $\co{a}$ is finite thanks to \rlem{output-sets-finite},
  and so is the number of reducts $p'_1$ and $p'_2$, so the set of term $ p'_1 \Par p'_2 $ is decidable.
  The same argument works for the fourth case.
\end{proof}

Thanks to %\rcor{equiv-preserves-transitions-modulo-equiv},
\rlem{output-sets-finite} and
\rlem{sttau-finite-image}, \rlem{st-finite-image} holds also for the LTS modulo structural congruence,
i.e. \lts{\modulo{\ACCS}{\equiv}}{\modulo{\st{}}{\equiv}}{\Acttau}.
%and indeed when convenient we reason on this LTS.

%% \newcommand{\tc}[1]{#1^+}
%% \newcommand{\inv}[1]{(#1)^{-1}}
%% \renewcommand{\sc}[1]{#1 \cup {} \inv{#1}}
%% \newcommand{\sez}{\ensuremath{\equiv_0}\xspace}

\subsection{Structural equivalence and its properties}
\label{sec:equiv}
\label{sec:structural-congruence}

\begin{figure}[t]
  \hrulefill
  \begin{minted}{coq}
    Class LtsEq (A L : Type) `{Lts A L} := {
      eq_rel : A → A → Prop;

      eq_rel_refl p : eq_rel p p;
      eq_symm p q : eq_rel p q → eq_rel q p;
      eq_trans p q r :
      eq_rel p q → eq_rel q r → eq_rel p r;

      eq_spec p q (α : Act L) :
      (∃ p', (eq_rel p p') ∧ p' ⟶{α} q)
      →
      (∃ q', p ⟶{α} q' ∧ (eq_rel q' q))
    }.
  \end{minted}
  \caption{A typeclass for LTSs where a structural congruence exists over states.}
  \label{fig:LtsEq}
  \hrulefill
\end{figure}

\begin{figure}[t]
$$
\begin{array}{r@{\hskip 3pt}ll}
%%rulename{S-refl} & p \equiv p\
%%\rulename{S-var} & K \equiv K\\
%\rulename{S-success} & 1 \equiv 1\\
%\rulename{S-output} & \co{\mu} \equiv \co{\mu}\\

%\rulename{S-input} & \mu.p \equiv \mu.q  & \mathit{if} p \equiv q\\
  %\rulename{S-tau} & \tau.p \equiv \tau.q  & \mathit{if} p \equiv q\\
%\rulename{S-infsum} & \Sigma_{j \in J} \alpha_j.p_j \equiv \Sigma_{i \in I} \alpha_i.p_i  & \mathit{if\ } \forall j \in J, \exists i \in I, \alpha_j = \alpha_i \mathit{\ and\ } p_j \equiv p_i\\
%& & \mathit{and\ } \forall i \in I, \exists j \in J, \alpha_i = \alpha_j \mathit{\ and\ } p_i \equiv p_j
\\[1em]
\rulename{S-szero} & p \extc \Nil \equiv p \\
\rulename{S-scom} & p \extc q \equiv q \extc p \\
\rulename{S-sass} &  (p \extc q) \extc r \equiv p \extc (q \extc r)
\\[1em]
\rulename{S-pzero} & p \Par \Nil \equiv p \\
\rulename{S-pcom} & p \Par q \equiv q \Par p \\
\rulename{S-pass} &  ( p \Par q ) \Par r \equiv p \Par (q \Par r)
\\[1em]
\rulename{S-refl}& p \equiv p \\
\rulename{S-symm} & p \equiv q & \mathit{if} \ q \equiv p\\
\rulename{S-trans} & p \equiv q & \mathit{if} \ p \equiv p' \ \mathit{and} \ p' \equiv q
\\[1em]
\rulename{S-prefix} & \alpha.p \equiv \alpha.q  & \mathit{if} p \equiv q\\
\rulename{S-sum} & p \extc q \equiv p' \extc q & \mathit{if}  p \equiv p'\\
\rulename{S-ppar} & p \Par q \equiv p' \Par q & \mathit{if}  p \equiv p'
\end{array}
$$
\caption{Rules to define structural congruence on \ACCS.}
\label{fig:equiv}
\hrulefill
\end{figure}

%% STRUCTURAL EQUIVALENCE
To manipulate the syntax of processes we use
a standard structural congruence denoted~$\equiv$,
stating that~$\ACCS$ is a commutative monoid with identity~$\Nil$
with respect to both sum and parallel composition. %(\coqSC{sc_proc}).

A first fact is the following one.
\begin{lemma}%[\coqNorm{weak_swap_output}]
  \label{lem:weak-output-swap}
  For every $\mu \in \co{\Names}$ and $\alpha \in \Acttau$,
  if $p \wt{\mu.\alpha} q$
  then $p \wt{\alpha.\mu} \cdot \equiv q$.
\end{lemma}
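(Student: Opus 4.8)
The plan is to prove this as the weak, congruence-aware version of the \outputcommutativity axiom: since $\mu$ is an output, it can be pushed to the right past every transition that follows it, leaving the rest of the computation unchanged up to $\equiv$. First I would isolate the single-step commutation as the workhorse: if $p \st{\co\aa} p'$ and $p' \st{\beta} q$ for some $\beta \in \Acttau$, then there is a $p''$ with $p \st{\beta} p''$ and $p'' \st{\co\aa} \cdot \equiv q$. This is exactly \outputcommutativity, which holds for $\lts{\modulo{\ACCS}{\equiv}}{}{}$ by \rlem{ACCS-obaFB}. Note that the axiom carries no side condition on $\beta$, so this single step works uniformly whether $\beta$ is a $\tau$, an arbitrary visible action, or even the complementary input $\aa$; in particular \outputfeedback is never needed, which is why the lemma does not require forwarding.

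Next I would unfold $p \wt{\co\aa.\alpha} q$ via rules \wtrefl, \wttau, \wtmu into a concrete path $p \wt{\varepsilon} p_0 \st{\co\aa} p_1 \st{\beta_1} \cdots \st{\beta_n} q$, where the leading $\wt{\varepsilon}$ gathers the silent steps preceding the output and $\beta_1,\dots,\beta_n \in \Acttau$ are exactly the steps after it, their visible projection being $\alpha$ (read as the empty trace when $\alpha = \tau$). The heart of the argument is an induction on $n$ that bubbles the single $\co\aa$-step rightward: applying the single-step commutation to $p_0 \st{\co\aa} p_1 \st{\beta_1} p_2$ rewrites this prefix as $p_0 \st{\beta_1} p_1' \st{\co\aa} \cdot \equiv p_2$, after which I carry the residual $\equiv$ forward along the remaining transitions $\st{\beta_2},\dots,\st{\beta_n}$ using the compatibility of $\equiv$ with $\st{}$ from \rfig{Axiom-LtsEq}, so that the inductive hypothesis applies to the shorter suffix. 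After $n$ steps the output lands last: $p \wt{\varepsilon} p_0 \st{\beta_1} \cdots \st{\beta_n} \st{\co\aa} q'$ with $q' \equiv q$. The leading silent steps are untouched, the steps $\beta_1 \cdots \beta_n$ now precede the output, and their visible projection is still $\alpha$, so this path witnesses $p \wt{\alpha.\co\aa} q'$, which is the conclusion.

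The step I expect to be the main obstacle is the bookkeeping of the structural congruence during the bubbling: because \outputcommutativity holds only up to $\equiv$ in $\modulo{\ACCS}{\equiv}$, each commutation emits a residual $\equiv$ that must be transported across the not-yet-processed transitions before the induction can be reapplied, and it is here that the compatibility axiom of \rfig{Axiom-LtsEq} is essential and where a mechanised proof would require the most care to keep the existential witnesses aligned. Everything else is routine; the degenerate case $\alpha = \tau$, where $\co\aa.\alpha$ and $\alpha.\co\aa$ both denote the trace $\co\aa$, is immediate with $q' = q$.
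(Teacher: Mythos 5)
Your proof is correct, and it is non-circular: the single-step commutation you rely on is exactly the first bullet of \rlem{ACCSmodulos-equiv-is-out-buffered-with-feedback}, whose proof in the paper depends only on \rlem{output-shape} and \rcor{equiv-preserves-transitions}, not on this lemma; you are also right that \outputcommutativity carries no side condition, so the case where the following action is the complementary input $a$ needs no appeal to \outputfeedback. However, your route is not the one the paper intends. The paper leaves the lemma unproved and justifies it in one line, ``the syntax enforces outputs to have no continuation'', which points to a direct syntactic argument via \rlem{output-shape}: unfolding $p \wt{\co{a}.\alpha} q$ as $p \wt{\varepsilon} p_0 \st{\co{a}} p_1 \wt{\alpha} q$, one has $p_0 \equiv p_1 \Par \mailbox{\co{a}}$; the atom then simply rides along in parallel, since each step of $p_1 \wt{\alpha} q$ lifts by rule \parL to a step of $p_1 \Par \co{a} \wt{\alpha} q \Par \co{a}$, after which $q \Par \co{a} \st{\co{a}} q \Par \Nil \equiv q$, and \rlem{st-compatible-with-equiv} transports the whole path from $p_1 \Par \co{a}$ back to $p_0$. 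You instead decompose through the Selinger axiom \outputcommutativity of the quotient LTS and bubble the output rightwards step by step, re-aligning the residual $\equiv$ with the compatibility property of \rfig{Axiom-LtsEq} at each step. The trade-off: your argument is calculus-independent — it proves the statement for \emph{any} LTS in \obaFB equipped with an equivalence satisfying \rfig{Axiom-LtsEq}, in the spirit of the paper's general development — while the paper's implicit argument is shorter and avoids the step-by-step $\equiv$-bookkeeping entirely, because in \ACCS the emitted atom is never disturbed by the rest of the computation; the price is that it only works for a calculus whose syntax makes outputs continuation-free.
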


%%%%%%%%%%%%%%%%%%%%
%% WHY THE GLOBAL MAILBOX IS WORTHWHILE ?
%%%%%%%%%%%%%%%%%%%%

%%%%%%%%%%%%%%%%%%%%%%%%%%%%%%%%%%%%%%%%%%%%%%%%%%%%%%%%%%%%%%%%%%%%%%%%%%%
%%%%%%%%%%%%%%%%%%%%%%%%%%%%%%%%%%%%%%%%%%%%%%%%%%%%%%%%%%%%%%%%%%%%%%%%%%%
%%% MULTISETS. NO LONGER NEEDED HERE (gio: I THINK)
%%%%%%%%%%%%%%%%%%%%%%%%%%%%%%%%%%%%%%%%%%%%%%%%%%%%%%%%%%%%%%%%%%%%%%%%%%%
%%%%%%%%%%%%%%%%%%%%%%%%%%%%%%%%%%%%%%%%%%%%%%%%%%%%%%%%%%%%%%%%%%%%%%%%%%%
%% \paragraph{Mailbox as multiset.} %
%% Let~$\MO$ denote the set of all finite multisets of output actions, for instance
%% $$
%% \varnothing, \mset{ \co{ a } }, \mset{ \co{ a },   \co{ a }  }, \mset{ \co{ a },  \co{ b },  \co{ a },  \co{ b }} \in \MO
%% $$
%% Similarly, let $\MI$ denote the set of all finite multisets of names.
%% %\ilacom{Why are the elements of $\MI$  required to be finite while those of $\MO$ are not?}
%% We let $\I, \J, \ldots$ range over $\MI$ and $M, N, \ldots$ range over $\MO$.
%% The symbol $M$ stands for {\em mailbox}.
%% We denote with $\uplus$ multiset union, and with $\setminus$
%%   multiset subtraction; if $X$ is a set and $Y$ is a multiset, we let
%% $\intersection{X}{Y}$ and $\union{X}{Y}$ be the {\em sets} defined in
%%   the obvious way. We write $\disjoint{X}{Y}$ to mean that $\intersection{X}{Y} = \emptyset$.

%% %%% MAILBOX
%% \paragraph{Global mailbox}
As sum and parallel composition are commutative monoids, we
use the notation
$$
\begin{array}{lll}
  \Sigma \set{g_0, g_1, \ldots g_n} &\text{ to denote }&g_0 \extc g_1 \extc \ldots \extc g_n \\
  \Pi \set{p_0, p_1, \ldots p_n}&\text{ to denote }&p_0 \Par p_1 \Par \ldots \Par p_n
\end{array}
$$
This notation is useful to treat the global shared mailbox.
In particular, if $\mset{ \mu_0, \mu_1, \ldots \mu_n} $
is a multiset of output actions, then the syntax $\Pi  \mset{ \mu_0,
  \mu_1, \ldots \mu_n} $ represents the shared mailbox that contains
the messages $\mu_i$; for instance we have $\Pi \mset{ \co{a}, \co{a}, \co{c}
} = \mailbox{ \co{ a  } \Par \co{ a } \Par \co{c}}$.
We use the colour $\mailbox{ - }$ to highlight the content of the mailbox.
Intuitively a shared mailbox contains the messages
% that have been output, and
that are ready to be read, i.e. the outputs % messages
that are immediately available (i.e. not guarded by any prefix operation).
For example in
$$
\mailbox{\co{c}} \Par a.(\co{b} \Par c.d.\Unit) \Par \mailbox{\co{d}} \Par \tau.\co{e}
$$
the mailbox is $\mailbox{\co{c} \Par \co{d}}$.
%% For example % $ \mailbox{ \co{a} \Par  \co{b} \Par \co{c} } $ is the
%%             % shared mailbox of the process in \req{example1}, while
%% in the following term
%% $$
%% \out{ a } \Par a.\Unit \Par \out{ a } \Par b.\Nil \extc d.\Nil \Par \out{c}
%% $$
%% the shared mailbox is $ \mailbox{ \co{ a  } \Par \co{ a } \Par \co{c} } = \Pi \mset{ \co{a}, \co{a}, \co{c} }$.
The global mailbox that we denote with $\mailbox{ - }$ is exactly the
  buffer $B$ in the {\em configurations} of
  \cite{DBLP:phd/us/Thati03}, and reminiscent of the $\omega$ used by
  \cite{DBLP:conf/fossacs/BravettiLZ21}. The difference is that $\omega$
  represents an unbounded {\em ordered} queue, while our mailbox is
  an unbounded {\em unordered} buffer.

As for the relation between output actions in the LTS and the global
mailbox, an output $\co{\aa}$ can take place if and only if %there is
the message $\co{\aa}$ appears in the mailbox:
\begin{lemma}
  \label{lem:output-shape}
  For every $\state \in \ACCS$,
  \begin{enumerate}
    \item% (\coqLTS{output_shape})
      for every $\aa \in \Names \wehavethat \state \st{
        \co{\aa} } \stateA$ implies %% if and only if
      $\state \equiv  \stateA \Par
      \mailbox{\co{\aa}}$,
    \item% (\coqSyn{mo_shape})
      there exists $ \stateA $ such that
      $\state \equiv  \stateA \Par \mailbox{ \Pi M }$,
      and  $\stateA$ performs no output action.% (\coqLTS{rmo_output_spec}).
%      \pl{should we say something about $M$ here ?}
  \end{enumerate}
\end{lemma}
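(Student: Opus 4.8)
The plan is to establish the two parts separately: part~(1) by induction on the derivation of the output transition, and part~(2) by structural induction on $p$. In both cases the decisive observation is that, in \ACCS, an output label $\co{\aa}$ can be emitted \emph{only} by an unguarded atom $\mailbox{\co{\aa}}$, and can only be propagated upward through parallel composition.

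For part~(1), I would do induction on the derivation of $p \st{\co{\aa}} p'$ and inspect which rules of \rfig{rules-LTS} can possibly carry an output label. The rules \rinput, \rtau, \unfold\ and \com\ only produce labels in $\Names \cup \set{\tau}$, so they are ruled out. The rules \extL\ and \extR\ fire on a summand, which by the grammar of \req{syntax-processes} is a guard $g$; since none of $\Nil$, $\Unit$, $\aa.p$, $\tau.p$, $g \extc g$ can perform an output, sums never output either. This leaves exactly three cases. The base case \mboxelim\ gives $p = \mailbox{\co{\aa}}$ and $p' = \Nil$, and $\mailbox{\co{\aa}} \equiv \Nil \Par \mailbox{\co{\aa}}$ holds by \rulename{S-pzero} and \rulename{S-pcom}. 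The inductive cases \parL\ and \parR\ are symmetric: if $p = p_1 \Par p_2 \st{\co{\aa}} p_1' \Par p_2$ because $p_1 \st{\co{\aa}} p_1'$, the induction hypothesis gives $p_1 \equiv p_1' \Par \mailbox{\co{\aa}}$, and \rulename{S-ppar} together with the monoid laws \rulename{S-pass} and \rulename{S-pcom} yields $p \equiv (p_1' \Par p_2) \Par \mailbox{\co{\aa}}$, as required.

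For part~(2), I would proceed by structural induction on the (closed) term $p$, letting the witness mailbox $M$ be the multiset of its top-level unguarded atoms. If $p = \mailbox{\co{\aa}}$, take $p' = \Nil$ and $M = \mset{\co{\aa}}$. If $p$ is a guard or a recursion $\rec{q}$, take $p' = p$ and $M = \varnothing$: guards perform no output, and $\rec{q}$ only performs $\tau$ via \unfold, so $p$ is output-free and $p \equiv p \Par \Nil = p \Par \mailbox{\Pi \varnothing}$. If $p = p_1 \Par p_2$, the induction hypothesis gives $p_i \equiv p_i' \Par \mailbox{\Pi M_i}$ with each $p_i'$ output-free, and commutativity and associativity of $\Par$ give $p \equiv (p_1' \Par p_2') \Par \mailbox{\Pi (M_1 \uplus M_2)}$; moreover $p_1' \Par p_2'$ is output-free, since by the analysis of part~(1) an output of a parallel composition must originate in one of the components. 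The case $p = x$ cannot occur, as processes are closed.

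The routine inductions are not the hard part; the crux is the exhaustive rule analysis showing that \emph{no} construct other than \mboxelim\ (and its lifting through $\Par$) can produce an output. In particular, the argument hinges on the guardedness restriction of the sum operator, which is precisely what prevents an atom from being emitted by a branch of a choice, and thus what makes ``top-level unguarded atoms'' a well-behaved notion of mailbox content. Once this is in place, everything else is bookkeeping with the commutative-monoid laws for $\Par$ of \rfig{equiv}.
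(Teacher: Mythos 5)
Your proof is correct and matches the paper's approach: the paper gives no detailed proof of \rlem{output-shape}, only the remark that it holds ``because the syntax enforces outputs to have no continuation'' (deferring the details to the Coq mechanisation), and your rule induction for part (1) plus structural induction for part (2) is precisely the natural formalisation of that observation. The point you isolate as the crux --- that only \mboxelim\ and its lifting through the parallel rules can produce an output label, thanks to the guardedness restriction on sums --- is exactly the content of the paper's one-line justification.
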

\noindent
This lemma and \rlem{weak-output-swap} essentially hold, because, as already pointed out in \rsec{preliminaries}, the syntax enforces outputs to have no continuation.

The following lemma states a fundamental fact
(\cite[Lemma 2.13%pag. 31
]{DBLP:books/daglib/0018113},
\cite[Proposition 5.2%pag. 40
]{DBLP:books/daglib/0098267},
\cite[Lemma 1.4.15]{DBLP:books/daglib/0004377}).
Its proof is so tedious that even the references we have given only
sketch it. In this paper we follow the masters example, and give
merely a sketch. However, we have a complete machine-checked proof.

\begin{lemma}%(\coqLTS{harmony})
%  \label{lem:harmony}
  \label{lem:st-compatible-with-equiv}
  For every $p,q \in \ACCS$ and $\alpha \in \Act_{\tau} \wehavethat p \equiv \cdot \st{\alpha} q \implies p  \st{\alpha} \cdot \equiv q$.
\end{lemma}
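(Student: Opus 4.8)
The statement unfolds to: if $p \equiv p'$ and $p' \st{\alpha} q$, then there exists $q'$ with $p \st{\alpha} q'$ and $q' \equiv q$. The plan is to proceed by induction on the derivation of $p \equiv p'$ using the rules of \rfig{equiv}. A preliminary observation shapes the whole argument: because $\equiv$ is symmetric and its definition literally contains the rule \rname{S-symm}, a one-sided induction hypothesis cannot absorb that rule. So first I would strengthen the claim to a two-sided invariant, namely that for every derivation of $p \equiv p'$ the matching property holds \emph{in both directions} (each of $p, p'$ can answer a transition of the other with an $\equiv$-related residual). With this strengthening, \rname{S-symm} is discharged by merely swapping the two conjuncts, \rname{S-refl} is trivial, and \rname{S-trans} is handled by composing the two matchings and using transitivity of $\equiv$ on the residuals.

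The base cases are the monoid axioms. For each, I would take an arbitrary transition of one side and reconstruct it on the other by case analysis on the last LTS rule used (\rfig{rules-LTS}). The unit laws \rname{S-szero} and \rname{S-pzero} are immediate, since $\Nil$ has no transitions, so every transition of $p \extc \Nil$ (resp. $p \Par \Nil$) must originate in $p$ and is matched identically. The commutativity laws \rname{S-scom} and \rname{S-pcom} just interchange \extL\ with \extR\ (resp. \parL\ with \parR), and for a synchronisation derived by \com\ they swap the two premises, yielding the same residual up to \rname{S-pcom}. The associativity laws \rname{S-sass} and \rname{S-pass} reassociate the derivation; for sums this is routine, and the only delicate point is \com\ under parallel associativity, discussed below.

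The inductive congruence cases \rname{S-prefix}, \rname{S-sum} and \rname{S-ppar} follow by analysing the LTS rules for the relevant operator and applying the induction hypothesis to the modified subterm. For \rname{S-sum}, a transition of $p \extc q$ comes via \extL\ or \extR; in the \extL\ case the IH on $p \equiv p'$ supplies the matching move, and \extR\ is untouched. For \rname{S-ppar}, a transition of $p' \Par q$ is derived by \parL, \parR, or \com: the \parR\ case is untouched, the \parL\ case is answered by the IH on $p \equiv p'$ followed by re-applying \parL, and in the \com\ case the IH provides a matching input/output from $p$ with an $\equiv$-related residual, after which \rname{S-ppar} lifts $\equiv$ to the whole product. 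Note that no interaction with the recursion rule \unfold\ arises, since $\equiv$ never unfolds a prefix and \unfold\ emits a $\tau$ independently of how sums and parallel products are bracketed.

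The hard part will be the associativity-of-parallel case \rname{S-pass} combined with the communication rule \com. When, say, $p \Par (q \Par r)$ performs a $\tau$ by synchronising $p$ with a component inside $q \Par r$ (or when the synchronisation lies entirely within $q \Par r$), I must reconstruct the corresponding synchronisation on $(p \Par q) \Par r$ by choosing the correct order of applications of \parL/\parR around the \com\ step, and then verify that the two residuals are related by a composite of \rname{S-pass}, \rname{S-pcom} and \rname{S-ppar}. This is pure bookkeeping over the finitely many placements of the matching input and output, but it is exactly where the proof becomes tedious; the references cited (\cite[Lemma 1.4.15]{DBLP:books/daglib/0004377} and the others) only sketch it for the same reason, and it is the part for which the machine-checked proof earns its keep.
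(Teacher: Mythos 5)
Your proposal is correct and follows the same route as the paper's own (sketched) proof: rule induction on the derivation of $p \equiv p'$, with case analysis on the last LTS rule used to derive $p' \st{\alpha} q$ in the monoid and congruence cases. The one substantive difference is your strengthening of the induction statement to a two-sided matching invariant, and this is a genuine improvement rather than a cosmetic one. The paper states the one-sided claim, says the proof is ``by induction on the derivation $p \equiv p'$'', and exhibits only the \rname{S-trans} case --- which is precisely the case that happens to go through one-sidedly, by composing the two inductive hypotheses and closing with transitivity of $\equiv$. It is silent on \rname{S-symm}, where, as you observe, the inductive hypothesis for the subderivation of $p' \equiv p$ faces the wrong way and cannot be used; any complete proof (including the paper's mechanised one) must either prove both directions simultaneously, as you do, or otherwise restructure the induction, so your preliminary observation fills a real gap in the sketch rather than duplicating it. Your identification of \rname{S-pass} interacting with \com\ as the tedious heart of the base cases also matches where the actual bookkeeping lies, and is exactly the part the paper delegates to the machine-checked development. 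In short: correct, same decomposition, with the necessary strengthening made explicit where the paper leaves it implicit.
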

\begin{proof}[Proof sketch]
  We need to show that if there exists a process $p'$ such that $p \equiv p'$
  and $p' \st{\alpha} q$ then there exists a process $q'$ such that $p \st{\alpha} p'$
  and $p' \equiv q$.
  The proof is by induction on the derivation $p \equiv p'$.

  We illustrate one case with the rule \rulename{S-trans}.
  The hypotheses tell us that there exists $\hat{p}$ such that $p \equiv \hat{p}$ and $\hat{p} \equiv p'$,
  that $p' \st{\alpha} q$,  and the inductive hypotheses that
  \begin{enumerate}[(a)]
  \item\label{pt:preharmony-case-1}
    for all $q'$ \textit{s.t} $\hat{p} \st{\alpha} q'$ implies that there exists a $\hat{q}$ such that $p \st{\alpha} \hat{q}$ and $\hat{q} \equiv q'$
  \item\label{pt:preharmony-case-2}
    for all $q'$ \textit{s.t} $p' \st{\alpha} q'$ implies that there exists a $\hat{q}$ such that $\hat{p} \st{\alpha} \hat{q}$ and $\hat{q} \equiv q'$
  \end{enumerate}
  By combining \rpt{preharmony-case-2} and $p' \st{\alpha} q$ we obtain a
  $\hat{q}_1$ such that $\hat{p} \st{\alpha} \hat{q}_1$ and $\hat{q}_1 \equiv q$.
  Using \rpt{preharmony-case-1} together with $\hat{p}_1 \st{\alpha} \hat{q}_1$ we have that
  there exists a $\hat{q}_2$ such that $p \st{\alpha} \hat{q}_2$ and $\hat{q}_2 \equiv \hat{q}_1$.
  We then have that $p \st{\alpha} \hat{q}_2$ and it remains to show that $\hat{q}_2 \equiv q$.
  We use the transitivity property of the structural congruence relation to show that
  $\hat{q}_2 \equiv \hat{q}_1$ and $\hat{q}_1 \equiv q$ imply $\hat{q}_2 \equiv q$ as required
  and we are done with this case.
\end{proof}
\noindent
Time is a finite resource. The one spent to machine check
\rlem{st-compatible-with-equiv} would have been best invested into bibliographical research.
Months after having implemented the lemma we realised that
\cite{DBLP:journals/entcs/AffeldtK08} already had 
an analogous result for a mechanisation of the $\pi$-calculus.
\rlem{st-compatible-with-equiv} is crucial to prove the Harmony Lemma,
which states that~$\tau$-transitions coincide with the standard reduction
relation of \ACCS. This is out of the scope of our discussion,
and we point the interested reader to Lemma 1.4.15 of
\cite{DBLP:books/daglib/0004377}, and to the list of problems
presented on the web-page of \textsc{The Concurrent Calculi Formalisation Benchmark}.\footnote{\url{https://concurrentbenchmark.github.io/}}

We give a corollary that is useful to prove \rlem{ACCSmodulos-equiv-is-out-buffered-with-feedback}.
\begin{corollary}
  \label{cor:equiv-preserves-transitions-modulo-equiv}
  \label{cor:equiv-preserves-transitions}
For every $p,q \in \ACCS, \and \alpha \in \Acttau \wehavethat p \equiv q$ implies that
$p \st{ \alpha } \cdot \equiv r$ if and only if $q \st{ \alpha } \cdot \equiv r$.
\end{corollary}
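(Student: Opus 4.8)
The final statement to prove is Corollary~\ref{cor:equiv-preserves-transitions}: for every $p,q \in \ACCS$ and $\alpha \in \Acttau$, if $p \equiv q$ then $p \st{\alpha} \cdot \equiv r$ if and only if $q \st{\alpha} \cdot \equiv r$.

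The plan is to derive this directly from Lemma~\ref{lem:st-compatible-with-equiv}, which is the genuine content here; the corollary is essentially a packaging of that lemma combined with the transitivity and symmetry of $\equiv$. First I would unfold the statement $p \st{\alpha} \cdot \equiv r$, which means there exists $p'$ such that $p \st{\alpha} p'$ and $p' \equiv r$. By symmetry of the biconditional in the corollary (swapping the roles of $p$ and $q$ using \rulename{S-symm} to turn $p \equiv q$ into $q \equiv p$), it suffices to prove just one direction, say the forward one: assume $p \equiv q$ and $p \st{\alpha} p' \equiv r$, and produce a $q'$ with $q \st{\alpha} q' \equiv r$.

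For that direction I would argue as follows. From $p \equiv q$ (equivalently $q \equiv p$ by \rulename{S-symm}) and $p \st{\alpha} p'$, I have $q \equiv \cdot \st{\alpha} p'$, so Lemma~\ref{lem:st-compatible-with-equiv} applies with its roles instantiated by $q$ and $p$, yielding a $q'$ such that $q \st{\alpha} q'$ and $q' \equiv p'$. Then I chain the congruences: from $q' \equiv p'$ and the hypothesis $p' \equiv r$, transitivity (\rulename{S-trans}) gives $q' \equiv r$. Hence $q \st{\alpha} q' \equiv r$, which is exactly $q \st{\alpha} \cdot \equiv r$. The reverse implication is symmetric, obtained by exchanging $p$ and $q$ and again invoking \rulename{S-symm} on the hypothesis.

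There is no real obstacle here, since all the difficulty has been absorbed into Lemma~\ref{lem:st-compatible-with-equiv}, whose (sketched, fully mechanised) proof does the hard inductive work on the derivation of $p \equiv p'$. The only point requiring a little care is bookkeeping the direction in which the lemma is stated: the lemma moves an $\equiv$ that precedes a transition to one that follows it, so I must make sure I feed it $q \equiv \cdot \st{\alpha} p'$ rather than $p \equiv \cdot \st{\alpha} p'$, and I must keep track of which of $p,q$ plays which role. Once the instantiation is lined up correctly, the corollary follows by two applications of the lemma (one per direction) together with the equivalence-relation laws for $\equiv$.
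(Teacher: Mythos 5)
Your proposal is correct and follows essentially the same route as the paper: one direction is obtained by flipping $p \equiv q$ to $q \equiv p$, applying Lemma~\ref{lem:st-compatible-with-equiv} to $q \equiv \cdot \st{\alpha} p'$, and closing with transitivity of $\equiv$; the converse direction follows by the symmetry of $\equiv$. The care you note about feeding the lemma the equivalence on the correct side of the transition is exactly the instantiation the paper performs.
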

\begin{proof}
  Since $q \equiv p \st{ \alpha } p' \equiv r$ \rlem{st-compatible-with-equiv} implies $q \st{\alpha} \cdot \equiv p'$,
  thus $q \st{\alpha} \cdot \equiv r$ by transitivity of $\equiv$.
  The other implication follows from the same argument and the symmetry of $\equiv$.
\end{proof}

%%% ACCS PROCESSES ARE OUT-BUFFERED WITH FEEDBACK
A consequence of \rlem{output-shape} is that the LTS
\lts{\modulo{\ACCS}{\equiv}}{\modulo{\st{}}{\equiv}}{\Acttau}
enjoys the axioms in \rfig{axioms}, and thus it is \obaFB.
\cite[Theorem 4.3]{DBLP:conf/concur/Selinger97} proves it reasoning modulo
bisimilarity, while we reason modulo structural equivalence.
\begin{lemma}%[COQ LINK]
\label{lem:ACCSmodulos-equiv-is-out-buffered-with-feedback}
$\Forevery p \in \ACCS,$ and $\aa \in \Names$ the following properties are true,
\begin{itemize}
\item% [\outputcommutativity]%
  $\forevery \alpha \in \Acttau \wehavethat p \st{\co{\aa}}\st{\alpha} p_3$ implies $ p \st{ \alpha }\st{ \co{\aa}} \cdot \equiv p_3$;
\item% [\outputconfluence]
  $\forevery \alpha \in \Acttau \suchthat
  \alpha \not\in\set{ \tau, \co{\aa}} \wehavethat p \st{\co{\aa}} p'
  \text{ and } p \st{\alpha} p''$ imply that $p'' \st{\co{\aa}} q \text{ and }p' \st{\alpha} q$ for some $q$;
\item% [\outputdeterminacy]
  $ p \st{\co{\aa}} p' \text{ and } p \st{\co{\aa}} p'' \imply p' \equiv p''$;
\item% [\outputfeedback]
  $p \st{\co{\aa}} p' \st{\aa} q \implies p \st{\tau} \cdot \equiv  q$;
\item% [\outputtau]
  $ p \st{\co{\aa}} p' \text{ and } p \st{\tau} p'' \imply$ that $p' \st{\tau} q$ and $p'' \st{\co{\aa}} q$; or that $p' \st{\tau} p''$.
  \item  $\forevery p'$ if there exists a $\hat{p}$ such that
   $p \st{\co{\aa}} \hat{p}$ and $p' \st{\co{\aa}} \hat{p}$
   then $p \equiv p'$
\end{itemize}
\end{lemma}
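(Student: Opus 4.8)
The plan is to prove all six properties by the same device: reduce every claim about an output transition to \rlem{output-shape}, which says that $p \st{\co{\aa}} p'$ forces $p \equiv p' \Par \mailbox{\co{\aa}}$, i.e. an output merely peels a top-level atom off the shared mailbox. Combined with \rlem{st-compatible-with-equiv} and \rcor{equiv-preserves-transitions}, which let us transport any transition across $\equiv$, this turns each axiom into a short case analysis on how the displayed process $p' \Par \mailbox{\co{\aa}}$ performs its \emph{second} transition, using only the rules of \rfig{rules-LTS} and the obvious fact that the atom $\mailbox{\co{\aa}}$ can fire nothing but the output $\co{\aa}$ (rule \mboxelim). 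Since the target LTS is taken modulo $\equiv$, all reconstructed transitions are read up to $\equiv$, which is exactly the conclusion shape of each bullet.

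First I would dispatch the four routine axioms. For \outputcommutativity, given $p \st{\co{\aa}} p' \st{\alpha} p_3$, rewrite $p \equiv p' \Par \mailbox{\co{\aa}}$, fire $\alpha$ on the left by \parL and then $\co{\aa}$ on the mailbox by \mboxelim, obtaining $p \st{\alpha} \cdot \st{\co{\aa}} \cdot \equiv p_3$. For \outputconfluence, with $\alpha \neq \tau, \co{\aa}$, the step $p' \Par \mailbox{\co{\aa}} \st{\alpha}$ can only be a \parL step (since \parR would need $\mailbox{\co{\aa}} \st{\alpha}$ and \com would yield $\tau$, both excluded), which immediately closes the confluent square with $q$ the $\alpha$-reduct of $p'$. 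For \outputfeedback, synchronise $p' \st{\aa} q$ with $\mailbox{\co{\aa}} \st{\co{\aa}} \Nil$ through rule \com to get $p \st{\tau} \cdot \equiv q$. For \outputdeterminacyinv the claim is immediate: if $p \st{\co{\aa}} \hat p$ and $p' \st{\co{\aa}} \hat p$, then \rlem{output-shape} gives $p \equiv \hat p \Par \mailbox{\co{\aa}} \equiv p'$. The remaining \outputtau axiom is a single case split on the $\tau$-move of $p' \Par \mailbox{\co{\aa}}$: a \parL step yields the first disjunct (a confluent square in which $p''$ can re-emit $\co{\aa}$), whereas a \com step is forced to synchronise on $\aa$ and yields the feedback disjunct $p' \st{\aa} p''$ (up to $\equiv$), exactly the two shapes drawn in \rfig{axioms}.

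The hard part will be \outputdeterminacy, i.e. that $p \st{\co{\aa}} p'$ and $p \st{\co{\aa}} p''$ force $p' \equiv p''$. Here peeling a single atom is not enough, because $p' \Par \mailbox{\co{\aa}} \equiv p'' \Par \mailbox{\co{\aa}}$ does not give $p' \equiv p''$ without a cancellation property of $\equiv$ over parallel composition. I would instead use \rlem{output-shape} in its second form to present $p \equiv r \Par \mailbox{\Pi M}$ with $r$ output-free, and argue that every $\co{\aa}$-transition of $p$ must originate in the mailbox (no output prefix exists, and atoms are barred from sums, so \parL/\parR can only route the output out of $\mailbox{\Pi M}$, never through $r$ or a guard) and removes one occurrence of $\co{\aa}$ from the multiset $M$. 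Since $M$ is unordered, deleting any copy of $\co{\aa}$ yields the $\equiv$-unique residual $r \Par \mailbox{\Pi (M \setminus \mset{\co{\aa}})}$, whence $p' \equiv p''$. Establishing this ``outputs touch only the top-level mailbox, and mailbox removal is well defined up to $\equiv$'' statement — essentially a unique-decomposition fact for the parallel product of atoms — is the one place that needs genuine structural bookkeeping rather than the uniform peeling argument, and it is where the machine-checked proof does most of its work.
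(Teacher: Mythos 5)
Your proposal is correct and uses essentially the same device as the paper's own proof: decompose the source of the output via \rlem{output-shape}, transport transitions across $\equiv$ using \rcor{equiv-preserves-transitions}, and finish by case analysis on the rules of \rfig{rules-LTS}. You are in fact more thorough than the paper, whose proof block only sketches \outputfeedback and \outputtau (and in the latter silently skips the \com case that produces the second disjunct), leaving the remaining four axioms --- including the \outputdeterminacy bookkeeping you rightly single out as the one place needing genuine structural work (cancellation, or equivalently the output-free decomposition from the second part of \rlem{output-shape}) --- to the Coq mechanisation.
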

\begin{proof}
  To show \axiom{feedback} we begin via \rlem{output-shape}
  which proves $p \equiv p' \Par \mailbox{\co{\aa}}$.
  We derive $p' \Par \mailbox{\co{\aa}} \st{ \tau } q'$
  and apply \rcor{equiv-preserves-transitions-modulo-equiv} to obtain
  $ p \st{\tau} \cdot \equiv q $.

  We prove \axiom{Output-Tau}.
  The hypothesis and \rlem{output-shape} imply that $ p \equiv p' \Par \mailbox{\aa}$.
  Since $ p \st{ \tau } p'' $ it must be the case that
  $ p' \st{ \tau } \hat{p}$ for some $ \hat{p}$,
  and $p'' =  \hat{p} \Par \mailbox{\aa}$.
  Let $q = \hat{p}$. We have that $ p'' \st{ \aa } \hat{p} \Par \Nil \equiv q$.
\end{proof}
\noindent
Processes that enjoy \axiom{Output-Tau} are called {\em
non-preemptive} in \cite[Definition
10]{DBLP:conf/lics/CleavelandZ91}.

%%% TODO: WHAT IS THIS LEMMA USEFULL FOR ?
Each time a process $\state$ reduces to a {\em stable} process
$\stateA$, it does so by consuming at least part of the mailbox, for
instance a multiset of outputs $N$, thereby arriving in a state
$\stateB$ whose inputs cannot interact with what remains of the
mailbox, i.e. $M \setminus N$, where $M$ is the original mailbox.

\begin{lemma}
  %(\coqMT{inversion_weak_a_unroll_inputs})
  \label{lem:completeness-part-2.2-squigly-02}
  $\Forevery M \in \MO$, $p, \stateA \in \ACCS, $
  if $p \Par \mailbox{ \Pi M } \wt{\varepsilon} \stateA \stable$
  then there exist an $N \subseteq M$ and some $\stateB \in \ACCS$
  such that $p \wt{ \co{N} } \stateB  \stable$, $O(\stateB) \subseteq O( \stateA )$, and
  $\disjoint{\co{ I( \stateB ) }}{(M \setminus N)}$.
\end{lemma}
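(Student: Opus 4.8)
The plan is to proceed by induction on the length $k$ of the weak reduction $p \Par \mailbox{\Pi M} \wt{\varepsilon} \stateA$, that is, on the derivation of this $\tau$-only transition sequence. The single structural fact driving the whole argument is that the mailbox $\Pi M$ is a parallel product of atoms, so by rule \mboxelim it can only emit output actions: it can neither perform a $\tau$ on its own (two outputs do not synchronise) nor perform an input. This is exactly what restricts the possible synchronisations between $p$ and the mailbox, and I would invoke \rlem{output-shape} to read off, from any reachable state, its top-level mailbox and the complementary non-emitting part.

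In the base case $k = 0$ we have $p \Par \mailbox{\Pi M} = \stateA$ already stable. Stability forbids a \parL step $p \st{\tau}$, hence $p \stable$, and it forbids any \com between $p$ and the mailbox, which by \rlem{output-shape} means $\co{\aa} \notin M$ for every $\aa \in I(p)$, i.e. $\disjoint{\co{I(p)}}{M}$. I would then take $N = \emptyset$ and $\stateB = p$: the transition $p \wt{\varepsilon} p$ is trivial and ends in the stable $p$, the inclusion $O(\stateB) \subseteq O(p) \cup M = O(\stateA)$ holds, and $\disjoint{\co{I(\stateB)}}{(M \setminus \emptyset)}$ is precisely the stability condition just derived.

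For the inductive step I would write $p \Par \mailbox{\Pi M} \st{\tau} s \wt{\varepsilon} \stateA$ and do a case analysis on the last rule of the first step. Since the mailbox cannot move alone, only two cases arise. In case (a) the step is a \parL move $p \st{\tau} p_1$, so $s = p_1 \Par \mailbox{\Pi M}$; the induction hypothesis applied to $p_1$ and $M$ gives $N \subseteq M$ and $\stateB$ with $p_1 \wt{\co{N}} \stateB \stable$ and the two side conditions, and prepending $p \st{\tau} p_1$ yields $p \wt{\co{N}} \stateB$ with the same $N$ and $\stateB$. In case (b) the step is a \com between $p$ and the mailbox: by \rlem{output-shape} this forces $p \st{\aa} p_1$ and $\Pi M \st{\co{\aa}} \Pi(M \setminus \mset{\co{\aa}})$ for some $\co{\aa} \in M$, so $s = p_1 \Par \mailbox{\Pi(M \setminus \mset{\co{\aa}})}$. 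Applying the induction hypothesis to $p_1$ and $M_1 = M \setminus \mset{\co{\aa}}$ gives $N_1 \subseteq M_1$ and $\stateB$; I would set $N = \mset{\co{\aa}} \uplus N_1 \subseteq M$, so that $p \st{\aa} p_1 \wt{\co{N_1}} \stateB$ witnesses $p \wt{\co{N}} \stateB \stable$, and since $M \setminus N = M_1 \setminus N_1$ both $O(\stateB) \subseteq O(\stateA)$ and $\disjoint{\co{I(\stateB)}}{(M \setminus N)}$ transfer verbatim from the induction hypothesis. The remaining conceivable synchronisation, $p$ emitting and the mailbox receiving, is ruled out because atoms have no input transitions, and an internal communication inside $p$ is subsumed by case (a).

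The main obstacle is the bookkeeping that cleanly separates atoms consumed from $M$ from atoms that $p$ itself produces and then re-consumes: a synchronisation is charged to $N$ exactly when the output summand of the \com comes from $\Pi M$, whereas any atom created during $p$'s own evolution is consumed by an internal $\tau$ of $p$ (case (a)) and never enters $N$, which is what preserves both $N \subseteq M$ and the crucial identity $M \setminus N = M_1 \setminus N_1$. Since we reason in $\modulo{\ACCS}{\equiv}$, I would justify the case analysis on the last applied rule up to structural congruence by \rlem{st-compatible-with-equiv}, and rely on the canonical decomposition of \rlem{output-shape} (every state is $\equiv$ to a non-emitting process in parallel with its top-level mailbox) to make the two-case split exhaustive and the multiset accounting exact.
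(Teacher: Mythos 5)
Your proof is correct and takes essentially the same route as the paper's: induction on the derivation of $p \Par \mailbox{\Pi M} \wt{\varepsilon} \cdot$, a case split of the first $\tau$-step into \parL and \com (the remaining possibilities being excluded because the mailbox performs neither $\tau$ nor inputs), and the same multiset bookkeeping $N = \mset{\co{\aa}} \uplus N_1$ with $M \setminus N = M_1 \setminus N_1$ in the communication case. The only difference is your base-case witness ($\stateB = p$ with $N = \emptyset$, rather than the paper's $p \Par \mailbox{\Pi M}$), which is harmless --- indeed it matches the requirement $p \wt{\co{N}} \stateB$ more literally than the paper's own choice.
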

\begin{proof}By induction on the derivation of $p \Par \mailbox{ \Pi M } \wt{\varepsilon} \stateA$.%
  %%%%%%%%%%%%%%%%%%%%%%%%%%%%%%%%%%%%%%%%%%%%%%%%%%%%%%%%%%%%%%%%%%%%%%%%%%%%%%%%%%%%%%
  %%%%%%%%%%%%%%%%%%%%%%%%%%%%%%%%%%%%%%%%%%%%%%%%%%%%%%%%%%%%%%%%%%%%%%%%%%%%%%%%%%%%%%
  %%%%%%%%%%%%%%%%%%%%%%%%%%%%%%%%%%%%%%%%%%%%%%%%%%%%%%%%%%%%%%%%%%%%%%%%%%%%%%%%%%%%%%
  %%% DETAILED ARGUENT. DO NOT EREASE.
  %%% DETAILED ARGUENT. DO NOT EREASE.
  %%% DETAILED ARGUENT. DO NOT EREASE.
%  \leaveout{%%
  In the base case this is due to \wtrefl, which ensures that
  $$
  p  \Par \mailbox{ \Pi M } = \stateA,
  $$
  from which we obtain $p  \Par \mailbox{ \Pi M } \stable$.
  This ensures that  $\disjoint{ \co{I(p)} }{ M }$.% and that $p \stable$.
  We pick as $\stateB$ and $N$ respectively $p  \Par \mailbox{ \Pi M }$
  and $\emptyMset$ as $p \Par \mailbox{ \Pi M } \wt{ \emptyMset } p \Par \mailbox{ \Pi M }$ by reflexivity,
  and $O( p  \Par \mailbox{ \Pi M } ) = O( \stateA )$.

    In the inductive case the derivation ends with an application of \wttau and
    $$
    \begin{prooftree}
      \state  \Par \mailbox{ \Pi M } \st{\tau} p'
      \quad
      \begin{prooftree}
        \vdots
        \justifies
        p'\wt{\varepsilon} \stateA
        \end{prooftree}
      \justifies
       \state  \Par \mailbox{ \Pi M } \wt{\varepsilon} \stateA
    \end{prooftree}
    $$

    We continue by case analysis on the rule used to infer the transition $\state  \Par \mailbox{ \Pi M  } \st{\tau} p'$.
    As by definition $\mailbox{ \Pi M } \stable$, the rule
    is either \parL, i.e. a $\tau$-transition performed by $\state$,
    or \com, i.e. an interaction between $\state$ and $\mailbox{ \Pi M  }$.

    \paragraph{Rule \parL:}
    In this case $ \state \st{\tau} p'' $ for some $p''$, thus $p''
    \Par \mailbox{ \Pi M  } \wt{ \varepsilon } \stateA $ and the
    result follows from the inductive hypothesis.

    \paragraph{Rule \com:}
    The hypothesis of the rule ensure that~$\state \st{\aa} p''$
    and~$\mailbox{  \Pi M  } \st{\co{\aa}} \stateA$,
    and as the process~$\mailbox{ \Pi M  }$ does not perform any
    input, it must be the case that
    $\aa \in \Names$, that  $\co{\aa} \in M$, and that $q \equiv
    \mailbox{ \Pi (M \setminus \mset{\co{\aa}} ) }$.\footnote{In terms
    of LTS with mailboxes, $\stateA = (M \setminus \mset{\co{\aa}})
    $.}
    Note that
    $p' \equiv p'' \Par \mailbox{  \Pi (M \setminus \mset{\co{\aa}}) } $.

    The inductive hypothesis ensures that for some $N' \subseteq M
    \setminus \mset{\co{\aa}}$ and some $\stateC \in \ACCS$
    we have
    \begin{enumerate}[(a)]
    \item $p' \wt{ \co{N'} } \stateC  \stable$,
    \item\label{completeness-part-2.2-squigly-02-output-subseteq-IH}
      $O(\stateC) \subseteq O(\stateA)$, and
    \item\label{completeness-part-2.2-squigly-02-input-subseteq-IH}
      $\disjoint{\co{ I(\stateC) }}{((M \setminus \mset{ \co{\aa} })
      \setminus N')}$
    \end{enumerate}
    We conclude by letting $ \stateB = \stateC$, and $N = \mset{\aa} \uplus N'$.
    The trace $p \st{\aa} p' \wt{ N' } \stateC$ proves that $p \wt{
      \mset{\aa} \uplus N' } \stateC$,
    moreover we already know that $\stateC$ is stable.
    The set inclusion $O(\stateC) \subseteq O( \stateA )$ follows from
    \ref{completeness-part-2.2-squigly-02-output-subseteq-IH},
    and lastly $\disjoint{\co{I(\stateB)}}{(M \setminus
      (\mset{\co{\aa}} \uplus N'))}$ is a consequence of
    $\disjoint{\co{ I(\stateC) }}{((M \setminus \mset{ \co{\aa} })
      \setminus N')}$ and of
    $ (M \setminus \mset{\co{\aa}}) \setminus N' = (\I \setminus
    (\mset{\co{\aa}} \uplus N'))$.
%%  } %%% leaveout
  %%% END DETAILED ARGUMENT
  %%%%%%%%%%%%%%%%%%%%%%%%%%%%%%%%%%%%%%%%%%%%%%%%%%%%%%%%%%%%%%%%%%%%%%%%%%%%%%%%%%%%%%
  %%%%%%%%%%%%%%%%%%%%%%%%%%%%%%%%%%%%%%%%%%%%%%%%%%%%%%%%%%%%%%%%%%%%%%%%%%%%%%%%%%%%%%
  %%%%%%%%%%%%%%%%%%%%%%%%%%%%%%%%%%%%%%%%%%%%%%%%%%%%%%%%%%%%%%%%%%%%%%%%%%%%%%%%%%%%%%
\end{proof}

We define the predicate $\goodSym$,% (\coqMT{happy}),
 $$
 \begin{array}{lll}
   \good{\Unit} \\
   \good{p \Par q}  &\mathit{if}\ \good{p}\, \mathit{or}\, \good{q}\\
   \good{p \extc q} &\mathit{if}\ \good{p}\, \mathit{or}\, \good{q}\\
 \end{array}
 $$
 \noindent
 % that of course it
 This predicate is preserved by structural congruence.
 \begin{lemma}%[\coqMT{sc_proc_preserves_happiness}]
  \label{lem:happy-sc}
  $\Forevery \serverA, \serverB \in \ACCS \wehavethat \serverA \equiv \serverB$ and $\good{\serverA} \imply \good{\serverB}$.
\end{lemma}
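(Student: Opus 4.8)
The plan is to prove the statement by rule induction on the derivation of $\serverA \equiv \serverB$, using the inductive definition of $\equiv$ given in \rfig{equiv}. A preliminary observation is that the one-directional implication as stated does not interact well with the rules \rulename{S-symm} and \rulename{S-trans}: the induction hypothesis arising from a premise $q \equiv p$ would yield $\good{q} \implies \good{p}$, which is the wrong direction. I would therefore strengthen the statement to the biconditional $\serverA \equiv \serverB \implies (\good{\serverA} \iff \good{\serverB})$, which is closed under symmetry and transitivity and which trivially entails the lemma.

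Two facts about $\goodSym$ underpin every case. First, $\lnot\good{\Nil}$, since no clause of the definition makes $\Nil$ good. Second, $\lnot\good{\alpha.p}$ for any prefix $\alpha$ and process $p$, because the predicate is only introduced on $\Unit$ and then propagated through $\Par$ and $\extc$; a prefixed term is therefore never good. With these in hand, the monoid axioms are immediate: for \rulename{S-szero} we have $\good{p \extc \Nil} \iff (\good{p} \lor \good{\Nil}) \iff \good{p}$; and for \rulename{S-scom}, \rulename{S-sass}, together with their parallel counterparts \rulename{S-pzero}, \rulename{S-pcom}, \rulename{S-pass}, the equivalence follows directly from the commutativity, associativity and unit laws of disjunction, matching the definitional unfoldings $\good{p \extc q} \iff (\good{p} \lor \good{q})$ and $\good{p \Par q} \iff (\good{p} \lor \good{q})$.

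The congruence rules are equally direct. For \rulename{S-refl} the biconditional is trivial; \rulename{S-symm} is closed by the symmetry of $\iff$, and \rulename{S-trans} by chaining the two induction hypotheses $\good{p} \iff \good{p'}$ and $\good{p'} \iff \good{q}$. For \rulename{S-prefix}, both $\alpha.p$ and $\alpha.q$ fail to be good by the second fact above, so the equivalence holds vacuously (the hypothesis is not even needed). Finally, for \rulename{S-sum} and \rulename{S-ppar} I would unfold the definition of $\goodSym$ on the sum (respectively the parallel) and substitute the induction hypothesis $\good{p} \iff \good{p'}$ into $(\good{p} \lor \good{q}) \iff (\good{p'} \lor \good{q})$.

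I do not expect any genuine obstacle: once the statement is strengthened to a biconditional, every case reduces to a Boolean identity about disjunction combined with the two structural facts $\lnot\good{\Nil}$ and $\lnot\good{\alpha.p}$. The only point requiring care is precisely this strengthening, without which the closure rules \rulename{S-symm} and \rulename{S-trans} cannot be discharged by a plain induction on the derivation.
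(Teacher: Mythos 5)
Your proof is correct. The paper itself states \rlem{happy-sc} without any written proof (the obligation is discharged only in the Coq mechanisation, as the \textsf{good\_preserved\_by\_eq} field of the \textsf{Good} typeclass), so there is no paper argument to compare against; your rule induction on the derivation of $\serverA \equiv \serverB$, strengthened to the biconditional $\good{\serverA} \iff \good{\serverB}$ precisely so that the \rulename{S-symm} and \rulename{S-trans} cases close, is the standard argument, and the two auxiliary facts it rests on --- $\lnot\good{\Nil}$ and $\lnot\good{\alpha.p}$, plus the inversion $\good{p \extc q} \iff \good{p} \lor \good{q}$ and likewise for $\Par$ --- are all justified by the least-fixpoint reading of the inductive definition of $\goodSym$.
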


\begin{lemma}%[\coqLTS{sc_proc_preserves_terminate}]
  \label{lem:terminate-sc}
  For every $p, q \in \ACCS \wehavethat p \equiv q$ and $p \convi$ imply $q \convi$.
\end{lemma}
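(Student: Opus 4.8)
The plan is to prove the statement by rule induction on the derivation of $p \convi$, generalising over $q$ so that the induction hypothesis quantifies over all congruent processes. Recall that $\convi$ is the inductive predicate $\mathsf{int}_{Q_1}$ with $Q_1(p) \iff p \stable$, so its derivations are built from the two rules \mnow and \mstep, and the whole argument rests on transporting $\tau$-transitions across $\equiv$ by means of the harmony-style \rlem{st-compatible-with-equiv} and its \rcor{equiv-preserves-transitions}.

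In the base case $p \convi$ is obtained by \mnow, hence $p \stable$, i.e.\ $p \Nst{\tau}$. To derive $q \convi$ by \mnow I must show $q \Nst{\tau}$. Suppose instead that $q \st{\tau} q'$ for some $q'$. Since $p \equiv q$, we have $p \equiv \cdot \st{\tau} q'$, and \rlem{st-compatible-with-equiv} yields a $p'$ with $p \st{\tau} p'$ and $p' \equiv q'$, contradicting $p \Nst{\tau}$. Thus $q \Nst{\tau}$, and $q \convi$ follows by \mnow.

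In the inductive case $p \convi$ is obtained by \mstep, so $p \st{\tau}$ and for every $p'$ with $p \st{\tau} p'$ we have $p' \convi$; the induction hypothesis moreover states that each such $p'$ satisfies $p' \equiv q' \implies q' \convi$ for all $q'$. I derive $q \convi$ by \mstep, checking its two premises. First, $q \st{\tau}$: fixing $p_0$ with $p \st{\tau} p_0$, from $q \equiv p \st{\tau} p_0$ and \rlem{st-compatible-with-equiv} I obtain $q \st{\tau} \cdot \equiv p_0$, so in particular $q \st{\tau}$. Second, for any $q'$ with $q \st{\tau} q'$, from $p \equiv q \st{\tau} q'$ and \rlem{st-compatible-with-equiv} I get a $p'$ with $p \st{\tau} p'$ and $p' \equiv q'$; applying the induction hypothesis to $p'$ together with $p' \equiv q'$ gives $q' \convi$. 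Both premises of \mstep then hold, so $q \convi$.

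The proof itself is a routine rule induction; the only genuine difficulty has already been discharged in \rlem{st-compatible-with-equiv}, whose (notoriously tedious) machine-checked proof is precisely what guarantees that $\tau$-transitions commute with $\equiv$ in both directions. That commutation lemma is the main obstacle: without it neither the base case (ruling out a spurious $\tau$-step of $q$) nor the inductive step (matching every $\tau$-successor of $q$ with one of $p$) would go through.
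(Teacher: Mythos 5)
Your proof is correct: the rule induction on the derivation of $p \convi$, generalised over $q$, with $\tau$-transitions transported across $\equiv$ in both directions via \rlem{st-compatible-with-equiv} (and symmetry of $\equiv$), discharges both the base case and the inductive case exactly as needed. The paper itself states \rlem{terminate-sc} without a written proof (it is discharged only in the Coq development), and your argument is precisely the intended one, resting on the compatibility of $\equiv$ with the transition relation in the sense of \rfig{Axiom-LtsEq}.
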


\begin{lemma}%[\coqConv{sc_proc_preserves_acnv}]
  \label{lem:acnv-sc}
  For every $\serverA, \serverB \in \ACCS$ and $\trace \in \Actfin$, we have that $\serverA \equiv \serverB$ and $\liftFW{\serverA} \cnvalong \trace$
  imply $\liftFW{\serverB} \cnvalong \trace$.
\end{lemma}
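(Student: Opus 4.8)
The plan is to lift the statement from structural congruence $\equiv$ on \ACCS to the induced equivalence $\doteq$ on the forwarding LTS, and then to prove that $\doteq$ preserves convergence along traces by a straightforward double induction, the engine being the harmony property \rlem{harmony-sta}. First I would observe that, since for \ACCS the equivalence $\simeq$ is exactly $\equiv$, the hypothesis $\serverA \equiv \serverB$ yields $\serverA \triangleright \emptyMset \doteq \serverB \triangleright \emptyMset$ by \rlem{fw-eq-id-mb} (instantiated with $M = \emptyMset$), i.e. $\liftFW{\serverA} \doteq \liftFW{\serverB}$. It thus suffices to prove the more general claim: for all states $P, Q$ of $\liftFW{\genlts}$ and every $\trace \in \Actfin$, if $P \doteq Q$ and $P \cnvalong \trace$ then $Q \cnvalong \trace$, where all predicates are read in the forwarding LTS.

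Before inducting on $\trace$, I would establish two facts about $\doteq$. \emph{(i) $\doteq$ preserves $\conv$.} Working with the intensional predicate $\convi$, which coincides with $\conv$ by \rcor{inductive-char-conv}, I prove that $P \convi$ and $P \doteq Q$ imply $Q \convi$, by rule induction on the derivation of $P \convi$. In the \mnow\ case $P$ is $\tau$-stable, and any transition $Q \sta{\tau} Q'$ would, through \rlem{harmony-sta} applied to $P \doteq Q \sta{\tau} Q'$, produce a $\tau$-transition out of $P$, contradicting stability; hence $Q$ is stable and $Q \convi$. In the \mstep\ case I case on the (decidable) stability of $Q$: if $Q \sta{\tau} Q'$, harmony transports this step back to some $P'$ with $P \sta{\tau} P' \doteq Q'$, the premise gives $P' \convi$, and the induction hypothesis gives $Q' \convi$, so $Q \convi$ follows by the inductive rule. \emph{(ii) $\doteq$ is compatible with weak transitions.} Since \rlem{harmony-sta} is precisely the \texttt{LtsEq} axiom for $\sta{}$, an easy induction on the number of $\tau$-steps shows that $P \doteq Q$ and $Q \wta{\mu} Q'$ imply $P \wta{\mu} P' \doteq Q'$ for some $P'$ (and symmetrically, as $\doteq$ is an equivalence).

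With these in hand, I induct on $\trace$. For $\trace = \varepsilon$, $P \cnvalong \varepsilon$ is $P \conv$, and fact (i) gives $Q \conv$, i.e. $Q \cnvalong \varepsilon$. For $\trace = \mu.\trace'$, convergence of $Q$ follows again from fact (i); and for any $Q'$ with $Q \wta{\mu} Q'$, fact (ii) yields a $P'$ with $P \wta{\mu} P' \doteq Q'$, whence $P' \cnvalong \trace'$ by the hypothesis $P \cnvalong \mu.\trace'$ and $Q' \cnvalong \trace'$ by the induction hypothesis on $\trace'$ applied to $P' \doteq Q'$. These are exactly the two obligations required for $Q \cnvalong \mu.\trace'$. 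Instantiating $P = \liftFW{\serverA}$ and $Q = \liftFW{\serverB}$ then closes the proof.

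The main obstacle is fact (i), the preservation of convergence under $\doteq$: convergence is an extensional, essentially coinductive property, so a direct argument is awkward (and classically would proceed by contradiction on an infinite $\tau$-chain). Routing it through the intensional predicate $\convi$ and performing rule induction, with \rlem{harmony-sta} transporting $\tau$-transitions across $\doteq$, is what makes the argument both constructive and clean; everything else, including the weak-transition lifting of harmony in fact (ii), is routine bookkeeping. Note that, unlike \rlem{terminate-sc}, which handles $\equiv$ on the base calculus, here the work happens entirely inside the forwarding LTS, so it is genuinely $\doteq$ and \rlem{harmony-sta} that are needed rather than the base-level compatibility results.
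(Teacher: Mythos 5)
Your proposal is correct and is essentially the proof the paper intends: the lemma is stated in the appendix without a written proof (it is discharged in the mechanisation), and the natural argument is exactly yours — pass from $\equiv$ to $\doteq$ via \rlem{fw-eq-id-mb}, then show by rule induction on the intensional predicates that any equivalence compatible with transitions in the sense of \rfig{Axiom-LtsEq} (here instantiated by \rlem{harmony-sta}) preserves $\convi$, is compatible with weak transitions, and hence preserves $\cnvalong$, finally instantiating at $\liftFW{\serverA} \doteq \liftFW{\serverB}$. The only delicate points — constructively establishing stability of $Q$ in the base case and the case split in the inductive case — are handled by your appeal to decidability of stability, which the paper's typeclasses assume, so there is no gap.
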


\begin{lemma}%[\coqMT{must_sc_client}]
  \label{lem:must-sc-client}
  $\Forevery  \serverA, \client, \client' \in \ACCS \wehavethat \client \equiv \client'$ and $\musti{\serverA}{\client}$ then $\musti{\serverA}{\client'}$.
\end{lemma}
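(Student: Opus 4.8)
The plan is to prove the statement by rule induction on the derivation of $\musti{\serverA}{\client}$, generalised over $\client'$, so that the inductive hypothesis reads: for every reduct $\csys{\serverB}{\client''}$ produced in the derivation and every $\client_2 \equiv \client''$ we have $\musti{\serverB}{\client_2}$. The two ingredients are \rlem{happy-sc}, which transports the predicate $\goodSym$ across $\equiv$, and \rlem{st-compatible-with-equiv}, which replays a single labelled transition of the client up to structural congruence.

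In the base case the derivation ends with \mnow, so $\good{\client}$ holds; since $\client \equiv \client'$, \rlem{happy-sc} yields $\good{\client'}$, and reapplying \mnow gives $\musti{\serverA}{\client'}$. In the inductive case the derivation ends with \mstep, so $\csys{\serverA}{\client} \st{\tau}$ and $\musti{\serverB}{\client''}$ holds for every reduct $\csys{\serverB}{\client''}$ of $\csys{\serverA}{\client}$. To conclude $\musti{\serverA}{\client'}$ via \mstep I must establish (i) $\csys{\serverA}{\client'} \st{\tau}$ and (ii) $\musti{\serverB}{\client'''}$ for every reduct $\csys{\serverB}{\client'''}$ of $\csys{\serverA}{\client'}$. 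Both are handled by a case analysis on which system rule (\stauserver, \stauclient, or \scom) produces the transition.

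For (ii), a server move (\stauserver) is insensitive to the client: here $\client''' = \client'$ and $\csys{\serverB}{\client}$ is already a reduct of the original system, so the premise of \mstep gives $\musti{\serverB}{\client}$ and the inductive hypothesis, instantiated at $\client \equiv \client'$, yields $\musti{\serverB}{\client'}$. A client move (\stauclient) or a synchronisation (\scom) involves a transition $\client' \st{\alpha} \client'''$; applying \rlem{st-compatible-with-equiv} to $\client' \equiv \client$ produces a matching move $\client \st{\alpha} \hat{\client}$ with $\hat{\client} \equiv \client'''$, hence a reduct $\csys{\serverB}{\hat{\client}}$ of the original system. The premise of \mstep gives $\musti{\serverB}{\hat{\client}}$, and feeding $\client''' \equiv \hat{\client}$ into the inductive hypothesis for this reduct delivers $\musti{\serverB}{\client'''}$. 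Point (i) follows from the same three subcases, since in each of them the matched transition (or, for \stauserver, the unchanged server move) witnesses that $\csys{\serverA}{\client'}$ can reduce.

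The only real subtlety is the passage from compatibility of $\equiv$ with the labelled transitions of \rlem{st-compatible-with-equiv} to compatibility with the reduction relation on systems, where $\equiv$ touches only the client component; one must carefully track which side of the system moves and instantiate the inductive hypothesis at the correctly congruent residual. I expect this bookkeeping to be the main, though essentially routine, obstacle: once the three-way case analysis is set up symmetrically, each branch closes by a single application of \rlem{st-compatible-with-equiv} followed by the inductive hypothesis.
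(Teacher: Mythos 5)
Your proof is correct and follows essentially the route the paper intends: the paper gives no pen-and-paper proof of this lemma (it is discharged in the Coq mechanisation), but its framework is set up precisely for your argument — \rlem{happy-sc} is the instance of the assumption that $\goodSym$ is preserved by the equivalence, and \rlem{st-compatible-with-equiv} is the \ACCS instance of the compatibility axiom of Figure~\ref{fig:Axiom-LtsEq}, which together feed a rule induction on $\musti{\serverA}{\client}$ exactly as you describe. Your handling of the inductive case (mapping each reduct of $\csys{\serverA}{\client'}$ back to a congruent reduct of $\csys{\serverA}{\client}$ before invoking the generalised inductive hypothesis, in each of the three cases \stauserver, \stauclient, \scom) is the right bookkeeping and closes the proof.
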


\begin{lemma}%[\coqMT{must_sc_server}]
  \label{lem:must-sc-server}
  $\Forevery \serverA, \serverB, \client  \in \ACCS \wehavethat \serverA \equiv \serverB$ and $\musti{\serverA}{\client}$ then $\musti{\serverB}{\client}$.
\end{lemma}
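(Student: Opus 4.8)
The plan is to prove the statement by rule induction on the derivation of $\musti{\serverA}{\client}$, after generalising it so that the server may be replaced by \emph{any} $\equiv$-equivalent process; concretely, I would establish that $\musti{\serverA}{\client}$ implies $\musti{\serverB}{\client}$ for every $\serverB \equiv \serverA$, quantifying over $\serverB$ inside the induction. This generalisation is essential: a $\tau$-reduct of $\csys{\serverB}{\client}$ will in general be only $\equiv$-equivalent to the corresponding reduct of $\csys{\serverA}{\client}$, not equal to it, so the induction hypothesis must itself be stated modulo $\equiv$ on the server.

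First I would record the enabling fact that $\tau$-reductions of systems are compatible with $\equiv$ on the server component: if $\serverA \equiv \serverB$ and $\csys{\serverA}{\client} \st{\tau} \csys{\serverA'}{\client'}$, then $\csys{\serverB}{\client} \st{\tau} \csys{\serverB'}{\client'}$ for some $\serverB' \equiv \serverA'$ (with the \emph{same} client component), and symmetrically. This is proved by case analysis on which of the three rules \stauserver, \stauclient, \scom of \rfig{rules-STS} derives the reduction. The \stauclient case is immediate, since the server is untouched. The \stauserver and \scom cases follow from \rlem{st-compatible-with-equiv} applied to the underlying server transition: from $\serverB \equiv \serverA \st{\tau} \serverA'$ (respectively $\serverB \equiv \serverA \st{\mu} \serverA'$) we obtain $\serverB \st{\tau} \serverB'$ (respectively $\serverB \st{\mu} \serverB'$) with $\serverB' \equiv \serverA'$, and we rebuild the system reduction with the same STS rule, reusing the client transition unchanged in the \scom case.

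With this in hand the induction is routine. In the base case $\musti{\serverA}{\client}$ is derived by \mnow, so $\good{\client}$; since this depends only on the client, \mnow immediately yields $\musti{\serverB}{\client}$. In the inductive case $\musti{\serverA}{\client}$ is derived by \mstep, giving $\csys{\serverA}{\client} \st{\tau}$ together with a premise $\musti{\serverA'}{\client'}$ for every reduct $\csys{\serverA'}{\client'}$. To derive $\musti{\serverB}{\client}$ by \mstep I must show that (i) $\csys{\serverB}{\client} \st{\tau}$, which follows from the compatibility fact applied to any reduction witnessing $\csys{\serverA}{\client}\st{\tau}$; and (ii) every reduct of $\csys{\serverB}{\client}$ lies in $\opMusti$. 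For (ii), given $\csys{\serverB}{\client} \st{\tau} \csys{\serverB'}{\client'}$, the compatibility fact used in the reverse direction supplies a matching reduction $\csys{\serverA}{\client} \st{\tau} \csys{\serverA'}{\client'}$ with $\serverA' \equiv \serverB'$; the corresponding premise gives $\musti{\serverA'}{\client'}$, and the induction hypothesis attached to that premise, instantiated at $\serverB' \equiv \serverA'$, yields $\musti{\serverB'}{\client'}$.

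The main obstacle is not conceptual but lies in the bookkeeping of the inductive case: one must put the reductions of $\csys{\serverB}{\client}$ in exact correspondence—modulo $\equiv$ on the server, and with an identical client component—with the reductions of $\csys{\serverA}{\client}$, so that the generalised induction hypothesis attached to the right premise is applicable. This is precisely where the $\equiv$-compatibility of the STS, and hence \rlem{st-compatible-with-equiv} (the one genuinely laborious ingredient, already established), does all the work; the remainder follows the standard shape of a rule-induction argument for $\opMusti$, entirely analogous to the proof of \rlem{musti-preserved-by-left-tau}.
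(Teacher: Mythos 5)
Your proposal is correct: the strengthened rule induction on $\musti{\serverA}{\client}$ (quantifying over all $\serverB \equiv \serverA$ inside the induction), with \rlem{st-compatible-with-equiv} supplying the $\equiv$-compatibility of server transitions and hence of system reductions, is exactly the argument this lemma needs, and the base and inductive cases are handled soundly. The paper itself states this lemma without a written proof (it is discharged in the Coq development, where the \mintinline{coq}{LtsEq} compatibility axiom of \rfig{Axiom-LtsEq} plays the role you assign to \rlem{st-compatible-with-equiv}), so your argument coincides with the intended one rather than diverging from it.
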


\newcommand{\serverC}{q'}
%% \gb{The next lemma is used in \rlem{weaka-congr-switch-s}.
%%   \TODO{Where is its proof ?}}
%% \begin{lemma}
%% \label{lem:wta-congr-join}
%% For every $\serverA, \serverB, \serverC  \in \ACCS$,
%% sequence $\traceA, \traceB \in \Actfin$,
%% if $  \serverA \wta{ \traceA } \cdot \equiv \serverB $ and $ \serverB \wta{ \traceB } \cdot \equiv \serverC$,
%% then $\serverA \wta{ \traceA.\traceB } \cdot \equiv \serverC$.
%% \end{lemma}

%%%% TECHNICALITIES

%% In the composition $\server \Par \mailbox{\co{\aa}}$ the process~$\server$ can interact
%% with the mailbox~$ \mailbox{\co{\aa}} $, because $\server \sta{\aa}$.
%% This kind of interaction with the mailbox does not impact convergence,
%% and in fact a mailbox on the left-hand side of $\cnvalong$ can be transformed into
%% a series of input actions on the right-hand side of the predicate.

%% \renewcommand{\traceB}{t}
%% \begin{lemma}
%%  \label{lem:acnv-unroll-lts}\label{lem:acnv-unroll}
%%   $\Forevery p \in \ACCS$ and $ \trace, \traceB \in \Actfin$,
%%   \begin{enumerate}
%%   \item %(\coqConv{acnv_retract_input})
%%     \label{pt:acnv-unroll-lts}%
%%     for every $\aa \in \Names$,
%%     $(p \Par \mailbox{\co{\aa}}) \acnvalong \traceB$ if and only if
%%     $p \acnvalong \aa.\traceB$;
%%   \item %(\coqConv{acnv_unroll})
%%     \label{pt:acnv-unroll}%
%%     for every
%%     $s \in \Names^\star$,
%%     $(p \Par \mailbox{\Pi\co{s}}) \acnvalong \traceB$ if and only if
%%     $p \acnvalong \trace.\traceB$.
%%   \end{enumerate}
%% \end{lemma}

A typical technique to reason on the LTS of concurrent processes, and
so also of \svrclt systems, is trace zipping:
if $p \wt{s} p'$ and $q \wt{ \co{s }} q'$, an induction on $s$ ensures
that $p \Par q \wt{ } p' \Par q'$.
Zipping together different LTS is slightly more delicate: we can zip
weak transitions $ \wta{ s } $ together with the co-transitions $ \wt{
  \co{s} } $, but possibly moving inside equivalence classes of $\equiv$
instead of performing actual transitions in $\st{}$.

%% \gb{TODO: how is zipping written without parallel composition and
%%   structural equivalence ?}
\begin{lemma}[Zipping]
  \label{lem:zipping}
  For every $p,q \in \ACCS$
  \begin{enumerate}
  \item %(\coqLTS{zip_lts_a_mu})
    \label{pt:zipping-strong}%(\coqConv{zip_lts_a_mu})
    $\forevery \mu \in \Act \wehavethat$
    if $p \sta{ \mu } p'$ and $q \st{\co{ \mu }} q'$ then
    $p \Par q \st{\tau}  p' \Par q'$ or $p \Par q \equiv p' \Par q'$;

  \item %(\coqLTS{zip_wt_congr_s})
    \label{pt:zipping-weak}%(\coqConv{zip_wta_congr_s})
    $\forevery s \in \Actfin \wehavethat$
    if $p \wta{s} p'$ and $q \wt{\co{s}} q'$ then
    $ p \Par q \wt{\varepsilon} \cdot \equiv p' \Par q'$.
  \end{enumerate}
\end{lemma}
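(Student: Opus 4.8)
The plan is to establish part~\rpt{zipping-strong} first and then derive part~\rpt{zipping-weak} from it by induction on the trace, relying throughout on two facts about the structural congruence: that $\equiv$ is compatible with transitions (\rlem{st-compatible-with-equiv}) and that it is a congruence for parallel composition (rule \rulename{S-ppar}), together with the commutative-monoid laws for $\Par$.

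For the strong version I would unfold the definition of the forwarding relation $\sta{}$ for \ACCS (following \rexa{forwarders-in-TACCS}): either $p \st{\mu} p'$ is already a standard transition, or $\mu = a$ is an input and $p' = p \Par \out{a}$ is produced by forwarding, while $\sta{\co{a}}$ and $\sta{\tau}$ add nothing beyond $\st{}$. In the first case the transitions $p \st{\mu} p'$ and $q \st{\co{\mu}} q'$ combine via rule \com\ to give $p \Par q \st{\tau} p' \Par q'$, the first disjunct. In the forwarding case we have $q \st{\co{a}} q'$, so \rlem{output-shape} yields $q \equiv q' \Par \out{a}$; then associativity and commutativity of $\Par$ give $p \Par q \equiv p \Par (q' \Par \out{a}) \equiv (p \Par \out{a}) \Par q' = p' \Par q'$, which is exactly the second disjunct. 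This forwarding case is the only one that produces an $\equiv$-step instead of an actual $\tau$-step, and it is precisely the source of the trailing $\cdot \equiv$ in the conclusion of part~\rpt{zipping-weak}.

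For the weak version I would induct on $s$. In the base case $s = \varepsilon$, both $p \wta{\varepsilon} p'$ and $q \wt{\varepsilon} q'$ consist of $\tau$-steps (and $\sta{\tau}$ coincides with $\st{\tau}$, since forwarding adds no $\tau$-transitions), so interleaving them via \parL\ and \parR\ gives $p \Par q \wt{\varepsilon} p' \Par q'$, with $\equiv$ taken reflexively. For $s = \mu.s'$ I would decompose $p \wta{\mu.s'} p'$ as $p \wta{\varepsilon} p_1 \sta{\mu} p_2 \wta{s'} p'$ and, using $\co{\mu.s'} = \co{\mu}.\co{s'}$, decompose $q \wt{\co{\mu}.\co{s'}} q'$ as $q \wt{\varepsilon} q_1 \st{\co{\mu}} q_2 \wt{\co{s'}} q'$. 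The leading $\tau$-phases interleave into $p \Par q \wt{\varepsilon} p_1 \Par q_1$; the matching step $p_1 \sta{\mu} p_2$, $q_1 \st{\co{\mu}} q_2$ yields, by part~\rpt{zipping-strong}, either $p_1 \Par q_1 \st{\tau} p_2 \Par q_2$ or $p_1 \Par q_1 \equiv p_2 \Par q_2$; and the induction hypothesis gives $p_2 \Par q_2 \wt{\varepsilon} \cdot \equiv p' \Par q'$.

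The main obstacle, and the last step, is stitching these three fragments together while preserving the shape $\wt{\varepsilon} \cdot \equiv$ of the conclusion. When the middle step is a genuine $\tau$-transition this is immediate by concatenation. When it is only an $\equiv$-step (the forwarding case), I would transport the tail computation $p_2 \Par q_2 \wt{\varepsilon} r \equiv p' \Par q'$ back across the congruence using \rlem{st-compatible-with-equiv}, obtaining some $r'$ with $p_1 \Par q_1 \wt{\varepsilon} r' \equiv r \equiv p' \Par q'$, and then prepend the leading $\tau$-phase $p \Par q \wt{\varepsilon} p_1 \Par q_1$. The delicate point is ensuring that an $\equiv$-step introduced by forwarding never blocks the continuation of the computation; this is exactly what compatibility of $\equiv$ with transitions guarantees, and it is the bookkeeping on which the whole argument turns.
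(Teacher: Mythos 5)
Your proof is correct and follows exactly the route the paper intends: the paper states \rlem{zipping} without an explicit proof, only remarking just before the statement that one zips by induction on $s$, ``possibly moving inside equivalence classes of $\equiv$ instead of performing actual transitions in $\st{}$'', and your argument is precisely the realisation of that sketch — part~(1) by case analysis on whether the $\sta{\mu}$-step is a genuine transition (rule \com) or a forwarding step (resolved by \rlem{output-shape} and the monoid laws for $\Par$), and part~(2) by induction on $s$, transporting the tail computation across $\equiv$ via \rlem{st-compatible-with-equiv} when the middle step degenerates to an $\equiv$-move. Nothing is missing.
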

%% \begin{lemma}[(\coqConv{zip_lts_a_mu},\coqConv{zip_wta_congr_s}) Zipping]
%% \label{lem:zipping}
%% For every $p,q \in \ACCS$
%%   \begin{enumerate}[(i)]
%%   \item\label{pt:zipping-strong}%(\coqConv{zip_lts_a_mu})
%%   $\forevery \mu \in \Act \wehavethat$
%%     if $p \sta{ \mu } p'$ and $q \st{\co{ \mu }} q'$ then
%%     $p \Par q \st{\varepsilon}  p' \Par q'$ or $p \Par q \equiv p' \Par q'$;

%%   \item\label{pt:zipping-weak}%(\coqConv{zip_wta_congr_s})
%%   $\forevery s \in \Actfin \wehavethat$
%%   if $p \wta{s} p'$ and $q \wt{\co{s}} q'$ then
%%     $ p \Par q \wt{\varepsilon} \cdot \equiv p' \Par q'$.
%%   \end{enumerate}
%% \end{lemma}
%% \noindent
%% \rlem{zipping} let us compose transitions in the two {\em different}
%% LTS $\st{}$ and $\sta{}$ to construct - up-to structural equivalence -
%% a weak computation in the first one, i.e. $\wt{}$. The reflexivity
%% of both $\equiv$ and $\wt{}$ is necessary to prove the lemma.

%% \begin{example}
%%   \gb{show that if either relation is not reflexive then zipping  is false.}
%% \end{example}

Obviously, for every $\serverA,\serverB \in \States$ and output $\aa \in \Names$ we have
\begin{align}
\label{eq:sta-tau-iff-st-tau}
  \serverA \sta{\tau} \serverB \text{ if and only if } \serverA \st{\tau}   \serverB\\
\label{eq:wta-epsilon-iff-wt-epsilon}
  \serverA \wta{ \varepsilon }  \serverB \text{ if and only if } \serverA \wt{ \varepsilon }  \serverB\\
\label{eq:output-sta-st}
 \serverA \sta{\co{\aa}}  \serverB\text{ if and only if } \serverA \st{\co{\aa}}  \serverB
\end{align}
%\ilacom{Isn't \ref{eq:wta-epsilon-iff-wt-epsilon} an immediate
%  corollary of \ref{eq:sta-tau-iff-st-tau}?}\gb{Yes}\\
\noindent
together with the expected properties of finiteness,
%(\coqLTS{lts_set_output_spec}, \coqLTS{lts_set_mu_a_spec}, \coqLTS{lts_set_tau_spec} and \req{sta-tau-iff-st-tau}),
the first one amounting to the finiteness of the global mailbox in any state:
%\ilacom{Note that $q$ is useless in the first property, it
%  would be better to take it off given our notational conventions at p
%  11.}
\begin{align}
{\cardinality{\setof{ \co{a} \in \co{\Names} }{ p \sta{\co{a}} }}} \in \N
\\
\Forevery \mu \in \Act \wehavethat {\cardinality{\setof{ q }{ p \sta{\mu} q }}} \in \N
\\
  {\cardinality{ %\reducts{p}{\ACCS}{\sta}
      \setof{ q \in \States}{p \sta{\tau} q } }} \in \N
\end{align}
%% \begin{cor}
%% \label{cor:sta-finite-image}
%%   \begin{enumerate}
%%   \item
%%   (\coqLTS{lts_set_output_spec})  ${\cardinality{\setof{ \co{a} \in \co{\Names} }{ p \sta{\co{a}} q }}} \in \N$,
%%   \item
%%   (\coqLTS{lts_set_a_spec}) $\Forevery \mu \in \Act \wehavethat {\cardinality{\setof{ q }{ p \sta{\mu} q }}} \in \N$,
%%   \item (\coqLTS{lts_set_tau_spec}) ${\cardinality{\reducts{p}{\ACCS}{\sta}}} \in \N$.
%% %  \item $\Forevery s \in \Actfin\wehavethat {\cardinality{\setof{q}{ p \wta{s} q }}} \in \N$.
%%   \end{enumerate}
%% \end{cor}
%The subscript $\hondatokoro$ is motivated by the surnames of authors
%of \cite{DBLP:conf/ecoop/HondaT91}.

%% \begin{lemma}
%%   \label{lem:completeness-part-2.2-auxiliary}
%%   (\coqMT{must_gen_a_with_s})
%%   $\Forevery p \in \States, \trace \in \Actfin$,
%%   and every finite set $\ohmy \subseteq \co{\Names}$,
%%   if $p \acnvalong s$ then either
%%   \begin{enumerate}[(i)]
%%    \item\label{pt:completeness-crux-move-2}
%%     there exists $\ogood \in \accht{p}{ \trace }$ such that $\ogood \subseteq \ohmy$, or
%%     \item\label{pt:completeness-crux-move-1}
%%     $\musti \csys{\server}{\testacc{ \trace }{ \co{ \ofun{\accht{p}{s}} \setminus \ohmy }}}$.
%%   \end{enumerate}
%% \end{lemma}

\subsection{Client generators and their properties}
\label{sec:client-generators}

This subsection is devoted to the study of the semantic properties
of the clients produced by the function $g$.
In general these are the properties sufficient to obtain our
completeness result.

\begin{figure}[t]
  \hrulefill
  $$
  \begin{array}{lll}
    g(\varepsilon, \client) & = & \client  \\
    g(\aa.\trace, \client) & = &  \mailbox{\co{\aa}} \Par g( \trace , \client)  \\
    g(\co{\aa}.\trace, \client) & = & \aa . g(\trace , \client) \extc \tau.\Unit
  \end{array}
  $$
  \begin{align}
    c( \trace ) & = g( \trace ,\tau.\Unit) \label{def:cs} \\
    %  \ttrace{s} & = g(s,\Nil) \label{def:fs} \\
    \tacc{ \trace }{L} & = g(\trace ,h(L)) \ \mathit{where} \ h(L) = \Pi\setof{ \mu.\Unit }{\mu \in L }  \label{def:tsL}
  \end{align}
  \caption{Functions to generate clients.}
    %(\coqCom{gen_test}, \coqCom{gen_conv}, %\coqMT{gen_nil}, \coqCom{gen_a}).}
  \label{fig:test-generators}
  \label{fig:client-generators}
  \hrulefill
\end{figure}

\begin{lemma}%[\coqMT{gen_test_reduces_if}]
  \label{lem:gen-reduces-if}
  \label{lem:gs-reduces}
  For every $\server \in \ACCS$ and $\trace \in \Actfin$, if $\server \st{\tau}$ then $g( \server , \trace ) \st{\tau}$.
\end{lemma}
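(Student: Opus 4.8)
The plan is to argue by induction on the trace occupying the first argument of $g$, unfolding the defining equations of \rfig{client-generators} at each step. The key observation is that the hypothesis $\server \st{\tau}$ is actually consumed only in the base case and in the input case, whereas in the output case the generated client can perform a $\tau$-transition entirely on its own.

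First I would treat the base case $\trace = \varepsilon$. Here $g(\varepsilon, \server) = \server$ by definition, so the conclusion $g(\varepsilon, \server) \st{\tau}$ is literally the assumption $\server \st{\tau}$, and there is nothing further to do.

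Next, for the inductive step I would split on the first action of $\trace$. If $\trace = \aa.\trace'$ with $\aa$ an input, then $g(\aa.\trace', \server) = \mailbox{\co{\aa}} \Par g(\trace', \server)$; the inductive hypothesis yields $g(\trace', \server) \st{\tau}$, and rule \parR lifts this internal step through the parallel composition, giving $\mailbox{\co{\aa}} \Par g(\trace', \server) \st{\tau}$. If instead $\trace = \co{\aa}.\trace'$ with $\co{\aa}$ an output, then $g(\co{\aa}.\trace', \server) = \aa.g(\trace', \server) \extc \tau.\Unit$; the right summand reduces by rule \rtau as $\tau.\Unit \st{\tau} \Unit$, and rule \extR propagates this transition to the whole sum, so that $g(\co{\aa}.\trace', \server) \st{\tau}$ holds unconditionally.

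I do not expect any genuine obstacle: the statement is discharged by a routine structural induction. The only point worth flagging is the asymmetry between the two inductive cases — the input case really appeals to the inductive hypothesis, and hence ultimately to $\server \st{\tau}$, while the output case is closed by the built-in $\tau.\Unit$ branch independently of $\server$. This asymmetry incidentally shows that the hypothesis $\server \st{\tau}$ could be dropped as soon as $\trace$ contains an output, but the stated form is exactly what is needed for the later uses of the lemma.
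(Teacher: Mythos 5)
Your proof is correct and takes essentially the same route as the paper's: induction on the trace, with the base case discharged by the hypothesis $\server \st{\tau}$, the input case by the inductive hypothesis lifted through \parR, and the output case closed unconditionally by the $\tau.\Unit$ summand via \extR. Your closing observation that the hypothesis becomes redundant once the trace contains an output is accurate, though not part of the paper's argument.
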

\begin{proof}
By induction on the sequence $\trace$.
In the base case $s = \varepsilon$. The test generated by~$g$ is~$\server$,
which reduces by hypothesis, and so does $g(\varepsilon,  \server )$.
In the inductive case $\trace = \alpha.\traceB$, and we proceed by case-analysis on $\alpha$.
If~$\alpha$ is an output then $g(\alpha.\traceB,  \server) = \co{\alpha}.(g(\traceB,  \server)) \extc \tau.\Unit$
which reduces to $\Unit$ using the transition rule \extR.
If $\mu$ is an input then $g(\alpha.\traceB,  \server) = \mailbox{ \co{\alpha} } \Par g(\traceB,  \server)$
which reduces using the transition rule
\parR and the inductive hypothesis, which ensures that
$g(\traceB,  \server)$ reduces.
\end{proof}

\begin{lemma}%[\coqMT{gen_test_unhappy_if}]
  \label{lem:gen-test-unhappy-if}
  For every $p$ and $s$, if $\lnot \good{p}$ then for every $s$,  $\lnot \good{g(s, p)}$.
\end{lemma}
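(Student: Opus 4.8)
The plan is to induct on the trace $s$, the key observation being that the predicate $\goodSym$ holds on $\Unit$ and is propagated only through the operators $\Par$ and $\extc$; in particular it fails on every atom $\co{\aa}$ and on every prefixed term $\alpha.q$, and so it is never inherited by the syntactic material that the generator $g$ introduces at each step (see the definition of $g$ in Figure~\ref{fig:client-generators}).

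In the base case $s = \varepsilon$ we have $g(\varepsilon, p) = p$, so the desired conclusion $\lnot \good{p}$ is exactly the hypothesis.

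For the inductive step I would write $s = \alpha.s'$ and case-split on the leading action $\alpha$. If $\alpha = \aa$ is an input, then $g(\aa.s', p) = \mailbox{\co{\aa}} \Par g(s', p)$; by the $\Par$ clause of $\goodSym$ this term is good only if $\good{\mailbox{\co{\aa}}}$ or $\good{g(s', p)}$. The first is impossible, since an atom matches none of the three defining clauses of $\goodSym$, and the second is ruled out by the induction hypothesis applied to $s'$ (the side hypothesis $\lnot \good{p}$ being untouched). If instead $\alpha = \co{\aa}$ is an output, then $g(\co{\aa}.s', p) = \aa.g(s', p) \extc \tau.\Unit$; by the $\extc$ clause it is good only if $\good{\aa.g(s', p)}$ or $\good{\tau.\Unit}$, and both summands are prefixed terms, on which $\goodSym$ never holds. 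In either case $\lnot \good{g(s, p)}$, which closes the induction.

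The only point requiring care — the nearest thing to an obstacle — is the output case, where one must notice that $\tau.\Unit$ is \emph{not} a good state even though $\Unit$ is: goodness does not descend under the $\tau$-prefix, because the defining clauses of $\goodSym$ cover only $\Unit$, parallel composition and sum, and none of them unfolds a prefix. Once this is made explicit, the remainder is a routine structural check against the clauses of $g$.
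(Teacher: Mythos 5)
Your proof is correct and follows essentially the same route as the paper's: induction on $s$ with a case analysis on the leading action, using that $\goodSym$ holds only for $\Unit$, parallel compositions, and sums (the paper's proof is stated tersely as being "essentially the same" as that of \rlem{gen-reduces-if}, which is exactly this induction). Your explicit remark that goodness does not descend under the $\tau$-prefix, so $\tau.\Unit$ is not $\goodSym$, is the right key observation and matches the paper's intent.
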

\begin{proof}
The argument is essentially the same of \rlem{gen-reduces-if}
\leaveout{%
The proof is by induction on the sequence $s$.
In the base case $s = \varepsilon$. The generated test is $p$ and the hypothesis
tells us that $\lnot \good{p}$.
In the inductive case $s = \mu.s'$ and our inductive hypothesis tells us
that $\lnot \good{ g(s', p)}$.
We proceed by case-analysis on $\mu$.
If $\mu$ is an input then $g(\mu.s', p) = \mailbox{ \co{\mu} } \Par g(s', p)$ which reduces using the transition rule \parR and the inductive hypothesis tells us that $g(s', p)$ is unhappy,
as is $\co{\mu} \Par g(s', p)$ since $\lnot\good{\co{\mu}}$.
If $\mu$ is an output then $g(\mu.s', p) = \co{\mu}.(g(s', p)) \extc \tau.\Unit$ which
is unhappy.}
\end{proof}

\begin{lemma}
  \label{lem:gs-visible-action}
  For every $s \in \Actfin$,
  if $\gen{s}{q} \st{ \mu } o$
  then either
  \begin{enumerate}[(a)]
  \item $q \st{\mu} q'$, $s \in \Names^\star$, and $o = \mailbox{ \Pi \co{s}} \Par q'$, or
  \item
    $s = \traceA. \out{\mu}. \traceB$ for some $\traceA \in \Names^{\star}$ and $\traceB \in \Actfin$,
    and $o \equiv \mailbox{ \Pi \co{\traceA} } \Par \gen{\traceB}{q}$,
    and
    \begin{enumerate}[(i)]
    \item $\mu \in \Names$ implies $\gen{s}{q} \equiv \mailbox{ \Pi \co{\traceA} } \Par (\tau.\Unit \extc \mu.\gen{\traceB}{q})$,
    \item $\mu \in \out{\Names}$ implies $\gen{s}{q} \equiv \mailbox{ \Pi \co{\traceA} \Par \mu } \Par (\tau.\Unit \extc \mu.\gen{\traceB}{q})$.
    \end{enumerate}
  \end{enumerate}
\end{lemma}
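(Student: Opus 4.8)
The plan is to prove the statement by induction on the trace $s$, and to perform in each case a case analysis on the last rule used to derive the transition $\gen{s}{q}\st{\mu}o$. Since $\mu$ ranges only over visible actions, the rule \com (which yields a $\tau$) never applies at the top level, so every synchronisation case is discarded at once and only the ``one-sided'' rules remain to be examined. In the base case $s=\varepsilon$ we have $\gen{\varepsilon}{q}=q$, so any $\gen{\varepsilon}{q}\st{\mu}o$ is a move of $q$ itself; taking $q'=o$ and $\traceA=\varepsilon\in\Names^\star$ lands us in clause~(a), because $\mailbox{\Pi\co{\varepsilon}}\Par q'\equiv q'$.

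For the inductive step I would split on the shape of the first action of $s$, following the two recursive clauses in the definition of $g$ in \rfig{client-generators}. If the first action is an input, $s=\aa.s'$, then $\gen{s}{q}=\mailbox{\co\aa}\Par\gen{s'}{q}$ is a parallel composition, so the transition is derived by \parL or \parR. Under \parL the atom $\mailbox{\co\aa}$ can only emit $\co\aa$ (by \rlem{output-shape} an output is precisely a message in the mailbox), giving $\mu=\co\aa\in\out{\Names}$ and $o\equiv\gen{s'}{q}$; this is clause~(b) with $\traceA=\varepsilon$, $\out\mu=\aa$ and $\traceB=s'$. Under \parR the continuation moves, $\gen{s'}{q}\st{\mu}o'$ with $o=\mailbox{\co\aa}\Par o'$, and I apply the induction hypothesis to $\gen{s'}{q}$: in whichever clause it returns, it suffices to prepend $\aa$ to the input prefix $\traceA$ (so that $\traceA$ stays in $\Names^\star$) and to absorb the extra atom $\mailbox{\co\aa}$ into $\Pi\co{\traceA}$, respectively into $\Pi\co{s}$, using the monoid laws for $\Par$ recorded in \rfig{equiv}.

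If the first action is an output, $s=\co\aa.s'$, then $\gen{s}{q}=\aa.\gen{s'}{q}\extc\tau.\Unit$ is a guarded sum, so the transition comes from \extL or \extR. The summand $\tau.\Unit$ contributes only the invisible action and is excluded, while the summand $\aa.\gen{s'}{q}$ contributes, via \rinput, exactly the input $\mu=\aa\in\Names$ with $o=\gen{s'}{q}$. This is clause~(b)(i) with $\traceA=\varepsilon$, $\out\mu=\co\aa$ and $\traceB=s'$, the stated structural equivalence following from commutativity of $\extc$.

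The argument is essentially bookkeeping, and the only genuinely delicate point is keeping the decomposition $s=\traceA.\out{\mu}.\traceB$ coherent across the \parR step: one must check that extending $\traceA$ by the leading input preserves $\traceA\in\Names^\star$, and that the mailbox atom freshly exposed by the surrounding $\mailbox{\co\aa}\Par(-)$ is exactly the one demanded by the target form. Every remaining obligation reduces to an application of the structural-congruence laws of \rfig{equiv} together with the observation, \rlem{output-shape}, that the visible outputs of an \ACCS term are precisely the messages sitting at the top level of its mailbox.
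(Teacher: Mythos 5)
Your proof is correct and follows essentially the same route as the paper's: induction on $s$, unfolding the definition of $g$ according to whether the head of $s$ is an input or an output, invoking the induction hypothesis when the continuation $\gen{s'}{q}$ is the component that moves, and discharging the remaining obligations with the structural-congruence laws of \rfig{equiv}. The one organisational difference lies in the input-head case $s=\aa.s'$ when $\mu=\co{\aa}$: you split on the last rule used (\parL versus \parR), treating the atom-fires case directly and the continuation-fires case uniformly through the induction hypothesis, whereas the paper does not case on the rule there and simply concludes $o\equiv\gen{s'}{q}$ outright, a step that silently rests on the fact that in \ACCS all $\co{\aa}$-derivatives of a process coincide up to $\equiv$ (\rlem{output-shape}, or equivalently \outputdeterminacy). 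Your organisation avoids that appeal at the price of one extra use of the induction hypothesis; both are sound. One small bookkeeping point: in every branch that lands in clause~(b) you must also record the equivalences of sub-clauses (i) and (ii), not only the shape of $o$; your ``prepend $\aa$ to $\traceA$ and absorb the atom $\co{\aa}$'' step does cover them, since $\equiv$ is a congruence for $\Par$, and in the \parL branch they collapse to the immediate equality $\gen{s}{q}=\mu\Par\gen{\traceB}{q}$ with $\traceA=\varepsilon$, which is exactly what the paper verifies at the corresponding point of its own proof.
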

\begin{proof}
  The proof is by induction on $s$.

  In the case, $s = \varepsilon$, and hence
  by definition $\gen{s}{q} = q$. The
  hypotheses $\gen{s}{q} \st{ \mu } o$ implies
  $q \st{\mu} o$, and $o \equiv \Nil \Par o \equiv \Pi{ \varepsilon } \Par o$.

  In the inductive case, $s = \nu . s'$.
  We have two cases, depending on whether $\nu$ is an output action
  or an input action.

  Suppose $\nu$ is an output. In this case
  $\gen{s}{q} = \tau.\Unit \extc \co{\nu}.\gen{s'}{q}$.
  The hypothesis $\gen{s}{q} \st{ \mu } o$ ensures that $\out{\nu} = \mu$,
  thus $\mu$ is an input action.
  By letting $\traceA = \varepsilon$ and $\traceB = s'$ we obtain the required
  $$
  \gen{s}{q} = \tau.\Unit \extc \mu.\gen{\traceB}{q} \equiv \Pi \co{\traceA} \Par (\tau.\Unit \extc \mu.\gen{\traceB}{q})
  $$
  and $ o \equiv \Pi \co{\traceA} \Par \gen{\traceB}{q}$.

  Now suppose that $\nu$ is an input action.
  By definition
  \begin{equation}
    \label{eq:gs-action-shape-gensq}
    \gen{s}{q} = \out{\nu} \Par \gen{s'}{q}
  \end{equation}
  and the inductive hypothesis ensures that either
  \begin{enumerate}[(1)]
  \item\label{pt:gs-input-1} $q \st{\mu} q'$, $s' \in \Names^\star$, and $o' = \Pi \co{s'} \Par q'$, or
  \item\label{pt:gs-input-2} $s' =  s'_1 .\out{\mu}. s'_2$, for some $s'_1 \in \Names^{\star} $ and $\traceB \in \Actfin$, and
    \begin{equation}
      \label{eq:gs-input-shape-o'}
      o' \equiv  \Pi \out{s'_1} \Par \gen{s'_2}{q}
    \end{equation}
    and
\begin{align}
%    \begin{equation}
      \label{eq:gs-input-shape-gs'}
      \mu \in \Names &\text{ implies }\gen{s'}{q}  \equiv  \Pi \out{s'_1} \Par (\tau.\Unit \extc \mu.\gen{s'_2}{q})\\
    %% \end{equation}
    %% else
    %% \begin{equation}
      \label{eq:gs-input-shape-gs'-out}
      \mu \in \co{\Names} &\text{ implies } \gen{s'}{q}  \equiv  \Pi \out{s'_1} \Par \mu \Par (\tau.\Unit \extc \mu.\gen{s'_2}{q})
%    \end{equation}
\end{align}
  \end{enumerate}
  The action $\mu$ is either an input or an  output, and we organise the proof accorrdingly.

  Suppose $\mu$ is an input. Since $ \out{\nu}$ is an output, the transition
  $\gen{s}{q} \st{ \mu } o$ must be due to a transition $\gen{s'}{q} \st{\mu} o'$,
  thus \req{gs-action-shape-gensq} implies
  \begin{equation}
    \label{eq:gs-input-shape-o}
    o = \out{\nu} \Par o'
  \end{equation}

  In case \ref{pt:gs-input-1},
 then $s' \in \Names^\star$ and $\nu \in \Names$
  ensure $s \in \Names^\star$ and the equality
  $o \equiv \Pi \co{s} \Par q'$ follows from $ o' = \Pi \co{s'} \Par q' $
  and \req{gs-input-shape-o}.
 %% since $\nu$ is an input we know $s \in \Names^\star$,
 %%  and \req{gs-input-shape-o} implies $ o = \Pi \co{s} \Par q'$, thus we are done.

  In case \ref{pt:gs-input-2}, let $\traceA = \nu.s'_1 $, $\traceB = s'_2$.
  Since $\nu$ is an input we have $ \traceA \in \Names^{\star}$.
  The equalities $s = \nu.s'$  and $ s' =  s'_1 . \out{\mu} . s'_2$
  imply that $s = \traceA \out{\mu} \traceB$.
  The required $ o \equiv \Pi \out{\traceA} \Par g(s'_2,q)$
  follows from $o' \equiv \Pi \out{s'_1} \Par g(s'_2,q)$ and
  \req{gs-input-shape-o}.

  Now we proceed as follows,
  $$
  \begin{array}{lllr}
    \gen{s}{q} & = & \out{\nu} \Par \gen{s'}{q} &\text{By }\req{gs-action-shape-gensq} \\
    & \equiv & \out{\nu} \Par (\Pi \out{s'_1} \Par (\tau.\Unit \extc \mu.g(s'_2,q)))&\text{By } \req{gs-input-shape-gs'} \\
    & \equiv &
    (\out{\nu} \Par \Pi \out{s'_1}) \Par (\tau.\Unit \extc \mu.g(s'_2,q))
    & \text{Associativity}\\
    & \equiv & \Pi \out{\traceA} \Par (\tau.\Unit \extc \mu.g(s'_2,q)) & \text{Because } \traceA = \nu.s'_1 \\
    & \equiv & \Pi \out{\traceA} \Par (\tau.\Unit \extc \mu.q(\traceB,q)) & \text{Because } \traceB = s'_2 \\
  \end{array}
  $$

  Now suppose that $\mu$ is an output.
  Then either $\co{\nu} = \mu$ or $\co{\nu} \neq \mu$.

  In the first case we let $\traceA = \varepsilon$ and $\traceB = s'$.
  \req{gs-action-shape-gensq} and $ \co{\nu} = \mu $ imply
  $\gen{s}{q} = \mu \Par \gen{\traceB}{q}$ from which we obtain the required
  $\gen{s}{q} \equiv \Pi \out{\traceA} \Par \mu \Par \gen{\traceB}{q} $,
  and $o \equiv  \Pi \out{\traceA} \Par \gen{\traceB}{q}$.

  If $\co{\nu} \neq \mu$ then \req{gs-action-shape-gensq}
  ensures that  the transition $\gen{s}{q} \st{ \mu } o $
  must be due to $\gen{s'}{q} \st{ \mu } o'$ and \req{gs-input-shape-o} holds.
  We use the inductive hypothesis.

  If \ref{pt:gs-input-1} is true, we proceed as already discussed.
%% then $s' \in \Names^\star$ and $\nu \in \Names$
%%   ensure $s \in \Names^\star$ and the equality
%%   $o \equiv \Pi \co{s} \Par q'$ follows from $ o' = \Pi \co{s'} \Par q' $
%%   and \req{gs-input-shape-o}.
  In case \ref{pt:gs-input-2} holds, let $\traceA =  \nu. s'_1$, we have that $\traceA \in \Names^{\star}$.
  Let $\traceB = s'_2$, now we have that
  $$
  \begin{array}{lllr}
    \gen{s}{q} & = & \out{\nu} \Par \gen{s'}{q} &\text{By }\req{gs-action-shape-gensq} \\
    & \equiv & \out{\nu} \Par (\Pi \out{s'_1} \Par \mu \Par g(s'_2,q)) & \text{By }\req{gs-input-shape-gs'-out} \\
    & \equiv & (\out{\nu} \Par \Pi \out{s'_1}) \Par \mu \Par g(s'_2,q) &\text{Associativity}\\
    %    & = &  \Pi \out{ \nu s\traceA} \mu \Par c(s\traceB) \\
    & = &  \Pi \out{ \traceA } \Par \mu \Par g(s'_2,q) & \text{Because }\traceA =  \nu. s'_1\\
    & = &  \Pi \out{ \traceA } \Par \mu \Par \gen{\traceB}{q} & \text{Because }\traceB =  s'_2\\
  \end{array}
$$
\end{proof}

\begin{lemma}%[\coqMT{inversion_gen_test_mu}]
  \label{lem:inversion-gen-accs-mu}
  For every $s \in \Actfin$,
  if $\gen{s}{q} \st{ \mu } p$
  then either:
  \begin{enumerate}[(a)]
  \item\label{pt:inversion-gen-accs-mu-left}
    there exists $q'$ such that $q \st{\mu} q'$, $s \in \Names^\star$ with $p \equiv \Pi \co{s} \Par q'$, or
  \item\label{pt:inversion-gen-accs-mu-right}
    $s = \traceA. \out{\mu}. \traceB$ for some $\traceA \in \Names^{\star}$ and $\traceB \in \Actfin$
    with $p \equiv \gen{\traceA.\traceB}{q}$.
  \end{enumerate}
\end{lemma}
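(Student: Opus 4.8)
The plan is to obtain this lemma as an immediate corollary of the preceding \rlem{gs-visible-action}, which already carries out the induction on $s$ and records, in addition, the precise shape of $\gen{s}{q}$ itself. First I would apply \rlem{gs-visible-action} to the hypothesis $\gen{s}{q} \st{\mu} p$; its two cases correspond exactly to the two disjuncts we are after, so that the only remaining task is to weaken its (stronger) conclusions to the form stated here.

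In the first case \rlem{gs-visible-action} gives $q \st{\mu} q'$, $s \in \Names^\star$, and the equality $p = \mailbox{\Pi\co{s}} \Par q'$. Since $\equiv$ is reflexive and the box $\mailbox{-}$ is merely colouring, this immediately yields $p \equiv \Pi\co{s} \Par q'$, which is disjunct \ref{pt:inversion-gen-accs-mu-left} verbatim; no further work is needed here.

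In the second case \rlem{gs-visible-action} gives $s = \traceA.\out{\mu}.\traceB$ with $\traceA \in \Names^\star$ and $p \equiv \mailbox{\Pi\co{\traceA}} \Par \gen{\traceB}{q}$. To reach disjunct \ref{pt:inversion-gen-accs-mu-right} I would only need the structural identity $\gen{\traceA.\traceB}{q} \equiv \Pi\co{\traceA} \Par \gen{\traceB}{q}$ whenever $\traceA$ consists solely of input actions. This follows by a routine induction on $\traceA \in \Names^\star$, directly from the defining clause $g(\aa.\traceB, q) = \mailbox{\co{\aa}} \Par g(\traceB, q)$ of \rfig{client-generators}, using only the commutative-monoid laws for $\Par$ (rules \rulename{S-pass} and \rulename{S-pzero}) together with the congruence rule \rulename{S-ppar} to push the equivalence underneath the accumulated atoms. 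Combining this identity with $p \equiv \Pi\co{\traceA} \Par \gen{\traceB}{q}$ by transitivity of $\equiv$ gives $p \equiv \gen{\traceA.\traceB}{q}$, as required.

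Since all the genuine combinatorial content — the case analysis on whether the fired action $\mu$ is an input or an output, and the bookkeeping of the mailbox accumulated along the input prefixes of $s$ — is already discharged inside \rlem{gs-visible-action}, I do not expect any real obstacle. The only point demanding a little care is the congruence-folding of the input prefix $\traceA$ into the mailbox $\Pi\co{\traceA}$, and even that is a two-line induction on $\traceA$. As an alternative, one could reprove the statement from scratch by the same induction on $s$ used for \rlem{gs-visible-action}, simply discarding its auxiliary clauses (i)--(ii); I would nonetheless favour the corollary route, as it avoids duplicating that argument and isolates the weakening step cleanly.
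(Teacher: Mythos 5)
Your proposal is correct, but it takes a genuinely different route from the paper. The paper proves \rlem{inversion-gen-accs-mu} from scratch, by its own induction on $s$ with the same case analysis (base case $s=\varepsilon$; inductive case $s = \nu.s'$ split on whether $\nu$ is an input or an output, and, for inputs, on whether the transition is fired by the atom $\co{\nu}$ or by $\gen{s'}{q}$), essentially replaying the structure of the proof of \rlem{gs-visible-action} rather than invoking it. Your derivation-as-corollary works: case (a) of \rlem{gs-visible-action} is literally case (\ref{pt:inversion-gen-accs-mu-left}) up to reflexivity of $\equiv$ (note, incidentally, that although that lemma's statement writes $o = \mailbox{\Pi\co{s}} \Par q'$ with equality, its proof only establishes $o \equiv \Pi\co{s} \Par q'$ — which is all you need anyway), and case (b) reduces to case (\ref{pt:inversion-gen-accs-mu-right}) via the folding identity $\gen{\traceA.\traceB}{q} \equiv \Pi\co{\traceA} \Par \gen{\traceB}{q}$ for $\traceA \in \Names^\star$, which indeed follows by a routine induction on $\traceA$ from the clause $g(\aa.\trace,r) = \co{\aa} \Par g(\trace,r)$ and the monoid laws for $\Par$; the paper itself silently uses this identity (e.g.\ in the proof of \rlem{gs-tau}, where it rewrites $\Pi\co{\traceB} \Par \gen{\traceC}{q}$ as $\gen{\traceB.\traceC}{q}$), so you are not assuming anything the paper does not. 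What your route buys is economy and a clean isolation of the weakening step — no duplicated induction; what the paper's route buys is a self-contained proof whose conclusion in case (\ref{pt:inversion-gen-accs-mu-left}) does not inherit the slight $=$/$\equiv$ imprecision of \rlem{gs-visible-action}, and whose structure matches the independent lemma-by-lemma organisation of the Coq development.
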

\begin{proof}
The proof is by induction over the sequence $s$.

In the base case $s = \varepsilon$ and we have $\gen{\varepsilon}{q} = q$.
We show \rpt{inversion-gen-accs-mu-left} and choose $q' = p$.
We have $\gen{\varepsilon}{q} = q \st{\mu} p$,
$p \equiv \Pi \co{\varepsilon} \Par q'$ and $\varepsilon \in \Names^\star$
as required.

In the inductive case $s = \nu.s'$.
We proceed by case-analysis on $\nu$.
If $\nu$ is an input, then $g (\nu.s', q) = \co{\nu} \Par g (s', q)$.
The hypothesis $g (\nu.s', q) \st{\mu} p$ implies that either:
\begin{enumerate}[(i)]
\item $\co{\nu} \st{\mu} \Nil$ with $p = \Nil \Par g (s', q)$ and $\co{\nu} = \mu$, or
\item $g (s', q) \st{\mu} \hat{p}$ with $p = \co{\nu} \Par \hat{p}$.
\end{enumerate}

In the first case we show \rpt{inversion-gen-accs-mu-right}.
We choose $\traceA = \varepsilon$, $\traceB = s'$.
We have $s = \mu.s' = \co{\nu}.s'$, $p = \Nil \Par g (s', q) \equiv g (\varepsilon.s', q)$
and $\varepsilon \in \Names^\star$ as required.

In the second case the inductive the hypothesis tells us that either:
\begin{enumerate}[(H-a)]
\item\label{pt:inversion-gen-accs-mu-left-IH}
  there exists $q'$ such that $q \st{\mu} q'$, $s' \in \Names^\star$
  with $\hat{p} \equiv \Pi \co{s'} \Par q'$, or
\item\label{pt:inversion-gen-accs-mu-right-IH}
  $s = \traceA. \out{\mu}. \traceB$ for some $\traceA \in \Names^{\star}$ and $\traceB \in \Actfin$
  with $\hat{p} \equiv \gen{\traceA.\traceB}{q}$.
\end{enumerate}
\rpt{inversion-gen-accs-mu-left-IH} or \rpt{inversion-gen-accs-mu-right-IH} is true.

If \rpt{inversion-gen-accs-mu-left-IH} is true then
$s' \in \Names^\star$ and there exists $q''$ such that $q \st{\mu}
q''$ with $\hat{p} \equiv \Pi \co{s'} \Par q''$.
We prove \rpt{inversion-gen-accs-mu-left}. We choose $q' = q''$ and $s = \nu.s'$.
We have $p \equiv \co{\nu} \Par \hat{p} \equiv \co{\nu} \Par \Pi
\co{s'} \Par q'' \equiv \Pi \co{\nu.s'} \Par q''$ and $\nu.s' \in
\Names^\star$ as required.

If \rpt{inversion-gen-accs-mu-right-IH} is true then
$s' = s'_1. \out{\mu}. s'_2$ for some $s'_1 \in \Names^{\star}$ and $s'_2 \in \Actfin$
with $\hat{p} \equiv \gen{s'_1.s'_2}{q}$.
We prove \rpt{inversion-gen-accs-mu-left}.
We choose $\traceA = \nu.s'_1$ and $\traceB = s'_2$.
We have $p \equiv \co{\nu} \Par \hat{p} \equiv \co{\nu} \Par
\gen{s'_1.s'_2}{q} \equiv \gen{\nu.s'_1.s'_2}{q}$ as required.

If $\nu$ is an output, then $g (\nu.s', q) = \co{\nu}.(g (s', q)) \extc \tau.\Unit$.
We prove \rpt{inversion-gen-accs-mu-right} and choose $\traceA = \varepsilon$, $\traceB = s'$.
The hypothesis $g (\nu.s', q) \st{\mu} p$ implies that $\mu = \co{\nu}$ and
$p \equiv g (s', q) \equiv g (\varepsilon.s', q)$ as required.
\end{proof}

\begin{lemma}
  \label{lem:gs-tau}
  For every $s \in \Actfin$,
  if $\gen{s}{q} \st{ \tau } o$
  then either
  \begin{enumerate}[(a)]
  \item\label{pt:gs-tau-1} $ \good{o}$, or
  \item\label{pt:gs-tau-2} $s \in \Names^\star$, $q \st{\tau} q'$, and $o = \Pi \co{s} \Par q'$, or
  \item\label{pt:gs-tau-3} $s \in \Names^\star$, $q \st{\nu} q'$, and $s =  \traceA . \nu . \traceB$, and $o \equiv \Pi \co{\traceA.\traceB} \Par q'$ , or
  \item\label{pt:gs-tau-4} $o \equiv \gen{\traceA.\traceB.\traceC}{q}$ where $ s = \traceA . \mu . \traceB . \co{\mu} . \traceC$ with $ \traceA . \mu  . \traceB \in \Names^\star$.
  \end{enumerate}
\end{lemma}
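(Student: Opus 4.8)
The plan is to argue by structural induction on the trace $s$, reading off the possible shapes of a $\tau$-transition of $\gen{s}{q}$ from the defining clauses of $g$ in Figure~\ref{fig:client-generators}. The base case $s = \varepsilon$ is immediate: there $\gen{\varepsilon}{q} = q$, so $\gen{s}{q} \st{\tau} o$ is just $q \st{\tau} o$, and we land in case~(\ref{pt:gs-tau-2}) by taking $q' = o$ and noting $\Pi\co{\varepsilon} \Par q' \equiv o$. (Recall that $\Pi$ of a multiset is only determined up to the monoid laws contained in $\equiv$, which is why clauses (\ref{pt:gs-tau-2})--(\ref{pt:gs-tau-4}) are to be read modulo $\equiv$.)

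For the inductive step I would split on the head of $s = \nu.s'$. If $\nu = \co{\aa}$ is an output, then $\gen{s}{q} = \aa.\gen{s'}{q} \extc \tau.\Unit$ is a sum, so no \com rule applies at top level and the input-guarded summand $\aa.\gen{s'}{q}$ cannot emit $\tau$; hence the only $\tau$-transition is the one derived by \extR from $\tau.\Unit \st{\tau} \Unit$, giving $o = \Unit$ and thus case~(\ref{pt:gs-tau-1}) since $\good{\Unit}$. If $\nu = \aa$ is an input, then $\gen{s}{q} = \mailbox{\co{\aa}} \Par \gen{s'}{q}$, and since the atom $\mailbox{\co{\aa}}$ cannot perform $\tau$, the transition $\gen{s}{q}\st{\tau} o$ must arise either (2)~from an internal $\tau$ of the residual, $\gen{s'}{q}\st{\tau} o'$ with $o = \mailbox{\co{\aa}}\Par o'$, or (3)~from a synchronisation $\mailbox{\co{\aa}}\st{\co{\aa}}\Nil$ with $\gen{s'}{q}\st{\aa} o''$ and $o \equiv o''$.

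In subcase~(2) I would apply the induction hypothesis to $\gen{s'}{q}\st{\tau} o'$ and lift the resulting clause to $s = \aa.s'$: a good $o'$ yields a good $o$; a clause~(\ref{pt:gs-tau-2}) or~(\ref{pt:gs-tau-3}) for $s'$ becomes the same clause for $s$ after prepending the input $\aa$ to the input-part and using $\mailbox{\co{\aa}}\Par\Pi\co{t} \equiv \Pi\co{\aa.t}$; and a clause~(\ref{pt:gs-tau-4}) decomposition $s' = \traceA'.\mu.\traceB'.\co{\mu}.\traceC'$ becomes $s = (\aa.\traceA').\mu.\traceB'.\co{\mu}.\traceC'$ with $o \equiv \gen{\aa.\traceA'.\traceB'.\traceC'}{q}$, using $\mailbox{\co{\aa}}\Par\gen{t}{q} = \gen{\aa.t}{q}$. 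In subcase~(3) the key move is to feed the visible input transition $\gen{s'}{q}\st{\aa} o''$ into the inversion lemma for visible actions, \rlem{inversion-gen-accs-mu}: its first alternative ($q\st{\aa} q'$, $s'\in\Names^\star$, $o''\equiv\Pi\co{s'}\Par q'$) yields case~(\ref{pt:gs-tau-3}) with $\traceA = \varepsilon$, $\nu = \aa$, $\traceB = s'$, where the leading message is consumed by $q$ itself, while its second alternative ($s' = \traceA'.\co{\aa}.\traceB'$ with $\traceA'\in\Names^\star$, $o''\equiv\gen{\traceA'.\traceB'}{q}$) yields case~(\ref{pt:gs-tau-4}) with $\traceA=\varepsilon$, $\mu=\aa$, $\traceB=\traceA'$, $\traceC=\traceB'$, where the leading message annihilates the matching $\co{\aa}$ occurring inside $s'$.

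The routine part is the re-indexing of $\traceA,\traceB,\traceC$ and the $\equiv$-rewriting via the commutative-monoid laws for $\Par$; the only real content is getting the case split on the $\tau$-derivation right in the input-head case and correctly invoking \rlem{inversion-gen-accs-mu}, so that the two ways a buffered message can be absorbed, namely by $q$ versus by a later output action of the generator, map onto clauses~(\ref{pt:gs-tau-3}) and~(\ref{pt:gs-tau-4}) respectively. I expect the main obstacle to be precisely this last alignment, together with the bookkeeping needed to keep the $\Names^\star$ side-conditions on the prefixes $\traceA$ (respectively $\traceA.\mu.\traceB$) intact when prepending the input $\aa$.
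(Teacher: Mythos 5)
Your proposal is correct and follows essentially the same route as the paper's proof: induction on $s$, with the output-head case resolved by the $\tau$-summand yielding a $\goodSym$ state, and the input-head case split between a \parR-derived $\tau$ (handled by the induction hypothesis, prepending $\aa$ to the input prefix) and a \com-derived synchronisation (handled by inverting the visible transition of $\gen{s'}{q}$, whose two alternatives map onto clauses~(\ref{pt:gs-tau-3}) and~(\ref{pt:gs-tau-4})). The only deviation is that you invoke \rlem{inversion-gen-accs-mu} where the paper uses \rlem{gs-visible-action}; the two are interchangeable here, since for $\traceA \in \Names^\star$ one has $\gen{\traceA.\traceB}{q} = \Pi\,\co{\traceA} \Par \gen{\traceB}{q}$, so this does not change the argument.
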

\begin{proof}
  By induction on the structure of $s$.

  In the base case $s = \varepsilon$. We prove \ref{pt:gs-tau-2}.
  Trivially $s \in \Names^\star$, and by definition $\gen{\varepsilon}{q} = q$,
  the hypothesis implies therefore that $q \st{\tau} o$. The $q'$ we are after is $o$ itself,
  for $o \equiv \Nil \Par o =\Pi \co{s} \Par o$.

  In the  inductive case $s = \nu.s'$. We proceed by case analysis on
  whether $\nu \in \co{\Names}$ or $\nu \in \Names$.

  If $\nu$ is an output, by definition $ \gen{s}{q} = \tau.\Unit \extc \co{\nu}.gen{s'}{q}$.
  Since $\co{\nu}.\gen{s'}{q} \stable$, the silent move $ \gen{s}{q} \st{\tau} o $
  is due to rule \extL, thus $o = \Unit \extc\co{\nu}.\gen{s'}{q}$, and thus $\good{o}$.
  We have proven \ref{pt:gs-tau-1}.

  Suppose now that $\nu$ is an input, by definition
  \begin{equation}
    \label{eq:gs-tqu-shape-test}
    \gen{s}{q} = \co{\nu} \Par \gen{s'}{q}
  \end{equation}
  The silent move  $ \gen{s}{q} \st{\tau} o $ must have been derived via the rule \com,
  or the rule \parR.

  If \com was employed we know that
  $$
  \begin{prooftree}
    \co{\nu} \st{ \out{\nu} } \Nil
    \quad
    \begin{prooftree}
      \vdots
      \justifies
      \gen{s'}{q} \st{ \nu } o'
    \end{prooftree}
    \justifies
    \co{\nu} \Par \gen{s'}{q} \st{\tau} \Nil \Par o'
  \end{prooftree}
  $$
  and thus $o \equiv o'$.
  Since $\nu$ is an input and $\gen{s'}{q} \st{ \nu } o'$,
  \rlem{gs-visible-action} ensures that either
  \begin{enumerate}[(1)]
  \item\label{pt:gs-input-aux-1} $q \st{\nu} q'$, $s' \in \Names^\star$, and $o' = \Pi \co{s'} \Par q'$, or
  \item\label{pt:gs-input-aux-2}
    $s' = s'_1. \out{\nu}. s'_2$ for some $s'_1 \in \Names^{\star}$ and $s'_2 \in \Actfin$,
    and
    \begin{align}
      o' & \equiv \mailbox{ \Pi \co{s'_1} } \Par \gen{s'_2}{q}\\
      \gen{s}{q} & \equiv \mailbox{ \Pi \co{s'_1}}  \Par (\tau.\Unit \extc \nu.\gen{s'_2}{q})
    \end{align}
  \end{enumerate}

  In case \ref{pt:gs-input-aux-1} we prove \rpt{gs-tau-3}.
  Since $\nu$ is an input, $s' \in \Names^\star$ ensures that $s \in \Names^\star$.
  By letting $\traceA = \varepsilon$ and $\traceB = s'$ we obtain $s = \traceA.\nu.\traceB$.
  We have to explain why $o \equiv \mailbox{ \Pi \co{\traceA.\traceB} } \Par q'$.
  This follows from the definitions of $\traceA$ and $\traceB$, from
  $o \equiv \Nil \Par o' $ and from $o' \equiv  \mailbox{ \Pi \co{s'}} \Par q'$.

  In case \ref{pt:gs-input-aux-2} we prove \rpt{gs-tau-4}.
  Let $\traceA = \varepsilon$, $\traceB =  s'_1$ and $\traceC =  s'_2$.
  $$
  \begin{array}{lllr}
    s' & = & \nu s'_1 . \co{\nu} . s'_2  & \text{By inductive hypothesis}\\
    \nu.s' & = & \nu .s'_1 . \co{\nu} . s'_2  \\
    s & = & \nu .s'_1 . \co{\nu} . s'_2  & \text{Because } s = \nu.s'\\
    s & = & \traceA . \nu .\traceB . \co{\nu} . \traceC  & \text{By definition}
  \end{array}
  $$
  and $ \traceA . \nu .\traceB \in \Names^\star$ as required.
  Moreover $o' = \mailbox{ \Pi \co{s'_1} } \Par \gen{s'_2}{q} =  \mailbox{ \Pi \co{\traceB} } \Par \gen{\traceC}{q} = \gen{\traceB.\traceC}{q} = \gen{\traceA.\traceB.\traceC}{q}$
  as required. This concludes the argument due to an applicaton of \com.

  If \parR was employed we know that
  $$
  \begin{prooftree}
    \begin{prooftree}
      \vdots
      \justifies
       \gen{s'}{q} \st{ \tau } o'
    \end{prooftree}
    \justifies
    \co{\nu} \Par \gen{s'}{q} \st{\tau} \co{\nu} \Par o'
  \end{prooftree}
  $$
  thus $     \gen{s'}{q} \st{ \tau } o'$ and
  \begin{equation}
    \label{eq:gs-tau-shape-o}
    o = \mailbox{ \co{\nu} } \Par o'
  \end{equation}

  Since $s'$ is smaller than $s$, thanks to $\gen{s'}{q} \st{ \tau } o'$ we apply the inductive hypothesis
  to obtain either
  \begin{enumerate}[(i)]
  \item\label{pt:ih-gs-tau-1} $ \good{o'}$, or
  \item\label{pt:ih-gs-tau-2} $s' \in \Names^\star$, $q \st{\tau} q'$, and $o' = \mailbox{ \Pi \co{s'} } \Par q'$, or
  \item\label{pt:ih-gs-tau-3} $s' \in \Names^\star$, $q \st{\mu} q'$, and $s' =  s'_1 . \mu . s'_2$, or
  \item\label{pt:ih-gs-tau-4} $o' \equiv \gen{s'_1.s'_2.s'_3}{q}$ where $ s' = s'_1 . \mu . s'_2 . \co{\mu} . s'_3$ with $ s'_1 . \mu  . \traceB' \in \Names^\star$,
  \end{enumerate}

  If \ref{pt:ih-gs-tau-1} then \req{gs-tau-shape-o} implies \ref{pt:gs-tau-1}.
  If \ref{pt:ih-gs-tau-2} then $s = \nu.s' $ and the assumption that $\nu$ is input
  imply that $s \in \Names^\star$. \req{gs-tau-shape-o} and $o' = \mailbox{ \Pi \co{s'}} \Par q'$
  imply that $o = \Pi \co{s} \Par q'$. We have proven \ref{pt:gs-tau-2}.
  If \ref{pt:ih-gs-tau-2} we prove \ref{pt:gs-tau-2}, because
  $ s' \in \Names^\star$ ensures $s \in \Names^\star$ and $s' = s'_1 . \mu . s'_2$
  let use prove $s = \traceA . \nu . \traceB$ by letting $\traceA = \nu.s'_1$ and $\traceB = s'_2$.

  If \ref{pt:ih-gs-tau-4} we prove \ref{pt:gs-tau-4}.
  We have $o' \equiv \gen{s'_1.s'_2.s'_3}{q}$
  and $s' = s'_1 . \lambda . s'_2 . \out{\lambda} .s'_3$
  with $s'_1 . \lambda . s'_2 \in \Names^{\star}$.

  Let $\traceA = \nu . s'_1 $, $\traceB =  s'_2$ and $\traceC = s'_3$.
  Since  $ s'_1 . \mu  . \traceB' \in \Names^\star$, we have $\traceA . \mu  . \traceB \in \Names^\star$.
  We also have
  $$
  \begin{array}{lllr}
    s' & = & s'_1 . \mu . s'_2 . \co{\mu} . s'_3 & \text{By inductive hypothesis}\\
    \nu.s' & = & \nu.s'_1 . \mu . s'_2 . \co{\mu} . s'_3 \\
    s & = & \nu.s'_1 . \mu . s'_2 . \co{\mu} . s'_3 & \text{Because } s = \nu.s'\\
    s & = & \traceA . \mu . \traceB . \co{\mu} . \traceC & \text{By definition }
  \end{array}
  $$
  It remains to prove that $o \equiv \gen{\traceA.\traceB.\traceC}{q}$.
  This is a consequence of \req{gs-tau-shape-o}, of
  $o' \equiv \gen{s'_1.s'_2.s'_3}{q}$, and of the definitions of $\traceA, \traceB,$ and $\traceC$.
\end{proof}

%% \begin{lemma}
%%   \label{lem:gs-reduces}
%%   If $q \st{\tau}$ then for every $s \in \Actfin . \gen{s}{q} \st{\tau}$
%% \end{lemma}
%% \begin{proof}
%%   By induction on $s$.
%%   In the base case we use the equality
%%   $g\gen{\varepsilon}{q} = q$ and the hypothesis.
%%   In the inductive case $s = \mu . s'$.
%%   If $\mu \in \out{\Names}$, then $\gen{s}{q} = \tau.\Unit \extc \co{\mu}.\gen{s'}{q} \st{ \tau } \Unit \extc \co{\mu}.\gen{s'}{q}$.
%%   If $\mu \in \Names$, then $\gen{s}{q} = \co{\mu} \Par \gen{s'}{q}$ and the inductive hypothesis ensures that $\gen{s'}{q} \st{\tau}$, thus $\gen{s}{q} \st{\tau}$.
%% \end{proof}

\begin{lemma}
\label{lem:stable-derivative-gs-happy}
For every $s \in \Actfin$, and process $q$
such that or $q \st{\tau} q'$ implies $\good{q'}$,
and for every $\mu \in \Names. q \st{\mu} q'$ implies $\good{q'}$,
if $\gen{s}{q} = o_0 \st{\tau} o_1 \st{\tau} o_2 \st{\tau} \ldots o_n \stable$ and $n > 0$ then $\good{o_i}$ for some $i \in [1,n]$.
\end{lemma}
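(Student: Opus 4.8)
The plan is to peel off the first silent step $o_0 = \gen{s}{q} \st{\tau} o_1$ and apply the inversion lemma for $\tau$-transitions of generated clients, \rlem{gs-tau}, which tells us that $o_1$ has exactly one of four shapes. In three of them I can conclude $\good{o_1}$ immediately, so that $i = 1$ works. If case (a) of \rlem{gs-tau} holds we are done by definition. In case (b) we have $q \st{\tau} q'$ and $o_1 = \Pi \co{s} \Par q'$; the $\tau$-hypothesis on $q$ gives $\good{q'}$, and since $\good{\cdot}$ holds on a parallel composition as soon as it holds on one component, $\good{o_1}$ follows. Case (c) is analogous: there $s \in \Names^\star$ and $q \st{\nu} q'$, so $\nu$ is necessarily an input (all actions of $s$ are names), the input hypothesis on $q$ yields $\good{q'}$, and $o_1 \equiv \Pi \co{\traceA.\traceB} \Par q'$ is good by \rlem{happy-sc} together with the clause of $\good{\cdot}$ for parallel composition.

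The only remaining possibility is case (d) of \rlem{gs-tau}, the feedback/annihilation step, where $s = \traceA.\mu.\traceB.\co{\mu}.\traceC$ with $\traceA.\mu.\traceB \in \Names^\star$ and $o_1 \equiv \gen{\traceA.\traceB.\traceC}{q}$. Here the underlying client $q$ is untouched and the trace strictly shortens, two symbols ($\mu$ and $\co{\mu}$) disappearing. I would therefore run the whole argument as a strong induction on the length $\len{s}$ of the trace. In case (d) I transport the residual computation $o_1 \st{\tau} \cdots \st{\tau} o_n \stable$ along the equivalence $o_1 \equiv \gen{s'}{q}$, where $s' = \traceA.\traceB.\traceC$: by \rlem{st-compatible-with-equiv} (equivalently \rcor{equiv-preserves-transitions}) this computation from $o_1$ induces, step by step, a computation from $\gen{s'}{q}$ of the same length whose states are pairwise $\equiv$-equivalent to the $o_k$, and whose last state is again stable (stability is preserved by $\equiv$, again by \rlem{st-compatible-with-equiv}). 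When $n \geq 2$ this induced computation has positive length, so the induction hypothesis applied to the shorter trace $s'$ produces a good state among $o_2, \dots, o_n$; transporting goodness back along $\equiv$ via \rlem{happy-sc} gives $\good{o_j}$ for some $j \in [2,n] \subseteq [1,n]$, as required.

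The delicate point, and the step I expect to be the main obstacle, is case (d) with $n = 1$: the single silent step is the annihilation and $o_1$ is already stable, so neither the inversion lemma nor the induction hypothesis applies directly and one must rule out that $\gen{s'}{q}$ is stable and unsuccessful. The plan is to characterise the shape of a \emph{stable} generated client using \rlem{gs-visible-action} and \rlem{gs-tau}: a process $\gen{s'}{q}$ can be stable only if $s'$ contains no output (otherwise the $\tau.\Unit$ summand created by the first output of $s'$ is enabled), so that $s' \in \Names^\star$ and $\gen{s'}{q} \equiv \Pi \co{s'} \Par q$ with $q$ itself stable and no input of $q$ matching $\Pi \co{s'}$; in that configuration $\good{\gen{s'}{q}}$ reduces to $\good{q}$. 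It is here that the liveness carried by the hypotheses on $q$ must be used in full, rather than only the two successor conditions: for the intended convergence tester $q = \tau.\Unit$ this stable configuration never arises, because $\tau.\Unit$ always offers the enabled move $\tau.\Unit \st{\tau} \Unit$, so $\gen{s'}{q}$ is never stable and the case is vacuous. Making this last argument precise — the structural classification of stable reducts of $g$ and its interaction with the liveness of $q$ — is the crux of the proof.
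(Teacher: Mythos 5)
Your proposal follows the paper's proof essentially line for line: invert the first $\tau$-step with \rlem{gs-tau}, close cases (a)--(c) immediately from the two hypotheses on $q$ (plus \rlem{happy-sc} and the parallel-composition clause of $\goodSym$), and handle the annihilation case (d) by strong induction on $\len{s}$, transporting the residual computation along $\equiv$ with \rlem{st-compatible-with-equiv} and pulling success back through $\equiv$. Up to that point the two arguments coincide exactly.

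Where the comparison becomes interesting is the sub-case you single out as the crux, case (d) with $n=1$. The paper dismisses it with the bare assertion that $o_1 \st{\tau}$ (hence $n\geq 2$, the transported computation from $\gen{\traceA.\traceB.\traceC}{q}$ is non-empty, and the inductive hypothesis applies), with no justification given. Your diagnosis of this step is correct, and more careful than the paper's: the assertion does \emph{not} follow from the two successor hypotheses. It needs the seed to be live, \ie $q \st{\tau}$, which propagates to $\gen{s'}{q} \st{\tau}$ by \rlem{gen-reduces-if} and which does hold for the intended seed $\tau.\Unit$ of $\testconvSym$. Without it the lemma as literally stated is false: take $q = h(L) = \Pi\setof{\mu.\Unit}{\mu \in L}$ (the acceptance-test seed, which satisfies both stated hypotheses) or even $q=\Nil$, and $s = \aa.\co{\aa}$; then $\gen{s}{q} = \co{\aa} \Par (\aa.q \extc \tau.\Unit)$, and the communication step gives $\gen{s}{q} \st{\tau} \Nil \Par q$, which is stable and not $\goodSym$ --- a non-empty maximal $\tau$-sequence with no successful state. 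So the garbled ``such that or'' in the statement is presumably hiding the missing hypothesis $q \st{\tau}$, and the structural classification of stable reducts via \rlem{gs-stable} that you sketch cannot rescue the general case; the correct completion is the one you name: add the liveness hypothesis (or specialise to the convergence tester), after which $\gen{s'}{q}$ is never stable, your $n=1$ case is vacuous, and the rest of your argument closes the proof exactly as the paper's does.
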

\begin{proof}
  \rlem{gs-tau} implies that one of the following is true,
  \begin{enumerate}[(a)]
  \item\label{pt:gstau-1} $\good{o_1}$, or
  \item\label{pt:gstau-2} $s \in \Names^\star$, $q \st{\tau} q'$, and $o_1 = \mailbox{ \Pi \co{s} } \Par q'$, or
  \item\label{pt:gstau-3} $s \in \Names^\star$, $q \st{\mu} q'$, and $s =  \traceA . \mu . \traceB$, and $o_1 \equiv \Pi \mailbox{ \co{\traceA.\traceB} } \Par q'$ , or
  \item\label{pt:gstau-4} $o_1 \equiv \gen{\traceA.\traceB.\traceC}{q}$ where $ s = \traceA . \mu . \traceB . \co{\mu} . \traceC$ with $ \traceA . \mu  . \traceB \in \Names$.
  \end{enumerate}
  If \ref{pt:gstau-1} we are done.
  If \ref{pt:gstau-2} or \ref{pt:gstau-3} then $\good{q'}$, and thus $\good{o_1}$.

  In the base case $\len{s} = 0$, thus \ref{pt:gstau-4} is false. It follows that $\good{o_1}$.

  In the inductive case $s = \nu . s'$.
  We have to discuss only the case in which \ref{pt:gstau-4} is true.
  The inductive hypothesis ensures that
  \begin{center}
    For every $s' \in \bigcup_{i = 0}^{\len{s}-1}$, if $\gen{s'}{q} \st{\tau} o'_1 \st{\tau} o'_2 \st{\tau} \ldots o'_m \stable$ and $m > 0$
    then $\good{o'_j}$ for some $j$.
  \end{center}
  Note that $ o_1 \st{\tau} $ so the reduction sequence $o_1 \wt{} o_n$ cannot be empty, thus $ m > 0$. This and  $\len{\traceA\traceB\traceC} < \len{s}$
  let us apply the inductive hypothesis to state that
  $$
  \gen{\traceA.\traceB.\traceC}{q} \st{\tau} o'_1 \st{\tau} o'_2 \st{\tau} \ldots o'_m \stable
  \text{ implies } o'_j \text{ for some } j.
  $$
  We conclude the argument via \rlem{st-compatible-with-equiv} and because $\equiv$ preserves success.
\end{proof}

\begin{lemma}
  \label{lem:gs-stable}
  For every $s \in \Actfin$ and process $q$, if $\gen{s}{q} \stable$ then
  \begin{enumerate}
  \item $ s \in \Names^\star $,
  \item $ q \stable $,
  \item $I(q) \cap \co{s} = \emptyset $,
  \item $R(\gen{s}{q}) = \co{s} \cup R(q)$.
  \end{enumerate}
\end{lemma}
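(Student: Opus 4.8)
The plan is to prove the four claims simultaneously by structural induction on the trace $s$, reading each claim under the standing hypothesis that $\gen{s}{q}$ is stable. The engine of the argument is a routine inspection of which rules of \rfig{rules-LTS} can fire a $\tau$-transition from $\gen{s}{q}$: only \rtau\ and \com, propagated through \parR\ and \extR, are available, since $\gen{s}{q}$ is built solely from output atoms, input/$\tau$-guarded sums, and $q$ itself.

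For the base case $s = \varepsilon$ we have $\gen{\varepsilon}{q} = q$, so $\gen{\varepsilon}{q} \stable$ is literally $q \stable$, giving (2); and (1), (3), (4) hold trivially because $\varepsilon \in \Names^\star$, $\co{\varepsilon}$ is empty, and $R(\gen{\varepsilon}{q}) = R(q)$. For the inductive step I would split on the first action of $s = \alpha.s'$. If $\alpha = \co{a}$ is an output, then $\gen{s}{q} = a.\gen{s'}{q} \extc \tau.\Unit$, and rule \rtau\ followed by \extR\ yields $\gen{s}{q} \st{\tau} \Unit$; this contradicts stability, so this subcase is vacuous --- which is precisely what forces (1). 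If $\alpha = a$ is an input, then $\gen{s}{q} = \mailbox{\co{a}} \Par \gen{s'}{q}$, and I would extract two consequences of stability: first, $\gen{s'}{q} \stable$ (otherwise \parR\ propagates a $\tau$); second, $a \notin I(\gen{s'}{q})$ (otherwise \com\ synchronises $\mailbox{\co{a}} \st{\co{a}} \Nil$ with an $a$-input of $\gen{s'}{q}$, producing a $\tau$). The atom $\mailbox{\co{a}}$ offers only the output $\co{a}$ and no input, so these are the only two ways a $\tau$ could arise, and ruling them out is exactly stability.

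With $\gen{s'}{q} \stable$ I can apply the induction hypothesis to $s'$, obtaining $s' \in \Names^\star$, $q \stable$, the disjointness statement for $s'$, and $R(\gen{s'}{q}) = \co{s'} \cup R(q)$. Claim (1) then follows since $a \in \Names$ and $s' \in \Names^\star$; claim (2) is inherited directly. Because $s' \in \Names^\star$, the process $\gen{s'}{q}$ is $q$ in parallel with a bag of output atoms, whence $I(\gen{s'}{q}) = I(q)$; combining this with $a \notin I(\gen{s'}{q})$ gives $a \notin I(q)$, and together with the induction hypothesis this yields claim (3), namely that no input demanded by $s$ clashes with the inputs of $q$. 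For (4) I would use that $I$, $O$, and hence $R$ distribute over $\Par$ (immediate from \parL/\parR), together with $R(\mailbox{\co{a}}) = \set{\co{a}}$, to compute $R(\gen{s}{q}) = \set{\co{a}} \cup R(\gen{s'}{q}) = \set{\co{a}} \cup \co{s'} \cup R(q) = \co{s} \cup R(q)$.

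I do not expect a genuine obstacle here; the one place demanding care is the stability analysis of the parallel composition in the input subcase, where I must confirm that the only candidate synchronisation is between the atom's unique output $\co{a}$ and an $a$-input of $\gen{s'}{q}$, and that passing from $\gen{s'}{q}$ to $q$ preserves the input set (which relies on $s' \in \Names^\star$, already delivered by the induction hypothesis). Everything else is bookkeeping on ready sets.
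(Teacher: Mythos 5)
Your proposal is correct and follows essentially the same route as the paper's proof: induction on $s$, using stability to rule out a leading output (via the $\tau.\Unit$ summand), extracting stability of $\gen{s'}{q}$ and the impossibility of a \com synchronisation from stability of the parallel composition, and then computing ready sets. You are somewhat more explicit than the paper about the intermediate steps (in particular that $I(\gen{s'}{q}) = I(q)$ because $s' \in \Names^\star$ makes $\gen{s'}{q}$ a parallel composition of atoms with $q$), but these are exactly the facts the paper's terser argument relies on implicitly.
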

\begin{proof}
  By induction on $s$.
  In the base case $\varepsilon \in \Names^\star$,
  and $\gen{ \varepsilon }{q} = q$,
  thus $q \stable$. The last two points follow from
  this equality and from $\varepsilon$ containing no actions.

  In the inductive case $s = \mu . s'$.
  The hypothesis $\gen{\mu . s'}{q} \stable$ and the definition
  of $g$ imply that $\gen{\mu . s'}{q} = \co{\mu} \Par \gen{s'}{q}$,
  thus $\mu \in \Names$. The inductive hypothesis ensures that
  \begin{enumerate}
  \item $ s' \in \Names^\star $,
  \item $ q \stable $,
  \item for every $I(q) \cap \co{s'} = \emptyset $,
  \item for every $R(\gen{s'}{q}) = \co{s'} \cup R(q)$
  \end{enumerate}
  Since $ \co{\mu} \Par \gen{s'}{q} \stable $ rule \com
  cannot be applied, thus $ q \Nst{\mu} $, and so
  $I(q) \cap \co{s} = \emptyset $.
  From $R(\gen{s'}{q}) = \co{s'} \cup R(q)$ we obtain
  $R(\gen{s}{q}) = \co{s} \cup R(q)$.
\end{proof}

\begin{lemma}%[\coqMT{gen_test_lts_mu}]
  \label{lem:gen-test-lts-co-mu}
  For every $\mu \in \Act$, $s$ and $p$, $g(\mu.s, p) \st{\co{\mu}} g(s,p)$.
\end{lemma}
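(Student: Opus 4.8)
The plan is to unfold the definition of the generator $g$ from \rfig{client-generators} and then read the required transition directly off the operational rules of \rfig{rules-LTS}, splitting on the polarity of the action $\mu$. Since $\mu \in \Act = \Names \uplus \co{\Names}$, there are exactly two cases to consider, and no induction is needed.

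First I would take $\mu$ to be an input, say $\mu = \aa$ with $\aa \in \Names$, so that $\co{\mu} = \co{\aa}$ is an output. By the second defining clause of $g$ we have $g(\aa.s, p) = \mailbox{\co{\aa}} \Par g(s,p)$. Rule \mboxelim yields $\mailbox{\co{\aa}} \st{\co{\aa}} \Nil$, and lifting this through the parallel context with \parL gives $\mailbox{\co{\aa}} \Par g(s,p) \st{\co{\aa}} \Nil \Par g(s,p)$. The residual $\Nil \Par g(s,p)$ coincides with $g(s,p)$ up to the structural congruence $\equiv$ (rules \rulename{S-pcom} and \rulename{S-pzero}), which is precisely the identification under which the ACCS LTS, and hence the statement itself, is to be read; by \rlem{st-compatible-with-equiv} this identification is compatible with $\st{}$. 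This delivers $g(\aa.s,p) \st{\co{\aa}} g(s,p)$.

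Next I would take $\mu$ to be an output, say $\mu = \co{\aa}$ with $\aa \in \Names$, so that $\co{\mu} = \aa$ is an input. By the third defining clause of $g$ we have $g(\co{\aa}.s, p) = \aa.g(s,p) \extc \tau.\Unit$. Rule \rinput yields $\aa.g(s,p) \st{\aa} g(s,p)$, and a single application of \extL discards the $\tau.\Unit$ summand, producing $\aa.g(s,p) \extc \tau.\Unit \st{\aa} g(s,p)$. Here the target is already syntactically $g(s,p)$, so nothing further is needed.

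There is no genuine obstacle: the claim is an immediate consequence of the defining equations of $g$ together with the base transition rules. The only mild subtlety is the absorption of the $\Nil \Par$ residual in the input case, which is harmless because the development is carried out modulo $\equiv$; everywhere else the generator lemmas (for instance \rlem{gs-visible-action} and \rlem{inversion-gen-accs-mu}) already phrase their conclusions up to this congruence.
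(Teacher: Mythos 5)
Your proof is correct and takes essentially the same route as the paper: the paper's own argument is exactly a case analysis on the polarity of $\mu$, unfolding the defining clauses of $g$, deriving $\mailbox{\co{\mu}} \Par g(s,p) \st{\co{\mu}} \Nil \Par g(s,p) \equiv g(s,p)$ in the input case and $\co{\mu}.g(s,p) \extc \tau.\Unit \st{\co{\mu}} g(s,p)$ in the output case. Your explicit appeal to \rlem{st-compatible-with-equiv} to justify working modulo $\equiv$ is a detail the paper leaves implicit, but it changes nothing of substance.
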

\begin{proof}
We proceed by case-analysis on $\mu$.
If $\mu$ is an input then $g(\mu.s, p) = \co{\mu} \Par g(s, p)$.
We have $\mailbox{ \co{\mu} } \Par g(s, p) \st{\co{\mu}} \Nil \Par g(s, p) \equiv g(s, p)$
as required.
If $\mu$ is an output then $g(\mu.s, p) = \co{\mu}.g(s, p) \extc \tau.\Unit$.
We have $\co{\mu}.g(s, p) \extc \tau.\Unit \st{\co{\mu}} g(s, p)$ as required.
\end{proof}

\begin{lemma}%[\coqMT{inversion_gen_conv_tau}]
  \label{lem:inversion-feeder-tau-accs}
  $\Forevery \trace \in \Actfin, \and q \in \ACCS \suchthat
  c(s) \sta{\tau} q$ either
  \begin{enumerate}[(a)]
  \item\label{inversion-feeder-tau-accs-ok} $\good{q}$, or
  \item\label{inversion-feeder-tau-accs-split}
    there exist $\ab$, $\traceA$, $\traceB$ and $\traceC$ with
    $\traceA.\ab.\traceB \in \Names^\star$ such that
    $s = \traceA.\ab.\traceB.\co{\ab}.\traceC$ and
    $q \equiv c(\traceA.\traceB.\traceC)$.
    \end{enumerate}
\end{lemma}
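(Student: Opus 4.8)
The plan is to derive this lemma as an almost immediate specialisation of \rlem{gs-tau}, taking the continuation argument to be $\tau.\Unit$. First I would eliminate the forwarding arrow: by \req{sta-tau-iff-st-tau} the relations $\sta{\tau}$ and $\st{\tau}$ coincide, so the hypothesis $c(s) \sta{\tau} q$ is equivalent to $c(s) \st{\tau} q$. Recalling that $c(s) = \gen{s}{\tau.\Unit}$ by definition, I would then apply \rlem{gs-tau} with its generator argument instantiated to $\tau.\Unit$ and its target $o$ read as our $q$, obtaining the four-way disjunction of that lemma.

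The four cases collapse cleanly, using only the shape of the transitions of $\tau.\Unit$. In case (a) of \rlem{gs-tau} we directly have $\good{q}$, which is conclusion (\ref{inversion-feeder-tau-accs-ok}). In case (b) we get $s \in \Names^\star$ and $\tau.\Unit \st{\tau} q'$ with $q = \Pi \co{s} \Par q'$; since the unique $\tau$-transition of $\tau.\Unit$ leads to $\Unit$, we have $q' = \Unit$, and because $\good{\Unit}$ and $\goodSym$ propagates through parallel composition we conclude $\good{\Pi \co{s} \Par \Unit}$, i.e. $\good{q}$, again giving conclusion (\ref{inversion-feeder-tau-accs-ok}). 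Case (c) demands a \emph{visible} transition $\tau.\Unit \st{\nu} q'$, which is impossible since $\tau.\Unit$ performs no visible action, so this case is vacuous. Finally, case (d) yields $q \equiv \gen{\traceA.\traceB.\traceC}{\tau.\Unit} = c(\traceA.\traceB.\traceC)$ together with the decomposition $s = \traceA.\mu.\traceB.\co{\mu}.\traceC$ and the constraint $\traceA.\mu.\traceB \in \Names^\star$; setting $\ab = \mu$ — which is an input name precisely because $\traceA.\mu.\traceB \in \Names^\star$ — is exactly conclusion (\ref{inversion-feeder-tau-accs-split}).

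Since essentially all of the combinatorial work has already been carried out in \rlem{gs-tau}, there is no genuine obstacle here beyond careful bookkeeping: one must align the bound variables of \rlem{gs-tau} with those of the present statement (its $o$ is our $q$, its $\mu$ becomes our $\ab$), and one must invoke the defining clauses of $\goodSym$ — and, if one prefers to argue up to structural congruence when discharging case (b), \rlem{happy-sc} to transport success across $\equiv$. The only point requiring attention is that the continuation $\tau.\Unit$ behaves exactly as needed to kill cases (b) and (c): its sole reduct is the successful $\Unit$, and it has no visible actions, so the two cases of \rlem{gs-tau} that could otherwise produce a non-successful, non-$c(\cdot)$ target are forced into conclusion (\ref{inversion-feeder-tau-accs-ok}) or ruled out entirely.
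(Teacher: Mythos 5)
Your proof is correct, but it follows a genuinely different route from the paper's. The paper proves the lemma by a standalone induction on $s$: the base case gives $q = \Unit$ outright; in the inductive case $s = \mu.s'$ it distinguishes whether $\mu$ is an input or an output and, in the input case, whether the $\tau$-step of $\co{\mu} \Par c(s')$ is a communication---handled via \rlem{inversion-gen-accs-mu}---or an internal step of $c(s')$---handled via the inductive hypothesis. You instead exploit $c(s) = \gen{s}{\tau.\Unit}$ and specialise the already-established \rlem{gs-tau}, whose four cases collapse exactly as you describe: \rpt{gs-tau-2} forces $q' = \Unit$, whence $\good{q}$ by the parallel clause of $\goodSym$ (with \rlem{happy-sc} available if the identity only holds up to $\equiv$); \rpt{gs-tau-3} is vacuous because $\tau.\Unit$ has no visible transitions; and \rpt{gs-tau-4} is literally conclusion~(\ref{inversion-feeder-tau-accs-split}). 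Since \rlem{gs-tau} is proved before the present lemma and independently of it, there is no circularity, and your explicit appeal to \req{sta-tau-iff-st-tau} to trade $\sta{\tau}$ for $\st{\tau}$ is a step the paper performs silently. What your route buys is modularity and brevity: the inductive combinatorics are carried out once, in \rlem{gs-tau}, whereas the paper's proof re-runs an induction that structurally mirrors that of \rlem{gs-tau} while relying only on the weaker inversion lemma \rlem{inversion-gen-accs-mu}.
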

\begin{proof}
The proof is by induction on $s$.

In the base case $s = \varepsilon$, $c(\varepsilon) = \tau.\Unit$ and then
$q = \Unit$. We prove \ref{inversion-feeder-tau-accs-ok} with $\good{\Unit}$.

In the inductive case $s = \mu.s'$.
We proceed by case-analysis over $\mu$.

If $\mu$ is an input then $c(\mu.s') = \co{\mu} \Par c(s')$.
We continue by case-analysis over the reduction $\co{\mu} \Par c(s') \st{\tau} q$.
It is either due to:
\begin{enumerate}[(i)]
\item\label{inversion-feeder-tau-accs-input-com}
  a communication between $\co{\mu}$ and $c(s')$ such that
  $\co{\mu} \st{\co{\mu}} \Nil$ and $c(s') \st{\mu} q'$ with $q = \Nil \Par q'$, or
\item\label{inversion-feeder-tau-accs-input-tau}
  a reduction of $c(s')$ such that $c(s') \st{\tau} q'$ with $q = \co{\mu} \Par q'$.
\end{enumerate}
If \ref{inversion-feeder-tau-accs-input-com} is true then
%\rcor{inversion-feeder-mu}
\rlem{inversion-gen-accs-mu} tells us that
there exist $s'_1$ and $s'_2$ such that
$s' = s'_1.\co{\mu}.s'_2$ and $q' \equiv c(s'_1.s'_2)$ with $s'_1 \in \Names^*$.
We prove (\ref{inversion-feeder-tau-accs-split}).
We choose $\ab = \mu$, $\traceA = \varepsilon$, $\traceB = s'_1$, $\traceC = s'_2$.
We show the first requirement by
$s = \mu.s' = \mu.s'_1.\co{\mu}.s'_2 = \varepsilon.\mu.s'_1.\co{\mu}.s'_2 = \traceA.\ab.\traceB.\co{\ab}.\traceC$.
The second requirement is $q = \Nil \Par q' \equiv c(s'_1.s'_2) = c(\varepsilon.s'_1.s'_2) = c(\traceA.\traceB.\traceC)$.

We now consider the case (\ref{inversion-feeder-tau-accs-input-tau}).
The inductive hypothesis tells us that either:
\begin{enumerate}%[(H-a)]
\item\label{inversion-feeder-tau-accs-ok-IH} $\good{q'}$, or
\item\label{inversion-feeder-tau-accs-split-IH} there exist $\iota$, $s'_1$, $s'_2$ and $s'_3$ with
  $s'_1.\iota.s'_2 \in \Names^*$ such that
  $s' = s'_1.\iota.s'_2.\co{\iota}.s'_3$ and
  $q' \equiv c(s'_1.s'_2.s'_3)$.
\end{enumerate}
If (\ref{inversion-feeder-tau-accs-ok-IH}) is true then
we prove \ref{inversion-feeder-tau-accs-ok} with
$q = \mailbox{ \co{\mu} } \Par q'$ and $\mailbox{ \co{\mu} }\Par q'$ and $ \good{q'}$.
If (\ref{inversion-feeder-tau-accs-split-IH}) is true then
we prove (\ref{inversion-feeder-tau-accs-split}).
We choose  $\ab = \iota$, $\traceA = \mu.s'_1$, $\traceB = s'_2$, $\traceC = s'_3$.
We show the first requirement with
$s = \mu.s' = \mu.s'_1.\iota.s'_2.\co{\iota}.s'_3 = \traceA.\ab.\traceB.\co{\ab}.\traceC$.
The second requirement is $q = \mailbox{ \co{\mu} } \Par q' \equiv \mailbox{ \co{\mu} } \Par c(s'_1.s'_2.s'_3) = c(\mu.s'_1.s'_2.s'_3) = c(\traceA.\traceB.\traceC)$.

If $\mu$ is an output then $c(\mu.s') = \co{\mu}.(c s') \extc \tau.\Unit$.
The hypothesis $c(\mu.s') \st{\tau} q$ implies $q = \Unit$.
We prove (\ref{inversion-feeder-tau-accs-ok}) with $\good{\Unit}$.
\end{proof}

\section{Counter-example to existing completeness result}
\label{sec:counterexample}

%%%%%%%%%%%% NEW MACROS %%%%%%%%%%%%%%%%%%%

\newcommand{\llch}{\ensuremath{\ll_{{\sf ch}}}}
\newcommand{\readyset}{\ensuremath{R}}
\renewcommand\acnvalong{\Downarrow_{a}}

%%%%%%%%%%%%%%%%%%%%%%%%%%%%%%%%%%%%%%%%%% 

In this section we recall the definition of the alternative
preorder $\llch$ by \cite{DBLP:conf/fsttcs/CastellaniH98},
and show that it is not complete with respect to~$\testleqS$,
\ie $\,\testleqS\, \not\subseteq\; \llch$.
  We start with some               
  auxiliary definitions.
  %% \ilacom{add some intuition
  %%   about each definition?
  %% %  (takenfrom~\cite{DBLP:conf/fsttcs/CastellaniH98})
  %%   }
\smallskip

%%%%%%%%% Begin definitions %%%%%%%%%%%%%%%%%%%%%%%

The predicate $\rr{\I}$ is defined by the following two rules: %(\coqMT{guzzler}):
\begin{itemize}
\item $ p \rr{\I} p$ if $p \stable$ and %$\disjoint{I(p)}{I}$,
 $I(p) \cap I = \emptyset$,
  % \ilacom{~~~~Where is the notation $\disjoint{I(p)}{I}$ defined? Why not
  % use $I(p) \cap I = \emptyset$ instead?}
\item $ p \rr{\I \uplus \mset{a}} p'' $ if $\,p \wt{a} p'$ and $ p' \rr{I} p'$
\end{itemize}
\smallskip

The {\em generalised acceptance set} of a process $p$ after a trace
$\trace$ with respect to a multiset of input actions $\I$ is defined by
$$
\gas{p}{s}{\I} = \setof{ O(p'') }{ p \wta{s} p' \rr{\I} p''}
$$

The set of {\em input
  multisets} of a process $p$ after a trace $s$ is defined by
$$\im{p}{s} = \setof{ \mset{\aa_1, \ldots,
    \aa_n } }{ \aa_i \in \Names, p \wta{s} \wt{\aa_1} \ldots
  \wt{\aa_n}}$$

\smallskip

%%%%%%%%%%%%%%%%%%%%%%%%%%%%%%%%%%%%%%%%%%%

The convergence predicate over traces performed by forwarders is
  denoted~$\acnvalong$, and defined as~$\cnvalong$, but over the LTS
  given in \rexa{forwarders-in-TACCS}.

%%%%%%%%%%%%%%%%%%%%%%%%%%%%%%%%%%%%%%%%%%%

The preorder $\llch$ is now defined as follows:

\begin{definition}[Alternative preorder $\llch$~\cite{DBLP:conf/fsttcs/CastellaniH98}]
\label{def:ff-original}
Let $p \llch q$ if for every $s \in \Actfin \wehavethat p \acnvalong s$  implies
\begin{enumerate}
\item $q \acnvalong s$,
\item\label{pt:ff-original-acc-sets}
  for every $\readyset \in \acc{q}{s}$ and every $I \in \im{p}{s}$ such that $I \cap \readyset = \emptyset$ there exists some $O \in \gas{p}{s}{I}$
  such that $O \setminus \co{I} \subseteq \readyset$.\hfill$\blacksquare$
\end{enumerate}
\end{definition}

%% \ilacom{I see you use $p \cnvalong s$ and $q \cnvalong s$ instead of
%%   $p \acnvalong s$ and $q \acnvalong s$ as
%%   in~\cite{DBLP:conf/fsttcs/CastellaniH98}. Is it because in fact the
%%   two predicates $\acnvalong s$ and $\acnvalong s$ coincide?  }

%%%%%%%%%%%%%%%%%%%%%%%%%%%%%%%%%%%%%%%%%%

\smallskip

%%%%%%%%% End definitions %%%%%%%%%%%%%%%%%%%%%%%

\smallskip
%We discuss two technical examples.
%\begin{example}
%\label{ex:rr-1}
We illustrate the three auxiliary definitions using the
  process $\pierre= b.(\tau.\Omega \extc c. \co{d} )$
introduced in ~\rexa{p-testleqS-Nil}. We may infer that
\begin{equation}
  \label{eq:pierre-rr}
  \pierre \rr{\mset{b,c}} \mailbox{\co{d}}
\end{equation}
thanks to the following derivation tree
$$
\begin{prooftree}
  \begin{prooftree}
    \begin{prooftree}
      \justifies
      \out{d} \rr{\emptyset}  \out{d}
      \using
      \out{d} \stable \text{ and } I(\out{d}) \cap \emptyset = \emptyset
    \end{prooftree}
    \justifies
    \tau.\Omega \extc c.\co{d} \rr{\mset{c}} \out{d}
    \using
    \tau.\Omega \extc c.\co{d} \wt{c} \out{d}
  \end{prooftree}
  \justifies
  \pierre \rr{\mset{b,c}} \out{d}
  \using
  \pierre \wt{b} \tau.\Omega \extc c.\co{d}
\end{prooftree}
$$

%\begin{example}
%%  \label{ex:rrI-divergence}\label{ex:gas-p}
Let us now consider the generalised acceptance set of \pierre after the
trace $\varepsilon$ with respect to the multiset $\mset{b, c}$.
We prove that
\begin{equation}
  \label{eq:pierre-gas}
  \gas{ \pierre }{ \varepsilon }{ \mset{b, c} } = \set{ \set{ \co{d} } }
\end{equation}

By definition 
  $\gas{ \pierre }{ \varepsilon }{ \mset{b, c} } =  \setof{ O(p'') }{
  \pierre \wta{ \varepsilon} p' \rr{\mset{b, c}} p'' }$. Since $
\pierre \stable$, we have
  \begin{equation}
    \label{eq:gas-p}
    \gas{ \pierre }{ \varepsilon }{ \mset{b, c} } =  \setof{ O(p'') }{ \pierre \rr{\mset{b, c}} p'' }
  \end{equation}
  Then, thanks to \req{pierre-rr} %that $ \pierre \rr{\mset{b,c}} \out{d} $,
  we get
  $
  O(\co{d}) = \set{ \co{d} } \in \gas{ \pierre }{ \varepsilon }{ \mset{b, c} }
  $.
  We show now that  $\set{ \co{d} }$ is the only element of this
  acceptance set.
  %, namely that the following set inclusion, $\gas{ \pierre }{ \varepsilon }{ \mset{b, c} } \%subseteq \set{\set{ \co{d} }}$.
  % Pick an $O \in \gas{p}{ \varepsilon }{ \mset{b, c} }$.
  % Thanks to (\ref{eq:gas-p}) above there exists a $p''$ such that
  % $\pierre \rr{\mset{b, c}} p''$ and that $O(p'') = O$.
By (\ref{eq:gas-p}) above, it is enough to show that
$\pierre \rr{\mset{b, c}} p''$ implies $p'' = \out{d} $.
Observe
that %This is the case because
  \begin{enumerate}
  \item $ I( \pierre ) \cap \mset{ b,c } \neq \emptyset $,
  \item $ \pierre \wt{ \aa } p'$ implies $\aa = b$, and
  \item
    There are two different states $p'$ such that
   $ \pierre\wt{b}p'$, but 
    the only one that can do the input $c$ is $p' = \tau.\Omega \extc c.\co{d}$.
  \end{enumerate}
  This implies that the only way to infer
  $\pierre \rr{\mset{b, c}} p''$ is via the derivation tree
  that proves \req{pierre-rr} above. Thus~$p'' = \co{d} $.

% The last notation we need is $\im{p}{s} = \setof{ \mset{\mu_1, \ldots,
%     \mu_n } }{ \mu_i \in \Names, p \wta{s} \wt{\mu_1} \ldots
%   \wt{\mu_n}}$. Essentially $\im{p}{s}$ is the set of {\em input
%   multisets} of $p$ after $s$.

% \begin{definition}%[\cite{DBLP:conf/fsttcs/CastellaniH98}]
% \label{def:ff-original}
% Let $p \leq q$ if for every $s \in \Actfin \wehavethat p \acnvalong s$  implies
% \begin{enumerate}
% \item $q \acnvalong s$,
% \item\label{pt:ff-original-acc-sets}
%   for every $A \in \acc{q}{s}$ and every $I \in \im{p}{s}$ such that $I \cap A = \emptyset$ there exists some $O \in \gas{p}{s}{I}$
%   such that $O \setminus \co{I} \subseteq A$.\hfill$\blacksquare$
% \end{enumerate}
% \end{definition}

\begin{counterexample}
  \label{ex:ffCH98-incomplete}
  The alternative preorder $\llch$
  is not complete for $\testleqS$, namely $p \testleqS q$ does not imply $p \llch q$.
\begin{proof}
  The cornestone %(``pierre angulaire'' ;-))
  of the proof is the process
 $\pierre = b.(\tau.\Omega \extc c. \co{d} )$ discussed above.
In \rexa{p-testleqS-Nil} we have shown that $\pierre \testleqS \Nil$.
Here we show that $\pierre \not\llch \Nil$, because the pair $(\pierre,\Nil)$ does not
satisfy Condition \ref{pt:ff-original-acc-sets} of \rdef{ff-original}.

Since $\pierre \stable$, we know that $\pierre \conv$, and thus by
definition $\pierre \acnvalong \varepsilon$. We also
have by definition $\acc{\Nil}{\varepsilon} = \set{ \emptyset }$, and
$\im{\pierre}{\varepsilon} = \set{\emptyset, \mset{ b}, \mset{ b, c
  }}$.

%% \ilacom{\TODO{Please check this. Previously it was:
%%   $\im{p}{\varepsilon} = \mset{ b, c }$, which was wrong. Indeed,
%%   $\im{p}{s} $ is a set of multisets, not a single multiset. Moreover,
%%   if $p$ can do a transition sequence
%%   $ p \wta{s} p_0 \wt{\aa_1} p_1 \ldots \wt{\aa_n} p_n$, then
%% % $$\im{p}{s} = \setof{ \mset{\aa_1, \ldots,
%% %     \aa_n } }{ \aa_i \in \Names, p \wta{s} \wt{\aa_1} \ldots
%% %   \wt{\aa_n}}$$
%%   it can also do any prefix of it (since there is no requirement on
%%   the state $p_n$). Therefore, $\im{p}{s}$ always contains
%%   $\emptyset$, as well as the multiset $\mset{\aa_1, \ldots, \aa_i }$
%%   for any $i$ such that $1\leq i\leq n$.  This mistake seems to have
%%   no consequences here, since it is enough to find a single $I$ that
%%   violates Condition \ref{pt:ff-original-acc-sets} and this $I$ is
%%   $\mset{ b, c }$.}}

Let us check Condition \ref{pt:ff-original-acc-sets} of
  \rdef{ff-original} for $p = \pierre$ and $q = \Nil$. Since there is
  a unique $\readyset\in \acc{\Nil}{\varepsilon}$, which is
  $\emptyset$, and $I \cap \emptyset = \emptyset$ for any $I$, we only
  have to check that for every $I \in \im{\pierre}{\varepsilon}$ there
  exists some $O \in \gas{\pierre}{\varepsilon}{I}$ such that
  $O \setminus \co{I} \subseteq \emptyset$.

Let $I = \mset{ b, c }$.
%, clearly $I \cap \emptyset = \emptyset$.
%It suffices to show that for every $ O \in  \gas{\pierre}{ \varepsilon }{ I }$
%we have $O \setminus \co{I} \not\subseteq \emptyset$.
By~\req{pierre-gas} it must be $O =\set{ \co{d} }$. Since
  $ \set{ \co{d} } \setminus \co{I} = \set{ \co{d} } \setminus \co{
    \mset{ b, c } } = \set{ \co{d} } \not\subseteq \emptyset$, the
  condition is not satisfied. Thus $\pierre \not\llch \Nil$.
\end{proof}
\end{counterexample}

%% The condition proposed in \cite[Definition
%%   5]{DBLP:conf/fsttcs/CastellaniH98} to define the alterative preorder is similar to 
%% \begin{equation}
%%   \label{eq:incomplete-condition}
%% \forevery A \in \acc{q}{s} \and \I \in \MI \suchthat
%% \intersection{\I}{A} = \emptyset \text{ there exists some } O \in \gas{p}{s}{I}
%%   \suchthat O \setminus \co{I} \subseteq A.
%% \end{equation}
%% In the next example we show that this property is not complete.

%% \begin{counterexample}
%%   \label{ex:ffCH98-incomplete-1}
%%   \pierre is again at the center of the action.
%%   In \rexa{p-testleqS-Nil} we have proven \req{running-example-1}, namely that 
%%   $\pierre \testleqS \Nil$.
%%   In this example we show that $ p \not \bhvleq \Nil$,
%%   because the pair $(\pierre, \Nil)$ does not respect condition
%%   \ref{pt:ff-original-acc-sets} of \req{incomplete-condition}.
  
%%   Since $ \pierre \stable$, we know $\pierre \conv$, and thus by definition we 
%% have $\pierre \acnvalong \varepsilon$. By definition we also know
%% that $\acc{\Nil}{\varepsilon} = \set{ \emptyset }$,
%% and that $\mset{b,c} \in \MI$. %%$\im{p}{\varepsilon} = \mset{ b, c }$.

%% Let $I = \mset{ b, c}$, clearly $\I \cap \emptyset = \emptyset$.
%% It suffices to show that for every $ O \in  \gas{ \pierre }{ \varepsilon }{ \I }$
%% we have $O \setminus \co{\I} \not\subseteq \emptyset$.
%% We have proven in \rexa{gas-p}
%% that $ \gas{ \pierre }{ \varepsilon }{ \I } = \set{\set{ \co{d} } }$,
%% and the following euqlities hold $ \set{ \co{d} } \setminus \co{\I} = \set{ \co{d} } \not\subseteq \emptyset$.\qed
%% \end{counterexampl}

\section{Highlights of the Coq mechanisation}
\label{sec:coq}

\setminted{fontsize=\footnotesize,linenos=true}

\renewenvironment{mdframed}{}{}

\subsection{Preliminaries}
We begin this section recalling the definition of~$\opMust$, which is given in
\rdef{must-extensional}. It is noteworthy that the mechanised definition, i.e.
\lstinline!must_extensional!, depends on the typeclass \lstinline!Sts!
(\rfig{typeclass-sts}), and {\em not} the type class \lstinline!Lts!. This lays
bare what stated in \rsec{intro}: to define~$\opMust$  a reduction
semantics (i.e. a state transition system), and a predicate $\goodSym$ over
clients suffice.
%% The predicate $\goodSym$ if formalised by the               %%
%% predicate \lstinline!MustT.good! in the Coq code (\rfig{}). %%

\subsubsection{State Transition Systems}

The typeclass for state transition systems (Sts) is defined as follows, where
\mintinline{coq}{A} is the set of states of the Sts. It included a notion of stability
which is axiomatized and decidable.

\begin{mdframed}
\begin{minted}{coq}
Class Sts (A: Type) := {
    sts_step: A → A → Prop;
    sts_state_eqdec: EqDecision A;
    sts_step_decidable: RelDecision sts_step;

    sts_stable: A → Prop;
    sts_stable_decidable p : Decision (sts_stable p);
    sts_stable_spec1 p : ¬ sts_stable p -> { q | sts_step p q };
    sts_stable_spec2 p : { q | sts_step p q } → ¬ sts_stable p;
  }.
\end{minted}
\label{fig:typeclass-sts}
\end{mdframed}

\subsubsection{Maximal computations}

A computation is maximal if it is infinite or if its last state is stable.
Given a state \mintinline{coq}{s}, the type \lstinline|max_exec_from s| contains all
the maximal traces that start from \mintinline{coq}{s}. Note the use of a coinductive
type to allow for infinite executions.

\begin{mdframed}
\begin{minted}{coq}
Context `{Sts A}.

CoInductive max_exec_from: A -> Type :=
 | MExStop s (Hstable: sts_stable s) : max_exec_from s
 | MExStep s s' (Hstep: sts_step s s') (η: max_exec_from s') :
    max_exec_from s.
\end{minted}
\end{mdframed}

\leaveout{
\pl{%
\lstinline!max_exec_from! assumes an STS defined on states of type A. It is done with \lstinline!Context `{Sts A}!.
In the definition of \lstinline!must_extensional!, we use \lstinline!max_exec_from! with pairs $(\server,\client)$.
Coq will try to find a way to compose two \lstinline!LTS A L! and \lstinline!LTS B L! to get a \lstinline!LTS (A * B) L!.
It is done using the instance \lstinline!parallel_lts! together with the cast \lstinline!sts_of_lts! from TransitionSystems.v.
Finally, \lstinline!parallel_lts! relies on \lstinline!parallel_steps!.
}
}

\subsection{The must-preorder}
\subsubsection{Client satisfaction}

The predicate $\goodSym$ is defined as any predicate over the states
of an LTS that satisfies certain properties: it is preserved by
structural congruence, by outputs in both directions
(if $p \st{\co{a}} p'$ then $\good{p} \Leftrightarrow \good{p'}$).

It is defined as a typeclass indexed over the type of states and
labels, because we expect a practitioner %that a given LTS has a
to reason on a single canonical notion of "good" at a time.

\begin{mdframed}
\begin{minted}{coq}

Class Good (A L : Type) `{Lts A L, ! LtsEq A L} := {
    good : A -> Prop;
    good_preserved_by_eq p q : good p -> p ≡ q -> good q;
    good_preserved_by_lts_output p q a :
      p ⟶[ActOut a] q -> good p -> good q;
    good_preserved_by_lts_output_converse p q a :
      p ⟶[ActOut a] q -> good q -> good p
}.

\end{minted}
\end{mdframed}

\subsubsection{Must testing}

\rdef{must-extensional}: We write $\Must{\server}{\client} $ if every maximal
  computation of $\csys{\server}{\client}$ is successful.

Given an integer \mintinline{coq}{n} and a maximal execution $\eta$, the function
\lstinline|mex_take_from n| applied to $\eta$ returns \mintinline{coq}{None} if $\eta$
is shorter than \mintinline{coq}{n} and \lstinline|Some p|, where \mintinline{coq}{p} is a
finite execution corresponding to the first \mintinline{coq}{n} steps of $\eta$.

Then, we define the extensional version of $\Must{p}{e}$ by stating that, for
all maximal executions~$\eta$ starting from $(p, e)$, there exists an integer
$n$ such that the $n$-th element of $\eta$ is good.
The $n$th element is obtained by taking the last element of the finite prefix of
length~$n$ computed using the function above.

\begin{mdframed}
\begin{minted}{coq}
Context `{good : B -> Prop}.

Fixpoint mex_take_from (n: nat) {x} (η: max_exec_from x) :
  option (finexec_from x) :=
 match n with
  | 0 => Some $ FExSingl x
  | S n => match η with
            | MExStop x Hstable => None
            | MExStep x x' Hstep η' =>
                let p' := mex_take_from n η' in
                (λ p', FExStep x x' (bool_decide_pack _ Hstep) p') <$> p'
           end
 end.

Definition must_extensional (p : A) (e : B) : Prop :=
  forall η : max_exec_from (p, e), exists n fex,
    mex_take_from n η = Some fex /\ good (fex_from_last fex).2.
\end{minted}
\end{mdframed}

\subsubsection{The preorder}

Definition~\ref{def:testleq} is mechanised in a straightforward way:

\begin{mdframed}
%% Finite image write about it
\begin{minted}{coq}
Definition pre_extensional (p : A) (q : R) : Prop :=
   forall (r : B), must_extensional p r -> must_extensional q r.

Notation "p ⊑ₑ q" := (pre_extensional p q).
\end{minted}
\end{mdframed}

\subsection{Behavioural characterizations}

\subsubsection{Labeled Transition Systems}

An LTS is a typeclass indexed by the type of states and the type of labels. The
type of labels must be equipped with decidable equality and be countable, as
enforced by the \mintinline{coq}{Label} typeclass.
An action \lstinline|a : Act L| is either an internal action $\tau$ or an external
action: an input or an output of a label in \mintinline{coq}{L}.

\begin{mdframed}
\begin{minted}{coq}
Class Label (L: Type) := {
  label_eqdec: EqDecision L;
  label_countable: Countable L;
}.

Inductive Act (A: Type) := ActExt (μ: ExtAct A) | τ.

Class Lts (A L : Type) `{Label L} := {
    lts_step: A → Act L → A → Prop;
    lts_state_eqdec: EqDecision A;

    lts_step_decidable a α b : Decision (lts_step a α b);

    lts_outputs : A -> gset L;
    lts_outputs_spec1 p1 x p2 :
      lts_step p1 (ActExt (ActOut x)) p2 -> x ∈ lts_outputs p1;
    lts_outputs_spec2 p1 x :
      x ∈ lts_outputs p1 -> {p2 | lts_step p1 (ActExt (ActOut x)) p2};

    lts_stable: A → Act L → Prop;
    lts_stable_decidable p α : Decision (lts_stable p α);
    lts_stable_spec1 p α : ¬ lts_stable p α → { q | lts_step p α q };
    lts_stable_spec2 p α : { q | lts_step p α q } → ¬ lts_stable p α;
  }.

Notation "p ⟶ q"      := (lts_step p τ q).
Notation "p ⟶{ α } q" := (lts_step p α q).
Notation "p ⟶[ α ] q" := (lts_step p (ActExt μ) q).
\end{minted}
\end{mdframed}

An LTS $L$ is cast into an STS by taking only the~$\tau$-transitions, as
formalised by the following instance, which says that \mintinline{coq}{A} can be
equipped with an STS structure when, together with some labels \mintinline{coq}{L},
\mintinline{coq}{A} is equipped with a LTS structure.

\begin{mdframed}
\begin{minted}{coq}
Program Instance sts_of_lts `{Label L} (M: Lts A L): Sts A :=
  {|
    sts_step p q := sts_step p τ q;
    sts_stable s := lts_stable s τ;
  |}.
\end{minted}
\end{mdframed}

\subsubsection{Weak transitions}

Let ${\wt{}} \subseteq \States \times \Actfin \times \States$ denote the least relation such that:
  \begin{description}
  \item[\rname{wt-refl}] $\state \wt{\varepsilon} \stateA$,
  \item[\rname{wt-tau}] $\state \wt{ \trace } \stateB$ if $\state \st{\tau} \stateA$,
    and $\stateA \wt{ \trace } \stateB$
  \item[\rname{wt-mu}]  $\state \wt{\mu.s} \stateB$ if $\state \st{\mu} \stateA$
    and $\stateA \wt{\trace} \stateB$.
  \end{description}

\begin{mdframed}
\begin{minted}{coq}
Definition trace L := list (ExtAct L).

Inductive wt : A -> trace L -> A -> Prop :=
| wt_nil p : wt p [] p
| wt_tau s p q t (l : p ⟶ q) (w : wt q s t) : wt p s t
| wt_act μ s p q t (l : p ⟶[μ] q) (w : wt q s t) : wt p (μ :: s) t.

Notation "p ⟹[s] q" := (wt p s q).
\end{minted}
\end{mdframed}

\subsubsection{Product of LTS}

The characteristic function of the transition relation of the
LTS resulting from the parallel composition of two LTS.
States of the parallel product of $L_1$ and $L_2$ are pairs $(a, b) \in L_1
\times L_2$.
The first two cases correspond to unsynchronized steps from either LTS, and the
third case corresponds to the LTS taking steps with dual actions. The predicate
\lstinline|act_match l1 l2| states that the two actions are visible and are dual
of each other.

\begin{mdframed}
\begin{minted}{coq}
Inductive parallel_step `{M1: Lts A L, M2: Lts B L} :
  A * B → Act L → A * B → Prop :=
| ParLeft l a1 a2 b: a1 -[l]→ a2 → parallel_step (a1, b) l (a2, b)
| ParRight l a b1 b2: b1 -[l]→ b2 → parallel_step (a, b1) l (a, b2)
| ParSync l1 l2 a1 a2 b1 b2:
  act_match l1 l2 → a1 -[l1]→ a2 → b1 -[l2]→ b2 →
  parallel_step (a1, b1) τ (a2, b2)
.
\end{minted}
\end{mdframed}

\begin{figure}
  \hrulefill
\begin{center}
  \scalebox{.9}{%
    \begin{tikzpicture}
      \node (Lts) {\lstinline!Lts!};
      \node (LtsEq) [below =+7pt of Lts]  {\lstinline!LtsEq!};
      \node (LtsOba) [below =+7pt of LtsEq]  {\lstinline!LtsOba!};
      \node (LtsFb) [below left=+7pt and +1pt of LtsOba]  {\lstinline!LtsObaFb!};
      \node (LtsFw) [below right=+7pt and +1pt of LtsOba]  {\lstinline!LtsObaFw!};

      \path[->]
      (Lts) edge (LtsEq)
      (LtsEq) edge (LtsOba)
      (LtsOba) edge (LtsFb)
      (LtsOba) edge (LtsFw);
    \end{tikzpicture}
  }
\end{center}
%  \vspace{-10pt}
  \caption{Typeclasses to formalise LTSs.}
  \hrulefill
  \label{fig:structure-typeclasses-lts}
\end{figure}
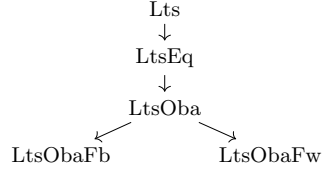

\subsection{Typeclasses for LTS}
Our basic typeclasses are
%three typeclasses in our Coq development. The overall structure of our typeclasses to reason on LTSs is thus
$\text{\mintinline{coq}{Lts}} \geq \text{\mintinline{coq}{LtsEq}}\geq \text{\mintinline{coq}{LtsOba}}$, where
$\text{\mintinline{coq}{LtsOba}}$ is a super-class of both \mintinline{coq}{LtsObaFB} and \mintinline{coq}{LtsObaFW}.
The class \mintinline{coq}{LtsObaFB} represents output-buffered agents with feedback,
while \mintinline{coq}{LtsObaFW} represents output-buffered agents with forwarding.

\begin{mdframed}
  \begin{minted}{coq}
Class LtsOba (A L : Type) `{Lts A L, !LtsEq A L} :=
    MkOBA {
      lts_oba_output_commutativity {p q r a α} :
        p ⟶[ActOut a] q → q ⟶{α} r →
        ∃ t, p ⟶{α} t ∧ t ⟶≡[ActOut a] r ;
      lts_oba_output_confluence {p q1 q2 a μ} :
        μ ≠ ActOut a → p ⟶[ActOut a] q1 → p ⟶[μ] q2 →
        ∃ r, q1 ⟶[μ] r ∧ q2 ⟶≡[ActOut a] r ;
      lts_oba_output_tau {p q1 q2 a} :
        p ⟶[ActOut a] q1 → p ⟶ q2 →
        (∃ t, q1 ⟶ t ∧ q2 ⟶≡[ActOut a] t) ∨ q1 ⟶≡[ActIn a] q2 ;
      lts_oba_output_deter {p1 p2 p3 a} :
        p1 ⟶[ActOut a] p2 → p1 ⟶[ActOut a] p3 → p2 ≡ p3 ;
      lts_oba_output_deter_inv {p1 p2 q1 q2} a :
        p1 ⟶[ActOut a] q1 → p2 ⟶[ActOut a] q2 → q1 ≡ q2 → p1 ≡ p2;
      (* Multiset of outputs *)
      lts_oba_mo p : gmultiset L;
      lts_oba_mo_spec1 p a : a ∈ lts_oba_mo p <-> a ∈ lts_outputs p;
      lts_oba_mo_spec2 p a q :
        p ⟶[ActOut a] q -> lts_oba_mo p = {[+ a +]} ⊎ lts_oba_mo q;
    }.

Class LtsObaFB (A L: Type) `{LtsOba A L} :=
  MkLtsObaFB {
      lts_oba_fb_feedback {p1 p2 p3 a} :
        p1 ⟶[ActOut a] p2 → p2 ⟶[ActIn a] p3 → p1 ⟶≡ p3
    }.

Class LtsObaFW (A L : Type) `{LtsOba A L} :=
  MkLtsObaFW {
      lts_oba_fw_forward p1 a :
        ∃ p2, p1 ⟶[ActIn a] p2 ∧ p2 ⟶≡[ActOut a] p1;
      lts_oba_fw_feedback {p1 p2 p3 a} :
        p1 ⟶[ActOut a] p2 → p2 ⟶[ActIn a] p3 → p1 ⟶≡ p3 ∨ p1 ≡ p3;
    }.
  \end{minted}
\end{mdframed}

\subsubsection{Termination}

We write $\state \conv$ and say that {\em $\state$ converges} if every
sequence of $\tau$-transitions performed by $\state$ is finite. This is
expressed extensionally by the property that all maximal computations starting
from $p$ contain a \emph{stable} process, meaning that it is finite.

\begin{mdframed}
\begin{minted}{coq}
Definition terminate (p : A) : Prop :=
    forall η : max_exec_from p, exists n fex,
      mex_take_from n η = Some fex /\ lts_stable (fex_from_last fex) τ.
\end{minted}
\end{mdframed}

\subsubsection{Convergence along a trace}

To define the behavioural characterisation of the preorder, we first define
${\cnvalong} \subseteq \States \times \Actfin$ as the least relation such
that,% (\coqConv{cnv}),h
\begin{description}
\item[\cnvepsilon] $\state \cnvalong \varepsilon$ if $\state \conv$,
\item[\cnvmu] $ \state \cnvalong \mu.\trace $ if $p \conv$ and for each $\stateA$,
$\state \wt{\mu} \stateA \implies \stateA \cnvalong \trace$.
\end{description}

\noindent
This corresponds to the following inductive predicate in Coq:

\begin{mdframed}
\begin{minted}{coq}
Inductive cnv : A -> trace L -> Prop :=
  | cnv_ext_nil p : terminate p -> cnv p []
  | cnv_ext_act p μ s :
    terminate p -> (forall q, p ⟹{μ} q -> cnv q s) -> cnv p (μ :: s).

Notation "p ⇓ s" := (cnv p s).
\end{minted}
\end{mdframed}

\subsection{Forwarders}

We define a mailbox $MO$ as a multiset of names.

\begin{mdframed}
\begin{minted}{coq}
Definition mb (L : Type) `{Label L} := gmultiset L.
\end{minted}
\end{mdframed}

\noindent
\rdef{liftFW} and \rfig{rules-liftFW}\emph{
Lifting of a transition relation to transitions of forwarders.
}

\begin{mdframed}
\begin{minted}{coq}
Inductive lts_fw_step {A L : Type} `{Lts A L} :
  A * mb L -> Act L -> A * mb L -> Prop :=
| lts_fw_p p q m α:
  lts_step p α q -> lts_fw_step (p ▷ m) α (q ▷ m)
| lts_fw_out_mb m p a :
  lts_fw_step (p ▷ {[+ a +]} ⊎ m) (ActExt $ ActOut a) (p ▷ m)
| lts_fw_inp_mb m p a :
  lts_fw_step (p ▷ m) (ActExt $ ActIn a) (p ▷ {[+ a +]} ⊎ m)
| lts_fw_com m p a q :
  lts_step p (ActExt $ ActIn a) q ->
  lts_fw_step (p ▷ {[+ a +]} ⊎ m) τ (q ▷ m).
\end{minted}
\end{mdframed}

\noindent
\rdef{strip-def} and \rdef{fw-eq}\emph{
For any LTS $\genlts$,
two states of $\liftFW{\genlts}$ are equivalent, denoted
$\serverA \triangleright M \doteq \serverB \triangleright N$, if
$ \strip{ \serverA } \simeq \strip{ \serverB }$
and $M \uplus \outputmultiset{\serverA} = N \uplus \outputmultiset{\serverB}$.
}

\begin{mdframed}
\begin{minted}{coq}
Inductive strip `{Lts A L} : A -> gmultiset L -> A -> Prop :=
| strip_nil p : p ⟿{∅} p
| strip_step p1 p2 p3 a m :
  p1 ⟶[ActOut a] p2 -> p2 ⟿{m} p3 -> p1 ⟿{{[+ a +]} ⊎ m} p3

where "p ⟿{ m } q" := (strip p m q).

Definition fw_eq `{LtsOba A L} (p : A * mb L) (q : A * mb L) :=
  forall (p' q' : A),
    p.1 ⟿{lts_oba_mo p.1} p' ->
    q.1 ⟿{lts_oba_mo q.1} q' ->
    p' ⋍ q' /\ lts_oba_mo p.1 ⊎ p.2 = lts_oba_mo q.1 ⊎ q.2.

Infix "≐" := fw_eq (at level 70).
\end{minted}
\end{mdframed}

\noindent
\rlem{harmony-sta}\emph{
For every $\genlts_\StatesA$ and every
$\serverA \triangleright M, \serverB \triangleright N \in \StatesA \times MO$,
and every $\alpha \in L$, if
$
\serverA \triangleright M \mathrel{({\doteq} \cdot {\sta{\alpha}})}
\serverB \triangleright N
$ then
$
\serverA  \triangleright M \mathrel{({\sta{\alpha}} \cdot {\doteq})} \serverB' \triangleright N'.
$
}

\begin{mdframed}
\begin{minted}{coq}
Lemma lts_fw_eq_spec `{LtsObaFB A L} p q t mp mq mt α :
  p ▷ mp ≐ t ▷ mt -> (t ▷ mt) ⟶{α} (q ▷ mq) -> p ▷ mp ⟶≐{α} q ▷ mq.
\end{minted}
\end{mdframed}

\noindent
\rlem{liftFW-works}. \emph{For every LTS~$\genlts \in \obaFB$, $\liftFW{\genlts}
\in \obaFW$.}

\begin{mdframed}
\begin{minted}{coq}
Program Instance LtsMBObaFW `{LtsObaFB A L} : LtsObaFW (A * mb L) L.
\end{minted}
\end{mdframed}

\noindent
\rlem{musti-obafb-iff-musti-obafw}\emph{
For every $\genlts_A, \genlts_B \in \obaFB, \server \in A, \client \in B$,
$\musti{\server}{\client}$ if and only if $\musti{\liftFW{\server}}{\client}$.
}

\begin{mdframed}
\begin{minted}{coq}
Lemma must_iff_must_fw
  {@LtsObaFB A L IL LA LOA V, @LtsObaFB B L IL LB LOB W,
   !FiniteLts A L, !Good B L good }
  (p : A) (e : B) : must p e ↔ must (p, ∅) e.
\end{minted}
\end{mdframed}

\subsection{The Acceptance Set Characterisation}

The behavioural characterisation with acceptance sets
(Definition~\ref{def:accset-leq}) is formalised as follows.
Note that \mintinline{coq}{lts_outputs}, used in the second part of the definition, is
part of the definition of an \mintinline{coq}{Lts}, and produces the finite set of
outputs that a process can immediately produce.

\begin{mdframed}
\begin{minted}{coq}
Definition bhv_pre_cond1 `{Lts A L, Lts B L} (p : A) (q : B) :=
  forall s, p ⇓ s -> q ⇓ s.

Notation "p ≼₁ q" := (bhv_pre_cond1 p q) (at level 70).

Definition bhv_pre_cond2 `{Lts A L, Lts B L} (p : A) (q : B) :=
  forall s q', p ⇓ s -> q ⟹[s] q' -> q' ↛ ->
    ∃ p', p ⟹[s] p' /\ p' ↛ /\ lts_outputs p' ⊆ lts_outputs q'.

Notation "p ≼₂ q" := (bhv_pre_cond2 p q) (at level 70).

Definition bhv_pre `{@Lts A L HL, @Lts B L HL} (p : A) (q : B) :=
  p ≼₁ q /\ p ≼₂ q.

Notation "p ≼ q" := (bhv_pre p q) (at level 70).
\end{minted}
\end{mdframed}

Given an LTS that satisfies the right conditions, \mustequivalence coincides
with the behavioural characterisation above on the LTS of forwarders
(Theorem~\ref{thm:testleqS-equals-bhvleq}).

\begin{mdframed}
\begin{minted}{coq}
Section correctness.
  Context `{LtsObaFB A L, LtsObaFB R L, LtsObaFB B L}.
  Context `{!FiniteLts A L, !FiniteLts B L, !FiniteLts R L, !Good B L}.
  (* The LTS can express the tests required for completeness *)
  Context `{!gen_spec_conv gen_conv, !gen_spec_acc gen_acc}.

  Theorem equivalence_bhv_acc_ctx (p : A) (q : R) :
    p ⊑ₑ q <-> (p, ∅) ≼ (q, ∅).
End correctness.
\end{minted}
\end{mdframed}

\subsection{The Must Set characterisation}

The behavioural characterisation with must sets
(Definition~\ref{def:denicola-char}) is formalised as follows.

\begin{mdframed}
  \begin{minted}{coq}
Definition MUST `{Lts A L} (p : A) (G : gset (ExtAct L)) :=
    forall p', p ⟹ p' -> exists μ p0, μ ∈ G /\ p' ⟹{μ} p0.

Definition MUST__s `{FiniteLts A L} (ps : gset A) (G : gset (ExtAct L)) :=
  forall p, p ∈ ps -> MUST p G.

Definition AFTER `{FiniteLts A L} (p : A) (s : trace L) (hcnv : p ⇓ s) :=
  wt_set p s hcnv.

Definition bhv_pre_ms_cond2
  `{@FiniteLts A L HL LtsA, @FiniteLts B L HL LtsB} (p : A) (q : B) :=
  forall s h1 h2 G, MUST__s (AFTER p s h1) G -> MUST__s (AFTER q s h2) G.

Notation "p ≾₂ q" := (bhv_pre_ms_cond2 p q) (at level 70).

Definition bhv_pre_ms `{@FiniteLts A L HL LtsA, @FiniteLts B L HL LtsB}
  (p : A) (q : B) := p ≼₁ q /\ p ≾₂ q.

Notation "p ≾ q" := (bhv_pre_ms p q).
  \end{minted}
\end{mdframed}

\noindent
\rlem{acceptance-sets-and-must-sets-have-same-expressivity}\emph{
Let $\genlts_A, \genlts_B \in \obaFB$.
For every $\serverA \in \StatesA$  and
$\serverB \in \StatesB $ such that $\liftFW{ \serverA } \bhvleqone \liftFW{ \serverB }$,
we have that
$\liftFW{ \serverA } \msleqtwo \liftFW{ \serverB }$ if and only if
$\liftFW{ \serverA } \asleqAfw \liftFW{ \serverB }$.
}

\begin{mdframed}
\begin{minted}{coq}
Context `{@LtsObaFB A L LL LtsA LtsEqA LtsObaA}.
Context `{@LtsObaFB B L LL LtsR LtsEqR LtsObaR}.

Lemma equivalence_bhv_acc_mst2 (p : A) (q : B) :
  (p, ∅) ≼₁ (q, ∅) -> (p, ∅) ≾₂ (q, ∅) <-> (p, ∅) ≼₂ (q, ∅).
\end{minted}
\end{mdframed}

Given an LTS that satisfies the right conditions, \mustequivalence coincides
with the behavioural characterisation above on the LTS of forwarders
(Theorem~\ref{thm:testleqS-equals-mustsetleq}).

\begin{mdframed}
\begin{minted}{coq}
Section correctness.
  Context `{LtsObaFB A L, LtsObaFB R L, LtsObaFB B L}.
  Context `{!FiniteLts A L, !FiniteLts B L, !FiniteLts R L, !Good B L}.
  (* The LTS can express the tests required for completeness. *)
  Context `{!gen_spec_conv gen_conv, !gen_spec_acc gen_acc}.

  Theorem equivalence_bhv_mst_ctx (p : A) (q : R) :
    p ⊑ₑ q <-> (p, ∅) ≾ (q, ∅).
End correctness.
\end{minted}
\end{mdframed}

\subsection{From extensional to intensional definitions}

\rprop{ext-impl-int}\emph{
Given a countably branching STS~$\sts{\SysStates}{\to}$, and a decidable predicate~$Q$
on~$\SysStates$, for all~$s \in \SysStates$, $\mathsf{ext}_Q(s)$ implies
$\mathsf{int}_Q(s).$
}

\begin{mdframed}
\begin{minted}{coq}
Context `{Hsts: Sts A, @CountableSts A Hsts}.
Context `{@Bar A Hsts}.

Theorem extensional_implies_intensional x:
  extensional_pred x -> intensional_pred x.
\end{minted}
\end{mdframed}

\rcor{ext-int-eq-conv}\emph{
For every $\server \in \States$,
\begin{enumerate}
\item $\state \conv $ if and only if $\state \convi$,
\item for every $\client$ we have that $\Must{\server}{\client}$ if
  and only if $\musti{\server}{\client}$.
\end{enumerate}
}

\begin{mdframed}
\begin{minted}{coq}
Context `{Label L}.
Context `{!Lts A L, !FiniteLts A L}.

Lemma terminate_extensional_iff_terminate (p : A) :
  terminate_extensional p <-> terminate p.

Inductive must_sts `{Sts (A * B), good : B -> Prop} (p : A) (e : B) :
  Prop :=
| m_sts_now : good e -> must_sts p e
| m_sts_step
    (nh : ¬ good e)
    (nst : ¬ sts_stable (p, e))
    (l : forall p' e', sts_step (p, e) (p', e') -> must_sts p' e')
  : must_sts p e
.

Lemma must_extensional_iff_must_sts
  `{good : B -> Prop, good_decidable : forall (e : B), Decision (good e)}
  `{Lts A L, !Lts B L, !LtsEq B L, !Good B L good,
    !FiniteLts A L, !FiniteLts B L} (p : A) (e : B) :
  must_extensional p e <-> must_sts p e.
\end{minted}
\end{mdframed}

Equivalence between the inductive definitions of $\opMust$ defined using Sts and $\opMust$ defined using Lts.

\begin{mdframed}
\begin{minted}{coq}
Inductive must `{Lts A L, !Lts B L, !LtsEq B L, !Good B L good}
  (p : A) (e : B) : Prop :=
| m_now : good e -> must p e
| m_step
    (nh : ¬ good e)
    (ex : ∃ t, parallel_step (p, e) τ t)
    (pt : forall p', p ⟶ p' -> must p' e)
    (et : forall e', e ⟶ e' -> must p e')
    (com : forall p' e' μ, e ⟶[μ] e' -> p ⟶[co μ] p' -> must p' e')
  : must p e
.

Lemma must_sts_iff_must `{Lts A L, !Lts B L, !LtsEq B L, !Good B L good}
  (p : A) (e : B) : must_sts p e <-> must p e.
\end{minted}
\end{mdframed}

\subsection{Completeness}

Properties of the functions that generate clients (\rtab{properties-functions-to-generate-clients}).

\begin{mdframed}
\begin{minted}{coq}
Class gen_spec {A L : Type} `{Lts A L, !LtsEq A L, !Good A L good}
  (gen : list (ExtAct L) -> A) := {
    gen_spec_ungood : forall s, ¬ good (gen s) ;
    gen_spec_mu_lts_co μ s : gen (μ :: s) ⟶⋍[co μ] gen s;
    gen_spec_out_lts_tau_ex a s : ∃ e', gen (ActOut a :: s) ⟶ e';
    gen_spec_out_lts_tau_good a s e : gen (ActOut a :: s) ⟶ e -> good e;
    gen_spec_out_lts_mu_uniq {e a μ s} :
    gen (ActOut a :: s) ⟶[μ] e -> e = gen s /\ μ = ActIn a;
  }.

Class gen_spec_conv {A L : Type} `{Lts A L, ! LtsEq A L, !Good A L good}
  (gen_conv : list (ExtAct L) -> A) := {
    gen_conv_spec_gen_spec : gen_spec gen_conv ;
    gen_spec_conv_nil_stable_mu μ : gen_conv [] ↛[μ] ;
    gen_spec_conv_nil_lts_tau_ex : ∃ e', gen_conv [] ⟶ e';
    gen_spec_conv_nil_lts_tau_good e : gen_conv [] ⟶ e -> good e;
  }.

Class gen_spec_acc {A : Type} `{Lts A L, ! LtsEq A L, !Good A L good}
  (gen_acc : gset L -> list (ExtAct L) -> A) := {
    gen_acc_spec_gen_spec O : gen_spec (gen_acc O);
    gen_spec_acc_nil_stable_tau O : gen_acc O [] ↛;
    gen_spec_acc_nil_stable_out O a : gen_acc O [] ↛[ActOut a];
    gen_spec_acc_nil_mu_inv O a e : gen_acc O [] ⟶[ActIn a] e -> a ∈ O;
    gen_spec_acc_nil_mem_lts_inp O a :
      a ∈ O -> ∃ r, gen_acc O [] ⟶[ActIn a] r;
    gen_spec_acc_nil_lts_inp_good μ e' O :
      gen_acc O [] ⟶[μ] e' -> good e';
  }.
\end{minted}
\end{mdframed}

\noindent
\rprop{must-iff-acnv}\emph{
For every $\genlts_{\States} \in \obaFW$,
$\server \in \States$, and
$\trace \in \Actfin$ we have that $\musti{\server}{ \testconv{ \trace} }$
if and only if~$\server \cnvalong \trace$.
}

\begin{mdframed}
\begin{minted}{coq}
Lemma must_iff_cnv
  `{@LtsObaFW A L IL LA LOA V, @LtsObaFB B L IL LB LOB W,
    !Good B L good, !gen_spec_conv gen_conv} (p : A) s :
   must p (gen_conv s) <-> p ⇓ s.
Proof. split; [eapply cnv_if_must | eapply must_if_cnv]; eauto. Qed.
\end{minted}
\end{mdframed}

\noindent
\rlem{must-output-swap-l-fw}\emph{
  Let $\genlts_A \in \obaFW$ and
  $\genlts_B \in \obaFB$.
  $\Forevery \serverA_1, \serverA_2 \in \StatesA$,
  every $\client_1, \client_2 \in \StatesB$ and name $\aa \in \Names$ such that
  $\serverA_1 \st{\co{\aa}} \serverA_2$ and
  $\client_1 \st{\co{\aa}} \client_2$,
  if $\musti{\serverA_1}{\client_2}$ then $\musti{\serverA_2}{\client_1}$.
}

\begin{mdframed}
\begin{minted}{coq}
Lemma must_output_swap_l_fw
  `{@LtsObaFW A L IL LA LOA V, @LtsObaFB B L IL LB LOB W, !Good B L good}
  (p1 p2 : A) (e1 e2 : B) (a : L) :
  p1 ⟶[ActOut a] p2 -> e1 ⟶[ActOut a] e2 -> must p1 e2 -> must p2 e1.
\end{minted}
\end{mdframed}

\noindent
\rlem{completeness-part-2.2-diff-outputs}.\emph{
Let $\genlts_A \in \obaFW$.
For every $\server \in \States$, $\trace \in \Actfin$,
and every $L, E \subseteq \Names$, if
$\co{L} \in \accht{ \server }{ \trace }$
then $\Nmusti{ \server }{ \testacc{\trace}{E \setminus L}}$.
}

\begin{mdframed}
\begin{minted}{coq}
Lemma not_must_gen_a_without_required_output
  `{@LtsObaFW A L IL LA LOA V, @LtsObaFB B L IL LB LOB W,
    !Good B L good, !gen_spec_acc gen_acc} (q q' : A) s O :
  q ⟹[s] q' -> q' ↛ -> ¬ must q (gen_acc (O ∖ lts_outputs q') s).
\end{minted}
\end{mdframed}

\noindent
\rlem{completeness-part-2.2-auxiliary}\emph{
Let $\genlts_A \in \obaFW$.
$\Forevery \server \in \States, \trace \in \Actfin$,
and every finite set $\ohmy \subseteq \co{\Names}$,
if $\server \cnvalong s$ then either
\begin{enumerate}[(i)]
\item
  $\musti{\server}{\testacc{ \trace }{ \bigcup \co{ \accht{p}{s}
      \setminus \ohmy }}}$, or
\item
  there exists $\widehat{\ohmy} \in \accht{ \server }{ \trace }$ such that $\widehat{\ohmy} \subseteq \ohmy$.
\end{enumerate}
}

\begin{mdframed}
\begin{minted}{coq}
Lemma must_gen_a_with_s
  `{@LtsObaFW A L IL LA LOA V, @LtsObaFB B L IL LB LOB W,
    !FiniteLts A L, !Good B L good, !gen_spec_acc gen_acc}
  s (p : A) (hcnv : p ⇓ s) O :
  (exists p', p ⟹[s] p' /\ lts_stable p' τ /\ lts_outputs p' ⊆ O)
    \/ must p (gen_acc (oas p s hcnv ∖ O) s).
\end{minted}
\end{mdframed}

\noindent
\rlem{completeness}.\emph{
For every $\genlts_A, \genlts_B \in \obaFW$ and
servers $\serverA \in \StatesA, \serverB \in \StatesB $,
if ${ \serverA } \testleqS { \serverB }$
then ${ \serverA } \asleq { \serverB }$.
}

\begin{mdframed}
\begin{minted}{coq}
Lemma completeness_fw
  `{@LtsObaFW A L IL LA LOA V, @LtsObaFB B L IL LB LOB W,
    @LtsObaFW C L IL LC LOC VC, !FiniteLts A L, !FiniteLts C L,
    !FiniteLts B L, !Good B L good,
    !gen_spec_conv gen_conv, !gen_spec_acc gen_acc}
  (p : A) (q : C) : p ⊑ q -> p ≼ q.
\end{minted}
\end{mdframed}

\noindent
\rprop{bhv-completeness}.
For every $\genlts_A, \genlts_B \in \obaFB$ and
servers $\serverA \in \StatesA, \serverB \in \StatesB $,
if $\serverA \testleqS \serverB$ then $\liftFW{ \serverA } \asleq \liftFW{ \serverB }$.

\begin{mdframed}
\begin{minted}{coq}
Lemma completeness
  `{@LtsObaFB A L IL LA LOA V, @LtsObaFB B L IL LB LOB W,
    @LtsObaFB C L IL LC LOC VC,
    !FiniteLts A L, !FiniteLts B L, !FiniteLts C L, !Good C L good,
    !gen_spec_conv gen_conv, !gen_spec_acc gen_acc}
  (p : A) (q : B) : p ⊑ q -> p ▷ ∅ ≼ q ▷ ∅.
\end{minted}
\end{mdframed}

\subsection{Soundness}

\rfig{rules-mustset-main}. Rules to define inductively the predicate $\opMustset$.

\begin{mdframed}
\begin{minted}{coq}
Inductive mustx
  `{Lts A L, !FiniteLts A L, !Lts B L, !LtsEq B L, !Good B L good}
  (ps : gset A) (e : B) : Prop :=
| mx_now (hh : good e) : mustx ps e
| mx_step
    (nh : ¬ good e)
    (ex : forall (p : A), p ∈ ps -> ∃ t, parallel_step (p, e) τ t)
    (pt : forall ps',
        lts_tau_set_from_pset_spec1 ps ps' -> ps' ≠ ∅ ->
        mustx ps' e)
    (et : forall (e' : B), e ⟶ e' -> mustx ps e')
    (com : forall (e' : B) μ (ps' : gset A),
        lts_step e (ActExt μ) e' ->
        wt_set_from_pset_spec1 ps [co μ] ps' -> ps' ≠ ∅ ->
        mustx ps' e')
  : mustx ps e.
\end{minted}
\end{mdframed}

\noindent
\rlem{musti-if-mustset-helper}\emph{
For every LTS $\genlts_A, \genlts_B$ and every
$X \in \pparts{\StatesA}$, we have that
$\mustset{X}{\client}$ if and only if for every $\serverA \in X
\wehavethat \musti{\serverA}{\client}$.
}

\begin{mdframed}
\begin{minted}{coq}
Lemma must_set_iff_must_for_all
  `{Lts A L, !FiniteLts A L, !Lts B L, !LtsEq B L, !Good B L good}
  (X : gset A) (e : B) : X ≠ ∅ ->
    (forall p, p ∈ X -> must p e) <-> mustx X e.
\end{minted}
\end{mdframed}

Lifting of the predicates~$\bhvleqone$ and~$\bhvleqtwo$ to sets of servers.

\begin{mdframed}
\begin{minted}{coq}
Definition bhv_pre_cond1__x `{FiniteLts P L, FiniteLts Q L}
 (ps : gset P) (q : Q) := forall s, (forall p, p ∈ ps -> p ⇓ s) -> q ⇓ s.

Notation "ps ≼ₓ1 q" := (bhv_pre_cond1__x ps q) (at level 70).

Definition bhv_pre_cond2__x
  `{@FiniteLts P L HL LtsP, @FiniteLts Q L HL LtsQ}
  (ps : gset P) (q : Q) :=
  forall s q', q ⟹[s] q' -> q' ↛ ->
    (forall p, p ∈ ps -> p ⇓ s) ->
    exists p, p ∈ ps /\ exists p',
      p ⟹[s] p' /\ p' ↛ /\ lts_outputs p' ⊆ lts_outputs q'.

Notation "ps ≼ₓ2 q" := (bhv_pre_cond2__x ps q) (at level 70).

Notation "ps ≼ₓ q" := (bhv_pre_cond1__x ps q /\ bhv_pre_cond2__x ps q)
   (at level 70).
\end{minted}
\end{mdframed}

\noindent
\rlem{alt-set-singleton-iff}.\emph{
For every LTS $\genlts_A, \genlts_B$ and servers $\server \in \StatesA$,
$\serverB \in \StatesB$,
$\serverA \asleq \serverB$ if and only if $\set{\serverA} \asleqset \serverB$.
}

\begin{mdframed}
\begin{minted}{coq}
Lemma alt_set_singleton_iff
  `{@FiniteLts P L HL LtsP, @FiniteLts Q L HL LtsQ}
  (p : P) (q : Q) : p ≼ q <-> {[ p ]} ≼ₓ q.
\end{minted}
\end{mdframed}

\noindent
\rlem{bhvleqone-preserved}.\emph{
Let $\genlts_\StatesA, \genlts_\StatesB \in \obaFW$.
For every set $X \in \pparts{ \StatesA }$, and
$\serverB \in \StatesB$, such that
$X \cnvleqset \serverB$ then
\begin{enumerate}
\item
  $\serverB \st{ \tau } \serverB'$ implies $X \cnvleqset \serverB'$,
\item
  $X \convi$, $X \wt{\mu} X'$ and $\serverB \st{\mu} \serverB'$ imply $X' \cnvleqset \serverB'$.
\end{enumerate}
}

\noindent
\rlem{bhvleqtwo-preserved}\emph{
Let $\genlts_\StatesA, \genlts_\StatesB \in \obaFW$.
For every
$X, X' \in \pparts{ \StatesA }$ and $ \serverB \in \StatesB$,
such that $X \accleqset \serverB$, then
\begin{enumerate}
\item
  $\serverB \st{ \tau } \serverB'$ implies $X \accleqset \serverB'$,
\item
  for every $\mu \in \Act$,
  if $X \convi$, then for every  $\serverB \st{\mu} \serverB'$ and set $X \wt{\mu} X'$
  we have $X' \accleqset \serverB'$.
\end{enumerate}
}

\begin{mdframed}
\begin{minted}{coq}
Lemma bhvx_preserved_by_tau
  `{@FiniteLts P L HL LtsP, @FiniteLts Q L HL LtsQ}
  (ps : gset P) (q q' : Q) : q ⟶ q' -> ps ≼ₓ q -> ps ≼ₓ q'.

Lemma bhvx_preserved_by_mu
  `{@FiniteLts P L HL LtsP, @FiniteLts Q L HL LtsQ}
  (ps0 : gset P) (q : Q) μ ps1 q'
  (htp : forall p, p ∈ ps0 -> terminate p) :
  q ⟶[μ] q' -> wt_set_from_pset_spec ps0 [μ] ps1 ->
  ps0 ≼ₓ q -> ps1 ≼ₓ q'.
\end{minted}
\end{mdframed}

\noindent
Lemma \emph{
Let $\genlts_\StatesA, \genlts_\StatesB \in \obaFW$ and $\genlts_\StatesC \in \obaFB$.
For every $X \in \pparts{ \StatesA }$ and
$\serverB \in \StatesB $ such that
$X \asleqset \serverB$, for every $\client \in \StatesC$
if $\lnot \good{\client}$ and $\mustset{X}{\client}$
then $\csys{\serverB}{\client} \st{\tau}$.
}

\begin{mdframed}
\begin{minted}{coq}
Lemma stability_nbhvleqtwo
  `{@LtsObaFW P L Lbl LtsP LtsEqP LtsObaP,
    @LtsObaFW Q L Lbl LtsQ LtsEqQ LtsObaQ,
    !FiniteLts P L, !FiniteLts Q L, !Lts B L, !LtsEq B L, !Good B L good}
  (X : gset P) (q : Q) e :
  ¬ good e -> mustx X e -> X ≼ₓ2 q -> exists t, (q, e) ⟶{τ} t.
\end{minted}
\end{mdframed}

\noindent
Lemma\emph{
Let $\genlts_\StatesA, \genlts_\StatesB \in \obaFW$.
For every $X \in \pparts{ \StatesA }$ and
$\serverB, \serverB' \in \StatesB$,
such that $X \asleqset \serverB$, then
for every $\mu \in \Act$, if $X \cnvalong \mu$ and $\serverB \st{\mu} \serverB'$ then $X \wt{\mu}$.
}
\begin{mdframed}
\begin{minted}{coq}
Lemma bhvx_mu_ex `{@FiniteLts P L HL LtsP, @FiniteLts Q L HL LtsQ}
  (ps : gset P) (q q' : Q) μ
  : ps ≼ₓ q -> (forall p, p ∈ ps -> p ⇓ [μ]) ->
    q ⟶[μ] q' -> exists p', wt_set_from_pset_spec1 ps [μ] {[ p' ]}.
\end{minted}
\end{mdframed}

\noindent
Lemma
\emph{
For every $\genlts_\StatesA \in \obaFW$, $\genlts_\StatesB \in \obaFB$,
every set of processes $X \in \pparts{ \StatesA }$, every $\client \in \StatesB$, and every $\mu \in \Act$,
if $\mustset{X}{\client}$, $\lnot \good{\client}$ and $\client \st{\mu}$ then $X \cnvalong \co{\mu}$.
}

\begin{mdframed}
\begin{minted}{coq}
Lemma ungood_acnv_mu `{LtsOba A L, !FiniteLts A L, !Lts B L, !LtsEq B L,
  !Good B L good} ps e e' : mustx ps e ->
  forall μ p, p ∈ ps -> e ⟶[co μ] e' -> ¬ good e -> p ⇓ [μ].
\end{minted}
\end{mdframed}

\noindent
\rlem{soundness-set}.
\emph{
Let $\genlts_\StatesA, \genlts_\StatesB \in \obaFW$ and
$\genlts_\StatesC \in \obaFB$.
For every set of processes $X \in \pparts{ \StatesA }$,
server $\serverB \in \StatesB$ and client $\client \in \StatesC$,
if $\mustset{X}{\client}$ and $X \asleqset \serverB$ then $\musti{\serverB}{\client}$.
}

\begin{mdframed}
\begin{minted}{coq}
Lemma soundnessx `{
    @LtsObaFW A L Lbl LtsA LtsEqA LtsObaA,
    @LtsObaFW C L Lbl LtsC LtsEqC LtsObaC,
    @LtsObaFB B L Lbl LtsB LtsEqB LtsObaB,
    !FiniteLts A L, !FiniteLts C L, !FiniteLts B L, !Good B L good}
    (ps : gset A) (e : B) : mustx ps e ->
      forall (q : C), ps ≼ₓ q -> must q e.
\end{minted}
\end{mdframed}

\noindent
\rprop{bhv-soundness}.
\emph{
For every $\genlts_A, \genlts_B \in \obaFB$ and
servers $\serverA \in \StatesA, \serverB \in \StatesB $,
if $\liftFW{ \serverA } \asleq \liftFW{ \serverB }$ then $\serverA \testleqS \serverB$.
}

\begin{mdframed}
\begin{minted}{coq}
Lemma soundness
  `{@LtsObaFB A L IL LA LOA V, @LtsObaFB C L IL LC LOC T,
    @LtsObaFB B L IL LB LOB W,
    !FiniteLts A L, !FiniteLts C L, !FiniteLts B L, !Good B L good }
  (p : A) (q : C) : p ▷ ∅ ≼ q ▷ ∅ -> p ⊑ q.
\end{minted}
\end{mdframed}

\begin{corollary}
%  \label{cor:testleqS-equals-failleq}
  Let $\genlts_A, \genlts_B \in \obaFB$.
  For every $\serverA \in \States$  and
  $\serverB \in \StatesB $, we have that
  $\serverA \testleqS \serverB$ if and only if
  $\serverA \failleq \serverB$.
\end{corollary}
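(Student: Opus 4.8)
The plan is to obtain the statement as an immediate consequence of the must-set characterisation, \rthm{testleqS-equals-mustsetleq}, by identifying the must-set condition with the failure refinement. Recall that \rthm{testleqS-equals-mustsetleq} gives $\serverA \testleqS \serverB$ iff $\liftFW{\serverA} \msleq \liftFW{\serverB}$, and that $\msleq$ is by definition the conjunction of the convergence preorder $\bhvleqone$ and the must-set preorder $\msleqtwo$. Since a failure records a trace together with a \emph{refused} set of actions, while a must-set records a trace together with a set of which \emph{some} action is always offered, the two notions are dual: a finite set $X$ is a must-set of the derivatives of $\liftFW{\serverB}$ after $\trace$ precisely when $(\trace,X)$ is \emph{not} a failure of $\liftFW{\serverB}$, read with stable refusals. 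First I would make this duality precise on the LTS of sets used to define $\msleqtwo$.

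With the duality in hand, the core step is to show $\liftFW{\serverA} \msleqtwo \liftFW{\serverB}$ iff the failures of $\liftFW{\serverB}$ are contained in those of $\liftFW{\serverA}$, i.e. iff $\liftFW{\serverA} \failleq \liftFW{\serverB}$. Inclusion of must-sets (every must-set of $\liftFW{\serverA}$ after $\trace$ is a must-set of $\liftFW{\serverB}$) is contrapositively the inclusion of the non-must-sets of $\liftFW{\serverB}$ in those of $\liftFW{\serverA}$, and each non-must-set after $\trace$ witnesses a stable derivative refusing all its actions, hence a failure; the directions of inclusion therefore match exactly. Here \rlem{acceptance-sets-and-must-sets-have-same-expressivity} is convenient, since it lets me work with output-refusal sets, which is all that matters once servers are saturated with forwarders.

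It remains to fold the convergence component $\bhvleqone$ back in: the full failure-divergence refinement $\failleq$ of the statement is, by definition, the intersection of bare failure refinement with the convergence (divergence) preorder, and this is exactly the conjunction of $\bhvleqone$ and $\msleqtwo$, i.e. $\msleq$, after the duality step. Reading $\failleq$ as this failure-divergence refinement over the forwarding lifting $\liftFW{-}$ (which is also how \rcor{testleqS-equals-failleq} is phrased), the biconditional follows; the lifting itself is handled by \rlem{musti-obafb-iff-musti-obafw} and \rlem{liftFW-works}, which guarantee $\liftFW{\genlts_A},\liftFW{\genlts_B}\in\obaFW$ and that $\testleqS$ is preserved by $\liftFW{-}$.

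The main obstacle I anticipate is reconciling the operational mismatch between the two definitions: must-sets are phrased with \emph{weak} transitions $\wt{\mu}$ and an existential ``some offered action'', whereas failures are phrased with the immediate refusal $\Nst{\mu}$ at reached states. The reconciliation hinges on restricting attention to \emph{stable} derivatives and on the fact that, under $\bhvleqone$, both sides converge along every relevant trace, so the refusal sets are genuinely computed at stable states and the weak/strong distinction collapses. Getting this stability bookkeeping right — and checking it is insensitive to the forwarding transitions added by $\liftFW{-}$, which only ever enlarge input capabilities and never create new output refusals — is where the real care is needed.
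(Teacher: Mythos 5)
Your proposal is correct and takes essentially the same route as the paper: the corollary is derived from \rthm{testleqS-equals-mustsetleq} by the contrapositive duality between \MustSets and failures (non-must-sets of the forwarder-lifted servers correspond, at stable derivatives under convergence, exactly to failures), with the convergence component $\bhvleqone$ supplying the divergence part of the failure-divergence refinement. This is precisely the paper's argument, which invokes the must-set characterisation together with this folklore duality and mechanises it as the equivalence between the must-set preorder and the failure preorder over $\liftFW{-}$.
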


\begin{mdframed}
\begin{minted}{coq}

Section failure.

  Definition Failure `{FiniteLts A L} (p : A)
    (s : trace L) (G : gset (ExtAct L)) :=
    p ⇓ s -> exists p', p ⟹[s] p' /\
      forall μ, μ ∈ G -> ¬ exists p0, p' ⟹{μ} p0.

  Definition fail_pre_ms_cond2
    `{@FiniteLts A L HL LtsA, @FiniteLts B L HL LtsB}
    (p : A) (q : B) := forall s G, Failure q s G -> Failure p s G.

  Definition fail_pre_ms
    `{@FiniteLts A L HL LtsA, @FiniteLts B L HL LtsB} (p : A) (q : B) :=
    p ≼₁ q /\ fail_pre_ms_cond2 p q.

  Context `{LL : Label L}.
  Context `{LtsA : !Lts A L, !FiniteLts A L}.
  Context `{LtsR : !Lts R L, !FiniteLts R L}.

  Context `{@LtsObaFB A L LL LtsA LtsEqA LtsObaA}.
  Context `{@LtsObaFB R L LL LtsR LtsEqR LtsObaR}.

  Theorem equivalence_pre_failure_must_set (p : A) (q : R) :
   (p ▷ ∅) ≾ (q ▷ ∅) <-> (p ▷ ∅) ⋖ (q ▷ ∅).

End failure.
\end{minted}
\end{mdframed}

\subsection{Coinductive definition}

The coinductive preorder is defined in Coq using a coinductive predicate as
follows:

\begin{mdframed}
  \begin{minted}{coq}
CoInductive copre `{@FiniteLts A L HL LtsP, @FiniteLts B L HL LtsQ}
    (ps : gset A) (q : B) : Prop := {
    c_tau q' : q ⟶ q' -> copre ps q'
  ; c_now : (forall p, p ∈ ps -> p ⤓) -> q ↛ ->
            exists p p', p ∈ ps /\ p ⟹ p' /\ p' ↛ /\ lts_outputs p' ⊆ lts_outputs q
  ; c_step : forall μ q' ps', (forall p, p ∈ ps -> p ⇓ [μ]) ->
                         q ⟶[μ] q' -> wt_set_from_pset_spec ps [μ] ps' -> copre ps' q'
  ; c_cnv : (forall p, p ∈ ps -> p ⤓) -> q ⤓
  }.
  \end{minted}
\end{mdframed}

The soundness and completeness of the preorder is expressed in
\rthm{coinductive-char-equiv-main}, which is formalised as follows in our
development.

\begin{mdframed}
  \begin{minted}{coq}
Theorem eqx `{@FiniteLts A L HL LtsP, @FiniteLts B L HL LtsQ} (X : gset A) (q : B) :
  X ≼ₓ q <-> X ⩽ q.
  \end{minted}
\end{mdframed}

\subsection{The preorder on traces in normal forms}

A trace in normal form is simply a list of pairs of multisets, representing
inputs and outputs:

\begin{mdframed}
  \begin{minted}{coq}
Definition ntrace L `{Label L} : Type := list (gmultiset L * gmultiset L).
  \end{minted}
\end{mdframed}

Given such a trace in normal form, we can linearize it to a trace by choosing an
abritrary order for the elements of the multisets. This is achieved using the
\mintinline{coq}{elements} that maps a multiset to a list that contains its
elements in an arbitrary order.

\begin{mdframed}
  \begin{minted}{coq}
Fixpoint linearize `{Label L} (nt : ntrace L) : trace L :=
  match nt with
  | [] => []
  | (mi, mo) :: nt' =>
      let inputs := map ActIn (elements mi) in
      let outputs := map ActOut (elements mo) in
      inputs ++ outputs ++ linearize nt'
  end.
  \end{minted}
\end{mdframed}

Using this operation, we can define the preorder as follows:

\begin{mdframed}
  \begin{minted}{coq}
Definition bhv_lin_pre_cond1 `{Lts P L, Lts Q L} (p : P) (q : Q) :=
    forall s, p ⇓ linearize s -> q ⇓ linearize s.

Notation "p ⪷₁ q" := (bhv_lin_pre_cond1 p q) (at level 70).

Definition bhv_lin_pre_cond2 `{@Lts P L HL, @Lts Q L HL} (p : P) (q : Q) :=
  forall nt q',
    p ⇓ linearize nt -> q ⟹[linearize nt] q' -> q' ↛ ->
    ∃ p', p ⟹[linearize nt] p' /\ p' ↛ /\ lts_outputs p' ⊆ lts_outputs q'.

Notation "p ⪷₂ q" := (bhv_lin_pre_cond2 p q) (at level 70).

Definition bhv_lin_pre `{@Lts P L HL, @Lts Q L HL} (p : P) (q : Q) := p ⪷₁ q /\ p ⪷₂ q.

Notation "p ⪷ q" := (bhv_lin_pre p q) (at level 70).
  \end{minted}
\end{mdframed}

Finally, the theorem stating that this preorder characterises the \mustpreorder
is formulated as follows in our development:

\begin{mdframed}
  \begin{minted}{coq}
Lemma asyn_iff_bhv
 `{@LtsObaFW P L IL LA LOA V, @LtsObaFW Q L IL LB LOB W, 
    !FiniteLts Q L, !FiniteLts Q L} : 
  forall (p : P) (q : Q), p ⪷ q <-> p ≼ q.
  \end{minted}
\end{mdframed}

\clearpage

\section{Mapping of results from the paper to the Coq code}

%\begin{table*}[!h]
%\begin{center}
\scalebox{.8}{%
  \begin{tabular}{| l | l | l |}
    \hline
    Paper & Coq File & Coq name \\
    \hline\hline
    \rfig{axioms} & TransitionSystems.v & Class LtsOba\\
    \hline
    \rfig{mechanisation-lts} & TransitionSystems.v & Class Sts, ExtAct, Act, Label, Lts\\
    \hline
    \rdef{must-extensional} & Equivalence.v & must\_extensional\\
    \hline
    \rdef{testleqS} & Equivalence.v & pre\_extensional\\
    \hline
    \req{syntax-processes} & ACCSInstance.v & proc\\
    \hline
    \rfig{Axiom-LtsEq} & TransitionSystems.v & LtsEq\\
    \hline
    \rdef{must-extensional} & MustEx.v & must\_extensional\\
    \hline
    \rdef{inf-transition-sequence} & TransitionSystems.v & max\_exec\_from\\
    \hline
    $\state \wt{ \trace } \stateA $ & TransitionSystems.v & wt\\
    \hline
    $\state \conv$ & Equivalence.v & terminate\_extensional\\
    \hline
    $\state \cnvalong \trace $ & TransitionSystems.v & cnv\\
    \hline
    \rlem{cnvalong-iff-prefix} & TransitionSystems.v & cnv\_iff\_prefix\_terminate\\
    \hline
    \rlem{st-wtout-st} & TransitionSystems.v & stable\_tau\_preserved\_by\_wt\_output, stable\_tau\_input\_preserved\_by\_wt\_output\\
    \hline
    \rlem{st-wtout-Nok} & Must.v & ungood\_preserved\_by\_wt\_output\\
    \hline
    \req{axioms-forwarders} & TransitionSystems.v & Class LtsObaFW \\
    \hline
    \rdef{accset-leq} & Must.v & bhv\_pre\\
    \hline
    \rfig{rules-liftFW} & TransitionSystems.v & lts\_fw\_step\\
    \hline
    \rdef{liftFW} & TransitionSystems.v & MbLts\\
    \hline
    \rdef{strip-def} & TransitionSystems.v & strip\\
    \hline
    \rdef{fw-eq} & TransitionSystems.v & fw\_eq\\
    \hline
    \rlem{harmony-sta} & TransitionSystems.v & lts\_fw\_eq\_spec\\
    \hline
    \rlem{liftFW-works} & TransitionSystems.v & Instance LtsMBObaFW\\
    \hline
    \rlem{musti-obafb-iff-musti-obafw} & Lift.v & must\_iff\_must\_fw\\
    \hline
    %\rcor{testleq-obafb-iff-testleq-obafw}
    \rlem{musti-obafb-iff-musti-obafw}
    & Lift.v & lift\_fw\_ctx\_pre\\
    \hline
    \rthm{testleqS-equals-bhvleq} & Equivalence.v &  equivalence\_bhv\_acc\_ctx\\
    \hline
    \rdef{denicola-char} & Must.v & bhv\_pre\_ms\\
    \hline
    \rlem{acceptance-sets-and-must-sets-have-same-expressivity} & Must.v & equivalence\_bhv\_acc\_mst\\
    \hline
    \rthm{testleqS-equals-mustsetleq} & Must.v & equivalence\_bhv\_mst\_ctx\\
    \hline
    \rlem{ACCS-obaFB} & ACCSInstance.v & ACCS\_ltsObaFB\\
    \hline
    \rcor{characterisation-for-aCCS} & ACCSInstance.v & bhv\_iff\_ctx\_ACCS\\
%    \hline
%    \rprop{only-inputs-imply-uselessness} & Equivalence.v & input\_only\_leq\_nil\\
    \hline
    \rprop{ext-impl-int} & Bar.v & extensional\_implies\_intensional\\
    \hline
    $\state \convi$ & TransitionSystems.v & terminate\\
    \hline
    $\musti{\serverA }{\serverB}$ & Must.v & must\_sts\\
    \hline
    \rcor{ext-int-eq-conv} & Equivalence.v & terminate\_extensional\_iff\_terminate\\
    \hline
    \rtab{properties-functions-to-generate-clients} & Completeness.v & Class gen\_spec, gen\_spec\_conv, gen\_spec\_acc \\
    \hline
    \rprop{must-iff-acnv} & Completeness.v & must\_iff\_cnv \\
    \hline
    \rlem{must-output-swap-l-fw} & Lift.v & must\_output\_swap\_l\_fw \\
    \hline
    \rlem{completeness-part-2.2-diff-outputs} & Completeness.v & not\_must\_gen\_a\_without\_required\_output \\
    \hline
    \rlem{completeness-part-2.2-auxiliary} & Completeness.v & must\_gen\_a\_with\_s \\
    \hline
    \rlem{completeness} & Completeness.v & completeness\_fw\\
    \hline
    \rprop{bhv-completeness} & Completeness.v & completeness\\
    \hline
    \rlem{musti-if-mustset-helper} & Soundness.v & must\_set\_iff\_must\_for\_all\\
    \hline
    \rfig{rules-mustset-main} & Soundness.v & mustx\\
    \hline
    $X \cnvleqset \serverB$ and $X \accleqset \serverB$ & Soundness.v & bhv\_pre\_cond1$_x$ and bhv\_pre\_cond2$_x$ \\
    \hline
    \rlem{alt-set-singleton-iff} & Soundness.v & must\_set\_iff\_must\\
    \hline
    \rlem{bhvleqone-preserved}, \rlem{bhvleqtwo-preserved} & Soundness.v & bhvx\_preserved\_by\_tau, bhvx\_preserved\_by\_mu\\
    \hline
    \rlem{soundness-set} & Soundness.v & soundnessx \\
    \hline
    \rprop{bhv-soundness} & Soundndess.v & soundness \\
    \hline
    \rcor{asynleq-equals-bhvleq} & Normalisation.v & asyn\_iff\_bhv \\
    \hline
    \rthm{coinductive-char-equiv-main} & Coin.v & eq\_ctx \\
    \hline
  \end{tabular}
  }
%\end{center}
%\caption{Mapping to Coq code.}
%\end{table*}

\end{document}